\newcounter{sarrow}
\newcommand\xrsquigarrow[1]{%
\stepcounter{sarrow}%
\mathrel{\begin{tikzpicture}[baseline= {( $ (current bounding box.south) + (0,-0.5ex) $ )}]
\node[inner sep=.5ex] (\thesarrow) {$\scriptstyle #1$};
\path[draw,<-,decorate,
  decoration={zigzag,amplitude=0.7pt,segment length=1.2mm,pre=lineto,pre length=4pt}]
    (\thesarrow.south east) -- (\thesarrow.south west);
    \end{tikzpicture}}%
}
\newcounter{sarrow1}
\newcommand\xnrsquigarrow[1]{%
\stepcounter{sarrow1}%
\mathrel{\begin{tikzpicture}[baseline= {( $ (current bounding box.south) + (0,-0.5ex) $ )}]
\node[inner sep=.5ex] (\thesarrow) {$\scriptstyle #1$};
\path[draw,<-,decorate,
  decoration={zigzag,amplitude=0.7pt,segment length=1.2mm,pre=lineto,pre length=4pt}]
    (\thesarrow1.south east) -- (\thesarrow1.south west);
    $\slashedarrowfill@\relbar\relbar/$
    \end{tikzpicture}}%
}
\def\slashedarrowfill@#1#2#3#4#5{%
  $\m@th\thickmuskip0mu\medmuskip\thickmuskip\thinmuskip\thickmuskip
   \relax#5#1\mkern-7mu%
   \cleaders\hbox{$#5\mkern-2mu#2\mkern-2mu$}\hfill
   \mathclap{#3}\mathclap{#2}%
   \cleaders\hbox{$#5\mkern-2mu#2\mkern-2mu$}\hfill
   \mkern-7mu#4$%
}
\def\rightslashedarrowfillb@{%
  \slashedarrowfill@\relbar\relbar/\rightarrow}
\newcommand\xnrightarrow[2][]{%
  \ext@arrow 0055{\rightslashedarrowfillb@}{#1}{#2}}
\def\rightslashedarrowfille@{%
  \slashedarrowfill@\relbar\relbar/\twoheadrightarrow}
\newcommand\xntworightarrow[2][]{%
  \ext@arrow 0055{\rightslashedarrowfille@}{#1}{#2}}
\def\rightslashedarrowfillg@{%
  \slashedarrowfill@\relbar\relbar{\raisebox{.12em}{}}\twoheadrightarrow}
\newcommand\xtworightarrow[2][]{%
  \ext@arrow 0055{\rightslashedarrowfillg@}{#1}{#2}}
\def\rightslashedarrowfillx@{%
  \slashedarrowfill@\Relbar\Relbar/\rightrightarrows}
\newcommand\xnTworightarrow[2][]{%
  \ext@arrow 0055{\rightslashedarrowfillx@}{#1}{#2}}
\def\rightslashedarrowfilly@{%
  \slashedarrowfill@\Relbar\Relbar{\raisebox{.12em}{}}\rightrightarrows}
\newcommand\xTworightarrow[2][]{%
  \ext@arrow 0055{\rightslashedarrowfilly@}{#1}{#2}}
\tikzset{nomorepostaction/.code=\let\tikz@postactions\pgfutil@empty}
\newtheorem{theorem}{Theorem}[section]
\newtheorem{definition}[theorem]{Definition}
\begin{document}

\begin{titlepage}
\thispagestyle{empty}

\hrule
\begin{center}
{\bf\LARGE Verification of Distributed Quantum Protocols}

\vspace{0.7cm}
--- Yong Wang ---

\vspace{2cm}
\begin{figure}[!htbp]
 \centering
 \includegraphics[width=1.0\textwidth]{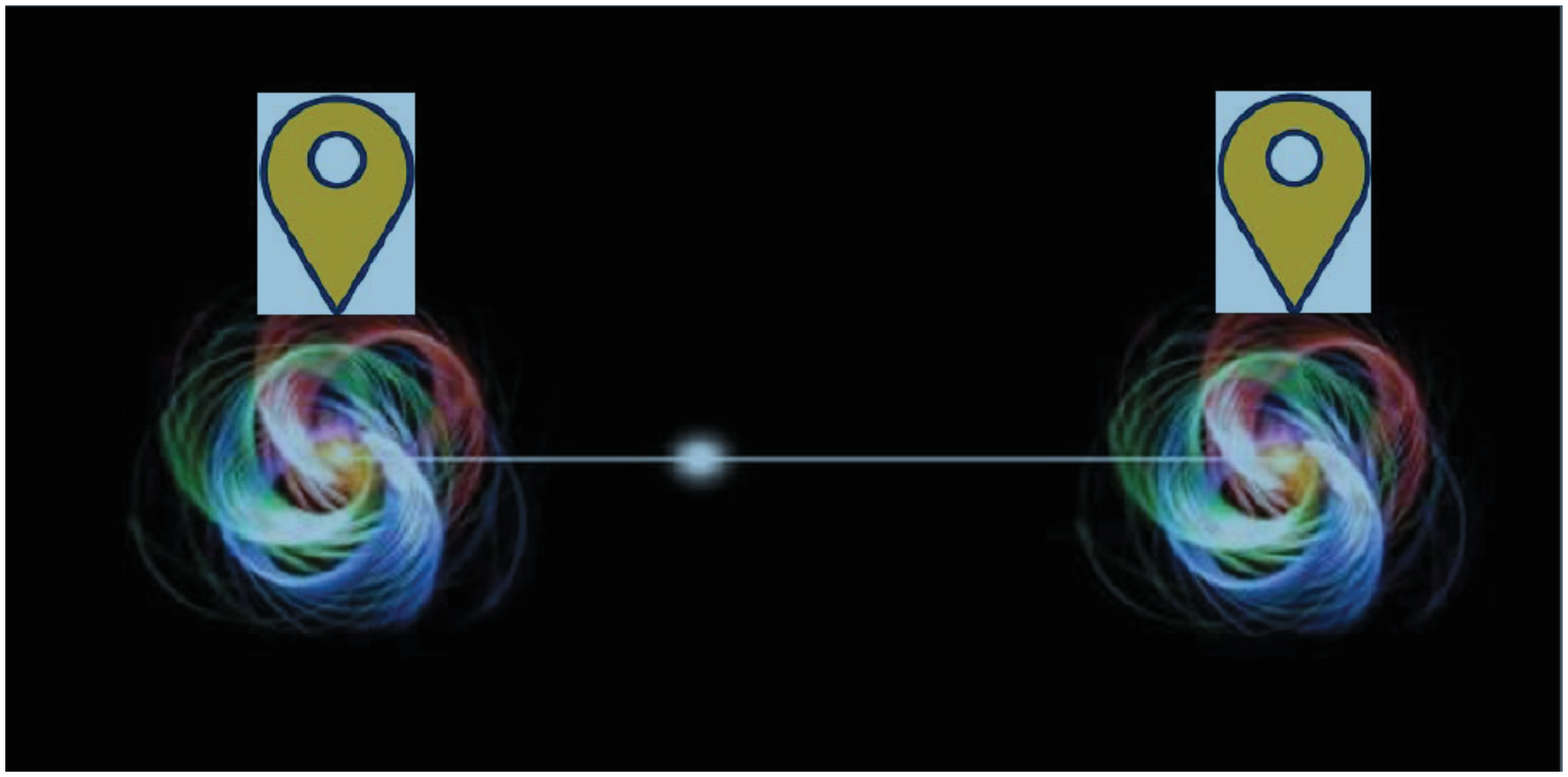}
\end{figure}

\end{center}
\end{titlepage}

\newpage 

\setcounter{page}{1}\pagenumbering{roman}

\tableofcontents

\newpage

\setcounter{page}{1}\pagenumbering{arabic}

        \section{Introduction}

Truly concurrent process algebras are generalizations to the traditional process algebras for true concurrency, CTC \cite{CTC} to CCS \cite{CC} \cite{CCS}, APTC \cite{ATC} to ACP \cite{ACP},
$\pi_{tc}$ \cite{PITC} to $\pi$ calculus \cite{PI1} \cite{PI2}, APPTC \cite{APPTC} to probabilistic process algebra \cite{PPA} \cite{PPA2} \cite{PPA3}, APTC with localities \cite{LOC1} \cite{LOC2} to process algebra
with localities \cite{LOC}.

In quantum process algebras, there are several well-known work \cite{PSQP} \cite{QPA} \cite{QPA2} \cite{CQP} \cite{CQP2} \cite{qCCS} \cite{BQP} \cite{PSQP} \cite{SBQP}, and we ever
did some work \cite{QPA11} \cite{QPA12} \cite{QPA13} to unify quantum and classical computing under the framework of ACP \cite{ACP} and probabilistic process algebra \cite{PPA}.

Now, it is the time to utilize truly concurrent process algebras with localities \cite{LOC1} \cite{LOC2} to model quantum computing and unify quantum and classical computing in this book. Since
this work is with localities, it is suitable to verify the distribution of quantum communication protocols.
This book is organized as follows. In chapter \ref{bg}, we introduce the preliminaries. In chapter \ref{qaptcl} and \ref{aqaptcl}, we introduce the utilization of APTC with localities to unify quantum
and classical computing and its usage in verification of distributed quantum communication protocols. In chapter \ref{qapptcl2} and \ref{aqapptcl}, we introduce the utilization of APPTC 
with localities to unifying quantum and classical computing and its usage in verification of distributed quantum communication protocols.

\newpage\section{Backgrounds}\label{bg}

To make this book self-satisfied, we introduce some preliminaries in this chapter, including some introductions on operational semantics, proof techniques, truly concurrent process algebra
\cite{ATC} \cite{CTC} \cite{PITC} which is based on truly concurrent operational semantics, and also probabilistic truly concurrent process algebra and probabilistic truly concurrent
operational semantics, and also operational semantics for quantum computing.

\subsection{Operational Semantics}\label{OS}

The semantics of $ACP$ is based on bisimulation/rooted branching bisimulation equivalences, and the modularity of $ACP$ relies on the concept of conservative extension, for the
conveniences, we introduce some concepts and conclusions on them.

\begin{definition}[Bisimulation]
A bisimulation relation $R$ is a binary relation on processes such that: (1) if $p R q$ and $p\xrightarrow{a}p'$ then $q\xrightarrow{a}q'$ with $p' R q'$; (2) if $p R q$ and
$q\xrightarrow{a}q'$ then $p\xrightarrow{a}p'$ with $p' R q'$; (3) if $p R q$ and $pP$, then $qP$; (4) if $p R q$ and $qP$, then $pP$. Two processes $p$ and $q$ are bisimilar,
denoted by $p\sim_{HM} q$, if there is a bisimulation relation $R$ such that $p R q$.
\end{definition}

\begin{definition}[Congruence]
Let $\Sigma$ be a signature. An equivalence relation $R$ on $\mathcal{T}(\Sigma)$ is a congruence if for each $f\in\Sigma$, if $s_i R t_i$ for $i\in\{1,\cdots,ar(f)\}$, then
$f(s_1,\cdots,s_{ar(f)}) R f(t_1,\cdots,t_{ar(f)})$.
\end{definition}

\begin{definition}[Branching bisimulation]
A branching bisimulation relation $R$ is a binary relation on the collection of processes such that: (1) if $p R q$ and $p\xrightarrow{a}p'$ then either $a\equiv \tau$ and $p' R q$ or there is a sequence of (zero or more) $\tau$-transitions $q\xrightarrow{\tau}\cdots\xrightarrow{\tau}q_0$ such that $p R q_0$ and $q_0\xrightarrow{a}q'$ with $p' R q'$; (2) if $p R q$ and $q\xrightarrow{a}q'$ then either $a\equiv \tau$ and $p R q'$ or there is a sequence of (zero or more) $\tau$-transitions $p\xrightarrow{\tau}\cdots\xrightarrow{\tau}p_0$ such that $p_0 R q$ and $p_0\xrightarrow{a}p'$ with $p' R q'$; (3) if $p R q$ and $pP$, then there is a sequence of (zero or more) $\tau$-transitions $q\xrightarrow{\tau}\cdots\xrightarrow{\tau}q_0$ such that $p R q_0$ and $q_0P$; (4) if $p R q$ and $qP$, then there is a sequence of (zero or more) $\tau$-transitions $p\xrightarrow{\tau}\cdots\xrightarrow{\tau}p_0$ such that $p_0 R q$ and $p_0P$. Two processes $p$ and $q$ are branching bisimilar, denoted by $p\approx_{bHM} q$, if there is a branching bisimulation relation $R$ such that $p R q$.
\end{definition}

\begin{definition}[Rooted branching bisimulation]
A rooted branching bisimulation relation $R$ is a binary relation on processes such that: (1) if $p R q$ and $p\xrightarrow{a}p'$ then $q\xrightarrow{a}q'$ with $p'\approx_{bHM} q'$;
(2) if $p R q$ and $q\xrightarrow{a}q'$ then $p\xrightarrow{a}p'$ with $p'\approx_{bHM} q'$; (3) if $p R q$ and $pP$, then $qP$; (4) if $p R q$ and $qP$, then $pP$. Two processes $p$ and $q$ are rooted branching bisimilar, denoted by $p\approx_{rbHM} q$, if there is a rooted branching bisimulation relation $R$ such that $p R q$.
\end{definition}

\begin{definition}[Conservative extension]
Let $T_0$ and $T_1$ be TSSs (transition system specifications) over signatures $\Sigma_0$ and $\Sigma_1$, respectively. The TSS $T_0\oplus T_1$ is a conservative extension of $T_0$ if
the LTSs (labeled transition systems) generated by $T_0$ and $T_0\oplus T_1$ contain exactly the same transitions $t\xrightarrow{a}t'$ and $tP$ with $t\in \mathcal{T}(\Sigma_0)$.
\end{definition}

\begin{definition}[Source-dependency]
The source-dependent variables in a transition rule of $\rho$ are defined inductively as follows: (1) all variables in the source of $\rho$ are source-dependent; (2) if
$t\xrightarrow{a}t'$ is a premise of $\rho$ and all variables in $t$ are source-dependent, then all variables in $t'$ are source-dependent. A transition rule is source-dependent if
all its variables are. A TSS is source-dependent if all its rules are.
\end{definition}

\begin{definition}[Freshness]
Let $T_0$ and $T_1$ be TSSs over signatures $\Sigma_0$ and $\Sigma_1$, respectively. A term in $\mathbb{T}(T_0\oplus T_1)$ is said to be fresh if it contains a function symbol from
$\Sigma_1\setminus\Sigma_0$. Similarly, a transition label or predicate symbol in $T_1$ is fresh if it does not occur in $T_0$.
\end{definition}

\begin{theorem}[Conservative extension]\label{TCE}
Let $T_0$ and $T_1$ be TSSs over signatures $\Sigma_0$ and $\Sigma_1$, respectively, where $T_0$ and $T_0\oplus T_1$ are positive after reduction. Under the following conditions,
$T_0\oplus T_1$ is a conservative extension of $T_0$. (1) $T_0$ is source-dependent. (2) For each $\rho\in T_1$, either the source of $\rho$ is fresh, or $\rho$ has a premise of the
form $t\xrightarrow{a}t'$ or $tP$, where $t\in \mathbb{T}(\Sigma_0)$, all variables in $t$ occur in the source of $\rho$ and $t'$, $a$ or $P$ is fresh.
\end{theorem}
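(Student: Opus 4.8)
The statement asserts that, for every closed $\Sigma_0$-term $t$, the specification $T_0\oplus T_1$ proves exactly the same transitions $t\xrightarrow{a}t'$ and predicate assertions $tP$ as $T_0$ does. The plan is to proceed in two stages. First I would establish the result for transition system specifications without negative premises, where provability is witnessed by well-founded proof trees and the argument is a direct structural induction. Then I would reduce the general case to that one by passing to the \emph{reduced} specifications, which exist and are positive precisely because $T_0$ and $T_0\oplus T_1$ are assumed positive after reduction, and which prove the same transitions and predicates as the original systems.

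So assume first that $T_0$ and $T_0\oplus T_1$ have no negative premises. One inclusion is trivial: the rules of $T_0$ are among those of $T_0\oplus T_1$, so every $T_0$-proof of $t\xrightarrow{a}t'$ or $tP$ is verbatim a $T_0\oplus T_1$-proof. For the converse I would prove, by induction on the structure of the $T_0\oplus T_1$-proof, the strengthened claim: if $t\in\mathcal{T}(\Sigma_0)$ and $T_0\oplus T_1\vdash t\xrightarrow{a}t'$ (resp.\ $tP$), then $T_0\vdash t\xrightarrow{a}t'$ (resp.\ $tP$) and, moreover, neither $a$ nor $t'$ (resp.\ $P$) is fresh. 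Let $\rho$ be the rule applied at the root of the proof and $\sigma$ the closed substitution with $\sigma(\mathrm{source}(\rho))=t$. If $\rho\in T_1$, condition~(2) applies: the source of $\rho$ is not fresh, since otherwise $\sigma(\mathrm{source}(\rho))=t$ would contain a symbol of $\Sigma_1\setminus\Sigma_0$; hence $\rho$ has a premise $u\xrightarrow{b}u'$ or $uP$ with $u\in\mathbb{T}(\Sigma_0)$, all variables of $u$ occurring in $\mathrm{source}(\rho)$, and $u'$, $b$, or $P$ fresh. Then $\sigma(u)\in\mathcal{T}(\Sigma_0)$, since the variables of $u$ are instantiated by subterms of $t$; this premise instance has a strictly smaller $T_0\oplus T_1$-proof, so the induction hypothesis says its label, target, and predicate symbol are not fresh --- contradicting the freshness of $b$, $u'$, or $P$, because substitution neither deletes a function symbol of $\Sigma_1\setminus\Sigma_0$ nor alters a label or predicate symbol. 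Hence $\rho\in T_0$.

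With $\rho\in T_0$, condition~(1) makes $\rho$ source-dependent, so every variable of $\rho$ is source-dependent. A subsidiary induction on the inductive definition of source-dependency shows that $\sigma$ maps every source-dependent variable of $\rho$ into $\mathcal{T}(\Sigma_0)$: variables of $\mathrm{source}(\rho)$ go to subterms of $t$; and if $x$ occurs in the target $u'$ of a positive premise $u\xrightarrow{b}u'$ of $\rho$ all of whose variables are source-dependent, then $\sigma(u)\in\mathcal{T}(\Sigma_0)$ by the subsidiary induction, this premise instance has a smaller proof, and the main induction hypothesis gives $\sigma(u')\in\mathcal{T}(\Sigma_0)$, so $\sigma(x)\in\mathcal{T}(\Sigma_0)$. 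Hence $\sigma$ instantiates all of $\rho$ within $\Sigma_0$-syntax: each premise instance is a transition or predicate on a term of $\mathcal{T}(\Sigma_0)$ with a strictly smaller $T_0\oplus T_1$-proof, hence provable from $T_0$ by the induction hypothesis, so applying $\rho$ under $\sigma$ yields a $T_0$-proof of $t\xrightarrow{a}t'$ (resp.\ $tP$); and $a$ and $t'$ (resp.\ $P$) are $\sigma$-images of $\Sigma_0$-syntax over source-dependent variables, hence not fresh. This settles the positive case.

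For the general case I would invoke that $T_0$ and $T_0\oplus T_1$ are positive after reduction, replace each system by its reduction --- a positive TSS proving the same transitions and predicates --- and run the argument above on the reductions. What remains is to check that reduction respects the hypotheses: that reducing $T_0\oplus T_1$ and then looking only at $\Sigma_0$-terms produces the same transitions and predicates as reducing $T_0$. This should follow because, arguing through the strata of the reduction and using the positive-case result at each stage, no $T_1$-rule can contribute a transition or predicate on a $\Sigma_0$-term, so every negative premise about a $\Sigma_0$-term is resolved identically in the two systems. I expect this last point --- pushing the freshness and source-dependency conditions through the reduction and showing that it stays local to $\Sigma_0$ --- to be the main obstacle; the positive-case induction is otherwise routine, its one subtlety being to carry the non-freshness invariant through every case so that the hypothesis is strong enough when a fresh premise has to be contradicted.
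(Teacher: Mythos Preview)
The paper does not actually prove this theorem: it is stated in the Backgrounds chapter as a preliminary result from the literature (it is the standard conservative-extension theorem for TSSs, due essentially to Groote, Fokkink, and Verhoef), and no proof environment follows it. So there is no paper proof to compare against.

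On its own merits your argument is correct and is precisely the classical proof. The two-stage structure (positive case first, then reduction to positive via ``positive after reduction''), the strengthened induction hypothesis carrying the non-freshness of targets and labels, the use of condition~(2) to rule out $T_1$-rules at the root, and the subsidiary induction on source-dependency to keep $\sigma$ inside $\mathcal{T}(\Sigma_0)$ are all standard and sound. You are also right that the delicate point is the passage through reduction in the general case; in the literature this is handled by showing that the three-valued stable model (or the stratification witnessing positivity after reduction) for $T_0\oplus T_1$ restricts to that for $T_0$ on $\Sigma_0$-terms, which is exactly the stratum-by-stratum argument you sketch. That step deserves more detail than you give it, but the outline is correct.
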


\subsection{Proof Techniques}\label{PT}

In this subsection, we introduce the concepts and conclusions about elimination, which is very important in the proof of completeness theorem.

\begin{definition}[Elimination property]
Let a process algebra with a defined set of basic terms as a subset of the set of closed terms over the process algebra. Then the process algebra has the elimination to basic terms
property if for every closed term $s$ of the algebra, there exists a basic term $t$ of the algebra such that the algebra$\vdash s=t$.
\end{definition}

\begin{definition}[Strongly normalizing]
A term $s_0$ is called strongly normalizing if does not an infinite series of reductions beginning in $s_0$.
\end{definition}

\begin{definition}
We write $s>_{lpo} t$ if $s\rightarrow^+ t$ where $\rightarrow^+$ is the transitive closure of the reduction relation defined by the transition rules of an algebra.
\end{definition}

\begin{theorem}[Strong normalization]\label{SN}
Let a term rewriting system (TRS) with finitely many rewriting rules and let $>$ be a well-founded ordering on the signature of the corresponding algebra. If $s>_{lpo} t$ for each
rewriting rule $s\rightarrow t$ in the TRS, then the term rewriting system is strongly normalizing.
\end{theorem}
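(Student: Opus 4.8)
The plan is to recognise this as the termination theorem for the lexicographic path ordering and to run the standard three-step argument. Throughout, $>_{lpo}$ denotes the lexicographic path order determined by the well-founded precedence $>$ on the (finite) signature $\Sigma$: for $s=f(s_1,\dots,s_m)$ one has $s>_{lpo}t$ precisely when either (a) $s_i\geq_{lpo}t$ for some $i$ (writing $\geq_{lpo}$ for $>_{lpo}$ together with syntactic equality), or (b) $t=g(t_1,\dots,t_n)$ with $f>g$ and $s>_{lpo}t_j$ for all $j$, or (c) $t=f(t_1,\dots,t_m)$ with $s>_{lpo}t_j$ for all $j$ and $(s_1,\dots,s_m)$ greater than $(t_1,\dots,t_m)$ in the lexicographic lifting of $>_{lpo}$. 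The three steps are: (1) prove the order-theoretic closure properties of $>_{lpo}$; (2) deduce that every one-step reduction strictly decreases a term with respect to $>_{lpo}$; (3) invoke well-foundedness of $>_{lpo}$ to rule out infinite reductions.

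For Step 1 I would first check, by a simultaneous induction on term structure, that $>_{lpo}$ is a strict partial order (irreflexive and transitive) enjoying the subterm property $f(\dots,t,\dots)>_{lpo}t$. Then I would prove \emph{stability}, $s>_{lpo}t\Rightarrow s\sigma>_{lpo}t\sigma$ for every substitution $\sigma$, by induction on the derivation of $s>_{lpo}t$ through clauses (a)--(c); and \emph{monotonicity}, $s>_{lpo}t\Rightarrow C[s]>_{lpo}C[t]$ for every context $C$, by induction on $C$, the one-hole step using clause (c) at the parent of the hole together with clause (a) and transitivity to propagate out to the root of $C$. These are routine. Well-foundedness is deferred to Step 3.

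For Step 2, let $u\rightarrow v$ be an arbitrary rewrite step: there are a rule $s\rightarrow t$, a context $C$ and a substitution $\sigma$ with $u=C[s\sigma]$ and $v=C[t\sigma]$. By hypothesis $s>_{lpo}t$; stability gives $s\sigma>_{lpo}t\sigma$; monotonicity gives $C[s\sigma]>_{lpo}C[t\sigma]$, i.e.\ $u>_{lpo}v$. Hence $\rightarrow\,\subseteq\,>_{lpo}$ and so $\rightarrow^+\,\subseteq\,>_{lpo}$. Now if some term $s_0$ admitted an infinite series of reductions $s_0\rightarrow s_1\rightarrow s_2\rightarrow\cdots$, we would obtain an infinite strictly descending chain $s_0>_{lpo}s_1>_{lpo}s_2>_{lpo}\cdots$, which is impossible once $>_{lpo}$ is known to be well-founded; therefore no such $s_0$ exists, which is exactly the assertion that the TRS is strongly normalizing. (Finiteness of the rule set is what makes the reduction relation sit inside a single well-founded order in this way; it is not otherwise used.)

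Step 3, the well-foundedness of $>_{lpo}$, is the real obstacle and the part I expect to be hardest. The self-contained route is a minimal-bad-sequence argument: assuming infinite descending $>_{lpo}$-chains exist, choose one $t_0>_{lpo}t_1>_{lpo}\cdots$ in which $t_0$ has least size among heads of such chains, then $t_1$ least among continuations, and so on, so that no proper subterm of any $t_i$ heads an infinite descending chain. Writing $t_i=f_i(\dots)$, clause (a) is excluded, since it would make a subterm of $t_i$ head the infinite chain $t_{i+1}>_{lpo}t_{i+2}>_{lpo}\cdots$, contradicting minimality; hence every step is justified by (b) or (c), which forces $f_0\geq f_1\geq f_2\geq\cdots$ in the precedence, and well-foundedness of $>$ on $\Sigma$ makes the head symbol constant, say $f$, from some index $N$ on. From $N$ onward only clause (c) applies, so the argument tuples form an infinite $>_{lpo}^{\mathrm{lex}}$-descending sequence built from subterms of the $t_i$; an induction on the precedence rank of the head symbol, with the lexicographic status of the arguments as inner measure, contradicts the choice of the bad sequence. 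A shorter alternative is to cite Kruskal's tree theorem: over the finite $\Sigma$ the homeomorphic tree embedding is a well-quasi-order on terms and $>_{lpo}$ is compatible with it, so $>_{lpo}$ cannot descend forever. I would present the minimal-bad-sequence version for a self-contained treatment, flagging it as the technically heaviest point of the argument.
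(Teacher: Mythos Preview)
Your proposal is a correct and complete rendition of the classical termination theorem for the lexicographic path order: the closure properties in Step~1, the reduction-to-order inclusion in Step~2, and the minimal-bad-sequence well-foundedness argument in Step~3 are all the standard ingredients, and your sketch contains no gaps.

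However, the paper does not actually prove this theorem. It appears in the preliminaries (Section~\ref{PT}) as a background result quoted from the term-rewriting literature, stated without proof and used only as a tool for the elimination theorems later on. So there is no ``paper's own proof'' to compare against: you have supplied a full argument where the paper simply cites the result. Your treatment is therefore strictly more detailed than what the paper offers, and nothing in it conflicts with the paper's use of the theorem.
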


\subsection{APTC with Localities}

\subsubsection{Operational Semantics}

\begin{definition}[Prime event structure with silent event]\label{PES}
Let $\Lambda$ be a fixed set of labels, ranged over $a,b,c,\cdots$ and $\tau$. A ($\Lambda$-labelled) prime event structure with silent event $\tau$ is a tuple
$\mathcal{E}=\langle \mathbb{E}, \leq, \sharp, \lambda\rangle$, where $\mathbb{E}$ is a denumerable set of events, including the silent event $\tau$. Let
$\hat{\mathbb{E}}=\mathbb{E}\backslash\{\tau\}$, exactly excluding $\tau$, it is obvious that $\hat{\tau^*}=\epsilon$, where $\epsilon$ is the empty event.
Let $\lambda:\mathbb{E}\rightarrow\Lambda$ be a labelling function and let $\lambda(\tau)=\tau$. And $\leq$, $\sharp$ are binary relations on $\mathbb{E}$,
called causality and conflict respectively, such that:

\begin{enumerate}
  \item $\leq$ is a partial order and $\lceil e \rceil = \{e'\in \mathbb{E}|e'\leq e\}$ is finite for all $e\in \mathbb{E}$. It is easy to see that
  $e\leq\tau^*\leq e'=e\leq\tau\leq\cdots\leq\tau\leq e'$, then $e\leq e'$.
  \item $\sharp$ is irreflexive, symmetric and hereditary with respect to $\leq$, that is, for all $e,e',e''\in \mathbb{E}$, if $e\sharp e'\leq e''$, then $e\sharp e''$.
\end{enumerate}

Then, the concepts of consistency and concurrency can be drawn from the above definition:

\begin{enumerate}
  \item $e,e'\in \mathbb{E}$ are consistent, denoted as $e\frown e'$, if $\neg(e\sharp e')$. A subset $X\subseteq \mathbb{E}$ is called consistent, if $e\frown e'$ for all
  $e,e'\in X$.
  \item $e,e'\in \mathbb{E}$ are concurrent, denoted as $e\parallel e'$, if $\neg(e\leq e')$, $\neg(e'\leq e)$, and $\neg(e\sharp e')$.
\end{enumerate}
\end{definition}

\begin{definition}[Configuration]
Let $\mathcal{E}$ be a PES. A (finite) configuration in $\mathcal{E}$ is a (finite) consistent subset of events $C\subseteq \mathcal{E}$, closed with respect to causality
(i.e. $\lceil C\rceil=C$). The set of finite configurations of $\mathcal{E}$ is denoted by $\mathcal{C}(\mathcal{E})$. We let $\hat{C}=C\backslash\{\tau\}$.
\end{definition}

A consistent subset of $X\subseteq \mathbb{E}$ of events can be seen as a pomset. Given $X, Y\subseteq \mathbb{E}$, $\hat{X}\sim \hat{Y}$ if $\hat{X}$ and $\hat{Y}$ are
isomorphic as pomsets. In the following of the paper, we say $C_1\sim C_2$, we mean $\hat{C_1}\sim\hat{C_2}$.

\begin{definition}[Pomset transitions and step]
Let $\mathcal{E}$ be a PES and let $C\in\mathcal{C}(\mathcal{E})$, and $\emptyset\neq X\subseteq \mathbb{E}$, if $C\cap X=\emptyset$ and $C'=C\cup X\in\mathcal{C}(\mathcal{E})$,
then $C\xrightarrow{X} C'$ is called a pomset transition from $C$ to $C'$. When the events in $X$ are pairwise concurrent, we say that $C\xrightarrow{X}C'$ is a step.
\end{definition}

\begin{definition}[Posetal product]
Given two PESs $\mathcal{E}_1$, $\mathcal{E}_2$, the posetal product of their configurations, denoted $\mathcal{C}(\mathcal{E}_1)\overline{\times}\mathcal{C}(\mathcal{E}_2)$,
is defined as

$$\{(C_1,f,C_2)|C_1\in\mathcal{C}(\mathcal{E}_1),C_2\in\mathcal{C}(\mathcal{E}_2),f:C_1\rightarrow C_2 \textrm{ isomorphism}\}.$$

A subset $R\subseteq\mathcal{C}(\mathcal{E}_1)\overline{\times}\mathcal{C}(\mathcal{E}_2)$ is called a posetal relation. We say that $R$ is downward closed when for any
$(C_1,f,C_2),(C_1',f',C_2')\in \mathcal{C}(\mathcal{E}_1)\overline{\times}\mathcal{C}(\mathcal{E}_2)$, if $(C_1,f,C_2)\subseteq (C_1',f',C_2')$ pointwise and $(C_1',f',C_2')\in R$,
then $(C_1,f,C_2)\in R$.

For $f:X_1\rightarrow X_2$, we define $f[x_1\mapsto x_2]:X_1\cup\{x_1\}\rightarrow X_2\cup\{x_2\}$, $z\in X_1\cup\{x_1\}$,(1)$f[x_1\mapsto x_2](z)=
x_2$,if $z=x_1$;(2)$f[x_1\mapsto x_2](z)=f(z)$, otherwise. Where $X_1\subseteq \mathbb{E}_1$, $X_2\subseteq \mathbb{E}_2$, $x_1\in \mathbb{E}_1$, $x_2\in \mathbb{E}_2$.
\end{definition}

Let $Loc$ be the set of locations, and $u,v\in Loc^*$. Let $\ll$ be the sequential ordering on $Loc^*$, we call $v$ is an extension or a sublocation of $u$ in $u\ll v$; and if $u\nll v$
$v\nll u$, then $u$ and $v$ are independent and denoted $u\diamond v$.

\begin{definition}[Consistent location association]
A relation $\varphi\subseteq (Loc^*\times Loc^*)$ is a consistent location association (cla), if $(u,v)\in \varphi \&(u',v')\in\varphi$, then $u\diamond u'\Leftrightarrow v\diamond v'$.
\end{definition}

\begin{definition}[Static location pomset, step bisimulation]
Let $\mathcal{E}_1$, $\mathcal{E}_2$ be PESs. A static location pomset bisimulation is a relation $R_{\varphi}\subseteq\mathcal{C}(\mathcal{E}_1)\times\mathcal{C}(\mathcal{E}_2)$, such that if
$(C_1,C_2)\in R_{\varphi}$, and $C_1\xrightarrow[u]{X_1}C_1'$ then $C_2\xrightarrow[v]{X_2}C_2'$, with $X_1\subseteq \mathbb{E}_1$, $X_2\subseteq \mathbb{E}_2$, $X_1\sim X_2$ and
$(C_1',C_2')\in R_{\varphi\cup\{(u,v)\}}$,
and vice-versa. We say that $\mathcal{E}_1$, $\mathcal{E}_2$ are static location pomset bisimilar, written $\mathcal{E}_1\sim_p^{sl}\mathcal{E}_2$, if there exists a static location pomset bisimulation $R_{\varphi}$, such that
$(\emptyset,\emptyset)\in R_{\varphi}$. By replacing pomset transitions with steps, we can get the definition of static location step bisimulation. When PESs $\mathcal{E}_1$ and $\mathcal{E}_2$ are static location step
bisimilar, we write $\mathcal{E}_1\sim_s^{sl}\mathcal{E}_2$.
\end{definition}

\begin{definition}[Static location (hereditary) history-preserving bisimulation]
A static location history-preserving (hp-) bisimulation is a posetal relation $R_{\varphi}\subseteq\mathcal{C}(\mathcal{E}_1)\overline{\times}\mathcal{C}(\mathcal{E}_2)$ such that if $(C_1,f,C_2)\in R_{\varphi}$,
and $C_1\xrightarrow[u]{e_1} C_1'$, then $C_2\xrightarrow[v]{e_2} C_2'$, with $(C_1',f[e_1\mapsto e_2],C_2')\in R_{\varphi\cup\{(u,v)\}}$, and vice-versa. $\mathcal{E}_1,\mathcal{E}_2$ are static location history-preserving
(hp-)bisimilar and are written $\mathcal{E}_1\sim_{hp}^{sl}\mathcal{E}_2$ if there exists a static location hp-bisimulation $R_{\varphi}$ such that $(\emptyset,\emptyset,\emptyset)\in R_{\varphi}$.

A static location hereditary history-preserving (hhp-)bisimulation is a downward closed static location hp-bisimulation. $\mathcal{E}_1,\mathcal{E}_2$ are static location hereditary history-preserving (hhp-)bisimilar and are
written $\mathcal{E}_1\sim_{hhp}^{sl}\mathcal{E}_2$.
\end{definition}

\begin{definition}[Weak static location pomset, step bisimulation]
Let $\mathcal{E}_1$, $\mathcal{E}_2$ be PESs. A weak static location pomset bisimulation is a relation $R_{\varphi}\subseteq\mathcal{C}(\mathcal{E}_1)\times\mathcal{C}(\mathcal{E}_2)$, such that if
$(C_1,C_2)\in R_{\varphi}$, and $C_1\xRightarrow[u]{X_1}C_1'$ then $C_2\xRightarrow[v]{X_2}C_2'$, with $X_1\subseteq \hat{\mathbb{E}_1}$, $X_2\subseteq \hat{\mathbb{E}_2}$, $X_1\sim X_2$ and
$(C_1',C_2')\in R_{\varphi\cup\{(u,v)\}}$, and vice-versa. We say that $\mathcal{E}_1$, $\mathcal{E}_2$ are weak static location pomset bisimilar, written $\mathcal{E}_1\approx_p^{sl}\mathcal{E}_2$, if there exists a weak static location pomset
bisimulation $R_{\varphi}$, such that $(\emptyset,\emptyset)\in R_{\varphi}$. By replacing weak pomset transitions with weak steps, we can get the definition of weak static location step bisimulation. When PESs
$\mathcal{E}_1$ and $\mathcal{E}_2$ are weak static location step bisimilar, we write $\mathcal{E}_1\approx_s^{sl}\mathcal{E}_2$.
\end{definition}

\begin{definition}[Weak static location (hereditary) history-preserving bisimulation]
A weak static location history-preserving (hp-) bisimulation is a weakly posetal relation $R_{\varphi}\subseteq\mathcal{C}(\mathcal{E}_1)\overline{\times}\mathcal{C}(\mathcal{E}_2)$ such that if
$(C_1,f,C_2)\in R_{\varphi}$, and $C_1\xRightarrow[u]{e_1} C_1'$, then $C_2\xRightarrow[v]{e_2} C_2'$, with $(C_1',f[e_1\mapsto e_2],C_2')\in R_{\varphi\cup\{(u,v)\}}$, and vice-versa. $\mathcal{E}_1,\mathcal{E}_2$ are weak
static location history-preserving (hp-)bisimilar and are written $\mathcal{E}_1\approx_{hp}^{sl}\mathcal{E}_2$ if there exists a weak static location hp-bisimulation $R_{\varphi}$ such that $(\emptyset,\emptyset,\emptyset)\in R_{\varphi}$.

A weak static location hereditary history-preserving (hhp-)bisimulation is a downward closed weak static location hp-bisimulation. $\mathcal{E}_1,\mathcal{E}_2$ are weak static location hereditary history-preserving
(hhp-)bisimilar and are written $\mathcal{E}_1\approx_{hhp}^{sl}\mathcal{E}_2$.
\end{definition}

\begin{definition}[Branching static location pomset, step bisimulation]
Assume a special termination predicate $\downarrow$, and let $\surd$ represent a state with $\surd\downarrow$. Let $\mathcal{E}_1$, $\mathcal{E}_2$ be PESs. A branching static location pomset
bisimulation is a relation $R_{\varphi}\subseteq\mathcal{C}(\mathcal{E}_1)\times\mathcal{C}(\mathcal{E}_2)$, such that:
 \begin{enumerate}
   \item if $(C_1,C_2)\in R_{\varphi}$, and $C_1\xrightarrow[u]{X}C_1'$ then
   \begin{itemize}
     \item either $X\equiv \tau^*$, and $(C_1',C_2)\in R_{\varphi}$;
     \item or there is a sequence of (zero or more) $\tau$-transitions $C_2\xrightarrow{\tau^*} C_2^0$, such that $(C_1,C_2^0)\in R_{\varphi}$ and $C_2^0\xRightarrow[v]{X}C_2'$ with
     $(C_1',C_2')\in R_{\varphi\cup\{(u,v)\}}$;
   \end{itemize}
   \item if $(C_1,C_2)\in R_{\varphi}$, and $C_2\xrightarrow[v]{X}C_2'$ then
   \begin{itemize}
     \item either $X\equiv \tau^*$, and $(C_1,C_2')\in R_{\varphi}$;
     \item or there is a sequence of (zero or more) $\tau$-transitions $C_1\xrightarrow{\tau^*} C_1^0$, such that $(C_1^0,C_2)\in R_{\varphi}$ and $C_1^0\xRightarrow[u]{X}C_1'$ with
     $(C_1',C_2')\in R_{\varphi\cup\{(u,v)\}}$;
   \end{itemize}
   \item if $(C_1,C_2)\in R_{\varphi}$ and $C_1\downarrow$, then there is a sequence of (zero or more) $\tau$-transitions $C_2\xrightarrow{\tau^*}C_2^0$ such that $(C_1,C_2^0)\in R_{\varphi}$
   and $C_2^0\downarrow$;
   \item if $(C_1,C_2)\in R_{\varphi}$ and $C_2\downarrow$, then there is a sequence of (zero or more) $\tau$-transitions $C_1\xrightarrow{\tau^*}C_1^0$ such that $(C_1^0,C_2)\in R_{\varphi}$
   and $C_1^0\downarrow$.
 \end{enumerate}

We say that $\mathcal{E}_1$, $\mathcal{E}_2$ are branching static location pomset bisimilar, written $\mathcal{E}_1\approx_{bp}^{sl}\mathcal{E}_2$, if there exists a branching static location pomset bisimulation $R_{\varphi}$,
such that $(\emptyset,\emptyset)\in R_{\varphi}$.

By replacing pomset transitions with steps, we can get the definition of branching static location step bisimulation. When PESs $\mathcal{E}_1$ and $\mathcal{E}_2$ are branching static location step bisimilar,
we write $\mathcal{E}_1\approx_{bs}^{sl}\mathcal{E}_2$.
\end{definition}

\begin{definition}[Rooted branching static location pomset, step bisimulation]
Assume a special termination predicate $\downarrow$, and let $\surd$ represent a state with $\surd\downarrow$. Let $\mathcal{E}_1$, $\mathcal{E}_2$ be PESs. A rooted branching static location pomset
bisimulation is a relation $R_{\varphi}\subseteq\mathcal{C}(\mathcal{E}_1)\times\mathcal{C}(\mathcal{E}_2)$, such that:
 \begin{enumerate}
   \item if $(C_1,C_2)\in R_{\varphi}$, and $C_1\xrightarrow[u]{X}C_1'$ then $C_2\xrightarrow[v]{X}C_2'$ with $C_1'\approx_{bp}^{sl}C_2'$;
   \item if $(C_1,C_2)\in R_{\varphi}$, and $C_2\xrightarrow[v]{X}C_2'$ then $C_1\xrightarrow[u]{X}C_1'$ with $C_1'\approx_{bp}^{sl}C_2'$;
   \item if $(C_1,C_2)\in R_{\varphi}$ and $C_1\downarrow$, then $C_2\downarrow$;
   \item if $(C_1,C_2)\in R_{\varphi}$ and $C_2\downarrow$, then $C_1\downarrow$.
 \end{enumerate}

We say that $\mathcal{E}_1$, $\mathcal{E}_2$ are rooted branching static location pomset bisimilar, written $\mathcal{E}_1\approx_{rbp}^{sl}\mathcal{E}_2$, if there exists a rooted branching static location pomset
bisimulation $R_{\varphi}$, such that $(\emptyset,\emptyset)\in R_{\varphi}$.

By replacing pomset transitions with steps, we can get the definition of rooted branching static location step bisimulation. When PESs $\mathcal{E}_1$ and $\mathcal{E}_2$ are rooted branching static location step
bisimilar, we write $\mathcal{E}_1\approx_{rbs}^{sl}\mathcal{E}_2$.
\end{definition}

\begin{definition}[Branching static location (hereditary) history-preserving bisimulation]
Assume a special termination predicate $\downarrow$, and let $\surd$ represent a state with $\surd\downarrow$. A branching static location history-preserving (hp-) bisimulation is a posetal
relation $R_{\varphi}\subseteq\mathcal{C}(\mathcal{E}_1)\overline{\times}\mathcal{C}(\mathcal{E}_2)$ such that:

 \begin{enumerate}
   \item if $(C_1,f,C_2)\in R$, and $C_1\xrightarrow[u]{e_1}C_1'$ then
   \begin{itemize}
     \item either $e_1\equiv \tau$, and $(C_1',f[e_1\mapsto \tau],C_2)\in R_{\varphi}$;
     \item or there is a sequence of (zero or more) $\tau$-transitions $C_2\xrightarrow{\tau^*} C_2^0$, such that $(C_1,f,C_2^0)\in R_{\varphi}$ and $C_2^0\xrightarrow[v]{e_2}C_2'$ with
     $(C_1',f[e_1\mapsto e_2],C_2')\in R_{\varphi\cup\{(u,v)\}}$;
   \end{itemize}
   \item if $(C_1,f,C_2)\in R_{\varphi}$, and $C_2\xrightarrow[v]{e_2}C_2'$ then
   \begin{itemize}
     \item either $X\equiv \tau$, and $(C_1,f[e_2\mapsto \tau],C_2')\in R_{\varphi}$;
     \item or there is a sequence of (zero or more) $\tau$-transitions $C_1\xrightarrow{\tau^*} C_1^0$, such that $(C_1^0,f,C_2)\in R_{\varphi}$ and $C_1^0\xrightarrow[u]{e_1}C_1'$ with
     $(C_1',f[e_2\mapsto e_1],C_2')\in R_{\varphi\cup\{(u,v)\}}$;
   \end{itemize}
   \item if $(C_1,f,C_2)\in R_{\varphi}$ and $C_1\downarrow$, then there is a sequence of (zero or more) $\tau$-transitions $C_2\xrightarrow{\tau^*}C_2^0$ such that $(C_1,f,C_2^0)\in R_{\varphi}$
   and $C_2^0\downarrow$;
   \item if $(C_1,f,C_2)\in R_{\varphi}$ and $C_2\downarrow$, then there is a sequence of (zero or more) $\tau$-transitions $C_1\xrightarrow{\tau^*}C_1^0$ such that $(C_1^0,f,C_2)\in R_{\varphi}$
   and $C_1^0\downarrow$.
 \end{enumerate}

$\mathcal{E}_1,\mathcal{E}_2$ are branching static location history-preserving (hp-)bisimilar and are written $\mathcal{E}_1\approx_{bhp}^{sl}\mathcal{E}_2$ if there exists a branching static location hp-bisimulation
$R_{\varphi}$ such that $(\emptyset,\emptyset,\emptyset)\in R_{\varphi}$.

A branching static location hereditary history-preserving (hhp-)bisimulation is a downward closed branching static location hhp-bisimulation. $\mathcal{E}_1,\mathcal{E}_2$ are branching static location hereditary history-preserving
(hhp-)bisimilar and are written $\mathcal{E}_1\approx_{bhhp}^{sl}\mathcal{E}_2$.
\end{definition}

\begin{definition}[Rooted branching static location (hereditary) history-preserving bisimulation]
Assume a special termination predicate $\downarrow$, and let $\surd$ represent a state with $\surd\downarrow$. A rooted branching static location history-preserving (hp-) bisimulation is a posetal relation $R_{\varphi}\subseteq\mathcal{C}(\mathcal{E}_1)\overline{\times}\mathcal{C}(\mathcal{E}_2)$ such that:

 \begin{enumerate}
   \item if $(C_1,f,C_2)\in R_{\varphi}$, and $C_1\xrightarrow[u]{e_1}C_1'$, then $C_2\xrightarrow[v]{e_2}C_2'$ with $C_1'\approx_{bhp}^{sl}C_2'$;
   \item if $(C_1,f,C_2)\in R_{\varphi}$, and $C_2\xrightarrow[v]{e_2}C_2'$, then $C_1\xrightarrow[u]{e_1}C_1'$ with $C_1'\approx_{bhp}^{sl}C_2'$;
   \item if $(C_1,f,C_2)\in R_{\varphi}$ and $C_1\downarrow$, then $C_2\downarrow$;
   \item if $(C_1,f,C_2)\in R_{\varphi}$ and $C_2\downarrow$, then $C_1\downarrow$.
 \end{enumerate}

$\mathcal{E}_1,\mathcal{E}_2$ are rooted branching static location history-preserving (hp-)bisimilar and are written $\mathcal{E}_1\approx_{rbhp}^{sl}\mathcal{E}_2$ if there exists a rooted branching
static location hp-bisimulation $R_{\varphi}$ such that $(\emptyset,\emptyset,\emptyset)\in R_{\varphi}$.

A rooted branching static location hereditary history-preserving (hhp-)bisimulation is a downward closed rooted branching static location hp-bisimulation. $\mathcal{E}_1,\mathcal{E}_2$ are rooted branching
static location hereditary history-preserving (hhp-)bisimilar and are written $\mathcal{E}_1\approx_{rbhhp}^{sl}\mathcal{E}_2$.
\end{definition}

\subsubsection{BATC with Localities}

Let $Loc$ be the set of locations, and $loc\in Loc$, $u,v\in Loc^*$, $\epsilon$ is the empty location. A distribution allocates a location $u\in Loc*$ to an action $e$ denoted
$u::e$ or a process $x$ denoted $u::x$.


In the following, let $e_1, e_2, e_1', e_2'\in \mathbb{E}$, and let variables $x,y,z$ range over the set of terms for true concurrency, $p,q,s$ range over the set of closed terms.
The set of axioms of BATC with static localities ($BATC^{sl}$) consists of the laws given in Table \ref{AxiomsForBATC21}.

\begin{center}
    \begin{table}
        \begin{tabular}{@{}ll@{}}
            \hline No. &Axiom\\
            $A1$ & $x+ y = y+ x$\\
            $A2$ & $(x+ y)+ z = x+ (y+ z)$\\
            $A3$ & $x+ x = x$\\
            $A4$ & $(x+ y)\cdot z = x\cdot z + y\cdot z$\\
            $A5$ & $(x\cdot y)\cdot z = x\cdot(y\cdot z)$\\
            $L1$ & $\epsilon::x=x$\\
            $L2$ & $u::(x\cdot y)=u::x\cdot u::y$\\
            $L3$ & $u::(x+ y)=u::x+ u::y$\\
            $L4$ & $u::(v::x)=uv::x$\\
        \end{tabular}
        \caption{Axioms of BATC with static localities}
        \label{AxiomsForBATC21}
    \end{table}
\end{center}


\begin{definition}[Basic terms of BATC with static localities]\label{BTBATC21}
The set of basic terms of BATC with static localities, $\mathcal{B}(BATC^{sl})$, is inductively defined as follows:
\begin{enumerate}
  \item $\mathbb{E}\subset\mathcal{B}(BATC^{sl})$;
  \item if $u\in Loc^*, t\in\mathcal{B}(BATC^{sl})$ then $u::t\in\mathcal{B}(BATC^{sl})$;
  \item if $e\in \mathbb{E}, t\in\mathcal{B}(BATC^{sl})$ then $e\cdot t\in\mathcal{B}(BATC^{sl})$;
  \item if $t,s\in\mathcal{B}(BATC^{sl})$ then $t+ s\in\mathcal{B}(BATC^{sl})$.
\end{enumerate}
\end{definition}

\begin{theorem}[Elimination theorem of BATC with static localities]\label{ETBATC21}
Let $p$ be a closed BATC with static localities term. Then there is a basic BATC with static localities term $q$ such that $BATC^{sl}\vdash p=q$.
\end{theorem}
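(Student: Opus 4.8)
The plan is to run the standard term-rewriting argument for elimination theorems. I would orient the equations of Table~\ref{AxiomsForBATC21} into a rewriting system, show it is strongly normalizing using Theorem~\ref{SN}, verify that its normal forms are exactly the basic terms of Definition~\ref{BTBATC21}, and then conclude: for a closed term $p$, any normal form $q$ is a basic term, and since every rewrite step is an instance of an axiom, $BATC^{sl}\vdash p=q$.

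First I would set up the rewriting system $\mathcal{R}$ by reading $A3,A4,A5,L1,L2,L3,L4$ as rules oriented from left to right; the commutativity axiom $A1$ is not orientable, so as usual I would rewrite modulo associativity--commutativity of $+$ (which is harmless since $\mathcal{B}(BATC^{sl})$ is closed under reordering of summands). To invoke Theorem~\ref{SN} I need a well-founded order with $s>_{lpo}t$ for every rule. I would use the lexicographic path order generated by the precedence in which every locality prefix $u::$ outranks $\cdot$, $\cdot$ outranks $+$, $+$ outranks every $e\in\mathbb{E}$, and, among prefixes, $u::\succ v::$ whenever $|u|<|v|$, with $\cdot$ and $+$ given left-to-right lexicographic status. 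The required decreases are then routine checks: $A3$ and $L1$ by the subterm property; $A4$ using $\cdot\succ +$; $A5$ using the lexicographic status of $\cdot$; $L2,L3$ using $u::\succ\cdot$, $u::\succ +$ together with the subterm property; and $L4$ using $u::\succ uv::$ (together with the subterm property when $v=\epsilon$). One subtlety: the family $\{u::\mid u\in Loc^*\}$ is infinite, so I would note that along any derivation starting from a fixed closed term only finitely many prefixes ever occur (since $L2,L3$ merely duplicate existing prefixes and $L4$ only concatenates adjacent ones), whence it suffices that the precedence be well-founded on that finite sub-signature, which it plainly is. Theorem~\ref{SN} then yields that $\mathcal{R}$ is strongly normalizing, so every closed term has a normal form.

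Next I would characterize the normal forms, showing by induction on structure that every closed $\mathcal{R}$-normal form lies in $\mathcal{B}(BATC^{sl})$; equivalently, that a closed term not of the shape allowed by Definition~\ref{BTBATC21} contains a redex. The case analysis is on the head operator: a $\cdot$ whose left argument is a sum gives an $A4$-redex and whose left argument is a $\cdot$ gives an $A5$-redex, so in a normal form the left argument of each $\cdot$ is an atomic action and its subterms are basic by the induction hypothesis; a prefix $u::t$ in normal form has $u\neq\epsilon$ (else $L1$) and $t$ headed by neither $+$ (else $L3$), nor $\cdot$ (else $L2$), nor another prefix (else $L4$), so $t$ is an event and $u::t$ falls under clause~(2); and a sum is handled summand-wise, with $A3$ collapsing repetitions. (Here one reads clause~(3) of Definition~\ref{BTBATC21} as also admitting a located action $u::e$ as the left argument of $\cdot$, which is evidently the intent once prefixes have been driven inward.)

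The step I expect to be the genuine obstacle is the termination proof, and within it the locality prefix: $L2$ and $L3$ \emph{duplicate} the prefix $u::$, while $L4$ replaces a nested pair of prefixes by a single but \emph{longer} one, so no straightforward size, weight, or polynomial measure decreases on all of $L2,L3,L4$ simultaneously; the lexicographic path order with the precedence above is what reconciles them, at the cost of the small auxiliary observation that only finitely many prefixes can appear along a derivation. A secondary point requiring care is the precise match, in the normal-form step, between $\mathcal{R}$-normal forms and the grammar of basic terms.
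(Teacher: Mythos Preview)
Your plan is sound and is precisely the methodology the paper itself sets up: Section~\ref{PT} (Definitions~\ref{BTBATC21}--Theorem~\ref{SN}) lays out exactly the strong-normalization-via-lpo machinery you invoke, and the paper gives no independent proof of Theorem~\ref{ETBATC21}, merely stating it as background and later (for the quantum analogue) deferring to \cite{LOC1}. So your approach coincides with what the paper intends.

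One small remark: your parenthetical about clause~(3) of Definition~\ref{BTBATC21} is well taken and worth keeping --- the grammar as written does not literally admit $(u::e)\cdot t$, yet $L2$ forces normal forms into that shape, so either the clause must be read liberally or one accepts that normal forms lie in a slight superset of $\mathcal{B}(BATC^{sl})$ that is still provably equal (via $L2$ read right-to-left) to a term in $\mathcal{B}(BATC^{sl})$. This is an imprecision in the paper's definition rather than a defect in your argument.
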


In this subsection, we will define a term-deduction system which gives the operational semantics of BATC with static localities. We give the operational transition rules for operators
$\cdot$ and $+$ as Table \ref{SETRForBATC21} shows. And the predicate $\xrightarrow[u]{e}\surd$ represents successful termination after execution of the event $e$ at the location $u$.

\begin{center}
    \begin{table}
        $$\frac{}{e\xrightarrow[\epsilon]{e}\surd}\quad \frac{}{loc::e\xrightarrow[loc]{e}\surd}$$
        $$\frac{x\xrightarrow[u]{e}x'}{loc::x\xrightarrow[loc\ll u]{e}loc::x'}$$
        $$\frac{x\xrightarrow[u]{e}\surd}{x+ y\xrightarrow[u]{e}\surd} \quad\frac{x\xrightarrow[u]{e}x'}{x+ y\xrightarrow[u]{e}x'} \quad\frac{y\xrightarrow[u]{e}\surd}{x+ y\xrightarrow[u]{e}\surd} \quad\frac{y\xrightarrow[u]{e}y'}{x+ y\xrightarrow[u]{e}y'}$$
        $$\frac{x\xrightarrow[u]{e}\surd}{x\cdot y\xrightarrow[u]{e} y} \quad\frac{x\xrightarrow[u]{e}x'}{x\cdot y\xrightarrow[u]{e}x'\cdot y}$$
        \caption{Single event transition rules of BATC with static localities}
        \label{SETRForBATC21}
    \end{table}
\end{center}

\begin{theorem}[Congruence of BATC with static localities with respect to static location pomset bisimulation equivalence]
Static location pomset bisimulation equivalence $\sim_p^{sl}$ is a congruence with respect to BATC with static localities.
\end{theorem}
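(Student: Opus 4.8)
The plan is to prove congruence operator by operator. The signature of $BATC^{sl}$ is generated by alternative composition $+$, sequential composition $\cdot$, and the location-prefixing operators $loc::\_$ (the general operator $u::\_$ for $u\in Loc^*$ being obtained by iteration, using axiom $L4$), so it suffices to show that each of these three operators preserves $\sim_p^{sl}$. For each operator I would fix bisimilar arguments together with witnessing families of relations $\{R_\varphi\}$ indexed by consistent location associations, exhibit an explicit candidate family $\{S_\varphi\}$ on the composite terms, and verify the forth/back clauses of the static location pomset bisimulation directly from the transition rules of Table \ref{SETRForBATC21}, read at the level of pomset transitions.

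For $+$: assume $x_1\sim_p^{sl}y_1$ and $x_2\sim_p^{sl}y_2$ with witnesses $\{R^1_\varphi\}$, $\{R^2_\varphi\}$, and take $S_\varphi=\{(x_1+x_2,\,y_1+y_2)\}\cup R^1_\varphi\cup R^2_\varphi$. By the four rules for $+$, every transition of $x_1+x_2$ is a transition of $x_1$ or of $x_2$ (to a residual or to $\surd$); the matching transition supplied by $R^1$ or $R^2$ does the job, and the residual pair lands in $R^1_{\varphi\cup\{(u,v)\}}$ or $R^2_{\varphi\cup\{(u,v)\}}\subseteq S_{\varphi\cup\{(u,v)\}}$, with the converse direction symmetric. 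For $\cdot$: with the same hypotheses take $S_\varphi=\{(s\cdot x_2,\,t\cdot y_2):(s,t)\in R^1_\varphi\}\cup R^2_\varphi$. A transition of $s\cdot x_2$ either comes from $s\xrightarrow[u]{X}s'$, so $s\cdot x_2\xrightarrow[u]{X}s'\cdot x_2$ is matched by $t\xrightarrow[v]{X}t'$ with $(s'\cdot x_2,t'\cdot y_2)\in S_{\varphi\cup\{(u,v)\}}$; or from $s\xrightarrow[u]{X}\surd$, so $s\cdot x_2\xrightarrow[u]{X}x_2$ is matched by $t\xrightarrow[v]{X}\surd$, $t\cdot y_2\xrightarrow[v]{X}y_2$, and $(x_2,y_2)$ lies in $R^2$ at the empty (initial) cla, hence in $S_{\varphi\cup\{(u,v)\}}$. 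Both cases are essentially the classical $BATC$ congruence argument, with the location annotations carried passively along the label.

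For $loc::\_$: assume $x\sim_p^{sl}y$ via $\{R_\varphi\}$, and take $S_\varphi=\{(loc::s,\,loc::t):(s,t)\in R_{\varphi^{-loc}}\}$, where $\varphi^{-loc}$ deletes the leading $loc$ from every component of every pair of $\varphi$ (which is exactly meaningful for the associations that arise here). By the rule $\frac{x\xrightarrow[u]{e}x'}{loc::x\xrightarrow[loc\ll u]{e}loc::x'}$ together with the atomic rule, every transition of $loc::s$ has the form $loc::s\xrightarrow[loc\ll u]{X}loc::s'$ with $s\xrightarrow[u]{X}s'$, so a matching transition $t\xrightarrow[v]{X}t'$ provided by $R_{\varphi^{-loc}}$ yields $loc::t\xrightarrow[loc\ll v]{X}loc::t'$ with $(loc::s',loc::t')\in S_{\varphi\cup\{(loc\ll u,\,loc\ll v)\}}$. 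The one point that needs care is that $\varphi\cup\{(loc\ll u,loc\ll v)\}$ is again a consistent location association: this rests on the observation that prepending a fixed location both preserves and reflects independence, i.e.\ $loc\ll u\;\diamond\;loc\ll u'$ iff $u\;\diamond\;u'$; hence $\varphi$ is a cla iff $\varphi^{-loc}$ is, and the required consistency of the enlarged association transfers from that of the cla underlying $R_{\varphi^{-loc}}$. The converse direction is symmetric, and the candidate plainly satisfies $(\emptyset,\emptyset)\in S_{(\emptyset,\emptyset)}$.

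I expect this last point --- the interaction between location-prefixing and the consistency requirement on location associations --- to be the only step beyond routine bookkeeping; the treatment of $+$, $\cdot$, the vice-versa directions, and the accumulation of location pairs into $\varphi$ all mirror the standard congruence proof for $BATC$. As an alternative route, one could observe that the transition system specification of Table \ref{SETRForBATC21} fits a standard rule format once the location annotations are regarded as part of the transition label, so that a general congruence meta-theorem applies; but the direct construction above is short enough to carry out in full.
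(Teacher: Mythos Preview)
Your plan is the standard one and matches what the paper intends: the paper records this theorem as background without proof, and the closest analogue that does carry a proof (the congruence theorem for $qBATC$ with localities in Section~\ref{qbatcl}) simply states that it suffices to check preservation under $::$, $+$, and $\cdot$ and then declares the verification ``quite trivial'' and left to the reader. So at the level of strategy there is nothing to compare --- you are filling in exactly the exercise the paper skips.

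One technical point in your $\cdot$ case deserves a second look. When $s\xrightarrow[u]{X}\surd$ and you pass to the pair $(x_2,y_2)$, you place it in $S_{\varphi\cup\{(u,v)\}}$ on the grounds that $(x_2,y_2)\in R^2_\emptyset$. But with your definition $S_\psi=\{(s\cdot x_2,t\cdot y_2):(s,t)\in R^1_\psi\}\cup R^2_\psi$, membership in $S_{\varphi\cup\{(u,v)\}}$ would require $(x_2,y_2)\in R^2_{\varphi\cup\{(u,v)\}}$, not $R^2_\emptyset$, and nothing guarantees this. The fix is routine: either enrich $S_\psi$ to include, for every decomposition $\psi=\psi_1\cup\psi_2$ with $\psi_1$ arising from the prefix, the relation $R^2_{\psi_2}$ suitably shifted; or, more simply, take the witnesses $R^1,R^2$ to be the full bisimilarity $\sim_p^{sl}$ itself, which as a largest bisimulation is stable under enlarging $\varphi$ by pairs already consistent with it. Your treatment of $loc::\_$, including the observation that prefixing by a fixed location preserves and reflects $\diamond$, is correct and is indeed the only step that goes beyond the classical $BATC$ argument.
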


\begin{theorem}[Soundness of BATC with static localities modulo static location pomset bisimulation equivalence]\label{SBATCPBE21}
Let $x$ and $y$ be BATC with static localities terms. If $BATC^{sl}\vdash x=y$, then $x\sim_p^{sl} y$.
\end{theorem}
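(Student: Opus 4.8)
The plan is to reduce the statement to checking that every axiom of $BATC^{sl}$ (the laws $A1$–$A5$ and $L1$–$L4$ of Table~\ref{AxiomsForBATC21}) is sound modulo $\sim_p^{sl}$, and then to lift this to arbitrary provable equalities. Concretely, since $\sim_p^{sl}$ is an equivalence relation and, by the preceding congruence theorem, a congruence with respect to all operators of $BATC^{sl}$, the collection of closed equations valid in $\sim_p^{sl}$ is closed under reflexivity, symmetry, transitivity, substitution of closed terms, and the congruence rules. Hence an induction on the length of the equational derivation witnessing $BATC^{sl}\vdash x=y$ reduces everything to the base case: for each axiom $s=t$ and each closed substitution $\sigma$, show $\sigma(s)\sim_p^{sl}\sigma(t)$.

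For each axiom I would exhibit an explicit static location pomset bisimulation $R_{\varphi}$ containing $(\emptyset,\emptyset)$ and relating the initial configurations of the two sides. Because $BATC^{sl}$ has no parallel operator, the PES generated by any closed term carries no nontrivial concurrency, so every pomset transition is a single-event transition governed by the rules of Table~\ref{SETRForBATC21}; this makes the relations easy to describe. For $A1$–$A5$ the argument is the classical one: one takes $R$ to be the identity on configurations extended by the finitely many pairs forced by the axiom instance (e.g.\ for $A4$ the pairs built from $(x+y)\cdot z$ and $x\cdot z+y\cdot z$ together with their residuals), and checks the matching of transitions and of the $\surd$-predicate directly from the transition rules for $+$ and $\cdot$; the location labels produced on both sides literally coincide, so one may take $\varphi$ to be (a subset of) the diagonal on $Loc^*$, which is trivially a consistent location association, and every update $\varphi\cup\{(u,v)\}$ with $u=v$ stays consistent.

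The locality laws $L1$–$L4$ require first establishing the expected operational behaviour of the prefix $u::x$: from the rule $\dfrac{x\xrightarrow[w]{e}x'}{loc::x\xrightarrow[loc\ll w]{e}loc::x'}$ (and its companion for $\surd$) together with $L4$, one shows by induction on $u$ that $u::p\xrightarrow[uw]{e}u::p'$ iff $p\xrightarrow[w]{e}p'$, and similarly for termination; so $u::p$ behaves like $p$ with every location prefixed by $u$. Given this, soundness of $L1$ ($\epsilon::x=x$) is immediate; for $L2$ ($u::(x\cdot y)=u::x\cdot u::y$) and $L3$ ($u::(x+y)=u::x+u::y$) one takes the relation generated by the pair of initial configurations and closes it under transitions, using the derived behaviour of $u::(-)$ to see that the residuals are again of the form $u::(p'\cdot y)$ versus $u::p'\cdot u::y$ (resp.\ residuals coming from $p$ or from $q$), which are then placed in the relation; for $L4$ the two sides produce, transition for transition, the same event with location $uvw$ on both sides, so again the diagonal $\varphi$ works.

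The main obstacle I expect is purely the bookkeeping for $L2$–$L4$: one must verify that the family of pairs obtained by iterating the transition rules through the prefix operator is genuinely closed (i.e.\ forms a bisimulation, not merely a relation containing the initial pair), that the termination predicate $\xrightarrow[u]{e}\surd$ is matched on both sides in every reachable configuration, and that each location update $\varphi\cup\{(u,v)\}$ used along the way remains a consistent location association — which it does, because in all axiom instances the matched locations $u$ and $v$ coincide, so $\varphi$ can be kept inside the diagonal throughout. None of this is deep, but it is where care is needed; the classical axioms $A1$–$A5$ and the reductions via congruence and equivalence are routine.
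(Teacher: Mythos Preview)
Your approach is correct and matches the paper's methodology: the paper states this particular theorem as a background result without proof, but for the analogous soundness theorems it does prove (e.g.\ Theorem~\ref{SBATC} for $qBATC$ with localities), the argument is exactly your reduction---use that $\sim_p^{sl}$ is an equivalence and a congruence to reduce to checking each axiom in the table, and then leave the per-axiom verifications ``as an exercise for the readers.'' You go further than the paper by actually sketching those verifications for $A1$--$A5$ and $L1$--$L4$, which is fine and adds content the paper omits.
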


\begin{theorem}[Completeness of BATC with static localities modulo static location pomset bisimulation equivalence]\label{CBATCPBE21}
Let $p$ and $q$ be closed BATC with static localities terms, if $p\sim_p^{sl} q$ then $p=q$.
\end{theorem}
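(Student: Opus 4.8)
The plan is the standard three-step completeness argument. First I would use the elimination theorem (Theorem~\ref{ETBATC21}) to obtain basic terms $p_0,q_0$ with $BATC^{sl}\vdash p=p_0$ and $BATC^{sl}\vdash q=q_0$. Soundness (Theorem~\ref{SBATCPBE21}) then gives $p\sim_p^{sl}p_0$ and $q\sim_p^{sl}q_0$, so from $p\sim_p^{sl}q$ and transitivity/symmetry of $\sim_p^{sl}$ we get $p_0\sim_p^{sl}q_0$. Since $\vdash$ is transitive, it suffices to prove the implication for basic terms, so from now on $p$ and $q$ may be assumed basic.

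\textbf{Step 2: normal form.} Using $L1$--$L4$ oriented left to right, I would push every location operator $u::(\cdot)$ inward: $L2,L3$ distribute it over $\cdot$ and $+$, $L4$ collapses nested location operators, $L1$ erases $\epsilon::$; the outcome is a basic term built from $+$ and $\cdot$ over \emph{located atoms} $u::e$ (writing $\epsilon::e$ simply as $e$). Then, with $A4,A5$ to reach sum-of-products shape and $A1,A2,A3$ to reorder and to delete duplicate summands, every basic term is provably equal to a \emph{normal form} $\sum_{i\in I}t_i$ in which each $t_i$ is either a single located atom $u_i::e_i$ or $(u_i::e_i)\cdot t_i'$ with $t_i'$ again in normal form, and no two summands coincide. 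I would also record the \emph{summand lemma}, read directly off Table~\ref{SETRForBATC21}: for a normal form $n=\sum_i t_i$ one has $n\xrightarrow[u]{e}\surd$ iff some $t_i$ equals $u::e$, and $n\xrightarrow[u]{e}s$ iff some $t_i$ equals $(u::e)\cdot s$.

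\textbf{Step 3: induction on normal forms.} Assume $p,q$ are in normal form and $p\sim_p^{sl}q$, and induct on the number of symbols in $p$. Take a summand $t_i$ of $p$. Since singletons are pomsets, $\sim_p^{sl}$ forces the single-event transition from $t_i$, say $p\xrightarrow[u]{e}\surd$ (resp.\ $p\xrightarrow[u]{e}s$), to be matched by $q\xrightarrow[v]{e}\surd$ (resp.\ $q\xrightarrow[v]{e}s'$) with $(u,v)$ added to the current cla and $s\sim_p^{sl}s'$ in the updated relation. By the summand lemma $q$ then has a summand $v::e$ (resp.\ $(v::e)\cdot s'$). Once one shows that the accumulated cla forces $u=v$ (the delicate point, below), the first case gives that $t_i$ occurs literally as a summand of $q$; in the second case $s,s'$ are smaller bisimilar normal forms, so $BATC^{sl}\vdash s=s'$ by the induction hypothesis and hence $BATC^{sl}\vdash t_i=(u::e)\cdot s=(v::e)\cdot s'$, again a summand of $q$. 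Thus every summand of $p$ is provably equal to a summand of $q$; by symmetry, conversely; and $A1,A2,A3$ then yield $BATC^{sl}\vdash p=q$. Combined with Step~1 this proves the theorem.

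\textbf{Main obstacle.} The load-bearing point is the treatment of locations inside the bisimulation: I expect most of the real work in showing that along any pair of matching transition sequences (started from the empty cla) the consistent location association built up forces the location strings of corresponding located atoms to agree, i.e.\ $u=v$ above, so that equality of summands is genuinely syntactic and no extra axiom on locations is needed; the natural route is an invariant, proved by induction on matching runs, relating the residual terms to $\varphi$. The remaining, purely syntactic, point deserving care is that the $L1$--$L4$ rewriting of Step~2 terminates and does reach the "located atoms only" shape, which I would settle via Theorem~\ref{SN} with a suitable well-founded ordering, exactly as in the elimination theorem. Everything else is routine bookkeeping.
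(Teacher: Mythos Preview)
The paper does not actually prove this theorem: it appears in the Backgrounds section as a result imported from \cite{LOC1}, and every later completeness proof in the paper simply appeals back to this classical case. So there is no in-paper argument to compare against; your three-step plan (elimination to basic terms, normalisation via $L1$--$L4$ and $A1$--$A5$, then structural induction with a summand lemma read off Table~\ref{SETRForBATC21}) is precisely the standard route and is what one expects \cite{LOC1} to contain.

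One genuine caution on your ``main obstacle''. You are right that the whole proof hinges on extracting $u=v$ from the matched transitions, but your proposed invariant (``the cla built up along matching runs forces the location strings to agree'') is not obviously obtainable from the definition of $\sim_p^{sl}$ \emph{as written in this paper}. Taken literally, Definition~2.19 only requires $\varphi\cup\{(u,v)\}$ to remain a cla, i.e.\ that independence $\diamond$ be preserved pairwise; it does not demand $u=v$. Under that reading, $u::e\sim_p^{sl} v::e$ would hold for any $u,v$ (a single pair is always a cla), yet no combination of $L1$--$L4,A1$--$A5$ derives $u::e=v::e$, so completeness would fail outright. In other words, your invariant cannot be proved from the cla condition alone---the argument needs the stronger location-matching discipline used in the original locality-bisimulation literature (Boudol--Castellani style, as in \cite{LOC}), where the association $\varphi$ is required to be, in effect, a partial bijection respecting the prefix order $\ll$, not merely independence-preserving. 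Before carrying out Step~3 you should therefore pin down the intended definition from \cite{LOC1} and verify that it gives you $u=v$ directly (or at least a derivable renaming); the inductive bookkeeping you outline is fine once that is secured.
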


\begin{theorem}[Congruence of BATC with static localities with respect to static location step bisimulation equivalence]
Static location step bisimulation equivalence $\sim_s^{sl}$ is a congruence with respect to BATC with static localities.
\end{theorem}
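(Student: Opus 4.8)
The plan is to prove that $\sim_s^{sl}$ is preserved by each operator of $BATC^{sl}$, that is, by alternative composition $+$, by sequential composition $\cdot$, and by location prefixing $u::{}$; by axiom $L4$ it suffices to treat single-location prefixing $loc::{}$ with $loc\in Loc$, the general case following by iteration. For each operator I would assume static location step bisimulations witnessing the hypotheses on the arguments and exhibit an explicit static location step bisimulation relating the composed terms, then check the transfer conditions directly against the transition rules of Table~\ref{SETRForBATC21}. Throughout, for a static location step bisimulation $R$ I write $R_\varphi$ for its component at the consistent location association $\varphi$, with $(\emptyset,\emptyset)\in R_\emptyset$; I work with the transition relation on terms as given by Table~\ref{SETRForBATC21}, the passage to configurations being routine.

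For the case of $+$: assuming $x_1\sim_s^{sl}y_1$ and $x_2\sim_s^{sl}y_2$ witnessed by $R^1$ and $R^2$, I would take $R_\varphi=\{(x_1+ x_2,y_1+ y_2)\}\cup R^1_\varphi\cup R^2_\varphi$. By the rules for $+$ any step $x_1+ x_2\xrightarrow[u]{X}P$ is inherited from $x_1$ or from $x_2$; say from $x_1$, so that $x_1\xrightarrow[u]{X}P$, whence $R^1$ supplies $y_1\xrightarrow[v]{Y}Q$ with $X\sim Y$ and $(P,Q)\in R^1_{\varphi\cup\{(u,v)\}}$ (or the matching termination clause), and the rules for $+$ give $y_1+ y_2\xrightarrow[v]{Y}Q$; the case from $x_2$ and the symmetric direction are identical, and termination is inherited in the same way. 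For the case of $\cdot$: I would take $R_\varphi=\{(s_1\cdot x_2,s_2\cdot y_2)\mid (s_1,s_2)\in R^1_\varphi\}\cup R^2_\varphi$. A step $s_1\cdot x_2\xrightarrow[u]{X}P$ comes either from $s_1\xrightarrow[u]{X}\surd$ with $P=x_2$, or from $s_1\xrightarrow[u]{X}s_1'$ with $P=s_1'\cdot x_2$; matching the move of $s_1$ through $R^1$ and reapplying the rules for $\cdot$ yields the required step, landing in the $R^2$-part in the first subcase and in the $R^1$-part in the second, while $s_1\cdot x_2\downarrow$ can only occur once $s_1$ has terminated and $x_2\downarrow$, so the termination clauses come from those of $R^2$.

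For the case of $loc::{}$: assuming $x_1\sim_s^{sl}y_1$ witnessed by $R^1$, for a cla $\varphi$ write $loc\ll\varphi=\{(loc\ll u,loc\ll v)\mid(u,v)\in\varphi\}$ and take $R_{loc\ll\varphi}=\{(loc::s_1,loc::s_2)\mid(s_1,s_2)\in R^1_\varphi\}$. One first checks that $loc\ll\varphi$ is again a cla: prefixing by a fixed location is order-reflecting, so $(loc\ll u)\ll(loc\ll u')\Leftrightarrow u\ll u'$, hence $(loc\ll u)\diamond(loc\ll u')\Leftrightarrow u\diamond u'$, which is exactly the consistency condition. Then, by the rules for $loc::{}$, a step of $loc::x_1$ has the form $loc::x_1\xrightarrow[loc\ll u]{X}loc::x_1'$ originating from $x_1\xrightarrow[u]{X}x_1'$ (the base case $loc::e$ and the termination case being analogous); matching through $R^1$ gives $y_1\xrightarrow[v]{Y}y_1'$ with $X\sim Y$ and $(x_1',y_1')\in R^1_{\varphi\cup\{(u,v)\}}$, so $loc::y_1\xrightarrow[loc\ll v]{Y}loc::y_1'$ and the pair of continuations lies in $R_{(loc\ll\varphi)\cup\{(loc\ll u,loc\ll v)\}}=R_{loc\ll(\varphi\cup\{(u,v)\})}$. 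Since $(\emptyset,\emptyset)\in R_\emptyset$ in each construction, this gives the congruence.

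I expect the main obstacle to be not the process-algebraic bookkeeping (inheriting and re-deriving transitions through $+$, $\cdot$ and $loc::{}$ is routine) but the management of the consistent location association: one must ensure that the relations glued together above form one coherent family of clas, in particular that passing between the $R^1$- and $R^2$-components in the sequential case, and prefixing every location by $loc$ in the prefixing case, keep the association consistent — and it is precisely the order-reflecting property of location prefixing and the definition of cla that make this go through. Alternatively one could observe that the rules of Table~\ref{SETRForBATC21} fit a path-style transition rule format for which step bisimilarity is automatically a congruence, but the direct construction above is self-contained and makes the role of the localities explicit.
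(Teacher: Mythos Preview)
The paper does not actually prove this theorem: it is stated as a background result in Section~2.3 without proof, and the analogous congruence theorem for $qBATC$ with localities in Section~3.1 is disposed of with ``it is sufficient to prove that $\sim_s^{sl}$ is preserved for $::$, $\cdot$ and $+$ \ldots\ The proof is quite trivial, and we leave the proof as an exercise for the readers.'' Your explicit constructions therefore go well beyond anything the paper provides, and for $+$ and for $loc::{}$ they are correct; in particular your check that $loc\ll\varphi$ is again a cla, via order-reflection of location prefixing, is exactly the right observation for that case.

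There is, however, a small but genuine gap in your treatment of sequential composition. In your family $R_\varphi=\{(s_1\cdot x_2,\,s_2\cdot y_2)\mid (s_1,s_2)\in R^1_\varphi\}\cup R^2_\varphi$, when $s_1\xrightarrow[u]{X}\surd$ is matched by $s_2\xrightarrow[v]{Y}\surd$ the continuations $(x_2,y_2)$ must lie in $R_{\varphi\cup\{(u,v)\}}$, hence in $R^2_{\varphi\cup\{(u,v)\}}$. But $R^2$ only witnesses $x_2\sim_s^{sl}y_2$ at the \emph{empty} cla, i.e.\ you know $(x_2,y_2)\in R^2_\emptyset$; nothing in the construction places $(x_2,y_2)$ into $R^2_\psi$ for the nontrivial $\psi$ accumulated while running $s_1$ against $s_2$. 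The order-reflecting property you invoke at the end handles $loc::{}$ but is silent on this indexing issue for $\cdot$. The fix is routine once stated: replace $R^2_\varphi$ by the union $\bigcup_{\psi\subseteq\varphi}R^2_{\varphi\setminus\psi}$ (every subset of a cla is a cla, so each $\varphi\setminus\psi$ is admissible), or equivalently prove the small lemma that any static location step bisimulation can be shifted along an arbitrary cla offset. You correctly flagged the cla bookkeeping as the main obstacle; you just need to carry out this step rather than only name it.
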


\begin{theorem}[Soundness of BATC with static localities modulo static location step bisimulation equivalence]\label{SBATCSBE21}
Let $x$ and $y$ be BATC with static localities terms. If $BATC^{sl}\vdash x=y$, then $x\sim_s^{sl} y$.
\end{theorem}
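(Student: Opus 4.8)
The plan is to follow the standard route for soundness of an equational theory over a behavioural congruence, reusing the congruence result just established. Since $\sim_s^{sl}$ has been shown to be a congruence with respect to all operators of $BATC^{sl}$, it suffices to prove that every axiom in Table~\ref{AxiomsForBATC21} is sound, i.e.\ that for each axiom $s=t$ and each closed substitution $\sigma$ we have $\sigma(s)\sim_s^{sl}\sigma(t)$. Granting this, the implication $BATC^{sl}\vdash x=y\Rightarrow x\sim_s^{sl}y$ follows by induction on the length of the derivation of $x=y$ in equational logic: the axiom instances form the base case; reflexivity holds because $\sim_s^{sl}$ is reflexive; the symmetry and transitivity rules are absorbed because $\sim_s^{sl}$ is an equivalence (reflexivity and symmetry are immediate, and transitivity follows by composing step bisimulations and checking that the cla condition is preserved under composition); the rule closing equality under contexts is exactly the preceding congruence theorem; and the substitution rule is handled by composing substitutions. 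Reading $\sim_s^{sl}$ on open terms as $\sim_s^{sl}$ on all closed instances, this also covers open $x,y$.

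It then remains to check the axioms one at a time. I would work in the configuration/transition system generated by the rules of Table~\ref{SETRForBATC21}, lifted to step transitions, and for each axiom $s=t$ exhibit a static location step bisimulation $R_{\varphi}$, with $\varphi$ the empty cla, relating the configurations reachable from $\sigma(s)$ and $\sigma(t)$. For $A1$–$A5$ — commutativity, associativity and idempotence of $+$, right distributivity and associativity of $\cdot$ — take $R$ to be the identity together with the finitely many pairs forced by the axiom at the root (for instance $(\sigma(x)+\sigma(y),\,\sigma(y)+\sigma(x))$ for $A1$, and the pairs obtained from $(\sigma(x)+\sigma(y))\cdot\sigma(z)$ versus $\sigma(x)\cdot\sigma(z)+\sigma(y)\cdot\sigma(z)$ for $A4$), and verify the transfer conditions by inspecting the rules for $+$ and $\cdot$. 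Each such transition carries the same event label \emph{and} the same location label on both sides, so the cla is only ever extended by pairs $(u,u)$ and the check is purely structural — indeed identical to the soundness proof of ordinary $BATC$, with the location component carried along passively.

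The genuine content is in the locality axioms $L1$–$L4$, whose soundness rests on the arithmetic of location strings under $\ll$. For $L1$ ($\epsilon::x=x$) one uses $\epsilon\ll u=u$, so the prefixing rule applied to $\epsilon::x$ reproduces exactly the transitions of $x$ with unchanged labels. For $L2$ and $L3$ one checks that $u::(x\cdot y)$ and $u::x\cdot u::y$ (respectively $u::(x+y)$ and $u::x+u::y$) fire the same events, each relabelled by $u\ll(-)$ on both sides, with matching termination/continuation behaviour. For $L4$ ($u::(v::x)=uv::x$) one uses associativity of concatenation, $u\ll(v\ll w)=uv\ll w$, to match the two sides event by event and location by location. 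The main obstacle — and the only place genuinely requiring care — is the bookkeeping for the rule $\frac{x\xrightarrow[u]{e}x'}{loc::x\xrightarrow[loc\ll u]{e}loc::x'}$: one must verify that after a step from a prefixed term $w::t$ (with $w$ a string obtained by iterating this rule together with $L4$) the resulting location prefix and residual term agree on the two sides of each locality axiom, and that the step transitions derived by closing the single-event transition relation under the step construction stay compatible with prefixing. This simplifies considerably here because $BATC^{sl}$ has no parallel operator: every transition in the generated system is a single-event transition, so a step is always a singleton, the step bisimulation clauses reduce to their single-event form, and no true-concurrency bookkeeping is needed. Consequently the whole verification runs word for word as in the soundness proof modulo $\sim_p^{sl}$ (Theorem~\ref{SBATCPBE21}), with pomset transitions replaced throughout by steps.
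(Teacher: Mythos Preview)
Your proposal is correct and follows exactly the approach the paper adopts for its soundness results: reduce to checking each axiom using that $\sim_s^{sl}$ is an equivalence and a congruence, then verify the axioms against the transition rules. The paper itself states this particular theorem as a background result without proof (deferring to \cite{LOC1}), and where it does prove analogous soundness theorems (e.g.\ Theorem~\ref{SBATC} for $qBATC$ with localities) it gives precisely your high-level outline and leaves the per-axiom checks as exercises; your write-up simply supplies more of that detail, including the useful observation that in $BATC^{sl}$ steps are singletons so the step case collapses to the single-event case.
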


\begin{theorem}[Completeness of BATC with static localities modulo static location step bisimulation equivalence]\label{CBATCSBE21}
Let $p$ and $q$ be closed BATC with static localities terms, if $p\sim_s^{sl} q$ then $p=q$.
\end{theorem}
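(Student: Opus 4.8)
The plan is to follow the standard route for completeness in this setting: reduce to basic terms via the elimination theorem, put basic terms into a suitable normal form, and then show that static location step bisimilarity of two normal forms already forces syntactic equality modulo the $+$-axioms, hence provability. First I would invoke the Elimination theorem (Theorem~\ref{ETBATC21}): given closed terms $p,q$ with $p\sim_s^{sl} q$, there are basic terms $p',q'\in\mathcal{B}(BATC^{sl})$ with $BATC^{sl}\vdash p=p'$ and $BATC^{sl}\vdash q=q'$. By soundness (Theorem~\ref{SBATCSBE21}), $p'\sim_s^{sl} p\sim_s^{sl} q\sim_s^{sl} q'$, so it suffices to prove the implication $p'\sim_s^{sl} q'\Rightarrow BATC^{sl}\vdash p'=q'$ for basic terms.

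Second, I would refine basic terms to a normal form. Using the locality axioms $L1$--$L4$ one can push every location prefix down onto the atomic events it governs: $L2$ distributes a prefix over $\cdot$, $L3$ over $+$, $L4$ collapses nested prefixes $u{::}(v{::}x)=uv{::}x$, and $L1$ deletes the empty prefix. Combined with $A1$--$A5$ this gives: every basic term is provably equal to one of the form $\sum_{i\in I}\alpha_i$ where each $\alpha_i$ is either $w_i{::}e_i$ or $(w_i{::}e_i)\cdot n_i$ with $n_i$ again a normal form (and $w_i{::}e_i$ written simply $e_i$ when $w_i=\epsilon$), and by $A1$--$A3$ we may assume the $\alpha_i$ pairwise distinct. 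I would prove this normalization claim by structural induction on basic terms, mirroring the proof of the elimination theorem. Writing $n(p'),n(q')$ for chosen normal forms, it then remains to show $n(p')\sim_s^{sl} n(q')\Rightarrow n(p')=_{AC}n(q')$, where $=_{AC}$ denotes equality modulo $A1$--$A3$.

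Third, the core argument, by induction on the sum of the sizes of the two normal forms. Write $n=\sum_i\alpha_i$ and $n'=\sum_j\beta_j$. Since $BATC^{sl}$ has no parallel composition, every transition derivable from the rules in Table~\ref{SETRForBATC21} carries a single event, so the step condition reduces to matching single-event transitions, with location and label read off directly from the rules. Take a summand $\alpha_i$. If $\alpha_i=w_i{::}e_i$ then $n\xrightarrow[w_i]{e_i}\surd$, so the bisimulation yields $n'\xrightarrow[v]{e_i}\surd$, which by inspection of the rules can only originate from a summand $\beta_j=v{::}e_i$; arguing $v=w_i$ (see below) gives $\beta_j=\alpha_i$. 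If $\alpha_i=(w_i{::}e_i)\cdot n_i$ then $n\xrightarrow[w_i]{e_i}n_i$, so $n'\xrightarrow[v]{e_i}m$ with $n_i\sim_s^{sl}m$; by the rules this forces a summand $\beta_j=(v{::}e_i)\cdot m$, and the induction hypothesis gives $n_i=_{AC}m$, so with $v=w_i$ we get $\alpha_i=_{AC}\beta_j$. Symmetrically each $\beta_j$ is $=_{AC}$ some $\alpha_i$, and $A1$--$A3$ then yield $n=_{AC}n'$; chaining the provable equalities gives $BATC^{sl}\vdash p=q$.

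The main obstacle is the step ``$v=w_i$'': I need that when $n\xrightarrow[w_i]{e_i}\cdot$ is answered by $n'\xrightarrow[v]{e_i}\cdot$, the two locations actually coincide rather than merely being co-recorded in the consistent location association $\varphi$. For normal-form terms the outgoing location is literally the prefix location by Table~\ref{SETRForBATC21}, so this reduces to showing that a static location step bisimulation $R_\varphi$ with $(\emptyset,\emptyset)\in R_\varphi$ cannot relate summands whose prefix locations differ; I would establish this by an auxiliary induction exploiting that each extension $\varphi\cup\{(u,v)\}$ must remain a cla, distinguishing in particular summands with mutually independent prefixes from nested ones, and using that once the first located event is matched the admissible locations for all later events are pinned down. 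This location bookkeeping, rather than the bisimulation-to-syntax matching (which is the routine BATC completeness template), is where the real work lies.
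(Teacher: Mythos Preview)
The paper does not prove this theorem: it is stated in the Backgrounds chapter as a known result, with the details deferred to \cite{LOC1}. The later completeness theorems for the quantum extensions (e.g., the proof of Theorem~\ref{CBATC}) simply invoke this background fact. So there is no in-paper argument to compare against; your outline is already more detailed than anything the paper itself provides, and the template you follow---elimination to basic terms, normalization via $L1$--$L4$ and $A1$--$A5$, then a summand-matching induction---is the standard one and is almost certainly what \cite{LOC1} does.

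You are right to flag the location-matching step as the crux, but your sketched resolution via consistency of the cla does not close the gap. Take $p=loc_1{::}e$ and $q=loc_2{::}e$ with $loc_1\neq loc_2$: the single transition of $p$ at location $loc_1$ is matched by the single transition of $q$ at $loc_2$, and the singleton $\{(loc_1,loc_2)\}$ is trivially a cla, so under a literal reading of the definition of $\sim_s^{sl}$ one obtains $p\sim_s^{sl}q$; yet no axiom in Table~\ref{AxiomsForBATC21} derives $loc_1{::}e=loc_2{::}e$. Cla-consistency constrains only the independence structure of locations, not their names, so your ``auxiliary induction exploiting that each extension must remain a cla'' cannot by itself force $u=v$. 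To make the argument go through you first need to pin down which reading of $\sim_s^{sl}$ is intended---whether the cla is required to start at the identity, whether it must be functional, or whether completeness is really being claimed only up to a location-renaming congruence---and then argue from that. As written, this is the genuine gap in your proposal, and it is not one the paper resolves either.
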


\begin{theorem}[Congruence of BATC with static localities with respect to static location hp-bisimulation equivalence]
Static location hp-bisimulation equivalence $\sim_{hp}^{sl}$ is a congruence with respect to BATC with static localities.
\end{theorem}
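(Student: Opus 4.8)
The plan is to establish congruence operator by operator. The signature of $BATC^{sl}$ is generated by alternative composition $+$, sequential composition $\cdot$, and location prefixing $u::\_$, so it suffices to show that from $x_1\sim_{hp}^{sl}y_1$ and $x_2\sim_{hp}^{sl}y_2$ one gets $x_1+x_2\sim_{hp}^{sl}y_1+y_2$ and $x_1\cdot x_2\sim_{hp}^{sl}y_1\cdot y_2$, and that from $x\sim_{hp}^{sl}y$ one gets $u::x\sim_{hp}^{sl}u::y$. In each case I will build a static location hp-bisimulation relating the composed terms out of the given ones and check the clauses of the definition against the transition rules of Table \ref{SETRForBATC21}. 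One remark simplifies everything: $BATC^{sl}$ has no parallel operator, so every configuration reachable from a closed term is a chain under causality; consequently the order isomorphisms $f$ in the posetal triples are forced to match the $k$-th event of $C_1$ with the $k$-th event of $C_2$, and the only real bookkeeping concerns the consistent location association $\varphi$.

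Case $+$. Let $R^1_\varphi$, $R^2_\varphi$ be static location hp-bisimulations for $x_1\sim_{hp}^{sl}y_1$ and $x_2\sim_{hp}^{sl}y_2$, each containing $(\emptyset,\emptyset,\emptyset)$. Viewing configurations of $\mathcal{E}(x_i),\mathcal{E}(y_i)$ inside those of $\mathcal{E}(x_1+x_2),\mathcal{E}(y_1+y_2)$, take $R_\varphi=R^1_\varphi\cup R^2_\varphi$. By the rules for $+$, a first move of $x_1+x_2$ is a first move of one summand; match it inside $R^1$ or $R^2$, and thereafter the run stays within that summand, where $R^i$ supplies all matching moves and both termination clauses. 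The reverse direction is symmetric, so $R_\varphi$ is a static location hp-bisimulation with $(\emptyset,\emptyset,\emptyset)\in R_\varphi$.

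Case $\cdot$. With $R^1,R^2$ as above, observe from the rules that a move of $x_1\cdot x_2$ is either internal to $x_1$ (rule $x\cdot y\xrightarrow{e}x'\cdot y$) or a handoff $x_1\cdot x_2\xrightarrow[u]{e}x_2$ coming from a terminating move $x_1\xrightarrow[u]{e}\surd$. Define $R_\varphi$ to consist of (i) every triple of $R^1_\varphi$, together with (ii) for every triple $(C_1,f,C_2)$ of $R^1$ that is \emph{final} (its components are completed runs of $x_1$, resp.\ $y_1$), with determined association $\varphi_0$, and every triple $(C_1',f',C_2')\in R^2_\psi$, the glued triple $(C_1\cup C_1',\,f\cup f',\,C_2\cup C_2')$ with association $\varphi_0\cup\psi$. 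Internal moves are matched by $R^1$; the handoff move of $x_1\cdot x_2$ is matched by the corresponding terminating move of $y_1$ (bisimilarity of $x_1,y_1$) and the run passes to $R^2$; later moves and termination of $x_1\cdot x_2$, which occurs only after that of $x_2$, are handled by $R^2$ on top of $R^1$. This case is where the only real difficulty lies: one must check that $f\cup f'$ is again an order isomorphism, which holds because every event of $C_1$ causally precedes every event of $C_1'$, making the glued map order-preserving; and that $\varphi_0\cup\psi$ is again a consistent location association, which holds because the two phases contribute locations sequenced identically on the two sides, so independence is preserved within each phase and the cross-phase comparisons are consistent.

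Case $u::\_$. By axiom $L4$ it suffices to treat a single location $loc$ and iterate. Let $R_\varphi$ witness $x\sim_{hp}^{sl}y$ and set $R'_{\varphi'}=\{(loc::C_1,\,f,\,loc::C_2):(C_1,f,C_2)\in R_\varphi\}$, where $\varphi'$ is obtained from $\varphi$ by prepending $loc$ to both coordinates of every pair. The prefixing rule $\frac{x\xrightarrow[w]{e}x'}{loc::x\xrightarrow[loc\ll w]{e}loc::x'}$ makes the matching of moves and of termination immediate; the only check is that $\varphi'$ is a consistent location association, and this holds because prepending the common location $loc$ to two words preserves their independence, so the biconditional defining a cla transfers from $\varphi$ to $\varphi'$. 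Combining the three cases, $\sim_{hp}^{sl}$ is preserved by all operators of $BATC^{sl}$ and is therefore a congruence; the one step needing genuine care is the sequential-composition case, namely gluing two hp-bisimulations at the termination handoff so that the combined map stays an order isomorphism and the combined location association stays consistent.
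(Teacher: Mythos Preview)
The paper states this result in its background chapter without proof; the closest analogue that carries any argument is the $qBATC$ version (Theorem~3.3), whose proof reads in full: ``it is sufficient to prove that [the equivalences] are preserved for $::$, $\cdot$ and $+$ \ldots\ The proof is quit trivial, and we leave the proof as an exercise for the readers.'' Your operator-by-operator construction of witnessing hp-bisimulations is exactly that approach with the details actually supplied, so you are aligned with the paper but considerably more explicit.

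One point deserves tightening. In the sequential case you assert that $\varphi_0\cup\psi$ is again a consistent location association because ``the two phases contribute locations sequenced identically on the two sides, so \ldots\ the cross-phase comparisons are consistent.'' But location independence $u\diamond v$ is defined via the prefix order on $Loc^*$, not via temporal sequencing of transitions, so this sentence does not by itself justify the cla condition for a pair $(u,v)\in\varphi_0$ against a pair $(u',v')\in\psi$: nothing in the separate bisimulations $R^1,R^2$ forces $u\diamond u'\Leftrightarrow v\diamond v'$ across phases. Whether $\varphi$ is even required to remain a cla at each step is left ambiguous by the paper's Definition~2.19 (it never says so explicitly, yet defines cla just beforehand). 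If it is required, your cross-phase check needs a genuine argument---or the construction of $R_\varphi$ for $x_1\cdot x_2$ must be refined so that the second-phase matching is chosen relative to the already-accumulated $\varphi_0$ rather than taken from an independent $R^2_\psi$. If it is not required, then the clause you single out as ``the one step needing genuine care'' is in fact vacuous. Either way the paper itself does not resolve this, so your proof is at least as rigorous as anything the paper provides.
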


\begin{theorem}[Soundness of BATC with static localities modulo static location hp-bisimulation equivalence]\label{SBATCHPBE21}
Let $x$ and $y$ be BATC with static localities terms. If $BATC\vdash x=y$, then $x\sim_{hp}^{sl} y$.
\end{theorem}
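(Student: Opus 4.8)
By the immediately preceding theorem, static location hp-bisimulation equivalence $\sim_{hp}^{sl}$ is a congruence with respect to all operators of BATC with static localities, and as an equivalence it is reflexive, symmetric and transitive. Hence it suffices to show that every axiom in Table \ref{AxiomsForBATC21} is valid modulo $\sim_{hp}^{sl}$: for each axiom $s=t$ and every closing substitution $\sigma$ mapping $x,y,z$ to closed terms, we must exhibit a static location hp-bisimulation $R_\varphi\subseteq\mathcal{C}(\mathcal{E}_1)\overline{\times}\mathcal{C}(\mathcal{E}_2)$ with $(\emptyset,\emptyset,\emptyset)\in R_\varphi$ relating the labelled transition system of $\sigma(s)$ to that of $\sigma(t)$. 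A straightforward induction on the length of a derivation $BATC^{sl}\vdash x=y$, closing under contexts by the congruence property, then yields $x\sim_{hp}^{sl}y$.

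For the axioms $A1$--$A5$ the witnessing relation is built in the obvious way. Starting from $(\emptyset,f_0,\emptyset)$ with $f_0$ empty, whenever the left-hand process makes a move $C_1\xrightarrow[u]{e_1}C_1'$ we let the right-hand process mimic it by the corresponding event: the rules of Table \ref{SETRForBATC21} for $+$ and $\cdot$ never rename the executed event nor alter its location annotation, so we obtain $C_2\xrightarrow[u]{e_2}C_2'$ with $\lambda(e_1)=\lambda(e_2)$, extend $f$ to $f[e_1\mapsto e_2]$, and add $(u,u)$ to $\varphi$, which is automatically a consistent location association. One checks that the extended $f$ is still a pomset isomorphism (causality is preserved because the residual terms on the two sides are related by the same axiom instance) and that the termination predicate $\downarrow$, i.e. the $\surd$ states, is matched. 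The ``and vice versa'' direction is symmetric since each of $A1$--$A5$ is an equation readable in either direction.

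For the locality axioms $L1$--$L4$ the same scheme works, but the interaction with the prefix operator $u::\!-$ must be tracked. For $L1$, the rules give $\epsilon::x$ exactly the transitions of $x$ since $\epsilon\ll u=u$, so the identity-like relation works. For $L2$ and $L3$ one distributes the prefix through the operator: a move of $u::(x\cdot y)$ (resp. $u::(x+y)$) at location $u\ll w$ is matched by the corresponding move of $u::x\cdot u::y$ (resp. $u::x+u::y$) at the same location $u\ll w$, the head action being prefixed by $u$ on both sides. For $L4$ the matching uses associativity of location concatenation in $Loc^*$, namely $u\ll(v\ll w)=(uv)\ll w$, so $u::(v::x)$ and $uv::x$ execute each action at identical locations; again $\varphi$ only ever records pairs $(u,u)$ and stays a cla.

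\textbf{Expected main obstacle.}
The classical axioms are routine; the delicate bookkeeping is twofold. First, at every step one must confirm that the accumulated map $f$ is still an isomorphism of the induced pomsets and that $\varphi$ remains a consistent location association — both are painless here only because, in the absence of parallel composition, the matched events always carry equal location strings, so $u\diamond u'\Leftrightarrow v\diamond v'$ holds trivially. Second, one must make the location transformations in $L2$--$L4$ fully precise, i.e. verify that the rule $loc::x\xrightarrow[loc\ll u]{e}loc::x'$ composes correctly with the structural rules for $+$ and $\cdot$ after the prefix has been pushed inward. Once these points are settled, assembling the per-axiom bisimulations into the soundness proof by induction on derivations is immediate.
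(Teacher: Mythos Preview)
Your proposal is correct and follows essentially the same approach as the paper: reduce soundness to per-axiom validity by invoking that $\sim_{hp}^{sl}$ is an equivalence and a congruence, then verify each axiom of Table~\ref{AxiomsForBATC21} individually. The paper itself gives no proof for this background theorem (it is imported from \cite{LOC1}), and for the analogous quantum results in Section~\ref{qbatcl} the paper's proof is simply ``since $\sim_{hp}^{sl}$ is both an equivalent and a congruent relation, we only need to check if each axiom \ldots\ is sound modulo static location hp-bisimulation equivalence; we leave the proof as an exercise.'' Your write-up supplies exactly the detail the paper omits---the explicit bisimulation witnesses for $A1$--$A5$ and the location-prefix bookkeeping for $L1$--$L4$---so there is nothing to correct, only to note that you have done more than the paper asks.
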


\begin{theorem}[Completeness of BATC with static localities modulo static location hp-bisimulation equivalence]\label{CBATCHPBE21}
Let $p$ and $q$ be closed BATC with static localities terms, if $p\sim_{hp}^{sl} q$ then $p=q$.
\end{theorem}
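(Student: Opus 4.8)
\emph{Proof idea.} The plan is the standard three-step route for completeness theorems of this kind: reduce to basic terms via the elimination theorem, bring basic terms to a canonical normal form using the axioms, and then compare normal forms by structural induction, matching their summands through the hp-bisimulation. First, by Theorem~\ref{ETBATC21} there are basic terms $p',q'\in\mathcal{B}(BATC^{sl})$ with $BATC^{sl}\vdash p=p'$ and $BATC^{sl}\vdash q=q'$; by the soundness theorem (Theorem~\ref{SBATCHPBE21}) we then have $p\sim_{hp}^{sl}p'$ and $q\sim_{hp}^{sl}q'$, so $p'\sim_{hp}^{sl}q'$ follows from $p\sim_{hp}^{sl}q$ and the fact that $\sim_{hp}^{sl}$ is an equivalence. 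Since provable equality is transitive, it suffices to show $BATC^{sl}\vdash p'=q'$.

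\emph{Normal forms.} Using $L1$ to erase empty location prefixes, $L4$ to merge nested ones, $L3$ to push a location prefix through a sum, and $A1$--$A3$ to reorder summands and absorb duplicates, every basic term is provably equal to a normal form: a sum $\sum_{i\in I}u_i::\alpha_i$ with pairwise distinct summands (modulo the axioms), where each $\alpha_i$ is either a single event $e_i$ or a prefix $e_i\cdot t_i$ with $t_i$ again a normal form, and $u_i=\epsilon$ is allowed so as to recover unlocated summands. It therefore suffices to prove that hp-bisimilar normal forms are provably equal, which I would do by induction on the number of symbols in the first term.

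\emph{Summand matching.} Write $n_1=\sum_{i\in I}u_i::\alpha_i$ and $n_2=\sum_{j\in J}v_j::\beta_j$. Each summand $u_i::\alpha_i$ contributes exactly one initial transition $n_1\xrightarrow[u_i]{e_i}r_i$ with $r_i\equiv\surd$ when $\alpha_i=e_i$ and $r_i\equiv u_i::t_i$ when $\alpha_i=e_i\cdot t_i$ (from the rules of Table~\ref{SETRForBATC21} together with $L4$). Matching this transition in the hp-bisimulation relating $n_1$ and $n_2$ produces a summand $v_j::\beta_j$ of $n_2$ with $n_2\xrightarrow[v_j]{e_j'}r_j'$ such that $\lambda(e_i)=\lambda(e_j')$ (since the carried map $f$ is a label-preserving isomorphism), the pair $(u_i,v_j)$ is adjoined to the location association, and $(r_i,f[e_i\mapsto e_j'],r_j')$ lies in the restricted relation; symmetrically for the summands of $n_2$. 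When both matched summands are prefixes, the restricted relation witnesses $u_i::t_i\sim_{hp}^{sl}v_j::t_j'$; stripping the outer location prefixes (shifting the location association accordingly) yields $t_i\sim_{hp}^{sl}t_j'$, so the induction hypothesis gives $BATC^{sl}\vdash t_i=t_j'$ and hence $BATC^{sl}\vdash u_i::\alpha_i=v_j::\beta_j$. Feeding all such equalities back through $A1$--$A3$ gives $BATC^{sl}\vdash n_1=n_2$, closing the induction.

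\emph{Main obstacle.} The routine parts --- reduction to basic terms, the normal-form rewriting, and the combinatorics of $+$ --- go through essentially as in the pomset and step cases (Theorems~\ref{CBATCPBE21} and~\ref{CBATCSBE21}), since closed $BATC^{sl}$ terms generate conflict-only, tree-like event structures, on which the pomset, step and hp variants differ only in that hp-bisimulation proceeds event by event and carries an order-isomorphism. The genuinely delicate point is the location accounting inside the summand-matching lemma: one must check that the location association $\varphi\cup\{(u_i,v_j)\}$ produced by a match is again a consistent location association, that matching a single initial transition pins the matched summand down tightly enough to run the induction, and that ``stripping the outer location prefix'' from $u_i::t_i\sim_{hp}^{sl}v_j::t_j'$ legitimately descends to $t_i\sim_{hp}^{sl}t_j'$. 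Verifying these requires a careful reading of how the location rule for $loc::x$ propagates the sublocation ordering $loc\ll u$ and of the behaviour at the empty location via $L1$; this is where essentially all the content of the theorem lies.
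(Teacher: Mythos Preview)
The paper does not actually prove this theorem: it appears in the Backgrounds chapter (Section~2.3.2) as a result imported from \cite{LOC1}, stated without proof. There is therefore no ``paper's own proof'' to compare against directly; the only related arguments in the paper are the quantum completeness theorems (e.g.\ Theorem~\ref{CBATC}), whose proofs simply invoke this classical result and add the observation that the quantum state components coincide.

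That said, your approach is exactly the standard route such completeness results take in the ACP/APTC literature, and it is the argument one would expect to find in \cite{LOC1}: elimination to basic terms, normalisation via $L1$--$L4$ and $A1$--$A5$, then a summand-matching induction. The structure is sound. Two small remarks. First, in this setting events \emph{are} atomic actions, so the match you obtain from the hp-bisimulation gives $e_i=e_j'$ outright rather than merely equal labels; you can simplify accordingly. Second, your ``main obstacle'' is slightly overstated: for closed $BATC^{sl}$ terms the generated transition systems are finite trees with no true concurrency (there is no $\leftmerge$ here), so the hp, pomset and step bisimulations coincide on them, and the location-stripping step reduces to observing that $u::t\sim_{hp}^{sl}v::t'$ with the association $\{(u,v)\}$ forces every residual transition of $t$ at sublocation $u\ll w$ to be matched at $v\ll w'$, which is exactly what one needs. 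The consistency of $\varphi\cup\{(u_i,v_j)\}$ is automatic at the top level since $\varphi=\emptyset$ there and is preserved inductively because all residual locations are extensions of the matched pair.
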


\begin{theorem}[Congruence of BATC with static localities with respect to static location hhp-bisimulation equivalence]
Static location hhp-bisimulation equivalence $\sim_{hhp}^{sl}$ is a congruence with respect to BATC with static localities.
\end{theorem}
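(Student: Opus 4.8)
By the definition of congruence it suffices to treat each function symbol of $BATC^{sl}$ in turn: alternative composition $+$, sequential composition $\cdot$, and the location prefix $u::\_$ for $u\in Loc^*$ (the constants $e\in\mathbb{E}$ and $\epsilon$ being trivial). So I would fix PESs with $\mathcal{E}_1\sim_{hhp}^{sl}\mathcal{E}_1'$ witnessed by a downward-closed static location hp-bisimulation $R_1\subseteq\mathcal{C}(\mathcal{E}_1)\overline{\times}\mathcal{C}(\mathcal{E}_1')$ containing $(\emptyset,\emptyset,\emptyset)$, and likewise $\mathcal{E}_2\sim_{hhp}^{sl}\mathcal{E}_2'$ via $R_2$; for each operator I would exhibit an explicit posetal relation on the configurations of the composite PESs and verify three things: (i) the forth/back transfer conditions on single-event, location-labelled transitions $C_1\xrightarrow[u]{e_1}C_1'$; (ii) that every triple $(C,f,D)$ in the relation is a pomset isomorphism whose location pairs form a cla; and (iii) downward closure under the pointwise order $(C,f,D)\subseteq(C',f',D')$ on the posetal product. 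Observe that (iii) makes the rootedness requirement automatic, since $(\emptyset,\emptyset,\emptyset)$ is a sub-triple of every triple.

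For the location prefix, the configurations of $u::\mathcal{E}$ are in order-isomorphic bijection with those of $\mathcal{E}$, the bijection merely prepending $u$ to the location of every event; transporting $R_1$ along this bijection gives the required relation, the transfer conditions following from the prefix rule of Table \ref{SETRForBATC21} iterated via axiom $L4$, the cla condition surviving because $\diamond$ is unaffected by a common prefix, and downward closure being immediate since the bijection is an order-isomorphism of posetal products. For alternative composition, every nonempty configuration of $\mathcal{E}_1+\mathcal{E}_2$ lies wholly inside $\mathcal{E}_1$ or wholly inside $\mathcal{E}_2$ (the initial events of the two summands are in mutual conflict, propagated by heredity), so $R:=\{(\emptyset,\emptyset,\emptyset)\}\cup R_1\cup R_2$ works: the $+$-rules show that any first move commits the computation to a single summand, after which $R_1$ or $R_2$ takes over, and a sub-triple of a triple in the $\mathcal{E}_i$-part is again either $(\emptyset,\emptyset,\emptyset)$ or in the $\mathcal{E}_i$-part, so downward closure is inherited from $R_1$ and $R_2$.

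For sequential composition --- the substantive case --- a configuration of $\mathcal{E}_1\cdot\mathcal{E}_2$ is either a non-terminating configuration of $\mathcal{E}_1$ (first phase), or a terminated configuration of $\mathcal{E}_1$ together with a configuration of $\mathcal{E}_2$ (second phase). Correspondingly I would let $R$ consist of all triples of $R_1$, together with, for each terminated $(C_1,f_1,D_1)\in R_1$ and each $(C_2,f_2,D_2)\in R_2$, the concatenated triple $(C_1\cup C_2,\,f_1\cup f_2,\,D_1\cup D_2)$. That $f_1\cup f_2$ is a pomset isomorphism holds because, in a configuration of $\mathcal{E}_1\cdot\mathcal{E}_2$ containing a second-phase event, the whole first phase is present and lies causally below the second phase, so the composite order is the serial product of the two component orders and is preserved componentwise; the cla is respected because no first-phase event is ever independent of a second-phase one. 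The transfer conditions read off the $\cdot$-rules of Table \ref{SETRForBATC21}. The delicate point is downward closure: a sub-triple of a second-phase triple might delete some first-phase events, but since every first-phase event lies below every second-phase event, retaining \emph{any} second-phase event forces retention of \emph{all} of the first phase; hence every sub-triple is either a sub-triple of a first-phase triple (in $R_1$ by its downward closure, and such triples lie in $R$) or a terminated first-phase triple concatenated with a sub-triple of $(C_2,f_2,D_2)$ (in $R$ by downward closure of $R_2$). Thus $R$ is downward closed.

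I expect this last verification --- downward closure for sequential composition, together with making ``terminated configuration'' precise and checking that cutting down a composite configuration always lands back inside $R$ --- to be the main obstacle, precisely because hhp-bisimulation, unlike the pomset, step and hp variants already handled, is sensitive to the behaviour of the relation on \emph{all} sub-configurations. Once downward closure is in hand, soundness of the forth/back conditions is a direct inspection of Table \ref{SETRForBATC21}, entirely parallel to the congruence arguments already given for $\sim_p^{sl}$, $\sim_s^{sl}$ and $\sim_{hp}^{sl}$.
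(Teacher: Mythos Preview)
Your proposal is substantially more detailed than anything the paper offers. This theorem sits in the background section and is stated without proof, being imported from \cite{LOC1}; the closest the paper comes to an argument is the analogous congruence theorem for $qBATC$ with localities in Section~\ref{qbatcl}, whose entire proof reads ``it is obvious \ldots\ it is sufficient to prove that \ldots\ are preserved for $::$, $\cdot$ and $+$ \ldots\ The proof is quit trivial, and we leave the proof as an exercise for the readers.'' So there is no real comparison to make at the level of argument: you have written the exercise out, the paper has not.

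On the substance of your sketch: the operator-by-operator decomposition and the explicit relations you build are the right shape, and your identification of downward closure for $\cdot$ as the crux is exactly where hhp diverges from the easier hp/pomset/step cases. Two places deserve a bit more care. First, your remark that ``the cla is respected because no first-phase event is ever independent of a second-phase one'' conflates causal dependence with location independence $\diamond$: a first-phase event at location $u$ and a second-phase event at location $v$ can perfectly well have $u\diamond v$ even though they are causally ordered, so the combined cla $\varphi_1\cup\varphi_2$ need not automatically satisfy the consistency condition; you should instead argue that the bisimulation is a \emph{family} $R_\varphi$ indexed by clas, and that the transfer condition only ever extends $\varphi$ one pair at a time, so the cla hypothesis is maintained inductively rather than verified globally. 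Second, as you note yourself, ``terminated configuration'' needs a precise meaning tied to the $\surd$ predicate in Table~\ref{SETRForBATC21}; in the PES model underlying these operators this amounts to a maximal configuration of the first factor, and you should say so explicitly so that the case split in your downward-closure argument is well-defined.
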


\begin{theorem}[Soundness of BATC with static localities modulo static location hhp-bisimulation equivalence]\label{SBATCHHPBE21}
Let $x$ and $y$ be BATC with static localities terms. If $BATC\vdash x=y$, then $x\sim_{hhp}^{sl} y$.
\end{theorem}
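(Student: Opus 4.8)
The plan is to follow the standard route for soundness of an equational axiomatization modulo a behavioural equivalence. The immediately preceding theorem tells us that $\sim_{hhp}^{sl}$ is a congruence with respect to all operators of $BATC^{sl}$, and $\sim_{hhp}^{sl}$ is an equivalence relation (reflexivity, symmetry and transitivity are established as usual, the first two being witnessed by identity and inverse posetal relations, which stay downward closed, and transitivity by composition). Hence it suffices to check that every axiom in Table~\ref{AxiomsForBATC21} is sound: for each axiom $s=t$ and every closed substitution $\sigma$ we must exhibit a downward-closed static-location hp-bisimulation relating $\sigma(s)$ and $\sigma(t)$. The implication $BATC^{sl}\vdash x=y\Rightarrow x\sim_{hhp}^{sl}y$ then follows by a routine induction on the length of the derivation of $x=y$, invoking symmetry, transitivity and the congruence property at the inductive steps.

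First I would dispatch the classical axioms $A1$--$A5$. For $A1,A2,A3$ the transition rules for $+$ in Table~\ref{SETRForBATC21} are literally symmetric and idempotent in the summands, so the relation built from the identity on reachable configurations (extended by the obvious order isomorphism $f$) is an hp-bisimulation and is trivially downward closed. For $A5$ one relates configurations of $(x\cdot y)\cdot z$ with those of $x\cdot(y\cdot z)$ according to which of the three phases the computation has reached; for $A4$ one relates a configuration of $(x+y)\cdot z$ with the corresponding configuration of $x\cdot z+y\cdot z$, using that after the first event the left-hand side has committed to $x\cdot z$ or to $y\cdot z$ exactly as the right-hand side has. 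In all these cases every matched pair of transitions carries the same location label $u$, so the accumulated association $\varphi$ consists only of pairs $(u,u)$, which is vacuously a consistent location association, and downward closure is inherited from the pointwise order on $\mathcal{C}(\mathcal{E}_1)\overline{\times}\mathcal{C}(\mathcal{E}_2)$.

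Next I would treat the locality axioms $L1$--$L4$. For $L1$ one uses the rule $\frac{x\xrightarrow[u]{e}x'}{loc::x\xrightarrow[loc\ll u]{e}loc::x'}$ with $loc=\epsilon$ together with $\epsilon\ll u=u$, so $\epsilon::x$ mimics $x$ step for step with identical location labels. For $L4$ the key point is that iterated location prefixing composes: a transition of $x$ at location $w$ induces one of $v::x$ at the extended location $v\ll w$ and hence one of $u::(v::x)$ at $u\ll(v\ll w)$, which is precisely the location that $uv::x$ assigns to the event, by associativity of location composition on $Loc^*$. For $L2$ and $L3$ the prefix $u::$ distributes over $\cdot$ and over $+$ because the operational rules for $\cdot$ and $+$ do not inspect the location component, so prefixing commutes with the phase/branch bookkeeping used for $A4$ and $A5$; again the witnessing relation is defined so that corresponding transitions carry equal locations, keeping $\varphi$ consistent and the relation downward closed, and each such relation is then easily checked to be an hp-bisimulation with $(\emptyset,\emptyset,\emptyset)$ in it.

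The main obstacle is not any individual axiom but the \emph{uniform} verification of the downward-closure clause that separates hhp- from hp-bisimulation: for each witnessing posetal relation above one must confirm that every sub-triple $(C_1,f,C_2)$ lying pointwise below a related triple is again in the relation. This is where the explicit, structurally defined description of each witness pays off — since all matched transitions use identical location labels the association component never obstructs downward closure, and the configuration components are closed under taking sub-configurations by construction. A secondary point requiring care is that for $A4$, $A5$, $L2$ and $L4$ the event sets of the two sides are not literally equal, so one fixes once and for all the event renaming implicitly carried by the isomorphisms $f$ and checks that the family $f[e_1\mapsto e_2]$ is coherent along chains of transitions; after this bookkeeping is set up the remaining verifications are mechanical.
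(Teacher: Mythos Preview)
Your proposal is correct and follows the standard approach the paper adopts throughout: reduce soundness to checking each axiom individually, relying on the fact that $\sim_{hhp}^{sl}$ is an equivalence and a congruence. Note, however, that this particular theorem sits in the Backgrounds section and is stated \emph{without} proof in the paper---it is imported from \cite{LOC1}. The paper's own proofs of the analogous soundness results (e.g.\ for $qBATC^{sl}$ in Theorem~\ref{SBATC}) consist of exactly your opening paragraph and then the sentence ``We leave the proof as an exercise for the readers.'' So you have not merely matched the paper's argument; you have supplied the axiom-by-axiom verification and the downward-closure discussion that the paper explicitly omits. In that sense your write-up is strictly more detailed than anything the paper offers for this or any parallel soundness statement.
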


\begin{theorem}[Completeness of BATC with static localities modulo static location hhp-bisimulation equivalence]\label{CBATCHHPBE21}
Let $p$ and $q$ be closed BATC with static localities terms, if $p\sim_{hhp}^{sl} q$ then $p=q$.
\end{theorem}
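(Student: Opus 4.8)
The plan is to follow the same scheme already used for the pomset, step, and hp cases (Theorems \ref{CBATCPBE21}, \ref{CBATCSBE21}, \ref{CBATCHPBE21}), exploiting the fact that $BATC^{sl}$ has only sequencing, choice, and the locality prefix but no parallel composition, so that on closed terms the extra structure demanded by hereditary history-preserving bisimulation (downward closedness, the full $\overline{\times}$ bookkeeping) contributes nothing beyond what the hp-argument already handles. First I would invoke the elimination theorem (Theorem \ref{ETBATC21}): given closed terms $p,q$ there are basic terms $p',q'\in\mathcal{B}(BATC^{sl})$ with $BATC^{sl}\vdash p=p'$ and $BATC^{sl}\vdash q=q'$; by soundness (Theorem \ref{SBATCHHPBE21}) $p\sim_{hhp}^{sl}p'$ and $q\sim_{hhp}^{sl}q'$, so it suffices to treat basic terms. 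Next I would normalize: using $L1$--$L4$ together with $A1$--$A5$, every basic term is provably equal to a \emph{normal form} $\sum_{i\in I}u_i::e_i\cdot t_i+\sum_{j\in J}v_j::e_j$ (with each $t_i$ again in normal form, $u_i,v_j\in Loc^*$, summation understood modulo $A1$--$A3$); this should be stated as an auxiliary lemma proved by induction on term structure.

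Then I would argue by induction on the depth of the normal form of $p$. Assume $p\sim_{hhp}^{sl}q$ with $p,q$ in normal form, witnessed by a static location hhp-bisimulation $R_\varphi$ with $(\emptyset,\emptyset,\emptyset)\in R_\varphi$. Each summand $u_i::e_i\cdot t_i$ of $p$ gives, via the rules of Table \ref{SETRForBATC21}, a transition $p\xrightarrow[u_i]{e_i}t_i$; the bisimulation yields a matching $q\xrightarrow[v]{e_i}q''$ with $t_i\sim_{hhp}^{sl}q''$ and the location association extended by $(u_i,v)$, where the $\mathrm{cla}$ condition and downward closedness constrain $v$. Since $q$ is in normal form, $q''$ is (provably equal to) one of the continuations $t'$ appearing in a summand $v::e_i\cdot t'$ of $q$. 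A small lemma is needed here: for a closed basic term the location at which an initial event fires is determined by the term (an event $e$ fires at $\epsilon$, while under a prefix $loc::$ it fires at $loc$ or at $loc\ll u$), so $u_i=v$ on normal forms; hence $BATC^{sl}\vdash u_i::e_i\cdot t_i = u_i::e_i\cdot t'$ once the induction hypothesis gives $BATC^{sl}\vdash t_i=t'$. The termination summands $v_j::e_j$ are handled the same way using the $\surd$-rules. By symmetry every summand of $q$ is matched by one of $p$, and $A1$--$A3$ absorb the duplicates, so $BATC^{sl}\vdash p=q$; combined with the reductions above this gives $BATC^{sl}\vdash p=q$ for arbitrary closed terms.

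The step I expect to be the main obstacle is precisely the location bookkeeping: in the non-located completeness proofs one only needs to match an action $e$ with the same $e$, whereas here one must verify that $\sim_{hhp}^{sl}$ forces the firing locations of matched initial events to coincide (up to $L1$, $L4$), so that the matched summands can actually be proved equal and not merely bisimilar. This is where the auxiliary lemma on the transition rules of Table \ref{SETRForBATC21} does the real work. Once that is in hand, the downward-closedness clause of hhp-bisimulation is for free on this fragment — every configuration reachable from a closed $BATC^{sl}$ term is a chain, so posetal relations between such configurations are automatically downward closed — and the induction closes exactly as in the hp-case.
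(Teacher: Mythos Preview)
The paper does not actually contain a proof of this theorem. It appears in Section~\ref{bg} (Backgrounds) as one of a block of results about $BATC^{sl}$ that are simply stated and implicitly attributed to \cite{LOC1}; the analogous quantum completeness theorem later in the paper (Theorem~\ref{CBATC}) is disposed of by writing ``According to the completeness of $BATC^{sl}$ (please refer to \cite{LOC1} for details)\ldots''. So there is nothing in the paper to compare your argument against.

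That said, your proposal follows the standard shape one would expect for such a result --- elimination to basic terms, normalization via $A1$--$A5$ and $L1$--$L4$, then summand-matching by induction on depth --- and your observation that on this fragment configurations are chains (so downward closedness of the posetal relation is automatic and hhp collapses to hp) is the right structural reason the hhp case should not require new ideas beyond Theorem~\ref{CBATCHPBE21}. The point you flag as ``the main obstacle'' is real: the bisimulation definition only records a consistent location association $\varphi$, not literal equality of locations, so your claim that ``$u_i=v$ on normal forms'' needs an argument that the cla condition, together with the fact that in $BATC^{sl}$ each initial transition fires at a single syntactically determined location, forces matched locations to coincide. You have identified the right lemma to isolate; just be aware that its proof is where the located and unlocated completeness arguments genuinely diverge, and it is not quite as immediate as the rest of the outline suggests.
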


\subsubsection{APTC with Localities}

We give the transition rules of APTC with static localities as Table \ref{TRForAPTC21} shows.

\begin{center}
    \begin{table}
        $$\frac{x\xrightarrow[u]{e_1}\surd\quad y\xrightarrow[v]{e_2}\surd}{x\parallel y\xrightarrow[u\diamond v]{\{e_1,e_2\}}\surd} \quad\frac{x\xrightarrow[u]{e_1}x'\quad y\xrightarrow[v]{e_2}\surd}{x\parallel y\xrightarrow[u\diamond v]{\{e_1,e_2\}}x'}$$
        $$\frac{x\xrightarrow[u]{e_1}\surd\quad y\xrightarrow[v]{e_2}y'}{x\parallel y\xrightarrow[u\diamond v]{\{e_1,e_2\}}y'} \quad\frac{x\xrightarrow[u]{e_1}x'\quad y\xrightarrow[v]{e_2}y'}{x\parallel y\xrightarrow[u\diamond v]{\{e_1,e_2\}}x'\between y'}$$
        $$\frac{x\xrightarrow[u]{e_1}\surd\quad y\xrightarrow[v]{e_2}\surd \quad(e_1\leq e_2)}{x\leftmerge y\xrightarrow[u\diamond v]{\{e_1,e_2\}}\surd} \quad\frac{x\xrightarrow[u]{e_1}x'\quad y\xrightarrow[v]{e_2}\surd \quad(e_1\leq e_2)}{x\leftmerge y\xrightarrow[u\diamond v]{\{e_1,e_2\}}x'}$$
        $$\frac{x\xrightarrow[u]{e_1}\surd\quad y\xrightarrow[v]{e_2}y' \quad(e_1\leq e_2)}{x\leftmerge y\xrightarrow[u\diamond v]{\{e_1,e_2\}}y'} \quad\frac{x\xrightarrow[u]{e_1}x'\quad y\xrightarrow[v]{e_2}y' \quad(e_1\leq e_2)}{x\leftmerge y\xrightarrow[u\diamond v]{\{e_1,e_2\}}x'\between y'}$$
        $$\frac{x\xrightarrow[u]{e_1}\surd\quad y\xrightarrow[v]{e_2}\surd}{x\mid y\xrightarrow[u\diamond v]{\gamma(e_1,e_2)}\surd} \quad\frac{x\xrightarrow[u]{e_1}x'\quad y\xrightarrow[v]{e_2}\surd}{x\mid y\xrightarrow[u\diamond v]{\gamma(e_1,e_2)}x'}$$
        $$\frac{x\xrightarrow[u]{e_1}\surd\quad y\xrightarrow[v]{e_2}y'}{x\mid y\xrightarrow[u\diamond v]{\gamma(e_1,e_2)}y'} \quad\frac{x\xrightarrow[u]{e_1}x'\quad y\xrightarrow[v]{e_2}y'}{x\mid y\xrightarrow[u\diamond v]{\gamma(e_1,e_2)}x'\between y'}$$
        $$\frac{x\xrightarrow[u]{e_1}\surd\quad (\sharp(e_1,e_2))}{\Theta(x)\xrightarrow[u]{e_1}\surd} \quad\frac{x\xrightarrow[u]{e_2}\surd\quad (\sharp(e_1,e_2))}{\Theta(x)\xrightarrow[u]{e_2}\surd}$$
        $$\frac{x\xrightarrow[u]{e_1}x'\quad (\sharp(e_1,e_2))}{\Theta(x)\xrightarrow[u]{e_1}\Theta(x')} \quad\frac{x\xrightarrow[u]{e_2}x'\quad (\sharp(e_1,e_2))}{\Theta(x)\xrightarrow[u]{e_2}\Theta(x')}$$
        $$\frac{x\xrightarrow[u]{e_1}\surd \quad y\nrightarrow^{e_2}\quad (\sharp(e_1,e_2))}{x\triangleleft y\xrightarrow[u]{\tau}\surd}
        \quad\frac{x\xrightarrow[u]{e_1}x' \quad y\nrightarrow^{e_2}\quad (\sharp(e_1,e_2))}{x\triangleleft y\xrightarrow[u]{\tau}x'}$$
        $$\frac{x\xrightarrow[u]{e_1}\surd \quad y\nrightarrow^{e_3}\quad (\sharp(e_1,e_2),e_2\leq e_3)}{x\triangleleft y\xrightarrow[u]{e_1}\surd}
        \quad\frac{x\xrightarrow[u]{e_1}x' \quad y\nrightarrow^{e_3}\quad (\sharp(e_1,e_2),e_2\leq e_3)}{x\triangleleft y\xrightarrow[u]{e_1}x'}$$
        $$\frac{x\xrightarrow[u]{e_3}\surd \quad y\nrightarrow^{e_2}\quad (\sharp(e_1,e_2),e_1\leq e_3)}{x\triangleleft y\xrightarrow[u]{\tau}\surd}
        \quad\frac{x\xrightarrow[u]{e_3}x' \quad y\nrightarrow^{e_2}\quad (\sharp(e_1,e_2),e_1\leq e_3)}{x\triangleleft y\xrightarrow[u]{\tau}x'}$$
        \caption{Transition rules of APTC with static localities}
        \label{TRForAPTC21}
    \end{table}
\end{center}

In the following, we show that the elimination theorem does not hold for truly concurrent processes combined the operators $\cdot$, $+$ and $\leftmerge$. Firstly, we define the basic
terms for APTC with static localities.

\begin{definition}[Basic terms of APTC with static localities]\label{BTAPTC21}
The set of basic terms of APTC with static localities, $\mathcal{B}(APTC^{sl})$, is inductively defined as follows:
\begin{enumerate}
  \item $\mathbb{E}\subset\mathcal{B}(APTC^{sl})$;
  \item if $u\in Loc^*, t\in\mathcal{B}(APTC^{sl})$ then $u::t\in\mathcal{B}(APTC^{sl})$;
  \item if $e\in \mathbb{E}, t\in\mathcal{B}(APTC^{sl})$ then $e\cdot t\in\mathcal{B}(APTC^{sl})$;
  \item if $t,s\in\mathcal{B}(APTC^{sl})$ then $t+ s\in\mathcal{B}(APTC^{sl})$;
  \item if $t,s\in\mathcal{B}(APTC^{sl})$ then $t\leftmerge s\in\mathcal{B}(APTC^{sl})$.
\end{enumerate}
\end{definition}

\begin{theorem}[Congruence theorem of APTC with static localities]
Static location truly concurrent bisimulation equivalences $\sim_p^{sl}$, $\sim_s^{sl}$, $\sim_{hp}^{sl}$ and $\sim_{hhp}^{sl}$ are all congruences with respect to APTC with static
localities.
\end{theorem}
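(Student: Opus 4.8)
The plan is to argue operator by operator and reduce everything to the transition rules in Table~\ref{TRForAPTC21}. By the congruence theorems for $BATC$ with static localities (established earlier), the four equivalences $\sim_p^{sl}$, $\sim_s^{sl}$, $\sim_{hp}^{sl}$, $\sim_{hhp}^{sl}$ are already compatible with $+$, $\cdot$ and the localization prefixes $u::(-)$, so it remains to treat the parallel family ($\parallel$, $\leftmerge$, $\mid$, $\between$) and the auxiliary operators $\Theta$ and $\triangleleft$. For a fixed equivalence and a fixed such operator $f$, I would start from witnessing static location bisimulations $R_1,\dots,R_{ar(f)}$ for the argument pairs (carrying along their location associations $\varphi_i$, and in the hp/hhp cases their order isomorphisms), form the candidate relation $R$ generated by all pairs $(f(\vec{s}),f(\vec{t}))$ with $s_i\mathbin{R_i}t_i$, closed under the term constructors that appear in the right-hand sides of the rules (so that residuals such as $s'\between t'$, $\Theta(s')$ or $s'\cdot(\cdots)$ remain inside $R$), and then verify the transfer conditions by a case analysis on which rule of Table~\ref{TRForAPTC21} produced a given transition. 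Since the parallel operators feed each other's residuals, the four of them must be handled by one simultaneous construction.

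Concretely, take $x\parallel y$. A move $s_1\parallel s_2\xrightarrow[w]{X}P$ must instantiate one of the four $\parallel$-rules, so $X=\{e_1,e_2\}$ and $w=u\diamond v$ with $s_1\xrightarrow[u]{e_1}\cdot$ and $s_2\xrightarrow[v]{e_2}\cdot$; applying the argument bisimulations yields matching moves $t_1\xrightarrow[u']{e_1}\cdot$ and $t_2\xrightarrow[v']{e_2}\cdot$ with related residuals, hence $t_1\parallel t_2\xrightarrow[u'\diamond v']{X}Q$ with $(P,Q)\in R$. The only point specific to localities is that the enlarged association must still be a consistent location association: independence of $u\diamond v$ from a third location is, by the definition of $\diamond$, decided componentwise, so consistency of $\varphi\cup\{(u,u'),(v,v')\}$ reduces to the consistency already guaranteed for the component pairs. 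The $\leftmerge$-rules are the $\parallel$-rules guarded by the label-only side condition $e_1\le e_2$, which any location bisimulation respects because matching transitions carry identical labels; $\mid$ is the same with step label $\{e_1,e_2\}$ replaced by $\gamma(e_1,e_2)$; and $\between$ is handled together with these via its defining rules.

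For $\Theta$ and $\triangleleft$ the rules carry negative premises $y\nrightarrow^{e_2}$ together with conflict side conditions $\sharp(e_1,e_2)$ and causality conditions $e_2\le e_3$. I would first note that the resulting TSS is stratified --- a negative premise only ever inspects an argument, never the operator term itself --- so the generated labelled transition system is well defined, and then observe that every static location (step, hp, hhp) bisimulation $R$ preserves the relevant negative information: if $(s,t)\in R$ and $t\xrightarrow{e_2}\cdot$ then $s\xrightarrow{e_2}\cdot$, whence $s\nrightarrow^{e_2}$ forces $t\nrightarrow^{e_2}$. With this, the case analysis for $\Theta(x)$ and $x\triangleleft y$ is exactly as before; the $\tau$-relabelling that occurs in some $\triangleleft$-conclusions is harmless since $\tau$ is matched by $\tau$ and the residual stays in $R$.

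The pomset, step and hp cases are then just this bookkeeping. The delicate case, and the one I expect to cost the most work, is $\sim_{hhp}^{sl}$: an hhp-bisimulation must be \emph{downward closed} in the posetal product, so I must check that the candidate relation $R$ built from downward-closed argument relations is again downward closed under $f$. For the parallel operators this amounts to showing that a subconfiguration of a configuration of $f(s_1,s_2)$ decomposes into compatible subconfigurations of the components --- i.e. that the event structure of $f(s_1,s_2)$ is the expected combination (disjoint union with added causality/conflict) of those of $s_1$ and $s_2$ --- and restricting the isomorphism accordingly; this is where the argument is most technical. For $\Theta$ and $\triangleleft$ one must additionally verify that passing to a downward-closed subconfiguration cannot re-enable an event blocked by a conflict or priority side condition, which follows from hereditariness of $\sharp$ with respect to $\le$ and the partial-order axioms in Definition~\ref{PES}. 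Beyond this, no genuinely new phenomenon arises compared with the same result without localities: the location indices and the association $\varphi$ are carried along passively, so the bulk of the write-up is ``as in the non-local case, additionally tracking the consistent location association''.
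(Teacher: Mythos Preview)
Your proposal is correct and follows the same high-level strategy the paper gestures at: verify, operator by operator, that each equivalence is preserved, by case analysis on the transition rules of Table~\ref{TRForAPTC21}. The paper itself, however, supplies no proof for this statement --- it sits in the Backgrounds section with results imported from \cite{LOC1}, and the analogous congruence theorems that do carry a proof environment (e.g., for $qAPTC$ with localities) merely list the operators to be checked and declare the verification ``quit trivial'' and ``left as an exercise for the readers''. Your outline is therefore far more detailed than anything the paper offers: in particular your treatment of why the enlarged location association $\varphi\cup\{(u,u'),(v,v')\}$ remains consistent under $\diamond$, your handling of the negative premises for $\Theta$ and $\triangleleft$ via stratification, and your downward-closure argument for $\sim_{hhp}^{sl}$ are all points the paper simply does not address.
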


So, we design the axioms of parallelism in Table \ref{AxiomsForParallelism21}, including algebraic laws for parallel operator $\parallel$, communication operator $\mid$, conflict
elimination operator $\Theta$ and unless operator $\triangleleft$, and also the whole parallel operator $\between$. Since the communication between two communicating events in different
parallel branches may cause deadlock (a state of inactivity), which is caused by mismatch of two communicating events or the imperfectness of the communication channel. We introduce a
new constant $\delta$ to denote the deadlock, and let the atomic event $e\in \mathbb{E}\cup\{\delta\}$.

\begin{center}
    \begin{table}
        \begin{tabular}{@{}ll@{}}
            \hline No. &Axiom\\
            $A6$ & $x+ \delta = x$\\
            $A7$ & $\delta\cdot x =\delta$\\
            $P1$ & $x\between y = x\parallel y + x\mid y$\\
            $P2$ & $x\parallel y = y \parallel x$\\
            $P3$ & $(x\parallel y)\parallel z = x\parallel (y\parallel z)$\\
            $P4$ & $x\parallel y = x\leftmerge y + y\leftmerge x$\\
            $P5$ & $(e_1\leq e_2)\quad e_1\leftmerge (e_2\cdot y) = (e_1\leftmerge e_2)\cdot y$\\
            $P6$ & $(e_1\leq e_2)\quad (e_1\cdot x)\leftmerge e_2 = (e_1\leftmerge e_2)\cdot x$\\
            $P7$ & $(e_1\leq e_2)\quad (e_1\cdot x)\leftmerge (e_2\cdot y) = (e_1\leftmerge e_2)\cdot (x\between y)$\\
            $P8$ & $(x+ y)\leftmerge z = (x\leftmerge z)+ (y\leftmerge z)$\\
            $P9$ & $\delta\leftmerge x = \delta$\\
            $C1$ & $e_1\mid e_2 = \gamma(e_1,e_2)$\\
            $C2$ & $e_1\mid (e_2\cdot y) = \gamma(e_1,e_2)\cdot y$\\
            $C3$ & $(e_1\cdot x)\mid e_2 = \gamma(e_1,e_2)\cdot x$\\
            $C4$ & $(e_1\cdot x)\mid (e_2\cdot y) = \gamma(e_1,e_2)\cdot (x\between y)$\\
            $C5$ & $(x+ y)\mid z = (x\mid z) + (y\mid z)$\\
            $C6$ & $x\mid (y+ z) = (x\mid y)+ (x\mid z)$\\
            $C7$ & $\delta\mid x = \delta$\\
            $C8$ & $x\mid\delta = \delta$\\
            $CE1$ & $\Theta(e) = e$\\
            $CE2$ & $\Theta(\delta) = \delta$\\
            $CE3$ & $\Theta(x+ y) = \Theta(x)\triangleleft y + \Theta(y)\triangleleft x$\\
            $CE4$ & $\Theta(x\cdot y)=\Theta(x)\cdot\Theta(y)$\\
            $CE5$ & $\Theta(x\parallel y) = ((\Theta(x)\triangleleft y)\parallel y)+ ((\Theta(y)\triangleleft x)\parallel x)$\\
            $CE6$ & $\Theta(x\mid y) = ((\Theta(x)\triangleleft y)\mid y)+ ((\Theta(y)\triangleleft x)\mid x)$\\
            $U1$ & $(\sharp(e_1,e_2))\quad e_1\triangleleft e_2 = \tau$\\
            $U2$ & $(\sharp(e_1,e_2),e_2\leq e_3)\quad e_1\triangleleft e_3 = e_1$\\
            $U3$ & $(\sharp(e_1,e_2),e_2\leq e_3)\quad e3\triangleleft e_1 = \tau$\\
            $U4$ & $e\triangleleft \delta = e$\\
            $U5$ & $\delta \triangleleft e = \delta$\\
            $U6$ & $(x+ y)\triangleleft z = (x\triangleleft z)+ (y\triangleleft z)$\\
            $U7$ & $(x\cdot y)\triangleleft z = (x\triangleleft z)\cdot (y\triangleleft z)$\\
            $U8$ & $(x\leftmerge y)\triangleleft z = (x\triangleleft z)\leftmerge (y\triangleleft z)$\\
            $U9$ & $(x\mid y)\triangleleft z = (x\triangleleft z)\mid (y\triangleleft z)$\\
            $U10$ & $x\triangleleft (y+ z) = (x\triangleleft y)\triangleleft z$\\
            $U11$ & $x\triangleleft (y\cdot z)=(x\triangleleft y)\triangleleft z$\\
            $U12$ & $x\triangleleft (y\leftmerge z) = (x\triangleleft y)\triangleleft z$\\
            $U13$ & $x\triangleleft (y\mid z) = (x\triangleleft y)\triangleleft z$\\
            $L5$ & $u::(x\between y) = u::x\between u:: y$\\
            $L6$ & $u::(x\parallel y) = u::x\parallel u:: y$\\
            $L7$ & $u::(x\mid y) = u::x\mid u:: y$\\
            $L8$ & $u::(\Theta(x)) = \Theta(u::x)$\\
            $L9$ & $u::(x\triangleleft y) = u::x\triangleleft u:: y$\\
            $L10$ & $u::\delta=\delta$\\
        \end{tabular}
        \caption{Axioms of parallelism}
        \label{AxiomsForParallelism21}
    \end{table}
\end{center}


Based on the definition of basic terms for APTC with static localities (see Definition \ref{BTAPTC21}) and axioms of parallelism (see Table \ref{AxiomsForParallelism21}), we can prove the
elimination theorem of parallelism.

\begin{theorem}[Elimination theorem of parallelism]\label{ETParallelism21}
Let $p$ be a closed APTC with static localities term. Then there is a basic APTC with static localities term $q$ such that $APTC^{sl}\vdash p=q$.
\end{theorem}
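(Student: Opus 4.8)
The plan is to prove this by induction on the structure of a closed $APTC^{sl}$ term $p$, following the standard pattern for elimination theorems in $ACP$-style algebras: first establish that every closed term can be rewritten using the axioms into the shape allowed by Definition \ref{BTAPTC21}, then invoke the earlier elimination result for $BATC^{sl}$ (Theorem \ref{ETBATC21}) as the base of the argument once the parallel-style operators have been pushed out. Concretely, I would orient the axioms of Table \ref{AxiomsForParallelism21} (together with the $BATC^{sl}$ axioms and the location laws $L1$--$L10$) as a term rewriting system and show that it is strongly normalizing, using Theorem \ref{SN}: I would exhibit a well-founded ordering $>$ on the signature with $\between > \parallel > \mid > \leftmerge > \cdot > +$ and the auxiliary operators $\Theta, \triangleleft$ above everything, and a location-stripping weight, and check $s >_{lpo} t$ for each rewrite rule $s \to t$. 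The normal forms of this TRS are then exactly the basic terms of $APTC^{sl}$ in the sense of Definition \ref{BTAPTC21}.

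The key steps, in order, are: (1) list the rewrite rules obtained by orienting $A1$--$A7$, $P2$--$P9$, $C1$--$C8$, $CE1$--$CE6$, $U1$--$U13$, $L1$--$L10$ from left to right (and keep $P1$ only as an equation used once at the top, or absorb $\between$ into the basic-term grammar), noting that $A1$ (commutativity of $+$) and $P2$, $P3$ (commutativity/associativity of $\parallel$) are handled modulo AC as usual; (2) define the weight function and verify well-foundedness and the $>_{lpo}$ condition rule by rule, which is routine but must be done for $P5$--$P7$ and $U7$--$U13$ carefully since these duplicate subterms; (3) conclude strong normalization via Theorem \ref{SN}; (4) characterize the normal forms by a case analysis on the head operator, showing that no normal form can have $\parallel$, $\mid$, $\Theta$, $\triangleleft$, or $\between$ at the top or as the head of a left argument of $\cdot$, so that every normal form matches clauses (1)--(5) of Definition \ref{BTAPTC21}; (5) since every closed term reduces to a normal form $q$ and each rewrite step is a valid equation, $APTC^{sl} \vdash p = q$ with $q \in \mathcal{B}(APTC^{sl})$.

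The main obstacle I expect is step (2) for the axioms that cause term duplication, especially $P7$ ($(e_1 \cdot x) \leftmerge (e_2 \cdot y) = (e_1 \leftmerge e_2) \cdot (x \between y)$) and the distributivity laws $U7$--$U13$ for the unless operator over $\cdot$, $\leftmerge$, $\mid$: on the right-hand side the argument $z$ (or the pair $x,y$) appears twice, so a naive size measure increases. The fix is the usual one — the ordering must give the eliminated operator ($\leftmerge$, $\mid$, $\Theta$, $\triangleleft$) strictly higher precedence than the operators it is pushed past, so that a recursive path ordering still decreases despite copying; I would spell out the precedence $\triangleleft > \Theta > \mid > \leftmerge > \parallel > \between > \cdot > + > (u::\,-)$ and check that each right-hand side is strictly smaller in the lexicographic path order induced by it. A secondary subtlety is that, as the text explicitly warns, $\leftmerge$ survives in basic terms (clause (5) of Definition \ref{BTAPTC21}), so the target normal form is \emph{not} a $BATC^{sl}$ basic term; I must be careful that the rewrite rules $P5$, $P6$ only fire when the left argument is a \emph{single} event (matching the side condition $e_1 \leq e_2$ and the syntactic form $e_1$), and that $t \leftmerge s$ with both $t,s$ already basic is left untouched, so the induction is on the grammar of Definition \ref{BTAPTC21} rather than on that of $BATC^{sl}$.
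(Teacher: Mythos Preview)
Your approach is the standard ACP-style elimination argument and is exactly what the paper's Proof Techniques subsection (Definitions 2.9--2.11 and Theorem \ref{SN}) is set up to support. However, the paper itself gives no proof of this theorem: it sits in the Backgrounds chapter as a result imported from \cite{LOC1}, and the later quantum analogues explicitly write ``The same as that of $APTC^{sl}$, we omit the proof, please refer to \cite{LOC1} for details.'' So there is no in-paper argument to compare against; your sketch is already more detailed than anything the paper provides here, and its overall shape (orient the axioms as a TRS, prove termination by lpo, identify normal forms with $\mathcal{B}(APTC^{sl})$) is the intended one.

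One genuine inconsistency you should fix: you state two different precedence orders, and only the first is usable. To orient $P1$ ($x\between y \to x\parallel y + x\mid y$) and $P4$ ($x\parallel y \to x\leftmerge y + y\leftmerge x$) left to right under lpo you need $\between > \parallel,\mid,+$ and $\parallel > \leftmerge,+$; your first ordering $\between > \parallel > \mid > \leftmerge > \cdot > +$ respects this, but your second ordering $\triangleleft > \Theta > \mid > \leftmerge > \parallel > \between > \cdots$ reverses $\parallel$ and $\leftmerge$ and drops $\between$ below both, so $P1$ and $P4$ would \emph{increase}. Keep the first shape, put $\Theta$ and $\triangleleft$ on top (since $CE3$--$CE6$ and $U6$--$U13$ push them through every other constructor, duplicating arguments), and the termination check goes through as you outline.
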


\begin{theorem}[Generalization of APTC with static localities with respect to BATC with static localities]
APTC with static localities is a generalization of BATC with static localities.
\end{theorem}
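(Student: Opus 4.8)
The plan is to show that APTC with static localities is a generalization of BATC with static localities in the standard syntactic/semantic sense used throughout this development: the signature and axioms of $BATC^{sl}$ are included in those of $APTC^{sl}$, and the term-deduction system of $APTC^{sl}$ is a conservative extension of that of $BATC^{sl}$. Concretely, I would first observe that the signature of $APTC^{sl}$ is obtained from that of $BATC^{sl}$ by adding the fresh function symbols $\between$, $\parallel$, $\leftmerge$, $\mid$, $\Theta$, $\triangleleft$ and the fresh constant $\delta$, while the operators $+$, $\cdot$ and the localization prefixes $u::(\cdot)$ are retained unchanged; and that every axiom of $BATC^{sl}$ (the laws $A1$--$A5$, $L1$--$L4$ of Table \ref{AxiomsForBATC21}) is literally among the axioms of $APTC^{sl}$. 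Hence $APTC^{sl}\vdash s=t$ whenever $BATC^{sl}\vdash s=t$.

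Next I would verify the semantic half via the Conservative Extension Theorem (Theorem \ref{TCE}). For this I must check its hypotheses. First, the TSS of $BATC^{sl}$ (Table \ref{SETRForBATC21}) is source-dependent: the axiomatic rules for $e$, $loc::e$, $loc::x$, $+$ and $\cdot$ all have their premise variables occurring in the source, so condition (1) holds. Second, for each transition rule $\rho$ of $APTC^{sl}$ (Table \ref{TRForAPTC21}), its source is a term headed by one of the fresh operators $\parallel$, $\leftmerge$, $\mid$, $\Theta$, $\triangleleft$, hence the source of $\rho$ is fresh, so condition (2) is satisfied trivially for every new rule. Both TSSs are plainly positive after reduction. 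Therefore $APTC^{sl}=BATC^{sl}\oplus(\text{new rules})$ is a conservative extension of $BATC^{sl}$, i.e. for any $BATC^{sl}$ term $t$ the transitions $t\xrightarrow[u]{e}t'$ and $t\xrightarrow[u]{e}\surd$ derivable in $APTC^{sl}$ are exactly those derivable in $BATC^{sl}$.

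Combining the two halves gives the result: for closed $BATC^{sl}$ terms, provability and the static location truly concurrent bisimulation equivalences ($\sim_p^{sl}$, $\sim_s^{sl}$, $\sim_{hp}^{sl}$, $\sim_{hhp}^{sl}$) coincide whether computed inside $BATC^{sl}$ or inside $APTC^{sl}$, which is precisely the statement that $APTC^{sl}$ is a generalization of $BATC^{sl}$. I expect the only mildly delicate point to be the bookkeeping in the freshness check for condition (2) of Theorem \ref{TCE}: one must be careful that the unless operator $\triangleleft$ and conflict operator $\Theta$ do not reintroduce old operators in the \emph{source} of any rule (they do not — the sources are $\Theta(x)$ and $x\triangleleft y$), and that the negative premises $y\nrightarrow^{e}$ appearing in the rules for $\triangleleft$ do not disturb the ``positive after reduction'' requirement, which is handled exactly as in the analogous locality-free development. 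Everything else is routine inspection of the tables.
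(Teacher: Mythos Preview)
Your proposal is correct and follows essentially the same approach the paper uses. The theorem as stated in the background section carries no proof in the paper itself (it is a result imported from \cite{LOC1}), but the analogous generalization theorems that the paper \emph{does} prove (for $qAPTC^{sl}$ over $qBATC^{sl}$ and for $qAPPTC^{sl}$ over $qBAPTC^{sl}$) proceed exactly as you do: they record that (i) the transition rules of the base calculus are source-dependent, (ii) the sources of the new transition rules each contain an occurrence of a fresh operator ($\between$, $\parallel$, $\leftmerge$, $\mid$, $\Theta$, $\triangleleft$, $\partial_H$), and (iii) the new rules are themselves source-dependent, and then invoke the Conservative Extension Theorem. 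Your write-up is somewhat more explicit---you separate the axiomatic inclusion from the operational conservativity and flag the negative premises in the $\triangleleft$ rules---but the underlying argument is the same.
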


\begin{theorem}[Soundness of APTC with static localities modulo static location pomset bisimulation equivalence]\label{SPPBE21}
Let $x$ and $y$ be APTC with static localities terms. If $APTC^{sl}\vdash x=y$, then $x\sim_p^{sl} y$.
\end{theorem}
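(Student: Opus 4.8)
The plan is to reduce the statement to the soundness of the individual axioms and then lift it along derivations. Concretely, $APTC^{sl}$ is obtained from $BATC^{sl}$ by adjoining the operators $\parallel$, $\leftmerge$, $\mid$, $\Theta$, $\triangleleft$, $\between$ together with the axioms $A6$, $A7$, $P1$--$P9$, $C1$--$C8$, $CE1$--$CE6$, $U1$--$U13$ and $L5$--$L10$ of Table \ref{AxiomsForParallelism21}. By Theorem \ref{SBATCPBE21} the $BATC^{sl}$ axioms are already known to be sound modulo $\sim_p^{sl}$, and by the congruence theorem for $APTC^{sl}$ the relation $\sim_p^{sl}$ is preserved by every operator of the signature. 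Hence it suffices to show that for every new axiom $s=t$ and every closed substitution $\sigma$ one has $\sigma(s)\sim_p^{sl}\sigma(t)$; an induction on the length of a derivation $APTC^{sl}\vdash x=y$ — using reflexivity, symmetry and transitivity of $\sim_p^{sl}$ together with closure under the operators — then yields $x\sim_p^{sl} y$.

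For each axiom I would exhibit an explicit static location pomset bisimulation $R_{\varphi}$ containing the pair of the two sides and check the transfer clauses directly against the transition rules of Tables \ref{SETRForBATC21} and \ref{TRForAPTC21}. The axioms that do not touch locations ($A6$, $A7$, $P8$, $P9$, $C5$--$C8$, $U4$--$U13$, and the $\Theta$-laws $CE1$--$CE6$) are handled exactly as in the soundness proof of plain $APTC$: the witnessing relation is the identity on configurations augmented by the single pair $(\sigma(s),\sigma(t))$, and the location component $\varphi$ plays no role because both sides decorate each transition with the same location word. For the characteristic parallel laws $P1$--$P7$ and the communication laws $C1$--$C4$ one uses the rules for $\parallel$, $\leftmerge$ and $\mid$: a step $C\xrightarrow[u\diamond v]{\{e_1,e_2\}}C'$ produced on one side is matched on the other by the same step (for the step reading), or by a two-transition factorisation through $e_1$ then $e_2$ — these events being pairwise concurrent, so the two resulting transitions carry independent location words — and one checks that the decoration $u\diamond v$ attached to the merged transition agrees on both sides and that the pair $(u,v)$ added to the association $\varphi$ is consistent, which holds because $\diamond$ is symmetric and the two parallel branches carry independent locations by construction. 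The laws $U1$--$U3$ additionally involve the conflict predicate $\sharp$ and the $\tau$-relabelling in $U1$, $U3$; there one only has to note that after the erasure operation $\hat{(\cdot)}$ the pomsets of the matched transitions remain isomorphic — in fact identical — so the condition $X_1\sim X_2$ is met.

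The part I expect to be the genuine obstacle is the bundle of distribution-over-location laws $L5$--$L10$, of which $L5$ ($u::(x\between y)=u::x\between u::y$) is representative. Unfolding the located-prefix rule $\frac{x\xrightarrow[w]{e}x'}{loc::x\xrightarrow[loc\ll w]{e}loc::x'}$ iteratively along the word $u$ (this iteration being itself justified by $L4$), a transition of $u::(x\between y)$ is obtained from a transition of $x\between y$ decorated by $p\diamond q$ and is then decorated by $u\ll(p\diamond q)$, whereas the corresponding transition of $u::x\between u::y$ is built from $u::x\xrightarrow[u\ll p]{e_1}$ and $u::y\xrightarrow[u\ll q]{e_2}$ and is therefore decorated by $(u\ll p)\diamond(u\ll q)$. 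So the crux is a small word-combinatorics lemma: prefixing every location by a common word $u$ commutes with forming the merged location, i.e. $u\ll(p\diamond q)$ and $(u\ll p)\diamond(u\ll q)$ determine the same entry for the location association, and, more basically, $p\diamond q\Leftrightarrow(u\ll p)\diamond(u\ll q)$, since $u\ll p$ is a prefix of $u\ll q$ exactly when $p$ is a prefix of $q$, so common prefixing preserves both $\ll$ and $\diamond$. Once this lemma about $\ll$, $\diamond$ and common prefixes is isolated and proved, the bisimulations for $L5$--$L10$ (and the re-use of the already-established $L1$--$L4$) are routine, and assembling all the axiom cases completes the proof.
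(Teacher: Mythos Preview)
Your proposal is correct and follows exactly the approach the paper adopts for this and the analogous soundness theorems: reduce to checking each axiom is sound modulo $\sim_p^{sl}$, and invoke that $\sim_p^{sl}$ is an equivalence and a congruence to propagate soundness along derivations. The paper's own treatment is in fact far terser than yours---it states only that ``since static location pomset bisimulation $\sim_p^{sl}$ is both an equivalent and a congruent relation, we only need to check if each axiom \ldots\ is sound'' and then leaves the per-axiom verification as an exercise---so your explicit bisimulation witnesses and the word-combinatorics lemma for $L5$--$L10$ go well beyond what the paper supplies.
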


\begin{theorem}[Completeness of APTC with static localities modulo static location pomset bisimulation equivalence]\label{CPPBE21}
Let $p$ and $q$ be closed APTC with static localities terms, if $p\sim_p^{sl} q$ then $p=q$.
\end{theorem}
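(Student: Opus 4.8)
The plan follows the standard three-step route to a completeness theorem modulo a behavioural equivalence: cut down to basic terms by elimination, refine basic terms to a canonical normal form, and then argue by induction on normal forms that two bisimilar ones are provably equal.

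\textbf{Step 1 (reduction to basic terms).} By the elimination theorem of parallelism (Theorem~\ref{ETParallelism21}) there are basic $APTC^{sl}$ terms $p'$, $q'$ with $APTC^{sl}\vdash p=p'$ and $APTC^{sl}\vdash q=q'$; by soundness (Theorem~\ref{SPPBE21}) then $p'\sim_p^{sl}p\sim_p^{sl}q\sim_p^{sl}q'$, so it suffices to prove $p'\sim_p^{sl}q'\Rightarrow APTC^{sl}\vdash p'=q'$ and transport the resulting equality back along the two derivations.

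\textbf{Step 2 (normal forms).} I would introduce a notion of normal form in which a basic term is a sum $\sum_{i} n_i$ whose summands are $\delta$ (only as a lone summand), an atomic event $e$, a located event $u::e$, a sequential composition $(u::e)\cdot n_i'$ with $n_i'$ in normal form, or a left-merge $n_i^1\leftmerge n_i^2$ of normal forms in the specific shape forced by $P5$--$P7$ (since, as the text already points out, $\leftmerge$ cannot be eliminated). One then shows by structural induction on basic terms that every basic term provably equals a normal form: $A1$--$A5$ handle sums and sequencing, $L1$--$L4$ push location prefixes inward and absorb $\epsilon$, $A6$, $A7$, $L10$ deal with $\delta$, and $P1$--$P9$, $C1$--$C8$, $CE1$--$CE6$, $U1$--$U13$, $L5$--$L9$ rewrite every $\between$, $\parallel$, $\mid$, $\Theta$ and $\triangleleft$ either away or into the permitted left-merge building blocks; orienting these axioms and invoking strong normalisation (Theorem~\ref{SN}) shows the procedure terminates.

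\textbf{Step 3 (induction on normal forms).} Finally I would prove, by induction on the combined size of two normal forms $n$ and $m$, that $n\sim_p^{sl}m$ implies $APTC^{sl}\vdash n=m$, via the usual scheme ``$n+m=m$ and $m+n=n$, hence $n=m$''. For each summand $n_i$ of $n$ one reads off its initial static-location pomset transition $n_i\xrightarrow[u]{X}\surd$ or $n_i\xrightarrow[u]{X}n_i'$ from Tables~\ref{SETRForBATC21} and~\ref{TRForAPTC21}; the bisimulation supplies a matching transition of $m$ with an isomorphic pomset $X$ at some location $v$ with $(u,v)$ admissible for the consistent location association, ending in $m_j'$ with $n_i'\sim_p^{sl}m_j'$; by the induction hypothesis $n_i'=m_j'$, and reconciling the location labels through $L1$--$L10$ and $P5$--$P7$ exhibits $n_i$ as a provable summand of $m$. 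Summing over $i$ gives $n+m=m$, symmetrically $m+n=n$, hence $n=m$ by $A1$, $A3$.

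\textbf{Main obstacle.} The delicate part is the interaction of the left-merge with the location annotations. Because $\leftmerge$ survives in normal forms, Step~2 must genuinely establish that the rewriting into the chosen left-merge shape yields a canonical representative, and Step~3 must show that static-location pomset bisimilarity of two left-merge normal forms forces equality of both factors; this requires carefully unwinding the $u\diamond v$ side-conditions in the parallel rules of Table~\ref{TRForAPTC21} against the consistent location association built into $\sim_p^{sl}$. The purely sequential cases (atomic events, located events, prefixes, sums) are routine adaptations of the completeness proof for $BATC^{sl}$ (Theorem~\ref{CBATCPBE21}).
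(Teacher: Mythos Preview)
The paper does not actually give a proof of this theorem: it sits in the background chapter (Section~\ref{bg}) and is stated without a proof environment, the implicit reference being \cite{LOC1}. The later quantum completeness theorems (e.g.\ Theorems~\ref{CBATC} and the analogue for $qAPTC$ with localities) are ``proved'' only by pointing back to the classical completeness of $BATC^{sl}$/$APTC^{sl}$ in \cite{LOC1}. So there is no in-paper argument to compare your proposal against line by line.

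That said, your three-step plan is exactly the standard route such completeness proofs take in the ACP tradition and in \cite{ATC,LOC1}: elimination to basic terms, reduction to a normal form, then a summand-matching induction establishing $n+m=m$ and $m+n=n$. Your identification of the genuine difficulty --- that $\leftmerge$ survives in basic terms, so the normal form is not a pure sum-of-prefixes and the bisimulation game on a $\leftmerge$-summand must be unwound against the $u\diamond v$ side conditions and the consistent-location-association bookkeeping in $\sim_p^{sl}$ --- is the right thing to flag, and it is precisely where the argument goes beyond the $BATC^{sl}$ case. One small sharpening: in Step~2 you list the $C$-, $CE$- and $U$-axioms as part of the rewriting into normal form, but at this point in the paper the encapsulation operator $\partial_H$ has not yet been added, and for closed terms the communication/conflict operators $\mid$, $\Theta$, $\triangleleft$ are already eliminated by Theorem~\ref{ETParallelism21}; so the normal-form rewriting really only needs $A1$--$A7$, $L1$--$L10$ and the $P$-axioms, which makes the termination argument cleaner.
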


\begin{theorem}[Soundness of APTC with static localities modulo static location step bisimulation equivalence]\label{SPSBE21}
Let $x$ and $y$ be APTC with static localities terms. If $APTC^{sl}\vdash x=y$, then $x\sim_s^{sl} y$.
\end{theorem}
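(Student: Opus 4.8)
The plan is to run the standard soundness argument for an equational process theory. By the congruence theorem for APTC with static localities, the relation $\sim_s^{sl}$ is preserved by all operators of the signature --- $\cdot$, $+$, $\parallel$, $\leftmerge$, $\mid$, $\Theta$, $\triangleleft$, $\between$ and the location prefixing $u::(-)$ --- so it suffices to show that every axiom of $APTC^{sl}$ is sound, i.e. that for each equation $s=t$ among $A1$--$A7$, $L1$--$L10$, $P1$--$P9$, $C1$--$C8$, $CE1$--$CE6$, $U1$--$U13$ and every closed substitution, the two instances are related by $\sim_s^{sl}$. The equivalence-relation laws for $=$ are trivial, and $A1$--$A5$, $L1$--$L4$ are already covered by Theorem~\ref{SBATCSBE21}; so the real content is $A6$, $A7$ together with the parallelism, communication, conflict-elimination, unless and locality-distribution laws of Table~\ref{AxiomsForParallelism21}.

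For each of those axioms the recipe is uniform: exhibit an explicit static location step bisimulation $R_\varphi$ relating the event structures of the two sides --- essentially the identity on the configurations reachable from the common residual, enlarged by the finitely many ``seed'' pairs coming from the left- and right-hand sides of the axiom --- and verify the step-transfer clauses of static location step bisimulation in both directions, using the transition rules of Table~\ref{SETRForBATC21} and Table~\ref{TRForAPTC21}. Two features absent from the purely sequential theory must be tracked: the location labels $\xrightarrow[u]{\,\cdot\,}$ on transitions, and the independence relation $\diamond$ on $Loc^*$ appearing in the rule for $\parallel$ (hence for $\leftmerge$ and $\mid$). When a step $\{e_1,e_2\}$ of one side fires at a location $u\diamond v$, it is matched by the structurally corresponding step of the other side at the same pair $(u,v)$, which keeps the accumulated relation a consistent location association; and the laws $L5$--$L10$ hold because $u::(-)$ merely relocates every transition by iterating the prefixing rule, so it commutes with $\between,\parallel,\mid,\Theta,\triangleleft$ and sends $\delta$ to $\delta$.

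It is worth recording a shortcut. A static location pomset bisimulation is \emph{a fortiori} a static location step bisimulation, since a step is a pomset transition with pairwise-concurrent events and the condition $X_1\sim X_2$ forces $X_2$ to be a discrete pomset whenever $X_1$ is; hence $\sim_p^{sl}\ \subseteq\ \sim_s^{sl}$. Combined with the already-established Theorem~\ref{SPPBE21}, this yields the statement at once: $APTC^{sl}\vdash x=y$ implies $x\sim_p^{sl}y$, and therefore $x\sim_s^{sl}y$. One may still present the direct axiom-by-axiom verification for self-containedness, but this inclusion together with pomset soundness is where the proof really lives.

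The step I expect to be the main obstacle is the group of left-merge axioms $P5$--$P7$ (and, related to them, $P4$ and the commutativity/associativity laws $P2$, $P3$): the transition rules for $\leftmerge$ carry the causality side condition $e_1\leq e_2$, so to prove e.g. $P4$, $x\parallel y = x\leftmerge y + y\leftmerge x$, one must argue that every step $\{e_1,e_2\}$ of $x\parallel y$ is produced by exactly one of the two left-merge summands (determined by which of $e_1\leq e_2$, $e_2\leq e_1$ holds, with the genuinely synchronising case absorbed via $P1$ by the $x\mid y$ summand), and that the residual $x'\between y'$ is reached consistently on both sides, location annotation included. The analogous bookkeeping for $CE5$, $CE6$, $U8$, $U9$, where $\Theta$ and $\triangleleft$ are distributed over $\parallel$ and $\mid$, is the other delicate point, but it follows the same pattern once the merge laws are settled.
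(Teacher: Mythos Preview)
The paper does not actually prove this theorem: it appears in the Backgrounds chapter (Section~2.3.3) as an imported result from \cite{LOC1} and is stated without any proof environment following it. So there is nothing in the paper to compare your argument against for this particular statement.

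Your approach is nonetheless correct and is precisely the pattern the paper adopts when it does prove the analogous soundness theorems for the quantum extensions in Chapter~3 (e.g.\ the soundness of $qAPTC$ with localities): invoke congruence to reduce to axiom-by-axiom verification, then check each axiom against the transition rules. The paper, however, never carries out those checks---it literally writes ``We leave the proof as an exercise for the readers'' at every such point---so your proposal is already considerably more detailed than anything the paper offers. Your shortcut via the inclusion $\sim_p^{sl}\subseteq\sim_s^{sl}$ together with Theorem~\ref{SPPBE21} is valid (a pomset bisimulation is in particular a step bisimulation, since a step is a discrete pomset transition and isomorphism preserves discreteness) and is a cleaner route than the direct verification; the paper never makes this observation explicit, though the ordering of its soundness theorems suggests it is aware of the hierarchy. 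One small wrinkle in your exposition: in discussing $P4$ you mention the ``genuinely synchronising case absorbed via $P1$ by the $x\mid y$ summand,'' but $P4$ concerns only $\parallel$ versus $\leftmerge$ and does not involve $\mid$; the communication merge enters through $P1$, which is a separate axiom. This does not affect the correctness of your overall strategy.
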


\begin{theorem}[Completeness of APTC with static localities modulo static location step bisimulation equivalence]\label{CPSBE21}
Let $p$ and $q$ be closed APTC with static localities terms, if $p\sim_s^{sl} q$ then $p=q$.
\end{theorem}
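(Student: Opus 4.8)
Completeness of $APTC^{sl}$ modulo static location step bisimulation: if $p \sim_s^{sl} q$ for closed terms, then $APTC^{sl}\vdash p=q$.

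The plan is to reduce the general case to the completeness result already established for basic terms, by invoking the elimination theorem for parallelism (Theorem \ref{ETParallelism21}) at the outset. So the first step is: given closed $APTC^{sl}$ terms $p,q$ with $p\sim_s^{sl}q$, apply Theorem \ref{ETParallelism21} to obtain basic terms $p',q'\in\mathcal{B}(APTC^{sl})$ with $APTC^{sl}\vdash p=p'$ and $APTC^{sl}\vdash q=q'$. By soundness (Theorem \ref{SPSBE21}), $p\sim_s^{sl}p'$ and $q\sim_s^{sl}q'$, hence $p'\sim_s^{sl}q'$. It therefore suffices to show that $p'\sim_s^{sl}q'$ implies $APTC^{sl}\vdash p'=q'$ for basic terms $p',q'$, i.e. to prove completeness restricted to $\mathcal{B}(APTC^{sl})$.

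The second step is to put basic terms into a normal form amenable to induction. Note that $\mathcal{B}(APTC^{sl})$ (Definition \ref{BTAPTC21}) is built from $\mathbb{E}$, prefix $\cdot$, sum $+$, the location operator $u::(-)$, and left-merge $\leftmerge$. Using $L1$–$L4$ one can push all location prefixes inward until they sit directly on atomic events, turning $u::e$ into a generalized atomic action; using $P5$–$P9$ (the left-merge axioms) together with $A4$, $A5$, one rewrites $\leftmerge$-subterms so that each left-merge is of the form $(e_1\leftmerge e_2)\cdot t$ or a sum of such, which by the transition rules for $\leftmerge$ behaves exactly like a prefixed step $\{e_1,e_2\}$ (with $e_1\le e_2$) followed by $t$. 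After this normalization every basic term is equal (provably in $APTC^{sl}$) to a sum $\sum_{i\in I}\alpha_i\cdot t_i + \sum_{j\in J}\beta_j$ where each $\alpha_i,\beta_j$ is either a located single event $u::e$ or a located step $u_i::\{e,e'\}$ with the causal side-condition, and each $t_i$ is again in this normal form with strictly smaller size. One should record a matching normal-form lemma, analogous to the basic-term structure used in the $BATC^{sl}$ completeness proof (Theorem \ref{CBATCSBE21}).

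The third step is the induction itself, mirroring the standard ACP-style completeness argument. Proceed by induction on the combined size of $p'$ and $q'$ in normal form. Using $A3$ ($x+x=x$) one may assume the summands are "distinct." For each summand of $p'$, a transition $p'\xrightarrow[u]{\alpha}\surd$ or $p'\xrightarrow[u]{\alpha}t$ must, by the step bisimulation, be matched by $q'$ with the same label $\alpha$, the same location tag $u$ (up to the $\varphi$-bookkeeping, which for closed ground terms starting from $(\emptyset,\emptyset)$ just forces equality of the location words generated), and a residual step-bisimilar to $t$; the induction hypothesis then gives provable equality of the residuals, so every summand of $p'$ is provably absorbed into $q'$, i.e. $APTC^{sl}\vdash q'+p'=q'$, and symmetrically $APTC^{sl}\vdash p'+q'=p'$; combining via $A1$–$A3$ yields $APTC^{sl}\vdash p'=q'$. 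The base cases (single located events, single located steps, $\delta$ via $A6$–$A7$) are handled directly.

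The main obstacle I expect is the interaction between the location machinery and the step transitions in the normalization step — specifically, showing that every $\leftmerge$-term provably reduces to a sum of located-step prefixes while correctly tracking locations through $u\diamond v$ in the transition rules, and verifying that the location annotations on matched transitions are genuinely forced to agree (so that the induction hypothesis applies to residuals sitting at comparable locations). Once that bookkeeping is pinned down, the argument is the routine summand-absorption induction. I would also remark that this completeness proof is the step-bisimulation analogue of the pomset case (Theorem \ref{CPPBE21}), differing only in that matched multi-event transitions consist of pairwise-concurrent events.
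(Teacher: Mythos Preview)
The paper does not actually supply a proof of this theorem: it appears in Section~2 (Backgrounds) as a statement imported from \cite{LOC1}, with no \texttt{proof} environment attached. Even the analogous completeness results in later chapters (e.g.\ for $qAPTC$ with localities) are ``proved'' only by a one-line appeal to the completeness of $APTC^{sl}$ in \cite{LOC1}. So there is no in-paper argument to compare your proposal against.

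That said, your proposal is the standard ACP-style completeness argument and is the right shape for what the cited reference presumably contains: elimination to basic terms via Theorem~\ref{ETParallelism21}, normalization of $\leftmerge$-subterms and location prefixes into sums of located-step prefixes, and then the summand-absorption induction using $A1$--$A3$. Two points worth tightening if you carry this out in full: (i) in the normal form, $\leftmerge$ can nest to arbitrary depth, so the ``located step'' prefixes are in general $u_1::e_1\leftmerge\cdots\leftmerge u_k::e_k$ rather than just pairs --- the transition rules and axioms $P5$--$P8$ handle this, but your normal-form lemma should be stated for arbitrary arity; (ii) the claim that matched transitions force \emph{equal} location words (not merely $\varphi$-related ones) needs a short argument, since the bisimulation definition only requires $(u,v)\in\varphi$ with $\varphi$ a consistent location association --- for closed ground terms starting from the empty cla this does collapse to equality, but it is worth making explicit. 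Neither of these is a gap in the strategy; they are the expected bookkeeping in the localities setting.
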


\begin{theorem}[Soundness of APTC with static localities modulo static location hp-bisimulation equivalence]\label{SPHPBE21}
Let $x$ and $y$ be APTC with static localities terms. If $APTC^{sl}\vdash x=y$, then $x\sim_{hp}^{sl} y$.
\end{theorem}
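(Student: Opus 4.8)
The plan is to reduce the theorem, as usual for a soundness result, to checking each axiom in isolation. Since $\sim_{hp}^{sl}$ is an equivalence relation and, by the congruence theorem for APTC with static localities, a congruence with respect to every operator of $APTC^{sl}$, it suffices to show that for each axiom $s=t$ appearing in Table~\ref{AxiomsForParallelism21} --- together with the $BATC^{sl}$ axioms $A1$--$A5$, $L1$--$L4$, whose soundness modulo $\sim_{hp}^{sl}$ is already available via Theorem~\ref{SBATCHPBE21} --- one has $s\sim_{hp}^{sl}t$. The theorem then follows by induction on the length of a derivation of $x=y$ from $APTC^{sl}$: reflexivity handles the base case, symmetry and transitivity close the inductive step when $x=y$ is obtained by composing smaller derivations, and the congruence property handles the step in which an axiom instance is applied inside a context.

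For a single axiom instance $s=t$ the recipe is: define a posetal relation $R_\varphi\subseteq\mathcal{C}(\mathcal{E}_s)\overline{\times}\mathcal{C}(\mathcal{E}_t)$ with $(\emptyset,\emptyset,\emptyset)\in R_\varphi$, taken as the relation generated by the obvious identification of event occurrences of the left-hand side with those of the right-hand side and closed under the transition rules; then verify the two transfer clauses in the definition of a static location hp-bisimulation. Concretely, given $(C_1,f,C_2)\in R_\varphi$ and $C_1\xrightarrow[u]{e_1}C_1'$, one reads off the matching move $C_2\xrightarrow[v]{e_2}C_2'$ from the rules in Tables~\ref{SETRForBATC21} and~\ref{TRForAPTC21}, checks $\lambda(e_1)=\lambda(e_2)$, checks $(C_1',f[e_1\mapsto e_2],C_2')\in R_{\varphi\cup\{(u,v)\}}$, and checks that the accumulated set of location pairs remains a consistent location association; the symmetric clause for moves of $C_2$ is handled identically. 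The locality axioms $L5$--$L10$ are where the location tags genuinely do work: for instance in $L6$, $u::(x\parallel y)=u::x\parallel u::y$, a parallel step of the left-hand side whose location datum is built from $u$ and independent sublocations $w_1\diamond w_2$ of the two branches is matched by the step of the right-hand side built from $uw_1$ and $uw_2$, and one uses $w_1\diamond w_2\Leftrightarrow uw_1\diamond uw_2$ (independence of locations is invariant under a common prefix) to keep $\varphi$ consistent.

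The axioms $A6$, $A7$, $P1$, $P2$, $P8$, $P9$, $C1$--$C8$, $CE1$--$CE6$, $U1$--$U13$ and $L10$ reduce, after the obvious choice of $R_\varphi$, to a bounded case analysis over the applicable transition rules, exactly mirroring the classical APTC soundness proof but with the moves now carrying location labels, and I would dispatch these briefly. The two delicate points are: (i) the associativity and left-merge axioms $P3$, $P4$--$P7$, where for parallel compositions of three or more components $R_\varphi$ must be described as the set of triples whose two configurations agree up to reassociation and whose isomorphism is the induced identification of events, and where one must check that the $u\diamond v$ side conditions guarding the parallel, left-merge and communication rules are stable under reassociation of $Loc^*$-independence; and (ii) the conflict-elimination and unless axioms $CE3$, $CE5$, $CE6$, $U1$--$U3$, whose transition rules carry negative premises $y\nrightarrow^{e}$ and conflict conditions $\sharp(e_1,e_2)$, so that matching a move also requires transferring the \emph{absence} of competing moves, and where --- because hp-bisimulation additionally tracks the event isomorphism $f$ --- one must also record how $f$ identifies the $\tau$-events introduced by $U1$ and $U3$ with their counterparts. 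I expect (ii), reconciling the negative premises and conflict side conditions with the history-tracking component of hp-bisimulation, to be the main obstacle; the $P3$ bookkeeping in (i) is lengthy but routine once independence invariance under reassociation is noted.
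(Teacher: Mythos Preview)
Your proposal is correct and follows the same high-level strategy the paper adopts for its soundness results: use that $\sim_{hp}^{sl}$ is an equivalence and a congruence to reduce the claim to verifying each axiom individually, then induct on derivation length. Note, however, that the specific theorem labelled \texttt{SPHPBE21} sits in the Backgrounds section and is stated \emph{without} proof (it is imported from \cite{LOC1}); the closest the paper comes to a proof is in the analogous $qAPTC^{sl}$ soundness theorem in Section~\ref{qaptcl2}, where the argument is literally ``Since static location hp-bisimulation $\sim_{hp}^{sl}$ is both an equivalent and a congruent relation, we only need to check if each axiom \ldots\ is sound modulo static location hp-bisimulation equivalence. We leave the proof as an exercise for the readers.''

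So your plan and the paper's are the same at the top level, but you go substantially further: you spell out the recipe for building the witnessing posetal relation $R_\varphi$ for a single axiom, you single out the locality axioms $L5$--$L10$ and the independence-under-prefix fact needed for $L6$, and you flag the two places (reassociation in $P3$--$P7$; negative premises and the history component $f$ in the $CE$/$U$ group) where the hp-specific bookkeeping is nontrivial. None of this detail appears in the paper. What the paper's brevity buys is uniformity --- the same one-line reduction is repeated verbatim for $\sim_p^{sl}$, $\sim_s^{sl}$, $\sim_{hp}^{sl}$, $\sim_{hhp}^{sl}$ --- whereas your version actually engages with what distinguishes the hp case.
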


\begin{theorem}[Completeness of APTC with static localities modulo static location hp-bisimulation equivalence]\label{CPHPBE21}
Let $p$ and $q$ be closed APTC with static localities terms, if $p\sim_{hp}^{sl} q$ then $p=q$.
\end{theorem}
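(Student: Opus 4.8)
\emph{Proof idea.} The plan is to run the usual completeness scheme adapted to the location- and history-preserving setting: reduce to basic terms, put them in a normal form, and then show by induction on term structure that hp-bisimilarity of two normal forms already forces provable equality.

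First I would invoke the elimination theorem of parallelism (Theorem~\ref{ETParallelism21}): every closed $APTC^{sl}$ term is provably equal to a basic term in $\mathcal{B}(APTC^{sl})$ (Definition~\ref{BTAPTC21}), so by soundness (Theorem~\ref{SPHPBE21}) it suffices to treat $p,q\in\mathcal{B}(APTC^{sl})$. Next I would push the locality operators down to the atomic events with $L1$--$L10$, delete $\delta$-summands and degenerate left merges with $A6$, $A7$, $P8$, $P9$ and the $U$-laws, and collapse every left merge whose two leading atomic events are causally ordered via $P5$--$P7$. This yields a \emph{normal form}: a sum $\sum_{i\in I}u_i::t_i$ (read as $\delta$ when $I=\emptyset$) in which each $t_i$ is either an atomic event $e_i$, or $e_i\cdot t_i'$ with $t_i'$ again a normal form, or a left merge $t_i^{1}\leftmerge t_i^{2}$ of normal forms whose leading events are not causally related (so that none of $P5$--$P9$ applies). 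Every basic term is provably equal to such a normal form.

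The core is the claim: if $n,n'$ are normal forms with $n\sim_{hp}^{sl}n'$, then $n=n'$ is derivable, using only $A1$--$A3$ to reorder and absorb summands. I would prove this by induction on the size of $n$. A summand $u_i::t_i$ contributes an initial static-location transition of $n$ — $\emptyset\xrightarrow[w]{e}C$ from the prefixing/locality rules of Table~\ref{SETRForBATC21}, or a step $\emptyset\xrightarrow[w]{\{e,e'\}}C$ from the $\leftmerge$ rules of Table~\ref{TRForAPTC21} in the left-merge case. The hp-bisimulation $R_\varphi$ must answer with a transition $\emptyset\xrightarrow[w']{e}C'$ (resp.\ a step) with the extended triple in $R_{\varphi\cup\{(w,w')\}}$; since $n,n'$ are closed and hence carry concrete events, matching of the event isomorphism $f$ forces the labels to agree, and the consistent-location-association condition together with $L4$/$L10$ forces the corresponding syntactic locations to agree. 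Applying the definition of hp-bisimulation to the residuals, $t_i$ (suitably completed) is hp-bisimilar to a strictly smaller residual of $n'$, so the induction hypothesis gives provable equality of residuals; feeding this back through $A4$, $A5$ (and $P7$ in the sequential-after-left-merge case) absorbs $u_i::t_i$ into $n'$. By symmetry and $A3$ we get $n=n'$.

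The step I expect to be the main obstacle is the left-merge case of the normal form: a summand $u::(t^{1}\leftmerge t^{2})$ whose only moves are the "step" rules of Table~\ref{TRForAPTC21}, firing the two $\leq$-related leading events simultaneously at independent locations ($u\diamond v$), and then landing in $x\between y$. One must check (i) that an hp-bisimulation genuinely tells these summands apart from ordinary $+$/$\cdot$ summands and can only match them by a syntactically similar left merge, and (ii) that the residual is provably of the form $x\between y$ via $P7$, so the induction can continue; managing the $\diamond$-bookkeeping and the $f$-image bookkeeping simultaneously is the delicate point. The remaining cases (atomic event, prefixed term, pure sum) are routine and run exactly as in the completeness proofs modulo pomset and step bisimulation (Theorems~\ref{CPPBE21}, \ref{CPSBE21}). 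Assembling: from $p\sim_{hp}^{sl}q$ we get normal forms $n,n'$ with $p=n$, $q=n'$; soundness (Theorem~\ref{SPHPBE21}) gives $n\sim_{hp}^{sl}n'$; the claim gives $n=n'$; hence $p=q$.
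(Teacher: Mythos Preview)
The paper does not actually prove this theorem: it is stated in the Backgrounds section (Section~2.3.3) as a result imported from \cite{LOC1}, with no accompanying proof environment. The analogous completeness results that \emph{are} proved later in the paper (e.g.\ for $qAPTC^{sl}$) simply reduce to the classical $APTC^{sl}$ completeness by observing that the quantum bisimulation implies the underlying classical one and then invoking \cite{LOC1}; so there is no in-paper argument to compare against here.

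That said, your sketch is the standard elimination/normal-form/induction scheme that one expects \cite{LOC1} to contain, and the outline is sound. Two remarks on the part you flag as delicate. First, your normal form still allows nested $\leftmerge$ under a location prefix; to make the induction go through cleanly you will want the leading events of $t^1\leftmerge t^2$ to be genuinely incomparable (so that neither $P5$--$P7$ nor a further decomposition via $P4$ applies), and you should be explicit that the residual after the joint step is $x'\between y'$ and hence, by the axioms $P1$--$P4$, again provably equal to a basic term so that the size strictly drops. Second, the claim that the cla condition ``forces the corresponding syntactic locations to agree'' is too strong as stated: a cla only enforces that independence patterns match, not that the location strings are identical. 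What you actually need is that in a normal form the location attached to each summand is determined up to the equivalence induced by $L1$--$L4$, and that hp-bisimilar summands carry cla-related locations which, after normalisation of the $u::$ prefixes, become syntactically equal. This is routine but should be spelled out rather than asserted.
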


\begin{theorem}[Soundness of APTC with static localities modulo static location hhp-bisimulation equivalence]\label{SPHPBE21}
Let $x$ and $y$ be APTC with static localities terms. If $APTC^{sl}\vdash x=y$, then $x\sim_{hhp}^{sl} y$.
\end{theorem}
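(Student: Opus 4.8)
The plan is to follow the standard soundness pattern used for all the preceding soundness theorems in this excerpt (e.g. Theorems \ref{SPPBE21}, \ref{SPSBE21}, \ref{SPHPBE21}): show that each axiom of $APTC^{sl}$, read as an equation between open terms, is sound modulo $\sim_{hhp}^{sl}$, and then invoke the fact that $\sim_{hhp}^{sl}$ is a congruence with respect to all operators of $APTC^{sl}$ (the Congruence theorem of APTC with static localities). Congruence lets us close equational reasoning under contexts, so soundness of the individual axioms together with reflexivity, symmetry and transitivity of $\sim_{hhp}^{sl}$ yields that $APTC^{sl}\vdash x=y$ implies $x\sim_{hhp}^{sl}y$ for arbitrary (possibly open) terms, where for open terms one argues via arbitrary closed substitution instances.

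First I would recall that, since hhp-bisimulation is by definition a downward-closed hp-bisimulation, every witnessing relation built for the hp-case (Theorem \ref{SPHPBE21}) needs only to be checked for downward closure, or replaced by its downward closure, to serve here; so the bulk of the work is inherited. Then I would go axiom by axiom. For the axioms already present in $BATC^{sl}$ and for the locality axioms $L1$–$L4$, soundness modulo $\sim_{hhp}^{sl}$ is covered by Theorem \ref{CBATCHHPBE21}'s accompanying soundness statement (Theorem \ref{SBATCHHPBE21}); for the genuinely new axioms $A6$, $A7$, $P1$–$P9$, $C1$–$C8$, $CE1$–$CE6$, $U1$–$U13$, $L5$–$L10$, the task is, for each, to exhibit a static location hhp-bisimulation relating the configurations of the left- and right-hand sides. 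The natural candidate in each case is the ``syntactic'' posetal relation $R_\varphi=\{(C_1,f,C_2)\}$ generated by matching the events of the two sides according to the axiom (identity on shared events, and the evident bijection induced by $\gamma$, by $\leftmerge$-interleaving, etc.), together with all its downward-closed subtriples; one checks that the transition rules of Tables \ref{SETRForBATC21} and \ref{TRForAPTC21} make this relation closed under the forth and back conditions, that the location labels $u,v$ match up under the cla $\varphi$ (using the fact that $\diamond$ is preserved, which is exactly the consistent-location-association condition), and that $f[e_1\mapsto e_2]$ stays an isomorphism of the underlying pomsets.

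The main obstacle is the downward-closure requirement, which is precisely what distinguishes hhp- from hp-bisimulation and is known to be delicate for the parallel composition and the auxiliary operators. Concretely, for axioms like $P3$ (associativity of $\parallel$), $P7$, $CE5$, $CE6$ and $U8$–$U13$, I would need to verify that shrinking a matched configuration triple to a sub-triple still yields a valid matched pair on both sides — i.e. that the event-correspondence is defined ``locally'' and does not secretly depend on the presence of later events — and that no backtracking of the bisimulation game can be blocked. I expect this to go through because the transition rules here are in a tyft/tyxt-like, path-deterministic format and the location annotations are computed compositionally ($u\diamond v$, $loc\ll u$, $uv$), so the matching is genuinely history-local; but writing out the downward-closure check for $\leftmerge$ and for the unless operator $\triangleleft$ (whose rules have negative premises $y\nrightarrow^{e}$) is the place where care is required, since one must ensure the negative premises are respected along sub-configurations as well. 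Once these checks are in place for every axiom, the congruence theorem closes the argument.
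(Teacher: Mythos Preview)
Your proposal is correct and follows essentially the same approach the paper uses for all its soundness theorems: invoke the congruence theorem for $\sim_{hhp}^{sl}$ and then reduce the problem to checking that each axiom in Tables \ref{AxiomsForBATC21} and \ref{AxiomsForParallelism21} is individually sound modulo static location hhp-bisimulation. In fact you go considerably further than the paper, which in the analogous proofs (e.g.\ for $qAPTC^{sl}$ in section \ref{qaptcl2}) simply states this reduction and leaves the per-axiom verifications as exercises, whereas you correctly isolate the downward-closure check for $\leftmerge$ and $\triangleleft$ as the only nontrivial point distinguishing the hhp-case from the hp-case already handled.
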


\begin{theorem}[Completeness of APTC with static localities modulo static location hhp-bisimulation equivalence]\label{CPHPBE21}
Let $p$ and $q$ be closed APTC with static localities terms, if $p\sim_{hhp}^{sl} q$ then $p=q$.
\end{theorem}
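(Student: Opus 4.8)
The plan is to reduce to basic terms and then exploit the inclusions among the static-location truly concurrent bisimulation equivalences. First, by the elimination theorem of parallelism (Theorem~\ref{ETParallelism21}), every closed $APTC^{sl}$ term is provably equal to a basic term in $\mathcal{B}(APTC^{sl})$, so by soundness it suffices to prove the claim for $p,q\in\mathcal{B}(APTC^{sl})$. I would then bring basic terms into a normal form by pushing the location prefixes $u::$ inward using $L1$--$L10$ and distributing $+$ over $\cdot$, $\leftmerge$ and $\mid$ using $A4$, $P8$, $C5$, $C6$, so that each normal form is a sum $\sum_i u_i::t_i$ in which every $t_i$ is a located atomic event, a product $e\cdot t'$, or a left-merge $t'\leftmerge t''$ of strictly smaller normal forms --- the same normal-form shape used for the coarser equivalences.

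The key observation is that a static location hhp-bisimulation is, by definition, a downward-closed static location hp-bisimulation, and hence in particular a static location hp-bisimulation; therefore $\sim_{hhp}^{sl}\ \subseteq\ \sim_{hp}^{sl}$. Consequently, if $p\sim_{hhp}^{sl}q$ then also $p\sim_{hp}^{sl}q$, and completeness of $APTC^{sl}$ modulo static location hp-bisimulation equivalence (Theorem~\ref{CPHPBE21}) gives $p=q$. In short, completeness with respect to the finer equivalence is an immediate corollary of completeness with respect to the coarser one, exactly as $BATC^{sl}$-completeness modulo $\sim_{hhp}^{sl}$ (Theorem~\ref{CBATCHHPBE21}) follows from $BATC^{sl}$-completeness modulo $\sim_{hp}^{sl}$ (Theorem~\ref{CBATCHPBE21}).

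For a self-contained argument one can instead proceed directly, by induction on the sum of the sizes of the normal forms of $p$ and $q$: match each summand of $p$ with a summand of $q$ by tracing single-event transitions $\xrightarrow[u]{e}$, apply the induction hypothesis together with the soundness direction (the hhp-soundness theorem) to identify each matched pair, and close under $A1$--$A3$. Carried out this way, the main obstacle is not the matching of transitions but the bookkeeping of the extra data carried by an hhp-bisimulation: one must check that the witnessing posetal relation is genuinely \emph{downward closed} --- i.e.\ remains a relation when configuration triples $(C_1,f,C_2)$ are shrunk pointwise --- and that the location association $\varphi$ stays a consistent location association under these restrictions. The delicate case is the left-merge operator $\leftmerge$, whose transition rules impose the independence constraint $u\diamond v$ on the locations of the parallel components, so one must verify that restricting a configuration cannot create a location pair violating the $u\diamond u'\Leftrightarrow v\diamond v'$ requirement. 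Given Theorem~\ref{CPHPBE21}, however, the shortcut in the previous paragraph is the clean route.
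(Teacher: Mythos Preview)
The paper does not actually give a proof of this theorem: it appears in the Backgrounds chapter as a result imported from \cite{LOC1}, stated without any argument. So there is no paper proof to compare against; the relevant question is whether your proposal stands on its own.

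Your central observation is correct and is the standard route: a static location hhp-bisimulation is by definition a downward-closed static location hp-bisimulation, hence in particular an hp-bisimulation, so $\sim_{hhp}^{sl}\subseteq\sim_{hp}^{sl}$, and completeness modulo the finer equivalence follows immediately from completeness modulo the coarser one. That one line is the whole proof; the elimination-to-basic-terms preamble and the direct induction you sketch in the final paragraph are not needed for this reduction and only matter if you also want to reprove hp-completeness from scratch. Two small remarks: (i) the label \texttt{CPHPBE21} is used twice in the paper, once for hp-completeness and once for the present hhp-completeness theorem, so when you cite it for the hp case be aware the reference is ambiguous in the source; (ii) in your normalization step you speak of ``distributing $+$ over $\cdot$, $\leftmerge$ and $\mid$'', but axioms $A4$, $P8$, $C5$, $C6$ actually distribute those operators over $+$, not the reverse --- a harmless slip here since, as noted, the normalization is superfluous once you invoke the inclusion of equivalences.
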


The transition rules of encapsulation operator $\partial_H$ are shown in Table \ref{TRForEncapsulation21}.

\begin{center}
    \begin{table}
        $$\frac{x\xrightarrow[u]{e}\surd}{\partial_H(x)\xrightarrow[u]{e}\surd}\quad (e\notin H)\quad\quad\frac{x\xrightarrow[u]{e}x'}{\partial_H(x)\xrightarrow[u]{e}\partial_H(x')}\quad(e\notin H)$$
        \caption{Transition rules of encapsulation operator $\partial_H$}
        \label{TRForEncapsulation21}
    \end{table}
\end{center}

Based on the transition rules for encapsulation operator $\partial_H$ in Table \ref{TRForEncapsulation21}, we design the axioms as Table \ref{AxiomsForEncapsulation21} shows.

\begin{center}
    \begin{table}
        \begin{tabular}{@{}ll@{}}
            \hline No. &Axiom\\
            $D1$ & $e\notin H\quad\partial_H(e) = e$\\
            $D2$ & $e\in H\quad \partial_H(e) = \delta$\\
            $D3$ & $\partial_H(\delta) = \delta$\\
            $D4$ & $\partial_H(x+ y) = \partial_H(x)+\partial_H(y)$\\
            $D5$ & $\partial_H(x\cdot y) = \partial_H(x)\cdot\partial_H(y)$\\
            $D6$ & $\partial_H(x\leftmerge y) = \partial_H(x)\leftmerge\partial_H(y)$\\
            $L11$ & $u::\partial_H(x) = \partial_H(u::x)$\\
        \end{tabular}
        \caption{Axioms of encapsulation operator}
        \label{AxiomsForEncapsulation21}
    \end{table}
\end{center}

\begin{theorem}[Congruence theorem of encapsulation operator $\partial_H$]
Static location truly concurrent bisimulation equivalences $\sim_p^{sl}$, $\sim_s^{sl}$, $\sim_{hp}^{sl}$ and $\sim_{hhp}^{sl}$ are all congruences with respect to encapsulation
operator $\partial_H$.
\end{theorem}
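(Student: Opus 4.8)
The plan is to handle the four equivalences $\sim_p^{sl}$, $\sim_s^{sl}$, $\sim_{hp}^{sl}$, $\sim_{hhp}^{sl}$ uniformly, showing in each case that related processes stay related after applying $\partial_H$; together with the congruence results already established for the other operators this gives that each equivalence is a congruence for the signature extended by $\partial_H$. One could observe that the two rules of Table~\ref{TRForEncapsulation21} lie within the (static-location) path format for which congruence of all these truly concurrent bisimulations is automatic; instead I will give the explicit lifting of a witnessing bisimulation, which is short. The one structural feature of $\partial_H$ that drives every case is that it merely \emph{restricts} which events may fire (those labelled outside $H$), without renaming, merging, or reordering events; hence $\partial_H$ preserves and reflects causality, conflict, configurations and their inclusions, and $\partial_H(C)\downarrow$ iff $C\downarrow$. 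In what follows I write $\partial_H(C)$ for the configuration of $\partial_H(x)$ that fires exactly the (necessarily non-$H$) events recorded in $C$.

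For static location pomset bisimulation, let $R_\varphi$ be a static location pomset bisimulation between the event structures of $x$ and $x'$ with $(\emptyset,\emptyset)\in R_\varphi$, and set
$$R'_\varphi=\{(\partial_H(C_1),\partial_H(C_2)):(C_1,C_2)\in R_\varphi\}.$$
If $\partial_H(C_1)\xrightarrow[u]{X_1}\partial_H(C_1')$, then by Table~\ref{TRForEncapsulation21} this transition exists only because $C_1\xrightarrow[u]{X_1}C_1'$ with $X_1$ meeting no action of $H$; matching in $R_\varphi$ gives $C_2\xrightarrow[v]{X_2}C_2'$ with $X_1\sim X_2$ and $(C_1',C_2')\in R_{\varphi\cup\{(u,v)\}}$, and since $X_2\sim X_1$ also meets no action of $H$ the same rules give $\partial_H(C_2)\xrightarrow[v]{X_2}\partial_H(C_2')$, so $(\partial_H(C_1'),\partial_H(C_2'))\in R'_{\varphi\cup\{(u,v)\}}$. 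The symmetric clause is identical, and the termination clauses follow from $\partial_H(C)\downarrow\Leftrightarrow C\downarrow$. The step case is the same, the pairwise-concurrency side condition on $X_1$ being untouched by $\partial_H$.

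For static location hp-bisimulation, start from a static location hp-bisimulation $R_\varphi$ with $(\emptyset,\emptyset,\emptyset)\in R_\varphi$ and put $R'_\varphi=\{(\partial_H(C_1),f,\partial_H(C_2)):(C_1,f,C_2)\in R_\varphi\}$, where $f$ (restricted to the surviving events) is still a pomset isomorphism because $\partial_H$ preserves the order and labelling. A move $\partial_H(C_1)\xrightarrow[u]{e_1}\partial_H(C_1')$ forces $C_1\xrightarrow[u]{e_1}C_1'$ with $e_1$ labelled outside $H$; the matching move $C_2\xrightarrow[v]{e_2}C_2'$ has $e_2$ carrying the same action, hence also outside $H$, so $\partial_H(C_2)\xrightarrow[v]{e_2}\partial_H(C_2')$ and $(C_1',f[e_1\mapsto e_2],C_2')\in R_{\varphi\cup\{(u,v)\}}$ yields the required pair in $R'_{\varphi\cup\{(u,v)\}}$. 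For hhp-bisimulation one uses the same $R'_\varphi$ and must additionally check it is downward closed: if $(\partial_H(D_1),g,\partial_H(D_2))\in R'_\varphi$ and $(\partial_H(C_1),f,\partial_H(C_2))\subseteq(\partial_H(D_1),g,\partial_H(D_2))$ pointwise, then since $\partial_H$ reflects inclusions of configurations we get $(C_1,f,C_2)\subseteq(D_1,g,D_2)$ pointwise, whence $(C_1,f,C_2)\in R_\varphi$ by downward-closedness of $R_\varphi$, and therefore $(\partial_H(C_1),f,\partial_H(C_2))\in R'_\varphi$.

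I expect the hhp-case to be the only point needing care: one must be confident that the pointwise order among triples of $\partial_H$-images coincides, under $C\mapsto\partial_H(C)$, with the pointwise order among the underlying triples — i.e. that $\partial_H$ neither creates nor destroys sub-configuration relationships — and this is exactly where the fact that $\partial_H$ only deletes events (rather than renaming or identifying them) is used. A secondary bookkeeping point is that the side condition ``$e\notin H$'' is transported along matched events, for which one uses that the isomorphism $f$ and the pomset-matching in the step/pomset clauses are label-preserving, so an event fires under $\partial_H$ in one process iff its partner does in the other. Everything else parallels the congruence arguments already carried out for $+,\cdot,\parallel,\mid,\Theta,\triangleleft$ and is routine.
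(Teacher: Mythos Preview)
Your proof is correct and considerably more detailed than anything the paper provides. The statement itself appears in the paper's background section (Section~2.3.3) without proof; it is quoted from the prior work \cite{LOC1}. For the analogous congruence theorems that the paper does prove (e.g.\ the congruence theorem for $qAPTC$ with localities, which covers $\partial_H$ along with the other operators), the paper's argument is simply to note that the equivalences are equivalence relations, state that one must check preservation under each operator, and then declare ``the proof is quite trivial, and we leave the proof as an exercise for the readers.'' Your explicit lifting of the witnessing bisimulation --- in particular the careful downward-closedness check for the hhp case, using that $\partial_H$ reflects configuration inclusions because it only deletes events --- is exactly the content such an exercise would require and is a genuinely different level of detail from the paper's one-line deferral.
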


\begin{theorem}[Elimination theorem of APTC with static localities]\label{ETEncapsulation21}
Let $p$ be a closed APTC with static localities term including the encapsulation operator $\partial_H$. Then there is a basic APTC with static localities term $q$ such that
$APTC\vdash p=q$.
\end{theorem}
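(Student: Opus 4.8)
The plan is to mimic the proof of Theorem~\ref{ETParallelism21}, extending the underlying term rewriting system with rules for $\partial_H$. First I would turn the axioms $D1$--$D6$ together with $L11$ of Table~\ref{AxiomsForEncapsulation21} into rewrite rules, reading $D1$--$D6$ from left to right but using $L11$ in the direction $\partial_H(u::x)\rightarrow u::\partial_H(x)$, so that every occurrence of $\partial_H$ is driven strictly inward: $\partial_H(e)\rightarrow e$ for $e\notin H$, $\partial_H(e)\rightarrow\delta$ for $e\in H$, $\partial_H(\delta)\rightarrow\delta$, $\partial_H(x+y)\rightarrow\partial_H(x)+\partial_H(y)$, $\partial_H(x\cdot y)\rightarrow\partial_H(x)\cdot\partial_H(y)$, $\partial_H(x\leftmerge y)\rightarrow\partial_H(x)\leftmerge\partial_H(y)$ and $\partial_H(u::x)\rightarrow u::\partial_H(x)$. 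Adjoining these to the TRS used for Theorem~\ref{ETParallelism21} gives the system I would work with.

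Next I would prove strong normalization via Theorem~\ref{SN}. It suffices to extend the well-founded precedence on operators used for Theorem~\ref{ETParallelism21} by placing $\partial_H$ strictly above $+$, $\cdot$, $\leftmerge$, $::$, $\delta$ and every atomic event, leaving the relative order of all other symbols untouched. With this precedence each of the seven new rules decreases in the lexicographic path order: for $D1$--$D3$ the right-hand side is either a subterm of the left-hand side or the constant $\delta$, while for $D4$--$D6$ and $L11$ the right-hand side is obtained by moving the largest symbol $\partial_H$ inside a strictly smaller one, which is exactly the pattern the path order handles. Since $\partial_H$ occurs in no right-hand side of the old rules, the decreases verified for those rules in the proof of Theorem~\ref{ETParallelism21} are unaffected, so $s>_{lpo}t$ holds for every rule of the combined TRS and the system is strongly normalizing; in particular every closed term has a normal form.

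Then I would characterize closed normal forms by induction on term structure, the only case beyond what Theorem~\ref{ETParallelism21} already settles being a term whose head symbol is $\partial_H$. Suppose $q_0=\partial_H(r)$ is a closed normal form; then $r$ is itself a closed normal form, so by the induction hypothesis $r$ is a basic term, hence by Definition~\ref{BTAPTC21} of one of the forms $e$, $\delta$, $u::r'$, $e\cdot r'$, $r_1+r_2$ or $r_1\leftmerge r_2$. In each of these cases one of $D1$/$D2$, $D3$, $L11$, $D5$, $D4$, $D6$ applies to $q_0$, contradicting normality. Therefore no closed normal form contains $\partial_H$, so every closed normal form is a basic term. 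Finally, given an arbitrary closed APTC with static localities term $p$ (possibly containing $\partial_H$), I would rewrite $p$ to a normal form $q$; every rewrite step is a substitution instance of an axiom, so $APTC^{sl}\vdash p=q$, and $q\in\mathcal{B}(APTC^{sl})$, which is the claim.

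The step I expect to be the main obstacle is the strong-normalization argument, and precisely the orientation of $L11$ and the placement of $\partial_H$ in the precedence. Using $L11$ left to right, as $u::\partial_H(x)\rightarrow\partial_H(u::x)$, would pull $\partial_H$ back out through $::$ and clash with $D4$--$D6$, which push $\partial_H$ in, and it would also leave terms such as $\partial_H(u::e)$ stuck yet not basic; orienting $L11$ the other way removes both difficulties at once. One must then only check that this orientation interacts harmlessly with the localization rules $L1$--$L4$ and $L5$--$L10$ and that $\partial_H\gg{::}$ is compatible with the decreases already established there — routine bookkeeping rather than a conceptual hurdle.
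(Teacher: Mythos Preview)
Your proposal is correct and follows the standard term-rewriting approach to elimination theorems in ACP-style process algebra. However, the paper itself does not give a proof of this statement: Theorem~\ref{ETEncapsulation21} sits in the Backgrounds chapter, where all results about $APTC^{sl}$ are stated without argument and attributed to \cite{LOC1}; the quantum analogues later in the paper likewise say only ``The same as that of $APTC^{sl}$, we omit the proof, please refer to \cite{LOC1} for details.'' So there is no in-paper proof to compare against.

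That said, your argument is exactly the kind the cited source would contain, and your attention to the orientation of $L11$ is the right technical point. One small cosmetic remark: in your normal-form case analysis you list $\delta$ among the shapes a basic term can take, but Definition~\ref{BTAPTC21} does not mention $\delta$ explicitly. This is harmless --- either $\delta$ is absorbed into $\mathbb{E}\cup\{\delta\}$ as the surrounding text suggests, or any residual $\delta$ produced by $D2$/$D3$ is eliminated by the $A6$/$A7$ rules already present in the TRS of Theorem~\ref{ETParallelism21} --- but it is worth saying which convention you are using.
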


\begin{theorem}[Soundness of APTC with static localities modulo static location pomset bisimulation equivalence]\label{SAPTCPBE21}
Let $x$ and $y$ be APTC with static localities terms including encapsulation operator $\partial_H$. If $APTC^{sl}\vdash x=y$, then $x\sim_p^{sl} y$.
\end{theorem}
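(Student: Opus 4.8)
The plan is to follow the standard route for soundness of an equational theory modulo a behavioural equivalence. Since the congruence theorem for $APTC^{sl}$ together with the congruence theorem for the encapsulation operator $\partial_H$ guarantees that $\sim_p^{sl}$ is a congruence with respect to every operator of $APTC^{sl}$ extended by $\partial_H$, and $\sim_p^{sl}$ is moreover an equivalence relation, it suffices to prove that every axiom of $APTC^{sl}$ (including those for $\partial_H$) is \emph{sound}, i.e. that each closed substitution instance $\sigma(s)=\sigma(t)$ of an axiom $s=t$ satisfies $\sigma(s)\sim_p^{sl}\sigma(t)$. A derivation of $APTC^{sl}\vdash x=y$ is a finite combination of such instances via reflexivity, symmetry, transitivity and context closure; an induction on the length of that derivation then upgrades soundness of the individual axioms to $x\sim_p^{sl}y$.

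For every axiom already present in $APTC^{sl}$ without $\partial_H$ --- that is, $A1$--$A7$, $L1$--$L10$, $P1$--$P9$, $C1$--$C8$, $CE1$--$CE6$ and $U1$--$U13$ --- soundness is exactly Theorem \ref{SPPBE21}, so nothing new is required there. Hence the work reduces to the seven axioms $D1$--$D6$ and $L11$. For each I would exhibit an explicit witnessing relation $R_\varphi$, taking $\varphi$ to be the diagonal cla (pairing each location with itself, which trivially meets the consistency requirement), and check the two transfer clauses of a static location pomset bisimulation against the rules of Table \ref{TRForEncapsulation21} and the rules for the operators occurring under $\partial_H$. For $D1$--$D3$ the relation $\{(\partial_H(e),e)\mid e\notin H\}\cup\{(\partial_H(e),\delta)\mid e\in H\}\cup\{(\partial_H(\delta),\delta)\}\cup\mathrm{Id}$ works: a transition $\partial_H(e)\xrightarrow[u]{e}\surd$ exists precisely when $e\notin H$ and is then matched by $e\xrightarrow[u]{e}\surd$ at the same location $u$, while $\partial_H(\delta)$ and $\delta$ are both inert. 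For the distributive axioms $D4$--$D6$ and $L11$ I would take $R$ to be generated by all pairs $\bigl(\partial_H(s)\mathbin{\star}\partial_H(t),\ \partial_H(s\mathbin{\star}t)\bigr)$ with $\star\in\{+,\cdot,\leftmerge\}$, respectively $\bigl(u::\partial_H(s),\ \partial_H(u::s)\bigr)$, closed under reachable residuals and unioned with the identity; a case analysis on which sub-rule fired then shows that every move of the left term, carrying some pomset label $X$ at some location $u$ (a location of the shape $u\diamond v$ in the $\leftmerge$ case), is answered by the identically labelled move of the right term and conversely, with the resulting pair back in $R$ and the cla extended consistently --- the latter being immediate since the two sides always exhibit literally the same locations.

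The main obstacle, and the only place where the locality apparatus must really be argued rather than copied from the classical ACP soundness proof, is the interplay of $\partial_H$ with $\leftmerge$ in $D6$ (and, to a lesser extent, the location-pushing axiom $L11$). A transition of $\partial_H(x\leftmerge y)$ arises from synchronised sub-transitions of $x$ at $u$ and $y$ at $v$ with combined label $\{e_1,e_2\}$ at $u\diamond v$ under the side condition $e_1\le e_2$ coming from Table \ref{TRForAPTC21}, and one must verify that $\partial_H(x)\leftmerge\partial_H(y)$ offers exactly the matching move: in particular that the guard $e\notin H$ is imposed on the same events on both sides, so that no move is created or destroyed, and that the residual (which in the two-residual case contains $\between$) again falls under the defining scheme of $R$ --- which is why $R$ must be taken closed under residuals, carrying along the analogous pairs for $\between$, $\parallel$ and $\mid$, or else one works inside the $\leftmerge$ fragment and appeals to $P1$--$P4$ to reduce $\between$. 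Once this case is settled the remaining verifications are routine, and assembling them with Theorem \ref{SPPBE21} and the congruence theorems closes the induction and completes the proof.
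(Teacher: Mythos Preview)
Your proposal is correct and follows precisely the template the paper adopts for its soundness results: use that $\sim_p^{sl}$ is an equivalence and a congruence (the congruence theorems for $APTC^{sl}$ and for $\partial_H$), reduce $APTC^{sl}\vdash x=y\Rightarrow x\sim_p^{sl}y$ to checking each axiom individually, and then treat the axiom-by-axiom verification as routine. In fact the paper does not supply a proof for this particular theorem at all --- it sits in the Backgrounds section as a result quoted from \cite{LOC1} --- and for the analogous soundness theorems that the paper \emph{does} prove (e.g.\ for $qAPTC$ with localities and for $qAPPTC$ with localities) the argument is literally ``since $\sim_p^{sl}$ is both an equivalent and a congruent relation, we only need to check if each axiom is sound; we leave the proof as an exercise for the readers''. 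Your write-up therefore already goes further than the paper by sketching the witnessing relations for $D1$--$D6$ and $L11$ and by flagging the $\leftmerge$/$\between$ residual closure issue in $D6$; none of that extra detail is present in the paper, but it is the right content and the right place to put the effort.
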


\begin{theorem}[Completeness of APTC with static localities modulo static location pomset bisimulation equivalence]\label{CAPTCPBE21}
Let $p$ and $q$ be closed APTC with static localities terms including encapsulation operator $\partial_H$, if $p\sim_p^{sl} q$ then $p=q$.
\end{theorem}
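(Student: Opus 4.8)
The plan is to reduce this statement to the completeness result already available for the fragment of $APTC^{sl}$ without the encapsulation operator, namely Theorem~\ref{CPPBE21}, using the elimination theorem for the full calculus. First I would apply Theorem~\ref{ETEncapsulation21} to the given closed terms $p$ and $q$, which may contain $\partial_H$: it produces basic $APTC^{sl}$ terms $p'$ and $q'$ with $APTC^{sl}\vdash p=p'$ and $APTC^{sl}\vdash q=q'$. By Definition~\ref{BTAPTC21} the set $\mathcal{B}(APTC^{sl})$ is generated without $\partial_H$, so $p'$ and $q'$ are in particular $\partial_H$-free closed terms.

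Next I would transport the hypothesis along these derivations. Soundness (Theorem~\ref{SAPTCPBE21}) gives $p\sim_p^{sl}p'$ and $q\sim_p^{sl}q'$; combining with the assumption $p\sim_p^{sl}q$ and using that $\sim_p^{sl}$ is symmetric and transitive yields $p'\sim_p^{sl}q'$. Since $p'$ and $q'$ are closed $APTC^{sl}$ terms in the sense of Theorem~\ref{CPPBE21}, that theorem gives $APTC^{sl}\vdash p'=q'$. Chaining the three derivations produces $APTC^{sl}\vdash p=p'=q'=q$, which is exactly the claim.

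The main point requiring care is the dovetailing in the first step: one must verify that the representative furnished by the elimination theorem is both basic and free of $\partial_H$, and hence falls within the scope of Theorem~\ref{CPPBE21}; this is immediate from the statements of Theorem~\ref{ETEncapsulation21} and Definition~\ref{BTAPTC21}, but it is the only ingredient that is not pure bookkeeping. A secondary, routine point is that the second step relies on $\sim_p^{sl}$ being an equivalence relation — in particular that the composition of two static location pomset bisimulations, with the consistent location association parameter threaded through, is again one — which is standard and already implicit in the congruence theorems stated above. No fresh normal-form analysis or structural induction is needed here, since that work is entirely absorbed into Theorems~\ref{ETEncapsulation21} and~\ref{CPPBE21}; a self-contained proof by induction on the structure of basic terms is possible, but it would merely repeat the earlier argument.
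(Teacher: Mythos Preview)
Your reduction is the standard route and is correct: elimination (Theorem~\ref{ETEncapsulation21}) sends $p,q$ to $\partial_H$-free basic terms, soundness transports $\sim_p^{sl}$ along the derivations, and then the $\partial_H$-free completeness result (Theorem~\ref{CPPBE21}) closes the argument. Note, however, that in this paper Theorem~\ref{CAPTCPBE21} sits in the \emph{Backgrounds} chapter and is stated without proof; the paper treats it as an imported result from \cite{LOC1} and, where analogous completeness theorems are actually argued later (e.g.\ for $qAPTC$ with localities), the ``proof'' simply cites the completeness of $APTC^{sl}$ in \cite{LOC1} rather than spelling out the elimination-plus-soundness reduction you give. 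So there is no in-paper proof to compare against; what you wrote is the expected argument and is more detailed than anything the paper provides for this statement.
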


\begin{theorem}[Soundness of APTC with static localities modulo static location step bisimulation equivalence]\label{SAPTCSBE21}
Let $x$ and $y$ be APTC with static localities terms including encapsulation operator $\partial_H$. If $APTC^{sl}\vdash x=y$, then $x\sim_s^{sl} y$.
\end{theorem}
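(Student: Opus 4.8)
The plan is to follow the standard route for soundness proofs in the $ACP$-style setting: reduce the statement to a finite, axiom-by-axiom check, using that $\sim_s^{sl}$ is a congruence. Equality $APTC^{sl}\vdash x=y$ (now including $\partial_H$) is generated from the axioms in Tables \ref{AxiomsForBATC21}, \ref{AxiomsForParallelism21} and \ref{AxiomsForEncapsulation21} by reflexivity, symmetry, transitivity and closure under the operators of the signature. Since $\sim_s^{sl}$ is an equivalence relation, and — by the congruence theorem of $APTC$ with static localities and the congruence theorem of $\partial_H$ — a congruence with respect to every operator ($+$, $\cdot$, $\between$, $\parallel$, $\leftmerge$, $\mid$, $\Theta$, $\triangleleft$, $u::(-)$ and $\partial_H$), it suffices to show that each axiom is \emph{sound}: for every axiom $s=t$ and every closed substitution $\sigma$, $\sigma(s)\sim_s^{sl}\sigma(t)$. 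The theorem then follows by induction on the length of a derivation of $APTC^{sl}\vdash x=y$: the base case is an instance of an axiom, and each inductive step is one of the closure rules, all of which preserve $\sim_s^{sl}$ because it is a congruence equivalence.

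For the axiom check I would, for each axiom $s=t$, exhibit a static location step bisimulation $R_\varphi\subseteq\mathcal{C}(\mathcal{E}_1)\times\mathcal{C}(\mathcal{E}_2)$ containing $(\emptyset,\emptyset)$, taking $R_\varphi$ to be generated by the natural pairing of corresponding subterms of $\sigma(s)$ and $\sigma(t)$ together with the diagonal on closed terms, and verify the clauses of the definition by case analysis on the last transition rule applied (Tables \ref{SETRForBATC21}, \ref{TRForAPTC21}, \ref{TRForEncapsulation21}). The sequential axioms $A1$--$A5$ and the basic locality laws $L1$--$L4$ are already covered by the soundness of $BATC^{sl}$ (Theorem \ref{SBATCSBE21}); $A6$, $A7$ and the $\delta$-laws ($P9$, $C7$, $C8$, $U4$, $U5$, $L10$) are immediate from the fact that $\delta$ has no outgoing transitions. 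The substantive cases are: the expansion laws $P1$--$P8$ and $C1$--$C6$, where a $\{e_1,e_2\}$-labelled move of the left-hand side at location $u\diamond v$ must be matched by a move of the right-hand side carrying the same step (same multiset of labels) with a location pair recorded in the new $\varphi$, and conversely; the conflict-elimination laws $CE1$--$CE6$ and the unless laws $U1$--$U13$, which involve the predicate $\sharp(e_1,e_2)$ and the negative premises $y\nrightarrow^{e}$; and the distribution-over-localities laws $L5$--$L9$ and $L11$, where $u::(-)$ must be pushed through the composite operators while respecting that a subprocess of $u::x$ fires at a location of the form $loc\ll u'$.

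I expect the main obstacle to be making the step-bisimulation argument for $P1$--$P8$, $C1$--$C6$ and $L5$--$L9$ genuinely precise, namely the interaction between the location bookkeeping and the concurrency operators. Concretely, in a move $C\xrightarrow[u\diamond v]{\{e_1,e_2\}}C'$ the label $u\diamond v$ on the arrow is only meaningful up to the consistency constraint, so one must check that the witness relation can always be chosen of the form $R_\varphi$ with $\varphi$ a consistent location association, and that extending $\varphi$ by the freshly generated pair keeps it consistent; this is exactly where the definitions of $\diamond$ and of consistent location association carry the argument. A secondary subtlety is the difference between step and pomset bisimulation: for the $\leftmerge$-related laws $P5$--$P7$ one must keep track of the side condition $(e_1\le e_2)$ and confirm that the matching is by transitions of the corresponding kind on both sides, so that no "step" is required to contain causally related events where the semantics does not produce one. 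Once these points are pinned down, the remaining verifications are routine and run in parallel with the soundness proofs for $BATC^{sl}$ and for $APTC$ with static localities without $\partial_H$ that are invoked above, the $\partial_H$ cases ($D1$--$D6$, $L11$) adding only the side condition $e\notin H$ and the fact that $\partial_H$ blocks exactly the events of $H$.
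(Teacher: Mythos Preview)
Your proposal is correct and follows essentially the same route as the paper. The paper treats this theorem as background (stated without proof, deferred to \cite{LOC1}), and where it does prove analogous soundness results (e.g.\ for the quantum variants in later sections) it uses precisely your reduction: observe that $\sim_s^{sl}$ is an equivalence and a congruence with respect to all operators, so it suffices to check each axiom in the relevant tables is sound, and then leave those per-axiom checks as exercises. Your write-up is in fact considerably more detailed than the paper's --- in particular your identification of the nontrivial points (the consistency of the extended location association $\varphi$ when matching $u\diamond v$-labelled steps, and the $e_1\le e_2$ side condition in the $\leftmerge$ axioms) goes well beyond what the paper spells out.
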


\begin{theorem}[Completeness of APTC with static localities modulo static location step bisimulation equivalence]\label{CAPTCSBE21}
Let $p$ and $q$ be closed APTC with static localities terms including encapsulation operator $\partial_H$, if $p\sim_s^{sl} q$ then $p=q$.
\end{theorem}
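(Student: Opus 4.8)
**Proof proposal for Theorem (Completeness of APTC with static localities modulo static location step bisimulation equivalence, including $\partial_H$).**

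The plan is to reduce the completeness statement for the extended signature (with $\partial_H$) to the already-established completeness result for $APTC^{sl}$ without encapsulation, namely Theorem~\ref{CPSBE21}. The essential tool is the elimination theorem for the full calculus, Theorem~\ref{ETEncapsulation21}: every closed term $p$ containing $\partial_H$ is provably equal to a basic $APTC^{sl}$ term. So, given closed terms $p$ and $q$ with $p\sim_s^{sl} q$, first I would apply Theorem~\ref{ETEncapsulation21} to obtain basic terms $p'$ and $q'$ with $APTC^{sl}\vdash p=p'$ and $APTC^{sl}\vdash q=q'$. By soundness (Theorem~\ref{SAPTCSBE21}) these provable equalities are sound for $\sim_s^{sl}$, hence $p\sim_s^{sl} p'$ and $q\sim_s^{sl} q'$; since $\sim_s^{sl}$ is an equivalence, from $p\sim_s^{sl} q$ we get $p'\sim_s^{sl} q'$.

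Next, since $p'$ and $q'$ are basic terms in the signature of $APTC^{sl}$ (without $\partial_H$), and the transition rules for the encapsulation operator form a conservative extension of the rules for $APTC^{sl}$ — this should be checked via Theorem~\ref{TCE}, using that the source of each rule in Table~\ref{TRForEncapsulation21} contains the fresh symbol $\partial_H$ and each rule has a premise over the old signature — the behaviour of $p'$ and $q'$ is the same in both transition systems. Therefore $p'\sim_s^{sl} q'$ already holds when these terms are viewed as pure $APTC^{sl}$ terms, and the completeness theorem for $APTC^{sl}$ modulo $\sim_s^{sl}$ (Theorem~\ref{CPSBE21}, together with the encapsulation-free elimination and completeness chain) applies to give $APTC^{sl}\vdash p'=q'$. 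Combining the three provable equalities, $APTC^{sl}\vdash p=p'=q'=q$, which is the desired conclusion.

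The main obstacle is making the conservativity step fully rigorous: one must verify that adding the operator $\partial_H$ (with the rules of Table~\ref{TRForEncapsulation21}) and the associated axioms $D1$–$D6$, $L11$ does not alter the transition behaviour — or the provable equalities — of terms over the smaller signature. This is where Theorem~\ref{TCE} and the source-dependency/freshness bookkeeping are needed, and care is required because the step-bisimulation transitions are decorated with sets $\{e_1,\dots\}$ and locations $u$, so one must confirm that the encapsulation rules do not introduce new labels or predicates on old terms. A secondary point is that one should already have in hand, from the earlier development, that basic terms enjoy a normal-form property (no $\partial_H$, and the parallel/communication/conflict operators pushed out in the standard way) so that the appeal to Theorem~\ref{CPSBE21} is legitimate; this follows from Definition~\ref{BTAPTC21} and Theorems~\ref{ETParallelism21} and~\ref{ETEncapsulation21}. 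Once conservativity is in place, the remainder is the routine equivalence-chasing sketched above.
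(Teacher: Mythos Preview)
Your argument is correct and is precisely the standard route for completeness of a conservative extension: eliminate $\partial_H$ via Theorem~\ref{ETEncapsulation21}, transfer $\sim_s^{sl}$ along the provable equalities by soundness, invoke conservativity so that the reduced basic terms are bisimilar already in the smaller system, and then apply Theorem~\ref{CPSBE21}. The bookkeeping you flag (source-dependency of the $APTC^{sl}$ rules, freshness of $\partial_H$ in the sources of Table~\ref{TRForEncapsulation21}, so that Theorem~\ref{TCE} applies) is exactly what is needed and goes through without surprises.

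As for comparison: the paper does not actually give a proof of this theorem. It appears in the Backgrounds chapter as a preliminary result imported from \cite{LOC1}, stated without argument. When the paper later proves analogous completeness results for the quantum extensions (e.g.\ the completeness of $qAPTC$ with localities), it simply observes that the quantum bisimilarities refine the classical ones and then defers to the completeness of $APTC^{sl}$ in \cite{LOC1}. So your proposal is considerably more detailed than anything the paper offers for this statement; what you have sketched is essentially the proof one would expect to find in \cite{LOC1} itself.
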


\begin{theorem}[Soundness of APTC with static localities modulo static location hp-bisimulation equivalence]\label{SAPTCHPBE21}
Let $x$ and $y$ be APTC with static localities terms including encapsulation operator $\partial_H$. If $APTC^{sl}\vdash x=y$, then $x\sim_{hp}^{sl} y$.
\end{theorem}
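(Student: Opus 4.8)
The plan is to reduce the statement to a finite, axiom-by-axiom check. We already know that $\sim_{hp}^{sl}$ is a congruence with respect to every operator of $APTC^{sl}$, including $\partial_H$, and $\sim_{hp}^{sl}$ is plainly an equivalence relation: reflexivity and symmetry of static location hp-bisimulation are immediate, and transitivity follows by composing the two witnessing posetal relations together with their location associations. Hence it suffices to show that every axiom $s=t$ appearing in Table \ref{AxiomsForBATC21}, Table \ref{AxiomsForParallelism21} and Table \ref{AxiomsForEncapsulation21} is sound, i.e. that for every closing substitution the two resulting closed terms are related by some static location hp-bisimulation. Soundness of an arbitrary derivable equation $APTC^{sl}\vdash x=y$ then follows by induction on the length of its derivation, using the congruence property at each substitution-into-a-context step and the equivalence properties at the reflexivity, symmetry and transitivity steps.

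For each individual axiom I would exhibit an explicit posetal relation $R_{\varphi}$ containing $(\emptyset,\emptyset,\emptyset)$ and closed under the hp-transfer conditions, verifying those conditions by a case analysis on the transition rules in Tables \ref{SETRForBATC21}, \ref{TRForAPTC21} and \ref{TRForEncapsulation21}. For the structural laws --- $A1$--$A5$, the location laws $L1$--$L4$, the encapsulation laws $D1$--$D6$, and the syntactic $\Theta$/$\triangleleft$ laws $CE1$--$CE6$, $U4$--$U13$ --- the natural witness pairs each reachable configuration of $s$ with the syntactically corresponding configuration of $t$ under the evident isomorphism and with the identity location association; a transition $C_1\xrightarrow[u]{e}C_1'$ on one side is matched by the corresponding transition $C_2\xrightarrow[u]{e}C_2'$ on the other at the \emph{same} location $u$, so $\varphi\cup\{(u,u)\}$ stays consistent and extending the isomorphism by $e\mapsto e$ keeps it an isomorphism. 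For the axioms carrying side conditions --- $P5$--$P7$ under $(e_1\leq e_2)$, and $U1$--$U3$ under $\sharp(e_1,e_2)$ --- the side condition is exactly what enables or blocks the relevant left-merge, communication-merge or unless transition, so the two sides have matching single-event moves and the bisimulation goes through with the expected bookkeeping of the isomorphism and of $\varphi$.

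The main obstacle is the family of axioms that genuinely mix parallelism with localities: $P1$--$P9$, $C1$--$C8$, and especially the location-distribution laws $L5$--$L11$. Three points need care. First, hp-bisimulation is stated in terms of \emph{single}-event transitions, whereas the rules for $\parallel$, $\leftmerge$ and $\mid$ in Table \ref{TRForAPTC21} produce synchronous steps $\{e_1,e_2\}$ or a communication labelled $\gamma(e_1,e_2)$; the matching must therefore be organized around the single-event moves these operators do admit --- through $\leftmerge$ and through the residuals after one branch terminates --- exactly as in the already-established soundness results modulo $\sim_p^{sl}$ and $\sim_s^{sl}$ (Theorems \ref{SAPTCPBE21} and \ref{SAPTCSBE21}), which I would use as the template and adapt by additionally tracking the isomorphism $f$. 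Second, when two concurrent events fire in independent parallel branches their locations combine as $u\diamond v$, and pushing a prefix $w::$ through $\between$, $\parallel$ or $\mid$ moves the branches to $w\ll u$ and $w\ll v$; one must check that this preserves independence, so that the accumulated location association remains a consistent location association in the sense of the cla definition --- this $\diamond$/$\ll$ bookkeeping is the most delicate part. Third, for the $\Theta$/$\triangleleft$ laws over $\parallel$ and $\mid$ ($CE5$, $CE6$, $U8$, $U9$, $U12$, $U13$) the conflict-elimination reasoning has to be run simultaneously with the parallel reasoning. Once these representative cases are handled, the remaining axioms follow the same pattern, which completes the proof.
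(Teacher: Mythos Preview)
Your reduction to an axiom-by-axiom check via the congruence and equivalence properties of $\sim_{hp}^{sl}$ is exactly the paper's approach; the analogous soundness proofs later in the paper (e.g.\ for $qAPTC$ with localities) state precisely that ``since $\sim_{hp}^{sl}$ is both an equivalent and a congruent relation, we only need to check if each axiom \ldots\ is sound,'' and then explicitly leave those checks as exercises. Your proposal is therefore aligned with the paper and in fact goes well beyond it: the paper never spells out the per-axiom witnesses, the $\diamond$/$\ll$ bookkeeping for the $L5$--$L11$ laws, or the handling of the side-conditioned axioms, all of which you sketch.
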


\begin{theorem}[Completeness of APTC with static localities modulo static location hp-bisimulation equivalence]\label{CAPTCHPBE21}
Let $p$ and $q$ be closed APTC with static localities terms including encapsulation operator $\partial_H$, if $p\sim_{hp}^{sl} q$ then $p=q$.
\end{theorem}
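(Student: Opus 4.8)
The plan is to reduce this statement to the already-established completeness of the $\partial_H$-free fragment (Theorem \ref{CPHPBE21}) by pushing the encapsulation operator away with the elimination theorem for the full calculus. Concretely, let $p$ and $q$ be closed APTC with static localities terms (possibly containing $\partial_H$) with $p\sim_{hp}^{sl} q$. First I would apply the elimination theorem of APTC with static localities (Theorem \ref{ETEncapsulation21}) to obtain basic APTC with static localities terms $p'$ and $q'$ with $APTC^{sl}\vdash p=p'$ and $APTC^{sl}\vdash q=q'$. By the grammar defining basic terms (Definition \ref{BTAPTC21}), neither $p'$ nor $q'$ contains the encapsulation operator, so both are closed terms over the $\partial_H$-free signature.

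Next I would transport the hypothesis along these derivable equalities using soundness. By soundness of APTC with static localities modulo static location hp-bisimulation (Theorem \ref{SAPTCHPBE21}), $p\sim_{hp}^{sl} p'$ and $q\sim_{hp}^{sl} q'$. Since $\sim_{hp}^{sl}$ is genuinely an equivalence relation — reflexivity and symmetry are immediate, and transitivity follows by composing the witnessing posetal relations, i.e.\ composing the configuration isomorphisms componentwise and checking that the union of the associated location-association sets remains consistent — we may chain $p'\sim_{hp}^{sl} p\sim_{hp}^{sl} q\sim_{hp}^{sl} q'$, obtaining $p'\sim_{hp}^{sl} q'$.

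Now $p'$ and $q'$ are closed APTC with static localities terms not involving $\partial_H$, so the completeness theorem for that fragment (Theorem \ref{CPHPBE21}) applies and yields $APTC^{sl}\vdash p'=q'$. Combining the three derivable equalities gives $APTC^{sl}\vdash p=p'=q'=q$, which is the desired conclusion $p=q$.

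There is no deep obstacle here; the two points that warrant care are both bookkeeping: that the elimination theorem genuinely removes \emph{every} occurrence of $\partial_H$ (so that the $\partial_H$-free completeness result is legitimately applicable to $p'$ and $q'$), and that $\sim_{hp}^{sl}$ composes transitively at the level of posetal relations carrying location associations, which is the only spot requiring an explicit small verification. Everything else is a direct appeal to theorems already proved in this section. If a self-contained argument were preferred instead, one could alternatively run a structural induction on basic terms paralleling the proof of Theorem \ref{CPHPBE21}, treating $\partial_H$ via axioms $D1$--$D6$ and $L11$; but the modular route above is shorter and reuses the established machinery.
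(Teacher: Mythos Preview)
Your argument is correct and is the standard modular route for such completeness extensions: eliminate $\partial_H$ via Theorem~\ref{ETEncapsulation21}, transfer the bisimilarity through soundness, invoke the $\partial_H$-free completeness (Theorem~\ref{CPHPBE21}), and chain the equalities. The two bookkeeping points you flag (that elimination really removes every $\partial_H$, and that $\sim_{hp}^{sl}$ is transitive on the posetal relations with location associations) are the only places where any work is needed, and both are routine.

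As for comparison: the paper does not actually supply a proof of this theorem. It sits in the preliminaries (Section~\ref{bg}) and is stated as a background result imported from~\cite{LOC1}; no argument is given in the present text. The analogous completeness theorems that the paper does prove later (e.g.\ for $qAPTC$ with localities) simply appeal back to the corresponding classical result in~\cite{LOC1} together with the observation that the quantum bisimilarities imply the underlying classical ones. So your write-up is strictly more informative than what the paper provides here, and it follows exactly the expected pattern one would find in~\cite{LOC1}.
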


\begin{theorem}[Soundness of APTC with static localities modulo static location hhp-bisimulation equivalence]\label{SAPTCHPBE21}
Let $x$ and $y$ be APTC with static localities terms including encapsulation operator $\partial_H$. If $APTC^{sl}\vdash x=y$, then $x\sim_{hhp}^{sl} y$.
\end{theorem}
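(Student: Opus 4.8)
The plan is to follow the standard route for soundness of an equational theory modulo a behavioural equivalence. First, $\sim_{hhp}^{sl}$ is an equivalence relation, and by the congruence theorems already established for APTC with static localities and for the encapsulation operator $\partial_H$, it is a congruence with respect to every operator in the signature of $APTC^{sl}$ (namely $\cdot$, $+$, $\between$, $\parallel$, $\leftmerge$, $\mid$, $\Theta$, $\triangleleft$, the localization prefix $u::{-}$, and $\partial_H$). Hence the set of closed pairs $(p,q)$ with $p\sim_{hhp}^{sl}q$ is closed under the inference rules of equational logic (reflexivity, symmetry, transitivity, and congruence). It therefore suffices to verify that every axiom in Tables \ref{AxiomsForBATC21}, \ref{AxiomsForParallelism21} and \ref{AxiomsForEncapsulation21} is sound, i.e. that for each axiom $s=t$ and each closing substitution $\sigma$ one has $\sigma(s)\sim_{hhp}^{sl}\sigma(t)$.

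For the axioms $A1$--$A5$ and $L1$--$L4$ this is contained in Theorem \ref{SBATCHHPBE21}, and for the parallelism axioms $A6$, $A7$, $P1$--$P9$, $C1$--$C8$, $CE1$--$CE6$, $U1$--$U13$, $L5$--$L10$ it follows from the soundness theorem for $APTC^{sl}$ without $\partial_H$ modulo $\sim_{hhp}^{sl}$; by the Conservative Extension Theorem \ref{TCE}, adjoining $\partial_H$ changes neither the transitions nor the termination predicate of terms over the smaller signatures, so those equivalences persist. Thus the genuinely new cases are the axioms $D1$--$D6$ and $L11$. For each such axiom $s=t$ I would exhibit, for an arbitrary closing substitution, a downward-closed posetal relation $R_\varphi\subseteq\mathcal{C}(\mathcal{E}_{\sigma(s)})\overline{\times}\mathcal{C}(\mathcal{E}_{\sigma(t)})$ containing $(\emptyset,\emptyset,\emptyset)$ and satisfying the hp-transfer conditions with the location bookkeeping $\varphi$; the natural candidate relates a configuration $C_1$ on the left to its $H$-filtered image $C_2$ on the right (events labelled in $H$ having been turned into $\delta$, which contributes no transition), together with all pointwise-smaller triples. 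One checks against the rules of Table \ref{TRForEncapsulation21} that this is a static location hp-bisimulation, and downward-closedness is immediate since $H$-filtering commutes with deleting events from $C_1$.

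The main obstacle I expect is ensuring the sub-theory results I invoke genuinely hold at the level of \emph{hereditary} hp-bisimulation: in general the downward closure of an hp-bisimulation need not again be an hp-bisimulation, which is exactly why $\sim_{hhp}^{sl}$ is strictly finer than $\sim_{hp}^{sl}$. So one must check, axiom by axiom, that the specific witnessing relation written down for a substitution instance is \emph{already} downward closed (or has a downward closure that still works) because of the rigid structure of both sides. The delicate cases are $P4$ and $P5$--$P7$, where a causal edge $e_1\leq e_2$ forced by the side condition on $\leftmerge$ on one side of the equation must be matched on the other, and the conflict-elimination axioms $CE3$--$CE6$, $U1$--$U3$, where $\sharp$ interacts with the removal or $\tau$-relabelling of events; in each, one must verify that every pointwise-smaller triple still respects the consistent-location-association condition (the $u\diamond v$ constraints) and still satisfies the hp-transfer property. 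I would package this as one uniform lemma — closure under pointwise-smaller triples is preserved for each operator of $APTC^{sl}$ — which reduces the whole soundness statement to the hp-level verifications underlying the earlier soundness theorems together with that single structural observation.
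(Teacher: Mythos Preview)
Your proposal is correct and follows essentially the same route the paper uses for all its soundness results: invoke the congruence theorem to reduce the problem to checking each axiom individually, then verify the axioms. In fact, the paper gives no proof whatsoever for this particular theorem (it sits in the Backgrounds chapter as a result quoted from \cite{LOC1}), and for the analogous soundness theorems it does prove later (e.g.\ for $qAPTC^{sl}$ and $qAPPTC^{sl}$) the entire argument is ``since $\sim_{hhp}^{sl}$ is both an equivalent and a congruent relation, we only need to check if each axiom is sound modulo static location hhp-bisimulation equivalence; we leave the proof as an exercise for the readers.''

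So your plan is not merely the same as the paper's --- it is substantially more detailed. Your explicit separation of the $D$- and $L11$-axioms as the genuinely new cases, the concrete description of the witnessing posetal relation via $H$-filtering, and especially your cautionary remark about the downward-closure requirement for \emph{hereditary} hp-bisimulation (which is precisely what distinguishes $\sim_{hhp}^{sl}$ from $\sim_{hp}^{sl}$) go well beyond anything the paper spells out. The one thing to be aware of is that the paper never actually carries out the delicate cases you flag ($P4$--$P7$, $CE3$--$CE6$, $U1$--$U3$); it simply asserts soundness at the hhp level and defers to \cite{LOC1}. Your proposed uniform lemma (that downward closure is preserved by each operator) is the right way to organize this, but the paper provides no help in discharging it.
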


\begin{theorem}[Completeness of APTC with static localities modulo static location hhp-bisimulation equivalence]\label{CAPTCHPBE}
Let $p$ and $q$ be closed APTC with static localities terms including encapsulation operator $\partial_H$, if $p\sim_{hhp}^{sl} q$ then $p=q$.
\end{theorem}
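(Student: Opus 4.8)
The plan is to follow the standard ACP-style route for completeness: use the elimination theorem to collapse arbitrary closed terms to basic terms, transport the equivalence along the resulting provable equalities by soundness, and then fall back on the completeness result already available for the encapsulation-free fragment. So let $p$ and $q$ be closed APTC with static localities terms, possibly containing $\partial_H$, with $p\sim_{hhp}^{sl}q$. First I would apply the elimination theorem of APTC with static localities including the encapsulation operator (Theorem~\ref{ETEncapsulation21}) to obtain basic terms $p'$ and $q'$ with $APTC^{sl}\vdash p=p'$ and $APTC^{sl}\vdash q=q'$. Since the grammar of basic terms in Definition~\ref{BTAPTC21} never introduces $\partial_H$ (nor $\parallel$, $\mid$, $\Theta$ or $\triangleleft$), the terms $p'$ and $q'$ are closed terms of the signature covered by the earlier completeness theorem for APTC with static localities without encapsulation.

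Next I would invoke soundness modulo hhp-bisimulation (Theorem~\ref{SAPTCHPBE21}): from $APTC^{sl}\vdash p=p'$ we get $p\sim_{hhp}^{sl}p'$, and likewise $q\sim_{hhp}^{sl}q'$. Since $\sim_{hhp}^{sl}$ is an equivalence relation, which is already presupposed by the congruence theorems for APTC with static localities and for $\partial_H$, symmetry together with two applications of transitivity gives $p'\sim_{hhp}^{sl}q'$. Now the completeness theorem for APTC with static localities without encapsulation modulo static location hhp-bisimulation (Theorem~\ref{CPHPBE21}) applies to the closed terms $p'$ and $q'$, yielding $APTC^{sl}\vdash p'=q'$. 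Chaining the three derivations gives $APTC^{sl}\vdash p=p'=q'=q$, which is the desired conclusion.

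The real content sits entirely in the imported lemmas, so the main difficulty is not this argument but what it rests on: that the term rewriting system underlying APTC with static localities, together with the encapsulation axioms $D1$--$D6$ and the distribution law $L11$, actually drives every occurrence of $\partial_H$ down to the atomic events and removes it, so that elimination genuinely returns a $\partial_H$-free basic term, and that the axioms already capture hereditary history-preserving bisimilarity on that fragment, which is the subtlest of the four equivalences because of the downward-closure clause. Within the gluing step itself, the one point that must be checked rather than merely quoted is that the composition of two static location hhp-bisimulations is again posetal, downward closed, and respects a consistent location association obtained by composing the two associations through the intermediate event structure; this is routine locality bookkeeping, but is the place where location information could in principle be lost, and I expect no obstacle beyond it.
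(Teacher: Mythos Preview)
The paper records this theorem in its Backgrounds chapter as a result imported from \cite{LOC1} and provides no proof of its own; the quantum analogues in later chapters are established simply by quoting this classical fact. Your proposal is exactly the standard ACP-style argument one would expect such a reference to contain --- eliminate $\partial_H$ via Theorem~\ref{ETEncapsulation21} to reach $\partial_H$-free basic terms, transport the bisimilarity to those terms using soundness and the equivalence properties of $\sim_{hhp}^{sl}$, and then invoke the completeness result for the encapsulation-free fragment --- and it is correct, so there is nothing in this paper to compare against beyond the bare statement.
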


\subsubsection{Recursion}

In this section, we introduce recursion to capture infinite processes based on APTC with static localities. Since in APTC with static localities, there are four basic operators
$::$, $\cdot$, $+$ and $\leftmerge$, the recursion must be adapted this situation to include $\leftmerge$.

In the following, $E,F,G$ are recursion specifications, $X,Y,Z$ are recursive variables.


\begin{definition}[Recursive specification]
A recursive specification is a finite set of recursive equations

$$X_1=t_1(X_1,\cdots,X_n)$$
$$\cdots$$
$$X_n=t_n(X_1,\cdots,X_n)$$

where the left-hand sides of $X_i$ are called recursion variables, and the right-hand sides $t_i(X_1,\cdots,X_n)$ are process terms in APTC with static localities with possible
occurrences of the recursion variables $X_1,\cdots,X_n$.
\end{definition}

\begin{definition}[Solution]
Processes $p_1,\cdots,p_n$ are a solution for a recursive specification $\{X_i=t_i(X_1,\cdots,X_n)|i\in\{1,\cdots,n\}\}$ (with respect to static location truly concurrent bisimulation equivalences
$\sim_s^{sl}$($\sim_p^{sl}$, $\sim_{hp}^{sl}$, $\sim_{hhp}^{sl}$)) if $p_i\sim_s^{sl} (\sim_p^{sl}, \sim_{hp}^{sl}, \sim_{hhp}^{sl})t_i(p_1,\cdots,p_n)$ for $i\in\{1,\cdots,n\}$.
\end{definition}

\begin{definition}[Guarded recursive specification]
A recursive specification

$$X_1=t_1(X_1,\cdots,X_n)$$
$$...$$
$$X_n=t_n(X_1,\cdots,X_n)$$

is guarded if the right-hand sides of its recursive equations can be adapted to the form by applications of the axioms in APTC with static localities and replacing recursion variables
by the right-hand sides of their recursive equations,

$(u_{11}::a_{11}\leftmerge\cdots\leftmerge u_{1i_1}::a_{1i_1})\cdot s_1(X_1,\cdots,X_n)+\cdots+(u_{k1}::a_{k1}\leftmerge\cdots\leftmerge u_{ki_k}::a_{ki_k})\cdot s_k(X_1,\cdots,X_n)\\
+(v_{11}::b_{11}\leftmerge\cdots\leftmerge v_{1j_1}::b_{1j_1})+\cdots+(v_{1j_1}::b_{1j_1}\leftmerge\cdots\leftmerge v_{1j_l}::b_{lj_l})$

where $a_{11},\cdots,a_{1i_1},a_{k1},\cdots,a_{ki_k},b_{11},\cdots,b_{1j_1},b_{1j_1},\cdots,b_{lj_l}\in \mathbb{E}$, and the sum above is allowed to be empty, in which case it
represents the deadlock $\delta$.
\end{definition}

\begin{definition}[Linear recursive specification]\label{LRS}
A recursive specification is linear if its recursive equations are of the form

$(u_{11}::a_{11}\leftmerge\cdots\leftmerge u_{1i_1}::a_{1i_1})X_1+\cdots+(u_{k1}::a_{k1}\leftmerge\cdots\leftmerge u_{ki_k}::a_{ki_k})X_k\\
+(v_{11}::b_{11}\leftmerge\cdots\leftmerge v_{1j_1}::b_{1j_1})+\cdots+(v_{1j_1}::b_{1j_1}\leftmerge\cdots\leftmerge v_{1j_l}::b_{lj_l})$

where $a_{11},\cdots,a_{1i_1},a_{k1},\cdots,a_{ki_k},b_{11},\cdots,b_{1j_1},b_{1j_1},\cdots,b_{lj_l}\in \mathbb{E}$, and the sum above is allowed to be empty, in which case it
represents the deadlock $\delta$.
\end{definition}

For a guarded recursive specifications $E$ with the form

$$X_1=t_1(X_1,\cdots,X_n)$$
$$\cdots$$
$$X_n=t_n(X_1,\cdots,X_n)$$

the behavior of the solution $\langle X_i|E\rangle$ for the recursion variable $X_i$ in $E$, where $i\in\{1,\cdots,n\}$, is exactly the behavior of their right-hand sides
$t_i(X_1,\cdots,X_n)$, which is captured by the two transition rules in Table \ref{TRForGR21}.

\begin{center}
    \begin{table}
        $$\frac{t_i(\langle X_1|E\rangle,\cdots,\langle X_n|E\rangle)\xrightarrow[u]{\{e_1,\cdots,e_k\}}\surd}{\langle X_i|E\rangle\xrightarrow[u]{\{e_1,\cdots,e_k\}}\surd}$$
        $$\frac{t_i(\langle X_1|E\rangle,\cdots,\langle X_n|E\rangle)\xrightarrow[u]{\{e_1,\cdots,e_k\}} y}{\langle X_i|E\rangle\xrightarrow[u]{\{e_1,\cdots,e_k\}} y}$$
        \caption{Transition rules of guarded recursion}
        \label{TRForGR21}
    \end{table}
\end{center}

\begin{theorem}[Conservitivity of APTC with static localities and guarded recursion]
APTC with static localities and guarded recursion is a conservative extension of APTC with static localities.
\end{theorem}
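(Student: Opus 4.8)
The plan is to invoke the Conservative Extension Theorem (Theorem~\ref{TCE}) with $T_0$ the TSS of APTC with static localities (the rules in Tables~\ref{SETRForBATC21}, \ref{TRForAPTC21} and \ref{TRForEncapsulation21}) and $T_1$ the TSS consisting of the two transition rules for guarded recursion in Table~\ref{TRForGR21}. The combined system $T_0\oplus T_1$ is exactly the TSS of APTC with static localities plus guarded recursion, so establishing that $T_0\oplus T_1$ conservatively extends $T_0$ is precisely the assertion of the theorem. To apply Theorem~\ref{TCE} I must discharge its hypotheses: that $T_0$ and $T_0\oplus T_1$ are positive after reduction (immediate, since none of the rules involves negative premises in a way that survives reduction — the only negative-looking premises $y\nrightarrow^{e}$ in the $\triangleleft$ rules already live in $T_0$, so they affect $T_0$ and $T_0\oplus T_1$ identically), that $T_0$ is source-dependent, and that every rule $\rho\in T_1$ either has a fresh source or has a premise $t\xrightarrow{a}t'$ with $t\in\mathbb{T}(\Sigma_0)$, all variables of $t$ in the source of $\rho$ and in $t'$, and $t'$, $a$ or $P$ fresh.

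First I would verify source-dependency of $T_0$. This is a routine induction over the rule shapes in the three tables: each rule has its source built from variables $x,y$ (or a constant $e$, $loc::e$), and every premise is of the form $x\xrightarrow[u]{e}\surd$ or $x\xrightarrow[u]{e}x'$ with $x$ already in the source, so the target variable $x'$ and the location/label variables become source-dependent; the conclusion's target is then built entirely from source-dependent variables. One does this rule-by-rule for $\cdot$, $+$, $\parallel$, $\leftmerge$, $\mid$, $\Theta$, $\triangleleft$, $::$ and $\partial_H$. Second, I would check the freshness condition for the two recursion rules. Their common source is $\langle X_i\mid E\rangle$, whose head symbol $\langle\,\cdot\mid E\rangle$ does not occur in $\Sigma_0$, hence the source is fresh; this immediately satisfies the first disjunct of condition~(2) of Theorem~\ref{TCE}, and no analysis of the premise is needed.

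Having checked all hypotheses, Theorem~\ref{TCE} yields that $T_0\oplus T_1$ is a conservative extension of $T_0$, i.e.\ that for every closed APTC-with-static-localities term $t$ the transitions $t\xrightarrow[u]{e}t'$ and the termination predicates derivable in the extended system are exactly those derivable in $T_0$ alone; this is the statement of the theorem.

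The only mildly delicate point — the one I would be most careful about — is confirming the ``positive after reduction'' clause in the presence of the negative premises $y\nrightarrow^{e_2}$, $y\nrightarrow^{e_3}$ appearing in the rules for the unless operator $\triangleleft$. The key observation is that these negative premises occur already in $T_0$, so whatever reduction/stratification argument makes $T_0$ well-defined applies verbatim to $T_0\oplus T_1$: the recursion rules in $T_1$ are purely positive and introduce no new negative premises, and the fresh head symbol $\langle\,\cdot\mid E\rangle$ cannot create new dependency cycles through the negative premises of $T_0$. Everything else is the bookkeeping induction for source-dependency described above, which presents no real obstacle.
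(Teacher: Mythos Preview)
Your proposal is correct and follows essentially the same route as the paper. The theorem as stated sits in the background section without an explicit proof, but the paper does prove the identical statement for $qAPTC$ with localities (and again for $qAPPTC$ with localities) by exactly your plan: observe that the base TSS is source-dependent, that the recursion rules have a fresh constant $\langle X_i\mid E\rangle$ in their source, and invoke Theorem~\ref{TCE}. Your write-up is considerably more careful than the paper's three-line sketch---in particular your remark about the ``positive after reduction'' clause in the presence of the negative premises of $\triangleleft$ is a point the paper passes over in silence---but the underlying argument is the same.
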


\begin{theorem}[Congruence theorem of APTC with static localities and guarded recursion]
Static location truly concurrent bisimulation equivalences $\sim_p^{sl}$, $\sim_s^{sl}$, $\sim_{hp}^{sl}$ and $\sim_{hhp}^{sl}$ are all congruences with respect to APTC with static localities and guarded
recursion.
\end{theorem}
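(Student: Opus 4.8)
Congruence of $\sim_p^{sl}$, $\sim_s^{sl}$, $\sim_{hp}^{sl}$ and $\sim_{hhp}^{sl}$ with respect to every operator of $APTC^{sl}$ --- the locality prefix $u::{\cdot}$, the sequential composition $\cdot$, the alternative composition $+$, the whole merge $\between$, parallel $\parallel$, left merge $\leftmerge$, communication merge $\mid$, conflict elimination $\Theta$, the unless operator $\triangleleft$ and encapsulation $\partial_H$ --- is already available from the earlier congruence theorems, so it suffices to show that each of the four equivalences is preserved by the guarded recursion construct $\langle X_i\mid E\rangle$. I would first make this precise: for guarded recursive specifications $E=\{X_j=t_j(X_1,\dots,X_n)\}$ and $F=\{X_j=s_j(X_1,\dots,X_n)\}$ over the same recursion variables, if $t_j$ and $s_j$ agree as contexts modulo the equivalence (that is, $t_j(\vec p)\sim s_j(\vec p)$ for all closed tuples $\vec p$), then $\langle X_i\mid E\rangle\sim\langle X_i\mid F\rangle$ for each $i$. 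Since APTC with static localities and guarded recursion is a conservative extension of $APTC^{sl}$ and the equivalences are sound (Theorems \ref{SAPTCPBE21}, \ref{SAPTCSBE21} and their hp/hhp counterparts), I may freely rewrite terms modulo the axioms of Table \ref{AxiomsForParallelism21} wherever convenient.

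For the step and pomset cases I would exhibit an explicit static location step (resp.\ pomset) bisimulation $R$: take the least relation containing all pairs $(\langle X_i\mid E\rangle,\langle X_i\mid F\rangle)$ and closed under the $APTC^{sl}$ operators. Given a transition $\langle X_i\mid E\rangle\xrightarrow[u]{X}P$, Table \ref{TRForGR21} forces it to originate from $t_i(\langle X_1\mid E\rangle,\dots,\langle X_n\mid E\rangle)\xrightarrow[u]{X}P$. By guardedness, iterated substitution of recursion variables by their bodies rewrites $t_i$ to a term provably equal to a head normal form $\sum_k (u_{k1}::a_{k1}\leftmerge\cdots\leftmerge u_{ki_k}::a_{ki_k})\cdot P_k + \sum_l(\cdots)$; applying the same unfolding to $s_i$ on the $F$-side yields the matching head normal form, since $t_i\sim s_i$ as contexts and $\sim$ is a congruence for the $APTC^{sl}$ operators. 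Hence $\langle X_i\mid F\rangle$ can perform the very same multiset $X$ at the \emph{same} location $u$, reaching a term $R$-related to $P$; the consistent-location-association bookkeeping $\varphi\cup\{(u,v)\}$ is trivial since $v=u$, and the converse direction is symmetric.

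For the hp- and hhp- cases the same underlying relation is used, now carried as a posetal relation $R_\varphi$ with the order-isomorphism $f$ extended by $f[e_1\mapsto e_2]$ along each single-event move $C_1\xrightarrow[u]{e_1}C_1'$ matched by $C_2\xrightarrow[u]{e_2}C_2'$; because corresponding events of the unfolded head normal forms carry identical labels and identical locations, $f$ stays an isomorphism and the $\varphi$-component stays consistent. For hhp one additionally keeps $R_\varphi$ downward closed, which I would arrange by closing the candidate relation under sub-triples from the outset and then checking that the transfer property survives this closure, reusing the already-established hhp-congruence for the $APTC^{sl}$ operators on the head normal forms.

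The main obstacle is exactly this hhp-case: hereditary history-preserving bisimilarity is notoriously fragile and is not, in general, a congruence, so the delicate point is to verify that downward-closure of the candidate posetal relation is genuinely preserved when passing through the guarded unfolding --- equivalently, that forming the head normal form of $t_i(\langle X_1\mid E\rangle,\dots,\langle X_n\mid E\rangle)$ introduces no extra backward branching on one side that the other side cannot mirror. A cleaner alternative that avoids this hand manipulation is to observe that adjoining the rules of Table \ref{TRForGR21} to those of $APTC^{sl}$ yields a source-dependent TSS lying in the truly-concurrent-with-localities analogue of the \emph{tyft/tyxt} format: the recursion rules have the ``pure, well-founded'' shape --- a single transition premise whose source $t_i(\langle X_1\mid E\rangle,\dots,\langle X_n\mid E\rangle)$ is built only from already-present function symbols and the recursion constants, with label and target copied verbatim to the conclusion --- which that format admits; since such a format guarantees $\sim_p^{sl},\sim_s^{sl},\sim_{hp}^{sl},\sim_{hhp}^{sl}$ are congruences and the $APTC^{sl}$ fragment has already been checked, the recursion rules cannot destroy the congruence property.
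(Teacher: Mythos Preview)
Your proposal is correct and, at its core, follows exactly the paper's approach: the paper's own argument (given for the analogous $qAPTC$ statement in Section~\ref{qorec}, since the background statement itself is imported from \cite{LOC1} without proof) consists of precisely the two facts you open with --- (i) in a guarded specification the right-hand sides can be rewritten, using the $APTC^{sl}$ axioms and iterated substitution of recursion variables, to the guarded head-normal form, and (ii) the four equivalences are already congruences for every $APTC^{sl}$ operator. Everything beyond that in your write-up --- the explicit bisimulation $R$, the posetal/hhp bookkeeping, and the tyft/tyxt meta-argument --- is additional detail that the paper does not supply; the paper simply records facts (i) and (ii) and stops.
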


The $RDP$ (Recursive Definition Principle) and the $RSP$ (Recursive Specification Principle) are shown in Table \ref{RDPRSP21}.

\begin{center}
\begin{table}
  \begin{tabular}{@{}ll@{}}
\hline No. &Axiom\\
  $RDP$ & $\langle X_i|E\rangle = t_i(\langle X_1|E\rangle,\cdots,\langle X_n|E\rangle)\quad (i\in\{1,\cdots,n\})$\\
  $RSP$ & if $y_i=t_i(y_1,\cdots,y_n)$ for $i\in\{1,\cdots,n\}$, then $y_i=\langle X_i|E\rangle \quad(i\in\{1,\cdots,n\})$\\
\end{tabular}
\caption{Recursive definition and specification principle}
\label{RDPRSP21}
\end{table}
\end{center}

\begin{theorem}[Elimination theorem of APTC with static localities and linear recursion]\label{ETRecursion21}
Each process term in APTC with static localities and linear recursion is equal to a process term $\langle X_1|E\rangle$ with $E$ a linear recursive specification.
\end{theorem}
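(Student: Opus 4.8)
The plan is to prove the statement by induction on the structure of the process term $t$, producing in each case a linear recursive specification $G$ (in the sense of Definition \ref{LRS}) and a distinguished recursion variable $Z_1$ of $G$ such that $t=\langle Z_1|G\rangle$ is provable in APTC with static localities and linear recursion. Every such equality will be obtained uniformly: I exhibit an explicit solution of $G$ assembled from $t$ and its subterms (so that $RDP$ supplies the component equations), and then invoke $RSP$.

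For the base cases $t\in\{e, u::e, \delta\}$ the singleton specification $\{Z_1=t\}$ is already linear (with an empty sum when $t=\delta$; using $L1$ in the case $t=e$), and $t=\langle Z_1|\{Z_1=t\}\rangle$ holds by $RDP$; if $t=\langle X_i|E\rangle$ with $E$ linear, renaming the variables of $E$ so that $X_i$ plays the role of $X_1$ finishes the case. For the inductive step, write the immediate subterms (by the induction hypothesis) as $\langle X_1|E\rangle$ and $\langle Y_1|F\rangle$ with $E=\{X_i=s_i\}_i$ and $F=\{Y_j=t_j\}_j$ linear and over disjoint variable sets. If $t=\langle X_1|E\rangle+\langle Y_1|F\rangle$, take $G=\{Z_1=s_1+t_1\}\cup E\cup F$: the sum of two linear right-hand sides is again of linear form. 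If $t=u::\langle X_1|E\rangle$, push the prefix through every equation of $E$ using $L1$--$L11$, yielding a linear $G$ whose variables are fresh formal symbols $u::X_i$. If $t=\langle X_1|E\rangle\cdot\langle Y_1|F\rangle$, keep $F$ unchanged and, in each equation $X_i=\sum_k(\sigma_k)X_{i_k}+\sum_l(\rho_l)$ of $E$ (with $\sigma_k,\rho_l$ denoting $\leftmerge$-products of located atomic actions), replace the $X_i$ by fresh $Z_i$ and every terminating summand $(\rho_l)$ by $(\rho_l)Y_1$; that $\langle X_i|E\rangle\cdot\langle Y_1|F\rangle$ solves the resulting (still linear) $G$ follows from $A4$, $A5$ and $P5$--$P7$, and $RSP$ closes it.

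The heart of the argument is the parallel fragment $\{\between,\parallel,\leftmerge,\mid\}$, which must be eliminated simultaneously, since a $\leftmerge$-step and a $\mid$-step both continue as a $\between$ (laws $P4$, $P7$, $C4$). For this I build one product specification $G$ whose variables are the symbols $Z^{\circ}_{ij}$ for $\circ\in\{\between,\parallel,\leftmerge,\mid\}$ and all pairs $(i,j)$, together with all of $E$ and $F$. Following the transition rules of Table \ref{TRForAPTC21}, the equation for $Z^{\circ}_{ij}$ collects the summands of $s_i$ (a move of the left component at some location $u$), the summands of $t_j$ (a move of the right component at some location $v$), and the synchronising pairs producing $\gamma(e_1,e_2)$ at $u\diamond v$ (possibly $\delta$), the continuation being the appropriate $Z^{\circ'}_{i'j'}$, $X_{i'}$, $Y_{j'}$ or successful termination, restricted according to $\circ$ (only left-moves-first for $\leftmerge$, only synchronisations for $\mid$, and so on). One then verifies that the family $\langle X_i|E\rangle\circ\langle Y_j|F\rangle$ solves $G$ using $P1$--$P9$ and $C1$--$C8$ together with the causality side conditions, and closes with $RSP$. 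For the remaining operators $\partial_H$, $\Theta$, $\triangleleft$ one applies the operator symbolically to the equations of $E$ (and of $F$ as well, for $\Theta$ and $\triangleleft$), using $D1$--$D6$, $CE1$--$CE6$ and $U1$--$U13$ to restore the linear form (e.g. $D2$ rewrites a prefix atom $u::a$ with $a\in H$ to $\delta$), introducing variables such as $\partial_H(X_i)$, $\Theta(X_i)$, $X_i\triangleleft Y_j$, and again finishes with $RSP$.

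The main obstacle is the product construction for the parallel fragment. One must check (a) that the four merge operators close up under each other, so that the finitely many product variables $Z^{\circ}_{ij}$ suffice; (b) that the locality annotations propagate correctly, i.e. that the $u\diamond v$ and $loc\ll u$ bookkeeping of Table \ref{TRForAPTC21} leaves every prefix of $G$ a genuine $\leftmerge$-product of \emph{located} atomic actions, so that $G$ is a static-locality linear specification and not merely a linear one after erasing localities; and (c) that after all the axiomatic rewriting each summand still has precisely the shape demanded by Definition \ref{LRS} --- a $\leftmerge$-product of located atoms followed by at most one recursion variable --- which entails discharging the side conditions $e_i\leq e_j$ inherited from $P5$--$P7$ and from the transition rules. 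Once these conventions are pinned down, checking that the proposed processes solve $G$ and the final appeal to $RSP$ are routine, and the $+$, $\cdot$, $u::$ and unary-operator cases present no comparable difficulty.
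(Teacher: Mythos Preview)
Your proposal is the standard product-construction argument for elimination in ACP-style process algebras with linear recursion, and the outline is correct in spirit: structural induction on $t$, building at each step an explicit linear specification solved by the relevant combination of subterms, and closing with $RSP$. The identification of the merge family $\{\between,\parallel,\leftmerge,\mid\}$ as the only genuinely delicate case, and the need to carry the $e_1\leq e_2$ side conditions of $P5$--$P7$ through the product construction, are accurate.

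It is worth pointing out, however, that the paper itself does not prove this theorem. Theorem~\ref{ETRecursion21} is stated in the Backgrounds chapter without proof, and the corresponding quantum version later in the paper is ``proved'' by the sentence ``The same as that of $APTC^{sl}$ with linear recursion, we omit the proof, please refer to \cite{LOC1} for details.'' So there is no paper proof to compare against; your sketch supplies exactly the kind of argument that \cite{LOC1} (and ultimately the classical ACP literature) would contain. In that sense you have done more than the paper does here. The one place where your writeup would need to be tightened before it could stand on its own is point (c): in this truly-concurrent calculus the $\leftmerge$ axioms are \emph{conditional} on the causal order $\leq$, so one must argue that for every pair of prefixes arising in the product construction the appropriate instance of $P5$--$P7$ (or its symmetric counterpart via $P4$) actually fires; this is not automatic the way it is in plain ACP, and the paper's cited source is where that detail lives.
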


\begin{theorem}[Soundness of APTC with static localities and guarded recursion]\label{SAPTCR21}
Let $x$ and $y$ be APTC with static localities and guarded recursion terms. If $APTC\textrm{ with guarded recursion}\vdash x=y$, then
\begin{enumerate}
  \item $x\sim_s^{sl} y$;
  \item $x\sim_p^{sl} y$;
  \item $x\sim_{hp}^{sl} y$;
  \item $x\sim_{hhp}^{sl} y$.
\end{enumerate}
\end{theorem}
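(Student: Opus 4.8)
The plan is to prove soundness by the standard route for equational process algebras: first establish that each of the four static-location truly concurrent bisimulation equivalences is a congruence with respect to all operators of APTC with static localities and guarded recursion (this is exactly the preceding congruence theorem, which may be assumed), and then verify that every axiom in the system — the $BATC^{sl}$ axioms $A1$--$A5$, $L1$--$L4$, the parallelism axioms $A6$--$A7$, $P1$--$P9$, $C1$--$C8$, $CE1$--$CE6$, $U1$--$U13$, $L5$--$L10$, the encapsulation axioms $D1$--$D6$, $L11$, and the recursion principles $RDP$, $RSP$ — is sound modulo each of $\sim_s^{sl}$, $\sim_p^{sl}$, $\sim_{hp}^{sl}$, $\sim_{hhp}^{sl}$. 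Given congruence, soundness of the full derivation system follows from soundness of each individual axiom by induction on the length of the derivation of $x=y$.

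The key steps, in order, are as follows. First I would recall (or cite the earlier statements) that $\sim_s^{sl}$, $\sim_p^{sl}$, $\sim_{hp}^{sl}$, $\sim_{hhp}^{sl}$ are congruences for all operators, including guarded recursion. Second, since the soundness of the $BATC^{sl}$ axioms and of the parallelism/encapsulation axioms modulo these four equivalences has already been established in the preceding soundness theorems (Theorems~\ref{SAPTCPBE21}, \ref{SAPTCSBE21}, \ref{SAPTCHPBE21}, and the $hhp$ analogue), the only genuinely new content is $RDP$ and $RSP$. For $RDP$, i.e. $\langle X_i|E\rangle = t_i(\langle X_1|E\rangle,\cdots,\langle X_n|E\rangle)$, I would exhibit the obvious bisimulation: the identity relation extended by the pair $(\langle X_i|E\rangle, t_i(\langle X_1|E\rangle,\cdots,\langle X_n|E\rangle))$ together with a suitable consistent location association $\varphi$, and observe that the two transition rules for guarded recursion in Table~\ref{TRForGR21} make every transition of $\langle X_i|E\rangle$ literally a transition of $t_i(\langle X_1|E\rangle,\cdots,\langle X_n|E\rangle)$ with the same label and location, and conversely; the posetal/downward-closed refinements for $\sim_{hp}^{sl}$ and $\sim_{hhp}^{sl}$ are handled by taking $f$ to be the identity on events. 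Third, for $RSP$ — if $y_i = t_i(y_1,\cdots,y_n)$ for all $i$, then $y_i = \langle X_i|E\rangle$ — I would build the relation $R = \{(u::y_i, u::\langle X_i|E\rangle) : u\in Loc^*, i\in\{1,\cdots,n\}\}$ up to the congruence-closure under all operators, and prove it is a static-location (hp-, hhp-) bisimulation by using guardedness: because $E$ is guarded, each $t_i$ can be rewritten (modulo the already-sound axioms) into the head-normal form of Definition~\ref{LRS}, so every transition consumes at least one atomic action $u_{jk}::a_{jk}$ and passes to a term of strictly smaller "recursion depth", which lets the matching of transitions go through and the relation be closed under taking derivatives.

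The main obstacle I expect is the $RSP$ argument, specifically the bookkeeping needed to show that the candidate relation for $RSP$ is genuinely a static-location bisimulation in all four flavours simultaneously while the recursion variables are being unfolded. One has to be careful that guardedness is preserved under the rewriting into head-normal form (using that the parallel-composition head normal form produced by $P5$--$P9$, $C1$--$C8$ keeps a leading atomic action at a definite location), and that the location labels tracked by $\varphi$ and by the $u::(-)$ prefixing stay consistent — i.e. that $u\diamond u' \Leftrightarrow v\diamond v'$ is maintained as the bisimulation is extended at each step, which is where the $L$-axioms ($L1$--$L11$) and the locality transition rules $\frac{x\xrightarrow[u]{e}x'}{loc::x\xrightarrow[loc\ll u]{e}loc::x'}$ do their work. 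For the $hhp$ case there is the additional requirement of downward closure of the posetal relation, which follows since the relation is defined uniformly over all configurations reachable by unfolding and is closed under sub-configurations by construction, but this should be checked explicitly. The pomset and step cases are essentially the single-event case read off on sets of pairwise-concurrent events, so they add no real difficulty once the hp-case is in hand.
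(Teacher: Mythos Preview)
Your proposal is correct and follows exactly the route the paper takes: reduce soundness to congruence (already established) plus soundness of each individual axiom, with the only new content being $RDP$ and $RSP$ from Table~\ref{RDPRSP21}. In fact you supply considerably more detail than the paper does: the statement you were given sits in the Backgrounds chapter and carries no proof at all (it is quoted from \cite{LOC1}), and the analogous in-paper result for $qAPTC$ with localities and guarded recursion gives only the one-line reduction ``since $\sim_\bullet^{sl}$ is an equivalence and a congruence, we only need to check each axiom in Table~\ref{RDPRSP} is sound'' and then explicitly leaves the verification of $RDP$ and $RSP$ as an exercise --- precisely the bisimulation constructions you sketched.
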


\begin{theorem}[Completeness of APTC with static localities and linear recursion]\label{CAPTCR21}
Let $p$ and $q$ be closed APTC with static localities and linear recursion terms, then,
\begin{enumerate}
  \item if $p\sim_s^{sl} q$ then $p=q$;
  \item if $p\sim_p^{sl} q$ then $p=q$;
  \item if $p\sim_{hp}^{sl} q$ then $p=q$;
  \item if $p\sim_{hhp}^{sl} q$ then $p=q$.
\end{enumerate}
\end{theorem}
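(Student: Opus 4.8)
The plan is to prove each of the four completeness statements by the standard strategy used throughout this chapter for completeness theorems: reduce closed process terms to a normal form via recursion and elimination, and then invoke the corresponding soundness result together with a careful argument that bisimilar normal forms are provably equal. I would begin by recalling Theorem~\ref{ETRecursion21}: every process term in APTC with static localities and linear recursion is provably equal to a term of the form $\langle X_1|E\rangle$ with $E$ a linear recursive specification. Hence it suffices to prove the claim for two such terms $\langle X_1|E_1\rangle$ and $\langle X_1|E_2\rangle$. The key device is then $RSP$ (Table~\ref{RDPRSP21}): to show $\langle X_1|E_1\rangle = \langle X_1|E_2\rangle$ it is enough to exhibit a single linear recursive specification $E$ of which both terms are provable solutions, since by $RSP$ any two solutions of a guarded (in particular linear) specification coincide.

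The construction of that common specification $E$ is the heart of the argument. I would form a ``product'' specification whose recursion variables are pairs $Z_{(s,t)}$ indexed by pairs of states $(s,t)$ of the transition systems of $\langle X_1|E_1\rangle$ and $\langle X_1|E_2\rangle$ that are related by a witnessing static location bisimulation $R_\varphi$ of the appropriate type ($\sim_s^{sl}$, $\sim_p^{sl}$, $\sim_{hp}^{sl}$, or $\sim_{hhp}^{sl}$). For the step case ($\sim_s^{sl}$), a linear equation for $Z_{(s,t)}$ is assembled from the common step transitions: for each matching pair of transitions $s\xrightarrow[u]{\{e_1,\dots,e_k\}}s'$ and $t\xrightarrow[v]{\{e_1,\dots,e_k\}}t'$ with $(s',t')\in R_\varphi$, we include a summand $(u_{1}::a_{1}\leftmerge\cdots\leftmerge u_{k}::a_{k})\cdot Z_{(s',t')}$ (or the corresponding terminal summand when $s'=t'=\surd$), using that the labels and location data agree up to the equivalence. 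One then checks, using the transition rules of guarded recursion (Table~\ref{TRForGR21}), the axioms of parallelism (Table~\ref{AxiomsForParallelism21}), and the locality axioms $L1$--$L11$, that $\langle X_1|E_1\rangle$ and $\langle X_1|E_2\rangle$, each paired with the obvious projection, satisfy the equations of $E$; applying $RSP$ then yields equality. The pomset case is obtained from the step case by the standard observation (used for $\sim_p^{sl}$ versus $\sim_s^{sl}$ in the BATC development) that on closed normal forms pomset and step bisimulation coincide on the finitely-branching systems arising here, so the same $E$ works. The hp- and hhp-cases are handled analogously but with the posetal product $\mathcal{C}(\mathcal{E}_1)\overline{\times}\mathcal{C}(\mathcal{E}_2)$ in place of the ordinary product, so that $E$'s variables are indexed by triples $(C_1,f,C_2)$ and the single-event transition rules are matched one event at a time via $f[e_1\mapsto e_2]$; downward closure of the witnessing relation is exactly what is needed to make the hhp-case go through without modification.

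I expect the main obstacle to be guaranteeing that the product specification $E$ is genuinely \emph{linear} (equivalently, that the construction terminates with finitely many recursion variables) and that the summands can actually be written in the prescribed linear form of Definition~\ref{LRS} with the $\leftmerge$-nesting of located atomic actions. Because APTC with static localities lacks an elimination theorem for the combination of $\cdot$, $+$ and $\leftmerge$ (as the text explicitly notes), one cannot simply normalise the right-hand sides term-by-term; instead I would argue that, after reducing both original terms to linear form via Theorem~\ref{ETRecursion21}, their transition systems are finite, each outgoing transition is labelled by a parallel multiset $\{e_1,\dots,e_k\}$ of atomic events carried at fixed sublocations, and the bisimulation relation restricted to reachable states is finite; consequently the pair/triple index set is finite and each equation has finitely many summands, each already a located-$\leftmerge$ product followed by a recursion variable. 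A secondary delicate point is the bookkeeping of the location-association parameter $\varphi$: one must check that along matched transitions the pairs $(u,v)$ added to $\varphi$ are consistent in the sense of the cla condition, which is precisely what the definition of static location bisimulation guarantees, so this is a matter of care rather than of genuine difficulty. Once linearity and well-formedness of $E$ are secured, the remaining verifications are routine applications of the axioms and of $RDP$/$RSP$, exactly parallel to the completeness proofs for BATC with static localities (Theorems~\ref{CBATCSBE21}, \ref{CBATCPBE21}, \ref{CBATCHPBE21}, \ref{CBATCHHPBE21}) and for APTC with static localities without recursion.
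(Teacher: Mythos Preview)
The paper does not actually supply a proof for this theorem: it appears in the Backgrounds chapter (Section~\ref{bg}) as a preliminary result stated without proof, with the understanding that the argument is carried out in the cited reference \cite{LOC1}. Your proposal sketches the standard ACP-style completeness argument for linear recursion --- elimination to $\langle X_1|E\rangle$ form via Theorem~\ref{ETRecursion21}, construction of a product linear specification indexed by bisimulation-related pairs (or posetal triples), and an appeal to $RSP$ --- which is precisely the technique one expects in \cite{LOC1} and which the later quantum completeness theorems in this paper explicitly reduce to. So your approach is correct and is in spirit ``the same'' as what the paper relies on, even though the paper itself only records the statement here.
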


\subsubsection{Abstraction}

To abstract away from the internal implementations of a program, and verify that the program exhibits the desired external behaviors, the silent step $\tau$ (and making $\tau$ distinct
by $\tau^e$) and abstraction operator $\tau_I$ are introduced, where $I\subseteq \mathbb{E}$ denotes the internal events. The silent step $\tau$ represents the internal
events, when we consider the external behaviors of a process, $\tau$ events can be removed, that is, $\tau$ events must keep silent. The transition rule of $\tau$ is shown in Table
\ref{TRForTau21}. In the following, let the atomic event $e$ range over $\mathbb{E}\cup\{\delta\}\cup\{\tau\}$, and let the communication function
$\gamma:\mathbb{E}\cup\{\tau\}\times \mathbb{E}\cup\{\tau\}\rightarrow \mathbb{E}\cup\{\delta\}$, with each communication involved $\tau$ resulting into $\delta$.

\begin{center}
    \begin{table}
        $$\frac{}{\tau\xrightarrow{\tau}\surd}$$
        \caption{Transition rule of the silent step}
        \label{TRForTau21}
    \end{table}
\end{center}


The silent step $\tau$ as an atomic event, is introduced into $E$. Considering the recursive specification $X=\tau X$, $\tau s$, $\tau\tau s$, and $\tau\cdots s$ are all its solutions, that is, the solutions make the existence of $\tau$-loops which cause unfairness. To prevent $\tau$-loops, we extend the definition of linear recursive specification (Definition \ref{LRS}) to the guarded one.

\begin{definition}[Guarded linear recursive specification]\label{GLRS21}
A recursive specification is linear if its recursive equations are of the form

$(u_{11}::a_{11}\leftmerge\cdots\leftmerge u_{1i_1}::a_{1i_1})X_1+\cdots+(u_{k1}::a_{k1}\leftmerge\cdots\leftmerge u_{ki_k}::a_{ki_k})X_k\\
+(v_{11}::b_{11}\leftmerge\cdots\leftmerge v_{1j_1}::b_{1j_1})+\cdots+(v_{1j_1}::b_{1j_1}\leftmerge\cdots\leftmerge v_{1j_l}::b_{lj_l})$

where $a_{11},\cdots,a_{1i_1},a_{k1},\cdots,a_{ki_k},b_{11},\cdots,b_{1j_1},b_{1j_1},\cdots,b_{lj_l}\in \mathbb{E}\cup\{\tau\}$, and the sum above is allowed to be empty, in which case
it represents the deadlock $\delta$.

A linear recursive specification $E$ is guarded if there does not exist an infinite sequence of $\tau$-transitions
$\langle X|E\rangle\xrightarrow{\tau}\langle X'|E\rangle\xrightarrow{\tau}\langle X''|E\rangle\xrightarrow{\tau}\cdots$.
\end{definition}

\begin{theorem}[Conservitivity of APTC with static localities and silent step and guarded linear recursion]
APTC with static localities and silent step and guarded linear recursion is a conservative extension of APTC with static localities and linear recursion.
\end{theorem}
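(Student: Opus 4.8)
The plan is to obtain the result as a direct application of the general conservative-extension meta-theorem, Theorem~\ref{TCE}. Write $T_0$ for the transition system specification of APTC with static localities and (guarded) linear recursion, i.e.\ the rules of Tables~\ref{SETRForBATC21}, \ref{TRForAPTC21}, \ref{TRForEncapsulation21} and \ref{TRForGR21}, and write $T_1$ for the single new rule of Table~\ref{TRForTau21}, namely $\frac{}{\tau\xrightarrow{\tau}\surd}$. Then APTC with static localities and silent step and guarded linear recursion is exactly the algebra associated with $T_0\oplus T_1$, so it suffices to check that the hypotheses of Theorem~\ref{TCE} hold for this pair.

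First I would record that $T_0$ and $T_0\oplus T_1$ are positive after reduction: the only rules carrying negative premises are those for the unless operator $\triangleleft$, whose premises have the form $y\nrightarrow^{e}$, and, exactly as in the earlier conservativity theorems of this chapter, these can be reduced away because the positive part of the transition relation on closed terms is already determined by the remaining rules via a stratification on term size. Next, for condition~(1) of Theorem~\ref{TCE}, I would show $T_0$ is source-dependent by induction following the construction of $T_0$: the single-event rules of $BATC^{sl}$ are source-dependent (for instance in $\frac{x\xrightarrow[u]{e}x'}{loc::x\xrightarrow[loc\ll u]{e}loc::x'}$ the source contains $x$, which through the premise makes $u$ and then $x'$ source-dependent); the rules for $\parallel$, $\leftmerge$, $\mid$, $\Theta$, $\triangleleft$ and $\partial_H$ all have every premise term built from variables already occurring in the source; and the two guarded-recursion rules of Table~\ref{TRForGR21} have source $\langle X_i|E\rangle$ together with a premise whose left-hand side $t_i(\langle X_1|E\rangle,\cdots,\langle X_n|E\rangle)$ contains no variables at all, hence are trivially source-dependent. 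Therefore $T_0$ is source-dependent.

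Finally, condition~(2) of Theorem~\ref{TCE} is immediate for $T_1$: its unique rule $\frac{}{\tau\xrightarrow{\tau}\surd}$ has as its source the constant $\tau$, which is a function symbol of $\Sigma_1\setminus\Sigma_0$ and hence fresh, so the first disjunct of the condition is satisfied (and there is nothing else to check, since $T_1$ has no premises). Invoking Theorem~\ref{TCE} then yields that $T_0\oplus T_1$ is a conservative extension of $T_0$, which is precisely the assertion.

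The only genuinely delicate step is the first one: confirming that the combined system is positive after reduction in the presence of the negative premises of $\triangleleft$. The remaining verifications --- source-dependency of $T_0$ and freshness of $\tau$ --- are routine structural bookkeeping.
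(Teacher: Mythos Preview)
Your proposal is correct and follows essentially the same approach as the paper. The paper itself states this particular background theorem without proof (deferring to \cite{LOC1}), but the analogous conservativity theorems that \emph{are} proved in the paper (e.g.\ for $qAPTC$ with localities and silent step in section~\ref{qoabs}) use exactly your argument in compressed form: the base system's rules are source-dependent, and the only new rule has the fresh constant $\tau$ as its source, so Theorem~\ref{TCE} applies. Your write-up is in fact more careful than the paper's one-line proofs, since you explicitly address the ``positive after reduction'' hypothesis needed because of the negative premises in the $\triangleleft$ rules, a point the paper leaves implicit.
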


\begin{theorem}[Congruence theorem of APTC with static localities and silent step and guarded linear recursion]
Rooted branching static location truly concurrent bisimulation equivalences $\approx_{rbp}^{sl}$, $\approx_{rbs}^{sl}$ and $\approx_{rbhp}^{sl}$ are all congruences with respect to
APTC with static localities and silent step and guarded linear recursion.
\end{theorem}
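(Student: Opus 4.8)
\emph{Proof plan.} The plan is to establish the congruence property separately for each operator of APTC with static localities extended with the silent step and the abstraction operator $\tau_I$, and then to lift it through guarded linear recursion. The guiding principle is the usual decomposition of rooted branching bisimilarity: a relation is a rooted branching static location bisimulation exactly when its first step is matched strictly and all subsequent behaviour is matched by an ordinary branching static location bisimulation. So I would first prove that $\approx_{bp}^{sl}$, $\approx_{bs}^{sl}$ and $\approx_{bhp}^{sl}$ are preserved by every operator except $+$ (and by the recursion construct), and then handle $+$, the location prefix at the head, and the root layer via the rootedness clauses. For the $\tau$-free operators ($\cdot$, $+$, $u::$, $\parallel$, $\leftmerge$, $\mid$, $\between$, $\Theta$, $\triangleleft$, $\partial_H$) the transition rules are exactly those already used for APTC with static localities, whose congruence for $\sim_p^{sl}$, $\sim_s^{sl}$, $\sim_{hp}^{sl}$, $\sim_{hhp}^{sl}$ is available; the only genuinely new phenomenon is the presence of $\tau$-steps that must be absorbed, so the work concentrates on checking that each rule is branching-bisimulation safe in the located setting.

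Concretely, for a fixed operator $f$ I assume the arguments satisfy $p_i\approx_{bp}^{sl}q_i$ via witnesses $R_i$ (carrying a cla $\varphi_i$), and take $R$ to be the smallest relation containing all pairs $(\mathcal{C}[\vec p],\mathcal{C}[\vec q])$ for $\mathcal{C}$ a context over the signature, again indexed by a cla $\varphi$ updated as in the definitions. I then verify the four branching clauses by case analysis on which transition rule produces a move $C_1\xrightarrow[u]{X}C_1'$ of $\mathcal{C}[\vec p]$: moves coming from the context's own rules are matched directly, and moves using a premise $p_i\xrightarrow[u']{X'}p_i'$ are replayed via the matching (possibly $\tau^*$-prefixed) move of $q_i$ and reassembled under $f$, using that the side conditions on locations ($u\diamond v$, $loc\ll u$) depend only on the location components and that $\varphi$ is extended consistently. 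The delicate operators are $\Theta$ and $\triangleleft$, whose rules carry negative premises of the form $y\nrightarrow^{e}$; here I would first prove the auxiliary fact that branching-bisimilar configurations, after $\tau$-saturation, enable the same sets of non-$\tau$ labels, so that a negative premise satisfied on the $p$-side remains satisfied on the suitably $\tau$-advanced $q$-side. The rule $\tau\xrightarrow{\tau}\surd$ is trivially safe, for $\tau_I$ one checks that renaming events of $I$ into $\tau$ commutes with the branching matching, and the root layer for $+$ (and for a head occurrence of $u::$ or $\cdot$) is recovered from the strict first-step clause of $\approx_{rbp}^{sl}$ exactly as in the classical ACP argument.

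For recursion, given a guarded linear recursive specification $E$ (Definition \ref{GLRS21}) and a second one $E'$ over the same recursion variables whose right-hand sides arise from those of $E$ by replacing argument subterms by $\approx_{rbp}^{sl}$-equivalent ones, I want $\langle X_i|E\rangle\approx_{rbp}^{sl}\langle X_i|E'\rangle$. By the rules of guarded recursion in Table \ref{TRForGR21} the moves of $\langle X_i|E\rangle$ are precisely those of $t_i(\langle X_1|E\rangle,\cdots,\langle X_n|E\rangle)$, so one sets up the relation consisting of the pairs $(\langle X_i|E\rangle,\langle X_i|E'\rangle)$ together with the operator-congruence relation already built, and shows it is a rooted branching static location bisimulation up to the operator closure; guardedness guarantees that only finitely many $\tau$-steps separate consecutive visible steps, which makes the branching matching of the unfoldings terminate. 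A cheaper alternative is to invoke $RSP$ together with the soundness results already available to conclude that both sides are the unique solution of the same guarded linear specification up to $\approx_{rbp}^{sl}$. Either route also yields the statement for $\approx_{rbs}^{sl}$ by replacing pomset transitions with steps, and for $\approx_{rbhp}^{sl}$ by carrying the order-isomorphism $f[e_1\mapsto e_2]$ through the whole construction.

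I expect the main obstacle to be the branching clause for the parallel operators ($\parallel$, $\leftmerge$, $\mid$, $\between$) in the truly concurrent, located setting: a single step $C_1\xrightarrow[u]{X}C_1'$ may execute a multiset $X$ of pairwise concurrent events distributed over several components at a compound location $u=u_1\diamond u_2\diamond\cdots$, whereas the match on the other side must first perform a nontrivial $\tau^*$-prefix $C_2\xrightarrow{\tau^*}C_2^0$ inside only some of the components. One must show these $\tau$-prefixes can be chosen so that the resulting compound move still carries an isomorphic pomset label, the compound location still relates to $u$ under the cla $\varphi$, and no $\tau$ of one component spuriously disables a synchronisation $\gamma(e_1,e_2)$ needed by another. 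Keeping the cla and the pomset isomorphism simultaneously consistent through these reorderings — together with the negative-premise lemma for $\Theta$ and $\triangleleft$ — is where the real care is needed; the remaining operators and the recursion step are then routine adaptations of the corresponding $\tau$-free and non-located arguments.
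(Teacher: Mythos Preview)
Your plan is correct but takes a substantially more detailed route than the paper. The statement itself sits in the Backgrounds chapter without proof (it is imported from \cite{LOC1}); the paper does, however, prove the analogous congruence theorem for $qAPTC$ with localities and silent step and guarded linear recursion, and there the argument is a three-point sketch: (i) the right-hand sides of a guarded linear recursive specification can be brought into the standard form by the axioms and unfolding; (ii) the strong equivalences $\sim_p^{sl}$, $\sim_s^{sl}$, $\sim_{hp}^{sl}$, $\sim_{hhp}^{sl}$ are already congruences for all operators and are contained in the corresponding rooted branching equivalences, which the paper takes as delivering congruence of the latter; (iii) the extension of $\mathbb{E}$ to $\mathbb{E}\cup\{\tau\}$ preserves congruence, and that verification is explicitly omitted. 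You instead follow the classical compositional route: prove that the underlying branching equivalences $\approx_{bp}^{sl}$, $\approx_{bs}^{sl}$, $\approx_{bhp}^{sl}$ are preserved by every operator except $+$, recover rootedness at the top layer, and treat recursion either directly or via $RSP$. Your discussion of the $\tau^*$-prefixed matching under $\parallel$, $\leftmerge$, $\mid$, $\between$ and of the negative-premise saturation lemma for $\Theta$, $\triangleleft$ is exactly the work the paper's step (iii) hides behind ``we omit it''; your route actually discharges the obligations, while the paper's buys brevity by delegating to the reference. One minor over-reach: the abstraction operator $\tau_I$ is not part of the signature in \emph{this} theorem --- it enters only in the subsequent $APTC_\tau^{sl}$ congruence theorem --- so that case can be dropped here.
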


We design the axioms for the silent step $\tau$ in Table \ref{AxiomsForTau21}.

\begin{center}
\begin{table}
  \begin{tabular}{@{}ll@{}}
\hline No. &Axiom\\
  $B1$ & $e\cdot\tau=e$\\
  $B2$ & $e\cdot(\tau\cdot(x+y)+x)=e\cdot(x+y)$\\
  $B3$ & $x\leftmerge\tau=x$\\
  $L13$ & $u::\tau=\tau$\
\end{tabular}
\caption{Axioms of silent step}
\label{AxiomsForTau21}
\end{table}
\end{center}

\begin{theorem}[Elimination theorem of APTC with static localities and silent step and guarded linear recursion]\label{ETTau21}
Each process term in APTC with static localities and silent step and guarded linear recursion is equal to a process term $\langle X_1|E\rangle$ with $E$ a guarded linear recursive
specification.
\end{theorem}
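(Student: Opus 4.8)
The plan is to mirror the two earlier elimination results — Theorem \ref{ETEncapsulation21} and especially Theorem \ref{ETRecursion21} — in two stages: first show that every process term equals $\langle X_1|E\rangle$ for some \emph{linear} recursive specification $E$ over $\mathbb{E}\cup\{\tau\}$, then transform $E$ into a \emph{guarded} linear recursive specification as in Definition \ref{GLRS21}.

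\textbf{Stage 1.} I would argue by structural induction on the process term $t$, following the proof of Theorem \ref{ETRecursion21} verbatim. The only genuinely new base case is $t=\tau$, which is already in linear form; and adding $\tau$ as an atomic event changes nothing in the inductive steps for $u::(\cdot)$, $+$, $\cdot$, $\leftmerge$, $\between$, $\mid$, $\Theta$, $\triangleleft$ and $\partial_H$. Given $t$ built from subterms $\langle X_1|E\rangle$ and $\langle Y_1|F\rangle$ with $E,F$ linear, one assembles a single linear recursive specification whose leading variable denotes $t$, using the axioms $A1$–$A7$, $L1$–$L13$, $P1$–$P9$, $C1$–$C8$, $CE1$–$CE6$, $U1$–$U13$, $D1$–$D6$, $B1$–$B3$ together with $RDP$/$RSP$, exactly as before. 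The resulting equations stay within the format of Definition \ref{LRS} (now with $a_{ij},b_{ij}\in\mathbb{E}\cup\{\tau\}$) because prefixing by $::$-annotated atomic events and by $\leftmerge$-products of atomic events is precisely what that format allows. This yields $t=\langle X_1|E\rangle$ with $E$ linear.

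\textbf{Stage 2.} It remains to eliminate $\tau$-loops. Form the finite directed graph on the recursion variables of $E$ with an edge $X_i\to X_j$ whenever the equation for $X_i$ contains a summand $\tau X_j$ (using $L13$ we may take the empty location). A $\tau$-loop in the sense of Definition \ref{GLRS21} exists iff this graph has a cycle. Pick a nontrivial strongly connected component $C$, or a self-loop; by substituting right-hand sides around $C$ and collecting summands with $A1$–$A3$, one derives for a chosen $X\in C$ an equation $X=\tau X+s$ in which $s$ is linear and has no $\tau X$ summand. Then, using $B1$ and $B2$ (to absorb the leading $\tau$ under the protection of the guarded part of $s$) together with $RSP$, one shows $\langle X|E\rangle$ also solves the specification obtained by deleting the summand $\tau X$ and re-routing the other variables of $C$ through $X$. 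Since each such step strictly decreases the number of $\tau$-edges while keeping all equations linear, the procedure terminates in a linear recursive specification $E'$ without $\tau$-loops; $E'$ is guarded and $\langle X_1|E\rangle=\langle X_1|E'\rangle$ by $RSP$, completing the proof.

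\textbf{Main obstacle.} The delicate part is Stage 2. Carefully justifying that after collapsing a strongly connected component one really obtains an equation of the exact shape $X=\tau X+s$ with $s$ linear and $\tau X$-free, that applying $B2$/$RSP$ does not spoil linearity of the remaining equations, and that the loop-removal iteration terminates — which requires a well-founded measure such as the number of variables lying on $\tau$-cycles, refined by the number of $\tau$-edges — is where the ACP-style bookkeeping is heaviest. Stage 1 is routine once the proof of Theorem \ref{ETRecursion21} is in hand.
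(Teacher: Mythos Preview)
The paper does not prove this theorem at all: it sits in Section~2 (Backgrounds), where results from \cite{LOC1} are merely quoted, and no proof environment follows the statement. The analogous quantum elimination theorems later in the paper (e.g.\ the one labelled \texttt{ETTau} in Section~3.4 and \texttt{ETTauG} in Section~5.4) have ``proofs'' that read, in full, ``The same as that of $APTC^{sl}$ with silent step and guarded linear recursion, we omit the proof, please refer to \cite{LOC1}/\cite{LOC2} for details.'' So there is no argument in this paper to compare your proposal against.

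Your two-stage plan is the standard ACP-style route (cf.\ Fokkink \cite{ACP}) and is the right shape. One remark on Stage~2: since the hypothesis already restricts to process terms built over \emph{guarded} linear recursion, any $\langle X|E\rangle$ occurring as a subterm has $E$ guarded, and the only additional $\tau$'s come from finitely many explicit $\tau$ constants in the term. Consequently the linear specification produced by Stage~1 cannot contain an infinite $\tau$-chain, and Stage~2 is essentially vacuous --- you do not need the SCC/$B2$/$RSP$ loop-collapsing machinery you sketch. That machinery becomes genuinely necessary only once the abstraction operator $\tau_I$ is added (where $CFAR$ enters), which is the \emph{next} subsection of the paper, not this one. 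So your ``main obstacle'' is overstated for this particular theorem, though the argument you give would still go through.
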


\begin{theorem}[Soundness of APTC with static localities and silent step and guarded linear recursion]\label{SAPTCTAU21}
Let $x$ and $y$ be APTC with static localities and silent step and guarded linear recursion terms. If APTC with static localities and silent step and guarded linear recursion
$\vdash x=y$, then
\begin{enumerate}
  \item $x\approx_{rbs}^{sl} y$;
  \item $x\approx_{rbp}^{sl} y$;
  \item $x\approx_{rbhp}^{sl} y$;
  \item $x\approx_{rbhhp}^{sl} y$.
\end{enumerate}
\end{theorem}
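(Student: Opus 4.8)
The plan is to follow the standard route for soundness of an axiomatic system modulo a behavioural equivalence: first reduce to checking axioms one at a time via congruence, then verify each axiom, treating the genuinely new material (the silent-step laws and $RSP$) with care. By the congruence theorem for APTC with static localities and silent step and guarded linear recursion, each of $\approx_{rbp}^{sl}$, $\approx_{rbs}^{sl}$, $\approx_{rbhp}^{sl}$ is preserved by all operators of the signature; for $\approx_{rbhhp}^{sl}$ one additionally checks that the witnessing relations built below are downward closed. Granting this, it suffices to show that for every axiom $s=t$ of the system — $A1$–$A7$, $L1$–$L13$, $P1$–$P9$, $C1$–$C8$, $CE1$–$CE6$, $U1$–$U13$, $D1$–$D6$, $RDP$, $RSP$, and the silent-step laws $B1$, $B2$, $B3$ — one has $s\approx_{rb\ast}^{sl} t$ for $\ast\in\{p,s,hp,hhp\}$; soundness of $x=y$ then follows by induction on the length of its derivation.

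For all axioms not involving $\tau$ or recursion (the BATC, parallelism and encapsulation laws), soundness modulo the \emph{strong} static-location equivalences $\sim_p^{sl}$, $\sim_s^{sl}$, $\sim_{hp}^{sl}$, $\sim_{hhp}^{sl}$ has already been established in the earlier soundness theorems of the excerpt. Since a strong static-location (h)(hp-)bisimulation is a fortiori a rooted branching one — no $\tau$-moves are invoked — these axioms are automatically sound modulo $\approx_{rb\ast}^{sl}$. For $RDP$ we use the transition rules of guarded recursion in Table~\ref{TRForGR21}: the identity relation augmented with the pairs $(\langle X_i|E\rangle,\, t_i(\langle X_1|E\rangle,\dots,\langle X_n|E\rangle))$ and their residuals is immediately a rooted branching static-location bisimulation of each of the four kinds, because the two have literally the same outgoing transitions at the same locations.

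The new work concerns $B1$: $e\cdot\tau=e$, $B3$: $x\leftmerge\tau=x$, $L13$: $u::\tau=\tau$, the branching law $B2$: $e\cdot(\tau\cdot(x+y)+x)=e\cdot(x+y)$, and the principle $RSP$. For $B1$, $B3$, $L13$ one exhibits a small explicit witnessing relation and verifies the rootedness and branching clauses directly; rootedness is exactly what forces the leading $e$ (respectively the $\leftmerge$-context) so that, e.g., $\tau$ alone is \emph{not} identified with $\epsilon$. The crux is $B2$: after the initial $e$-step one must show $\tau\cdot(x+y)+x \approx_{bp}^{sl}(x+y)$ (and the hp-/hhp-variants), using the "either $X\equiv\tau^*$ and remain related, or match after a $\tau^*$-prefix" clause of branching static-location (hp-)bisimulation, with the location-association parameter $R_\varphi$ threaded through each such prefix and matched step. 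For $RSP$, one invokes the guardedness hypothesis of Definition~\ref{GLRS21} — the absence of an infinite $\tau$-transition sequence $\langle X|E\rangle\xrightarrow{\tau}\langle X'|E\rangle\xrightarrow{\tau}\cdots$ — to prove that any two solutions of a guarded linear recursive specification are rooted branching static-location bisimilar; guardedness rules out the unfair $\tau$-loops that would otherwise destroy uniqueness (as with $X=\tau X$).

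The main obstacle I anticipate is twofold. First, propagating the location-association $\varphi$ (and, for the hp-/hhp-variants, the order-isomorphism $f$ on configurations together with its downward closure) correctly through the branching clauses: every $\tau^*$-prefix and every matched step updates $\varphi$, and one must keep $f$ coherent across them. Second, the uniqueness-of-solutions argument underlying $RSP$, where guardedness has to be combined with a careful case analysis of how $\tau$-steps interleave with observable steps of a linear specification. The hhp-case of $B2$ and of $RSP$ is the most delicate point, since there downward closure of the witnessing posetal relation must be checked explicitly rather than merely assumed.
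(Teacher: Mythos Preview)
Your proposal is correct and follows the same high-level template as the paper: invoke congruence of the rooted branching static-location equivalences to reduce soundness to an axiom-by-axiom check, then argue by induction on the derivation of $x=y$. The paper itself does not prove this background theorem (it is imported from \cite{LOC1}), and where it proves the analogous result for $qAPTC$ with localities and silent step, the proof is nothing more than that template with the actual axiom verifications ``left as exercises to the readers''.

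Your write-up is therefore considerably more detailed than anything the paper offers: you identify which axioms are new at this layer ($B1$--$B3$, $L13$, $RSP$), explain why the previously verified strong-bisimulation axioms carry over for free, and sketch the witnessing relations and the role of guardedness in $RSP$. One small note: the paper's congruence theorem at this layer lists only $\approx_{rbp}^{sl}$, $\approx_{rbs}^{sl}$, $\approx_{rbhp}^{sl}$ explicitly, so your remark that the hhp-case requires a separate downward-closure check is well placed --- the paper silently asserts item~(4) without ever isolating that ingredient.
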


\begin{theorem}[Completeness of APTC with static localities and silent step and guarded linear recursion]\label{CAPTCTAU21}
Let $p$ and $q$ be closed APTC with static localities and silent step and guarded linear recursion terms, then,
\begin{enumerate}
  \item if $p\approx_{rbs}^{sl} q$ then $p=q$;
  \item if $p\approx_{rbp}^{sl} q$ then $p=q$;
  \item if $p\approx_{rbhp}^{sl} q$ then $p=q$;
  \item if $p\approx_{rbhhp}^{sl} q$ then $p=q$.
\end{enumerate}
\end{theorem}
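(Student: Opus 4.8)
The plan is to follow the standard ACP strategy for completeness modulo rooted branching equivalence, adapted to static localities and the silent step, reducing everything to a single application of $RSP$. Throughout I treat the four equivalences uniformly and comment on the extra bookkeeping needed for the pomset, hp and hhp versions at the end.

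\emph{Step 1: reduction to normal forms.} First I would apply the elimination theorem for this signature, Theorem \ref{ETTau21}, to replace $p$ and $q$ by provably equal terms $s_0=\langle X_1|E_1\rangle$ and $t_0=\langle Y_1|E_2\rangle$ with $E_1,E_2$ guarded \emph{linear} recursive specifications; after renaming, assume the recursion variables of $E_1$ and $E_2$ are disjoint. By soundness (Theorem \ref{SAPTCTAU21}), provable equality implies each of the four equivalences, so it suffices to prove the converse for these normal forms: if $s_0\approx_{rbs}^{sl} t_0$ (resp. $\approx_{rbp}^{sl},\approx_{rbhp}^{sl},\approx_{rbhhp}^{sl}$) then $s_0=t_0$. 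A key structural observation I would record here is that, because $E_1,E_2$ are linear, every reachable state of either normal form is, up to provable equality, a sum of summands of the form $(u_{1}::a_{1}\leftmerge\cdots\leftmerge u_{n}::a_{n})\cdot\langle X|E_i\rangle$ and $(v_{1}::b_{1}\leftmerge\cdots\leftmerge v_{m}::b_{m})$; in particular the pomset transitions available at a state coincide with its step transitions, since consecutive parallel heads are separated by sequential composition and hence causally ordered.

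\emph{Step 2: the combined specification.} Fix a witnessing rooted branching static location step bisimulation $R$. Treating each parallel head $u_{1}::a_{1}\leftmerge\cdots\leftmerge u_{n}::a_{n}$ as a single compound label (a finite multiset of located events), the transition graphs of $s_0$ and $t_0$ become ordinary finitely-branching labelled graphs over this richer alphabet, and the branching-bisimulation analysis is then exactly the sequential one. For every pair of reachable states $(s,t)$ with $s\approx_{bs}^{sl} t$ I introduce a fresh recursion variable $Z_{s,t}$, and I define a recursive specification $E$ over these variables whose equation for $Z_{s,t}$ lists, for each compound-labelled summand of $s$, a matching summand built from $t$ using $R$, together with the symmetric summands from $t$; the empty case is $\delta$. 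I would check that $E$ is linear (the heads are still parallel compositions of located events) and guarded: by Definition \ref{GLRS21} neither $E_1$ nor $E_2$ admits an infinite sequence of $\tau$-transitions, so the matching of silent moves always terminates and no $\tau$-loop is created.

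\emph{Step 3: solving $E$ and concluding.} Using the equational axioms of the theory together with $RDP$, I would show that the family of $E_1$-states is a solution of $E$ and that the family of $E_2$-states is a solution of $E$. Here $B1$ and $B2$ are precisely what absorb the stuttering $\tau$-steps forced by branching (rather than strong) bisimulation, while $B3$ and $L13$ dispose of localised silent steps occurring inside parallel heads; the rooted clause of $R$ ensures the first move is matched with no such absorption, so that $s_0$ and $t_0$ themselves — not merely $\approx_b$-variants — are the values taken by the root variable $Z_{s_0,t_0}$. Then $RSP$ gives $s_0=\langle Z_{s_0,t_0}|E\rangle=t_0$, hence $p=q$. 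For $\approx_{rbp}^{sl}$ this is the same argument by Step 1; for $\approx_{rbhp}^{sl}$ one additionally carries the order-isomorphism $f$ on configurations through the construction (routine, since the compound-label view already exposes the concurrency at each state); and for $\approx_{rbhhp}^{sl}$ one takes the downward closure of the constructed posetal relation and checks it is preserved, which is immediate from the construction.

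\emph{Main obstacle.} Compared with the strong-bisimulation completeness of Theorem \ref{CAPTCR21}, the delicate point is entirely in Steps 2--3: a summand of one normal form need not be matched by a single transition of the other but only after a sequence of $\tau$-steps, so one must verify that the combined specification $E$ remains linear and guarded (this is exactly where absence of $\tau$-loops, i.e. guardedness of $E_1,E_2$, is indispensable), that the $B1$/$B2$/$B3$ rewriting used to exhibit both solutions terminates, and that the root transition is kept separate so that the non-rooted branching equivalence used inside $E$ never leaks to the outermost step.
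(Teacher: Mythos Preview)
The paper does not actually prove this theorem: Theorem~\ref{CAPTCTAU21} sits in the Backgrounds chapter (Section~2.3.5) where results from \cite{LOC1} are quoted without argument, and no proof environment follows its statement. The only place the paper comes close is the quantum analogue in Section~\ref{qoabs}, whose ``proof'' simply observes that the quantum rooted-branching equivalences imply the classical ones together with equality of quantum state, and then says ``According to the completeness of $APTC^{sl}$ with silent step and guarded linear recursion (please refer to \cite{LOC1} for details)\ldots'' --- i.e.\ it defers entirely to the very result you are proving.

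Your proposal, by contrast, is a genuine proof sketch and follows the standard ACP route (elimination to guarded linear normal forms, construction of a merged linear specification over bisimulation-pairs, exhibiting both normal forms as solutions via $B1$--$B3$/$L13$, and a single invocation of $RSP$). That is the argument one expects to find in \cite{LOC1}, and nothing in your outline is wrong; the identification of guardedness of $E_1,E_2$ as the mechanism preventing $\tau$-loops in the merged specification, and the use of the root condition to avoid absorbing the first step, are exactly the two places where care is needed. So your approach is not ``different from the paper's'' so much as it \emph{is} the proof the paper declines to give.
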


The unary abstraction operator $\tau_I$ ($I\subseteq \mathbb{E}$) renames all atomic events in $I$ into $\tau$. APTC with static localities and silent step and abstraction operator is
called $APTC_{\tau}$ with static localities. The transition rules of operator $\tau_I$ are shown in Table \ref{TRForAbstraction21}.

\begin{center}
    \begin{table}
        $$\frac{x\xrightarrow[u]{e}\surd}{\tau_I(x)\xrightarrow[u]{e}\surd}\quad e\notin I
        \quad\quad\frac{x\xrightarrow[u]{e}x'}{\tau_I(x)\xrightarrow[u]{e}\tau_I(x')}\quad e\notin I$$

        $$\frac{x\xrightarrow[u]{e}\surd}{\tau_I(x)\xrightarrow{\tau}\surd}\quad e\in I
        \quad\quad\frac{x\xrightarrow[u]{e}x'}{\tau_I(x)\xrightarrow{\tau}\tau_I(x')}\quad e\in I$$
        \caption{Transition rule of the abstraction operator}
        \label{TRForAbstraction21}
    \end{table}
\end{center}

\begin{theorem}[Conservitivity of $APTC_{\tau}$ with static localities and guarded linear recursion]
$APTC_{\tau}$ with static localities and guarded linear recursion is a conservative extension of APTC with static localities and silent step and guarded linear recursion.
\end{theorem}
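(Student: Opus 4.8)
The plan is to follow the standard conservative-extension pattern used throughout this book, namely to invoke the general meta-theorem on conservative extensions (Theorem~\ref{TCE}) together with the source-dependency hypothesis. So the proof will have two halves: first verify that the TSS for APTC with static localities and silent step and guarded linear recursion is source-dependent (this is the hypothesis on $T_0$), and second check that each new transition rule introduced by the abstraction operator $\tau_I$ (the rules in Table~\ref{TRForAbstraction21}) satisfies the side condition (2) of Theorem~\ref{TCE}: either its source is fresh, or it has a premise over the old signature whose variables all occur in the source.

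\textbf{Step 1: source-dependency of the base system.} I would observe that all transition rules in Tables \ref{SETRForBATC21}, \ref{TRForAPTC21}, \ref{TRForEncapsulation21}, \ref{TRForGR21} and for the silent step \ref{TRForTau21} have the property that every variable appearing anywhere in the rule already appears in the source, or appears in the target of a premise whose own variables are source-dependent. Concretely, in each rule the source is a term like $x\cdot y$, $x\parallel y$, $\partial_H(x)$, $u::x$, $\langle X_i|E\rangle$, etc., whose variables are $x,y$; the premises are of the form $x\xrightarrow[u]{e}x'$ or $x\xrightarrow[u]{e}\surd$, so $x'$ (and the label variables) become source-dependent; and no rule introduces an entirely new variable in a target. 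Hence the base TSS is source-dependent. I would state this as a routine inspection rather than write it all out.

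\textbf{Step 2: checking the abstraction rules.} The four rules for $\tau_I$ all have source $\tau_I(x)$, which contains the fresh function symbol $\tau_I \notin \Sigma_0$; therefore their source is fresh, and side condition (2) of Theorem~\ref{TCE} is satisfied immediately — we never even need the alternative ``has an old-signature premise'' clause. I would also note that $\tau_I$ introduces no new labels or predicates (the labels $e$ and $\tau$ and the predicate $\surd$, $\downarrow$ all already occur in $T_0$), so there is nothing further to check on the label/predicate side, and that both TSSs are positive after reduction since all rules here are in positive (indeed, pure $\mathrm{tyft}$-like) format with no negative premises except the already-present $y\nrightarrow^{e}$ guards, which are handled in the base system exactly as before. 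Then I would conclude by applying Theorem~\ref{TCE} to get that $APTC_\tau$ with static localities and guarded linear recursion is a conservative extension of APTC with static localities and silent step and guarded linear recursion.

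\textbf{Main obstacle.} There is essentially no deep obstacle; the only thing requiring care is making sure the ``base'' system really is source-dependent in the presence of the guarded-recursion rules (Table~\ref{TRForGR21}), where the source $\langle X_i|E\rangle$ is a constant and the premise mentions the compound term $t_i(\langle X_1|E\rangle,\dots,\langle X_n|E\rangle)$ — one must check that this is still fine because that term is closed (contains no variables), so source-dependency holds vacuously for it, and the target variable $y$ of the premise becomes source-dependent. The other mild point is to confirm that adding $\tau_I$ does not retroactively create new transitions for old terms, which is exactly what freshness of $\tau_I$ guarantees via the meta-theorem. So the write-up is short: one paragraph recalling Theorem~\ref{TCE}, one verifying source-dependency by inspection, one verifying freshness of $\tau_I$, and the conclusion.
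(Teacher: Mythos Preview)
Your proposal is correct and takes essentially the same approach as the paper. The paper's proofs of the analogous conservativity results (e.g.\ for $qAPTC^{sl}_{\tau}$) consist of exactly the two observations you identify---that the base TSS is source-dependent and that the rules for $\tau_I$ contain a fresh operator in their source---followed by an appeal to the conservative-extension meta-theorem (Theorem~\ref{TCE}); your write-up simply makes the source-dependency check for the guarded-recursion rules more explicit than the paper does.
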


\begin{theorem}[Congruence theorem of $APTC_{\tau}$ with static localities and guarded linear recursion]
Rooted branching static location truly concurrent bisimulation equivalences $\approx_{rbp}^{sl}$, $\approx_{rbs}^{sl}$, $\approx_{rbhp}^{sl}$ and $\approx_{rbhhp}^{sl}$ are all
congruences with respect to $APTC_{\tau}$ with static localities and guarded linear recursion.
\end{theorem}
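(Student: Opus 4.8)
The plan is to reduce everything to the single new operator $\tau_I$, exploiting the congruence result already available for APTC with static localities and silent step and guarded linear recursion. By the conservativity theorem just stated, passing to $APTC_{\tau}$ with static localities and guarded linear recursion leaves unchanged every transition $t\xrightarrow[u]{X}t'$, $t\xrightarrow{\tau}t'$ and every predicate $t\downarrow$ with $t$ a term of the old signature; hence $\approx_{rbp}^{sl}$, $\approx_{rbs}^{sl}$, $\approx_{rbhp}^{sl}$ and $\approx_{rbhhp}^{sl}$ remain congruences for all the old operators (the $\approx_{rbhhp}^{sl}$ clause being argued exactly as the other three), and it only remains to show that each of the four equivalences is compatible with $\tau_I$, i.e.\ that $x\approx x'$ implies $\tau_I(x)\approx\tau_I(x')$ for the corresponding $\approx$.

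First I would establish compatibility of $\tau_I$ with the \emph{non-rooted} branching equivalences $\approx_{bp}^{sl}$, $\approx_{bs}^{sl}$, $\approx_{bhp}^{sl}$, $\approx_{bhhp}^{sl}$, since a rooted branching bisimulation relates its continuations only up to the non-rooted version. Take a branching static location pomset bisimulation $R_{\varphi}\subseteq\mathcal{C}(\mathcal{E}_1)\times\mathcal{C}(\mathcal{E}_2)$ with $(\emptyset,\emptyset)\in R_{\varphi}$ witnessing $x\approx_{bp}^{sl}x'$, and set $R'_{\varphi}=\{(\tau_I(C_1),\tau_I(C_2))\mid (C_1,C_2)\in R_{\varphi}\}$. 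To check the four defining clauses of branching bisimulation for $R'_{\varphi}$, consider a move $\tau_I(C_1)\xrightarrow[u]{X}\tau_I(C_1')$; by the rules of Table~\ref{TRForAbstraction21} it comes from a move $C_1\xrightarrow[u]{Y}C_1'$ where either $Y=X$ with $X\cap I=\emptyset$ (labels and location kept), or $Y\subseteq I$ and the induced transition of $\tau_I(C_1)$ is a $\tau$-move (so $X\equiv\tau^*$). In the first case the branching clause for $R_{\varphi}$ gives a path $C_2\xrightarrow{\tau^*}C_2^0$ with $(C_1,C_2^0)\in R_{\varphi}$ and $C_2^0\xRightarrow[v]{X}C_2'$ with $(C_1',C_2')\in R_{\varphi\cup\{(u,v)\}}$; applying $\tau_I$ — a $\tau$-move of a configuration maps to a $\tau$-move of its $\tau_I$-image, and an $I$-labelled move maps to a $\tau$-move as well — turns this into the required matching move of $\tau_I(C_2)$, and $(\tau_I(C_1'),\tau_I(C_2'))\in R'_{\varphi\cup\{(u,v)\}}$ holds by construction. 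The second case falls under the ``$X\equiv\tau^*$'' alternative of the branching clause. The two termination clauses only involve $\downarrow$, which $\tau_I$ transfers through its $\surd$-rules. The clauses for moves of $\tau_I(C_2)$ are symmetric, so $\tau_I(x)\approx_{bp}^{sl}\tau_I(x')$. The step, hp and hhp variants go through verbatim; for hhp one also observes that $R'_{\varphi}$ inherits downward closure from $R_{\varphi}$.

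Next I would lift this to the rooted equivalences. Suppose $x\approx_{rbp}^{sl}x'$ via a rooted branching static location pomset bisimulation $R_{\varphi}$ with $(\emptyset,\emptyset)\in R_{\varphi}$. Every root transition of $\tau_I(x)$ is induced, through the rules of Table~\ref{TRForAbstraction21}, by a root transition of $x$: a move $x\xrightarrow[u]{Y}x''$ with $Y\cap I=\emptyset$ induces $\tau_I(x)\xrightarrow[u]{Y}\tau_I(x'')$, while a move $x\xrightarrow[u]{Y}x''$ with $Y\subseteq I$, or a root $\tau$-move of $x$, induces $\tau_I(x)\xrightarrow{\tau}\tau_I(x'')$ (and similarly with $\surd$ in place of $x''$). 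In each case the rootedness clause for $R_{\varphi}$ supplies a matching root transition $x'\xrightarrow[v]{Y}x'''$ (resp.\ a root $\tau$-move) with $x''\approx_{bp}^{sl}x'''$, and applying $\tau_I$ together with the $\tau_I$-compatibility of $\approx_{bp}^{sl}$ established above yields the matching root transition of $\tau_I(x')$ with $\tau_I(x'')\approx_{bp}^{sl}\tau_I(x''')$. The root termination clauses transfer directly through the $\surd$-rules. Hence $\tau_I(x)\approx_{rbp}^{sl}\tau_I(x')$, and the same reasoning gives compatibility of $\approx_{rbs}^{sl}$, $\approx_{rbhp}^{sl}$ and $\approx_{rbhhp}^{sl}$ with $\tau_I$; combined with the first paragraph this proves the theorem.

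The step I expect to be the main obstacle is the location bookkeeping in the non-rooted case. Because the $\tau$-rules of $\tau_I$ carry no location, an abstracted move contributes nothing to the consistent location association $\varphi$; one must check that the $\tau^*$-paths demanded by the branching clauses can indeed be reassembled after applying $\tau_I$ (possibly splitting a pomset step into its $I$-part and its non-$I$-part), and that $\varphi$ is extended only along genuinely labelled steps, so that the cla conditions of the static location relations — and, in the hhp case, downward closure of the constructed posetal relation — are preserved throughout. Once the precise lifting of the $\tau_I$-rules to pomset and step transitions is fixed, the remaining verifications are routine and follow the pattern of the earlier congruence proofs.
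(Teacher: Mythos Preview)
Your proposal is correct and follows the same high-level strategy as the paper: reduce the congruence question to the single new operator $\tau_I$ (all other operators being handled by the congruence theorem for the smaller system together with conservativity), and then verify that each rooted branching equivalence is preserved by $\tau_I$. The paper's own treatment is far terser than yours --- it states that the equivalences are easily seen to be equivalence relations, observes that only preservation under $\tau_I$ needs to be checked, declares this trivial, and leaves it as an exercise --- so your two-stage argument (first the non-rooted branching equivalences, then lifting to the rooted ones via the defining root condition) and your explicit discussion of the location bookkeeping under the $\tau$-rules of $\tau_I$ go well beyond what the paper records, but they are exactly the details one would supply to discharge that exercise.
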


We design the axioms for the abstraction operator $\tau_I$ in Table \ref{AxiomsForAbstraction21}.

\begin{center}
\begin{table}
  \begin{tabular}{@{}ll@{}}
\hline No. &Axiom\\
  $TI1$ & $e\notin I\quad \tau_I(e)=e$\\
  $TI2$ & $e\in I\quad \tau_I(e)=\tau$\\
  $TI3$ & $\tau_I(\delta)=\delta$\\
  $TI4$ & $\tau_I(x+y)=\tau_I(x)+\tau_I(y)$\\
  $TI5$ & $\tau_I(x\cdot y)=\tau_I(x)\cdot\tau_I(y)$\\
  $TI6$ & $\tau_I(x\leftmerge y)=\tau_I(x)\leftmerge\tau_I(y)$\\
  $L14$ & $u::\tau_I(x)=\tau_I(u::x)$\\
  $L15$ & $e\notin I\quad \tau_I(u::e)=u::e$\\
  $L16$ & $e\in I\quad \tau_I(u::e)=\tau$\\
\end{tabular}
\caption{Axioms of abstraction operator}
\label{AxiomsForAbstraction21}
\end{table}
\end{center}

\begin{theorem}[Soundness of $APTC_{\tau}$ with static localities and guarded linear recursion]\label{SAPTCABS21}
Let $x$ and $y$ be $APTC_{\tau}$ with static localities and guarded linear recursion terms. If $APTC_{\tau}$ with static localities and guarded linear recursion $\vdash x=y$, then
\begin{enumerate}
  \item $x\approx_{rbs}^{sl} y$;
  \item $x\approx_{rbp}^{sl} y$;
  \item $x\approx_{rbhp}^{sl} y$;
  \item $x\approx_{rbhhp}^{sl} y$.
\end{enumerate}
\end{theorem}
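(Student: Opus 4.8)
The plan is to reduce soundness to the soundness of the individual axioms and then to dispatch the genuinely new ones by exhibiting explicit witnessing relations. First I would use that $\approx_{rbs}^{sl}$, $\approx_{rbp}^{sl}$, $\approx_{rbhp}^{sl}$ and $\approx_{rbhhp}^{sl}$ are congruences with respect to $APTC_\tau$ with static localities and guarded linear recursion (the congruence theorem stated just above) and that each is an equivalence; a straightforward induction on the length of a derivation of $APTC_\tau\vdash x=y$ then reduces the claim to showing that for each axiom $s=t$ of the theory, and each closed substitution instance, $s$ and $t$ are related by all four equivalences.

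Next I would partition the axiom set. All axioms of $BATC^{sl}$, of parallelism (Table~\ref{AxiomsForParallelism21}), of encapsulation (Table~\ref{AxiomsForEncapsulation21}), the silent-step axioms $B1$--$B3$ and $L13$, and the recursion principles $RDP/RSP$ are already sound modulo these rooted branching static location equivalences by Theorem~\ref{SAPTCTAU21}; moreover, by conservativity the transition behaviour of terms not containing $\tau_I$ is unchanged, so those witnessing relations still apply. Hence only the abstraction axioms $TI1$--$TI6$ and the locality--abstraction axioms $L14$--$L16$ of Table~\ref{AxiomsForAbstraction21} need fresh verification. For each such axiom I would read off a candidate relation from the transition rules of $\tau_I$ in Table~\ref{TRForAbstraction21}: for $TI1$--$TI3$ and $L15$--$L16$ the two sides have identical one-step behaviour, so the identity relation extended by the axiom instances works immediately; for $TI4$--$TI6$ and $L14$ one takes the relation generated by the axiom instances together with the identity on subterms, and checks that any move of one side---which must come from a move of the argument, either keeping its label and location (when $e\notin I$) or being relabelled to $\tau$ with its location discarded (when $e\in I$)---is matched by the corresponding move of the other side with targets again in the relation. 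Termination is preserved because $\tau_I$ sends $\surd$-reaching transitions to $\surd$-reaching transitions. The pomset/step variants follow by the same relations once one observes that $\tau_I$ relabels a concurrent set of events pointwise; the hp/hhp variants follow because the order-isomorphism $f$ is unaffected, since $\tau$-transitions (original or produced by $\tau_I$) contribute no event to the configuration and no pair to $\varphi$; the hhp case additionally needs downward closure of the constructed relations, which is immediate from their shape.

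I expect the main obstacle to be the rooted branching bookkeeping: relabelling some $e\in I$ to $\tau$ must not create a $\tau$-transition that has to be treated as inert where the original $e$ was not, and the rootedness clause---first step matched exactly, residuals only required branching static location bisimilar---must still go through after abstraction. Thus for $TI4$, $TI6$ and $L14$ I would verify that when the argument's first step is some $e$ with $e\in I$, both sides perform the matching $\tau$ and the residuals $\tau_I(\cdot)$ lie in $\approx_{bp}^{sl}$ (respectively $\approx_{bs}^{sl}$, $\approx_{bhp}^{sl}$, $\approx_{bhhp}^{sl}$); this amounts to soundness of the same axioms modulo the non-rooted branching static location equivalences, proved by the same explicit relations but now using the inert-$\tau$ clauses of those definitions. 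The remaining subtlety is the hhp case, where downward closure must be checked explicitly, but since every witnessing relation here is the identity together with finitely many generator pairs closed under the operators, this is routine.
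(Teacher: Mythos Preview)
Your proposal is correct and begins exactly as the paper does for the analogous soundness theorems: use that each $\approx_{rb\ast}^{sl}$ is an equivalence and a congruence, then reduce by induction on derivations to checking each axiom instance. For the theorem in question the paper gives no proof at all---it is a background result cited from \cite{LOC1}---and for the parallel $qAPTC^{sl}_\tau$ soundness theorem the paper's proof is merely ``since $\approx_{rbs}^{sl}$ is both an equivalent and a congruent relation \ldots\ we only need to check if each axiom in Table~\ref{AxiomsForqAbstractionLeft} is sound \ldots\ We leave them as exercises to the readers'', repeated verbatim for the other three equivalences.

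Where you differ is in actually carrying out what the paper defers: you partition the axiom set, invoke Theorem~\ref{SAPTCTAU21} and conservativity for the old axioms, and then sketch explicit witnessing relations for $TI1$--$TI6$ and $L14$--$L16$, including the rooted-versus-branching split for the first step and the downward-closure check for the hhp case. None of this appears in the paper. So your route is not genuinely different---it is the same reduction followed by the details the paper omits---but it is substantially more complete, and the subtleties you flag (inertness of newly created $\tau$'s, rootedness at the first step, pointwise relabelling for pomset/step, preservation of $f$ for hp/hhp) are exactly the right ones to worry about.
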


Though $\tau$-loops are prohibited in guarded linear recursive specifications (see Definition \ref{GLRS21}) in a specifiable way, they can be constructed using the abstraction operator,
for example, there exist $\tau$-loops in the process term $\tau_{\{a\}}(\langle X|X=aX\rangle)$. To avoid $\tau$-loops caused by $\tau_I$ and ensure fairness, the concept of cluster
and $CFAR$ (Cluster Fair Abstraction Rule) \cite{CFAR} are still valid in true concurrency, we introduce them below.

\begin{definition}[Cluster]\label{CLUSTER21}
Let $E$ be a guarded linear recursive specification, and $I\subseteq \mathbb{E}$. Two recursion variable $X$ and $Y$ in $E$ are in the same cluster for $I$ iff there exist sequences of
transitions $\langle X|E\rangle\xrightarrow[u]{\{b_{11},\cdots, b_{1i}\}}\cdots[u]\xrightarrow{\{b_{m1},\cdots, b_{mi}\}}\langle Y|E\rangle$ and
$\langle Y|E\rangle\xrightarrow[v]{\{c_{11},\cdots, c_{1j}\}}\cdots\xrightarrow[v]{\{c_{n1},\cdots, c_{nj}\}}\langle X|E\rangle$, where
$b_{11},\cdots,b_{mi},c_{11},\cdots,c_{nj}\in I\cup\{\tau\}$.

$u_1::a_1\leftmerge\cdots\leftmerge u_k::a_k$ or $(u_1::a_1\leftmerge\cdots\leftmerge u_k::a_k) X$ is an exit for the cluster $C$ iff: (1) $u_1::a_1\leftmerge\cdots\leftmerge u_k::a_k$
or $(u_1::a_1\leftmerge\cdots\leftmerge u_k::a_k) X$ is a summand at the right-hand side of the recursive equation for a recursion variable in $C$, and (2) in the case of
$(u_1::a_1\leftmerge\cdots\leftmerge u_k::a_k) X$, either $a_l\notin I\cup\{\tau\}(l\in\{1,2,\cdots,k\})$ or $X\notin C$.
\end{definition}

\begin{center}
\begin{table}
  \begin{tabular}{@{}ll@{}}
\hline No. &Axiom\\
  $CFAR$ & If $X$ is in a cluster for $I$ with exits \\
           & $\{(u_{11}::a_{11}\leftmerge\cdots\leftmerge u_{1i}::a_{1i})Y_1,\cdots,(u_{m1}::a_{m1}\leftmerge\cdots\leftmerge u_{mi}::a_{mi})Y_m,$ \\
           & $v_{11}::b_{11}\leftmerge\cdots\leftmerge v_{1j}::b_{1j},\cdots,v_{n1}::b_{n1}\leftmerge\cdots\leftmerge v_{nj}::b_{nj}\}$, \\
           & then $\tau\cdot\tau_I(\langle X|E\rangle)=$\\
           & $\tau\cdot\tau_I((u_{11}::a_{11}\leftmerge\cdots\leftmerge u_{1i}::a_{1i})\langle Y_1|E\rangle+\cdots+(u_{m1}::a_{m1}\leftmerge\cdots\leftmerge u_{mi}::a_{mi})\langle Y_m|E\rangle$\\
           & $+v_{11}::b_{11}\leftmerge\cdots\leftmerge v_{1j}::b_{1j}+\cdots+v_{n1}::b_{n1}\leftmerge\cdots\leftmerge v_{nj}::b_{nj})$\\
\end{tabular}
\caption{Cluster fair abstraction rule}
\label{CFAR21}
\end{table}
\end{center}

\begin{theorem}[Soundness of $CFAR$]\label{SCFAR21}
$CFAR$ is sound modulo rooted branching truly concurrent bisimulation equivalences $\approx_{rbs}^{sl}$, $\approx_{rbp}^{sl}$, $\approx_{rbhp}^{sl}$ and $\approx_{rbhhp}^{sl}$.
\end{theorem}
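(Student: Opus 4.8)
The plan is to argue soundness semantically. An application of $CFAR$ produces an equation of the form $\tau\cdot\tau_I(\langle X|E\rangle)=\tau\cdot\tau_I(P)$, where $X$ lies in a cluster $C$ of the guarded linear recursive specification $E$ for $I$, and $P$ is the sum of the exits of $C$ with the appropriate recursion variables $Y_i$ plugged in. It suffices to show that the two sides are rooted branching static location bisimilar, for step, pomset, hp- and hhp-. Since a rooted branching bisimulation only needs its top layer to match transitions exactly and everything below it to be branching bisimilar, I would split the work: (a) prove $\tau_I(\langle X|E\rangle)\approx_{bp}^{sl}\tau_I(P)$ (and the $\approx_{bs}^{sl},\approx_{bhp}^{sl},\approx_{bhhp}^{sl}$ analogues) by exhibiting one explicit branching static location bisimulation; (b) note that prefixing both sides with $\tau$ adds exactly one $\tau$-move, carrying no location, matched on the nose, so rooted branching bisimilarity follows from (a).

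For step (a) I would first read off the shape of the cluster. Because $E$ is guarded and linear, the equation of every recursion variable $Z\in C$ is a finite sum of the two linear forms; by the definitions of cluster and of exit, the summands $(w_1::c_1\leftmerge\cdots\leftmerge w_r::c_r)Z'$ with $Z'\in C$ and all $c_l\in I\cup\{\tau\}$ are precisely the internal summands, while every other summand (trailing variable outside $C$, or carrying a label outside $I\cup\{\tau\}$, or terminal) is an exit, and together these exits make up $P$. Applying $\tau_I$ and the transition rules of Tables \ref{TRForGR21} and \ref{TRForAbstraction21}, each internal summand of $Z$ becomes a silent move $\tau_I(\langle Z|E\rangle)\xrightarrow{\tau^*}\tau_I(\langle Z'|E\rangle)$ carrying no location, whereas each exit summand yields, up to $\tau_I$, literally the same transition from $\tau_I(\langle Z|E\rangle)$ and from $\tau_I(P)$, at the same location, to corresponding targets ($\tau_I(\langle Y_i|E\rangle)$ or $\surd$). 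From the cluster definition I obtain the reachability lemma: for all $Z,Z'\in C$ there is a finite sequence $\tau_I(\langle Z|E\rangle)\xrightarrow{\tau^*}\tau_I(\langle Z'|E\rangle)$, so every exit of $C$ is reachable from $\tau_I(\langle X|E\rangle)$ after finitely many $\tau$'s.

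Then I would take $R$ to be all pairs $(\tau_I(\langle Z|E\rangle),\tau_I(P))$ with $Z\in C$, together with the identity relation on every term reachable from $\tau_I(P)$ through an exit step (the $\tau_I(\langle Y_i|E\rangle)$, their derivatives, and $\surd$), and verify the branching-bisimulation clauses. An internal $\tau$-move of $\tau_I(\langle Z|E\rangle)$ inside $C$ is matched by zero moves of $\tau_I(P)$, staying in $R$; an exit move of $\tau_I(\langle Z|E\rangle)$ is a summand of $P$, so $\tau_I(P)$ performs the same transition at the same location into the identity part of $R$; an exit move of $\tau_I(P)$ coming from the equation of some $Z'\in C$ is matched by first letting $\tau_I(\langle Z|E\rangle)\xrightarrow{\tau^*}\tau_I(\langle Z'|E\rangle)$ (reachability lemma), which stays $R$-related to $\tau_I(P)$, and then doing that exit; the predicate $\downarrow$ is handled the same way via terminal exits. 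Since every matched non-silent transition uses the same location on both sides, the location association accumulated along the bisimulation stays within the diagonal and is therefore a consistent location association, so $R$ is a branching static location step and pomset bisimulation. For $\approx_{bhp}^{sl}$ and $\approx_{bhhp}^{sl}$ I would decorate the pairs of $R$ with the order-isomorphism $f$ between the configurations executed so far; internal moves are $\tau$-labelled (hence removed in $\hat C$) and exit moves are identical on the two sides, so $f$ is always the identity on the relevant events, giving the posetal condition, and $R$ is visibly downward closed, giving the hhp case.

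The main obstacle — and where the hypotheses of $CFAR$ are genuinely used — is the fair-abstraction core of step (a): justifying via the reachability lemma that any exit offered by $\tau_I(P)$ can be delayed-and-matched from an arbitrary state of the cluster, even though $\tau_I(\langle X|E\rangle)$ may $\tau$-diverge inside $C$ (this is legitimate precisely because the branching bisimulations here carry no divergence-preservation clause). Guardedness and linearity of $E$ are exactly what make $C$ finite with a strongly connected internal graph, and the definition of exit is exactly what makes the matched exit transitions coincide on the two sides; without the leading $\tau$ the two sides would be branching, but not rooted branching, bisimilar. The residual locality bookkeeping and the promotion to hp-/hhp-bisimulation are then routine given the congruence and soundness results already established for $APTC_{\tau}$ with static localities.
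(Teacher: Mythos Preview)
Your proposal is correct and follows the classical fair-abstraction argument: exhibit an explicit branching static location bisimulation pairing every $\tau_I(\langle Z|E\rangle)$ for $Z$ in the cluster with $\tau_I(P)$, use strong connectedness of the cluster (built into Definition~\ref{CLUSTER21}) to match exits after finitely many $\tau$'s, and then observe that the leading $\tau$ upgrades branching to rooted branching bisimilarity. The locality bookkeeping and the hp/hhp decorations you sketch are the right add-ons for the static-location setting.

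By contrast, the paper gives essentially no proof here: Theorem~\ref{SCFAR21} sits in the background chapter and is stated without argument, and the parallel theorem in the quantum chapter merely notes that the equivalences are congruences and then leaves the actual soundness check ``as an exercise to the readers''. So your route is not different in spirit---it \emph{is} the exercise the paper declines to do---but it is far more explicit. What your write-up buys is an honest verification of the one nontrivial point (that divergence inside the cluster is harmless because the branching bisimulations in play are not divergence-sensitive, and that every exit is reachable from every cluster state); the paper's approach buys brevity at the cost of deferring exactly that point. One small quibble: finiteness of the cluster comes simply from $E$ being a \emph{finite} recursive specification, not from guardedness or linearity per se; guardedness is what you need so that $E$ is well-behaved before abstraction, and linearity is what gives the summand-by-summand decomposition into internal moves and exits that your relation $R$ relies on.
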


\begin{theorem}[Completeness of $APTC_{\tau}$ with static localities and guarded linear recursion and $CFAR$]\label{CCFAR21}
Let $p$ and $q$ be closed $APTC_{\tau}$ with static localities and guarded linear recursion and $CFAR$ terms, then,
\begin{enumerate}
  \item if $p\approx_{rbs}^{sl} q$ then $p=q$;
  \item if $p\approx_{rbp}^{sl} q$ then $p=q$;
  \item if $p\approx_{rbhp}^{sl} q$ then $p=q$;
  \item if $p\approx_{rbhhp}^{sl} q$ then $p=q$.
\end{enumerate}
\end{theorem}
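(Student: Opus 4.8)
The plan is to follow the standard route for completeness of a process algebra with abstraction and guarded recursion: eliminate every operator except recursion, reduce the claim to a statement about two guarded linear recursive specifications, and then merge those two specifications into a single guarded linear specification using the witnessing bisimulation, discharging the conclusion by $RSP$.

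\emph{Step 1: reduction to recursion variables.} Combining the elimination theorem for $APTC$ with static localities and silent step and guarded linear recursion (Theorem~\ref{ETTau21}) with repeated applications of $CFAR$ (Table~\ref{CFAR21}, sound by Theorem~\ref{SCFAR21}) to collapse the $\tau$-clusters that the abstraction operator $\tau_I$ may create, every closed $APTC_\tau$ with static localities and guarded linear recursion and $CFAR$ term is provably equal to $\langle X_1|E_1\rangle$ for some guarded linear recursive specification $E_1$; so write $p=\langle X_1|E_1\rangle$ and $q=\langle Y_1|E_2\rangle$. Since $\approx_{rbs}^{sl}$, $\approx_{rbp}^{sl}$, $\approx_{rbhp}^{sl}$ and $\approx_{rbhhp}^{sl}$ are congruences for this theory, it suffices to prove: if $\langle X_1|E_1\rangle\approx_{rbs}^{sl}\langle Y_1|E_2\rangle$ (and likewise for the other three equivalences) with $E_1,E_2$ guarded linear, then the theory proves $\langle X_1|E_1\rangle=\langle Y_1|E_2\rangle$. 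I would carry out the argument for $\approx_{rbs}^{sl}$; the $\approx_{rbp}^{sl}$, $\approx_{rbhp}^{sl}$ and $\approx_{rbhhp}^{sl}$ cases are obtained by replacing step transitions with pomset transitions, with single-event transitions carrying the order isomorphism $f$, and with downward-closed posetal relations, respectively.

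\emph{Step 2: merging the specifications.} Let $R$ be the branching static location step bisimulation underlying the rooted one that witnesses $\langle X_1|E_1\rangle\approx_{rbs}^{sl}\langle Y_1|E_2\rangle$. First I would put each $\langle X_i|E_1\rangle$ and $\langle Y_j|E_2\rangle$ into head-normal form using $A1$--$A7$, $P1$--$P9$, $B1$--$B3$ and the locality laws $L1$--$L16$, i.e. a finite sum whose summands are of the shape $(u::a\leftmerge\cdots\leftmerge w::c)\cdot\langle Z|E\rangle$, $u::a\leftmerge\cdots\leftmerge w::c$, or $\tau\cdot\langle Z|E\rangle$. Then I would build a guarded linear recursive specification $G$ with one recursion variable $V_{(s,t)}$ per pair of reachable states with $s\,R\,t$ (with $V_{(X_1,Y_1)}$ governed by the rooted, non-$\tau$-absorbing clause), whose equation for $V_{(s,t)}$ records exactly the joint behaviour forced by $R$: a matched step $s\xrightarrow[u]{X_1}s'$, $t\xrightarrow[v]{X_2}t'$ with $X_1\sim X_2$ and the new pair $(u,v)$ keeping the location association consistent contributes a summand ending in $V_{(s',t')}$, whereas a $\tau$-step of one side absorbed into a stationary move of the other is dealt with by axiom $B2$ exactly as in the classical branching-bisimulation completeness proof, with termination handled through $B1$. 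Guardedness of $G$ holds because $E_1,E_2$ are guarded linear and $CFAR$ has already removed the pure $\tau$-clusters, so $G$ admits no infinite $\tau$-sequence. Checking that $(\langle s|E_1\rangle)_{(s,t)}$ and $(\langle t|E_2\rangle)_{(s,t)}$ are both provable solutions of $G$ and invoking $RSP$ then yields $\langle X_1|E_1\rangle=\langle V_{(X_1,Y_1)}|G\rangle=\langle Y_1|E_2\rangle$.

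\emph{Main obstacle.} The delicate part is exactly the construction of $G$ and the proof that the two candidate tuples solve it, because three ingredients must be controlled at once: (i) the $\tau$-absorption inherent in branching bisimulation, which can be discharged only through $B1$--$B2$ and hence requires careful bookkeeping of which related pair is reached after a block of $\tau$-transitions; (ii) the static-location labels, since a step $C_1\xrightarrow[u]{X}C_1'$ must be matched by $C_2\xrightarrow[v]{X}C_2'$ with $(u,v)$ added to the consistent location association $\varphi$, so the location prefixes appearing in $G$'s equations must be chosen, via $L1$--$L4$ and $L13$, so that every summand of the candidate solutions collapses onto the corresponding summand of $G$ while preserving consistency of $\varphi$; and (iii) the residual $\tau$-loops that $\tau_I$ can introduce, which is precisely why $CFAR$ must be applied in Step~1, before $G$ is built. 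Once $G$ is in place, head normalization, the solution check, and the final $RSP$ step are routine, and for $\approx_{rbhhp}^{sl}$ one only additionally notes that downward-closedness of the witnessing posetal relation is respected by the pair indexing, which raises no new issue.
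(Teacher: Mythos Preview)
The paper does not actually give a proof of this theorem: it appears in the Backgrounds chapter (Section~\ref{bg}) as a preliminary result, stated without proof and implicitly deferred to the cited reference~\cite{LOC1}. Later, when the paper proves the analogous completeness result for the quantum extension $qAPTC^{sl}_{\tau}$, it simply says ``According to the completeness of $APTC^{sl}_{\tau}$ with guarded linear recursion (please refer to \cite{LOC1} for details), we can get the completeness\ldots'', again treating the present theorem as an external black box.

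Your proposal is therefore not comparable to anything in the paper, but it is the correct standard argument: elimination to a guarded linear specification (using $CFAR$ to kill the $\tau$-loops introduced by $\tau_I$), construction of a merged specification $G$ indexed by bisimulation pairs, verification that both sides solve $G$, and a final appeal to $RSP$. This is precisely the Baeten--Weijland style completeness proof adapted to the located, truly concurrent setting, and the three ``delicate points'' you flag (branching $\tau$-absorption via $B1$--$B2$, consistency of the location association $\varphi$, and the necessity of $CFAR$ before building $G$) are exactly the places where the argument differs from the plain $ACP_\tau$ case. There is no gap in your outline; if anything, it is considerably more detailed than what the paper offers anywhere for any of its completeness theorems.
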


\subsection{APPTC with Localities}

\subsubsection{Operational Semantics}

Let $Loc$ be the set of locations, and $u,v\in Loc^*$. Let $\ll$ be the sequential ordering on $Loc^*$, we call $v$ is an extension or a sublocation of $u$ in $u\ll v$; and if $u\nll v$
$v\nll u$, then $u$ and $v$ are independent and denoted $u\diamond v$.

\begin{definition}[Probabilistic prime event structure with silent event]\label{PPES}
Let $\Lambda$ be a fixed set of labels, ranged over $a,b,c,\cdots$ and $\tau$. A ($\Lambda$-labelled) prime event structure with silent event $\tau$ is a quintuple
$\mathcal{E}=\langle \mathbb{E}, \leq, \sharp, \sharp_{\pi}, \lambda\rangle$, where $\mathbb{E}$ is a denumerable set of events, including the silent event $\tau$. Let
$\hat{\mathbb{E}}=\mathbb{E}\backslash\{\tau\}$, exactly excluding $\tau$, it is obvious that $\hat{\tau^*}=\epsilon$, where $\epsilon$ is the empty event.
Let $\lambda:\mathbb{E}\rightarrow\Lambda$ be a labelling function and let $\lambda(\tau)=\tau$. And $\leq$, $\sharp$, $\sharp_{\pi}$ are binary relations on $\mathbb{E}$,
called causality, conflict and probabilistic conflict respectively, such that:

\begin{enumerate}
  \item $\leq$ is a partial order and $\lceil e \rceil = \{e'\in \mathbb{E}|e'\leq e\}$ is finite for all $e\in \mathbb{E}$. It is easy to see that
  $e\leq\tau^*\leq e'=e\leq\tau\leq\cdots\leq\tau\leq e'$, then $e\leq e'$.
  \item $\sharp$ is irreflexive, symmetric and hereditary with respect to $\leq$, that is, for all $e,e',e''\in \mathbb{E}$, if $e\sharp e'\leq e''$, then $e\sharp e''$;
  \item $\sharp_{\pi}$ is irreflexive, symmetric and hereditary with respect to $\leq$, that is, for all $e,e',e''\in \mathbb{E}$, if $e\sharp_{\pi} e'\leq e''$, then $e\sharp_{\pi} e''$.
\end{enumerate}

Then, the concepts of consistency and concurrency can be drawn from the above definition:

\begin{enumerate}
  \item $e,e'\in \mathbb{E}$ are consistent, denoted as $e\frown e'$, if $\neg(e\sharp e')$ and $\neg(e\sharp_{\pi} e')$. A subset $X\subseteq \mathbb{E}$ is called consistent, if $e\frown e'$ for all
  $e,e'\in X$.
  \item $e,e'\in \mathbb{E}$ are concurrent, denoted as $e\parallel e'$, if $\neg(e\leq e')$, $\neg(e'\leq e)$, and $\neg(e\sharp e')$ and $\neg(e\sharp_{\pi} e')$.
\end{enumerate}
\end{definition}

\begin{definition}[Configuration]
Let $\mathcal{E}$ be a PES. A (finite) configuration in $\mathcal{E}$ is a (finite) consistent subset of events $C\subseteq \mathcal{E}$, closed with respect to causality
(i.e. $\lceil C\rceil=C$). The set of finite configurations of $\mathcal{E}$ is denoted by $\mathcal{C}(\mathcal{E})$. We let $\hat{C}=C\backslash\{\tau\}$.
\end{definition}

A consistent subset of $X\subseteq \mathbb{E}$ of events can be seen as a pomset. Given $X, Y\subseteq \mathbb{E}$, $\hat{X}\sim \hat{Y}$ if $\hat{X}$ and $\hat{Y}$ are
isomorphic as pomsets. In the following of the paper, we say $C_1\sim C_2$, we mean $\hat{C_1}\sim\hat{C_2}$.

\begin{definition}[Pomset transitions and step]
Let $\mathcal{E}$ be a PES and let $C\in\mathcal{C}(\mathcal{E})$, and $\emptyset\neq X\subseteq \mathbb{E}$, if $C\cap X=\emptyset$ and $C'=C\cup X\in\mathcal{C}(\mathcal{E})$,
then $C\xrightarrow[u]{X} C'$ is called a pomset transition from $C$ to $C'$. When the events in $X$ are pairwise concurrent, we say that $C\xrightarrow[u]{X}C'$ is a step.
\end{definition}

\begin{definition}[Probabilistic transitions]
Let $\mathcal{E}$ be a PES and let $C\in\mathcal{C}(\mathcal{E})$, the transition $C\xrsquigarrow{\pi} C^{\pi}$ is called a probabilistic transition from $C$ to $C^{\pi}$.
\end{definition}

\begin{definition}[Weak pomset transitions and weak step]
Let $\mathcal{E}$ be a PES and let $C\in\mathcal{C}(\mathcal{E})$, and $\emptyset\neq X\subseteq \hat{\mathbb{E}}$, if $C\cap X=\emptyset$ and
$\hat{C'}=\hat{C}\cup X\in\mathcal{C}(\mathcal{E})$, then $C\xRightarrow[u]{X} C'$ is called a weak pomset transition from $C$ to $C'$, where we define
$\xRightarrow[u']{e}\triangleq\xrightarrow{\tau^*}\xrightarrow[u']{e}\xrightarrow{\tau^*}$. And $\xRightarrow{X}\triangleq\xrightarrow{\tau^*}\xrightarrow[u']{e}\xrightarrow{\tau^*}$,
for every $e\in X$. When the events in $X$ are pairwise concurrent, we say that $C\xRightarrow[u]{X}C'$ is a weak step.
\end{definition}

\begin{definition}[Consistent location association]
A relation $\varphi\subseteq (Loc^*\times Loc^*)$ is a consistent location association (cla), if $(u,v)\in \varphi \&(u',v')\in\varphi$, then $u\diamond u'\Leftrightarrow v\diamond v'$.
\end{definition}

\begin{definition}[Probabilistic static location pomset, step bisimulation]
Let $\mathcal{E}_1$, $\mathcal{E}_2$ be PESs. A probabilistic static location pomset bisimulation is a relation $R_{\varphi}\subseteq\mathcal{C}(\mathcal{E}_1)\times\mathcal{C}(\mathcal{E}_2)$, such that (1) if
$(C_1,C_2)\in R_{\varphi}$, and $C_1\xrightarrow[u]{X_1}C_1'$ then $C_2\xrightarrow[v]{X_2}C_2'$, with $X_1\subseteq \mathbb{E}_1$, $X_2\subseteq \mathbb{E}_2$, $X_1\sim X_2$ and
$(C_1',C_2')\in R_{\varphi\cup\{(u,v)\}}$,
and vice-versa; (2) if $(C_1,C_2)\in R_{\varphi}$, and $C_1\xrsquigarrow{\pi}C_1^{\pi}$ then $C_2\xrsquigarrow{\pi}C_2^{\pi}$ and $(C_1^{\pi},C_2^{\pi})\in R_{\varphi}$, and vice-versa; (3) if $(C_1,C_2)\in R_{\varphi}$,
then $\mu(C_1,C)=\mu(C_2,C)$ for each $C\in\mathcal{C}(\mathcal{E})/R_{\varphi}$; (4) $[\surd]_{R_{\varphi}}=\{\surd\}$. We say that $\mathcal{E}_1$, $\mathcal{E}_2$ are probabilistic static location pomset bisimilar, written $\mathcal{E}_1\sim_{pp}^{sl}\mathcal{E}_2$, if there exists a probabilistic static location pomset bisimulation $R_{\varphi}$, such that
$(\emptyset,\emptyset)\in R_{\varphi}$. By replacing pomset transitions with steps, we can get the definition of probabilistic static location step bisimulation. When PESs $\mathcal{E}_1$ and $\mathcal{E}_2$ are probabilistic static location step
bisimilar, we write $\mathcal{E}_1\sim_{ps}^{sl}\mathcal{E}_2$.
\end{definition}

\begin{definition}[Probabilistic static location (hereditary) history-preserving bisimulation]
A probabilistic static location history-preserving (hp-) bisimulation is a posetal relation $R_{\varphi}\subseteq\mathcal{C}(\mathcal{E}_1)\overline{\times}\mathcal{C}(\mathcal{E}_2)$ such that (1) if $(C_1,f,C_2)\in R_{\varphi}$,
and $C_1\xrightarrow[u]{e_1} C_1'$, then $C_2\xrightarrow[v]{e_2} C_2'$, with $(C_1',f[e_1\mapsto e_2],C_2')\in R_{\varphi\cup\{(u,v)\}}$, and vice-versa; (2) if $(C_1,f,C_2)\in R_{\varphi}$, and
$C_1\xrsquigarrow{\pi}C_1^{\pi}$ then $C_2\xrsquigarrow{\pi}C_2^{\pi}$ and $(C_1^{\pi},f,C_2^{\pi})\in R_{\varphi}$, and vice-versa; (3) if $(C_1,f,C_2)\in R_{\varphi}$,
then $\mu(C_1,C)=\mu(C_2,C)$ for each $C\in\mathcal{C}(\mathcal{E})/R_{\varphi}$; (4) $[\surd]_{R_{\varphi}}=\{\surd\}$. $\mathcal{E}_1,\mathcal{E}_2$ are probabilistic static location history-preserving
(hp-)bisimilar and are written $\mathcal{E}_1\sim_{php}^{sl}\mathcal{E}_2$ if there exists a probabilistic static location hp-bisimulation $R_{\varphi}$ such that $(\emptyset,\emptyset,\emptyset)\in R_{\varphi}$.

A probabilistic static location hereditary history-preserving (hhp-)bisimulation is a downward closed probabilistic static location hp-bisimulation. $\mathcal{E}_1,\mathcal{E}_2$ are probabilistic static location hereditary history-preserving (hhp-)bisimilar and are
written $\mathcal{E}_1\sim_{phhp}^{sl}\mathcal{E}_2$.
\end{definition}

\begin{definition}[Weak probabilistic static location pomset, step bisimulation]
Let $\mathcal{E}_1$, $\mathcal{E}_2$ be PESs. A weak probabilistic static location pomset bisimulation is a relation $R_{\varphi}\subseteq\mathcal{C}(\mathcal{E}_1)\times\mathcal{C}(\mathcal{E}_2)$, such that (1) if
$(C_1,C_2)\in R_{\varphi}$, and $C_1\xRightarrow[u]{X_1}C_1'$ then $C_2\xRightarrow[v]{X_2}C_2'$, with $X_1\subseteq \hat{\mathbb{E}_1}$, $X_2\subseteq \hat{\mathbb{E}_2}$, $X_1\sim X_2$ and
$(C_1',C_2')\in R_{\varphi\cup\{(u,v)\}}$, and vice-versa; (2) if $(C_1,C_2)\in R_{\varphi}$, and $C_1\xrsquigarrow{\pi}C_1^{\pi}$ then $C_2\xrsquigarrow{\pi}C_2^{\pi}$ and $(C_1^{\pi},C_2^{\pi})\in R_{\varphi}$, and vice-versa; (3) if $(C_1,C_2)\in R_{\varphi}$,
then $\mu(C_1,C)=\mu(C_2,C)$ for each $C\in\mathcal{C}(\mathcal{E})/R_{\varphi}$; (4) $[\surd]_{R_{\varphi}}=\{\surd\}$. We say that $\mathcal{E}_1$, $\mathcal{E}_2$ are weak probabilistic static location pomset bisimilar, written $\mathcal{E}_1\approx_{pp}^{sl}\mathcal{E}_2$, if there exists a weak probabilistic static location pomset
bisimulation $R_{\varphi}$, such that $(\emptyset,\emptyset)\in R_{\varphi}$. By replacing weak pomset transitions with weak steps, we can get the definition of weak probabilistic static location step bisimulation. When PESs
$\mathcal{E}_1$ and $\mathcal{E}_2$ are weak probabilistic static location step bisimilar, we write $\mathcal{E}_1\approx_{ps}^{sl}\mathcal{E}_2$.
\end{definition}

\begin{definition}[Weak probabilistic static location (hereditary) history-preserving bisimulation]
A weak probabilistic static location history-preserving (hp-) bisimulation is a weakly posetal relation $R_{\varphi}\subseteq\mathcal{C}(\mathcal{E}_1)\overline{\times}\mathcal{C}(\mathcal{E}_2)$ such that (1) if
$(C_1,f,C_2)\in R_{\varphi}$, and $C_1\xRightarrow[u]{e_1} C_1'$, then $C_2\xRightarrow[v]{e_2} C_2'$, with $(C_1',f[e_1\mapsto e_2],C_2')\in R_{\varphi\cup\{(u,v)\}}$, and vice-versa; (2) if $(C_1,f, C_2)\in R_{\varphi}$, and
$C_1\xrsquigarrow{\pi}C_1^{\pi}$ then $C_2\xrsquigarrow{\pi}C_2^{\pi}$ and $(C_1^{\pi},f,C_2^{\pi})\in R_{\varphi}$, and vice-versa; (3) if $(C_1,f,C_2)\in R_{\varphi}$,
then $\mu(C_1,C)=\mu(C_2,C)$ for each $C\in\mathcal{C}(\mathcal{E})/R_{\varphi}$; (4) $[\surd]_{R_{\varphi}}=\{\surd\}$. $\mathcal{E}_1,\mathcal{E}_2$ are weak probabilistic
static location history-preserving (hp-)bisimilar and are written $\mathcal{E}_1\approx_{php}^{sl}\mathcal{E}_2$ if there exists a weak probabilistic static location hp-bisimulation $R_{\varphi}$ such that $(\emptyset,\emptyset,\emptyset)\in R_{\varphi}$.

A weak probabilistic static location hereditary history-preserving (hhp-)bisimulation is a downward closed weak probabilistic static location hp-bisimulation. $\mathcal{E}_1,\mathcal{E}_2$ are weak probabilistic static location hereditary history-preserving
(hhp-)bisimilar and are written $\mathcal{E}_1\approx_{phhp}^{sl}\mathcal{E}_2$.
\end{definition}

\begin{definition}[Branching probabilistic static location pomset, step bisimulation]
Assume a special termination predicate $\downarrow$, and let $\surd$ represent a state with $\surd\downarrow$. Let $\mathcal{E}_1$, $\mathcal{E}_2$ be PESs. A branching probabilistic static location pomset
bisimulation is a relation $R_{\varphi}\subseteq\mathcal{C}(\mathcal{E}_1)\times\mathcal{C}(\mathcal{E}_2)$, such that:
 \begin{enumerate}
   \item if $(C_1,C_2)\in R_{\varphi}$, and $C_1\xrightarrow[u]{X}C_1'$ then
   \begin{itemize}
     \item either $X\equiv \tau^*$, and $(C_1',C_2)\in R_{\varphi}$;
     \item or there is a sequence of (zero or more) probabilistic transitions and $\tau$-transitions $C_2\rightsquigarrow^*\xrightarrow{\tau^*} C_2^0$, such that $(C_1,C_2^0)\in R_{\varphi}$ and $C_2^0\xRightarrow[v]{X}C_2'$ with
     $(C_1',C_2')\in R_{\varphi\cup\{(u,v)\}}$;
   \end{itemize}
   \item if $(C_1,C_2)\in R_{\varphi}$, and $C_2\xrightarrow[v]{X}C_2'$ then
   \begin{itemize}
     \item either $X\equiv \tau^*$, and $(C_1,C_2')\in R_{\varphi}$;
     \item or there is a sequence of (zero or more) probabilistic transitions and $\tau$-transitions $C_2\rightsquigarrow^*\xrightarrow{\tau^*} C_1^0$, such that $(C_1^0,C_2)\in R_{\varphi}$ and $C_1^0\xRightarrow[u]{X}C_1'$ with
     $(C_1',C_2')\in R_{\varphi\cup\{(u,v)\}}$;
   \end{itemize}
   \item if $(C_1,C_2)\in R_{\varphi}$ and $C_1\downarrow$, then there is a sequence of (zero or more) probabilistic transitions and $\tau$-transitions $C_2\rightsquigarrow^*\xrightarrow{\tau^*} C_2^0$ such that $(C_1,C_2^0)\in R_{\varphi}$
   and $C_2^0\downarrow$;
   \item if $(C_1,C_2)\in R_{\varphi}$ and $C_2\downarrow$, then there is a sequence of (zero or more) probabilistic transitions and $\tau$-transitions $C_2\rightsquigarrow^*\xrightarrow{\tau^*} C_1^0$ such that $(C_1^0,C_2)\in R_{\varphi}$
   and $C_1^0\downarrow$;
   \item if $(C_1,C_2)\in R_{\varphi}$,then $\mu(C_1,C)=\mu(C_2,C)$ for each $C\in\mathcal{C}(\mathcal{E})/R_{\varphi}$;
   \item $[\surd]_{R_{\varphi}}=\{\surd\}$.
 \end{enumerate}

We say that $\mathcal{E}_1$, $\mathcal{E}_2$ are branching probabilistic static location pomset bisimilar, written $\mathcal{E}_1\approx_{pbp}^{sl}\mathcal{E}_2$, if there exists a branching probabilistic static location pomset bisimulation $R_{\varphi}$,
such that $(\emptyset,\emptyset)\in R_{\varphi}$.

By replacing pomset transitions with steps, we can get the definition of branching probabilistic static location step bisimulation. When PESs $\mathcal{E}_1$ and $\mathcal{E}_2$ are branching probabilistic static location step bisimilar,
we write $\mathcal{E}_1\approx_{pbs}^{sl}\mathcal{E}_2$.
\end{definition}

\begin{definition}[Rooted branching probabilistic static location pomset, step bisimulation]
Assume a special termination predicate $\downarrow$, and let $\surd$ represent a state with $\surd\downarrow$. Let $\mathcal{E}_1$, $\mathcal{E}_2$ be PESs. A rooted branching probabilistic static location pomset
bisimulation is a relation $R_{\varphi}\subseteq\mathcal{C}(\mathcal{E}_1)\times\mathcal{C}(\mathcal{E}_2)$, such that:
 \begin{enumerate}
   \item if $(C_1,C_2)\in R_{\varphi}$, and $C_1\rightsquigarrow C_1^{\pi}\xrightarrow[u]{X}C_1'$ then $C_2\rightsquigarrow C_2^{\pi}\xrightarrow[v]{X}C_2'$ with $C_1'\approx_{pbp}^{sl}C_2'$;
   \item if $(C_1,C_2)\in R_{\varphi}$, and $C_2\rightsquigarrow C_2^{\pi}\xrightarrow[v]{X}C_2'$ then $C_1\rightsquigarrow C_1^{\pi}\xrightarrow[u]{X}C_1'$ with $C_1'\approx_{pbp}^{sl}C_2'$;
   \item if $(C_1,C_2)\in R_{\varphi}$ and $C_1\downarrow$, then $C_2\downarrow$;
   \item if $(C_1,C_2)\in R_{\varphi}$ and $C_2\downarrow$, then $C_1\downarrow$.
 \end{enumerate}

We say that $\mathcal{E}_1$, $\mathcal{E}_2$ are rooted branching probabilistic static location pomset bisimilar, written $\mathcal{E}_1\approx_{prbp}^{sl}\mathcal{E}_2$, if there exists a rooted branching probabilistic static location pomset
bisimulation $R_{\varphi}$, such that $(\emptyset,\emptyset)\in R_{\varphi}$.

By replacing pomset transitions with steps, we can get the definition of rooted branching probabilistic static location step bisimulation. When PESs $\mathcal{E}_1$ and $\mathcal{E}_2$ are rooted branching probabilistic static location step
bisimilar, we write $\mathcal{E}_1\approx_{prbs}^{sl}\mathcal{E}_2$.
\end{definition}

\begin{definition}[Branching probabilistic static location (hereditary) history-preserving bisimulation]
Assume a special termination predicate $\downarrow$, and let $\surd$ represent a state with $\surd\downarrow$. A branching probabilistic static location history-preserving (hp-) bisimulation is a posetal
relation $R_{\varphi}\subseteq\mathcal{C}(\mathcal{E}_1)\overline{\times}\mathcal{C}(\mathcal{E}_2)$ such that:

 \begin{enumerate}
   \item if $(C_1,f,C_2)\in R$, and $C_1\xrightarrow[u]{e_1}C_1'$ then
   \begin{itemize}
     \item either $e_1\equiv \tau$, and $(C_1',f[e_1\mapsto \tau],C_2)\in R_{\varphi}$;
     \item or there is a sequence of (zero or more) (zero or more) probabilistic transitions and $\tau$-transitions $C_2\rightsquigarrow^*\xrightarrow{\tau^*} C_2^0$, such that $(C_1,f,C_2^0)\in R_{\varphi}$ and $C_2^0\xrightarrow[v]{e_2}C_2'$ with
     $(C_1',f[e_1\mapsto e_2],C_2')\in R_{\varphi\cup\{(u,v)\}}$;
   \end{itemize}
   \item if $(C_1,f,C_2)\in R_{\varphi}$, and $C_2\xrightarrow[v]{e_2}C_2'$ then
   \begin{itemize}
     \item either $X\equiv \tau$, and $(C_1,f[e_2\mapsto \tau],C_2')\in R_{\varphi}$;
     \item or there is a sequence of (zero or more) probabilistic transitions and $\tau$-transitions $C_1\rightsquigarrow^*\xrightarrow{\tau^*} C_1^0$, such that $(C_1^0,f,C_2)\in R_{\varphi}$ and $C_1^0\xrightarrow[u]{e_1}C_1'$ with
     $(C_1',f[e_2\mapsto e_1],C_2')\in R_{\varphi\cup\{(u,v)\}}$;
   \end{itemize}
   \item if $(C_1,f,C_2)\in R_{\varphi}$ and $C_1\downarrow$, then there is a sequence of (zero or more) probabilistic transitions and $\tau$-transitions $C_2\rightsquigarrow^*\xrightarrow{\tau^*} C_2^0$ such that $(C_1,f,C_2^0)\in R_{\varphi}$
   and $C_2^0\downarrow$;
   \item if $(C_1,f,C_2)\in R_{\varphi}$ and $C_2\downarrow$, then there is a sequence of (zero or more) probabilistic transitions and $\tau$-transitions $C_1\rightsquigarrow^*\xrightarrow{\tau^*} C_1^0$ such that $(C_1^0,f,C_2)\in R_{\varphi}$
   and $C_1^0\downarrow$;
   \item if $(C_1,C_2)\in R_{\varphi}$,then $\mu(C_1,C)=\mu(C_2,C)$ for each $C\in\mathcal{C}(\mathcal{E})/R_{\varphi}$;
   \item $[\surd]_{R_{\varphi}}=\{\surd\}$.
 \end{enumerate}

$\mathcal{E}_1,\mathcal{E}_2$ are branching probabilistic static location history-preserving (hp-)bisimilar and are written $\mathcal{E}_1\approx_{pbhp}^{sl}\mathcal{E}_2$ if there exists a branching probabilistic static location hp-bisimulation
$R_{\varphi}$ such that $(\emptyset,\emptyset,\emptyset)\in R_{\varphi}$.

A branching probabilistic static location hereditary history-preserving (hhp-)bisimulation is a downward closed branching probabilistic static location hhp-bisimulation. $\mathcal{E}_1,\mathcal{E}_2$ are branching probabilistic static location hereditary history-preserving
(hhp-)bisimilar and are written $\mathcal{E}_1\approx_{pbhhp}^{sl}\mathcal{E}_2$.
\end{definition}

\begin{definition}[Rooted branching probabilistic static location (hereditary) history-preserving bisimulation]
Assume a special termination predicate $\downarrow$, and let $\surd$ represent a state with $\surd\downarrow$. A rooted branching probabilistic static location history-preserving (hp-) bisimulation is a posetal relation $R_{\varphi}\subseteq\mathcal{C}(\mathcal{E}_1)\overline{\times}\mathcal{C}(\mathcal{E}_2)$ such that:

 \begin{enumerate}
   \item if $(C_1,f,C_2)\in R_{\varphi}$, and $C_1\rightsquigarrow C_1^{\pi}\xrightarrow[u]{e_1}C_1'$, then $C_2\rightsquigarrow C_2^{\pi}\xrightarrow[v]{e_2}C_2'$ with $C_1'\approx_{pbhp}^{sl}C_2'$;
   \item if $(C_1,f,C_2)\in R_{\varphi}$, and $C_2\rightsquigarrow C_2^{\pi}\xrightarrow[v]{e_2}C_2'$, then $C_1\rightsquigarrow C_1^{\pi}\xrightarrow[u]{e_1}C_1'$ with $C_1'\approx_{pbhp}^{sl}C_2'$;
   \item if $(C_1,f,C_2)\in R_{\varphi}$ and $C_1\downarrow$, then $C_2\downarrow$;
   \item if $(C_1,f,C_2)\in R_{\varphi}$ and $C_2\downarrow$, then $C_1\downarrow$.
 \end{enumerate}

$\mathcal{E}_1,\mathcal{E}_2$ are rooted branching probabilistic static location history-preserving (hp-)bisimilar and are written $\mathcal{E}_1\approx_{prbhp}^{sl}\mathcal{E}_2$ if there exists a rooted branching
probabilistic static location hp-bisimulation $R_{\varphi}$ such that $(\emptyset,\emptyset,\emptyset)\in R_{\varphi}$.

A rooted branching probabilistic static location hereditary history-preserving (hhp-)bisimulation is a downward closed rooted branching probabilistic static location hp-bisimulation. $\mathcal{E}_1,\mathcal{E}_2$ are rooted branching
probabilistic static location hereditary history-preserving (hhp-)bisimilar and are written $\mathcal{E}_1\approx_{prbhhp}^{sl}\mathcal{E}_2$.
\end{definition}

\subsubsection{BAPTC with Localities}

Let $Loc$ be the set of locations, and $loc\in Loc$, $u,v\in Loc^*$, $\epsilon$ is the empty location.


In the following, let $e_1, e_2, e_1', e_2'\in \mathbb{E}$, and let variables $x,y,z$ range over the set of terms for true concurrency, $p,q,s$ range over the set of closed terms.
The set of axioms of BATC with probabilistic static localities ($BAPTC^{sl}$) consists of the laws given in Table \ref{AxiomsForBATC22}.

\begin{center}
    \begin{table}
        \begin{tabular}{@{}ll@{}}
            \hline No. &Axiom\\
            $A1$ & $x+ y = y+ x$\\
            $A2$ & $(x+ y)+ z = x+ (y+ z)$\\
            $A3$ & $x+ x = x$\\
            $A4$ & $(x+ y)\cdot z = x\cdot z + y\cdot z$\\
            $A5$ & $(x\cdot y)\cdot z = x\cdot(y\cdot z)$\\
            $L1$ & $\epsilon::x=x$\\
            $L2$ & $u::(x\cdot y)=u::x\cdot u::y$\\
            $L3$ & $u::(x+ y)=u::x+ u::y$\\
            $L4$ & $u::(v::x)=uv::x$\\
            $PA1$ & $x\boxplus_{\pi} y=y\boxplus_{1-\pi} x$\\
            $PA2$ & $x\boxplus_{\pi}(y\boxplus_{\rho} z)=(x\boxplus_{\frac{\pi}{\pi+\rho-\pi\rho}}y)\boxplus_{\pi+\rho-\pi\rho} z$\\
            $PA3$ & $x\boxplus_{\pi}x=x$\\
            $PA4$ & $(x\boxplus_{\pi}y)\cdot z=x\cdot z\boxplus_{\pi}y\cdot z$\\
            $PA5$ & $(x\boxplus_{\pi}y)+z=(x+z)\boxplus_{\pi}(y+z)$\\
        \end{tabular}
        \caption{Axioms of BATC with probabilistic static localities}
        \label{AxiomsForBATC22}
    \end{table}
\end{center}

\begin{definition}[Basic terms of BATC with probabilistic static localities]\label{BTBATC22}
The set of basic terms of BATC with probabilistic static localities, $\mathcal{B}(BAPTC^{sl})$, is inductively defined as follows:
\begin{enumerate}
  \item $\mathbb{E}\subset\mathcal{B}(BAPTC^{sl})$;
  \item if $u\in Loc^*, t\in\mathcal{B}(BAPTC^{sl})$ then $u::t\in\mathcal{B}(BAPTC^{sl})$;
  \item if $e\in \mathbb{E}, t\in\mathcal{B}(BAPTC^{sl})$ then $e\cdot t\in\mathcal{B}(BAPTC^{sl})$;
  \item if $t,s\in\mathcal{B}(BAPTC^{sl})$ then $t+ s\in\mathcal{B}(BAPTC^{sl})$;
  \item if $t,s\in\mathcal{B}(BAPTC)$ then $t\boxplus_{\pi} s\in\mathcal{B}(BAPTC)$.
\end{enumerate}
\end{definition}

\begin{theorem}[Elimination theorem of BATC with probabilistic static localities]\label{ETBATC22}
Let $p$ be a closed BATC with probabilistic static localities term. Then there is a basic BATC with probabilistic static localities term $q$ such that $BAPTC^{sl}\vdash p=q$.
\end{theorem}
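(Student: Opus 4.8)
The plan is to follow the standard term-rewriting strategy used for elimination theorems (as for Theorem \ref{ETBATC21}), now carrying along the probabilistic choice $\boxplus_\pi$. First I would turn the axioms of Table \ref{AxiomsForBATC22} into a term rewriting system by orienting from left to right exactly those axioms that decrease structural complexity, namely $A3, A4, A5$, the locality laws $L1$--$L4$, and the probabilistic laws $PA3, PA4, PA5$. The genuinely symmetric axioms $A1, A2$ (commutativity and associativity of $+$) and $PA1, PA2$ (the commutativity/associativity laws for $\boxplus_\pi$) are not oriented but treated as structural identities, so the system is read modulo associativity and commutativity of $+$ and modulo $PA1$--$PA2$ for $\boxplus_\pi$. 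Since every rewrite step $s\to t$ is an instance of an axiom, $\mathcal{R}$-reduction is contained in provable equality, i.e. $s\to t$ implies $BAPTC^{sl}\vdash s=t$.

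The second step is strong normalization. I would fix a well-founded precedence on the operator symbols --- for instance $::\,\succ\,\cdot\,\succ\,+$, with $\boxplus_\pi$ placed so that $::\,\succ\,\boxplus_\pi$ and $\boxplus_\pi\,\succ\,\cdot,\,+$ hold --- and take the lexicographic path order $>_{lpo}$ induced by it. A routine check gives $s>_{lpo}t$ for each rewrite rule: $A4, A5, PA4, PA5$ push $\cdot$ and $\boxplus_\pi$ past $+$ and re-associate, $L2$--$L4$ move $::$ strictly inward or coalesce nested location prefixes, $L1$ deletes a $::$, and $A3, PA3$ delete a duplicated summand. By Theorem \ref{SN} the system is strongly normalizing, so every closed term has at least one normal form.

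Third, I would show that every closed normal form is a basic term in the sense of Definition \ref{BTBATC22}, by induction on term structure. A normal form cannot have a $+$-term, a $\boxplus_\pi$-term, or a $::$-prefixed composite under a leading $\cdot$ (else $A4$, $PA4$, or $L2$ applies), nor a $\cdot$-term directly under a $\cdot$ (else $A5$), nor a $+$- or $\cdot$- or $::$-term or an $\epsilon$-prefix under a leading $::$ (else $L1$--$L4$), nor a $\boxplus_\pi$-term in a $+$-context matching $PA5$, and no summand is duplicated modulo the structural identities (else $A3, PA3$). Analysing the surviving shapes shows each normal form is built from atomic events, location-prefixed subterms $u::t$, sequential compositions $e\cdot t$, choices $t+s$, and probabilistic choices $t\boxplus_\pi s$ with $t,s$ again normal --- precisely the clauses of Definition \ref{BTBATC22}. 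Combining the pieces: for a closed term $p$, pick a normal form $q$ of $p$; then $q\in\mathcal{B}(BAPTC^{sl})$ and $BAPTC^{sl}\vdash p=q$ by chaining the axiom-instance rewrite steps.

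The main obstacle I expect is the bookkeeping around the locality operator $::$ interacting with $\boxplus_\pi$: Table \ref{AxiomsForBATC22} has no law distributing $::$ over $\boxplus_\pi$, so both the path order and the normal-form case analysis must simply tolerate subterms of the form $u::(t\boxplus_\pi s)$ --- which are already basic by clauses 2 and 5 --- while $L2, L3$ still force $::$ through every $\cdot$ and $+$. Choosing the precedence among $\boxplus_\pi$, $\cdot$ and $+$ so that $PA4$ and $PA5$ are simultaneously $>_{lpo}$-decreasing without breaking the decrease of $A4, A5, L2$--$L4$ is the one place a careless choice fails; everything else is the same routine argument as in the non-probabilistic Theorem \ref{ETBATC21}, lifted with the three extra rules $PA3, PA4, PA5$.
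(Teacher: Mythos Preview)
Your approach is correct and is precisely the standard term-rewriting argument that elimination theorems of this kind use; in fact it is the methodology the paper itself sets up in Section~\ref{PT} (Definitions 2.9--2.11 and Theorem~\ref{SN}). The paper, however, does not give its own proof of this theorem: it appears in the Backgrounds chapter as a preliminary result imported from \cite{LOC2}, stated without proof (compare the analogous quantum version in Section~\ref{qbatcl}, whose proof reads ``The same as that of $BATC^{sl}$, we omit the proof, please refer to \cite{LOC1} for details''). So there is nothing to compare against beyond the proof-technique scaffolding, which you follow faithfully.

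One small point worth tightening in your normal-form analysis: with $L2$ oriented left to right, a term of the shape $(u{::}e)\cdot t$ can arise as a normal form, yet clause~3 of Definition~\ref{BTBATC22} literally requires the left factor of $\cdot$ to lie in $\mathbb{E}$, not merely to be basic. This is a well-known wrinkle in this family of definitions; it is handled either by reading the basic-term clauses up to $L2$--$L4$ (so that $(u{::}e)\cdot(u{::}t)$ is identified with the genuinely basic $u{::}(e\cdot t)$), or by adding a post-processing pass that pulls common location prefixes back outside $\cdot$. Either fix is routine and does not affect the overall strategy; just flag it so your induction step for $\cdot$ goes through cleanly.
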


In this subsection, we will define a term-deduction system which gives the operational semantics of BATC with probabilistic static localities. We give the operational transition rules for operators
$\cdot$ and $+$ as Table \ref{SETRForBATC22} shows. And the predicate $\xrightarrow[u]{e}\surd$ represents successful termination after execution of the event $e$ at the location $u$. Like the way in \cite{PPA}, we also introduce the counterpart $\breve{e}$
of the event $e$, and also the set $\breve{\mathbb{E}}=\{\breve{e}|e\in\mathbb{E}\}$.

Firstly, we give the definition of PDFs in Table \ref{PDFBAPTC22}.

\begin{center}
    \begin{table}
        $$\mu(e,\breve{e})=1$$
        $$\mu(x\cdot y, x'\cdot y)=\mu(x,x')$$
        $$\mu(x+y,x'+y')=\mu(x,x')\cdot \mu(y,y')$$
        $$\mu(x\boxplus_{\pi}y,z)=\pi\mu(x,z)+(1-\pi)\mu(y,z)$$
        $$\mu(x,y)=0,\textrm{otherwise}$$
        \caption{PDF definitions of $BAPTC$}
        \label{PDFBAPTC22}
    \end{table}
\end{center}

\begin{center}
    \begin{table}
        $$\frac{}{e\rightsquigarrow\breve{e}}$$
        $$\frac{x\rightsquigarrow x'}{x\cdot y\rightsquigarrow x'\cdot y}$$
        $$\frac{x\rightsquigarrow x'\quad y\rightsquigarrow y'}{x+y\rightsquigarrow x'+y'}$$
        $$\frac{x\rightsquigarrow x'}{x\boxplus_{\pi}y\rightsquigarrow x'}\quad \frac{y\rightsquigarrow y'}{x\boxplus_{\pi}y\rightsquigarrow y'}$$
        $$\frac{}{\breve{e}\xrightarrow[\epsilon]{e}\surd}\quad \frac{}{loc::\breve{e}\xrightarrow[loc]{e}\surd}$$
        $$\frac{x\xrightarrow[u]{e}x'}{loc::x\xrightarrow[loc\ll u]{e}loc::x'}$$
        $$\frac{x\xrightarrow[u]{e}\surd}{x+ y\xrightarrow[u]{e}\surd} \quad\frac{x\xrightarrow[u]{e}x'}{x+ y\xrightarrow[u]{e}x'} \quad\frac{y\xrightarrow[u]{e}\surd}{x+ y\xrightarrow[u]{e}\surd} \quad\frac{y\xrightarrow[u]{e}y'}{x+ y\xrightarrow[u]{e}y'}$$
        $$\frac{x\xrightarrow[u]{e}\surd}{x\cdot y\xrightarrow[u]{e} y} \quad\frac{x\xrightarrow[u]{e}x'}{x\cdot y\xrightarrow[u]{e}x'\cdot y}$$
        \caption{Single event transition rules of BATC with probabilistic static localities}
        \label{SETRForBATC22}
    \end{table}
\end{center}

\begin{theorem}[Congruence of BATC with probabilistic static localities with respect to probabilistic static location pomset bisimulation equivalence]
Probabilistic static location pomset bisimulation equivalence $\sim_{pp}^{sl}$ is a congruence with respect to BATC with probabilistic static localities.
\end{theorem}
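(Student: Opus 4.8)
The plan is to prove the statement operator by operator: $BAPTC^{sl}$ is generated by the alternative composition $+$, the sequential composition $\cdot$ (in particular the action prefix $e\cdot x$), location prefixing $u::x$, and the probabilistic alternative composition $\boxplus_{\pi}$, so it suffices to show that $\sim_{pp}^{sl}$ is preserved by each of these. For an $n$-ary operator $f$ I would assume the arguments are related, say $p_i\sim_{pp}^{sl} q_i$ via probabilistic static location pomset bisimulations $R^i$ with $(\emptyset,\emptyset)\in R^i$, and then exhibit an explicit probabilistic static location pomset bisimulation $R_{\varphi}$ with $(\emptyset,\emptyset)\in R_{\varphi}$ relating $f(p_1,\dots,p_n)$ and $f(q_1,\dots,q_n)$. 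Verifying each candidate $R_{\varphi}$ amounts to checking the four clauses of the definition of probabilistic static location pomset bisimulation: the action/pomset transfer clause, including the threading of the location association from $\varphi$ to $\varphi\cup\{(u,v)\}$; the probabilistic transfer clause for $\rightsquigarrow$; the probability-measure clause $\mu(C_1,C)=\mu(C_2,C)$ on the $R_{\varphi}$-classes; and $[\surd]_{R_{\varphi}}=\{\surd\}$.

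Concretely, for $+$ I would take $R_{\varphi}=\{(p_1+p_2,\,q_1+q_2)\mid (p_1,q_1)\in R^1_{\varphi},\,(p_2,q_2)\in R^2_{\varphi}\}$, closed under $R^1_{\varphi}\cup R^2_{\varphi}$ (which absorb the residuals after the first observable move). By the rules of Table~\ref{SETRForBATC22}, every action transition of $p_1+p_2$ is inherited from $p_1$ or from $p_2$ and is matched through $R^1$ or $R^2$ with the same extended $\varphi$, while a probabilistic transition $p_1+p_2\rightsquigarrow r$ forces $p_i\rightsquigarrow p_i'$ and $r=p_1'+p_2'$, matched componentwise; the measure clause uses $\mu(x+y,x'+y')=\mu(x,x')\cdot\mu(y,y')$ from Table~\ref{PDFBAPTC22}, which requires identifying the $R_{\varphi}$-classes of sums with products of classes of $R^1$ and $R^2$. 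For $\cdot$ one uses $R_{\varphi}=\{(p_1\cdot p_2,\,q_1\cdot p_2)\mid (p_1,q_1)\in R^1_{\varphi}\}$ closed under $R^2_{\varphi}$ (for the phase after $p_1$ terminates), exploiting $\mu(x\cdot y,x'\cdot y)=\mu(x,x')$; for $\boxplus_{\pi}$ the observable behaviour is simply inherited from the two summands and the measure clause uses $\mu(x\boxplus_{\pi}y,z)=\pi\mu(x,z)+(1-\pi)\mu(y,z)$; and for location prefixing $u::{-}$ one relates $u::p$ with $u::q$, pushes the prefix through both the location component of the labels ($w\mapsto u\ll w$) and the association ($\varphi\mapsto\{(u\ll w,\,u\ll w')\mid (w,w')\in\varphi\}$), and checks that this transformed association is still a consistent location association.

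The step I expect to be the main obstacle is the probability-measure clause $\mu(C_1,C)=\mu(C_2,C)$. In the non-probabilistic localities setting the candidate relations may be fairly loose unions, but here the partition $\mathcal{C}(\mathcal{E})/R_{\varphi}$ must be controlled finely enough that it decomposes as a product of partitions under $+$ and as a convex combination of partitions under $\boxplus_{\pi}$, so that the compositional definition of $\mu$ in Table~\ref{PDFBAPTC22} transports the measure equality from the component relations to $R_{\varphi}$. Doing this bookkeeping while simultaneously maintaining the cla property of $\varphi$ at every step is the delicate part; the remaining clauses reduce to a routine structural induction on the transition rules of Table~\ref{SETRForBATC22}. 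Finally, congruence with respect to probabilistic static location step bisimulation $\sim_{ps}^{sl}$ follows from the same argument with pomset transitions replaced by steps throughout.
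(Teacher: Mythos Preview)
Your proposal follows the same high-level strategy as the paper --- show that $\sim_{pp}^{sl}$ is preserved by each of the operators $::$, $\cdot$, $+$, and $\boxplus_{\pi}$ --- but you supply far more detail than the paper does. The theorem you are addressing sits in the paper's background chapter and is stated without proof; the analogous congruence theorem for the quantum variant $qBAPTC$ with localities carries only a two-line argument (``it is easy to see that probabilistic static location pomset bisimulation is an equivalent relation \ldots\ we only need to prove that $\sim_{pp}^{sl}$ is preserved by the operators $::$, $\cdot$, $+$ and $\boxplus_{\pi}$. It is trivial and we leave the proof as an exercise for the readers''). So there is essentially nothing to compare against beyond confirming that your operator-by-operator decomposition is exactly what the paper intends.

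Your explicit candidate relations and your identification of the measure clause $\mu(C_1,C)=\mu(C_2,C)$ as the non-routine part are both appropriate, and your use of the PDF identities in Table~\ref{PDFBAPTC22} to transport the measure equality through $+$, $\cdot$, and $\boxplus_{\pi}$ is the right mechanism. One small slip: for sequential composition you write $R_{\varphi}=\{(p_1\cdot p_2,\,q_1\cdot p_2)\mid\ldots\}$, but the second component should be $q_1\cdot q_2$, since congruence requires relating $p_1\cdot p_2$ to $q_1\cdot q_2$ when $p_i\sim_{pp}^{sl} q_i$; your accompanying remark about closing under $R^2_{\varphi}$ after termination makes the intended construction clear, so this is a typo rather than a conceptual gap. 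The final sentence about $\sim_{ps}^{sl}$ is superfluous for this particular theorem, which is stated only for $\sim_{pp}^{sl}$, but it is harmless.
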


\begin{theorem}[Soundness of BATC with probabilistic static localities modulo probabilistic static location pomset bisimulation equivalence]\label{SBATCPBE22}
Let $x$ and $y$ be BATC with probabilistic static localities terms. If $BAPTC^{sl}\vdash x=y$, then $x\sim_{pp}^{sl} y$.
\end{theorem}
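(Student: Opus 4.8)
The plan is to reduce the claim to a finite, axiom-by-axiom verification. The provable equality of $BAPTC^{sl}$ is, by construction, the least congruence over the signature that contains every closed substitution instance of the axioms of Table~\ref{AxiomsForBATC22}; since the preceding theorem already establishes that $\sim_{pp}^{sl}$ is a congruence with respect to all operators of $BAPTC^{sl}$, it suffices to show that each closed instance $s=t$ of each axiom satisfies $s\sim_{pp}^{sl}t$ (the case of open terms then follows by closing under substitution). For a given axiom I would produce an explicit probabilistic static location pomset bisimulation $R_{\varphi}$ --- a relation on configurations, viewing closed terms as states of the probabilistic transition system of Table~\ref{SETRForBATC22} --- containing the pair $(s,t)$, and check the four clauses: matching of pomset transitions together with the extension $\varphi\cup\{(u,v)\}$ of the location association; matching of probabilistic transitions $\rightsquigarrow$; the measure condition $\mu(C_1,C)=\mu(C_2,C)$ on every $R_{\varphi}$-class $C$, evaluated through the PDF equations of Table~\ref{PDFBAPTC22}; and $[\surd]_{R_{\varphi}}=\{\surd\}$.

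For the non-probabilistic axioms $A1$--$A5$ and the locality axioms $L1$--$L4$, neither side contains $\boxplus_{\pi}$, so every term $p$ has the unique probabilistic transition $p\rightsquigarrow\breve{p}$ with $\mu(p,\breve{p})=1$; hence clauses (2)--(4) are automatic once the static forms are matched, and only clause (1) needs work. There I would take $R$ to be the identity on configurations enlarged by the single pair at issue --- $\{(p+q,\,q+p)\}$ for $A1$, $\{((x+y)\cdot z,\,x\cdot z+y\cdot z)\}$ for $A4$, $\{((x\cdot y)\cdot z,\,x\cdot(y\cdot z))\}$ for $A5$, $\{(u::(v::x),\,uv::x)\}$ for $L4$, and likewise for the others --- close it under the transition rules, and note that whenever the left side performs $C_1\xrightarrow[u]{X}C_1'$ the right side performs the same $X$ at the same location $u$ into an $R$-related configuration, so $\varphi$ is only ever extended by diagonal pairs $(u,u)$, which is trivially a consistent location association. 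This reproduces Theorem~\ref{SBATCPBE21} with the probabilistic and location components carried along unchanged.

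The substance of the proof lies in the probabilistic axioms $PA1$--$PA5$, where clauses (2) and (3) carry the load. For each I would again take $R_{\varphi}$ to be the identity on configurations together with the pair of the two sides and their residuals, the point being that by Table~\ref{PDFBAPTC22} both sides induce the same distribution over $R_{\varphi}$-classes: $\mu(x\boxplus_{\pi}y,z)=\pi\mu(x,z)+(1-\pi)\mu(y,z)=\mu(y\boxplus_{1-\pi}x,z)$ handles $PA1$; $\mu(x\boxplus_{\pi}x,z)=\mu(x,z)$ handles $PA3$; for $PA4$ and $PA5$ one composes the $\boxplus$-clause with $\mu(x\cdot y,x'\cdot y)=\mu(x,x')$, respectively $\mu(x+y,x'+y')=\mu(x,x')\mu(y,y')$; and for $PA2$ one uses the identities $\frac{\pi}{\pi+\rho-\pi\rho}(\pi+\rho-\pi\rho)=\pi$ and $\bigl(1-\frac{\pi}{\pi+\rho-\pi\rho}\bigr)(\pi+\rho-\pi\rho)=(1-\pi)\rho$, so that the two nestings of $\boxplus$ deposit identical total mass on the successors contributed by $x$, $y$ and $z$. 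Once the probabilistic successors are paired off this way, the subsequent action transitions on the two sides coincide --- the dynamic residuals are literally equal, e.g.\ $x'\cdot z$ against $x'\cdot z$ after $PA4$ --- so clause (1) is immediate and clause (4) holds because no $\surd$ occurs among those residuals.

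The one genuine obstacle I anticipate is clause (3) for $PA2$ and $PA5$, where the probabilistic successors of the two sides are syntactically distinct terms that must be collapsed into common $R_{\varphi}$-classes carrying equal aggregate weight; this forces a careful choice of $R_{\varphi}$ --- so that, for instance, the $x+z$ reached on one side and the $x+z$ reached on the other lie in the same class --- together with a check that the re-normalisation $\pi+\rho-\pi\rho$ in $PA2$ is globally consistent over all classes. Everything else --- the action-transition clauses, the locality relabelling in $L1$--$L4$, and the treatment of $\surd$ --- is routine and parallels the non-probabilistic, unlocalised soundness argument.
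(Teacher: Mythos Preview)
Your high-level strategy is exactly the paper's: use the preceding congruence theorem to reduce soundness to an axiom-by-axiom check. In fact the paper gives \emph{no} proof for this particular theorem --- it sits in the Backgrounds chapter as a result imported from \cite{LOC2} --- and the proofs it does give for the analogous soundness theorems (e.g.\ for $qBAPTC$ with localities in Chapter~5) consist of a single sentence invoking congruence and then ``we leave the proof as an exercise for the readers.'' You have supplied precisely the exercise the paper omits, and your treatment of $PA1$--$PA5$ via the PDF clauses of Table~\ref{PDFBAPTC22} is correct and more explicit than anything in the paper.

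One small slip: when you say that for $A1$--$A5$ and $L1$--$L4$ ``neither side contains $\boxplus_{\pi}$, so every term $p$ has the unique probabilistic transition $p\rightsquigarrow\breve{p}$ with $\mu(p,\breve{p})=1$,'' this is true of the axiom \emph{schemas} but not of their closed instances --- a closed instance of $A1$ may well be $(a\boxplus_{\pi}b)+c = c+(a\boxplus_{\pi}b)$, whose sides have non-trivial probabilistic fans. The fix is easy and in the spirit of what you already do for $PA1$--$PA5$: for each of $A1$--$A5$, $L1$--$L4$ one checks directly from Table~\ref{PDFBAPTC22} that the two sides induce identical distributions over $R_{\varphi}$-classes (e.g.\ $\mu((x+y)\cdot z,(x'+y')\cdot z)=\mu(x,x')\mu(y,y')=\mu(x\cdot z+y\cdot z,\,x'\cdot z+y'\cdot z)$ for $A4$), after which the action-transition matching proceeds as you describe. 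This does not affect the overall correctness of your plan.
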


\begin{theorem}[Completeness of BATC with probabilistic static localities modulo probabilistic static location pomset bisimulation equivalence]\label{CBATCPBE22}
Let $p$ and $q$ be closed BATC with probabilistic static localities terms, if $p\sim_{pp}^{sl} q$ then $p=q$.
\end{theorem}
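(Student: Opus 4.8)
The plan is to follow the standard route for completeness results of this kind: reduce to basic terms via the elimination theorem, introduce a normal form, and then prove that bisimilar normal forms are provably equal by induction on their structure. Concretely, suppose $p\sim_{pp}^{sl}q$ for closed $BAPTC^{sl}$ terms $p,q$. By the elimination theorem (Theorem \ref{ETBATC22}) there are basic terms $p',q'\in\mathcal{B}(BAPTC^{sl})$ with $BAPTC^{sl}\vdash p=p'$ and $BAPTC^{sl}\vdash q=q'$; by soundness (Theorem \ref{SBATCPBE22}) $p\sim_{pp}^{sl}p'$ and $q\sim_{pp}^{sl}q'$, hence $p'\sim_{pp}^{sl}q'$. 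So it suffices to prove $BAPTC^{sl}\vdash p'=q'$, i.e. to prove the theorem for closed basic terms.

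Next I would fix a \emph{normal form}. Using $PA4$ and $PA5$ the probabilistic choice can be pulled to the front of each deterministic phase, so every closed basic term is provably equal to one of the shape $s_1\boxplus_{\pi_1}(s_2\boxplus_{\pi_2}(\cdots\boxplus_{\pi_{k-1}}s_k))$ in which no $s_i$ has $\boxplus$ at the top; moreover, by $PA1$--$PA3$ these combinations can be taken with pairwise distinct $s_i$. Each $\boxplus$-free $s_i$ is then, by $A1$--$A5$ together with the locality axioms $L1$--$L4$ (which push every $u::{}$ down onto the atomic events and collapse nested or empty locations), provably equal to a finite sum $\sum_j t_{ij}$ where each summand $t_{ij}$ is of the form $u::a$ or of the form $u::a\cdot r_{ij}$ with $a\in\mathbb{E}$, $u\in Loc^*$ and $r_{ij}$ again a normal form; by $A3$ these summands may be taken pairwise distinct. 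Every closed basic $BAPTC^{sl}$ term is provably equal to such a normal form, and by soundness the normal form is $\sim_{pp}^{sl}$-equivalent to the original term, so we may assume $p'$ and $q'$ are normal forms.

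The core is then an induction on the combined size of the two normal forms. Write $p'=s_1\boxplus_{\pi_1}(\cdots)$ and $q'=s'_1\boxplus_{\rho_1}(\cdots)$. The probabilistic transitions $\rightsquigarrow$ exactly resolve the top $\boxplus$-layer, reaching the deterministic parts with the aggregated probabilities described by the PDFs of Table \ref{PDFBAPTC22}; clauses (2)--(4) of the definition of $\sim_{pp}^{sl}$ then force: (i) each $s_i$ is $\sim_{pp}^{sl}$-equivalent to some $s'_j$ and conversely, and (ii) for each $\sim_{pp}^{sl}$-class the total probability assigned by $p'$ equals that assigned by $q'$. For (i), a $\boxplus$-free $s$ and $s'$ with $s\sim_{pp}^{sl}s'$ are handled by the usual matching argument: every summand $u::a$ (resp. $u::a\cdot r$) of $s$ yields a transition $s\xrightarrow[u]{a}\surd$ (resp. $s\xrightarrow[u]{a}r$) that must be matched by a summand of $s'$ with the same label $a$ and, by the induction hypothesis applied to the strictly smaller continuations, a provably equal continuation; a small lemma — proved exactly as in the non-probabilistic case, Theorem \ref{CBATCPBE21}, since $BATC$-style terms contain no parallelism to exploit the freedom in the location association $\varphi$ — shows the recorded locations may be taken equal. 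Then $A3$ gives $s+s'=s$ and $s+s'=s'$, so $BAPTC^{sl}\vdash s=s'$. Feeding (i) and (ii) back through $PA1$--$PA3$ (regroup the summands of each side belonging to a common class into a single probabilistic branch and equate them) yields $BAPTC^{sl}\vdash p'=q'$, completing the proof.

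I expect the main obstacle to be clause (ii) and its use: making precise the ``grouping by $\sim_{pp}^{sl}$-class and matching aggregated probabilities'' step and checking that it can be carried out purely with $PA1$--$PA3$ — essentially a probabilistic analogue of the classical absorption argument, where one must verify that the measure-equality condition $\mu(C_1,C)=\mu(C_2,C)$ combined with the recursive definition of $\mu$ in Table \ref{PDFBAPTC22} really does pin the probabilistic combination down up to the probabilistic axioms. A secondary but genuine point to get right is the behaviour of the location association $\varphi$: one must confirm that, for closed basic terms, $\sim_{pp}^{sl}$ still equates the concrete locations appearing in the $u::{}$-prefixes, so that the absorption step is legitimate — this is precisely where reusing the machinery behind Theorem \ref{CBATCPBE21} is essential.
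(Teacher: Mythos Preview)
The paper does not actually prove this theorem: it appears in the Backgrounds chapter (Section~\ref{bg}) as a known result imported from \cite{LOC2}, stated without proof alongside the other $BAPTC^{sl}$ theorems. There is therefore no ``paper's own proof'' to compare against here; the closest analogue is the later Theorem~\ref{CBATCG} for the quantum extension $qBAPTC^{sl}$, whose proof consists entirely of reducing to the classical $BAPTC^{sl}$ completeness and citing \cite{LOC2}.

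Your proposal, by contrast, sketches what that cited proof presumably contains: elimination to basic terms, a normal form with $\boxplus$ pulled to the outside via $PA4$--$PA5$, and an induction matching first the probabilistic layer (using the $\mu$-condition) and then the deterministic summands. This is the standard route for probabilistic process algebra completeness in the style of \cite{PPA}, and the structure you outline is sound. The two points you flag as obstacles --- that $PA1$--$PA3$ suffice to realign probabilistic combinations once class-wise weights agree, and that the cla $\varphi$ collapses to literal location equality on $\boxplus$-free sequential terms --- are exactly the places where real work is needed, and your identification of them is accurate. So your proposal is not so much a different route from the paper as a genuine proof where the paper offers only a citation.
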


\begin{theorem}[Congruence of BATC with probabilistic static localities with respect to probabilistic static location step bisimulation equivalence]
Probabilistic static location step bisimulation equivalence $\sim_{ps}^{sl}$ is a congruence with respect to BATC with probabilistic static localities.
\end{theorem}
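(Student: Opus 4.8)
The plan is to verify, operator by operator, that probabilistic static location step bisimulation equivalence $\sim_{ps}^{sl}$ is preserved by the four operators of BATC with probabilistic static localities: alternative composition $+$, sequential composition $\cdot$, location prefixing $u::$, and probabilistic choice $\boxplus_{\pi}$. The deduction system formed by the transition rules of Table \ref{SETRForBATC22} together with the PDF clauses of Table \ref{PDFBAPTC22} is in a probabilistic path-like format — every premise is an action transition or a probabilistic transition on a variable, and every rule has as its source a single operator applied to distinct variables — so congruence could in principle be read off from general rule-format theorems; but to keep the argument self-contained I will exhibit the witnessing bisimulations explicitly, mirroring the just-proved pomset congruence theorem with pomset transitions replaced by steps.

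First, location prefixing. Suppose $x\sim_{ps}^{sl}y$ via a probabilistic static location step bisimulation $R_{\varphi}$ with $(\emptyset,\emptyset)\in R_{\varphi}$. I would put $R'_{\varphi}=\{(u::x',u::y'):(x',y')\in R_{\varphi},\ u\in Loc^*\}\cup R_{\varphi}$ and check the four clauses of the definition. Every action step $u::x\xrightarrow[u\ll w]{X}u::x'$ can only come from the rule $\frac{x\xrightarrow[w]{e}x'}{loc::x\xrightarrow[loc\ll w]{e}loc::x'}$ (read at the level of steps), hence from $x\xrightarrow[w]{X}x'$; the same rule applied on the $y$-side yields the matching step, and the cla is extended consistently because the location association is shifted uniformly by the prefix $u$. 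Probabilistic transitions $u::x\rightsquigarrow u::x'$ come from $x\rightsquigarrow x'$ and are matched the same way. Clause (3) holds because $\mu(u::x',\cdot)$ is, by Table \ref{PDFBAPTC22}, inherited from $\mu(x',\cdot)$, so equality of the $\mu$-masses over $R_{\varphi}$-classes lifts to $R'_{\varphi}$-classes; clause (4) is immediate since $u::x'$ is never $\surd$.

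Next, for a binary operator $\circ\in\{+,\cdot,\boxplus_{\pi}\}$, suppose $x_1\sim_{ps}^{sl}y_1$ and $x_2\sim_{ps}^{sl}y_2$. I would take the candidate $R=\{(x_1\circ x_2,\ y_1\circ y_2)\}\cup\sim_{ps}^{sl}$, carry along the cla $\varphi$, close under symmetry, add the identity, and case-split on the rule that fires. For $+$: each action step of $x_1+x_2$ is an action step of $x_1$ or of $x_2$, matched by the corresponding component step (which lands in $\sim_{ps}^{sl}$); each probabilistic transition is either inherited from one summand or is the pairing $\frac{x\rightsquigarrow x'\quad y\rightsquigarrow y'}{x+y\rightsquigarrow x'+y'}$, matched componentwise and landing again in $R$. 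For $\cdot$: an action step of $x_1\cdot x_2$ either leaves a residual $x_1'\cdot x_2$ — matched, staying in $R$ because the second component is untouched — or passes control to $x_2$, matched via $x_2\sim_{ps}^{sl}y_2$; a probabilistic transition of $x_1\cdot x_2$ is one of $x_1$ tagged by $\cdot x_2$. For $\boxplus_{\pi}$: the only transitions are $x_1\boxplus_{\pi}x_2\rightsquigarrow z$ with $z$ a $\rightsquigarrow$-successor of $x_1$ or of $x_2$, matched directly. In all cases clause (4) holds since no composite term equals $\surd$.

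The main obstacle is clause (3): $\mu(C_1,C)=\mu(C_2,C)$ for every class $C$ of the candidate relation. The tool is the compositional definition of $\mu$ in Table \ref{PDFBAPTC22} — $\mu(x_1+x_2,x_1'+x_2')=\mu(x_1,x_1')\cdot\mu(x_2,x_2')$, $\mu(x_1\cdot x_2,x_1'\cdot x_2)=\mu(x_1,x_1')$, $\mu(x_1\boxplus_{\pi}x_2,z)=\pi\mu(x_1,z)+(1-\pi)\mu(x_2,z)$, and $\mu=0$ otherwise. One first checks that $R$ is an equivalence whose classes of composite terms are exactly the products (for $+$), shifts (for $\cdot$), or mixtures (for $\boxplus_{\pi}$) of component $\sim_{ps}^{sl}$-classes; then, summing $\mu$ over such a class $C$ and regrouping by component classes reduces $\mu(x_1\circ x_2,C)$ to a fixed arithmetic expression in the $\mu(x_i,\cdot)$-masses of component classes, and the hypotheses $\mu(x_i,\cdot)=\mu(y_i,\cdot)$ on those classes give $\mu(x_1\circ x_2,C)=\mu(y_1\circ y_2,C)$. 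The delicate part is precisely this bookkeeping — confirming that the partition induced on composite terms by $R$ has the claimed shape so the regrouping is well defined; once it is in place, the step-matching, probabilistic-transition and termination clauses are routine, completing the verification for all four operators.
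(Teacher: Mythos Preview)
The paper supplies no proof for this theorem: it sits in the background subsection on $BAPTC^{sl}$ as an imported result stated without justification, and the parallel quantum version later in the paper is dismissed with ``it is trivial and we leave the proof as an exercise for the readers.'' Your operator-by-operator verification is precisely the approach implicit in that remark, and you go considerably further than the paper by actually confronting clause~(3) of the bisimulation definition, correctly isolating the compositional $\mu$-clauses of Table~\ref{PDFBAPTC22} as the tool and the class-structure bookkeeping as the only substantive work.

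One technical point does need repair. For the binary operators your candidate relation $R=\{(x_1\circ x_2,\ y_1\circ y_2)\}\cup{\sim_{ps}^{sl}}$ is not closed under the transitions you analyse. For $\cdot$, a probabilistic step $x_1\cdot x_2\rightsquigarrow x_1'\cdot x_2$ must be matched by $y_1\cdot y_2\rightsquigarrow y_1'\cdot y_2$, landing at the pair $(x_1'\cdot x_2,\ y_1'\cdot y_2)$, which is neither the original singleton nor (a priori) in $\sim_{ps}^{sl}$ --- the latter is exactly what you are proving. The same happens for $+$, whose only probabilistic rule is the pairing rule, so $x_1+x_2\rightsquigarrow x_1'+x_2'$ stays composite. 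The fix is standard: take instead $\{(p\cdot x_2,\ q\cdot y_2):p\sim_{ps}^{sl}q\}\cup{\sim_{ps}^{sl}}$ for sequential composition and $\{(p_1+p_2,\ q_1+q_2):p_i\sim_{ps}^{sl}q_i\}\cup{\sim_{ps}^{sl}}$ for alternative composition. With these enlarged candidates your transfer arguments and your clause-(3) regrouping go through as sketched; indeed the product/shift description of the composite classes that you rely on for the $\mu$-bookkeeping is exactly the shape of these enlarged relations.
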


\begin{theorem}[Soundness of BATC with probabilistic static localities modulo probabilistic static location step bisimulation equivalence]\label{SBATCSBE22}
Let $x$ and $y$ be BATC with probabilistic static localities terms. If $BAPTC^{sl}\vdash x=y$, then $x\sim_{ps}^{sl} y$.
\end{theorem}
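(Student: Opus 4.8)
The plan is to exploit the congruence theorem for $\sim_{ps}^{sl}$ just stated and reduce the statement to checking the axioms of Table~\ref{AxiomsForBATC22} one at a time. A derivation of $BAPTC^{sl}\vdash x=y$ is built from instances of the axioms together with reflexivity, symmetry, transitivity and the congruence rules; since $\sim_{ps}^{sl}$ is an equivalence relation and, by the preceding congruence theorem, is preserved by $+$, $\cdot$, $u::(\cdot)$ and $\boxplus_{\pi}$, it suffices to prove that for every axiom $s=t$ and every closed substitution $\sigma$ we have $\sigma(s)\sim_{ps}^{sl}\sigma(t)$; an induction on the length of the derivation then yields $x\sim_{ps}^{sl}y$. (One may also note a shortcut: every step transition is a special pomset transition, so any probabilistic static location \emph{pomset} bisimulation is in particular a probabilistic static location \emph{step} bisimulation; hence $\sim_{pp}^{sl}\subseteq\sim_{ps}^{sl}$ and the statement already follows from Theorem~\ref{SBATCPBE22}. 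I nonetheless carry out the direct per-axiom argument below, since it is the uniform template.)

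For each axiom $s=t$ I would exhibit a relation $R_{\varphi}$ between the configurations of $\sigma(s)$ and of $\sigma(t)$ that contains $(\emptyset,\emptyset)$ — essentially the identity on the configurations reachable from both sides, enlarged by the pair of their initial configurations and the matching residuals — and verify the four defining clauses of a probabilistic static location step bisimulation: transfer of step transitions $C_1\xrightarrow[u]{X}C_1'$ (with the pair $(u,v)$ adjoined to $\varphi$), transfer of probabilistic transitions $C_1\rightsquigarrow C_1^{\pi}$, the PDF condition $\mu(C_1,C)=\mu(C_2,C)$ for every $R_{\varphi}$-class $C$, and $[\surd]_{R_{\varphi}}=\{\surd\}$. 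For the classical fragment $A1$--$A5$ the argument is verbatim the one for $BATC$ with static localities, with the probabilistic clauses trivial since no $\boxplus_{\pi}$ occurs and $\mu$ puts all mass on the $\breve{(\cdot)}$-resolution. For the locality laws $L1$--$L4$ one uses the transition rules for $::$ in Table~\ref{SETRForBATC22} — in particular $loc::\breve{e}\xrightarrow[loc]{e}\surd$ and $loc::x\xrightarrow[loc\ll u]{e}loc::x'$ — to see that the two sides perform each event with the same concatenated location string, so that the location pairs adjoined to $\varphi$ agree on both sides and the cla property is preserved; these checks are routine.

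The genuinely new content is the probabilistic axioms $PA1$--$PA5$, and this is where the main (still elementary) work lies, because $R_{\varphi}$ must additionally relate all states reached after probabilistic resolution. For $PA4$ the resolved terms of the two sides are syntactically identical: by the clauses $\mu(x\cdot y,x'\cdot y)=\mu(x,x')$ and $\mu(x\boxplus_{\pi}y,z)=\pi\mu(x,z)+(1-\pi)\mu(y,z)$ of Table~\ref{PDFBAPTC22}, both $(x\boxplus_{\pi}y)\cdot z$ and $x\cdot z\boxplus_{\pi}y\cdot z$ resolve to $x'\cdot z$ with weight $\pi\mu(x,x')$ and to $y'\cdot z$ with weight $(1-\pi)\mu(y,y')$; similarly $PA5$ using $\mu(x+y,x'+y')=\mu(x,x')\mu(y,y')$; so clause (3) is immediate once the relation is in place. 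For $PA1$ and $PA3$ clause (3) follows from the symmetry, resp.\ idempotence, of the convex combination. The one computation with any substance is $PA2$: one must verify that resolving $x\boxplus_{\pi}(y\boxplus_{\rho}z)$ and resolving $(x\boxplus_{\frac{\pi}{\pi+\rho-\pi\rho}}y)\boxplus_{\pi+\rho-\pi\rho}z$ assign the same probability to every resolved configuration, i.e.\ that the weights $\pi$, $(1-\pi)\rho$, $(1-\pi)(1-\rho)$ coincide with $(\pi+\rho-\pi\rho)\cdot\frac{\pi}{\pi+\rho-\pi\rho}$, $(\pi+\rho-\pi\rho)\cdot(1-\frac{\pi}{\pi+\rho-\pi\rho})$ and $1-(\pi+\rho-\pi\rho)$ — a short rational identity. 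Once all the axioms are dispatched, the induction described in the first paragraph completes the proof; the only real obstacle is the bookkeeping volume, not any conceptual difficulty.
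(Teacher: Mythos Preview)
Your proposal is correct and follows exactly the template the paper uses: reduce soundness to a per-axiom check by invoking that $\sim_{ps}^{sl}$ is an equivalence and (by the preceding congruence theorem) a congruence for the operators of $BAPTC^{sl}$. In the paper this particular theorem is a background result stated without proof, and the analogous quantum version later in the paper gives only the one-line reduction (``since $\sim_{ps}^{sl}$ is both an equivalent and a congruent relation, we only need to check if each axiom \ldots\ is sound'') before leaving the axiom-by-axiom verification as an exercise; you have essentially written out that exercise, including the PDF computations for $PA1$--$PA5$, which is more than the paper does.
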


\begin{theorem}[Completeness of BATC with probabilistic static localities modulo probabilistic static location step bisimulation equivalence]\label{CBATCSBE22}
Let $p$ and $q$ be closed BATC with probabilistic static localities terms, if $p\sim_{ps}^{sl} q$ then $p=q$.
\end{theorem}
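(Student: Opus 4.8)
The plan is to follow the standard algebraic route for such completeness results: reduce to basic terms via the elimination theorem, pass to a probabilistic head normal form, and then run an induction on the normal form in which the transition rules of Table~\ref{SETRForBATC22} and the clauses defining $\sim_{ps}^{sl}$ are used to extract syntactic equality.

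\emph{Reduction and normalization.} By the elimination theorem (Theorem~\ref{ETBATC22}) there are basic terms $p_0,q_0\in\mathcal{B}(BAPTC^{sl})$ with $BAPTC^{sl}\vdash p=p_0$ and $BAPTC^{sl}\vdash q=q_0$; by soundness (Theorem~\ref{SBATCSBE22}) these provable equalities are also $\sim_{ps}^{sl}$-identities, so $p_0\sim_{ps}^{sl}q_0$ and it suffices to prove $p_0=q_0$. Thus assume $p,q$ are basic. As a further step I would prove a lemma (by induction on term structure, using $A1$--$A5$, $L1$--$L4$, $PA1$--$PA5$ and the distributivities they entail) that every closed basic term is provably equal to a \emph{normal form}: an iterated probabilistic choice $\bigl(\cdots(t_1\boxplus_{\pi_1}t_2)\boxplus_{\pi_2}\cdots\boxplus_{\pi_{m-1}}t_m\bigr)$ in which each $t_i$ is $\boxplus$-free and is a nonempty sum of summands of the shape $u::a$ or $u::a\cdot t'$ with $a\in\mathbb{E}$, $u\in Loc^*$ and $t'$ again a normal form. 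Using $PA3$ one may additionally assume that no two $t_i$ are $\sim_{ps}^{sl}$-equivalent, and via $A1$--$A3$ that no summand is repeated within a $t_i$.

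\emph{The induction.} Let $p,q$ now be normal forms with $p\sim_{ps}^{sl}q$, and argue by induction on $|p|+|q|$. First compare probabilistic behaviour: by the $\rightsquigarrow$-rules the probabilistic transitions of $p$ lead exactly to the resolutions of the branches $t_i$, and the PDF table of Table~\ref{PDFBAPTC22} assigns each branch its accumulated weight in the nested $\boxplus$; likewise for $q$. Clauses (2)--(4) of the definition of $\sim_{ps}^{sl}$ say the probabilistic successors match, that $\mu(p,C)=\mu(q,C)$ for every $\sim_{ps}^{sl}$-class $C$, and that $[\surd]$ is a singleton class. Since, after the $PA3$-normalization, distinct branches lie in distinct classes, this forces $m$ to be the same on both sides and, after reindexing, $t_i\sim_{ps}^{sl}t_i'$ with equal accumulated weights; by $PA1$ and $PA2$ it then suffices to prove $t_i=t_i'$. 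For a fixed $\boxplus$-free pair $t=t_i$, $t'=t_i'$ with $t\sim_{ps}^{sl}t'$: since BATC has no parallel composition, every transition of a closed basic term carries a single event, so $\sim_{ps}^{sl}$ here behaves like probabilistic location bisimulation with single-event steps and clause~(1) applies summand by summand. A summand $u::a\cdot s$ of $t$ gives $t\xrightarrow[u]{a}s$, matched by $t'\xrightarrow[v]{a}s'$ with $s\sim_{ps}^{sl}s'$ coming from a summand $v::a\cdot s'$ of $t'$ (and a summand $u::a$ is matched by a summand $v::a$, both sides reaching $\surd$); symmetrically for summands of $t'$. The induction hypothesis gives $s=s'$, and the consistency of the location association $\varphi$ along the bisimulation, started at $(\emptyset,\emptyset)$, pins down $u=v$ on matched summands; absorbing duplicates by $A1$--$A3$ yields $t=t'$. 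Reassembling the branches via $PA1$ and $PA2$ gives $p=q$.

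\textbf{Main obstacle.} The genuinely new ingredient over the non-probabilistic case (Theorem~\ref{CBATCSBE21}) is the probabilistic bookkeeping in the induction step: turning the numerical conditions $\mu(p,C)=\mu(q,C)$ over $\sim_{ps}^{sl}$-classes into a one-to-one pairing of the syntactic $\boxplus$-branches with matching cumulative probabilities. This is where the $PA$-axioms carry the argument — $PA3$ to collapse equivalent branches so that distinct branches occupy distinct classes, and $PA1$/$PA2$ to reorder and reweight the survivors into a common shape. The remaining work — the summand-by-summand matching and the transport of location prefixes along transitions — is the familiar BATC completeness induction and is routine.
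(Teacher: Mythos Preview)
The paper does not prove this theorem: it appears in Chapter~2 (Backgrounds), Section~2.4.2, as a stated background result with no proof environment, and the surrounding material defers all such completeness proofs to the references \cite{LOC1} and \cite{LOC2}. The later chapters that do contain proofs (e.g.\ Theorem~\ref{CBATCG} for $qBAPTC^{sl}$) simply invoke this theorem by citation rather than redoing the argument. So there is no in-paper proof to compare against; your outline is substantially more detailed than anything the paper offers for this statement.

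Your route---elimination to basic terms, normalization to a probabilistic head normal form, then structural induction using the $\mu$-matching clause to pair $\boxplus$-branches and the transition-matching clause to pair summands---is the expected one for results of this kind and is presumably what \cite{LOC2} does. One point to tighten: you assert that ``the consistency of the location association $\varphi$ \ldots\ pins down $u=v$ on matched summands,'' but the cla condition in Definition~2.4.6 only says $(u,v),(u',v')\in\varphi$ implies $u\diamond u'\Leftrightarrow v\diamond v'$, i.e.\ it preserves the independence relation, not equality of locations. You will need a separate argument (or a sharper invariant carried through the induction) to extract $u=v$; this is already an issue in the non-probabilistic case (Theorem~\ref{CBATCSBE21}), so whatever \cite{LOC1} does there should transfer, but it is not ``routine'' from the definitions alone.
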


\begin{theorem}[Congruence of BATC with probabilistic static localities with respect to probabilistic static location hp-bisimulation equivalence]
Probabilistic static location hp-bisimulation equivalence $\sim_{php}^{sl}$ is a congruence with respect to BATC with probabilistic static localities.
\end{theorem}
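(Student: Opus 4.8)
The plan is to prove the congruence property operator-by-operator: since $\sim_{php}^{sl}$ is by definition an equivalence relation, it suffices to show that it is preserved by each operator in the signature of $BAPTC^{sl}$, namely alternative composition $+$, sequential composition $\cdot$, static location prefixing $u::(\cdot)$, and probabilistic alternative composition $\boxplus_{\pi}$. For a fixed operator $f$ of arity $n$, I would assume terms $x_i$ and $y_i$ with $x_i\sim_{php}^{sl} y_i$, witnessed by probabilistic static location hp-bisimulations $R^i_{\varphi}$ with $(\emptyset,\emptyset,\emptyset)\in R^i_{\varphi}$, and then exhibit an explicit $\varphi$-indexed posetal relation $S_{\varphi}\subseteq\mathcal{C}(\mathcal{E}_{f(\vec x)})\overline{\times}\mathcal{C}(\mathcal{E}_{f(\vec y)})$ with $(\emptyset,\emptyset,\emptyset)\in S_{\varphi}$ that relates $f(x_1,\dots,x_n)$ to $f(y_1,\dots,y_n)$, verifying that $S_{\varphi}$ satisfies the four clauses of the definition of probabilistic static location hp-bisimulation. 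Open terms are handled by treating their free variables as fresh constants.

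The candidate relations are the evident ones built from the component relations: for $+$ and $\boxplus_{\pi}$ one takes (essentially) the union $R^1_{\varphi}\cup R^2_{\varphi}$ together with the new root triple $(\emptyset,\emptyset,\emptyset)$, reflecting that after the first move a choice term behaves as one of its summands; for $\cdot$ one glues $R^1_{\varphi}$ to $R^2_{\varphi}$ along the terminated configurations of the first argument; and for $u::(\cdot)$ one transports $R^1_{\varphi}$ through the location-prefix rules, which merely relabel the location of each step by $loc\ll u$. For each such $S_{\varphi}$ I would check: (1) action/pomset transitions are matched with the isomorphism updated as $f[e_1\mapsto e_2]$ and the location association extended to $\varphi\cup\{(u,v)\}$, using the transition rules of Table~\ref{SETRForBATC22}; (2) probabilistic transitions $\rightsquigarrow$ are matched, using the $\rightsquigarrow$-rules of Table~\ref{SETRForBATC22} (this is where the counterpart events $\breve{e}$ and the rule $\frac{x\rightsquigarrow x'}{x\boxplus_{\pi}y\rightsquigarrow x'}$ are used); (3) the probabilistic branching condition $\mu(C_1,C)=\mu(C_2,C)$ for every class $C$, which by the compositional definition of $\mu$ in Table~\ref{PDFBAPTC22} reduces to the corresponding equalities for the components; and (4) $[\surd]_{S_{\varphi}}=\{\surd\}$, which is immediate since termination of a compound term forces termination of the relevant subterm. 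The ``vice-versa'' directions are symmetric.

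The main obstacle is clause~(3) for the probabilistic operator $\boxplus_{\pi}$. For $+$, $\cdot$ and $u::(\cdot)$ the argument is essentially the non-probabilistic congruence proof for BATC with static localities modulo $\sim_{hp}^{sl}$ (already available), with only a routine extra check that $\rightsquigarrow$-steps and the trivial PDF values match. For $\boxplus_{\pi}$, however, the identity $\mu(x\boxplus_{\pi}y,z)=\pi\mu(x,z)+(1-\pi)\mu(y,z)$ must be shown compatible with $S_{\varphi}$: from $x_1\sim_{php}^{sl} y_1$ and $x_2\sim_{php}^{sl} y_2$ one needs $\mu(x_1\boxplus_{\pi}x_2,C)=\mu(y_1\boxplus_{\pi}y_2,C)$ for every class $C$ of the combined relation, which requires that the partition induced by $S_{\varphi}$ be compatible with (a common refinement of) those induced by $R^1_{\varphi}$ and $R^2_{\varphi}$, so that the per-component equalities can be summed with the weights $\pi$ and $1-\pi$. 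I would address this by choosing $S_{\varphi}$ so that its classes are unions of classes of the component relations, which makes the summation valid. A secondary subtlety is bookkeeping for the location association: one must check that $S$ is a genuine $\varphi$-indexed family closed under the extensions $\varphi\mapsto\varphi\cup\{(u,v)\}$ demanded by the definition, and that consistency of the location association is preserved; this is inherited from the component relations since the location labels on transitions of a compound term coincide with those of the corresponding subterm transitions.
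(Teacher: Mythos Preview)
Your proposal is correct and follows the same high-level strategy as the paper: note that $\sim_{php}^{sl}$ is an equivalence relation and then verify preservation operator-by-operator for $::$, $\cdot$, $+$, and $\boxplus_{\pi}$. The paper itself, however, provides no argument beyond this decomposition: the statement appears in the Backgrounds section without proof (it is imported from \cite{LOC2}), and for the analogous quantum theorem later in the paper the entire verification is declared ``trivial'' and left as an exercise. Your outline therefore goes substantially further than anything the paper supplies, in particular by isolating clause~(3) for $\boxplus_{\pi}$ as the only non-routine case and by explaining how the PDF identity $\mu(x\boxplus_{\pi}y,z)=\pi\mu(x,z)+(1-\pi)\mu(y,z)$ interacts with the quotient by the combined relation.

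One small imprecision worth correcting: for alternative composition $+$, the candidate relation cannot simply be the union $R^1_{\varphi}\cup R^2_{\varphi}$ plus the root. The probabilistic transition rule for $+$ in Table~\ref{SETRForBATC22} is $\frac{x\rightsquigarrow x'\quad y\rightsquigarrow y'}{x+y\rightsquigarrow x'+y'}$, so after a $\rightsquigarrow$-step the term is still a sum $x'+y'$, not a single summand; the resolution to one side happens only at the subsequent action transition. Hence the witnessing relation for $x_1+x_2\sim_{php}^{sl} y_1+y_2$ must also contain triples relating the intermediate sums $x_1'+x_2'$ and $y_1'+y_2'$, and the PDF check uses $\mu(x+y,x'+y')=\mu(x,x')\cdot\mu(y,y')$. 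This does not affect the overall correctness of your plan, only the precise shape of the relation in that case.
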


\begin{theorem}[Soundness of BATC with probabilistic static localities modulo probabilistic static location hp-bisimulation equivalence]\label{SBATCHPBE22}
Let $x$ and $y$ be BATC with probabilistic static localities terms. If $BATC\vdash x=y$, then $x\sim_{php}^{sl} y$.
\end{theorem}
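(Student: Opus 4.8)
The plan is to reduce soundness to a finite check on the axioms and then propagate it through equational derivations by means of the congruence property established in the preceding theorem. Since probabilistic static location hp-bisimulation equivalence $\sim_{php}^{sl}$ is a congruence with respect to BATC with probabilistic static localities, it suffices to verify that for every axiom $s=t$ in Table~\ref{AxiomsForBATC22} and every closed substitution instance $\sigma$, the terms $\sigma(s)$ and $\sigma(t)$ are related by some probabilistic static location hp-bisimulation $R_{\varphi}$ with $(\emptyset,\emptyset,\emptyset)\in R_{\varphi}$. Soundness of an arbitrary derivable equation $BATC^{sl}\vdash x=y$ then follows by induction on the length of the derivation: the reflexivity case uses the identity hp-bisimulation, the symmetry and transitivity cases are immediate from the corresponding closure properties of $\sim_{php}^{sl}$, and each congruence step is discharged by the congruence theorem.

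First I would handle the sequential/choice axioms $A1$--$A5$. For each of these one builds $R_{\varphi}$ as the posetal identity relation on the configurations commonly reachable from the two sides, augmented with the finitely many ``seam'' triples $(C_1,f,C_2)$ forced by the axiom, where $f$ is the evident label- and causality-preserving isomorphism. Here neither side uses the locality or probabilistic operators, so $\varphi$ stays trivial, the PDF clause reduces to $\mu(e,\breve{e})=1$, and the termination clause $[\surd]_{R_{\varphi}}=\{\surd\}$ is immediate. These are the routine ACP-style checks, essentially identical to the classical BATC soundness argument, and I would not dwell on them.

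Next I would treat the locality axioms $L1$--$L4$. The key point is in the transition rules of Table~\ref{SETRForBATC22}: $\epsilon::x$ behaves exactly as $x$; the rules for $loc::x$ push a prefix through $\cdot$ and $+$ one step at a time, so that $L2$ and $L3$ equate terms whose transitions occur at the \emph{same} location; and $u::(v::x)$ collapses to $uv::x$ by the rule $loc::x\xrightarrow[loc\ll u]{e}loc::x'$. Thus every transition of the left-hand side at some location $w$ is matched by the right-hand side at the same $w$ and conversely, the underlying events and their causal order being unchanged, so $f$ need not be altered; consistency of the location association is trivial since the two sides always produce identical locations, so $u\diamond u'\Leftrightarrow v\diamond v'$ holds vacuously. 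For the probabilistic axioms $PA1$--$PA5$ the work is to match the probabilistic transitions $\rightsquigarrow$ (again Table~\ref{SETRForBATC22}) and to check the clause $\mu(C_1,C)=\mu(C_2,C)$ for every $R_{\varphi}$-class $C$ using the definitions of Table~\ref{PDFBAPTC22}: $PA1$ follows from $\mu(x\boxplus_{\pi}y,z)=\pi\mu(x,z)+(1-\pi)\mu(y,z)=\mu(y\boxplus_{1-\pi}x,z)$, $PA2$ from the usual rearrangement of the probabilities $\tfrac{\pi}{\pi+\rho-\pi\rho}$ and $\pi+\rho-\pi\rho$, $PA3$ from $\pi\mu(x,z)+(1-\pi)\mu(x,z)=\mu(x,z)$, and $PA4$, $PA5$ from the multiplicativity of $\mu$ over $\cdot$ and $+$ together with the transition rules; after the $\rightsquigarrow$-resolution both sides reduce to configurations handled as in the $A$-case.

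The main obstacle I anticipate is keeping the three components of a probabilistic static location hp-bisimulation --- the posetal isomorphism $f$, the location association $\varphi$, and the probability-mass condition --- mutually consistent across the distributivity axioms $A4$, $PA4$ and $PA5$, where the branching structure of the two sides genuinely differs before the first event is taken. One must show that every choice on the left (which summand to enter, possibly after a $\boxplus_{\pi}$-resolution) can be mirrored on the right with the same event, the same location, the matching extension $f[e_1\mapsto e_2]$, and without creating any equivalence class that would violate $\mu(C_1,C)=\mu(C_2,C)$. Because hp-bisimulation, unlike hhp-bisimulation, is not required to be downward closed, this is manageable; the care needed is in checking that the set of seam triples is closed under the transition rules of Table~\ref{SETRForBATC22}. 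Once this is done for each axiom, the inductive assembly of the general case is automatic.
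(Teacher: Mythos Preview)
Your proposal is correct and follows the same high-level strategy the paper uses for this and all the analogous soundness theorems: since $\sim_{php}^{sl}$ is an equivalence and a congruence with respect to the operators of $BAPTC^{sl}$, soundness reduces to verifying each axiom in Table~\ref{AxiomsForBATC22} individually. The paper itself does not spell out the per-axiom checks at all (for the parallel theorems in Sections~3 and~5 it simply states the reduction and leaves the verification ``as an exercise for the readers''; for the background theorem you are proving it gives no proof whatsoever), so your treatment of the $A$-, $L$-, and $PA$-axioms, and in particular your explicit handling of the PDF clause via Table~\ref{PDFBAPTC22} and the location-association clause via the transition rules of Table~\ref{SETRForBATC22}, goes well beyond what the paper records while staying on the same track.
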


\begin{theorem}[Completeness of BATC with probabilistic static localities modulo probabilistic static location hp-bisimulation equivalence]\label{CBATCHPBE22}
Let $p$ and $q$ be closed BATC with probabilistic static localities terms, if $p\sim_{php}^{sl} q$ then $p=q$.
\end{theorem}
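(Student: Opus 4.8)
The plan is the standard elimination-plus-normal-form argument, adapted to the probabilistic choice operator and the location labels. First, by the elimination theorem (Theorem~\ref{ETBATC22}) there are basic terms $p',q'\in\mathcal{B}(BAPTC^{sl})$ with $BAPTC^{sl}\vdash p=p'$ and $BAPTC^{sl}\vdash q=q'$; by soundness (Theorem~\ref{SBATCHPBE22}) we then have $p\sim_{php}^{sl}p'$ and $q\sim_{php}^{sl}q'$, hence $p'\sim_{php}^{sl}q'$, so it suffices to derive $p'=q'$. From now on I would assume $p,q$ are basic terms. Next I would establish a normalization lemma: every basic term is provably equal to a \emph{probabilistic head normal form}, i.e.\ a term $s_1\boxplus_{\rho_1}(s_2\boxplus_{\rho_2}(\cdots\boxplus_{\rho_{m-1}}s_m))$ in which no $s_i$ has $\boxplus$ at its head and each $s_i$ is a sum of summands of the form $u::e$ or $u::e\cdot t$ with $u\in Loc^*$ and $t$ again in probabilistic head normal form. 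This is a routine structural induction: $PA4,PA5$ push $\boxplus$ outward past $\cdot$ and $+$; $L1$--$L4$ push $u::$ down to atomic events; $A1$--$A5$ flatten the nondeterministic sums; and $PA1$--$PA3$ let one collect weights so that no two distinct $s_i$ are provably equal.

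The key observation that makes the hp-case manageable is that $BAPTC^{sl}$ has no parallel operator, so every configuration reachable from a closed term is linearly ordered by causality; consequently every transition is a single-event transition, the relations $\sim_{pp}^{sl}$, $\sim_{ps}^{sl}$ and $\sim_{php}^{sl}$ coincide on closed basic terms, and the isomorphism component $f$ of an hp-bisimulation as well as the cla component $\varphi$ contribute nothing beyond ``same label''. Moreover the locations labelling transitions of a closed term are literally the $Loc^*$-strings dictated by its syntax (via the rules of Table~\ref{SETRForBATC22}), so a matched pair of transitions $\xrightarrow[u]{e}$ and $\xrightarrow[v]{e}$ must actually have $u=v$.

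The core claim is: if $n,n'$ are in probabilistic head normal form and $n\sim_{php}^{sl}n'$, then $BAPTC^{sl}\vdash n=n'$; I would prove it by induction on the sum of the depths of $n$ and $n'$. One first matches the probabilistic structure: by Table~\ref{SETRForBATC22} one has $n\rightsquigarrow\breve{s_i}$ for each non-probabilistic component $s_i$, with the PDF of Table~\ref{PDFBAPTC22} recording the cumulative weights; the bisimulation clauses for probabilistic transitions and for $\mu(C_1,C)=\mu(C_2,C)$ on each $R_\varphi$-class force a weight- and class-preserving correspondence between the $s_i$ of $n$ and the $s_j'$ of $n'$, with matched components $\sim_{php}^{sl}$-equivalent; by $PA1$--$PA3$ it then suffices to prove $s_i=s_j'$ for each matched pair. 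For such a pair, every summand $u::e$ of $s_i$ gives $s_i\xrightarrow[u]{e}\surd$ and every $u::e\cdot t$ gives $s_i\xrightarrow[u]{e}t$; by bisimilarity $s_j'$ has a matching transition with the same $e$ and (by the linearity remark) the same location $u$, ending in $\surd$ resp.\ a normal form $t'$ with $t\sim_{php}^{sl}t'$; the induction hypothesis yields $t=t'$, symmetry gives the reverse inclusion of summands, and $A1$--$A3$ absorb duplicates to conclude $s_i=s_j'$. Applying this to the normal forms from the first paragraph gives $BAPTC^{sl}\vdash p=q$.

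\textbf{Main obstacle.} The delicate part is the probabilistic bookkeeping in the core claim: one must argue precisely that the combination of the probabilistic transition relation $\rightsquigarrow$, the PDF equations of Table~\ref{PDFBAPTC22}, and the clause $\mu(C_1,C)=\mu(C_2,C)$ for $C\in\mathcal{C}(\mathcal{E})/R_\varphi$ really does determine the $\boxplus$-structure of a normal form up to $PA1$--$PA3$ --- in particular that grouping components into $R_\varphi$-classes and adding their weights is exactly inverted by repeated use of $PA2$ and $PA3$ --- while keeping the static-location labels consistent through the induction. The nondeterministic matching of action summands is the familiar $BATC$-completeness argument and is by comparison routine.
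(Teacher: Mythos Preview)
The paper does not actually prove this statement: it appears in the Backgrounds chapter (Section~2.4) as a result quoted from \cite{LOC2}, with no proof given. The analogous quantum completeness theorems later in the paper (e.g.\ Theorem~\ref{CBATCG}) are proved only by reduction to this classical result, again citing \cite{LOC2}. So there is no in-paper proof to compare against; your proposal is a from-scratch argument where the paper offers none.

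Your strategy---elimination to basic terms, reduction to a probabilistic head normal form via $PA1$--$PA5$ and $L1$--$L4$, then structural induction matching first the $\boxplus$-layer (using the PDF clauses and the $\mu$-condition) and then the summands---is the standard route and is almost certainly what \cite{LOC2} does. Your observation that, in the absence of any parallel operator, all transitions are single-event and hence $\sim_{php}^{sl}$ collapses to $\sim_{pp}^{sl}$ on closed $BAPTC^{sl}$ terms is exactly the right simplification for the hp-case.

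One step deserves more care than you give it: your claim that a matched pair of transitions $\xrightarrow[u]{e}$ and $\xrightarrow[v]{e}$ must have $u=v$. The bisimulation definition only records $(u,v)$ in a consistent location association $\varphi$, and the cla condition constrains independence, not equality. Since the axioms $L1$--$L4$ provide no means of identifying syntactically distinct locations, completeness \emph{requires} that bisimilar closed terms carry literally the same location strings; but this does not follow from ``locations are dictated by syntax'' alone---you need an argument that the growing family $R_{\varphi}$ cannot consistently pair distinct locations across an entire derivation. That argument (or an appeal to how \cite{LOC2} handles it) is the one genuine gap in your writeup; the probabilistic bookkeeping you flag as the main obstacle is by comparison mechanical.
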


\begin{theorem}[Congruence of BATC with probabilistic static localities with respect to probabilistic static location hhp-bisimulation equivalence]
Probabilistic static location hhp-bisimulation equivalence $\sim_{phhp}^{sl}$ is a congruence with respect to BATC with probabilistic static localities.
\end{theorem}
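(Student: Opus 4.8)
The plan is to use the usual operator-by-operator strategy. It suffices to show that for each function symbol $f$ in the signature of BATC with probabilistic static localities --- that is, $+$, $\cdot$, the locality prefix $u::\,$, and the probabilistic choice $\boxplus_{\pi}$ --- if the arguments are related by $\sim_{phhp}^{sl}$ then so is the term obtained by applying $f$. Concretely, given witnessing probabilistic static location hhp-bisimulations $R^{1}_{\varphi}$ and $R^{2}_{\varphi}$ (each a downward closed posetal relation) for the respective pairs of arguments, I build a candidate posetal relation $R_{\varphi}$ on the configurations of the composite event structures and verify the four clauses of a probabilistic static location hp-bisimulation: transfer of ordinary transitions $\xrightarrow[u]{e}$, transfer of probabilistic transitions $C_1\xrsquigarrow{\pi}C_1^{\pi}$, equality of the PDFs $\mu(\cdot,C)$ on every $R_{\varphi}$-class $C\in\mathcal{C}(\mathcal{E})/R_{\varphi}$, and $[\surd]_{R_{\varphi}}=\{\surd\}$; and I additionally check that $R_{\varphi}$ is downward closed. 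Since $\sim_{phhp}^{sl}$ refines $\sim_{php}^{sl}$, whose congruence is obtained by the same bisimulation constructions, the genuinely new obligation in each case is only that the constructed relation stays downward closed.

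For $+$ and $\cdot$: the prime event structure of a sum is the disjoint union of the summands with all cross-conflicts added, so for $x_1+x_2$ versus $y_1+y_2$ I take $R_{\varphi}=\{(\emptyset,\emptyset,\emptyset)\}\cup R^{1}_{\varphi}\cup R^{2}_{\varphi}$ under this identification. The transfer of $\xrightarrow[u]{e}$ is a case analysis on which summand performs the first step using the rules for $+$ in Table \ref{SETRForBATC22}; once a summand is chosen, subsequent behaviour is governed by $R^{1}_{\varphi}$ or $R^{2}_{\varphi}$. For the probabilistic clause, the rule $\frac{x\rightsquigarrow x'\quad y\rightsquigarrow y'}{x+y\rightsquigarrow x'+y'}$ together with $\mu(x+y,x'+y')=\mu(x,x')\cdot\mu(y,y')$ from Table \ref{PDFBAPTC22} reduces the PDF condition for the sum to the assumed PDF conditions for the two arguments; downward closure is immediate as $R_{\varphi}$ is a union of downward closed relations with the bottom triple adjoined. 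For $x\cdot z$ versus $y\cdot z$ (and symmetrically $x\cdot y$ versus $x\cdot y'$), sequential composition prepends the causality of the first factor, so I let $R_{\varphi}$ contain the triples $(C_1,f,C_2)\in R^{1}_{\varphi}$ while the first factor has not terminated, extended once it has by a triple of $R^{2}_{\varphi}$ translated along the already-matched prefix. The transfer conditions follow from $\frac{x\xrightarrow[u]{e}\surd}{x\cdot y\xrightarrow[u]{e}y}$ and $\frac{x\xrightarrow[u]{e}x'}{x\cdot y\xrightarrow[u]{e}x'\cdot y}$, the PDF clause from $\mu(x\cdot y,x'\cdot y)=\mu(x,x')$, and downward closure holds because removing events from a configuration of $x\cdot z$ either stays in the prefix part (covered by downward closure of $R^{1}_{\varphi}$) or also truncates the suffix part (covered by downward closure of $R^{2}_{\varphi}$).

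For $u::$ and $\boxplus_{\pi}$: for $u::x$ versus $u::y$ the only effect of the rule $\frac{x\xrightarrow[w]{e}x'}{loc::x\xrightarrow[loc\ll w]{e}loc::x'}$, iterated along the word $u$, is to prepend the fixed word $u$ to every location in a run. I take $R_{\varphi}$ to be the image of the witnessing relation for $x,y$ under this relocation, with each location pair $(w,w')$ recorded in $\varphi$ re-indexed to $(uw,uw')$. The point to verify is that this re-indexed family is still a consistent location association: prepending the common prefix $u$ does not change whether $w\diamond w'$ holds, since neither of $w,w'$ is a prolongation of the other iff the same is true of $uw,uw'$; the probabilistic and $\surd$ clauses are untouched, and downward closure is inherited. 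For $x_1\boxplus_{\pi}x_2$ versus $y_1\boxplus_{\pi}y_2$ the operational content is entirely in the probabilistic transitions, since $x_1\boxplus_{\pi}x_2\rightsquigarrow x_1'$ and $x_1\boxplus_{\pi}x_2\rightsquigarrow x_2'$ whenever $x_1\rightsquigarrow x_1'$ and $x_2\rightsquigarrow x_2'$, after which one is simply inside $x_1$ or $x_2$; I again take $R_{\varphi}=\{(\emptyset,\emptyset,\emptyset)\}\cup R^{1}_{\varphi}\cup R^{2}_{\varphi}$. The new obligation is the PDF clause at the top: by $\mu(x\boxplus_{\pi}y,z)=\pi\mu(x,z)+(1-\pi)\mu(y,z)$ one computes, for each $R_{\varphi}$-class $C$, that $\mu(x_1\boxplus_{\pi}x_2,C)=\pi\mu(x_1,C)+(1-\pi)\mu(x_2,C)=\pi\mu(y_1,C)+(1-\pi)\mu(y_2,C)=\mu(y_1\boxplus_{\pi}y_2,C)$ using the assumed PDF equalities; the remaining clauses and downward closure are as in the $+$ case.

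I expect the main obstacle to be bookkeeping rather than depth: one must check that each candidate relation is simultaneously downward closed and compatible with clause (3), i.e.\ that the relation and its downward closure induce the same partition of configurations on the two sides so that the $\mu(\cdot,C)$ equalities are meaningful, and that the location association $\varphi$ remains a cla as it is extended along $\cdot$, where the locations of the second factor are appended after those of the first. Because BATC with probabilistic static localities has no parallel, communication, or abstraction operators, no new concurrency or independence is created by any operator, so all of these checks go through; the known failure of hhp-bisimulation to be a congruence arises only in the presence of parallel composition, which is absent here. Assembling the transfer and downward-closure verifications for $+$, $\cdot$, $u::\,$ and $\boxplus_{\pi}$ then yields that $\sim_{phhp}^{sl}$ is a congruence with respect to BATC with probabilistic static localities.
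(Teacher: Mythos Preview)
Your proposal is correct and follows the same high-level approach the paper gestures at: show that $\sim_{phhp}^{sl}$ is preserved by each operator $::$, $\cdot$, $+$, $\boxplus_{\pi}$ by building a witnessing relation from the given ones. The paper itself does not prove this theorem in the backgrounds section (it is imported from \cite{LOC2}), and where it proves the analogous congruence results for $qBAPTC$ with localities it says only that the equivalence is easily seen to be an equivalence relation, that preservation under $::$, $\cdot$, $+$, $\boxplus_{\pi}$ is all that needs checking, and that this is ``trivial and we leave the proof as an exercise for the readers''; you have simply carried out that exercise, with the correct attention to downward closure and to the PDF clause $\mu(x\boxplus_{\pi}y,z)=\pi\mu(x,z)+(1-\pi)\mu(y,z)$.

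One small remark: your aside that ``the known failure of hhp-bisimulation to be a congruence arises only in the presence of parallel composition'' is accurate for the classical CCS/event-structure literature, but note that this paper \emph{does} claim $\sim_{phhp}^{sl}$ is a congruence even for its parallel operators in $APPTC^{sl}$ (Theorem~\ref{CAPTCG}). That does not affect your argument here, since BATC with probabilistic static localities indeed has no parallel operator, but you should be aware that the paper's framework asserts more than the standard results would give.
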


\begin{theorem}[Soundness of BATC with probabilistic static localities modulo probabilistic static location hhp-bisimulation equivalence]\label{SBATCHHPBE22}
Let $x$ and $y$ be BATC with probabilistic static localities terms. If $BATC\vdash x=y$, then $x\sim_{phhp}^{sl} y$.
\end{theorem}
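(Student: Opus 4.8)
The plan is to invoke the congruence theorem for $\sim_{phhp}^{sl}$ proved just above and thereby reduce the statement to a finite check. Since $\sim_{phhp}^{sl}$ is an equivalence relation and a congruence with respect to all operators of BATC with probabilistic static localities, and since derivability is generated from the axioms of Table~\ref{AxiomsForBATC22} by reflexivity, symmetry, transitivity and congruence, it suffices to show that every closed substitution instance of every axiom $s=t$ in Table~\ref{AxiomsForBATC22} satisfies $s\sim_{phhp}^{sl}t$. I would then dispatch the axioms group by group.

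For each axiom I would exhibit an explicit relation $R_{\varphi}$ that is a probabilistic static location hp-bisimulation and is moreover downward closed, hence an hhp-bisimulation, and that contains $(\emptyset,\emptyset,\emptyset)$. Concretely, $R_{\varphi}$ is taken to be the \emph{full} family of triples $(C_1,f,C_2)$ where $C_1$ and $C_2$ are corresponding configurations of the two sides under the syntactic identification the axiom expresses and $f$ is the induced order-isomorphism; defining it as the full family (not merely the reachable triples) makes downward closure automatic. One then verifies the four clauses of the definition: matching of location-labelled transitions $C_1\xrightarrow[u]{e_1}C_1'$ by $C_2\xrightarrow[v]{e_2}C_2'$ with $(C_1',f[e_1\mapsto e_2],C_2')\in R_{\varphi\cup\{(u,v)\}}$ (and symmetrically), matching of probabilistic transitions $\rightsquigarrow$, equality $\mu(C_1,C)=\mu(C_2,C)$ of the PDF values of Table~\ref{PDFBAPTC22} on every $R_{\varphi}$-class $C$, and $[\surd]_{R_{\varphi}}=\{\surd\}$. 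For the classical axioms $A1$--$A5$ the two event structures coincide up to renaming of events, so $R$ is essentially the identity posetal relation, the probabilistic clauses are immediate, and the transition-rule analysis is the standard one for $BATC$. For the locality axioms $L1$--$L4$ the underlying event structures again coincide and only the location decorations change; here one checks, using the rules of Table~\ref{SETRForBATC22} (in particular $x\xrightarrow[u]{e}x'$ implies $loc::x\xrightarrow[loc\ll u]{e}loc::x'$), that the pairs $(u,v)$ inserted into $\varphi$ always have $u=v$ (for $L1$, after dropping the harmless $\epsilon$; for $L4$, by associativity of concatenation on $Loc^*$), so $\varphi$ remains a consistent location association.

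The substance is in the probabilistic axioms $PA1$--$PA5$, where the resolution step $\rightsquigarrow$ and the clause $\mu(x\boxplus_{\pi}y,z)=\pi\mu(x,z)+(1-\pi)\mu(y,z)$ carry the weight. For $PA1$ and $PA3$ the two sides resolve to the same states and one only checks the identities $\pi\mu(x,z)+(1-\pi)\mu(y,z)=(1-\pi)\mu(y,z)+\pi\mu(x,z)$ and $\pi\mu(x,z)+(1-\pi)\mu(x,z)=\mu(x,z)$; for $PA2$ the reweighting by $\frac{\pi}{\pi+\rho-\pi\rho}$ and $\pi+\rho-\pi\rho$ is precisely what makes the PDF of $(x\boxplus_{\pi/(\pi+\rho-\pi\rho)}y)\boxplus_{\pi+\rho-\pi\rho}z$ agree with that of $x\boxplus_{\pi}(y\boxplus_{\rho}z)$ on each class; for $PA4$ and $PA5$ the relation relates $(x\boxplus_{\pi}y)\cdot z$ with $x\cdot z\boxplus_{\pi}y\cdot z$, and $(x\boxplus_{\pi}y)+z$ with $(x+z)\boxplus_{\pi}(y+z)$, after which one tracks corresponding sub-configurations and the PDF clauses for $\cdot$ and $+$ supply the matching values. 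I expect the main obstacle to be making clause (iii) and downward closure hold simultaneously: the relation must be tight enough that the associated configuration map is an order-isomorphism respecting the location association, yet closed under passing to sub-triples, while the PDF equality $\mu(C_1,C)=\mu(C_2,C)$ must hold for \emph{every} $R_{\varphi}$-class and not just the reachable ones. Once the relation for each axiom is set up as the full family of corresponding-configuration triples above, this reduces to a routine induction on configuration size over the rules of Table~\ref{SETRForBATC22} together with the arithmetic of Table~\ref{PDFBAPTC22}.
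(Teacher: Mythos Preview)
Your approach is correct and matches the paper's pattern: reduce soundness to an axiom-by-axiom check by invoking that $\sim_{phhp}^{sl}$ is an equivalence and a congruence, then verify each axiom of Table~\ref{AxiomsForBATC22}. The paper states this theorem in the background section without proof, and where analogous soundness theorems are proved later (e.g., for $qBATC$ with localities, $qBAPTC$ with localities), the proof is precisely your first sentence followed by ``we leave the proof as an exercise for the readers''; you have simply gone further and sketched how the exercise would actually be carried out, which is fine.
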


\begin{theorem}[Completeness of BATC with probabilistic static localities modulo probabilistic static location hhp-bisimulation equivalence]\label{CBATCHHPBE22}
Let $p$ and $q$ be closed BATC with probabilistic static localities terms, if $p\sim_{phhp}^{sl} q$ then $p=q$.
\end{theorem}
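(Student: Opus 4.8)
The proof will follow the standard route for completeness results in this book, reducing bisimilarity to provable equality by way of a normal form. By the elimination theorem for $BAPTC^{sl}$ (Theorem~\ref{ETBATC22}) every closed term is provably equal to a basic term, and by soundness (Theorem~\ref{SBATCHHPBE22}) provable equality implies $\sim_{phhp}^{sl}$-equivalence; hence it suffices to show that two closed \emph{basic} terms $p,q\in\mathcal{B}(BAPTC^{sl})$ with $p\sim_{phhp}^{sl}q$ are provably equal. Using $A1$--$A5$ to handle sequencing and alternative composition, $L1$--$L4$ to push every location prefix $u::$ inward onto atomic events, and $PA1$--$PA5$ to distribute $\cdot$ and $+$ over $\boxplus_{\pi}$ and to collect probabilistic choices at the top, I would first rewrite $p$ (and $q$) into a \emph{probabilistic head normal form}: an iterated probabilistic sum $\boxplus$ of finitely many non-probabilistic summands, each of which is a sum of terms of the shape $w::e$ and $w::e\cdot p'$ with $p'$ again in probabilistic head normal form, with duplicate summands and duplicate probabilistic branches collapsed via $A3$ and $PA1$--$PA3$. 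This is routine bookkeeping and parallels the normal-form constructions behind Theorems~\ref{CBATCPBE22}, \ref{CBATCSBE22} and \ref{CBATCHPBE22}.

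The argument then proceeds by induction on the size of $p$ (equivalently, the depth of its normal form). Assume $p\sim_{phhp}^{sl}q$ with both in normal form. Clauses~(2) and~(3) of the definition of probabilistic static location hp-bisimulation, read off against the PDF clauses for $BAPTC$ in Table~\ref{PDFBAPTC22}, force the top-level probabilistic branching to agree: every determinate residual $\breve{p}$ reached by $p\rightsquigarrow\breve{p}$ must be matched by a $\breve{q}$ with $(\breve{p},f,\breve{q})$ in the relation, and the measures $\mu(\cdot,C)$ must coincide on each $R_{\varphi}$-class, which pins the branch probabilities up to $PA1$--$PA3$. After such a probabilistic step we are comparing two ordinary head normal forms under a (hereditary) static location hp-bisimulation: for each summand $w::e$ (resp. $w::e\cdot p'$) the transition $p\xrightarrow[w]{e}\surd$ (resp. $p\xrightarrow[w]{e}p'$) must be answered by $q\xrightarrow[w']{e}\surd$ (resp. $q\xrightarrow[w']{e}q'$) with $(w,w')$ added to the consistent location association and $(p',f[e\mapsto e],q')$ in the relation; tracking $\varphi$ from the empty configuration through the induction, and using that the terms are closed, identifies the summands of $p$ with those of $q$ up to the equalities among location prefixes provable from $L1$--$L4$, while the downward-closedness required of a \emph{hereditary} hp-bisimulation guarantees this matching is coherent with all smaller configurations. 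By the induction hypothesis $p'\sim_{phhp}^{sl}q'$ yields $BAPTC^{sl}\vdash p'=q'$, and substituting these equalities inside the summands (legitimate by the congruence theorem for $\sim_{phhp}^{sl}$) together with $A3$ and $PA3$ to absorb doubly-matched branches gives $BAPTC^{sl}\vdash p=q$.

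The step I expect to be the main obstacle is reconciling the two matchings that the hhp-condition imposes at once: the coefficient matching forced by the $\mu$-clauses must be realised by the \emph{same} posetal relation that witnesses the downward-closed event-isomorphism matching, so one has to check that normalising the probabilistic structure (collapsing branches with $PA1$--$PA3$) does not disturb the hereditary structure on configurations, and conversely. A clean way to bypass most of this bookkeeping is to observe that on closed $BAPTC^{sl}$ terms $\sim_{phhp}^{sl}$ refines $\sim_{pp}^{sl}$ --- a probabilistic static location hhp-bisimulation is in particular a probabilistic static location hp-bisimulation, and from it one builds a probabilistic static location pomset bisimulation in the usual way by linearising pomset transitions into sequences of single events --- so that $p\sim_{phhp}^{sl}q$ already gives $p\sim_{pp}^{sl}q$, and the claim follows immediately from Theorem~\ref{CBATCPBE22}. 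The direct argument above is only needed if a self-contained proof in the same style as the other completeness theorems of this section is wanted.
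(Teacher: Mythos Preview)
The theorem you are proving sits in the \emph{Backgrounds} chapter (Section~2.4.2) and is stated there without proof: the paper simply records it as a known result about $BAPTC^{sl}$, deferring to \cite{LOC2}. There is therefore no ``paper's own proof'' to compare against beyond an implicit pointer to the cited monograph. Your proposal is consistent with how the paper handles the analogous \emph{quantum} completeness theorem later on (Theorem~\ref{CBATCG}), whose entire proof is: quantum bisimilarity implies classical bisimilarity plus equality of quantum states, and classical completeness is imported from \cite{LOC2}.

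Your sketch is correct and is exactly the standard route one would expect \cite{LOC2} to take: eliminate to basic terms, push locations down and probabilistic choices up into a head normal form using $L1$--$L4$ and $PA1$--$PA5$, then match summands and probabilistic branches by induction on depth using the transfer clauses and the $\mu$-clause of the bisimulation. The shortcut you give at the end --- that $\sim_{phhp}^{sl}\subseteq\sim_{pp}^{sl}$, so completeness for $\sim_{pp}^{sl}$ (Theorem~\ref{CBATCPBE22}) already does the job --- is the cleanest way to package the result and is precisely how this family of theorems is usually collapsed into one proof; the paper's own treatment of the quantum analogues bundles all four equivalences together in just this way. Either argument is fine; the direct one is only needed if you want a proof that does not lean on the earlier completeness theorem.
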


\subsubsection{APPTC with Localities}

Firstly, we give the definition of PDFs in Table \ref{PDFAPPTC22}.

\begin{center}
    \begin{table}
        $$\mu(\delta,\breve{\delta})=1$$
        $$\mu(x\between y,x'\parallel y'+x'\mid y')=\mu(x,x')\cdot\mu(y,y')$$
        $$\mu(x\parallel y,x'\leftmerge y+y'\leftmerge x)=\mu(x,x')\cdot \mu(y,y')$$
        $$\mu(x\leftmerge y, x'\leftmerge y)=\mu(x,x')$$
        $$\mu(x\mid y,x'\mid y')=\mu(x,x')\cdot \mu(y,y')$$
        $$\mu(\Theta(x),\Theta(x'))=\mu(x,x')$$
        $$\mu(x\triangleleft y, x'\triangleleft y)=\mu(x,x')$$
        $$\mu(x,y)=0,\textrm{otherwise}$$
        \caption{PDF definitions of $APPTC$}
        \label{PDFAPPTC22}
    \end{table}
\end{center}

We give the transition rules of APTC with probabilistic static localities as Table \ref{TRForAPPTC122} and \ref{TRForAPTC22} shows.

\begin{center}
    \begin{table}
        $$\frac{x\rightsquigarrow x'\quad y\rightsquigarrow y'}{x\between y\rightsquigarrow x'\parallel y'+x'\mid y'}$$
        $$\frac{x\rightsquigarrow x'\quad y\rightsquigarrow y'}{x\parallel y\rightsquigarrow x'\leftmerge y+y'\leftmerge x}$$
        $$\frac{x\rightsquigarrow x'}{x\leftmerge y\rightsquigarrow x'\leftmerge y}$$
        $$\frac{x\rightsquigarrow x'\quad y\rightsquigarrow y'}{x\mid y\rightsquigarrow x'\mid y'}$$
        $$\frac{x\rightsquigarrow x'}{\Theta(x)\rightsquigarrow \Theta(x')}$$
        $$\frac{x\rightsquigarrow x'}{x\triangleleft y\rightsquigarrow x'\triangleleft y}$$
        \caption{Probabilistic transition rules of APTC with probabilistic static localities}
        \label{TRForAPPTC122}
    \end{table}
\end{center}

\begin{center}
    \begin{table}
        $$\frac{x\xrightarrow[u]{e_1}\surd\quad y\xrightarrow[v]{e_2}\surd}{x\parallel y\xrightarrow[u\diamond v]{\{e_1,e_2\}}\surd} \quad\frac{x\xrightarrow[u]{e_1}x'\quad y\xrightarrow[v]{e_2}\surd}{x\parallel y\xrightarrow[u\diamond v]{\{e_1,e_2\}}x'}$$
        $$\frac{x\xrightarrow[u]{e_1}\surd\quad y\xrightarrow[v]{e_2}y'}{x\parallel y\xrightarrow[u\diamond v]{\{e_1,e_2\}}y'} \quad\frac{x\xrightarrow[u]{e_1}x'\quad y\xrightarrow[v]{e_2}y'}{x\parallel y\xrightarrow[u\diamond v]{\{e_1,e_2\}}x'\between y'}$$
        $$\frac{x\xrightarrow[u]{e_1}\surd\quad y\xrightarrow[v]{e_2}\surd \quad(e_1\leq e_2)}{x\leftmerge y\xrightarrow[u\diamond v]{\{e_1,e_2\}}\surd} \quad\frac{x\xrightarrow[u]{e_1}x'\quad y\xrightarrow[v]{e_2}\surd \quad(e_1\leq e_2)}{x\leftmerge y\xrightarrow[u\diamond v]{\{e_1,e_2\}}x'}$$
        $$\frac{x\xrightarrow[u]{e_1}\surd\quad y\xrightarrow[v]{e_2}y' \quad(e_1\leq e_2)}{x\leftmerge y\xrightarrow[u\diamond v]{\{e_1,e_2\}}y'} \quad\frac{x\xrightarrow[u]{e_1}x'\quad y\xrightarrow[v]{e_2}y' \quad(e_1\leq e_2)}{x\leftmerge y\xrightarrow[u\diamond v]{\{e_1,e_2\}}x'\between y'}$$
        $$\frac{x\xrightarrow[u]{e_1}\surd\quad y\xrightarrow[v]{e_2}\surd}{x\mid y\xrightarrow[u\diamond v]{\gamma(e_1,e_2)}\surd} \quad\frac{x\xrightarrow[u]{e_1}x'\quad y\xrightarrow[v]{e_2}\surd}{x\mid y\xrightarrow[u\diamond v]{\gamma(e_1,e_2)}x'}$$
        $$\frac{x\xrightarrow[u]{e_1}\surd\quad y\xrightarrow[v]{e_2}y'}{x\mid y\xrightarrow[u\diamond v]{\gamma(e_1,e_2)}y'} \quad\frac{x\xrightarrow[u]{e_1}x'\quad y\xrightarrow[v]{e_2}y'}{x\mid y\xrightarrow[u\diamond v]{\gamma(e_1,e_2)}x'\between y'}$$
        $$\frac{x\xrightarrow[u]{e_1}\surd\quad (\sharp(e_1,e_2))}{\Theta(x)\xrightarrow[u]{e_1}\surd} \quad\frac{x\xrightarrow[u]{e_2}\surd\quad (\sharp(e_1,e_2))}{\Theta(x)\xrightarrow[u]{e_2}\surd}$$
        $$\frac{x\xrightarrow[u]{e_1}x'\quad (\sharp(e_1,e_2))}{\Theta(x)\xrightarrow[u]{e_1}\Theta(x')} \quad\frac{x\xrightarrow[u]{e_2}x'\quad (\sharp(e_1,e_2))}{\Theta(x)\xrightarrow[u]{e_2}\Theta(x')}$$
        $$\frac{x\xrightarrow[u]{e_1}\surd\quad (\sharp_{\pi}(e_1,e_2))}{\Theta(x)\xrightarrow[u]{e_1}\surd} \quad\frac{x\xrightarrow[u]{e_2}\surd\quad (\sharp_{\pi}(e_1,e_2))}{\Theta(x)\xrightarrow[u]{e_2}\surd}$$
        $$\frac{x\xrightarrow[u]{e_1}x'\quad (\sharp_{\pi}(e_1,e_2))}{\Theta(x)\xrightarrow[u]{e_1}\Theta(x')} \quad\frac{x\xrightarrow[u]{e_2}x'\quad (\sharp_{\pi}(e_1,e_2))}{\Theta(x)\xrightarrow[u]{e_2}\Theta(x')}$$
        \caption{Action transition rules of APTC with probabilistic static localities}
        \label{TRForAPTC22}
    \end{table}
\end{center}

\begin{center}
    \begin{table}
        $$\frac{x\xrightarrow[u]{e_1}\surd \quad y\nrightarrow^{e_2}\quad (\sharp(e_1,e_2))}{x\triangleleft y\xrightarrow[u]{\tau}\surd}
        \quad\frac{x\xrightarrow[u]{e_1}x' \quad y\nrightarrow^{e_2}\quad (\sharp(e_1,e_2))}{x\triangleleft y\xrightarrow[u]{\tau}x'}$$
        $$\frac{x\xrightarrow[u]{e_1}\surd \quad y\nrightarrow^{e_3}\quad (\sharp(e_1,e_2),e_2\leq e_3)}{x\triangleleft y\xrightarrow[u]{e_1}\surd}
        \quad\frac{x\xrightarrow[u]{e_1}x' \quad y\nrightarrow^{e_3}\quad (\sharp(e_1,e_2),e_2\leq e_3)}{x\triangleleft y\xrightarrow[u]{e_1}x'}$$
        $$\frac{x\xrightarrow[u]{e_3}\surd \quad y\nrightarrow^{e_2}\quad (\sharp(e_1,e_2),e_1\leq e_3)}{x\triangleleft y\xrightarrow[u]{\tau}\surd}
        \quad\frac{x\xrightarrow[u]{e_3}x' \quad y\nrightarrow^{e_2}\quad (\sharp(e_1,e_2),e_1\leq e_3)}{x\triangleleft y\xrightarrow[u]{\tau}x'}$$
        $$\frac{x\xrightarrow[u]{e_1}\surd \quad y\nrightarrow^{e_2}\quad (\sharp_{\pi}(e_1,e_2))}{x\triangleleft y\xrightarrow[u]{\tau}\surd}
        \quad\frac{x\xrightarrow[u]{e_1}x' \quad y\nrightarrow^{e_2}\quad (\sharp_{\pi}(e_1,e_2))}{x\triangleleft y\xrightarrow[u]{\tau}x'}$$
        $$\frac{x\xrightarrow[u]{e_1}\surd \quad y\nrightarrow^{e_3}\quad (\sharp_{\pi}(e_1,e_2),e_2\leq e_3)}{x\triangleleft y\xrightarrow[u]{e_1}\surd}
        \quad\frac{x\xrightarrow[u]{e_1}x' \quad y\nrightarrow^{e_3}\quad (\sharp_{\pi}(e_1,e_2),e_2\leq e_3)}{x\triangleleft y\xrightarrow[u]{e_1}x'}$$
        $$\frac{x\xrightarrow[u]{e_3}\surd \quad y\nrightarrow^{e_2}\quad (\sharp_{\pi}(e_1,e_2),e_1\leq e_3)}{x\triangleleft y\xrightarrow[u]{\tau}\surd}
        \quad\frac{x\xrightarrow[u]{e_3}x' \quad y\nrightarrow^{e_2}\quad (\sharp_{\pi}(e_1,e_2),e_1\leq e_3)}{x\triangleleft y\xrightarrow[u]{\tau}x'}$$
        \caption{Action transition rules of APTC with probabilistic static localities (continuing)}
        \label{TRForAPTC222}
    \end{table}
\end{center}

In the following, we show that the elimination theorem does not hold for truly concurrent processes combined the operators $\cdot$, $+$, $\boxplus_{\pi}$ and $\leftmerge$. Firstly, we define the basic terms for APTC with probabilistic static localities.

\begin{definition}[Basic terms of APTC with probabilistic static localities]\label{BTAPTC22}
The set of basic terms of APTC with probabilistic static localities, $\mathcal{B}(APPTC^{sl})$, is inductively defined as follows:
\begin{enumerate}
  \item $\mathbb{E}\subset\mathcal{B}(APPTC^{sl})$;
  \item if $u\in Loc^*, t\in\mathcal{B}(APPTC^{sl})$ then $u::t\in\mathcal{B}(APPTC^{sl})$;
  \item if $e\in \mathbb{E}, t\in\mathcal{B}(APPTC^{sl})$ then $e\cdot t\in\mathcal{B}(APPTC^{sl})$;
  \item if $t,s\in\mathcal{B}(APPTC^{sl})$ then $t+ s\in\mathcal{B}(APPTC^{sl})$;
  \item if $t,s\in\mathcal{B}(APPTC^{sl})$ then $t\boxplus_{\pi} s\in\mathcal{B}(APPTC^{sl})$;
  \item if $t,s\in\mathcal{B}(APPTC^{sl})$ then $t\leftmerge s\in\mathcal{B}(APPTC^{sl})$.
\end{enumerate}
\end{definition}

\begin{theorem}[Congruence theorem of APTC with probabilistic static localities]
Probabilistic static location truly concurrent bisimulation equivalences $\sim_{pp}^{sl}$, $\sim_{ps}^{sl}$, $\sim_{php}^{sl}$ and $\sim_{phhp}^{sl}$ are all congruences with respect to APTC with probabilistic static
localities.
\end{theorem}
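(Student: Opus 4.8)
The statement is a congruence claim for every operator in the signature of APTC with probabilistic static localities, namely $+$, $\cdot$, the locality prefix $u::$, the probabilistic choice $\boxplus_{\pi}$, the parallel-style operators $\between$, $\parallel$, $\leftmerge$, $\mid$, and the auxiliary operators $\Theta$ and $\triangleleft$. Since an equivalence is a congruence exactly when it is preserved by each function symbol applied argument-wise, the plan is to proceed operator by operator. For $+$, $\cdot$, $u::$ and $\boxplus_{\pi}$ nothing new is needed: these are precisely the operators of BATC with probabilistic static localities, and the congruence of $\sim_{pp}^{sl}$, $\sim_{ps}^{sl}$, $\sim_{php}^{sl}$ and $\sim_{phhp}^{sl}$ for them has already been recorded in the preceding subsection, so I would simply invoke those theorems. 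The real content lies with the genuinely parallel operators and with $\Theta$, $\triangleleft$.

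For each such operator $f$ and each of the four equivalences I use the standard ``witness by congruence closure'' argument. Given $x_1\sim x_2$ via a probabilistic static location bisimulation $R^1_{\varphi}$ and $y_1\sim y_2$ via $R^2_{\varphi}$, I take the candidate relation to be the smallest relation containing $(\emptyset,\emptyset)$ (resp. $(\emptyset,\emptyset,\emptyset)$ in the hp/hhp cases) that contains all pairs of configurations of $f(x_1,y_1)$ and $f(x_2,y_2)$ obtained by combining $R^1$- and $R^2$-related configurations of the components according to the shape of the transition rules for $f$ in Tables \ref{TRForAPPTC122}, \ref{TRForAPTC22} and \ref{TRForAPTC222}, and that is closed under the derivatives these rules produce. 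One then checks the clauses of the relevant bisimulation definition. An ordinary pomset/step (or single-event hp) transition out of a configuration of $f(x_1,y_1)$ decomposes, by inspection of the rules, into transitions of $x_1$ and/or $y_1$ — at a location $w=u\diamond v$ for $\between,\parallel,\leftmerge,\mid$ and $w=u$ for $\Theta,\triangleleft$; each piece is matched using $R^1$, $R^2$, and the same rule is re-applied on the right, landing in the $\varphi\cup\{(w,w')\}$-indexed relation. The only location bookkeeping needed is that $\diamond$ is respected by the combinations actually used, so that $\varphi\cup\{(w,w')\}$ remains a consistent location association; this follows from the definition of cla together with the fact that matched transitions occur at $\diamond$-matched locations. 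The probabilistic clause is handled using Table \ref{PDFAPPTC22}: every probabilistic move $C_1\xrsquigarrow{\pi}C_1^{\pi}$ of a composite is built from probabilistic moves of the components, and since every PDF there is defined component-wise (products, and $\pi\mu+(1-\pi)\mu$ for $\boxplus_{\pi}$), matching the component moves matches the composite one; the class-measure condition $\mu(C_1,C)=\mu(C_2,C)$ and the clause $[\surd]_{R_{\varphi}}=\{\surd\}$ are inherited from $R^1$, $R^2$ and the shape of the rules. For $\Theta$ and $\triangleleft$ there is the extra point that the rules carry negative premises $y\nrightarrow^{e_2}$ (and the $\sharp_{\pi}$-variants); here I use that $\sim$-related processes enable exactly the same set of initial events, so a negative premise holds on one side iff it holds on the other, and that the side conditions involving $\sharp$, $\sharp_{\pi}$ and $\leq$ concern labels and causal structure that the bisimulation preserves on matched events.

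The parallel operators are the substantive case, and I expect the combination of $\leftmerge$ with the \emph{hereditary} history-preserving equivalence $\sim_{phhp}^{sl}$ to be the main obstacle. An hhp-bisimulation is a \emph{downward-closed} hp-bisimulation, so the witnessing posetal relation I build out of two downward-closed component relations must itself be downward closed. This reduces to a decomposition-under-restriction lemma: every pointwise-smaller triple $(D_1,h',D_2)$ below a congruence-closure triple $(C_1,h,C_2)$ (with $C_1$ a configuration of $f(x_1,y_1)$, $C_2$ of $f(x_2,y_2)$) is again of the congruence-closure form — that is, $D_1$ and $D_2$ split as $f$ applied to sub-configurations of the components, and $h'$ restricts to compatible order-isomorphisms on those pieces. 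The ``static'' nature of the locality annotation is exactly what makes this coherent: locations are fixed in advance rather than generated along transitions, so restricting a configuration restricts the induced location data in a consistent way and the enlarged cla stays consistent under restriction. Once this lemma is in place, downward closure of the closure relation follows, and the hp and hhp cases close just as the pomset and step cases do; I would also verify that the left-merge side conditions $(e_1\le e_2)$ on matched events are preserved, which holds because $\sim$ respects both labelling and causality on matched events.
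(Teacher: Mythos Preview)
Your proposal is correct and, in fact, far more substantive than anything the paper offers. The statement you are proving sits in the \emph{Backgrounds} chapter and is recorded there without any proof; it is simply quoted from prior work \cite{LOC2}. The only place the paper comes close to arguing such a congruence theorem is for the quantum analogue $qAPPTC$ with localities (Theorem~\ref{CAPTCG}), and there the entire ``proof'' reads: ``It is easy to see that probabilistic static location pomset bisimulation is an equivalent relation \ldots\ we only need to prove that $\sim_{pp}^{sl}$ is preserved by the operators $\parallel$, $\leftmerge$, $\mid$, $\Theta$, $\triangleleft$, $\partial_H$. It is trivial and we leave the proof as an exercise for the readers,'' repeated verbatim for the other three equivalences.

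So at the level of \emph{approach} you are aligned with the paper: reduce congruence to operator-wise preservation and handle the BATC fragment by citing the earlier congruence results. The difference is purely one of depth. You actually sketch the witness-relation construction from component bisimulations, trace how the PDF table (Table~\ref{PDFAPPTC22}) yields the probabilistic clause, flag the negative-premise issue for $\Theta$ and $\triangleleft$, and isolate the downward-closure lemma needed for the $\sim_{phhp}^{sl}$ case with $\leftmerge$ --- none of which the paper even mentions. Your identification of the hhp/left-merge interaction as the delicate point is apt and goes well beyond what the paper's ``trivial'' would suggest; there is nothing to correct.
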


\begin{theorem}[Elimination theorem of parallelism]\label{ETParallelism22}
Let $p$ be a closed APTC with probabilistic static localities term. Then there is a basic APTC with probabilistic static localities term $q$ such that $APPTC^{sl}\vdash p=q$.
\end{theorem}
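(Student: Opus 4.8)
The plan is to mirror the proof of its non-probabilistic counterpart, Theorem~\ref{ETParallelism21}, by turning the ``directed'' axioms into a term rewriting system $\mathcal{R}$ and invoking the strong normalization theorem (Theorem~\ref{SN}). Concretely, I would orient from left to right the axioms $A3$, $A4$, $A5$, $PA3$, $PA4$, $PA5$, the parallelism laws $P1$ ($x\between y\to x\parallel y+x\mid y$) and $P4$ ($x\parallel y\to x\leftmerge y+y\leftmerge x$), $P5$--$P9$, $C1$--$C8$, $CE1$--$CE6$, $U1$--$U13$ (together with the laws making the parallel, communication, $\Theta$ and $\triangleleft$ operators distribute over $\boxplus_\pi$), and the location laws $L1$--$L4$, $L5$--$L10$; the commutativity and associativity axioms $A1$, $A2$, $P2$, $P3$, $PA1$, $PA2$ are kept out of $\mathcal{R}$. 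Since every rule of $\mathcal{R}$ is an instance of an $APPTC^{sl}$ axiom, $APPTC^{sl}\vdash s=t$ whenever $s\to_{\mathcal{R}}t$, so it suffices to show (i) $\mathcal{R}$ is strongly normalizing, and (ii) every closed $\mathcal{R}$-normal form is a basic term in the sense of Definition~\ref{BTAPTC22}; then a closed $p$ reduces to a normal form $q\in\mathcal{B}(APPTC^{sl})$ with $APPTC^{sl}\vdash p=q$.

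For (ii) I would induct on the structure of a closed normal form $t$: its immediate subterms are again normal, hence basic by the induction hypothesis and so of one of the shapes allowed by Definition~\ref{BTAPTC22}. If the head symbol of $t$ were $\between$, $\parallel$, $\mid$, $\Theta$ or $\triangleleft$, then---after discarding the degenerate $\delta$- and idempotent cases via $A6$, $A7$, $PA3$ and $L10$---inspection of the right-hand sides of $P1$, $P4$--$P9$, $C1$--$C8$, $CE1$--$CE6$, $U1$--$U13$ shows that some rule of $\mathcal{R}$ applies, contradicting normality; the remaining care is needed for the head symbol $u::(\cdot)$, where $L1$--$L4$ and $L5$--$L10$ must be exhausted so that the distribution operator has been pushed down to atomic events. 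Hence the head of $t$ is an atomic event, $\cdot$, $+$, $\boxplus_\pi$ or $u::(\cdot)$ over an atom, and $t\in\mathcal{B}(APPTC^{sl})$ in each case. This part is mechanical.

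The heart of the argument, and the step I expect to be the main obstacle, is (i). A path ordering based on a single precedence (as one might hope to feed into Theorem~\ref{SN}) cannot work as-is: $P1$ requires $\between\succ\parallel$, $P4$ requires $\parallel\succ\leftmerge$, but $P7$---and likewise $C4$---has a right-hand side $(e_1\leftmerge e_2)\cdot(x\between y)$ that re-introduces $\between$, forcing $\leftmerge\succ\between$, which closes a cycle. I would handle this exactly as in the non-probabilistic case: either replace the path ordering by a monotone polynomial interpretation over the positive integers---atoms get a constant, $\cdot$ becomes multiplication, $+$ and $\boxplus_\pi$ become addition, and $\parallel$, $\between$, $\mid$ and $\leftmerge$ are given strictly monotone, super-multiplicative interpretations tuned so that ``moving a $\cdot$ outward and the merge onto the tails'' in $P5$--$P9$ and $C2$--$C4$ strictly decreases the value---or, equivalently, prove (i) by a nested structural induction, establishing the auxiliary lemmas ``if $t_1,t_2\in\mathcal{B}(APPTC^{sl})$ then each of $t_1\leftmerge t_2$, $t_1\parallel t_2$, $t_1\between t_2$, $t_1\mid t_2$, $\Theta(t_1)$, $t_1\triangleleft t_2$ is $\mathcal{R}$-reducible to a basic term'' by induction on $|t_1|+|t_2|$, using that the $\between$ re-introduced by $P7$/$C4$ is applied only to strictly smaller arguments. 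Interleaving this with the location laws $L4$ (collapsing nested $u::$) and with $PA4$, $PA5$ needs a little bookkeeping---typically a secondary component measuring total location length---but, as for Theorem~\ref{ETParallelism21}, it is routine once the merge versus left-merge difficulty is resolved.
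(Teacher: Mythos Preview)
Your proposal is considerably more detailed than anything the paper provides. Theorem~\ref{ETParallelism22} sits in the background chapter and is stated without proof; the paper treats it as a preliminary result imported from the cited reference \cite{LOC2}, and the later quantum analogues (e.g.\ the elimination theorem for $qAPPTC$ with localities) are ``proved'' only by the sentence ``The same as that of $APPTC^{sl}$, we omit the proof, please refer to \cite{LOC2} for details.'' So there is no in-paper proof to compare against beyond this deferral.

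That said, your strategy---orienting the axioms into a TRS, establishing strong normalization via Theorem~\ref{SN}, and then checking that closed normal forms are basic---is exactly the standard ACP-style elimination argument that the paper's preliminaries set up (Definitions~\ref{BTAPTC22} and the $>_{lpo}$/strong normalization machinery in Section~\ref{PT}), and it is the approach one expects \cite{LOC2} to take. Your identification of the $\between\succ\parallel\succ\leftmerge\succ\between$ cycle from $P1$, $P4$, $P7$/$C4$ is the genuine technical wrinkle, and your two proposed fixes (polynomial interpretation, or nested induction on $|t_1|+|t_2|$ exploiting that the reintroduced $\between$ has strictly smaller arguments) are both standard and sound. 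There is nothing to correct; you have simply supplied what the paper omits.
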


\begin{theorem}[Generalization of APTC with probabilistic static localities with respect to BATC with probabilistic static localities]
APTC with probabilistic static localities is a generalization of BATC with probabilistic static localities.
\end{theorem}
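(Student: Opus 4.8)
The plan is to show that $APPTC^{sl}$ is a conservative extension of $BAPTC^{sl}$ whose signature contains that of $BAPTC^{sl}$; by definition this is exactly what it means for $APPTC^{sl}$ to be a generalization of $BAPTC^{sl}$. The signature inclusion is immediate: the operators $+$, $\cdot$, $u::{-}$ and $\boxplus_{\pi}$, together with the PDF machinery already present in $BAPTC^{sl}$, are retained verbatim in $APPTC^{sl}$, which merely adjoins $\parallel$, $\leftmerge$, $\mid$, $\between$, $\Theta$, $\triangleleft$ and the constant $\delta$. So the real work is in verifying the conservativity, for which I would invoke Theorem~\ref{TCE}.

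First I would check the hypotheses of Theorem~\ref{TCE} for $T_0=BAPTC^{sl}$, whose TSS is generated by Tables~\ref{PDFBAPTC22} and~\ref{SETRForBATC22} (the rules for $\rightsquigarrow$, for $\xrightarrow[u]{e}\surd$, and for $\xrightarrow[u]{e}x'$). By inspection every such rule is source-dependent: the variables in the source are source-dependent by clause~(1) of the definition, and each premise $t\xrightarrow[u]{e}t'$ (resp. $t\rightsquigarrow t'$) has $t$ built only from the source variables, so clause~(2) propagates source-dependency to $t'$. Hence $T_0$ is source-dependent, and since it has no negative premises it is positive after reduction.

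Next I would verify condition~(2) of Theorem~\ref{TCE} for each rule $\rho$ of $T_1$ that is not already in $T_0$, i.e. the rules in Tables~\ref{TRForAPPTC122}, \ref{TRForAPTC22} and~\ref{TRForAPTC222}. Every one of these has a source containing one of the fresh function symbols $\parallel$, $\leftmerge$, $\mid$, $\between$, $\Theta$, $\triangleleft$ (or the fresh constant $\delta$), so the source of $\rho$ is fresh in the sense of the Freshness definition, and condition~(2) holds in its first alternative, with no analysis of premises needed. It remains to observe that $T_0\oplus T_1$ is positive after reduction: the only negative premises are the $y\nrightarrow^{e_2}$ and $y\nrightarrow^{e_3}$ premises of the $\triangleleft$-rules, and since those rules define a fresh operator whose behaviour does not feed back into the old fragment, the standard ACP reduction (stratifying on term structure / on a suitable measure) leaves a positive system; this argument is inherited unchanged from the non-probabilistic generalization of $APTC^{sl}$ over $BATC^{sl}$.

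Putting these together, Theorem~\ref{TCE} yields that $APPTC^{sl}=BAPTC^{sl}\oplus T_1$ is a conservative extension of $BAPTC^{sl}$: no closed $BAPTC^{sl}$-term acquires new transitions $t\xrightarrow[u]{e}t'$, $t\rightsquigarrow t'$ or $t\downarrow$, so the LTSs coincide on $BAPTC^{sl}$-terms. Combined with the signature inclusion noted above, this is precisely the assertion that $APPTC^{sl}$ is a generalization of $BAPTC^{sl}$. The only genuinely delicate point is the positivity-after-reduction bookkeeping forced by the negative premises of the unless operator $\triangleleft$ and by the probabilistic layer $\rightsquigarrow$ interacting with the conflict predicates $\sharp$ and $\sharp_{\pi}$ in the $\Theta$- and $\triangleleft$-rules; everything else is a routine inspection of Tables~\ref{SETRForBATC22}, \ref{TRForAPPTC122}, \ref{TRForAPTC22} and~\ref{TRForAPTC222}.
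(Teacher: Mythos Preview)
Your proposal is correct and follows essentially the same approach the paper uses for the analogous generalization theorems (cf.\ the proofs of the $qAPTC^{sl}$/$qBATC^{sl}$ and $qAPPTC^{sl}$/$qBAPTC^{sl}$ generalization theorems): verify that the base TSS is source-dependent and that every new rule has a fresh operator in its source, then invoke Theorem~\ref{TCE}. You are in fact more careful than the paper, which simply lists the three facts (source-dependency of the old rules, freshness of the new sources, source-dependency of the new rules) without discussing positivity-after-reduction or the negative premises of $\triangleleft$.
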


\begin{theorem}[Soundness of APTC with probabilistic static localities modulo probabilistic static location pomset bisimulation equivalence]\label{SPPBE22}
Let $x$ and $y$ be APTC with probabilistic static localities terms. If $APPTC^{sl}\vdash x=y$, then $x\sim_{pp}^{sl} y$.
\end{theorem}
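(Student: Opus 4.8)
The plan is to combine the congruence theorem for $\sim_{pp}^{sl}$ stated just above with the standard reduction of soundness of an equational theory to soundness of its individual axioms. First I would note that, since $\sim_{pp}^{sl}$ is an equivalence relation and a congruence with respect to every operator of APTC with probabilistic static localities, the collection of equations that are both provable and valid modulo $\sim_{pp}^{sl}$ is closed under reflexivity, symmetry, transitivity and contextual closure; hence it suffices to verify, for each axiom $s=t$ of the theory (the laws in Table~\ref{AxiomsForBATC22}, together with the parallelism, communication, conflict-elimination, unless and locality laws, i.e.\ the probabilistic analogues of those in Table~\ref{AxiomsForParallelism21}), that $\sigma(s)\sim_{pp}^{sl}\sigma(t)$ for every closing substitution $\sigma$.

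Then I would proceed axiom by axiom, in each case exhibiting an explicit probabilistic static location pomset bisimulation $R_{\varphi}$ --- typically the identity relation augmented with the pair $(\sigma(s),\sigma(t))$ and the finitely many residual pairs its transitions force --- and checking the four clauses of the definition: matching of pomset transitions together with the location pair $(u,v)$ that is adjoined to $\varphi$, matching of probabilistic transitions, equality of the PDF values $\mu(C_1,C)=\mu(C_2,C)$ on $R_{\varphi}$-classes, and $[\surd]_{R_{\varphi}}=\{\surd\}$. For the sequential laws $A1$--$A5$ this is the classical ACP argument carried over verbatim using the transition rules of Table~\ref{SETRForBATC22}; the only addition is clause~(3), which is immediate because those operators introduce no probabilistic branching and the PDF definitions of Table~\ref{PDFBAPTC22} push $\mu$ through $\cdot$ and $+$ componentwise. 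For the probabilistic laws $PA1$--$PA5$ the same relations work, but clause~(3) (and the probabilistic part of clause~(2)) now carries the real content: $PA1$ uses $\mu(x\boxplus_{\pi}y,z)=\pi\mu(x,z)+(1-\pi)\mu(y,z)$ and commutativity of addition; $PA2$ reduces to the algebraic identity linking $\pi+\rho-\pi\rho$ with the rescaled weight $\tfrac{\pi}{\pi+\rho-\pi\rho}$, which makes both sides induce the same distribution over the probabilistic successors of $x$, $y$, $z$; and $PA3$--$PA5$ follow by pushing $\mu$ through the defining equations of Table~\ref{PDFBAPTC22}.

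For the locality laws $L1$--$L4$ (and $L5$--$L11$) I would track how a location prefix $u::$ propagates through the operational rules --- in particular the rule that turns a move of $x$ at location $v$ into a move of $loc::x$ at location $loc\ll v$, and the rules for $\parallel$, $\mid$, $\Theta$, $\triangleleft$ that combine source locations via the independence relation $u\diamond v$. The bisimulation relating, for instance, $u::(x\cdot y)$ with $u::x\cdot u::y$ is again essentially the identity once the prefix has been distributed, and the point to check is that the location pairs accumulated along matched transitions always form a consistent location association; this holds because the prefix is applied uniformly to corresponding subterms on both sides, so that $u\diamond u'$ on one side is mirrored by $v\diamond v'$ on the other. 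For the remaining parallelism, communication, conflict and unless laws $P1$--$P9$, $C1$--$C8$, $CE1$--$CE6$, $U1$--$U13$ I would reuse the bisimulations already used for the soundness of APTC with static localities and merely check that the added probabilistic layer is inert: the probabilistic rules of Table~\ref{TRForAPPTC122} act homomorphically on both sides and the PDF equations of Table~\ref{PDFAPPTC22} are preserved, so clauses~(2)--(4) contribute nothing beyond the non-probabilistic case.

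The main obstacle is the bookkeeping in clauses~(2)--(3) for the few axioms that mix $\boxplus_{\pi}$, $\leftmerge$ and location prefixes simultaneously, where one must track at once the probabilistic resolution step, the location pair $(u,v)$ adjoined to $\varphi$ by the matched pomset step, and the invariance of the $\mu$-masses over the induced equivalence classes. Getting $\mu$ right after an arbitrary context has been applied is the only genuinely nontrivial calculation, and I would isolate it as a preliminary lemma --- that $\mu$ descends to $\sim_{pp}^{sl}$-classes and is respected by all operators --- so that the per-axiom case analysis can invoke it uniformly instead of recomputing distributions each time.
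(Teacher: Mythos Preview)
The paper does not actually supply a proof for this particular theorem: it sits in the background section~2.4 on $APPTC$ with localities, where soundness and completeness results are quoted from \cite{LOC2} and stated without proof. So there is nothing to compare against directly for this statement.

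That said, your plan matches exactly the proof pattern the paper uses for the analogous soundness theorems that \emph{are} proved later (e.g.\ Theorem~\ref{SAPTCG} for $qAPPTC^{sl}$): there the entire argument is ``since probabilistic static location pomset bisimulation $\sim_{pp}^{sl}$ is both an equivalent and a congruent relation, we only need to check if each axiom in Table~\ref{AxiomsForqAPTC} is sound modulo probabilistic static location pomset bisimulation equivalence. We leave the proof as an exercise for the readers.'' Your proposal is that same reduction, but with the per-axiom verification actually sketched out --- the PDF bookkeeping for $PA1$--$PA5$, the location-propagation argument for $L1$--$L11$, and the observation that the probabilistic layer is homomorphic for the parallelism/communication/conflict laws. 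This is strictly more content than the paper provides anywhere, and the strategy is correct; the preliminary lemma you isolate (that $\mu$ descends to $\sim_{pp}^{sl}$-classes and is respected by all operators) is a sensible way to organise clause~(3), though the paper never articulates it.
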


\begin{theorem}[Completeness of APTC with probabilistic static localities modulo probabilistic static location pomset bisimulation equivalence]\label{CPPBE22}
Let $p$ and $q$ be closed APTC with probabilistic static localities terms, if $p\sim_{pp}^{sl} q$ then $p=q$.
\end{theorem}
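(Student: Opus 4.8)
The plan is to follow the standard three-step recipe for completeness in this algebraic setting: (i) normalization via the elimination theorem, (ii) reduction to a canonical normal form, and (iii) an inductive matching argument powered by soundness. By Theorem \ref{ETParallelism22}, every closed $APTC^{sl}$ term with probabilistic static localities is provably equal to a basic term in the sense of Definition \ref{BTAPTC22}; since provable equality implies $\sim_{pp}^{sl}$ by soundness (Theorem \ref{SPPBE22}) and $\sim_{pp}^{sl}$ is an equivalence, it suffices to establish the claim for closed \emph{basic} terms $p$ and $q$.

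First I would introduce a probabilistic normal form. Using $PA1$--$PA5$ --- in particular the distributivity laws $PA4$, $PA5$ of $\boxplus_{\pi}$ over $\cdot$ and $+$, together with $L3$ --- every basic term is provably equal to a term $t_1\boxplus_{\rho_1}t_2\boxplus_{\rho_2}\cdots\boxplus_{\rho_{m-1}}t_m$ (suitably associated) whose leaves $t_i$ carry no $\boxplus$ at the top level; each leaf is then brought, using $A1$--$A5$ and the probabilistic analogues of $P5$--$P9$, $C1$--$C8$, $U1$--$U13$, the $\Theta/\triangleleft$ axioms, and the locality laws $L1$--$L11$, into a canonical sum $\sum_{k} u_k :: \alpha_k \cdot t_k' + \sum_{l} v_l :: \beta_l$, where each $\alpha_k,\beta_l$ is a left-nested event bundle $a_1 \leftmerge \cdots \leftmerge a_n$ realizing a pomset of atomic events and each $t_k'$ is again in normal form. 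I would prove the existence of such a normal form by structural induction on basic terms: the cases $u::t$, $e\cdot t$, $t+s$ use $L1$--$L4$, $A4$, $A5$, $A3$; the case $t \leftmerge s$ uses $P5$--$P9$ and $L6$-style locality laws to push the left merge down onto event bundles; and the $\boxplus_{\pi}$ case is dispatched by $PA4$, $PA5$.

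The heart of the argument is then: assuming $p \sim_{pp}^{sl} q$ with both in normal form, prove $p = q$ by induction on the combined size of $p$ and $q$. By clauses (2) and (3) of the definition of probabilistic static location pomset bisimulation --- matching of probabilistic transitions $\xrsquigarrow{\pi}$ (Table \ref{TRForAPPTC122}) and equality of the PDF-measures $\mu(\cdot,C)$ from Table \ref{PDFAPPTC22} on equivalence classes --- the top-level $\boxplus_{\pi}$-structures of $p$ and $q$ agree up to $PA1$--$PA3$, exactly as in the completeness proof for $BAPTC^{sl}$ (Theorem \ref{CBATCPBE22}), which I would reuse as the blueprint for the probabilistic layer. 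This reduces the problem to showing that each corresponding pair of non-probabilistic leaves $t$ of $p$ and $s$ of $q$ with $t \sim_{pp}^{sl} s$ satisfies $t = s$. For those I would run the usual summand matching: a summand $u::\alpha\cdot t'$ of $t$ gives a step $t \xrightarrow[u]{\alpha} t'$ via Table \ref{TRForAPTC22}, which $s$ must answer by $s \xrightarrow[v]{\alpha'} s'$ with $\alpha \sim \alpha'$ as pomsets, $t' \sim_{pp}^{sl} s'$ and $(u,v)$ added consistently to $\varphi$; hence $s$ has a matching summand $v::\alpha'\cdot s'$, and the induction hypothesis yields $t' = s'$. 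Terminating summands $u::\beta$ are handled identically using the $\surd$-transitions, and symmetry together with idempotence $A3$ gives $t = s$, hence $p = q$.

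The step I expect to be the main obstacle is the interaction of the left-merge operator with the truly concurrent (pomset) observation and the location bookkeeping. Unlike classical ACP, $\leftmerge$ cannot be eliminated from basic terms --- genuinely concurrent bundles $a_1 \leftmerge \cdots \leftmerge a_n$ persist --- so the normal form must carry these bundles, and the matching argument must compare them up to pomset isomorphism while simultaneously maintaining the consistent location association $\varphi$ and the independence relation $u\diamond v$ on $Loc^*$; showing that a bisimilar pair of bundles can always be rewritten to the same left-nested canonical form using $P5$--$P9$, $U8$, $U12$ and the locality laws $L5$--$L11$, and that no spurious deadlock $\delta$ is introduced (invoking $A6$, $A7$, $P9$), is where the real work concentrates. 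A secondary subtlety is checking that the probabilistic-transition clause and the $\mu$-equality clause of $\sim_{pp}^{sl}$ jointly pin down the $\boxplus_{\pi}$-structure tightly enough to license the reduction to non-probabilistic leaves; this, however, is exactly the argument already carried out for $BAPTC^{sl}$, so it contributes no essentially new difficulty.
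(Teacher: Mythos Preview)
The paper does not actually give a proof of this statement: Theorem~\ref{CPPBE22} sits in the Backgrounds chapter (Section~\ref{bg}), where the results on $APPTC^{sl}$ are imported wholesale from \cite{LOC2} and stated without proof. The only place the paper revisits this completeness is in the quantum extension (Theorem~\ref{CAPTCG}), whose proof consists of the single sentence ``According to the completeness of $APPTC^{sl}$ (please refer to \cite{LOC2} for details), we can get the completeness of $qAPPTC^{sl}$.'' So there is no in-paper argument to compare against.

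That said, your proposal is a faithful reconstruction of the standard completeness recipe that the cited reference \cite{LOC2} would be expected to carry out: elimination to basic terms, a probabilistic normal form layered over left-merge bundles, and summand matching driven by the bisimulation clauses. Your identification of the two genuine pressure points --- that $\leftmerge$ persists in basic terms and must be matched up to pomset isomorphism together with the location association $\varphi$, and that clauses (2)--(3) of $\sim_{pp}^{sl}$ must pin down the $\boxplus_{\pi}$-tree --- is accurate, and your plan to reuse the $BAPTC^{sl}$ argument (Theorem~\ref{CBATCPBE22}) for the probabilistic layer is exactly how these modular completeness proofs are organized. Nothing in your outline is wrong; it simply goes far beyond what the present paper supplies.
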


\begin{theorem}[Soundness of APTC with probabilistic static localities modulo probabilistic static location step bisimulation equivalence]\label{SPSBE22}
Let $x$ and $y$ be APTC with probabilistic static localities terms. If $APPTC^{sl}\vdash x=y$, then $x\sim_{ps}^{sl} y$.
\end{theorem}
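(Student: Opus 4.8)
The plan is to follow the standard soundness argument: reduce the statement to checking that each axiom of $APPTC^{sl}$ is sound modulo $\sim_{ps}^{sl}$, and then bootstrap to arbitrary equational derivations using congruence. First I would note that $\sim_{ps}^{sl}$ is an equivalence relation and, by the congruence theorem of APTC with probabilistic static localities stated above, a congruence with respect to all the operators $::$, $\cdot$, $+$, $\boxplus_{\pi}$, $\parallel$, $\leftmerge$, $\mid$, $\Theta$, $\triangleleft$ and $\between$. Consequently it suffices to prove that for every axiom $s=t$ among $A1$--$A7$, $PA1$--$PA5$, the locality laws ($L1$--$L4$ and their counterparts for the parallel operators) and the parallelism/communication/conflict-elimination/unless laws (the probabilistic analogues of Table \ref{AxiomsForParallelism21}), and for every closed substitution $\sigma$, one has $\sigma(s)\sim_{ps}^{sl}\sigma(t)$. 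An induction on the structure of the proof of $APPTC^{sl}\vdash x=y$, replacing subterms by $\sim_{ps}^{sl}$-equivalent subterms using congruence at each step, then yields $x\sim_{ps}^{sl} y$ for closed $x,y$, and the open case follows by closing under all closed substitutions.

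For each axiom the work is to exhibit an explicit probabilistic static location step bisimulation. Concretely, for a closed instance $s=t$ I would take $R_{\varphi}$ to be the set of all pairs $(C_1,C_2)$ of configurations jointly reachable from $s$ and $t$ after the probabilistic unfoldings dictated by the PDFs in Tables \ref{PDFBAPTC22} and \ref{PDFAPPTC22}, indexed appropriately by the consistent location association $\varphi$, and then verify the four clauses of the definition: matching of probabilistic transitions $\rightsquigarrow$, matching of step transitions $C_1\xrightarrow[u]{X}C_1'$ by $C_2\xrightarrow[v]{X}C_2'$ with $X$ a set of pairwise-concurrent events and $(C_1',C_2')\in R_{\varphi\cup\{(u,v)\}}$, the weight condition $\mu(C_1,C)=\mu(C_2,C)$ on every $R_{\varphi}$-class, and $[\surd]_{R_{\varphi}}=\{\surd\}$. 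The purely sequential axioms $A1$--$A5$, the probabilistic-choice axioms $PA1$--$PA5$, and the distribution laws $L1$--$L4$ are handled exactly as in the non-probabilistic development, with the PDF clauses checked directly against Table \ref{PDFBAPTC22}; the only spot requiring the probability arithmetic is $PA2$, where the reweighting factor $\pi+\rho-\pi\rho$ is matched by computing $\mu(\cdot,\cdot)$ on both sides, which is precisely the identity already built into the PDF of $\boxplus_{\pi}$.

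The delicate part, and the step I expect to be the main obstacle, is the parallel fragment: the associativity and left-merge laws for $\parallel$ and $\leftmerge$ (in particular $P3$, and the corresponding reasoning for $\between$), the communication laws $C1$--$C8$, and $CE1$--$CE6$, $U1$--$U13$. Here the candidate bisimulation must be read off the transition rules in Tables \ref{TRForAPPTC122}, \ref{TRForAPTC22} and \ref{TRForAPTC222}, and two features interact: a parallel step $\{e_1,e_2\}$ is labelled by $u\diamond v$, so the location association produced on the two sides must be shown to coincide up to the cla condition as components are reassociated; and a probabilistic transition of a parallel term splits as a product, $\mu(x\between y,\,x'\parallel y'+x'\mid y')=\mu(x,x')\cdot\mu(y,y')$, so clause (3) for the composite reduces to clauses (3) for $x$ and for $y$. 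For $P3$ one must in effect verify that $\diamond$ behaves associatively on $Loc^*$ relative to the cla — that the sets of events recorded as mutually independent do not depend on the bracketing of a three-way parallel composition — while simultaneously checking that the two product PDFs agree; this is routine but bookkeeping-heavy. Once $P3$ is settled, the remaining parallelism, communication, $\Theta$ and $\triangleleft$ axioms go through by the same template, and $A6$, $A7$, $P9$, $C7$, $C8$ are immediate from the defining role of $\delta$, while $P1$ holds essentially by the definition of $\between$.
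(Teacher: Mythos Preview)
Your proposal is correct and follows essentially the same approach the paper uses for its soundness theorems: reduce to checking each axiom is sound modulo $\sim_{ps}^{sl}$ by invoking that $\sim_{ps}^{sl}$ is an equivalence and a congruence, then handle the axioms individually. Note that this particular theorem sits in the Backgrounds chapter and is stated without any proof in the paper; the paper's proofs for the analogous $qAPPTC^{sl}$ soundness results (Theorem~\ref{SAPTCG}) consist only of the sentence ``Since probabilistic static location step bisimulation $\sim_{ps}^{sl}$ is both an equivalent and a congruent relation, we only need to check if each axiom \ldots\ is sound modulo probabilistic static location step bisimulation equivalence. We leave the proof as an exercise for the readers.'' Your write-up is therefore already far more detailed than anything the paper provides, and the extra bookkeeping you describe for the parallel fragment (associativity of $\diamond$ under the cla, multiplicativity of the PDFs) is exactly the content the paper elides.
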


\begin{theorem}[Completeness of APTC with probabilistic static localities modulo probabilistic static location step bisimulation equivalence]\label{CPSBE22}
Let $p$ and $q$ be closed APTC with probabilistic static localities terms, if $p\sim_{ps}^{sl} q$ then $p=q$.
\end{theorem}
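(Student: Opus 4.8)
The plan is the standard one for completeness theorems in this book: reduce to basic terms by elimination, normalise the probabilistic choice with $PA1$--$PA5$, and then run the non-probabilistic argument of Theorem~\ref{CPSBE21} on each resulting branch. First I would invoke the elimination theorem of parallelism (Theorem~\ref{ETParallelism22}) to obtain basic terms $p',q'\in\mathcal{B}(APPTC^{sl})$ with $APPTC^{sl}\vdash p=p'$ and $APPTC^{sl}\vdash q=q'$; by soundness (Theorem~\ref{SPSBE22}) and the hypothesis, $p'\sim_{ps}^{sl}q'$, so it suffices to treat basic terms. By Definition~\ref{BTAPTC22} such terms only involve $::,\cdot,+,\boxplus_\pi$ and $\leftmerge$, so $\mid,\Theta,\triangleleft$ (and $\partial_H$) drop out of the analysis.

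Next I would set up a probabilistic normal form. Using $PA1$--$PA5$ together with $A1$--$A5$ and the locality laws $L1$--$L4$, every closed basic term is provably equal to a nested probabilistic choice $s_1\boxplus_{\pi_1}(s_2\boxplus_{\pi_2}(\cdots s_n)\cdots)$ in which each $s_k$ is a $\boxplus$-free basic term and the $s_k$ are pairwise distinct modulo provable equality; each $s_k$ is in turn a sum $\sum_i t_{ki}$ whose summands have the shape $u_1::a_1\leftmerge\cdots\leftmerge u_m::a_m$ or $(u_1::a_1\leftmerge\cdots\leftmerge u_m::a_m)\cdot s$ with $s$ recursively of the same $\boxplus$-free form --- this is exactly the normal form underlying Theorem~\ref{CPSBE21}, the left merge being retained because its elimination genuinely fails. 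By the probabilistic rules of Tables~\ref{SETRForBATC22} and~\ref{TRForAPPTC122} such a normal form $\rightsquigarrow$-reduces precisely to the $\breve{s_k}$'s, and the PDFs of Tables~\ref{PDFBAPTC22} and~\ref{PDFAPPTC22} record the induced distribution over the $s_k$'s.

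Now put $p'$ and $q'$ in normal form. Using bisimulation clauses (2)--(4) of probabilistic static location step bisimulation --- matching probabilistic transitions, the class-probabilities $\mu(C_1,C)=\mu(C_2,C)$, and $[\surd]_{R_\varphi}=\{\surd\}$ --- and the fact that the branches of a normal form lie in distinct $R_\varphi$-classes, I would deduce that the branches pair up: each branch $s_k$ of $p'$ is step bisimilar to a branch $t_l$ of $q'$ carrying the same accumulated weight, so that a $PA1,PA2,PA3,PA5$ rearrangement reduces the goal to the non-probabilistic statement ``$s\sim_{ps}^{sl}t$ implies $APPTC^{sl}\vdash s=t$ for $\boxplus$-free basic terms $s,t$''. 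That statement is proved as in Theorem~\ref{CPSBE21}: induction on the structure of $s$, analysing the step transitions $\xrightarrow[u]{X}$, which by the rules for $+$ and $\leftmerge$ are governed by the head parallel components $u_1::a_1\leftmerge\cdots\leftmerge u_m::a_m$ (ordered by $e_1\le\cdots\le e_m$ as dictated by $P5$--$P7$ and the transition rules of Table~\ref{TRForAPTC22}), replacing residuals by provably equal normal forms via the congruence theorem, and using $A1$, $A2$, $A3$ to align and deduplicate summands, with the location tuples tracked by the locality axioms.

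The main obstacle I expect is the $\leftmerge$-bookkeeping forced by the failure of the elimination theorem: since the normal form is not sequential, one must show that a single step transition of a normal form determines its head parallel component $u_1::a_1\leftmerge\cdots\leftmerge u_m::a_m$ --- the full location tuple and the causal ordering of the $a_j$ included --- together with the residual up to step bisimilarity, so that two step-bisimilar normal forms are, summand by summand, built from identical head parallel components with step-bisimilar residuals; this is what feeds the structural induction. A secondary subtlety is making the probabilistic matching rigorous: one must first collapse each $s_k$ modulo provable equality (via $PA3$) so that the constraint $\mu(C_1,C)=\mu(C_2,C)$ actually pins down the weights, and then reassemble the two normal forms into a common term using only $PA1,PA2,PA3,PA5$.
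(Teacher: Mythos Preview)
The paper does not actually prove this theorem: it appears in the Backgrounds section (Section~\ref{bg}) as a known result imported from \cite{LOC2}, stated without any accompanying argument. Your proposal therefore goes well beyond what the paper itself supplies. For comparison, when the paper does prove analogous completeness results later on (e.g.\ for $qAPPTC^{sl}$), the entire argument is a two-line deferral: note that the quantum bisimulation implies classical bisimilarity together with equality of quantum states, then invoke the completeness of $APPTC^{sl}$ from \cite{LOC2}.

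Your plan is the standard route for completeness of probabilistic process algebras in this style --- eliminate to basic terms, extract a probabilistic normal form via $PA1$--$PA5$, match the probabilistic branches using the $\mu(C_1,C)=\mu(C_2,C)$ clause, and then run the non-probabilistic completeness argument branchwise --- and the obstacles you flag (the $\leftmerge$ head-component bookkeeping and the need to collapse branches modulo provable equality before weight-matching) are exactly the ones that arise. So your approach is a genuine proof outline where the paper gives none; the trade-off is that the paper's deferral is one line, whereas your argument would need the full $\leftmerge$-normal-form machinery spelled out to be airtight.
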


\begin{theorem}[Soundness of APTC with probabilistic static localities modulo probabilistic static location hp-bisimulation equivalence]\label{SPHPBE22}
Let $x$ and $y$ be APTC with probabilistic static localities terms. If $APPTC^{sl}\vdash x=y$, then $x\sim_{php}^{sl} y$.
\end{theorem}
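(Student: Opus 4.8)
The plan is to follow the same route as the soundness proof for the non-probabilistic fragment (Theorem~\ref{SPHPBE21}) and for $BATC$ with probabilistic static localities modulo $\sim_{php}^{sl}$ (Theorem~\ref{SBATCHPBE22}): reduce soundness of the whole calculus to soundness of each individual axiom, then propagate it through derivations using congruence. Since $APPTC^{sl}\vdash x=y$ means that $x=y$ is derivable by equational logic from the axioms of $BAPTC^{sl}$ (Table~\ref{AxiomsForBATC22}), the parallelism axioms $A6$--$A7$, $P1$--$P9$, $C1$--$C8$, $CE1$--$CE6$, $U1$--$U13$, $L5$--$L10$ (the probabilistic-localities counterparts of Table~\ref{AxiomsForParallelism21}), together with the PDF definitions in Tables~\ref{PDFBAPTC22} and~\ref{PDFAPPTC22}, it suffices to prove: (i) $\sim_{php}^{sl}$ is a congruence for every operator of $APPTC^{sl}$, which is exactly the Congruence theorem of APTC with probabilistic static localities stated above; and (ii) for each axiom $s=t$ and each closed substitution $\sigma$, $\sigma(s)\sim_{php}^{sl}\sigma(t)$. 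An induction on the length of the derivation of $x=y$, appealing to reflexivity, symmetry and transitivity of $\sim_{php}^{sl}$ and to (i) at each application of a context, then gives $x\sim_{php}^{sl} y$.

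For (ii) I would group the axioms. The laws $A1$--$A5$, $L1$--$L4$, $PA1$--$PA5$ come verbatim from $BAPTC^{sl}$, so their soundness modulo $\sim_{php}^{sl}$ is already contained in Theorem~\ref{SBATCHPBE22}. For each remaining axiom the recipe is to write down an explicit posetal relation $R_\varphi\subseteq\mathcal{C}(\mathcal{E}_1)\overline{\times}\mathcal{C}(\mathcal{E}_2)$ containing $(\emptyset,\emptyset,\emptyset)$ (typically the least posetal relation generated by the reachable pairs of subterms of $\sigma(s)$ and $\sigma(t)$ under the matching bookkeeping isomorphism) and to verify the clauses of the definition of probabilistic static location hp-bisimulation: the forward/backward matching of single-event transitions with $f[e_1\mapsto e_2]$ and updated location association $\varphi\cup\{(u,v)\}$; the matching of probabilistic transitions $C_1\xrsquigarrow{\pi}C_1^{\pi}$; the equality $\mu(C_1,C)=\mu(C_2,C)$ on every $R_\varphi$-class $C$; and $[\surd]_{R_\varphi}=\{\surd\}$. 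For the ``action'' axioms $P2$--$P3$, $CE1$--$CE6$, $U1$--$U13$ the action transitions are read off directly from Tables~\ref{TRForAPTC22} and~\ref{TRForAPTC222}, and on both sides of each such equation the location label of a transition is literally the same expression (e.g. $u\diamond v$ on each side of $P2$, or $u$ carried unchanged through $\Theta(\cdot)$ and $\cdot\triangleleft\cdot$), so the cla condition and the clause on $\mu$ are immediate and $\varphi$ may be kept diagonal.

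The substantive work---and the main obstacle---is the interaction of the probabilistic layer with the hp-bisimulation bookkeeping, in two places. First, for the ``structural'' parallel axioms $P1$, $P4$--$P9$, $C2$--$C4$ one must check the probabilistic-transition clause and the $\mu$-equality clause using the PDFs of Table~\ref{PDFAPPTC22}: e.g. for $P1$ ($x\between y=x\parallel y + x\mid y$) one needs $\mu(x\between y,\,x'\parallel y'+x'\mid y')=\mu(x,x')\cdot\mu(y,y')$ to coincide with the distribution induced on the right-hand side, and similarly that the resolved probabilistic choices on the two sides land in the same $R_\varphi$-class before any action fires; the probabilistic transition rules of Table~\ref{TRForAPPTC122} make this bookkeeping routine, but it has to be carried out for each law. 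Second, for $L5$--$L10$ one must confirm that prefixing a subterm by $u::$ is compatible with the location association: the rule $\frac{x\xrightarrow[w]{e}x'}{loc::x\xrightarrow[loc\ll w]{e}loc::x'}$, iterated along $u\in Loc^*$, must be shown to commute with the parallel, conflict-elimination and unless operators, so that distributing $u::(\cdot)$ over $\between,\parallel,\mid,\Theta,\triangleleft$ leaves the $\ll$- and $\diamond$-structure of the observed locations---and hence membership in a consistent location association---unchanged; this is precisely where the algebra of $Loc^*$ under $\ll$ and $\diamond$ and the cla condition are genuinely used. Once these two families of checks are completed, step (ii) is finished and the induction of the first paragraph closes the proof; no ingredients beyond those already deployed for $\sim_{hp}^{sl}$ without probabilities and for $\sim_{pp}^{sl},\sim_{ps}^{sl}$ above are required, only their combination.
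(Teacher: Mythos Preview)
Your proposal is correct and follows exactly the template the paper uses for all of its soundness results. Note, however, that this particular theorem sits in the \emph{Backgrounds} section (it is a result imported from \cite{LOC2}) and the paper provides no proof for it at all; the analogous soundness theorems that the paper \emph{does} prove (e.g.\ Theorem~\ref{SAPTCG} for $qAPPTC$ with localities) are dispatched with the one-line argument ``since $\sim_{php}^{sl}$ is both an equivalent and a congruent relation, we only need to check if each axiom in the relevant table is sound modulo $\sim_{php}^{sl}$; we leave the proof as an exercise for the readers.'' Your write-up is therefore a strictly more detailed elaboration of the paper's intended argument---you actually sketch how the per-axiom checks would go, in particular the PDF bookkeeping from Tables~\ref{PDFBAPTC22}--\ref{PDFAPPTC22} for the probabilistic clause and the $Loc^*$/cla compatibility for $L5$--$L10$---whereas the paper never descends to that level for any of its soundness theorems.
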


\begin{theorem}[Completeness of APTC with probabilistic static localities modulo probabilistic static location hp-bisimulation equivalence]\label{CPHPBE22}
Let $p$ and $q$ be closed APTC with probabilistic static localities terms, if $p\sim_{php}^{sl} q$ then $p=q$.
\end{theorem}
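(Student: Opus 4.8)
The plan is to follow the standard route for completeness theorems in this framework: reduce closed terms to a canonical shape via the elimination theorem, and then show that probabilistic static location hp-bisimilarity of two canonical terms forces them to be provably equal. First I would invoke the elimination theorem of parallelism (Theorem \ref{ETParallelism22}): every closed APTC with probabilistic static localities term $p$ (resp. $q$) is provably equal to a basic term $p'$ (resp. $q'$) in $\mathcal{B}(APPTC^{sl})$ as in Definition \ref{BTAPTC22}. By soundness (Theorem \ref{SPHPBE22}), $p\sim_{php}^{sl}p'$ and $q\sim_{php}^{sl}q'$, and $\sim_{php}^{sl}$ is an equivalence, so it suffices to prove the statement for basic terms. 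Note that, unlike the BATC case, basic terms here still contain the left-merge $\leftmerge$, so the canonical form must accommodate parallel clusters of the shape $u_1::a_1\leftmerge\cdots\leftmerge u_n::a_n$.

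Second, I would introduce a normal form for basic terms: using $PA1$--$PA5$ together with $A1$--$A5$ and the locality laws $L1$--$L4$ (and the parallelism locality laws where relevant), push every probabilistic choice $\boxplus_\pi$ to the outermost level, so that a normal form is a probabilistic sum of ``deterministic'' summands, each of which is a $+$-sum of terms of the form $(u_1::a_1\leftmerge\cdots\leftmerge u_n::a_n)$ or $(u_1::a_1\leftmerge\cdots\leftmerge u_n::a_n)\cdot t$ with $t$ again in normal form. I would prove, by induction on term structure, that every basic term is provably equal to a normal form, and check that the PDF $\mu$ (Table \ref{PDFAPPTC22}) and the probabilistic and action transition rules (Tables \ref{TRForAPPTC122}, \ref{TRForAPTC22}) act transparently on normal forms, so that the probabilistic branches correspond to the $\boxplus$-summands and the action/pomset transitions correspond to the leading left-merge clusters of each deterministic branch.

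Third, the heart of the argument is a proof by induction on the combined size of two normal forms $p'$, $q'$ that $p'\sim_{php}^{sl}q'$ implies $APPTC^{sl}\vdash p'=q'$, using the four clauses of the probabilistic static location hp-bisimulation. Clauses (2) and (3) (matching of $\rightsquigarrow$-transitions and equality of $\mu(\cdot,C)$ on each $R_\varphi$-class) force the outer probabilistic-sum structures of $p'$ and $q'$ to match up branch by branch with equal weights, which is converted into provable equality via $PA1$--$PA3$. Clause (1) (a transition $C_1\xrightarrow[u]{e_1}C_1'$ is matched by $C_2\xrightarrow[v]{e_2}C_2'$ with $(C_1',f[e_1\mapsto e_2],C_2')\in R_{\varphi\cup\{(u,v)\}}$, and vice versa) then forces, for each deterministic branch, a bijection between the $+$-summands of $p'$ and $q'$ that respects labels, the left-merge cluster structure, and — through the $\varphi$-component — the locations, with hp-bisimilar residuals; the induction hypothesis closes each residual, and the termination clause $[\surd]_{R_\varphi}=\{\surd\}$ together with $A6$--$A7$ handles empty sums and $\delta$. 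Reconstructing the left-merge clusters from their summand components uses $P5$--$P9$ exactly as in the proof of the non-probabilistic analogue (Theorem \ref{CPHPBE21}), and the overall skeleton mirrors Theorem \ref{CBATCHPBE22} for the probabilistic-BATC layer.

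The main obstacle I expect is the bookkeeping in the inductive step, where three pieces of data must be tracked simultaneously: the history-preserving isomorphism $f$ on configurations, the consistent location association $\varphi$ that is extended along transitions, and the probabilistic weights. Concretely, when a transition of $p'$ at location $u$ is matched by one of $q'$ at location $v$, I must verify that $(u,v)$ can be consistently added to $\varphi$ across all parallel branches (this is precisely the cla condition: $u\diamond u'\Leftrightarrow v\diamond v'$), so that $R_{\varphi\cup\{(u,v)\}}$ remains a probabilistic static location hp-bisimulation and the induction hypothesis genuinely applies to the residuals; and that event independence $u\diamond v$ is preserved, which is what allows the $\leftmerge$-clusters to be recovered rather than only $+$-sums. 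Once these coherence conditions are checked, the remainder is the routine summand-matching computation, identical in form to the $\sim_{hp}^{sl}$ case of Theorem \ref{CPHPBE21} with the $\mu$- and $\rightsquigarrow$-clauses layered on top.
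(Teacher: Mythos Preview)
Your proposal outlines the standard and correct route to a completeness result of this type: reduce via the elimination theorem to basic terms, push probabilistic choice outward to obtain a normal form, and then argue by induction on normal-form size that $\sim_{php}^{sl}$ forces summand-by-summand provable equality, using the $\mu$- and $\rightsquigarrow$-clauses to match probabilistic structure and the hp-transition clause plus the cla bookkeeping on $\varphi$ to match the action/location structure.

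However, the paper does not actually prove this theorem. Theorem~\ref{CPHPBE22} sits in the Backgrounds chapter (Section~\ref{bg}), where results on $APPTC$ with localities are imported from~\cite{LOC2} and stated without proof. All of the completeness theorems in that subsection, including this one, are simply listed; the paper's own contributions (the quantum extensions in Chapters~\ref{qaptcl} and~\ref{qapptcl2}) then prove their completeness theorems by a one-line reduction back to these cited results, e.g.\ ``According to the completeness of $APPTC^{sl}$ (please refer to~\cite{LOC2} for details), we can get the completeness of $qAPPTC^{sl}$.''

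So there is nothing in the paper to compare your argument against at a technical level. What you have written is essentially the argument one expects~\cite{LOC2} to contain, and it is the right shape: the only point that merits care, as you correctly flag, is the simultaneous tracking of the posetal isomorphism $f$, the consistent location association $\varphi$, and the probability measure $\mu$ across the induction. If you were asked to supply a proof where the paper gives none, your plan is sound; if you were asked to reproduce the paper's proof, the honest answer is that the paper defers it entirely to the reference.
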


\begin{theorem}[Soundness of APTC with probabilistic static localities modulo probabilistic static location hhp-bisimulation equivalence]\label{SPHPBE22}
Let $x$ and $y$ be APTC with probabilistic static localities terms. If $APPTC^{sl}\vdash x=y$, then $x\sim_{phhp}^{sl} y$.
\end{theorem}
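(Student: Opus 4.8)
The plan is to run the standard equational-soundness argument. Provability in $APPTC^{sl}$ is the congruence-and-equivalence closure of the axiom instances, and the Congruence theorem of APTC with probabilistic static localities already guarantees that $\sim_{phhp}^{sl}$ is a congruence for every operator of the signature ($::$, $\cdot$, $+$, $\boxplus_{\pi}$, $\leftmerge$, $\parallel$, $\mid$, $\between$, $\Theta$, $\triangleleft$). So I would argue by induction on the derivation of $x=y$: reflexivity, symmetry and transitivity are handled because $\sim_{phhp}^{sl}$ is an equivalence; a congruence-closure step is handled by the Congruence theorem; and the whole proof thereby reduces to the base case, namely showing that for every axiom $s=t$ of Table \ref{AxiomsForBATC22} and of the parallelism group ($A6$, $A7$, $P1$--$P9$, $C1$--$C8$, $CE1$--$CE6$, $U1$--$U13$, $L5$--$L10$) and every closing substitution $\sigma$ we have $\sigma(s)\sim_{phhp}^{sl}\sigma(t)$.

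For each such instance I would exhibit an explicit witness: a downward-closed posetal relation $R_\varphi$ relating $\sigma(s)$ and $\sigma(t)$ with $(\emptyset,\emptyset,\emptyset)\in R_\varphi$, and check the four defining clauses of a probabilistic static location hp-bisimulation together with downward closure: (1) the forth-and-back matching of single-event location transitions $C_1\xrightarrow[u]{e_1}C_1'$ by $C_2\xrightarrow[v]{e_2}C_2'$ with the order-isomorphism extended to $f[e_1\mapsto e_2]$ and the cla extended to $\varphi\cup\{(u,v)\}$; (2) the forth-and-back matching of probabilistic transitions $C_1\xrsquigarrow{\pi}C_1^{\pi}$; (3) the PDF condition $\mu(C_1,C)=\mu(C_2,C)$ on every class $C\in\mathcal{C}(\mathcal{E})/R_\varphi$, which is read off directly from Tables \ref{PDFBAPTC22} and \ref{PDFAPPTC22} (e.g.\ $\mu(x+y,x'+y')=\mu(x,x')\mu(y,y')$, $\mu(x\boxplus_{\pi}y,z)=\pi\mu(x,z)+(1-\pi)\mu(y,z)$, $\mu(x\between y,x'\parallel y'+x'\mid y')=\mu(x,x')\mu(y,y')$); and (4) $[\surd]_{R_\varphi}=\{\surd\}$. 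For the sequential and probabilistic-choice axioms ($A1$--$A7$, $PA1$--$PA5$, $L1$--$L4$) the witness is essentially the identity relation on configurations up to the obvious syntactic rewriting, and all four clauses are immediate. For the location axioms the only point is that the rule $\frac{x\xrightarrow[u]{e}x'}{loc::x\xrightarrow[loc\ll u]{e}loc::x'}$ relabels both sides by the same location prefix, so the cla generated during the bisimulation game stays consistent by the definition of consistent location association.

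The laborious cases, and the main obstacle, are the parallelism and conflict-elimination axioms — $P1$, $P4$--$P7$, $C1$--$C4$, $CE3$, $CE5$, $CE6$ — where expansion-law reasoning must be combined with probabilistic branching: one has to match, step by step, the step/pomset transitions produced by the rules of Table \ref{TRForAPTC22} on $\sigma(s)$ against those produced on $\sigma(t)$, while at the same time checking via Table \ref{TRForAPPTC122} and the PDF clauses that the two sides assign identical probabilities to matching configurations, so that clauses (2) and (3) hold simultaneously with (1). The extra requirement separating hhp- from hp-bisimulation, downward closure of the posetal relation, is preserved in every case because the witness can be taken as a canonical family of triples already closed under passing to subconfigurations, so no argument beyond inspection is needed; stating this cleanly is the fiddliest part but introduces no new idea. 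Finally I would record the immediate corollary that, since $\sim_{phhp}^{sl}$ refines $\sim_{pp}^{sl}$, $\sim_{ps}^{sl}$ and $\sim_{php}^{sl}$, this also reproves soundness of $APPTC^{sl}$ modulo those coarser equivalences.
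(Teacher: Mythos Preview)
Your proposal is correct and follows the standard equational-soundness template. Note, however, that the statement you are proving sits in the paper's background section (Section~\ref{bg}) and is quoted from \cite{LOC2} without proof; the paper does not supply its own argument here. Where the paper does prove analogous soundness results---e.g.\ Theorem~\ref{SAPTCG} for $qAPPTC^{sl}$---its proof is exactly your outline but far terser: it records that $\sim_{phhp}^{sl}$ is an equivalence and a congruence, reduces to checking each axiom in the relevant table, and then explicitly leaves the per-axiom verifications ``as an exercise for the readers.'' Your write-up therefore matches the paper's approach while filling in considerably more of the mechanics (PDF clauses, downward closure for the hhp case, the cla bookkeeping for the $L$-axioms) than the paper ever does.
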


\begin{theorem}[Completeness of APTC with probabilistic static localities modulo probabilistic static location hhp-bisimulation equivalence]\label{CPHPBE22}
Let $p$ and $q$ be closed APTC with probabilistic static localities terms, if $p\sim_{phhp}^{sl} q$ then $p=q$.
\end{theorem}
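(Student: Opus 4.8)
The plan is to follow the standard three-step route for completeness results of this type. First, by the elimination theorem for APTC with probabilistic static localities (Theorem \ref{ETParallelism22}), $p$ and $q$ are provably equal to basic terms $p',q'\in\mathcal{B}(APPTC^{sl})$ (Definition \ref{BTAPTC22}); by soundness these satisfy $p'\sim_{phhp}^{sl}q'$, so it suffices to derive $p'=q'$, i.e.\ to prove the theorem for basic terms. Second, I would sharpen ``basic term'' to a \emph{normal form}: use $PA1$--$PA5$ to hoist every probabilistic choice $\boxplus_{\pi}$ outermost, then $A1$--$A5$ together with $P5$--$P9$ and $L1$--$L10$ to present the remaining nondeterministic part as a sum of summands of the shape $(u_1::a_1\leftmerge\cdots\leftmerge u_k::a_k)\cdot t$ or $(u_1::a_1\leftmerge\cdots\leftmerge u_k::a_k)$, with $t$ again a normal form and the heads respecting the causal side conditions of $P5$--$P7$. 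The PDF definitions (Table \ref{PDFAPPTC22}) and probabilistic transition rules (Table \ref{TRForAPPTC122}) then read off, from a normal form, exactly which probabilistic branches it has and the value of $\mu$ on each, while the action rules (Table \ref{TRForAPTC22}) read off the step/pomset transitions of each branch.

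The third and central step is to show, by induction on the combined size of two normal forms with $p'\sim_{phhp}^{sl}q'$, that $p'=q'$ is derivable. The probabilistic-transition clause in the definition of probabilistic static location hhp-bisimulation, together with the measure-matching condition $\mu(C_1,C)=\mu(C_2,C)$, forces the outermost $\boxplus$-structure of $p'$ and $q'$ to agree up to $PA1$--$PA3$, reducing the problem to the purely nondeterministic resolved terms. For each summand of $p'$ I would fire the corresponding transition $\xrightarrow[u]{X}$ from Table \ref{TRForAPTC22}; hhp-bisimilarity produces a matching summand of $q'$ with transition $\xrightarrow[v]{X}$ (with the $u\diamond u'\Leftrightarrow v\diamond v'$ constraint tracked by the consistent location association $\varphi$) whose derivatives are again $\sim_{phhp}^{sl}$-related; the induction hypothesis plus the locality axioms $L5$--$L10$ equate these derivatives, hence the summands. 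Symmetry gives the reverse inclusion of summands, and $A1$--$A3$ collapse the two sums to the same term.

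The step I expect to be the real obstacle is exactly this coordination of three layers at once -- the location-association relation $R_{\varphi}$ threaded through the parallel heads $\leftmerge$ and their $\diamond$-side conditions, the hereditary (downward-closed) requirement of hhp-bisimulation, and the probabilistic measure condition -- while keeping the induction well-founded through nested $\leftmerge$'s. A cleaner alternative I would actually present is to bypass the direct matching: an hhp-bisimulation, being in particular a downward-closed hp-bisimulation, yields (by forgetting the isomorphisms $f$ in the triples $(C_1,f,C_2)$) a probabilistic static location pomset bisimulation, so $p\sim_{phhp}^{sl}q$ implies $p\sim_{pp}^{sl}q$; then the already-established completeness modulo pomset bisimulation, Theorem \ref{CPPBE22}, gives $p=q$ immediately. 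On this route the only delicate point is verifying the inclusion $\sim_{phhp}^{sl}\subseteq\sim_{pp}^{sl}$, which is the standard observation that hereditary history-preserving bisimilarity refines pomset bisimilarity.
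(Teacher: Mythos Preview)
The paper does not actually prove this statement. Theorem~\ref{CPHPBE22} sits in Chapter~\ref{bg} (Backgrounds), in the subsection on APPTC with Localities, where it is one of a long list of results stated without proof and imported wholesale from the author's earlier work \cite{LOC2}. There is no \texttt{proof} environment attached to it. Even the analogous completeness theorems in the quantum chapters that \emph{do} carry proofs (e.g., Theorem~\ref{CAPTCG} for $qAPPTC$ with localities) are dispatched in one line: the bisimilarity hypothesis is unpacked and then ``According to the completeness of $APPTC^{sl}$ (please refer to \cite{LOC2} for details), we can get the completeness of $qAPPTC^{sl}$.'' So there is nothing in this paper to compare your proposal against.

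That said, your proposal is a reasonable reconstruction of how such a proof goes. The second route you offer --- note that $\sim_{phhp}^{sl}\subseteq\sim_{pp}^{sl}$ and invoke the already-stated Theorem~\ref{CPPBE22} --- is correct and is exactly in the spirit of the paper's own reductions elsewhere: a probabilistic static location hhp-bisimulation is by definition a downward-closed hp-bisimulation, and forgetting the isomorphism component of the posetal triples $(C_1,f,C_2)$ yields a probabilistic static location pomset bisimulation satisfying all four clauses of the definition. Your first, direct route (elimination, normal form, structural induction matching $\boxplus$-weights via the $\mu$-condition and then summands via the action rules) is also sound in outline, but it is essentially re-deriving the pomset-level argument rather than exploiting anything specific to the hereditary condition; the cleaner reduction is the better presentation.
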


The transition rules of encapsulation operator $\partial_H$ are shown in Table \ref{TRForEncapsulation22}.

\begin{center}
    \begin{table}
        $$\frac{x\rightsquigarrow x'}{\partial_H(x)\rightsquigarrow\partial_H(x')}$$
        $$\frac{x\xrightarrow[u]{e}\surd}{\partial_H(x)\xrightarrow[u]{e}\surd}\quad (e\notin H)\quad\quad\frac{x\xrightarrow[u]{e}x'}{\partial_H(x)\xrightarrow[u]{e}\partial_H(x')}\quad(e\notin H)$$
        \caption{Transition rules of encapsulation operator $\partial_H$}
        \label{TRForEncapsulation22}
    \end{table}
\end{center}

Based on the transition rules for encapsulation operator $\partial_H$ in Table \ref{TRForEncapsulation22}, we design the axioms as Table \ref{AxiomsForEncapsulation22} shows.

\begin{center}
    \begin{table}
        \begin{tabular}{@{}ll@{}}
            \hline No. &Axiom\\
            $D1$ & $e\notin H\quad\partial_H(e) = e$\\
            $D2$ & $e\in H\quad \partial_H(e) = \delta$\\
            $D3$ & $\partial_H(\delta) = \delta$\\
            $D4$ & $\partial_H(x+ y) = \partial_H(x)+\partial_H(y)$\\
            $D5$ & $\partial_H(x\cdot y) = \partial_H(x)\cdot\partial_H(y)$\\
            $D6$ & $\partial_H(x\leftmerge y) = \partial_H(x)\leftmerge\partial_H(y)$\\
            $L11$ & $u::\partial_H(x) = \partial_H(u::x)$\\
            $PD1$ & $\partial_H(x\boxplus_{\pi}y)=\partial_H(x)\boxplus_{\pi}\partial_H(y)$\\
        \end{tabular}
        \caption{Axioms of encapsulation operator}
        \label{AxiomsForEncapsulation22}
    \end{table}
\end{center}

\begin{theorem}[Congruence theorem of encapsulation operator $\partial_H$]
Probabilistic static location truly concurrent bisimulation equivalences $\sim_{pp}^{sl}$, $\sim_{ps}^{sl}$, $\sim_{php}^{sl}$ and $\sim_{phhp}^{sl}$ are all congruences with respect to encapsulation
operator $\partial_H$.
\end{theorem}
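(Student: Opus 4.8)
The plan is to establish all four congruences uniformly by lifting a witnessing bisimulation through $\partial_H$. Fix $H\subseteq\mathbb{E}$ and suppose $x\sim y$ for one of the equivalences, witnessed by a bisimulation $R_\varphi$ with $(\emptyset,\emptyset)\in R_\varphi$ (respectively $(\emptyset,\emptyset,\emptyset)\in R_\varphi$ in the hp/hhp cases). For $\sim_{pp}^{sl}$ and $\sim_{ps}^{sl}$ I would put
\[
R'_\varphi=\{(\partial_H(C_1),\partial_H(C_2))\mid (C_1,C_2)\in R_\varphi\},
\]
where $\partial_H(C)$ denotes the configuration of the encapsulated event structure corresponding to $C$; for $\sim_{php}^{sl}$ and $\sim_{phhp}^{sl}$ I would take the posetal analogue $\{(\partial_H(C_1),f,\partial_H(C_2))\mid (C_1,f,C_2)\in R_\varphi\}$, carrying the order-isomorphism $f$ unchanged. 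This is legitimate because $\partial_H$ neither renames nor reorders events; it only deletes the branches that would execute an event of $H$. Since $\partial_H(\emptyset)=\emptyset$, the required initial pair (triple) lies in $R'_\varphi$.

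The heart of the argument is a clause-by-clause check against the transition rules in Table~\ref{TRForEncapsulation22}. For the action/pomset clause, if $(\partial_H(C_1),\partial_H(C_2))\in R'_\varphi$ and $\partial_H(C_1)\xrightarrow[u]{X}\partial_H(C_1')$, then every label in $X$ avoids $H$ and the premise of the encapsulation rule gives $C_1\xrightarrow[u]{X}C_1'$ in the underlying system; matching in $R_\varphi$ yields $C_2\xrightarrow[v]{Y}C_2'$ with $X\sim Y$ and $(C_1',C_2')\in R_{\varphi\cup\{(u,v)\}}$, and since $Y\sim X$ also avoids $H$, the same rule gives $\partial_H(C_2)\xrightarrow[v]{Y}\partial_H(C_2')$ with $(\partial_H(C_1'),\partial_H(C_2'))\in R'_{\varphi\cup\{(u,v)\}}$. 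The reverse direction is symmetric and the step case is identical. For the probabilistic clause I would first verify, by induction on term structure using $PD1$ and the rule $\frac{x\rightsquigarrow x'}{\partial_H(x)\rightsquigarrow\partial_H(x')}$, that the PDF of $\partial_H(x)$ is the pushforward of that of $x$ along $\partial_H$ (compare Table~\ref{PDFAPPTC22}); this makes the probabilistic-transition clause immediate and transfers the class-measure condition $\mu(C_1,C)=\mu(C_2,C)$ for $C\in\mathcal{C}(\mathcal{E})/R_\varphi$ over to $R'_\varphi$. The clause $[\surd]_{R'_\varphi}=\{\surd\}$ holds by construction. For the hp-variant the only new point is that $f[e_1\mapsto e_2]$ still serves as the required isomorphism, since surviving events keep their identities; for the hhp-variant one must additionally check that $R'_\varphi$ is downward closed, which follows because any posetal triple pointwise below $(\partial_H(C_1),f,\partial_H(C_2))$ is the $\partial_H$-image of a triple pointwise below $(C_1,f,C_2)$, to which downward closure of $R_\varphi$ applies.

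The step I expect to be the real obstacle is the probabilistic bookkeeping: making precise that pruning the $H$-branches commutes with the probabilistic-choice structure, so that the quotients $\mathcal{C}(\mathcal{E})/R_\varphi$ on the two sides and their $\mu$-values genuinely agree after applying $\partial_H$. This is where the interaction of the $\rightsquigarrow$-rule, the axiom $PD1$, and the probabilistic distribution functions must be handled with care. Everything else — including preservation of downward closure in the $\sim_{phhp}^{sl}$ case — is a routine rule-by-rule verification, and the soundness theorems already established ensure that the encapsulation axioms $D1$--$D6$, $L11$, $PD1$ are operationally compatible with the semantics used here.
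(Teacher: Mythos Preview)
Your proposal is correct and follows the standard bisimulation-lifting argument: defining the candidate relation by pushing pairs through $\partial_H$ and checking clause by clause that the encapsulation rules in Table~\ref{TRForEncapsulation22} preserve matching. The paper itself offers no substantive proof of this statement; in the background chapter the theorem is simply asserted, and where analogous congruence results are treated later (e.g.\ Theorem~\ref{CAPTCG} for the full $qAPPTC$ with localities signature, which includes $\partial_H$), the text merely states that the equivalences are equivalence relations and that preservation by the operators ``is trivial and we leave the proof as an exercise for the readers.'' Your write-up is precisely that exercise, carried out in reasonable detail.

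One remark on presentation rather than correctness: the paper's operational semantics are given on process \emph{terms} (and, in the quantum chapters, on quantum configurations $\langle p,\varrho\rangle$), not directly on PES configurations, so the cleaner way to phrase your candidate relation is $R'_\varphi=\{(\partial_H(p),\partial_H(q))\mid (p,q)\in R_\varphi\}$ on closed terms, with the transition-rule analysis then proceeding exactly as you describe. Your notation ``$\partial_H(C)$'' for a configuration is intelligible but slightly at odds with how the paper sets things up. The point you flag as the real obstacle --- that the probabilistic-transition rule $\frac{x\rightsquigarrow x'}{\partial_H(x)\rightsquigarrow\partial_H(x')}$ together with the PDF clause $\mu(\partial_H(x),\partial_H(x'))=\mu(x,x')$ makes the measure condition go through --- is exactly the right observation, and it is indeed routine once stated.
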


\begin{theorem}[Elimination theorem of APTC with probabilistic static localities]\label{ETEncapsulation22}
Let $p$ be a closed APTC with probabilistic static localities term including the encapsulation operator $\partial_H$. Then there is a basic APTC with probabilistic static localities term $q$ such that
$APTC\vdash p=q$.
\end{theorem}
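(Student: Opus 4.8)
The plan is to reduce the claim to the elimination theorem for parallelism (Theorem~\ref{ETParallelism22}) by showing that the encapsulation operator $\partial_H$ can always be pushed out and removed. The key observation is that when $t$ is a \emph{basic} $APTC$ with probabilistic static localities term --- by Definition~\ref{BTAPTC22} it is then built only from atoms (and $\delta$), $::$, $\cdot$, $+$, $\boxplus_{\pi}$ and $\leftmerge$ --- the axioms of Table~\ref{AxiomsForEncapsulation22} supply a rewrite for $\partial_H$ against every top symbol that can occur in $t$: $D1$--$D3$ for an atom or $\delta$, $L11$ for $u::(\cdot)$, $D5$ for $\cdot$, $D4$ for $+$, $PD1$ for $\boxplus_{\pi}$, and $D6$ for $\leftmerge$. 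Hence a routine structural induction on $t$ gives $APPTC^{sl}\vdash\partial_H(t)=t^{\circ}$ for some term $t^{\circ}$ containing no $\partial_H$, each step decreasing the size of the subterm under $\partial_H$. Note that there is \emph{no} axiom pushing $\partial_H$ through $\parallel$, $\between$, $\mid$, $\Theta$ or $\triangleleft$, so it is essential here that the argument of $\partial_H$ has already been brought into basic form.

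Using this, I would show by structural induction on an arbitrary closed $APPTC^{sl}$ term $p$ (possibly containing $\partial_H$) that $APPTC^{sl}\vdash p=p'$ for some closed $\partial_H$-free term $p'$. If $p$ is an atom or $\delta$, take $p'=p$; if $p$ has any of $::$, $\cdot$, $+$, $\boxplus_{\pi}$, $\leftmerge$, $\parallel$, $\between$, $\mid$, $\Theta$, $\triangleleft$ as its head, apply the induction hypothesis to the immediate subterms and reassemble. In the remaining case $p=\partial_H(p_1)$: the induction hypothesis gives a $\partial_H$-free $p_1^{\circ}$ with $p_1=p_1^{\circ}$; since $p_1^{\circ}$ is then a closed $APPTC^{sl}$ term without encapsulation, Theorem~\ref{ETParallelism22} yields a basic term $q_1$ with $APPTC^{sl}\vdash p_1^{\circ}=q_1$; and the first paragraph, applied to the basic term $q_1$, produces a $\partial_H$-free $p'$ with $APPTC^{sl}\vdash\partial_H(q_1)=p'$. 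Chaining, $APPTC^{sl}\vdash p=\partial_H(p_1)=\partial_H(p_1^{\circ})=\partial_H(q_1)=p'$.

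To finish, apply Theorem~\ref{ETParallelism22} once more to the $\partial_H$-free term $p'$ obtained above: there is a basic $APPTC^{sl}$ term $q$ with $APPTC^{sl}\vdash p'=q$, and therefore $APPTC^{sl}\vdash p=q$, which is what we wanted.

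Every step here is a routine structural induction, so the substantive content is already carried by Theorem~\ref{ETParallelism22}; the only point that needs a moment's care is well-foundedness of the induction in the middle paragraph, where the case $p=\partial_H(p_1)$ re-invokes Theorem~\ref{ETParallelism22}. This is harmless because that theorem is used as a black box on the strictly smaller subterm $p_1$ and never reintroduces $\partial_H$, so the number of occurrences of $\partial_H$ strictly decreases along the recursion. Alternatively, one can give a self-contained proof by orienting all the axioms of $APPTC^{sl}$ (modulo the commutativity and associativity laws $A1$, $A2$, $P2$, $P3$, $PA1$, $PA2$) from left to right and invoking Theorem~\ref{SN} with a lexicographic path order whose precedence ranks $\partial_H$ above $\Theta$, $\triangleleft$, $\between$, $\parallel$, $\mid$, $\leftmerge$, $\cdot$, $+$, $\boxplus_{\pi}$, $::$ and the atoms; a normal form then contains none of $\partial_H$, $\between$, $\parallel$, $\mid$, $\Theta$, $\triangleleft$, and hence is a basic term by Definition~\ref{BTAPTC22}, with $\leftmerge$ surviving precisely because the conditional laws $P5$--$P7$ cannot rewrite a bare $e_1\leftmerge e_2$ when $e_1\not\leq e_2$.
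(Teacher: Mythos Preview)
Your argument is correct and follows the standard two-stage pattern for elimination theorems with an encapsulation operator: first reduce the argument of $\partial_H$ to a basic term via Theorem~\ref{ETParallelism22}, then push $\partial_H$ through the basic constructors using $D1$--$D6$, $L11$, and $PD1$, and finally invoke Theorem~\ref{ETParallelism22} once more. The structural induction is well-founded for the reason you give, and the alternative term-rewriting argument via Theorem~\ref{SN} with a suitable lexicographic path order is the textbook route.

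However, there is nothing to compare against: the paper states this theorem in the Backgrounds section as a known result imported from~\cite{LOC2} and provides no proof at all. The analogous elimination theorems that do appear with proofs later in the paper (e.g.\ for $qAPTC^{sl}$ and $qAPPTC^{sl}$) simply say ``The same as that of $APTC^{sl}$/$APPTC^{sl}$, we omit the proof, please refer to~\cite{LOC1}/\cite{LOC2} for details.'' So your write-up is substantially more detailed than anything the paper offers for this or any elimination result; it is the kind of argument one would expect to find in the cited references rather than here.
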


\begin{theorem}[Soundness of APTC with probabilistic static localities modulo probabilistic static location pomset bisimulation equivalence]\label{SAPTCPBE22}
Let $x$ and $y$ be APTC with probabilistic static localities terms including encapsulation operator $\partial_H$. If $APPTC^{sl}\vdash x=y$, then $x\sim_{pp}^{sl} y$.
\end{theorem}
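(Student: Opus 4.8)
The plan is to reduce the statement to the soundness of the individual axioms and then combine this with the results already established. First I would observe that, by the congruence theorem for APTC with probabilistic static localities and the congruence theorem for the encapsulation operator $\partial_H$ (both stated above), the equivalence $\sim_{pp}^{sl}$ is a congruence with respect to every operator of the signature, including $\partial_H$. Consequently it suffices to show that each axiom of the theory is \emph{sound modulo} $\sim_{pp}^{sl}$, i.e.\ that for every closed substitution instance $s=t$ of an axiom one has $s\sim_{pp}^{sl}t$; the general case $APPTC^{sl}\vdash x=y$ then follows by induction on the length of the equational derivation, where reflexivity, symmetry and transitivity of $\sim_{pp}^{sl}$ dispatch the equivalence-closure steps and congruence dispatches the contextual steps.

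Second, all axioms that do not mention $\partial_H$ --- the laws $A1$--$A7$, $L1$--$L10$, $PA1$--$PA5$ and the parallelism axioms $P1$--$P9$, $C1$--$C8$, $CE1$--$CE6$, $U1$--$U13$ --- are already known to be sound modulo $\sim_{pp}^{sl}$ by Theorem~\ref{SPPBE22}. Hence the only new obligations are the encapsulation axioms $D1$--$D6$, $L11$ and $PD1$ of Table~\ref{AxiomsForEncapsulation22}.

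Third, for each of these axioms I would exhibit an explicit probabilistic static location pomset bisimulation $R_{\varphi}$ relating the two sides (for an arbitrary closed substitution) and verify the four clauses in the definition of $\sim_{pp}^{sl}$: (i) each location-labelled pomset transition of one side is matched by one of the other with isomorphic pomset labels and successors related under the extended relation $R_{\varphi\cup\{(u,v)\}}$, and vice versa; (ii) each probabilistic transition $\rightsquigarrow$ is matched; (iii) the probability distribution functions coincide, $\mu(C_1,C)=\mu(C_2,C)$ on every $R_{\varphi}$-class; and (iv) $[\surd]_{R_{\varphi}}=\{\surd\}$. For clause (i) one uses the transition rules of Table~\ref{TRForEncapsulation22}, which merely propagate the underlying action transitions while leaving their locations untouched, so the matching transition exists, the pomsets are identical and the location association stays consistent. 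For $D1$--$D3$ and $L11$ the witnessing relation is finite and the check is immediate; for the distributive laws $D4$, $D5$, $D6$, $PD1$ I would take $R_{\varphi}$ to be the contextual (and reflexive) closure of the obvious pairs, so that it is closed under the reachable probabilistic and action successors of both sides. Clauses (ii) and (iii) are then handled with the probabilistic rule $x\rightsquigarrow x'\Rightarrow\partial_H(x)\rightsquigarrow\partial_H(x')$ and the PDF clauses of Tables~\ref{PDFAPPTC22} and~\ref{PDFBAPTC22} (with the evident clause $\mu(\partial_H(x),\partial_H(x'))=\mu(x,x')$ induced by that rule), pushed through $D4$--$D6$ and $PD1$.

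I expect clause (iii) --- keeping the probability distribution functions in step with the location-labelled action structure --- to be the main obstacle. In the presence of $\leftmerge$ (axiom $D6$) and the probabilistic choice $\boxplus_{\pi}$ (axiom $PD1$) the two phases of the operational semantics, the probabilistic-choice phase $\rightsquigarrow$ and the subsequent action phase $\xrightarrow[u]{e}$, interact, so $R_{\varphi}$ must be designed to be closed under both kinds of successors and to equate PDF values class-by-class rather than pointwise. Once the relation is set up so that, together with each related pair, it contains all of its $\rightsquigarrow$- and $\xrightarrow[u]{e}$-descendants, the remaining verification is routine bookkeeping with the rules of Tables~\ref{TRForEncapsulation22}, \ref{TRForAPPTC122}, \ref{TRForAPTC22} and~\ref{TRForAPTC222}.
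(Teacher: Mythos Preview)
Your proposal is correct and follows essentially the same approach the paper uses throughout for soundness theorems: reduce to axiom-soundness via the congruence and equivalence properties of $\sim_{pp}^{sl}$, then verify each axiom individually. The paper itself states this particular theorem in the background section without proof (it is imported from \cite{LOC2}), and where it does prove the analogous soundness results---e.g.\ for $qAPPTC^{sl}$---it simply records that ``since $\sim_{pp}^{sl}$ is both an equivalent and a congruent relation, we only need to check if each axiom \dots\ is sound'' and leaves the axiom checks as exercises; your write-up is a considerably more detailed execution of exactly that plan.
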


\begin{theorem}[Completeness of APTC with probabilistic static localities modulo probabilistic static location pomset bisimulation equivalence]\label{CAPTCPBE22}
Let $p$ and $q$ be closed APTC with probabilistic static localities terms including encapsulation operator $\partial_H$, if $p\sim_{pp}^{sl} q$ then $p=q$.
\end{theorem}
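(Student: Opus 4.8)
The plan is to follow the standard route for completeness theorems in the ACP tradition: reduce to basic terms by elimination, then show that derivability and the semantic equivalence coincide on basic terms. Concretely, by the elimination theorem for APTC with probabilistic static localities including the encapsulation operator (Theorem~\ref{ETEncapsulation22}) there are basic terms $n_p,n_q\in\mathcal{B}(APPTC^{sl})$ with $APPTC^{sl}\vdash p=n_p$ and $APPTC^{sl}\vdash q=n_q$. By soundness (Theorem~\ref{SAPTCPBE22}) we get $p\sim_{pp}^{sl}n_p$ and $q\sim_{pp}^{sl}n_q$, so by symmetry and transitivity of $\sim_{pp}^{sl}$ it suffices to show that $n_p\sim_{pp}^{sl}n_q$ implies $APPTC^{sl}\vdash n_p=n_q$; the whole argument thereby becomes a statement about basic terms only.

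Next I would refine basic terms into a rigid normal form. Using the locality laws $L1$--$L11$ and $PD1$ one pushes every distribution prefix $u{::}$ down onto the atomic actions; using $A1$--$A5$, $PA1$--$PA5$ and the parallelism laws (the probabilistic analogues of $P5$--$P9$ in Table~\ref{AxiomsForParallelism21}) one flattens nested parallel compositions and distributes $\cdot$, $+$ and $\boxplus_{\pi}$, so that every basic term becomes a finite $\boxplus_{\pi}$-combination of $+$-sums whose summands have the shape $u_1{::}a_1\leftmerge\cdots\leftmerge u_k{::}a_k$ (pairwise causally ordered actions) or $(u_1{::}a_1\leftmerge\cdots\leftmerge u_k{::}a_k)\cdot t'$ with $t'$ again of this form. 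Call these \emph{normal forms}. As already remarked before the parallelism elimination theorem, $\leftmerge$ genuinely survives elimination, so these blocks of actions are irreducible and are the true atoms of a normal form.

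The heart of the proof is a lemma, by induction on the size of normal forms: if $n$ and $n'$ are normal forms with $n\sim_{pp}^{sl}n'$ then $APPTC^{sl}\vdash n=n'$. I would first match the probabilistic structure: by clauses (2)--(4) of probabilistic static location pomset bisimulation, the probabilistic transitions $\rightsquigarrow$ (Table~\ref{TRForAPPTC122}) and the PDF $\mu$ (Table~\ref{PDFAPPTC22}) of $n$ and $n'$ must agree, and equality of the distributions $\mu(n,\cdot)$ and $\mu(n',\cdot)$ is turned into $\boxplus_{\pi}$-equations by $PA1$--$PA5$, exactly as in the completeness proof for $BAPTC^{sl}$ (Theorem~\ref{CBATCPBE22}). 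This reduces the problem to matching the resulting $\breve{\cdot}$-states, which are $+$-sums of (possibly action-prefixed) parallel-action summands. For those I read off the initial labelled steps $\xrightarrow[u]{X}$ from the action transition rules of Tables~\ref{TRForAPTC22}--\ref{TRForAPTC222}: a pomset-bisimilar partner must perform a matching step $\xrightarrow[v]{X'}$ with $X\sim X'$ and $(u,v)$ consistent with $\varphi$, so $A3$ ($x+x=x$) lets me absorb the summand into the matching summand of $n'$ and the induction hypothesis applies to the residuals; terminating summands $u_1{::}a_1\leftmerge\cdots\leftmerge u_k{::}a_k$ and non-terminating ones $(\cdots)\cdot t'$ are separated according to whether the step ends in $\surd$.

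I expect the main obstacle to be the bookkeeping forced by $\leftmerge$. Because elimination does not remove it, the atomic pieces of a normal form are whole blocks $u_1{::}a_1\leftmerge\cdots\leftmerge u_k{::}a_k$ of pairwise causally ordered, location-independent actions, and I must show that two such blocks producing the same labelled step --- the same pomset $X$ at locations related under the consistent location association, with the constraint $u\diamond u'\Leftrightarrow v\diamond v'$ --- are provably equal, which requires using $P2$, $P5$--$P7$ (and their probabilistic analogues) together with the side conditions $e_i\leq e_j$ to canonicalise the order and the association of the factors. The probabilistic layer contributes a second, orthogonal body of bookkeeping, but once the per-branch (deterministic) case is settled it lifts routinely via $PA1$--$PA5$.
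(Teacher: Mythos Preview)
The paper does not prove this theorem. Theorem~\ref{CAPTCPBE22} sits in the Backgrounds chapter (Section~\ref{bg}), where all of the APTC/APPTC-with-localities results are merely \emph{stated} as imports from \cite{LOC1,LOC2}; there is no accompanying \texttt{proof} environment for it. Even the analogous completeness theorems that the paper does dress up with a proof (e.g.\ Theorems~\ref{CBATCG} and~\ref{CAPTCG} in Chapter~\ref{qapptcl2}) consist of a one-line reduction to the classical case followed by ``please refer to \cite{LOC2} for details''. So there is nothing in this paper to compare your argument against line by line.

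That said, your proposal is the standard ACP-style completeness argument and is the right shape for what \cite{LOC2} presumably contains: eliminate to basic terms via Theorem~\ref{ETEncapsulation22}, transfer $\sim_{pp}^{sl}$ to the basic terms by soundness, push locality prefixes to the leaves with $L1$--$L11$, flatten to a $\boxplus_{\pi}$-of-$+$-of-$(\leftmerge\text{-block})\cdot(\text{tail})$ normal form, and then run a structural induction that first matches the probabilistic layer (clauses (2)--(4) of the bisimulation plus $PA1$--$PA5$) and then the action layer summand by summand. Your identification of the $\leftmerge$-blocks as the genuine atoms---since $\leftmerge$ survives elimination---and of the cla constraint $u\diamond u'\Leftrightarrow v\diamond v'$ as the device for matching localities are both on point. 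The only thing I would flag is that your appeal to $A3$ as ``$x+x=x$'' is slightly off in this setting: in Table~\ref{AxiomsForqBAPTC} the axiom is restricted to $e+e=e$ for atomic $e$, so the summand-absorption step in the induction needs the full normal-form decomposition rather than a blanket idempotence law, but this is routine bookkeeping rather than a gap in the strategy.
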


\begin{theorem}[Soundness of APTC with probabilistic static localities modulo probabilistic static location step bisimulation equivalence]\label{SAPTCSBE22}
Let $x$ and $y$ be APTC with probabilistic static localities terms including encapsulation operator $\partial_H$. If $APPTC^{sl}\vdash x=y$, then $x\sim_{ps}^{sl} y$.
\end{theorem}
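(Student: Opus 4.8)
The plan is to use the standard route for soundness proofs in equational process algebra. Recall that $APPTC^{sl}\vdash x=y$ means that $x=y$ is derivable from the axioms in Table \ref{AxiomsForBATC22}, the axioms of parallelism (Table \ref{AxiomsForParallelism21}, now together with the probabilistic laws $PA1$--$PA5$), and Table \ref{AxiomsForEncapsulation22}, by reflexivity, symmetry, transitivity, and closure under the operators $::$, $\cdot$, $+$, $\boxplus_{\pi}$, $\leftmerge$, $\parallel$, $\between$, $\mid$, $\Theta$, $\triangleleft$, $\partial_H$. Since $\sim_{ps}^{sl}$ has already been shown to be a congruence with respect to all of these operators (the congruence theorems for APTC with probabilistic static localities and for $\partial_H$ above), it suffices to verify that each individual axiom is sound, i.e. that for every axiom $s=t$ and every closed substitution instance $s\sigma,t\sigma$ one has $s\sigma\sim_{ps}^{sl}t\sigma$; soundness for derived equations then follows by induction on the length of the derivation.

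For each axiom I would exhibit an explicit witnessing probabilistic static location step bisimulation $R_{\varphi}$ and check the four defining clauses: matching of step transitions up to pomset isomorphism with the location association $\varphi$ (extended by $(u,v)$ after a matched step), matching of probabilistic transitions $\rightsquigarrow$ (keeping the same $\varphi$), equality $\mu(C_1,C)=\mu(C_2,C)$ of the PDF values on every $R_{\varphi}$-class $C$, and $[\surd]_{R_{\varphi}}=\{\surd\}$. For the core axioms $A1$--$A7$ the witnessing relation is essentially the identity together with the obvious extra pairs, and all clauses follow by inspection of Tables \ref{SETRForBATC22}, \ref{TRForAPPTC122}, \ref{TRForAPTC22}, \ref{TRForAPTC222}, \ref{TRForEncapsulation22} and the PDF definitions in Tables \ref{PDFBAPTC22}, \ref{PDFAPPTC22}. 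For the locality axioms $L1$--$L11$ one uses that $\epsilon::x$ behaves exactly as $x$, that $::$ distributes over $\cdot,+,\between,\parallel,\mid,\Theta,\triangleleft$ at both the probabilistic and the action level, and that $uv::x$ arises from the rule $loc::x\xrightarrow[loc\ll u]{e}loc::x'$ by induction on the length of $u$ --- it is precisely the location annotation $loc\ll u$ that makes $L4$ sound. For the parallelism, communication, conflict-elimination and unless axioms $P1$--$P9$, $C1$--$C8$, $CE1$--$CE6$, $U1$--$U13$, which do not mention $\boxplus_{\pi}$, the witnessing relations are those used in the APTC-with-static-localities development, but one now additionally checks the probabilistic rules of Table \ref{TRForAPPTC122} and the $\mu$-conditions via Table \ref{PDFAPPTC22}; the independence relation $\diamond$ appearing in step labels $u\diamond v$ is respected because a consistent location association is by definition closed under $\diamond$.

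The genuinely probabilistic axioms $PA1$--$PA5$ and $PD1$ are where the work concentrates. Here I would reason directly on the distributions over ``$\breve{\cdot}$-states'' induced by $\rightsquigarrow$. For $PA1$ ($x\boxplus_{\pi}y=y\boxplus_{1-\pi}x$) and $PA3$ ($x\boxplus_{\pi}x=x$) the two induced distributions coincide by commutativity of addition and the identity $\pi\mu(x,z)+(1-\pi)\mu(x,z)=\mu(x,z)$; for $PA2$ both nested convex combinations place mass $\pi$ on the states reached from $x$, mass $(1-\pi)\rho$ on those reached from $y$, and mass $1-(\pi+\rho-\pi\rho)=(1-\pi)(1-\rho)$ on those reached from $z$, which is exactly the arithmetic identity encoded in the axiom, so $\mu$ agrees on every $R_{\varphi}$-class; for $PA4$ and $PA5$ one uses the clauses $\mu(x\cdot y,x'\cdot y)=\mu(x,x')$ and $\mu(x+y,x'+y')=\mu(x,x')\mu(y,y')$ to push the probabilistic branching through $\cdot$ and $+$; and $PD1$ follows from $\partial_H$ commuting with $\rightsquigarrow$ (first rule of Table \ref{TRForEncapsulation22}) together with $D1$--$D6$. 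In each case, once the probabilistic transition is resolved both sides reach $\sim_{ps}^{sl}$-equivalent (indeed syntactically related) terms, so the action-step and termination clauses reduce to the already-verified non-probabilistic situations.

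The main obstacle will be the simultaneous bookkeeping of the PDF condition $\mu(C_1,C)=\mu(C_2,C)$ for all $R_{\varphi}$-classes $C$ alongside the two transfer conditions, especially for $PA2$ and for the distribution of $\boxplus_{\pi}$ over $+$, $\parallel$ and $\mid$ obtained by combining $PA5$ with $P4$, $C5$ and $C6$: one must ensure that the class partition is compatible with the convex combinations and that no probability mass is routed to a class unmatched on the other side. A secondary subtlety is the interplay of locations with probabilistic moves --- verifying that $\rightsquigarrow$-steps do not alter $\varphi$ (they stay in $R_{\varphi}$, not $R_{\varphi\cup\{(u,v)\}}$), whereas action steps extend it with $(u,v)$, and that this is consistent throughout the matching. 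With the probabilistic and locality axioms settled in this way, the remaining axioms are routine adaptations of the corresponding soundness arguments for APTC with static localities, and the theorem follows.
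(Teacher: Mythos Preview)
Your approach is correct and matches the paper's strategy for soundness results: reduce to per-axiom soundness via the congruence property, then verify each axiom by exhibiting a witnessing bisimulation. The paper itself gives no proof for this particular theorem (it sits in the Backgrounds chapter and is implicitly deferred to \cite{LOC2}); where the paper does prove analogous soundness statements (e.g.\ Theorem~\ref{SAPTCG} for $qAPPTC^{sl}$), it only states ``since $\sim_{ps}^{sl}$ is both an equivalent and a congruent relation, we only need to check if each axiom in the relevant table is sound modulo probabilistic static location step bisimulation equivalence'' and then leaves the axiom-by-axiom verification as an exercise. Your proposal already carries out far more of that verification than the paper ever does, so there is nothing to correct and nothing meaningfully different to compare.
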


\begin{theorem}[Completeness of APTC with probabilistic static localities modulo probabilistic static location step bisimulation equivalence]\label{CAPTCSBE22}
Let $p$ and $q$ be closed APTC with probabilistic static localities terms including encapsulation operator $\partial_H$, if $p\sim_{ps}^{sl} q$ then $p=q$.
\end{theorem}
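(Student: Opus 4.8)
The plan is to follow the standard three-step pattern used throughout this book for completeness proofs — reduce to basic terms, introduce a normal form, and show that step bisimilar normal forms are provably equal — arguing by induction on the size of the terms. This is the probabilistic counterpart of the completeness proof for the non-probabilistic system (Theorem~\ref{CAPTCSBE21}), with the probabilistic layer treated as in Theorem~\ref{CBATCSBE22}. Concretely, given closed $APTC$ with probabilistic static localities terms $p,q$ including $\partial_H$ with $p\sim_{ps}^{sl}q$, the elimination theorem (Theorem~\ref{ETEncapsulation22}) yields basic terms $p',q'\in\mathcal{B}(APPTC^{sl})$ with $APPTC^{sl}\vdash p=p'$ and $APPTC^{sl}\vdash q=q'$; by soundness (Theorem~\ref{SAPTCSBE22}) we get $p\sim_{ps}^{sl}p'$ and $q\sim_{ps}^{sl}q'$, hence $p'\sim_{ps}^{sl}q'$. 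So it suffices to prove the claim for basic terms, where the only operators are $u{::}$, $\cdot$, $+$, $\boxplus_{\pi}$ and $\leftmerge$.

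Next I would put every basic term into a normal form. Using $L1$--$L4$ (Table~\ref{AxiomsForBATC22}) together with the locality axioms for the parallel operators, push every location prefix inward so that locations occur only attached to atomic events, in shapes $w{::}a$. Using $PA1$--$PA5$ (Table~\ref{AxiomsForBATC22}), pull every $\boxplus_{\pi}$ outward, writing the term as a finitely branching probabilistic combination (built by nesting $\boxplus_{\pi}$) of $\boxplus_{\pi}$-free basic terms; here $PA1$--$PA2$ re-associate the probabilistic tree and $PA3$ removes duplicate branches. Finally, using $A1$--$A7$ and the left-merge axioms $P5$--$P9$, rewrite each $\boxplus_{\pi}$-free branch into a sum $\sum_i (w_{i1}{::}a_{i1}\leftmerge\cdots\leftmerge w_{in_i}{::}a_{in_i})\cdot t_i + \sum_j (w'_{j1}{::}b_{j1}\leftmerge\cdots\leftmerge w'_{jm_j}{::}b_{jm_j})$, where each $t_i$ is again in normal form. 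Since $\leftmerge$ cannot be eliminated (as the text remarks before Definition~\ref{BTAPTC22}), these ``parallel bundles'' genuinely persist; but a bundle $w_{i1}{::}a_{i1}\leftmerge\cdots\leftmerge w_{in_i}{::}a_{in_i}$ is precisely what fires a single step transition with action-set label $\{a_{i1},\ldots,a_{in_i}\}$ and composite location $w_{i1}\diamond\cdots\diamond w_{in_i}$ according to Tables~\ref{TRForAPPTC122}--\ref{TRForAPTC22}, while the PDFs (Tables~\ref{PDFBAPTC22}--\ref{PDFAPPTC22}) compute $\mu$ on such a normal form componentwise. A structural induction then shows every basic term is provably equal to a normal form.

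For the core step, let $p',q'$ be normal forms with $p'\sim_{ps}^{sl}q'$, and induct on $\mathrm{size}(p')+\mathrm{size}(q')$. The probabilistic bisimulation clause forces $\mu(p',\cdot)$ and $\mu(q',\cdot)$ to agree on $\sim_{ps}^{sl}$-classes; combined with the clause matching $\rightsquigarrow$-transitions, this lets me pair the $\boxplus_{\pi}$-free branches of $p'$ with those of $q'$ so that matched branches carry equal total probability and are themselves step bisimilar — collapsing several branches lying in one class via $PA3$ and re-associating via $PA1$--$PA2$ when needed. Hence it suffices to prove provable equality of a matched pair of $\boxplus_{\pi}$-free branches $\bar p,\bar q$. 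For those, each step transition $\bar p\xrightarrow[w]{A}\surd$ or $\bar p\xrightarrow[w]{A}t$ must, by the definition of $\sim_{ps}^{sl}$ with the location-association bookkeeping $R_{\varphi\cup\{(w,w')\}}$, be matched by $\bar q\xrightarrow[w']{A}\surd$ or $\bar q\xrightarrow[w']{A}s$ with $t\sim_{ps}^{sl}s$; by the induction hypothesis $t=s$ is provable, so each summand of $\bar p$ is provably a summand of $\bar q$ (absorbed via $A3$), and symmetrically, giving $\bar p=\bar q$. Reassembling through the probabilistic combination with $PA4$--$PA5$ and $PA1$--$PA3$ yields $p'=q'$, hence $p=q$.

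The delicate part is this last step. One must justify that the probabilistic transition $\rightsquigarrow$ fully resolves before any step transition, so that $\mu$-equivalence of $p'$ and $q'$ really does induce a probability-preserving pairing of their $\boxplus_{\pi}$-free residuals even when distinct residuals fall in the same bisimulation class — this is exactly where $PA3$ and the $\boxplus_{\pi}$-tree normalization via $PA1$--$PA2$ are indispensable. One must also show that within a branch the persistent left-merge bundles are faithfully recovered from a single step label \emph{together} with its composite location, which forces the static-locality data — the $\diamond$-independence of the $w_{i\ell}$ and the consistent location association $\varphi$ — to be carried precisely through the induction, since two different bundles of located atomic events could otherwise produce the same action-set label while differing in location structure. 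Keeping $\varphi$ consistent along matched transitions is the heart of the argument and the main obstacle I expect.
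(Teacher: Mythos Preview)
The paper provides no proof of this theorem: it appears in the Backgrounds chapter (Section~\ref{bg}) as a preliminary result imported from the author's prior work~\cite{LOC2}, and is stated there without any argument. Indeed, even the paper's \emph{new} completeness theorems in Chapters~\ref{qaptcl} and~\ref{qapptcl2} are proved only by a one-line reduction to the classical results in~\cite{LOC1} and~\cite{LOC2}. So there is nothing in this paper to compare your proposal against.

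That said, your proposal is far more substantive than anything the paper offers, and the three-step template you follow --- elimination to basic terms via Theorem~\ref{ETEncapsulation22}, normalization pulling $\boxplus_{\pi}$ outward and locations inward, then an induction matching summands of bisimilar normal forms --- is the standard ACP-style completeness argument and is presumably what~\cite{LOC2} carries out in detail. The outline is correct and well organized. You have also put your finger on the genuinely delicate point: the bisimulation definition allows matched transitions to occur at \emph{different} locations $u,v$ provided $\varphi\cup\{(u,v)\}$ stays a consistent location association, yet the location axioms $L1$--$L4$ (and $PL1$) only push locations around syntactically and give no way to identify terms that differ only in their location labels. Whether completeness is meant relative to the identity cla, or whether consistency forces enough structure to recover equality of locations along matched paths, is exactly what must be settled before your induction closes --- and the paper itself gives no guidance here.
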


\begin{theorem}[Soundness of APTC with probabilistic static localities modulo probabilistic static location hp-bisimulation equivalence]\label{SAPTCHPBE22}
Let $x$ and $y$ be APTC with probabilistic static localities terms including encapsulation operator $\partial_H$. If $APPTC^{sl}\vdash x=y$, then $x\sim_{php}^{sl} y$.
\end{theorem}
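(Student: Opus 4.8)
The plan is to argue by induction on the length of a derivation of $APPTC^{sl}\vdash x=y$, which reduces the statement to two facts that are already in hand. First, $\sim_{php}^{sl}$ is an equivalence relation: the identity posetal relation is a probabilistic static location hp-bisimulation, and both the converse of such a relation and the (suitably composed) product of two such relations are again probabilistic static location hp-bisimulations, with the $\mu$-clause and the clause $[\surd]_{R_{\varphi}}=\{\surd\}$ preserved. Second, $\sim_{php}^{sl}$ is a congruence with respect to every operator of APTC with probabilistic static localities, including $\partial_H$, by the congruence theorems stated above. Granting these, it suffices to check that each axiom instance $s=t$ appearing in Table~\ref{AxiomsForBATC22}, in the parallelism axioms of Table~\ref{AxiomsForParallelism21} together with their probabilistic companions $PA1$--$PA5$, and in Table~\ref{AxiomsForEncapsulation22}, satisfies $s\sim_{php}^{sl}t$; congruence then propagates this through arbitrary contexts, and the equivalence properties close off reflexivity, symmetry and transitivity steps of the derivation.

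The core of the proof is therefore an axiom-by-axiom construction of witnesses. For each axiom $s=t$ I would exhibit an explicit posetal relation $R_{\varphi}\subseteq\mathcal{C}(\mathcal{E}_s)\overline{\times}\mathcal{C}(\mathcal{E}_t)$, built from the natural correspondence between the subterms of $s$ and $t$ (for the sequential/choice laws $A1$--$A5$ and the location laws $L1$--$L4$ this is essentially the diagonal, decorated with the locations read off the $u::{-}$ prefixes), and then verify the four clauses of a probabilistic static location hp-bisimulation: every action transition $C_1\xrightarrow[u]{e_1}C_1'$ is matched by $C_2\xrightarrow[v]{e_2}C_2'$ with the isomorphism extended to $f[e_1\mapsto e_2]$ and $\varphi$ extended by $(u,v)$, and vice versa; every probabilistic transition $\rightsquigarrow$ is matched — this is where the weights must genuinely be recomputed for $PA1$--$PA5$ and $PD1$, e.g. for $PA2$ one checks that resolving $x\boxplus_{\pi}(y\boxplus_{\rho}z)$ and $(x\boxplus_{\frac{\pi}{\pi+\rho-\pi\rho}}y)\boxplus_{\pi+\rho-\pi\rho}z$ produces the same set of action-ready states; the PDF equality $\mu(C_1,C)=\mu(C_2,C)$ holds for each class $C$, which follows directly from the defining equations of $\mu$ in Tables~\ref{PDFBAPTC22} and~\ref{PDFAPPTC22} once the partition induced by $R_{\varphi}$ is spelled out; and $[\surd]_{R_{\varphi}}=\{\surd\}$, which is immediate since each $R_{\varphi}$ I build relates $\surd$ only to $\surd$.

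For the genuinely concurrent axioms — $P1$--$P9$, $C1$--$C8$, $CE1$--$CE6$, $U1$--$U13$, the location-distribution laws $L5$--$L11$, and $PD1$ — the witness is no longer the diagonal, and one must track how the independence relation $\diamond$ on $Loc^*$ and the causal order and (probabilistic) conflict on events behave under the transition rules of Tables~\ref{TRForAPPTC122}, \ref{TRForAPTC22} and~\ref{TRForAPTC222}: symmetry and associativity of $\diamond$ underlie $P2$ and $P3$, the side condition $e_1\leq e_2$ pins down exactly which $\leftmerge$-steps fire in $P5$--$P7$, and $CE3$--$CE6$ and $U1$--$U3$ require handling both $\sharp$ and $\sharp_{\pi}$ via the extra $\Theta$/$\triangleleft$ rules. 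I expect the main obstacle to be precisely the hp-level bookkeeping here: unlike pomset or step bisimulation, the witness must carry an order-isomorphism $f$ between the two configurations and keep it coherent as events are added on both sides, so for each such axiom one must verify not only that labels, locations and probabilities match but that the induced map on already-executed events remains a poset isomorphism and extends consistently — this is routine but is where all the case analysis concentrates. The encapsulation laws $D1$--$D6$, $L11$ and $PD1$ are then a short separate check from the rules in Table~\ref{TRForEncapsulation22}, using that $\partial_H$ merely removes the $H$-labelled transitions and commutes with both $\rightsquigarrow$ and the location prefixes.
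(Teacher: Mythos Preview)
Your proposal is correct and follows the same high-level strategy the paper uses for all its soundness theorems: reduce soundness to checking each axiom individually, using that $\sim_{php}^{sl}$ is an equivalence and a congruence with respect to every operator (including $\partial_H$). The paper's own treatment of this particular statement is in the background section and is stated without proof; the analogous soundness results that the paper does prove (e.g.\ for $qAPTC^{sl}$ and $qAPPTC^{sl}$) are dispatched with a one-line argument of exactly your shape --- ``since $\sim_{php}^{sl}$ is both an equivalent and a congruent relation, we only need to check if each axiom is sound modulo probabilistic static location hp-bisimulation equivalence'' --- and the axiom-by-axiom verification is explicitly left as an exercise. Your write-up simply carries out in detail what the paper leaves to the reader, so there is no methodological divergence.
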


\begin{theorem}[Completeness of APTC with probabilistic static localities modulo probabilistic static location hp-bisimulation equivalence]\label{CAPTCHPBE22}
Let $p$ and $q$ be closed APTC with probabilistic static localities terms including encapsulation operator $\partial_H$, if $p\sim_{php}^{sl} q$ then $p=q$.
\end{theorem}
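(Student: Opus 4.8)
The plan is to reduce the statement to the already-established completeness result for the fragment without the encapsulation operator, namely Theorem~\ref{CPHPBE22}, by first eliminating $\partial_H$ and then transferring the bisimilarity with soundness. \textbf{Step 1.} Let $p$ and $q$ be closed APTC with probabilistic static localities terms that may contain $\partial_H$, and assume $p\sim_{php}^{sl}q$. By the elimination theorem for APTC with probabilistic static localities including $\partial_H$ (Theorem~\ref{ETEncapsulation22}), there are closed \emph{basic} terms $p',q'\in\mathcal{B}(APPTC^{sl})$ — built only from $\mathbb{E}$, $::$, $\cdot$, $+$, $\boxplus_{\pi}$ and $\leftmerge$, with no occurrence of $\partial_H$ — such that $APPTC^{sl}\vdash p=p'$ and $APPTC^{sl}\vdash q=q'$. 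Here the axioms $D1$–$D6$ together with $L11$ and $PD1$ are precisely what let $\partial_H$ be driven inward past $\cdot$, $+$, $\leftmerge$, location prefixes and $\boxplus_{\pi}$, and finally discharged on atomic events.

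\textbf{Step 2.} By soundness of APTC with probabilistic static localities modulo $\sim_{php}^{sl}$ (Theorem~\ref{SAPTCHPBE22}), the derivable equalities from Step~1 give $p\sim_{php}^{sl}p'$ and $q\sim_{php}^{sl}q'$; since $\sim_{php}^{sl}$ is an equivalence, $p'\sim_{php}^{sl}q'$. \textbf{Step 3.} Now $p'$ and $q'$ are closed APTC with probabilistic static localities terms \emph{without} $\partial_H$, so Theorem~\ref{CPHPBE22} applies and yields $APPTC^{sl}\vdash p'=q'$. Chaining the three equalities, $APPTC^{sl}\vdash p=p'=q'=q$, which is the claim.

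With these lemmas in hand the proof is a short reduction: the substantive content has already been absorbed into Theorem~\ref{CPHPBE22} and Theorem~\ref{ETEncapsulation22}. The only points that need care are (i) confirming that the elimination procedure for $\partial_H$ terminates in a term of the $\partial_H$-free fragment to which Theorem~\ref{CPHPBE22} applies — i.e. that $PD1$ and $L11$ are available so $\partial_H$ commutes with $\boxplus_{\pi}$ and with location prefixing — and (ii) checking that soundness (Theorem~\ref{SAPTCHPBE22}) is stated for the full signature including $\partial_H$, so it can be applied to the equations produced in Step~1. If one instead wanted a self-contained argument not routed through Theorem~\ref{CPHPBE22}, the hard part would be the usual one for probabilistic truly concurrent algebras: putting basic terms into a probabilistic normal form (pushing $\boxplus_{\pi}$ outward via $PA4$, $PA5$ and normalising coefficients via $PA1$–$PA3$), then matching, inside each probabilistic branch, the sets of initial concurrent action-steps of two $\sim_{php}^{sl}$-related terms while \emph{simultaneously} respecting the pomset isomorphism, the evolving consistent location association $\varphi\cup\{(u,v)\}$ (so that the $\diamond$/$\leftmerge$ structure of parallel components is preserved), and the measure conditions $\mu(C_1,C)=\mu(C_2,C)$ — i.e. coordinating the location/posetal bookkeeping with the probabilistic bookkeeping at the same time, with $A3$ and the commutativity/associativity of $+$ used to collapse the matched summand sets to a provable identity.
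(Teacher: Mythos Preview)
The paper does not actually give a proof of this theorem: it appears in the background section on APPTC with localities (Section~2.4), where all theorems are stated without argument and are implicitly deferred to \cite{LOC2}. So there is no in-paper proof to compare against.

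That said, your argument is the standard and correct reduction for this kind of result, and it is exactly the pattern the paper uses elsewhere when it \emph{does} sketch proofs (e.g.\ the quantum completeness theorems in Sections~3 and~5 are proved by appealing to the classical completeness from \cite{LOC1}/\cite{LOC2}). Your Step~1--3 structure --- eliminate $\partial_H$ via Theorem~\ref{ETEncapsulation22}, transport the bisimilarity to the basic terms by soundness (Theorem~\ref{SAPTCHPBE22}), then invoke the $\partial_H$-free completeness Theorem~\ref{CPHPBE22} --- is sound, and your side remarks about needing $PD1$ and $L11$ for the elimination to go through the probabilistic choice and location prefix are on point. The final paragraph sketching a direct normal-form argument is a reasonable outline of what a self-contained proof would require, though of course that is not needed once Theorem~\ref{CPHPBE22} is available.
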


\begin{theorem}[Soundness of APTC with probabilistic static localities modulo probabilistic static location hhp-bisimulation equivalence]\label{SAPTCHPBE22}
Let $x$ and $y$ be APTC with probabilistic static localities terms including encapsulation operator $\partial_H$. If $APPTC^{sl}\vdash x=y$, then $x\sim_{phhp}^{sl} y$.
\end{theorem}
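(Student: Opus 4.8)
The plan is to reduce soundness to a finite, axiom-by-axiom check, exactly as in the non-probabilistic analogue (the soundness of APTC with static localities including $\partial_H$ modulo $\sim_{hhp}^{sl}$). Because $\sim_{phhp}^{sl}$ is a congruence with respect to APTC with probabilistic static localities and with respect to the encapsulation operator $\partial_H$ (the two congruence theorems proved above), any derivation $APPTC^{sl}\vdash x=y$ is a finite chain of replacements of closed axiom instances inside contexts built from the signature; hence it is enough to show that for every axiom $s=t$ and every closed substitution $\sigma$ one has $\sigma(s)\sim_{phhp}^{sl}\sigma(t)$. The axioms of Table \ref{AxiomsForBATC22} together with the parallelism axioms have already been shown sound modulo $\sim_{phhp}^{sl}$ in the soundness theorems for the fragment without the encapsulation operator, so the remaining obligations are precisely the axioms $D1$--$D6$, $L11$ and $PD1$ of Table \ref{AxiomsForEncapsulation22}.

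For each of these I would exhibit a witnessing probabilistic static location hhp-bisimulation. The cases $D1$, $D2$, $D3$ are immediate: both sides reduce to a single event, to $\delta$, or to $\delta$, and the identity (respectively empty) posetal relation, extended by the clauses for $\surd$ and for the PDFs, qualifies. For $D4$ ($\partial_H(x+y)=\partial_H(x)+\partial_H(y)$), $D5$ ($\partial_H(x\cdot y)=\partial_H(x)\cdot\partial_H(y)$) and $D6$ ($\partial_H(x\leftmerge y)=\partial_H(x)\leftmerge\partial_H(y)$) I would take $R_{\varphi}$ to be the least posetal relation containing $(\emptyset,\emptyset,\emptyset)$ and closed under the common syntactic structure of the two sides: a probabilistic transition of one side matches, via the rule $\frac{x\rightsquigarrow x'}{\partial_H(x)\rightsquigarrow\partial_H(x')}$ of Table \ref{TRForEncapsulation22} and the probabilistic rules for $+$, $\cdot$, $\leftmerge$, a probabilistic transition of the other; an action transition $\xrightarrow[u]{e}$ of one side (which requires $e\notin H$) matches, by the corresponding action rules, the same-labelled, same-located action of the other side, with the configuration isomorphism $f$ extended by $e\mapsto e$; and the PDF clause $\mu(C_1,C)=\mu(C_2,C)$ follows from the PDF equations of Tables \ref{PDFBAPTC22} and \ref{PDFAPPTC22} together with the PDF rule $\mu(\partial_H(x),\partial_H(x'))=\mu(x,x')$ read off from the probabilistic transition rule for $\partial_H$. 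For $L11$ ($u::\partial_H(x)=\partial_H(u::x)$) the same construction applies, now also tracking the way the locality operator prepends a location prefix in the locality transition rules; since the prefix attached is the same on both sides, all location labels coincide. For $PD1$ ($\partial_H(x\boxplus_{\pi}y)=\partial_H(x)\boxplus_{\pi}\partial_H(y)$) the two sides have literally the same action behaviour once the probabilistic choice is resolved ($\partial_H$ simply distributes over the selected branch), and the PDF condition collapses to $\mu(x\boxplus_{\pi}y,z)=\pi\mu(x,z)+(1-\pi)\mu(y,z)$ transported through $\partial_H$.

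The step that needs genuine care is \emph{downward-closedness}, which an hhp-bisimulation must satisfy but an hp-bisimulation need not. For all the encapsulation axioms the relation constructed above is the graph of a canonical isomorphism between the configurations of the two sides that visibly commutes with passing to sub-configurations, so downward closure holds; I would make this explicit by checking that if $(C_1,f,C_2)\in R_{\varphi}$ and $(C_1',f',C_2')\subseteq(C_1,f,C_2)$ pointwise, then the restricted triple is again produced by the structural construction and lies in $R_{\varphi'}$ for the restricted location association $\varphi'$. Two smaller points fall out along the way: since the two sides of every encapsulation axiom carry the same locality structure, every pair added to the association is of the form $(u,u)$, so the consistent-location-association condition $u\diamond u'\Leftrightarrow v\diamond v'$ is automatic and never blocks the matching; and the clause $[\surd]_{R_{\varphi}}=\{\surd\}$ holds because $\partial_H$, the locality operator and the probabilistic and parallel combinators all preserve the distinguished terminated state. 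Combining these verifications with the two congruence theorems yields $x\sim_{phhp}^{sl}y$, as required.
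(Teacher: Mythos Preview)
Your proposal is correct and follows the standard approach. The paper itself does not supply a proof for this particular theorem: it sits in the Backgrounds chapter as a result imported from \cite{LOC2} and is stated without proof. For the analogous soundness theorems that the paper \emph{does} prove (the quantum-extended versions in the later chapters), the argument is exactly your first reduction step---``since $\sim_{phhp}^{sl}$ is both an equivalent and a congruent relation, we only need to check if each axiom in the relevant table is sound modulo probabilistic static location hhp-bisimulation equivalence''---after which the paper writes ``We leave the proof as an exercise for the readers.'' Your proposal carries out precisely that exercise for the encapsulation axioms $D1$--$D6$, $L11$, $PD1$, including the downward-closedness check that distinguishes hhp- from hp-bisimulation, so it is strictly more detailed than anything the paper offers while remaining fully aligned with its intended argument.
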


\begin{theorem}[Completeness of APTC with probabilistic static localities modulo probabilistic static location hhp-bisimulation equivalence]\label{CAPTCHPBE22}
Let $p$ and $q$ be closed APTC with probabilistic static localities terms including encapsulation operator $\partial_H$, if $p\sim_{phhp}^{sl} q$ then $p=q$.
\end{theorem}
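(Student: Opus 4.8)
The plan is the standard route for algebraic completeness theorems: reduce to basic terms by elimination, normalise, and then show that bisimilar normal forms are provably equal. First I would invoke the elimination theorem for APTC with probabilistic static localities including $\partial_H$ (Theorem~\ref{ETEncapsulation22}) to obtain basic terms $p'$ and $q'$ with $APPTC^{sl}\vdash p=p'$ and $APPTC^{sl}\vdash q=q'$. The corresponding soundness theorem gives $p\sim_{phhp}^{sl}p'$ and $q\sim_{phhp}^{sl}q'$, whence $p'\sim_{phhp}^{sl}q'$; so it suffices to prove the statement for basic terms $p'$ and $q'$, since then $p=p'=q'=q$ in $APPTC^{sl}$.

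Next I would bring basic terms into a probabilistic normal form. Using $PA1$--$PA5$ together with $PD1$ and $L1$--$L4$, $L10$, every basic term is provably equal to a probabilistic choice $n_1\boxplus_{\rho_1}(n_2\boxplus_{\rho_2}(\cdots\boxplus_{\rho_{k-1}}n_k))$ of terms $n_i$ carrying no top-level $\boxplus$; each $n_i$ is then, by $A1$--$A7$, the parallelism axioms ($P1$--$P9$, $C1$--$C8$, the $\Theta$ and $\triangleleft$ laws) and the locality laws $L2$--$L9$, provably equal to a sum $\sum_j s_{ij}$ whose summands $s_{ij}$ have the shape $u::e$, $u::e\cdot n'$, a left-merge block $u_1::e_1\leftmerge\cdots\leftmerge u_m::e_m$, or such a block followed by $\cdot\, n'$, with every continuation $n'$ again a strictly smaller probabilistic normal form. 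This is exactly the normalisation already used for Theorems~\ref{CAPTCSBE22} and \ref{CAPTCPBE22} and the hp-case; the only genuinely new layer compared with the non-probabilistic completeness proofs is the outer $\boxplus$-structure, disposed of by $PA1$--$PA5$.

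The core step is a uniqueness statement proved by induction on the size of normal forms: if normal forms $N$ and $N'$ satisfy $N\sim_{phhp}^{sl}N'$, then they coincide up to commutativity, associativity and idempotence of $+$ and $\boxplus$. The probabilistic-transition clause together with the distribution clause ($\mu(C_1,C)=\mu(C_2,C)$ on $R_{\varphi}$-classes, cf.\ Table~\ref{PDFAPPTC22}) and the clause $[\surd]_{R_{\varphi}}=\{\surd\}$ force the $\boxplus$-layers of $N$ and $N'$ to assign equal weights to bisimilar non-probabilistic residuals, which reduces the matching to the $n_i$-layer. There, each summand $s$ of a non-probabilistic residual $n$ of $N$ induces a pomset/step transition $\emptyset\xrightarrow[u]{X}C$, which must be matched by a summand $s'$ of the corresponding residual $n'$ of $N'$ via an isomorphic transition at a $\varphi$-consistent location with $\sim_{phhp}^{sl}$-equivalent residuals; the induction hypothesis upgrades ``equivalent residuals'' to ``provably equal residuals'', and idempotence and commutativity of $+$ close the matching in both directions. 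Downward-closedness of the hhp-bisimulation is exactly what makes this argument survive concurrency and the $\leftmerge$-blocks, just as in the hp-case; in fact, since every probabilistic static location hhp-bisimulation is in particular a probabilistic static location hp-bisimulation, one already has $p\sim_{phhp}^{sl}q\Rightarrow p\sim_{php}^{sl}q$, so the result also follows directly from the already-established completeness of $APPTC^{sl}$ modulo $\sim_{php}^{sl}$.

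I expect the main obstacle to be the bookkeeping around the $\leftmerge$-blocks and autoconcurrency inside the normal form: verifying that the normalisation terminates under a well-founded measure in spite of the $x\between y$ occurring on the right-hand side of $P7$ and $C4$, and checking that matching the multiset-labelled transitions $\xrightarrow[u]{X}$ under a consistent location association $\varphi$ interacts correctly with the downward-closure requirement. The probabilistic layer, the $+$-layer and the locality prefixes are then routine given the axioms in Tables~\ref{AxiomsForEncapsulation22}, \ref{AxiomsForBATC22} and the parallelism laws.
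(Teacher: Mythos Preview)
The paper does not actually prove this theorem: it is stated in the Backgrounds chapter (Section~2.4.3) as a result imported from the reference \cite{LOC2}, with no accompanying proof environment. The analogous theorems that \emph{are} proved in the body of the paper (the quantum versions in Chapters~\ref{qaptcl} and~\ref{qapptcl2}) all reduce immediately to the classical results by citing \cite{LOC1} or \cite{LOC2}; but the statement you are proving \emph{is} one of those classical background results, so there is nothing in the present paper to compare against.

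That said, your plan is the standard and correct route for such completeness theorems: elimination to basic terms, normalisation, then an inductive matching of bisimilar normal forms using the probabilistic clauses to align the $\boxplus$-layer and the action/location clauses to align the $+$-layer. The shortcut you note at the end is in fact the most efficient argument and renders the detailed normal-form analysis redundant for this particular case: since a probabilistic static location hhp-bisimulation is by definition a downward-closed probabilistic static location hp-bisimulation, one has $\sim_{phhp}^{sl}\subseteq\sim_{php}^{sl}$, and the already-stated hp-completeness theorem yields $p=q$ immediately. This is almost certainly how \cite{LOC2} disposes of the hhp case as well, given that the paper lists the four equivalences in lockstep throughout.
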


\subsubsection{Recursion}

In the following, $E,F,G$ are recursion specifications, $X,Y,Z$ are recursive variables.


\begin{definition}[Recursive specification]
A recursive specification is a finite set of recursive equations

$$X_1=t_1(X_1,\cdots,X_n)$$
$$\cdots$$
$$X_n=t_n(X_1,\cdots,X_n)$$

where the left-hand sides of $X_i$ are called recursion variables, and the right-hand sides $t_i(X_1,\cdots,X_n)$ are process terms in APTC with probabilistic static localities with possible
occurrences of the recursion variables $X_1,\cdots,X_n$.
\end{definition}

\begin{definition}[Solution]
Processes $p_1,\cdots,p_n$ are a solution for a recursive specification $\{X_i=t_i(X_1,\cdots,X_n)|i\in\{1,\cdots,n\}\}$ (with respect to probabilistic static location truly concurrent bisimulation equivalences
$\sim_{ps}^{sl}$($\sim_{pp}^{sl}$, $\sim_{php}^{sl}$, $\sim_{phhp}^{sl}$)) if $p_i\sim_{ps}^{sl} (\sim_{pp}^{sl}, \sim_{php}^{sl}, \sim_{phhp}^{sl})t_i(p_1,\cdots,p_n)$ for $i\in\{1,\cdots,n\}$.
\end{definition}

\begin{definition}[Guarded recursive specification]
A recursive specification

$$X_1=t_1(X_1,\cdots,X_n)$$
$$...$$
$$X_n=t_n(X_1,\cdots,X_n)$$

is guarded if the right-hand sides of its recursive equations can be adapted to the form by applications of the axioms in APTC with probabilistic static localities and replacing recursion variables
by the right-hand sides of their recursive equations,

$((u_{111}::a_{111}\leftmerge\cdots\leftmerge u_{11i_1}::a_{11i_1})\cdot s_1(X_1,\cdots,X_n)+\cdots+(u_{1k1}::a_{1k1}\leftmerge\cdots\leftmerge u_{1ki_k}::a_{1ki_k})\cdot s_k(X_1,\cdots,X_n)
+(v_{111}::b_{111}\leftmerge\cdots\leftmerge v_{11j_1}::b_{11j_1})+\cdots+(v_{11j_1}::b_{11j_1}\leftmerge\cdots\leftmerge v_{11j_l}::b_{1lj_l}))\boxplus_{\pi_1}\cdots\boxplus_{\pi_{m-1}}
((u_{m11}::a_{m11}\leftmerge\cdots\leftmerge u_{m1i_1}::a_{m1i_1})\cdot s_1(X_1,\cdots,X_n)+\cdots+(u_{mk1}::a_{mk1}\leftmerge\cdots\leftmerge u_{mki_k}::a_{mki_k})\cdot s_k(X_1,\cdots,X_n)
+(v_{m11}::b_{m11}\leftmerge\cdots\leftmerge v_{m1j_1}::b_{m1j_1})+\cdots+(v_{m1j_1}::b_{m1j_1}\leftmerge\cdots\leftmerge v_{m1j_l}::b_{mlj_l}))$

where $a_{111},\cdots,a_{11i_1},a_{1k1},\cdots,a_{1ki_k},b_{111},\cdots,b_{11j_1},b_{11j_1},\cdots,b_{1lj_l},\cdots, a_{m11},\cdots,a_{m1i_1},a_{mk1},\cdots,a_{mki_k},\\b_{m11},\cdots,
b_{m1j_1},b_{m1j_1},\cdots,b_{mlj_l}\in \mathbb{E}$, and the sum above is allowed to be empty, in which case it
represents the deadlock $\delta$.
\end{definition}

\begin{definition}[Linear recursive specification]\label{LRS22}
A recursive specification is linear if its recursive equations are of the form

$((u_{111}::a_{111}\leftmerge\cdots\leftmerge u_{11i_1}::a_{11i_1})X_1+\cdots+(u_{1k1}::a_{1k1}\leftmerge\cdots\leftmerge u_{1ki_k}::a_{1ki_k})X_k
+(v_{111}::b_{111}\leftmerge\cdots\leftmerge v_{11j_1}::b_{11j_1})+\cdots+(v_{11j_1}::b_{11j_1}\leftmerge\cdots\leftmerge v_{11j_l}::b_{1lj_l}))\boxplus_{\pi_1}\cdots\boxplus_{\pi_{m-1}}
((u_{m11}::a_{m11}\leftmerge\cdots\leftmerge u_{m1i_1}::a_{m1i_1})X_1+\cdots+(u_{mk1}::a_{mk1}\leftmerge\cdots\leftmerge u_{mki_k}::a_{mki_k})X_k
+(v_{m11}::b_{m11}\leftmerge\cdots\leftmerge v_{m1j_1}::b_{m1j_1})+\cdots+(v_{m1j_1}::b_{m1j_1}\leftmerge\cdots\leftmerge v_{m1j_l}::b_{mlj_l}))$

where $a_{111},\cdots,a_{11i_1},a_{1k1},\cdots,a_{1ki_k},b_{111},\cdots,b_{11j_1},b_{11j_1},\cdots,b_{1lj_l},\cdots,a_{m11},\cdots,a_{m1i_1},a_{mk1},\cdots,a_{mki_k},\\b_{m11},\cdots,
b_{m1j_1},b_{m1j_1},\cdots,b_{mlj_l}\in \mathbb{E}$, and the sum above is allowed to be empty, in which case it
represents the deadlock $\delta$.
\end{definition}

Firstly, we give the definition of PDFs in Table \ref{PDFGR22}.

\begin{center}
    \begin{table}
        $$\mu(\langle X|E\rangle,y)=\mu(\langle t_X|E\rangle,y)$$
        $$\mu(x,y)=0,\textrm{otherwise}$$
        \caption{PDF definitions of recursion}
        \label{PDFGR22}
    \end{table}
\end{center}

For a guarded recursive specifications $E$ with the form

$$X_1=t_1(X_1,\cdots,X_n)$$
$$\cdots$$
$$X_n=t_n(X_1,\cdots,X_n)$$

the behavior of the solution $\langle X_i|E\rangle$ for the recursion variable $X_i$ in $E$, where $i\in\{1,\cdots,n\}$, is exactly the behavior of their right-hand sides
$t_i(X_1,\cdots,X_n)$, which is captured by the two transition rules in Table \ref{TRForGR22}.

\begin{center}
    \begin{table}
        $$\frac{t_i(\langle X_1|E\rangle,\cdots,\langle X_n|E\rangle)\xrightarrow[u]{\{e_1,\cdots,e_k\}}\surd}{\langle X_i|E\rangle\xrightarrow[u]{\{e_1,\cdots,e_k\}}\surd}$$
        $$\frac{t_i(\langle X_1|E\rangle,\cdots,\langle X_n|E\rangle)\xrightarrow[u]{\{e_1,\cdots,e_k\}} y}{\langle X_i|E\rangle\xrightarrow[u]{\{e_1,\cdots,e_k\}} y}$$
        \caption{Transition rules of guarded recursion}
        \label{TRForGR22}
    \end{table}
\end{center}

\begin{theorem}[Conservitivity of APTC with probabilistic static localities and guarded recursion]
APTC with probabilistic static localities and guarded recursion is a conservative extension of APTC with probabilistic static localities.
\end{theorem}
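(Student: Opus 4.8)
The plan is to apply the general conservative-extension theorem (Theorem~\ref{TCE}) with $T_0$ the transition system specification of APTC with probabilistic static localities (the rules of Tables~\ref{SETRForBATC22}, \ref{TRForAPPTC122}, \ref{TRForAPTC22}, \ref{TRForAPTC222} and \ref{TRForEncapsulation22}, together with the PDF clauses of Tables~\ref{PDFBAPTC22} and \ref{PDFAPPTC22}), and $T_1$ the two new action-transition rules for guarded recursion of Table~\ref{TRForGR22} together with the PDF clause of Table~\ref{PDFGR22}. It then suffices to check the hypotheses of Theorem~\ref{TCE}: that $T_0$ and $T_0\oplus T_1$ are positive after reduction, that $T_0$ is source-dependent, and that each rule of $T_1$ has either a fresh source or a premise over $\Sigma_0$ of the prescribed shape.

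First I would verify that $T_0$ is source-dependent. For every rule in the BATC-fragment and in the parallelism, conflict-elimination, unless and encapsulation fragments, all variables in the source of the conclusion are source-dependent by definition, and for each positive premise $t\xrightarrow[u]{e}t'$, $t\xrightarrow[u]{e}\surd$ or $t\rightsquigarrow t'$ the variables occurring in $t$ already occur in the conclusion's source, so the variables of $t'$ become source-dependent; a routine inspection shows this propagates through every rule, so $T_0$ is source-dependent. For the ``positive after reduction'' condition, the only negative premises anywhere in $T_0$ are the predicates $y\nrightarrow^{e}$ in the rules for the unless operator $\triangleleft$; these are exactly the premises already present in APTC with probabilistic static localities, and the reduction argument that makes that TSS well defined applies unchanged. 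Since $T_1$ contributes no negative premises, $T_0\oplus T_1$ is positive after reduction as well.

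Next, for the two rules of $T_1$: in each case the source of the conclusion is $\langle X_i|E\rangle$, whose outermost symbol $\langle\,\cdot\,|E\rangle$ lies in $\Sigma_1\setminus\Sigma_0$ and is therefore fresh; the same holds for the PDF clause $\mu(\langle X|E\rangle,y)=\mu(\langle t_X|E\rangle,y)$, which on old terms leaves the PDF of $T_0$ untouched. Hence condition~(2) of Theorem~\ref{TCE} is met by the ``fresh source'' alternative, with no need to exhibit an old-signature premise. Applying Theorem~\ref{TCE} then yields that $T_0\oplus T_1$ is a conservative extension of $T_0$, which is the claim.

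I expect the only genuinely delicate point to be the interaction of the negative $\nrightarrow$-premises of $\triangleleft$ with the probabilistic $\rightsquigarrow$-layer when arguing ``positive after reduction'': one must confirm that adding the recursion rules — which are purely positive and whose conclusion source carries the new symbol $\langle\,\cdot\,|E\rangle$ — creates no new dependency cycles through negative premises, so that the stratification of $T_0$ extends to $T_0\oplus T_1$. Once that is observed, the remaining verifications are the standard, purely syntactic checks sketched above.
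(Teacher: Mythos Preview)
Your proposal is correct and follows essentially the same approach as the paper. The paper's proofs of the analogous conservativity theorems (e.g., for $qAPPTC$ with localities and guarded recursion) are far terser---simply asserting that the base system's rules are source-dependent and that the guarded-recursion rules contain only a fresh constant $\langle X_i|E\rangle$ in their source, then invoking Theorem~\ref{TCE}---whereas you additionally spell out the positivity-after-reduction check and the treatment of the PDF clauses, which the paper leaves implicit.
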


\begin{theorem}[Congruence theorem of APTC with probabilistic static localities and guarded recursion]
Probabilistic static location truly concurrent bisimulation equivalences $\sim_{pp}^{sl}$, $\sim_{ps}^{sl}$, $\sim_{php}^{sl}$ and $\sim_{phhp}^{sl}$ are all congruences with respect to APTC with probabilistic static localities and guarded
recursion.
\end{theorem}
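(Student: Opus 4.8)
The signature of ``APTC with probabilistic static localities and guarded recursion'' extends that of APTC with probabilistic static localities only by the closed constants $\langle X_i|E\rangle$, one for each guarded recursive specification $E$; these have no argument positions. Hence, by the already established congruence theorem for APTC with probabilistic static localities, the task reduces to re-verifying, in the \emph{enlarged} labelled transition system, that $\sim_{pp}^{sl}$, $\sim_{ps}^{sl}$, $\sim_{php}^{sl}$, $\sim_{phhp}^{sl}$ are still preserved by $+$, $\cdot$, $::$, $\leftmerge$, $\parallel$, $\mid$, $\Theta$, $\triangleleft$, $\between$, $\partial_H$, $\boxplus_\pi$, and additionally that, when $E$ and $F$ are guarded recursive specifications whose right-hand sides $t_i$ and $s_i$ are componentwise equivalent (treating the recursion variables as related placeholders), then $\langle X_i|E\rangle$ and $\langle X_i|F\rangle$ are equivalent (this second part is what later makes $RSP$ sound). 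The one structural fact I would isolate first is that, by Table~\ref{TRForGR22} and the clause $\mu(\langle X|E\rangle,y)=\mu(\langle t_X|E\rangle,y)$ of Table~\ref{PDFGR22}, $\langle X_i|E\rangle$ and $t_i(\langle X_1|E\rangle,\dots,\langle X_n|E\rangle)$ have literally the same outgoing probabilistic transitions $\rightsquigarrow$ and action transitions $\xrightarrow[u]{X}$; i.e. $RDP$ holds transition-by-transition.

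\textbf{The construction.} For $\sim_{ps}^{sl}$ and $\sim_{pp}^{sl}$ I would build the witnessing location-indexed relation $R_\varphi$ explicitly: take all pairs $(C[p_1,\dots],C[q_1,\dots])$ where $C[\ ]$ ranges over APTC-with-probabilistic-static-localities contexts and the holes are filled either by related closed terms or by recursion constants $(\langle X_i|E\rangle,\langle X_i|F\rangle)$ of componentwise-equivalent guarded specifications, together with $(\surd,\surd)$, and then check the four clauses of a probabilistic static location step bisimulation. A probabilistic transition is pushed through $C$ by the rules of Table~\ref{TRForAPPTC122}; at a recursion constant it unfolds to a resolution of $t_i$, and clause~(3), $\mu(C_1,C)=\mu(C_2,C)$, is obtained by pushing $\mu$ through the clauses of Table~\ref{PDFAPPTC22} and Table~\ref{PDFGR22} and invoking the hypothesis together with the operator congruences. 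An action/step transition $C_1\xrightarrow[u]{X}C_1'$ is matched by appealing to guardedness of $E$ (so that, after finitely many applications of the recursion rules, the step is emitted by a guard inside some $t_i$), then to the operator congruences to produce $C_2\xrightarrow[v]{X}C_2'$; the residuals are again a pair of the required shape, so $(C_1',C_2')\in R_{\varphi\cup\{(u,v)\}}$. Clause~(4), $[\surd]_{R_\varphi}=\{\surd\}$, is immediate. For $\sim_{php}^{sl}$ I would repeat this over the posetal product $\mathcal{C}(\mathcal{E}_1)\overline{\times}\mathcal{C}(\mathcal{E}_2)$, threading the isomorphism $f\mapsto f[e_1\mapsto e_2]$ along matched single-event transitions and checking that $f$ stays an order-isomorphism because the matches supplied by the operator congruences respect causality; for $\sim_{phhp}^{sl}$ I would additionally close $R_\varphi$ downward, using that the witnessing relations for the operators may themselves be taken downward closed. (Alternatively, one can note in passing that the whole rule set, including Table~\ref{TRForGR22}, lies in a well-behaved SOS format for which these equivalences are automatically congruences, the recursion rules fitting trivially since their source is a single constant and premise and conclusion carry the same label.)

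\textbf{Main obstacle.} The genuinely delicate part is not conceptual but the bookkeeping for the probabilistic layer across nested contexts and recursion constants: verifying $\mu(C_1,C)=\mu(C_2,C)$ for every class $C\in\mathcal{C}(\mathcal{E})/R_\varphi$ requires pushing $\mu$ simultaneously through $\between,\parallel,\leftmerge,\mid,\Theta,\triangleleft,\boxplus_\pi$ and through $\mu(\langle X|E\rangle,y)=\mu(\langle t_X|E\rangle,y)$, and this is well defined only because guardedness bounds the number of unfoldings before a visible step, keeping both the measure and the quotient $\mathcal{C}(\mathcal{E})/R_\varphi$ meaningful. Guardedness is exactly what rescues termination of the matching argument above as well. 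The one other subtle point, as always, is preserving downward closure of $R_\varphi$ under the residual-term construction in the $\sim_{phhp}^{sl}$ case.
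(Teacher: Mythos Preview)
Your proposal is correct and far more detailed than what the paper actually does. The paper's own proof (given for the quantum analogue in \S\ref{qcrec}, which is the only place an argument is supplied) is a two-line sketch: it merely records that (i) guarded right-hand sides can be unfolded into the canonical guarded form using the $qAPPTC^{sl}$ axioms, and (ii) the equivalences $\sim_{pp}^{sl},\sim_{ps}^{sl},\sim_{php}^{sl},\sim_{phhp}^{sl}$ are already congruences for all operators of the base calculus. No explicit relation is built, no measure bookkeeping is discussed, and no downward-closure argument is offered.

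Your write-up therefore subsumes the paper's argument: your opening reduction (nullary recursion constants plus operator congruence inherited from the base) \emph{is} the paper's whole proof. Everything after that---the explicit context-closure relation $R_\varphi$, the pushing of $\mu$ through Tables~\ref{PDFAPPTC22} and~\ref{PDFGR22}, the termination-by-guardedness argument, the posetal threading for hp/hhp---is additional rigor the paper omits. One small point: your discussion of matching $\langle X_i|E\rangle$ against $\langle X_i|F\rangle$ for componentwise-equivalent $E,F$ is not needed for \emph{this} congruence theorem (the recursion constants are nullary, so congruence imposes no condition on them); that argument belongs to the soundness of $RSP$, which the paper treats as a separate result.
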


The $RDP$ (Recursive Definition Principle) and the $RSP$ (Recursive Specification Principle) are shown in Table \ref{RDPRSP22}.

\begin{center}
\begin{table}
  \begin{tabular}{@{}ll@{}}
\hline No. &Axiom\\
  $RDP$ & $\langle X_i|E\rangle = t_i(\langle X_1|E\rangle,\cdots,\langle X_n|E\rangle)\quad (i\in\{1,\cdots,n\})$\\
  $RSP$ & if $y_i=t_i(y_1,\cdots,y_n)$ for $i\in\{1,\cdots,n\}$, then $y_i=\langle X_i|E\rangle \quad(i\in\{1,\cdots,n\})$\\
\end{tabular}
\caption{Recursive definition and specification principle}
\label{RDPRSP22}
\end{table}
\end{center}

\begin{theorem}[Elimination theorem of APTC with probabilistic static localities and linear recursion]\label{ETRecursion22}
Each process term in APTC with probabilistic static localities and linear recursion is equal to a process term $\langle X_1|E\rangle$ with $E$ a linear recursive specification.
\end{theorem}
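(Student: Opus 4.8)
\emph{Proof strategy.}
The plan is to prove, by induction on the structure of a term $t$ of APTC with probabilistic static localities and linear recursion, that there are a linear recursive specification $E$ (in the sense of Definition~\ref{LRS22}) and one of its recursion variables, which we may take to be the first, such that $t=\langle X_1|E\rangle$ is derivable. Two conventions ease the bookkeeping. First, whenever the induction hypothesis presents subterms already written as $\langle X_1|E'\rangle$ and $\langle Y_1|E''\rangle$, I rename recursion variables so that $E'$ and $E''$ have disjoint variable sets (renaming carries a linear specification to a linear specification), and I write $E'=\{X_i=t_i'\}_{i=1}^n$, $E''=\{Y_j=t_j''\}_{j=1}^m$, with each right-hand side a $\boxplus$-combination of finite sums of summands of the two linear shapes $(\vec c\,)X_l$ and $(\vec c\,)$, where $(\vec c\,)$ abbreviates a bundle $w_1{::}c_1\leftmerge\cdots\leftmerge w_k{::}c_k$ of located atomic actions. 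Second, in every inductive step the idea is to assemble a single linear specification $G$ extending (disjoint copies of) $E'$ and $E''$, and then to conclude by $RSP$, after checking with $RDP$ and the relevant axioms that the intended tuple of processes solves $G$.

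The base cases are immediate: if $t=e$ with $e\in\mathbb{E}\cup\{\delta\}$, or $t=u::e$, take $E=\{X_1=t\}$, whose right-hand side is a single linear summand, so $t=\langle X_1|E\rangle$ by $RDP$; and a recursion constant $\langle X_i|E\rangle$ with $E$ already linear is of the required form after permuting the equations of $E$ so that $X_i$ comes first.

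The operators $+$, $\boxplus_{\rho}$, $\cdot$, $u::(\cdot)$ and $\partial_H$ are handled by the standard constructions. For $s_1+s_2$ put $G=E'\cup E''\cup\{Z_1=t_1'+t_1''\}$, after rewriting $t_1'+t_1''$ into linear shape by distributing $+$ over $\cdot$ and over $\boxplus$ with $A4$ and $PA5$. For $s_1\boxplus_{\rho}s_2$ put $G=E'\cup E''\cup\{Z_1=t_1'\boxplus_{\rho}t_1''\}$, normalising nested probabilistic choices with $PA1$--$PA3$. For $s_1\cdot s_2$ let $G$ consist of $E''$ together with, for each $i$, the equation obtained from $t_i'$ by replacing every summand $(\vec c\,)X_l$ with $(\vec c\,)Z_l$ and every terminating summand $(\vec c\,)$ with $(\vec c\,)Y_1$, and check with $A4$, $A5$, $PA4$ that $\bigl(\langle X_i|E'\rangle\cdot s_2\bigr)_i$ together with $\bigl(\langle Y_j|E''\rangle\bigr)_j$ solves $G$. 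For $u::s_1$ let $F$ arise from $E'$ by prefixing every equation with $u::$ and distributing it inward, down to the level of located atomic actions, using $L2$--$L4$, $L11$ and the evident locality laws for $\leftmerge$ and $\boxplus_{\pi}$, renaming each occurrence $u::X_l$ to a fresh variable; for $\partial_H(s_1)$ proceed similarly with $D1$--$D6$ and $PD1$. In each case $G$ (or $F$) is linear by inspection and $RSP$ delivers the identity.

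The real work, and the expected main obstacle, is the treatment of the parallel-type operators and of $\Theta$, $\triangleleft$. Using $P1$ and $P4$ one first reduces $\between$ and $\parallel$ to $\leftmerge$ and $\mid$, so it remains to handle $s_1\leftmerge s_2$, $s_1\mid s_2$, $\Theta(s_1)$ and $s_1\triangleleft s_2$. For these one introduces a \emph{finite} family of fresh recursion variables --- indexed by the pairs $(X_i,Y_j)$ for $\leftmerge$ and $\mid$, and by single indices together with the pair indices already in play for $\Theta$ and $\triangleleft$ --- and derives their defining equations by expanding $s_1$ (and, where needed, $s_2$) one step with $RDP$ and then applying the parallelism, conflict-elimination and unless axioms ($P5$--$P9$, $C1$--$C8$, $CE1$--$CE6$, $U1$--$U13$ and their probabilistic counterparts, with the locality laws) to push the operator through the sums and probabilistic choices and past the leading action bundles, using $P7$ and $C4$ to absorb the residual full merges $x\between y$ back into the same finite family of variables. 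Three points must be verified: (i) only finitely many fresh variables are ever generated, so $G$ is finite; (ii) every newly produced equation is again a $\boxplus$-combination of sums of the two linear summand shapes --- here the side conditions $e_1\le e_2$ in $P5$--$P7$ and the shape of $C2$--$C4$ are essential, since they keep the head of each summand a single bundle of located atomic actions rather than a nested parallel term; and (iii) the intended tuple of parallel (and $\Theta$-, $\triangleleft$-) compositions solves $G$, so that $RSP$ applies. Getting (ii) exactly right in the simultaneous presence of $\boxplus_{\pi}$ (note in particular the asymmetric probabilistic behaviour of $\leftmerge$ in its transition rules) and of the locality prefixes, while keeping the generated variable set finite, is the delicate part; the remainder is a routine structural induction, with the closed finite fragment already subsumed by the elimination theorem for APTC with probabilistic static localities (Theorem~\ref{ETEncapsulation22}).
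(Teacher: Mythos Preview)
Your proposal is the standard structural-induction argument for elimination to a linear recursive specification, and the outline is essentially correct: base cases via $RDP$, compound operators by building a larger linear specification over fresh variables and invoking $RSP$, with the parallel-type operators handled by introducing variables indexed by pairs (or tuples) of the original variables and using $P1$, $P4$, $P5$--$P9$, $C1$--$C8$, $PM1$--$PM4$ to keep each right-hand side in linear shape. The three verification points you isolate --- finiteness of the generated variable set, preservation of the linear summand shape, and solvability of the resulting specification by the intended tuple --- are exactly the ones that need checking, and none of them fails here.

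The paper, however, takes a different route: this theorem sits in the Backgrounds chapter and is simply stated without proof, as a result imported from \cite{LOC2}; the analogous elimination theorems that do appear in the body of the paper (for the quantum extensions $qAPTC^{sl}$ and $qAPPTC^{sl}$) are ``proved'' by the one-line remark that the argument is the same as for $APTC^{sl}$ resp.\ $APPTC^{sl}$ and a reference to \cite{LOC1}, \cite{LOC2}. So your write-up is considerably more detailed than anything the paper offers; what it buys is a self-contained argument, whereas the paper relies entirely on the cited prior work. If you want to match the paper's level, a single sentence deferring to \cite{LOC2} suffices; if you want a genuine proof, your sketch is the right skeleton, and the only places that would need to be fleshed out further are the explicit form of the new equations in the $\leftmerge$/$\mid$/$\between$ case (you will in fact need variables for $X_i\between Y_j$, $X_i\leftmerge Y_j$, $Y_j\leftmerge X_i$ and $X_i\mid Y_j$ simultaneously, not just pairs for a single operator) and the verification that the probabilistic distribution laws $PM1$--$PM4$ and $PA4$--$PA5$ suffice to keep everything in the $\boxplus$-of-sums normal form.
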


\begin{theorem}[Soundness of APTC with probabilistic static localities and guarded recursion]\label{SAPTCR22}
Let $x$ and $y$ be APTC with probabilistic static localities and guarded recursion terms. If $APTC\textrm{ with guarded recursion}\vdash x=y$, then
\begin{enumerate}
  \item $x\sim_{ps}^{sl} y$;
  \item $x\sim_{pp}^{sl} y$;
  \item $x\sim_{php}^{sl} y$;
  \item $x\sim_{phhp}^{sl} y$.
\end{enumerate}
\end{theorem}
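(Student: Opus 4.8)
The plan is to reduce the statement to soundness of the individual axioms plus the two recursion principles $RDP$ and $RSP$, and then to treat $RDP$ and $RSP$ by constructing explicit bisimulations.

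First, by the congruence theorem for APTC with probabilistic static localities and guarded recursion (stated immediately above), each of $\sim_{ps}^{sl}$, $\sim_{pp}^{sl}$, $\sim_{php}^{sl}$, $\sim_{phhp}^{sl}$ is a congruence with respect to every operator of the signature, including the recursion constants $\langle X_i|E\rangle$. Hence it suffices to check, for each axiom $s=t$ of the theory, that every closed instance satisfies $s\sim_x^{sl}t$ for the relevant $x$, and then to induct on the length of the equational derivation of $x=y$. For the axioms $A1$--$A7$, $L1$--$L16$, $PA1$--$PA5$, $P1$--$P9$, $C1$--$C8$, $CE1$--$CE6$, $U1$--$U13$, $D1$--$D6$ and $PD1$ this has already been established in the earlier soundness theorems for APTC with probabilistic static localities, with and without $\partial_H$ (Theorems \ref{SPPBE22}, \ref{SPSBE22}, \ref{SPHPBE22}, \ref{SAPTCPBE22} and their companions), so only $RDP$ and $RSP$ remain.

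For $RDP$, i.e.\ $\langle X_i|E\rangle = t_i(\langle X_1|E\rangle,\cdots,\langle X_n|E\rangle)$, the transition rules of guarded recursion in Table \ref{TRForGR22} state precisely that $\langle X_i|E\rangle$ and $t_i(\langle X_1|E\rangle,\cdots,\langle X_n|E\rangle)$ have the same action, pomset and step transitions and the same termination behaviour, while the clause $\mu(\langle X|E\rangle,y)=\mu(\langle t_X|E\rangle,y)$ of Table \ref{PDFGR22} makes their probabilistic behaviours coincide. Therefore the relation that is the identity on configurations together with all pairs $(\langle X_i|E\rangle, t_i(\langle X_1|E\rangle,\cdots,\langle X_n|E\rangle))$, taken with $\varphi=\emptyset$ and extended by $(u,v)$ along matching transitions, is a probabilistic static location pomset (resp.\ step, hp) bisimulation; being essentially the identity on reachable configurations, it is also downward closed, which gives the hhp case. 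For $RSP$ we must show that a guarded recursive specification $E$ has at most one solution up to each of the four equivalences: if $p_1,\cdots,p_n$ satisfy $p_i\sim_x^{sl}t_i(p_1,\cdots,p_n)$, then $p_i\sim_x^{sl}\langle X_i|E\rangle$. Using guardedness, unfold each $t_i$ finitely often and rewrite it, via the axioms of APTC with probabilistic static localities, into the probabilistic guarded form, namely a $\boxplus_\pi$-combination of sums of terms $(u_1::a_1\leftmerge\cdots\leftmerge u_k::a_k)\cdot s(X_1,\cdots,X_n)$ and $(v_1::b_1\leftmerge\cdots\leftmerge v_k::b_k)$; then every probabilistic transition followed by an action or step transition of $\langle X_i|E\rangle$ is matched by the same $\boxplus_\pi$-branch and the same prefix performed at the same locations, and symmetrically for $p_i$ by the hypothesis $p_i\sim_x^{sl}t_i(p_1,\cdots,p_n)$. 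Define $R$ to relate every term $s(\langle X_1|E\rangle,\cdots,\langle X_n|E\rangle)$ reachable from $\langle X_i|E\rangle$ to the corresponding term $s(p_1,\cdots,p_n)$, closed under the location associations generated along the play. Checking the bisimulation clauses, the action/step and termination clauses follow from the guarded-form matching, and the clause $\mu(C_1,C)=\mu(C_2,C)$ follows from the PDF definitions of the operators and of recursion together with the fact that $\sim_x^{sl}$ already equates the relevant measures for $p_i$ and $t_i(p_1,\cdots,p_n)$. Combining $p_i\sim_x^{sl}\langle X_i|E\rangle$ with $RDP$ then yields soundness of $RSP$, and assembling the cases completes the four items of the theorem.

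The main obstacle is the $RSP$ argument. The delicate points are: (i) using guardedness correctly so that after every probabilistic step the first action transition cannot yet expose a bare recursion variable, which is what makes the relation $R$ well defined and guarantees that each step of the simulation consumes at least one real action; (ii) propagating the consistent location association $\varphi$ and the location labels $u,v$ through the whole construction so that matched successors remain in $R_{\varphi\cup\{(u,v)\}}$; and (iii) verifying the probability-matching clause $\mu(C_1,C)=\mu(C_2,C)$ and, in the hhp variant, the downward-closure of $R$, which as usual is the most fragile requirement and may need $R$ to be closed under taking sub-pairs of already related configuration triples.
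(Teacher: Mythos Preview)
Your approach is correct and matches the paper's own strategy. In this paper the theorem labelled \texttt{SAPTCR22} sits in the Backgrounds chapter and is stated without proof; the paper's proofs of the analogous soundness-with-guarded-recursion theorems (for $qAPTC^{sl}$ and $qAPPTC^{sl}$) consist solely of the reduction you start with: since each $\sim_{x}^{sl}$ is an equivalence and a congruence, it suffices to check soundness of the axioms in the $RDP$/$RSP$ table, and the paper then explicitly ``leaves them as exercises to the readers.''

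So the only difference is one of depth, not of route: you carry out what the paper omits. Your treatment of $RDP$ via the transition rules of guarded recursion together with the PDF clause $\mu(\langle X|E\rangle,y)=\mu(\langle t_X|E\rangle,y)$ is exactly the intended argument, and your $RSP$ sketch---unfolding guarded right-hand sides into the $\boxplus_\pi$-normal form and building a bisimulation that pairs $s(\langle X_1|E\rangle,\ldots,\langle X_n|E\rangle)$ with $s(p_1,\ldots,p_n)$---is the standard uniqueness-of-solutions argument specialised to the probabilistic, located setting. Your list of obstacles (guardedness ensuring a real action is consumed, propagation of the cla $\varphi$, the measure clause, and downward closure for hhp) identifies precisely the points that need care; none of them is addressed in the paper, so your proposal is strictly more detailed than, but fully consistent with, what the paper presents.
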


\begin{theorem}[Completeness of APTC with probabilistic static localities and linear recursion]\label{CAPTCR22}
Let $p$ and $q$ be closed APTC with probabilistic static localities and linear recursion terms, then,
\begin{enumerate}
  \item if $p\sim_{ps}^{sl} q$ then $p=q$;
  \item if $p\sim_{pp}^{sl} q$ then $p=q$;
  \item if $p\sim_{php}^{sl} q$ then $p=q$;
  \item if $p\sim_{phhp}^{sl} q$ then $p=q$.
\end{enumerate}
\end{theorem}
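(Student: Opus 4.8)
The plan is to reduce completeness of the infinite fragment to completeness of the recursion-free fragment, which is already available (Theorems~\ref{CAPTCPBE22}, \ref{CAPTCSBE22}, \ref{CAPTCHPBE22}), together with the recursive specification principle $RSP$. First I would normalize: by the elimination theorem for APTC with probabilistic static localities and linear recursion (Theorem~\ref{ETRecursion22}), there are linear recursive specifications $E_1,E_2$ with $p=\langle X_1|E_1\rangle$ and $q=\langle Y_1|E_2\rangle$ provable, so by soundness (Theorem~\ref{SAPTCR22}) and the congruence theorem $\langle X_1|E_1\rangle\sim_{ps}^{sl}\langle Y_1|E_2\rangle$, and it suffices to prove $\langle X_1|E_1\rangle=\langle Y_1|E_2\rangle$; the same reduction works verbatim for $\sim_{pp}^{sl}$, $\sim_{php}^{sl}$ and $\sim_{phhp}^{sl}$. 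We may further assume each $E_i$ is in the canonical linear shape of Definition~\ref{LRS22}, that every recursion variable is reachable from the root, and (by $A3$) that no classical summand occurs twice inside one $\boxplus$-block.

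The core is the standard ``product of linear specifications'', adapted to the probabilistic setting. Let $R_\varphi$ be a witnessing probabilistic static location step bisimulation relating the two roots. I would build a linear recursive specification $E$ whose recursion variables are the pairs $Z_{ij}$ with $\langle X_i|E_1\rangle\sim_{ps}^{sl}\langle Y_j|E_2\rangle$. The equation for $Z_{ij}$ is obtained by intersecting the equations of $X_i$ and $Y_j$: the $\boxplus_\pi$-blocks of the two equations are paired off---possible because every probabilistic transition $\rightsquigarrow$ of one side is matched by the other and the clause $\mu(C_1,C)=\mu(C_2,C)$ on $R_\varphi$-classes forces equal branching probabilities into bisimilar blocks, rearranging sums by $PA1$--$PA5$ as needed---and inside each matched block a classical summand $(u_1::a_1\leftmerge\cdots\leftmerge u_k::a_k)X_{i'}$ is retained exactly when the partner block of $Y_j$ carries a summand with the same step-labelled, location-compatible move to a target $Y_{j'}$ with $\langle X_{i'}|E_1\rangle\sim_{ps}^{sl}\langle Y_{j'}|E_2\rangle$, in which case it is rewritten to $(u_1::a_1\leftmerge\cdots\leftmerge u_k::a_k)Z_{i'j'}$; terminating summands are kept when they occur on both sides. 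Linearity of $E_1,E_2$ and the bisimulation conditions ensure $E$ is again linear, and guardedness is inherited.

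Next I would verify that $Z_{ij}\mapsto\langle X_i|E_1\rangle$ is a solution of $E$ modulo $\sim_{ps}^{sl}$: expanding $\langle X_i|E_1\rangle$ by $RDP$, the only discrepancy with the pruned $Z_{ij}$-equation lies in the classical summands of $X_i$ that were dropped, and each such summand is absorbed because its target is bisimilar to a $Y_{j'}$-behaviour and is hence already covered by a retained summand---the finite-fragment axioms $A1$--$A7$, the $PA$-laws, the $P$/$C$ laws and the locality laws $L1$--$L16$ do this bookkeeping. Symmetrically $Z_{ij}\mapsto\langle Y_j|E_2\rangle$ is a solution. Since $E$ is guarded linear, $RSP$ yields uniqueness of the solution, hence $\langle X_1|E_1\rangle=\langle Z_{11}|E\rangle=\langle Y_1|E_2\rangle$, i.e. $p=q$. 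Items (2)--(4) are identical except that one starts from a pomset bisimulation, or from an hp-/hhp-bisimulation that additionally carries the order-isomorphism $f$ along when matching summands (this only refines which summands are identified, not the shape of $E$), and invokes the corresponding finite-fragment completeness theorem.

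The step I expect to be the main obstacle is the probabilistic matching in the product construction: showing that the $\boxplus_\pi$-blocks of bisimilar $\langle X_i|E_1\rangle$ and $\langle Y_j|E_2\rangle$ can genuinely be paired off with identical probabilities---where the clause $\mu(C_1,C)=\mu(C_2,C)$ and the rearrangement axioms $PA1$--$PA5$ are essential---and that pruning classical summands preserves the solution property without destroying linearity or guardedness of $E$. The locality bookkeeping (the relation $\varphi$, the $u\diamond v$ side conditions, and $L1$--$L16$) rides along essentially as in the non-probabilistic APTC-with-localities completeness proof and is comparatively routine.
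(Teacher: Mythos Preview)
The paper does not actually prove this theorem: it appears in the Backgrounds chapter (Section~2.4.4) as a result imported from \cite{LOC2} and is stated without any proof. Your proposal is the standard ``product of linear specifications'' argument used throughout the ACP tradition for completeness with linear recursion, here adapted to carry the $\boxplus_\pi$ blocks and the location annotations, and it is almost certainly what the cited reference does; in that sense your approach is correct and matches the intended proof.

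One point of phrasing is worth tightening. To invoke $RSP$ you must show that $Z_{ij}\mapsto\langle X_i|E_1\rangle$ satisfies the equations of $E$ \emph{provably} (as derivable equalities $=$), not merely ``modulo $\sim_{ps}^{sl}$'' as you write. Your absorption argument does deliver this---after substituting $\langle X_{i'}|E_1\rangle$ for $Z_{i'j'}$, every summand of the $Z_{ij}$-equation is literally a summand of the $RDP$-expansion of $\langle X_i|E_1\rangle$, and conversely each summand $(u_1::a_1\leftmerge\cdots\leftmerge u_k::a_k)\langle X_{i'}|E_1\rangle$ of that expansion reappears because the bisimulation guarantees at least one matching $Y_{j'}$; then $A1$--$A3$ and $PA1$--$PA5$ collapse the duplicates---but the claim should be stated at the level of $=$. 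Also, the finite-fragment completeness theorems you cite at the start are not really what drives the argument; what you use is the finite-fragment \emph{axioms} to rearrange and absorb summands, together with $RDP$/$RSP$. The role you assign to the $\mu(C_1,C)=\mu(C_2,C)$ clause in pairing off $\boxplus_\pi$-blocks with equal weights is exactly right and is the one genuinely new ingredient over the non-probabilistic case.
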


\subsubsection{Abstraction}

In the following, let the atomic event $e$ range over $\mathbb{E}\cup\{\delta\}\cup\{\tau\}$, and let the communication function
$\gamma:\mathbb{E}\cup\{\tau\}\times \mathbb{E}\cup\{\tau\}\rightarrow \mathbb{E}\cup\{\delta\}$, with each communication involved $\tau$ resulting into $\delta$.

\begin{center}
    \begin{table}
        $$\frac{}{\tau\rightsquigarrow\breve{\tau}}$$
        $$\frac{}{\tau\xrightarrow{\tau}\surd}$$
        \caption{Transition rule of the silent step}
        \label{TRForTau22}
    \end{table}
\end{center}


The silent step $\tau$ as an atomic event, is introduced into $E$. Considering the recursive specification $X=\tau X$, $\tau s$, $\tau\tau s$, and $\tau\cdots s$ are all its solutions,
that is, the solutions make the existence of $\tau$-loops which cause unfairness. To prevent $\tau$-loops, we extend the definition of linear recursive specification
(Definition \ref{LRS22}) to the guarded one.

\begin{definition}[Guarded linear recursive specification]\label{GLRS22}
A recursive specification is linear if its recursive equations are of the form

$((u_{111}::a_{111}\leftmerge\cdots\leftmerge u_{11i_1}::a_{11i_1})X_1+\cdots+(u_{1k1}::a_{1k1}\leftmerge\cdots\leftmerge u_{1ki_k}::a_{1ki_k})X_k
+(v_{111}::b_{111}\leftmerge\cdots\leftmerge v_{11j_1}::b_{11j_1})+\cdots+(v_{11j_1}::b_{11j_1}\leftmerge\cdots\leftmerge v_{11j_l}::b_{1lj_l}))\boxplus_{\pi_1}\cdots\boxplus_{\pi_{m-1}}
((u_{m11}::a_{m11}\leftmerge\cdots\leftmerge u_{m1i_1}::a_{m1i_1})X_1+\cdots+(u_{mk1}::a_{mk1}\leftmerge\cdots\leftmerge u_{mki_k}::a_{mki_k})X_k
+(v_{m11}::b_{m11}\leftmerge\cdots\leftmerge v_{m1j_1}::b_{m1j_1})+\cdots+(v_{m1j_1}::b_{m1j_1}\leftmerge\cdots\leftmerge v_{m1j_l}::b_{mlj_l}))$

where $a_{111},\cdots,a_{11i_1},a_{1k1},\cdots,a_{1ki_k},b_{111},\cdots,b_{11j_1},b_{11j_1},\cdots,b_{1lj_l}\cdots\\
a_{m11},\cdots,a_{m1i_1},a_{mk1},\cdots,a_{mki_k},b_{m11},\cdots,b_{m1j_1},b_{m1j_1},\cdots,b_{mlj_l}\in \mathbb{E}\cup\{\tau\}$,
and the sum above is allowed to be empty, in which case it represents the deadlock $\delta$.

A linear recursive specification $E$ is guarded if there does not exist an infinite sequence of $\tau$-transitions
$\langle X|E\rangle\xrightarrow{\tau}\langle X'|E\rangle\xrightarrow{\tau}\langle X''|E\rangle\xrightarrow{\tau}\cdots$.
\end{definition}

\begin{theorem}[Conservitivity of APTC with probabilistic static localities and silent step and guarded linear recursion]
APTC with probabilistic static localities and silent step and guarded linear recursion is a conservative extension of APTC with probabilistic static localities and linear recursion.
\end{theorem}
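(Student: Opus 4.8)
The plan is to apply the conservative-extension meta-theorem, Theorem~\ref{TCE}. Take $T_0$ to be the TSS of APTC with probabilistic static localities and linear recursion, over its signature $\Sigma_0$, and take $T_1$ to consist of the genuinely new rules: the two transition rules for the silent step in Table~\ref{TRForTau22} (namely $\tau\rightsquigarrow\breve{\tau}$ and $\tau\xrightarrow{\tau}\surd$, together with any rule having $\breve{\tau}$ as source), plus the recursion rules of Table~\ref{TRForGR22} instantiated at guarded linear recursive specifications $E$ that actually mention $\tau$. This is the right split because every $\tau$-free guarded linear recursive specification is already a linear recursive specification in the sense of Definition~\ref{LRS22}, so its constant $\langle X_i|E\rangle$ and associated rules already live in $T_0$; only the $\tau$-containing specifications give new constants and new rules. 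Thus $\Sigma_1\setminus\Sigma_0$ consists of $\tau$, $\breve{\tau}$, and the constants $\langle X_i|E\rangle$ for $\tau$-containing guarded linear recursive specifications, and $\tau$ is the only fresh transition label/predicate.

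First I would check the two global hypotheses of Theorem~\ref{TCE}. Both $T_0$ and $T_0\oplus T_1$ are positive after reduction: the only negative premises occur in the rules for the unless operator $\triangleleft$ (premises of the form $y\nrightarrow^{e}$), and these are stratified, for instance by the number of symbols in the term, so the reduced TSSs are positive. Then $T_0$ is source-dependent: this is a routine induction over the finitely many rule schemes of BATC, the parallel, left-merge and communication operators, $\Theta$, $\triangleleft$, $\partial_H$, the probabilistic rules $\rightsquigarrow$, and the recursion rules for $\tau$-free $E$. In each rule every variable occurring in the source is trivially source-dependent, and the target of every premise introduces no variable not already made source-dependent by the source of that premise; the recursion rules have no variables in their sources at all, their premises and conclusions only involving the closed term $t_i(\langle X_1|E\rangle,\dots,\langle X_n|E\rangle)$ and the metavariable $y$, which is source-dependent via the premise.

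Next I would verify condition~(2) of Theorem~\ref{TCE} for every rule of $T_1$. For the $\tau$-rules the source is the constant $\tau\in\Sigma_1\setminus\Sigma_0$ (and for a possible $\breve{\tau}$-rule, the source $\breve{\tau}\in\Sigma_1\setminus\Sigma_0$), hence fresh, so the first alternative of~(2) holds. For a recursion rule instantiated at a $\tau$-containing guarded linear recursive specification $E$, the source is the constant $\langle X_i|E\rangle\in\Sigma_1\setminus\Sigma_0$, which is therefore fresh, so again the first alternative of~(2) applies. With source-dependency of $T_0$ and condition~(2) for $T_1$ both established, Theorem~\ref{TCE} yields that $T_0\oplus T_1$ is a conservative extension of $T_0$, which is exactly the claim.

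The main obstacle is bookkeeping rather than a conceptual difficulty: one must pin down precisely which recursion constants count as fresh — the key observation being that $\tau$-free specifications reproduce verbatim the rules already in $T_0$, so only $\tau$-mentioning specifications contribute to $\Sigma_1\setminus\Sigma_0$ — and one must confirm that the negative premises of $\triangleleft$ do not spoil the ``positive after reduction'' requirement. The remaining work, checking source-dependency of $T_0$, is a mechanical case analysis over finitely many rule schemes.
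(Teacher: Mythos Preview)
Your proposal is correct and follows the same route as the paper: invoke the conservative-extension meta-theorem (Theorem~\ref{TCE}) by checking source-dependency of the base TSS and freshness of the sources of the new rules. The paper's own argument (given explicitly only for the analogous quantum variants, e.g.\ the $qAPPTC$ case) is a one-liner noting that the base rules are source-dependent and that the silent-step rules have the fresh constant $\tau$ in their source; your version is more careful in two respects the paper glosses over, namely the ``positive after reduction'' hypothesis in the presence of the negative premises for $\triangleleft$, and the fact that recursion constants $\langle X_i|E\rangle$ for $\tau$-mentioning guarded linear specifications are themselves fresh and must be accounted for in $T_1$.
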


\begin{theorem}[Congruence theorem of APTC with probabilistic static localities and silent step and guarded linear recursion]
Rooted branching probabilistic static location truly concurrent bisimulation equivalences $\approx_{prbp}^{sl}$, $\approx_{prbs}^{sl}$ and $\approx_{prbhp}^{sl}$ are all congruences with respect to
APTC with probabilistic static localities and silent step and guarded linear recursion.
\end{theorem}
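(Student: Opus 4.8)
The plan is to prove the statement by the standard two-stage argument: a structural induction over the operators of $APTC$ with probabilistic static localities and silent step, followed by a separate treatment of guarded linear recursion. For a generic operator $f$ one assumes the arguments $x_i$ are rooted branching probabilistic static location pomset bisimilar to $y_i$, witnessed by bisimulations $R_\varphi^{i}$ with $(\emptyset,\emptyset)\in R_\varphi^{i}$, and exhibits a rooted branching probabilistic static location pomset bisimulation relating $f(\vec x)$ to $f(\vec y)$; the natural candidate puts $(f(\vec x),f(\vec y))$ at the root and uses non-rooted branching probabilistic static location bisimilarity $\approx_{pbp}^{sl}$ on all states reached after the first action step. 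Replacing pomset transitions by steps, or decorating configurations with the posetal bijection, gives the arguments for $\approx_{prbs}^{sl}$ and $\approx_{prbhp}^{sl}$ verbatim. I note that $\approx_{prbhhp}^{sl}$ is deliberately omitted: hereditary history-preserving bisimilarity fails to be a congruence for parallel composition because downward closure is not preserved, just as in the classical setting.

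For the root layer I would read the matching obligation off the transition rules directly, operator by operator: Table \ref{TRForAPPTC122} for the initial probabilistic $\rightsquigarrow$-move and Tables \ref{TRForAPTC22}, \ref{TRForAPTC222}, together with Table \ref{TRForEncapsulation22} for $\partial_H$ and Table \ref{TRForTau22} for $\tau$, for the following action move. A move of $f(\vec x)$ decomposes through the premises of these rules into moves of the $x_i$; the root clauses of the $R_\varphi^{i}$ furnish matching moves of the $y_i$, the same rules recombine them into a move of $f(\vec y)$, and the residuals are $\approx_{pbp}^{sl}$-related because they are the respective operators applied to $\approx_{pbp}^{sl}$-related arguments --- which lands us in $\approx_{pbp}^{sl}$ provided non-rooted branching bisimilarity is preserved by exactly the contexts that can occur as residuals (those contexts never carry a $+$ at the head, so the one place where congruence genuinely fails is avoided; this sub-fact is itself a routine direct bisimulation argument). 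The remaining clauses are bookkeeping: the probabilistic transfer condition $\mu(C_1,C)=\mu(C_2,C)$ holds because the PDF definitions in Tables \ref{PDFBAPTC22}, \ref{PDFAPPTC22} are compositional so $\mu$ distributes over every operator; $[\surd]_{R_\varphi}=\{\surd\}$ holds since only terminating transitions reach $\surd$; and the location association together with the consistency constraint is preserved because the rules combine locations only via $\diamond$ (for $\parallel$, $\leftmerge$, $\mid$) or via $\ll$ (for $u::\cdot$), both compatible with cla.

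For guarded linear recursion, I would show that if $E$ and $F$ are guarded linear recursive specifications whose right-hand sides $t_i$ and $s_i$ are pairwise rooted branching probabilistic static location bisimilar under the identification $X_i\leftrightarrow Y_i$, then $\langle X_i\mid E\rangle\approx_{prbp}^{sl}\langle Y_i\mid F\rangle$. By Table \ref{TRForGR22} and the PDF equation of Table \ref{PDFGR22}, $\langle X_i\mid E\rangle$ has exactly the transitions of $t_i(\langle X_1\mid E\rangle,\ldots)$; linearity makes each $t_i$ a $\boxplus_\pi$-combination of finitely many summands of the form (left-merge of) located actions optionally prefixing a recursion variable, so the analysis of a first move is finite and each first move is a genuine action move, making the root clause apply cleanly. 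One then takes the relation generated by $\{(\langle X_i\mid E\rangle,\langle Y_i\mid F\rangle)\}$, verifies the root clause at depth zero and the branching clauses afterwards, and uses $RSP$ with the already-proven soundness of the theory to identify the two solutions; guardedness of $E$ and $F$ excludes the $\tau$-loop solutions and keeps the construction well-founded.

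The hard part will be the recursion case together with the rooted-branching silent-step bookkeeping: one must check that, after the root action move, the interleaved sequences of probabilistic $\rightsquigarrow$-steps and $\tau$-steps required by the branching-bisimulation clauses can always be reproduced on the matching side, and that a summand whose leading located action is $\tau$ is matched literally rather than by the empty move (the root condition forbids the latter), which is consistent with the transition rules but must be verified. A secondary subtlety is the parallel/left-merge/communication family, where a first move of $f(x_1,x_2)$ may be a concurrent pair $\{e_1,e_2\}$ at independent locations $u\diamond v$ or a synchronization $\gamma(e_1,e_2)$ involving both arguments at once; matching such a move requires invoking the root clauses of $R_\varphi^{1}$ and $R_\varphi^{2}$ simultaneously and checking that the combined location association $\varphi\cup\{(u,v)\}$ remains consistent, i.e.\ that independence of the source locations is reflected by the chosen target locations.
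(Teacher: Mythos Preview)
The paper states this theorem in the background chapter without proof. For the analogous congruence theorems in the main body (for $qAPTC^{sl}$ with silent step in Section~\ref{qoabs} and for $qAPPTC^{sl}$ with silent step in Section~\ref{qcabs}), the paper gives only a three-line sketch: (i) right-hand sides of guarded linear recursive equations can be unfolded to the canonical form; (ii) the strong probabilistic static location bisimulations are congruences for all operators and imply the corresponding rooted branching equivalences, hence---the paper asserts---the latter are congruences too; (iii) the extension of $\mathbb{E}$ by $\tau$ preserves congruence, details omitted.

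Your proposal is considerably more detailed and takes a different route: rather than reducing to the strong-bisimulation congruence result and the inclusion $\sim\subseteq\approx_{prb}$, you argue directly at the level of rooted branching bisimulation, decomposing the root move operator-by-operator via the transition tables and then appealing to preservation of non-rooted branching bisimilarity by the residual contexts. This is the standard rigorous approach and in fact closes a gap in the paper's sketch: step~(ii) as written is not a valid inference, since congruence of a finer equivalence does not transfer automatically to a coarser one; what is really needed is that on $\tau$-free terms the two equivalences coincide, which the paper leaves implicit. Your direct argument avoids this detour. Your treatment of recursion via the transition rules for $\langle X_i|E\rangle$ and $RSP$ is likewise more explicit than the paper's one-line appeal to unfolding. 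Your remark that $\approx_{prbhhp}^{sl}$ is omitted matches the statement as given, though note that the paper's own analogous theorems in Sections~\ref{qoabs} and~\ref{qcabs} do include the hhp case, so the omission here is more likely an inconsistency in the background chapter than a deliberate exclusion on congruence grounds.
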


We design the axioms for the silent step $\tau$ in Table \ref{AxiomsForTau22}.

\begin{center}
\begin{table}
  \begin{tabular}{@{}ll@{}}
\hline No. &Axiom\\
  $B1$ & $(y=y+y,z=z+z)\quad x\cdot((y+\tau\cdot(y+z))\boxplus_{\pi}w)=x\cdot((y+z)\boxplus_{\pi}w)$\\
  $B2$ & $(y=y+y,z=z+z)\quad x\leftmerge((y+\tau\leftmerge(y+z))\boxplus_{\pi}w)=x\leftmerge((y+z)\boxplus_{\pi}w)$\\
  $L13$ & $u::\tau=\tau$\
\end{tabular}
\caption{Axioms of silent step}
\label{AxiomsForTau22}
\end{table}
\end{center}

\begin{theorem}[Elimination theorem of APTC with probabilistic static localities and silent step and guarded linear recursion]\label{ETTau22}
Each process term in APTC with probabilistic static localities and silent step and guarded linear recursion is equal to a process term $\langle X_1|E\rangle$ with $E$ a guarded linear recursive
specification.
\end{theorem}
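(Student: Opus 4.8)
The plan is to reduce the statement to Theorem~\ref{ETRecursion22} (elimination for APTC with probabilistic static localities and linear recursion) and Theorem~\ref{ETTau21} (the non-probabilistic version with silent step): the present claim differs only in that the term language now contains the constant $\tau$ and that the target recursive specification must be \emph{guarded} rather than merely linear. So I would carry out a structural induction on the process term $t$, following the construction used for Theorem~\ref{ETRecursion22} almost verbatim while treating $\tau$ as one further atomic event, and add at each step the observation that guardedness is preserved. The base cases are immediate: an atomic event $e\in\mathbb{E}\cup\{\delta,\tau\}$ equals $\langle X_1\mid\{X_1=e\}\rangle$, a located event $u::e$ equals $\langle X_1\mid\{X_1=u::e\}\rangle$, and a subterm already of the form $\langle X_i\mid E\rangle$ with $E$ guarded linear is of the desired shape up to a harmless renaming placing $X_i$ first; none of these admits an infinite $\tau$-transition sequence.

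For the inductive step I would dispatch the operators $::$, $\cdot$, $+$, $\boxplus_{\pi}$, $\leftmerge$, $\parallel$, $\mid$, $\Theta$, $\triangleleft$, $\partial_H$ one at a time. Writing the immediate subterms, by the induction hypothesis, as $\langle X_1\mid E\rangle$ and $\langle Y_1\mid F\rangle$ with $E,F$ guarded linear over disjoint variable sets, I build a fresh head variable $Z_1$ and a specification $G$ and justify $\langle X_1\mid E\rangle\,\mathrm{op}\,\langle Y_1\mid F\rangle=\langle Z_1\mid G\rangle$ via $RSP$ (Table~\ref{RDPRSP22}). For $+$, $\boxplus_{\pi}$ and $u::(\cdot)$ one pushes the operator through the head equations using the distributivity laws ($PA4$, $PA5$, $A4$, $A5$, $L2$, $L3$, $D4$--$D6$, $PD1$, and the corresponding location laws), obtaining again a $\boxplus$-combination of sums of $(w::a)X_i$ and $(w::b)$ summands, and takes $G=\{Z_1=\cdots\}\cup E\cup F$; for $\cdot$ one replaces each terminating summand $v::b$ of $E$ by $(v::b)\,Y_1$ and adjoins $F$; for $\leftmerge$, $\parallel$, $\mid$ one performs the standard product construction with recursion variables indexed by pairs from $E\times F$, deriving each product equation through the parallel-expansion and communication axioms together with $PA4$, $PA5$ and the PDF clauses of Table~\ref{PDFAPPTC22}, minding the side conditions $e_1\leq e_2$ and the location combination $u\diamond v$; for $\Theta$, $\triangleleft$, $\partial_H$ one uses the conflict-elimination, unless, and encapsulation axioms over variables indexed by those of the argument(s). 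In every case linearity of $G$ is read off directly from the derived equations, so the only genuinely new work relative to Theorem~\ref{ETRecursion22} is keeping $\tau$ among the atomic events and checking guardedness.

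The step I expect to be the main obstacle is showing that every constructed $G$ is guarded, i.e.\ that there is no infinite sequence $\langle Z_1\mid G\rangle\xrightarrow{\tau}\langle Z'\mid G\rangle\xrightarrow{\tau}\langle Z''\mid G\rangle\xrightarrow{\tau}\cdots$. The argument is a projection: by the transition rules of guarded recursion (Table~\ref{TRForGR22}) and of the relevant operator, any such infinite $\tau$-path, after finitely many steps, induces an infinite $\tau$-path out of some component $\langle X_i\mid E\rangle$ or $\langle Y_j\mid F\rangle$, since recursion-free subterms and the operator layers themselves emit only boundedly many consecutive $\tau$-moves; this contradicts guardedness of $E$ or $F$. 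The bookkeeping is heaviest for $\leftmerge$, $\parallel$ and $\mid$, where a $\tau$ may be produced by a communication with $\gamma(e_1,e_2)=\tau$ or by an occurrence of $\triangleleft$ hidden inside a $\Theta$-expansion, so one must track which component supplied each $\tau$-step; the probabilistic transitions $\rightsquigarrow$ play no role here, as guardedness constrains only $\tau$-loops. A final outer induction on the term structure then assembles the claim for an arbitrary process term.
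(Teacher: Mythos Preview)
The paper does not actually prove this theorem: it appears in the Backgrounds section (Section~\ref{bg}) as a known result from \cite{LOC2}, stated without any proof. For the analogous elimination theorems in the later ``$q$'' chapters (e.g.\ the one labelled \ref{ETTauG}), the paper's proof consists entirely of the sentence ``The same as that of $APPTC^{sl}$, we omit the proof, please refer to \cite{LOC2} for details.'' So there is no in-paper argument to compare against; your proposal is considerably more detailed than anything the paper supplies.

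That said, your outline is the standard construction one expects to find in \cite{LOC2} and its predecessors: structural induction on terms, with each operator handled by building a product or sum specification over fresh variables and invoking $RSP$, treating $\tau$ as one more atomic symbol, and then verifying guardedness by projecting any putative infinite $\tau$-path back onto one of the component specifications. One small remark: you flag the case $\gamma(e_1,e_2)=\tau$ as a source of difficulty, but by the paper's convention the communication function has codomain $\mathbb{E}\cup\{\delta\}$ and every communication involving $\tau$ yields $\delta$, so $\gamma$ never produces $\tau$; this simplifies the guardedness bookkeeping for $\mid$ and $\parallel$. Otherwise your plan is sound and matches the expected approach.
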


\begin{theorem}[Soundness of APTC with probabilistic static localities and silent step and guarded linear recursion]\label{SAPTCTAU22}
Let $x$ and $y$ be APTC with probabilistic static localities and silent step and guarded linear recursion terms. If APTC with probabilistic static localities and silent step and guarded linear recursion
$\vdash x=y$, then
\begin{enumerate}
  \item $x\approx_{prbs}^{sl} y$;
  \item $x\approx_{prbp}^{sl} y$;
  \item $x\approx_{prbhp}^{sl} y$;
  \item $x\approx_{prbhhp}^{sl} y$.
\end{enumerate}
\end{theorem}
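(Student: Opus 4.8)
The plan is the standard route for the soundness results of this book: show that every axiom of ``$APTC$ with probabilistic static localities and silent step and guarded linear recursion'' is valid modulo each of $\approx_{prbs}^{sl}$, $\approx_{prbp}^{sl}$, $\approx_{prbhp}^{sl}$ and $\approx_{prbhhp}^{sl}$, and then conclude by induction on the structure of an equational derivation of $x=y$, using the congruence theorem already stated (these four equivalences are congruences with respect to the present algebra). A congruence validating all axioms validates all derivable equations, so the theorem reduces to an axiom-by-axiom check.

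First I would dispatch all the axioms inherited from the non-silent fragment --- the $BAPTC^{sl}$ laws $A1$--$A5$, $L1$--$L4$, $PA1$--$PA5$, the parallelism laws, the encapsulation laws $D1$--$D6$, $L11$, $PD1$, and the principle $RDP$ --- by invoking the earlier soundness theorems for $APTC$ with probabilistic static localities (with $\partial_H$, and with guarded recursion) modulo the \emph{strong} equivalences $\sim_{pp}^{sl}$, $\sim_{ps}^{sl}$, $\sim_{php}^{sl}$, $\sim_{phhp}^{sl}$, together with the elementary fact that each strong equivalence is contained in its rooted branching counterpart (a $\tau$-free matching move is in particular a rooted-branching matching move, and the probabilistic, PDF and termination clauses are literally the same in the two definitions). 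This leaves exactly the genuinely new material: the silent-step laws $B1$, $B2$, $L13$ of Table~\ref{AxiomsForTau22}, and the principle $RSP$, which must be revisited in the branching setting.

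For $L13$, $u::\tau=\tau$, both sides have the single probabilistic move to $u::\breve\tau$ (resp.\ $\breve\tau$) and thereafter the single $\tau$-transition to $\surd$, the location prefix being immaterial to a $\tau$ by Table~\ref{TRForTau22}; so the identity extended with the pairs $(u::\tau,\tau)$, $(u::\breve\tau,\breve\tau)$, $(\surd,\surd)$ is a rooted branching bisimulation in all four senses. For $B1$ and $B2$ I would exhibit explicit witnesses. For $B1$, with $p,q,r,w$ ranging over closed terms satisfying $q=q+q$ and $r=r+r$, let $R$ contain every pair $\big(s::p\cdot\big((q+\tau\cdot(q+r))\boxplus_\pi w\big),\; s::p\cdot\big((q+r)\boxplus_\pi w\big)\big)$, the interior pairs $\big(q+\tau\cdot(q+r),\; q+r\big)$ and $\big((q+\tau\cdot(q+r))\boxplus_\pi w,\;(q+r)\boxplus_\pi w\big)$ and $(q+r,q+r)$, and the identity elsewhere; the side conditions $q=q+q$, $r=r+r$ are exactly what makes the ``stuttering'' $\tau$-step from $q+\tau\cdot(q+r)$ to $q+r$ matchable against moves of the summand $q$ on the right. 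Checking the branching conditions, the PDF condition (computed from Tables~\ref{PDFBAPTC22} and~\ref{PDFAPPTC22}) and the $[\surd]=\{\surd\}$ condition is then a finite case split over the transition rules; the root requirement holds because the leading $p$ (respectively the leading step of $x$ in the $B2$ law, which is headed by $\leftmerge$) must act first, after which the processes lie in the interior of $R$. Passing from the hp- to the hhp-variant uses, as usual, that the witnessing relations are downward closed, which is automatic since each is generated by closed instances of a single equational schema.

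The step I expect to be the main obstacle is the re-proof of $RSP$ in this setting: that a guarded linear recursive specification has at most one solution modulo each $\approx_{prb}^{sl}$-variant. The argument is the familiar approximation-and-peeling one --- given two solutions $\vec p$ and $\vec q$, prove by induction on $n$ that $p_i$ and $q_i$ are indistinguishable up to $n$ layers of guards, where a layer is now an entire parallel guard $u_1::a_1\leftmerge\cdots\leftmerge u_k::a_k$ (so step bisimulation must match the multiset $\{a_1,\dots,a_k\}$ at the combined location), using guardedness --- in particular the ban on infinite $\tau$-sequences from Definition~\ref{GLRS22} --- to guarantee every transition is exposed after finitely many layers, and threading the PDFs through the peeling so the probabilistic clause survives each stage. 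Once the $n$-th approximant is set up correctly, the induction closes with the congruence theorem, and the passage between the $s$-, $p$-, $hp$- and $hhp$-variants is routine; for this reason I would treat $RSP$ last.
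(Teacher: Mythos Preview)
Your approach is correct and is precisely the route the paper takes for all of its soundness results. Note, however, that the paper gives no proof for this particular theorem: it sits in the Backgrounds chapter as a preliminary result stated without argument, and the analogous theorem that \emph{is} proved later (the $qAPPTC$ version, Theorem~\ref{SAPTC_GTAUG}) carries only your opening reduction --- each $\approx_{prb\ast}^{sl}$ is an equivalence and a congruence, so it suffices to verify each axiom in the silent-step table --- followed immediately by ``we leave them as exercises to the readers.'' Your proposal therefore goes substantially beyond what the paper does: the explicit witness relations for $L13$, $B1$, $B2$, the inclusion of strong bisimilarity in rooted branching bisimilarity to dispatch the inherited axioms, and the re-examination of $RSP$ in the branching setting are all content the paper omits entirely.
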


\begin{theorem}[Completeness of APTC with probabilistic static localities and silent step and guarded linear recursion]\label{CAPTCTAU22}
Let $p$ and $q$ be closed APTC with probabilistic static localities and silent step and guarded linear recursion terms, then,
\begin{enumerate}
  \item if $p\approx_{prbs}^{sl} q$ then $p=q$;
  \item if $p\approx_{prbp}^{sl} q$ then $p=q$;
  \item if $p\approx_{prbhp}^{sl} q$ then $p=q$;
  \item if $p\approx_{prbhhp}^{sl} q$ then $p=q$.
\end{enumerate}
\end{theorem}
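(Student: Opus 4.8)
The plan is to follow the familiar ``normal form plus $RSP$'' route used for completeness modulo rooted branching bisimilarity, adapted here to the probabilistic, truly concurrent and located setting. Since the four equivalences satisfy $\approx_{prbhhp}^{sl}\subseteq\approx_{prbhp}^{sl}\subseteq\approx_{prbp}^{sl}\subseteq\approx_{prbs}^{sl}$ (every rooted branching probabilistic static-location pomset/hp/hhp bisimulation is, after forgetting the event isomorphism and restricting to step transitions, a rooted branching step bisimulation), it is enough to prove statement (1); statements (2)--(4) then follow, and if one prefers to argue them directly the only additions are carrying the isomorphism $f$ along and checking downward closure, exactly as in the proof of Theorem \ref{CAPTCTAU21}. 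So assume $p\approx_{prbs}^{sl}q$.

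First I would apply the elimination theorem (Theorem \ref{ETTau22}): $p$ and $q$ are provably equal to terms $\langle X_1|E_1\rangle$ and $\langle Y_1|E_2\rangle$ with $E_1,E_2$ guarded linear recursive specifications. Using $L1$--$L16$ I would push every location prefix $u::$ down onto atomic actions, so that each right-hand side is, up to the axioms, a nested $\boxplus_{\pi}$-combination of finite $+$-sums of summands of the shape $(u_1::a_1\leftmerge\cdots\leftmerge u_k::a_k)\cdot Z$ or $(u_1::a_1\leftmerge\cdots\leftmerge u_k::a_k)$; a decorated action $u_i::a_i$ may thereafter be treated as a single label. It then suffices to show: if two terms in this normal form are rooted branching (probabilistic, static-location) step bisimilar, they are provably equal.

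For that I would run the standard $RSP$ argument. Let $R$ be a rooted branching probabilistic static-location step bisimulation with $\langle X_1|E_1\rangle \mathrel{R} \langle Y_1|E_2\rangle$; since $E_1,E_2$ are linear, only finitely many $\langle X_i|E_1\rangle$ and $\langle Y_j|E_2\rangle$ are reachable, so finitely many pairs matter. For each bisimilar reachable pair $(s,t)$ introduce a fresh recursion variable $Z_{s,t}$ and build a recursive specification $E$ whose equation for $Z_{s,t}$ is obtained by: (i) unfolding $s$ in $E_1$ and $t$ in $E_2$ via $RDP$ into the above normal form; (ii) using $PA1$--$PA5$ and $A1$--$A5$ to align the $\boxplus_{\pi}$-alternatives of $s$ with those of $t$, which is forced by the bisimulation clauses on $\rightsquigarrow$-transitions and on the PDFs $\mu$ (Tables \ref{PDFBAPTC22}, \ref{PDFAPPTC22}, \ref{PDFGR22}); (iii) inside each probabilistic alternative, using $A1$--$A5$, the parallelism axioms and the silent-step laws $B1,B2,L13$ to align the action summands (forced by the step and termination clauses), replacing the target $s'$ of a summand of $s$ and the matching target $t'$ of $t$ by $Z_{s',t'}$. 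This $E$ is again guarded linear: no new $\tau$-cycles arise because $B1,B2$ only collapse bounded silent stuttering and the only $\tau$'s present come from $E_1,E_2$ themselves (there is no $\tau_I$ at this stage). One then checks that the family $\{\langle X_i|E_1\rangle\}$ and the family $\{\langle Y_j|E_2\rangle\}$, each reindexed along $R$, are provable solutions of $E$, so two applications of $RSP$ (Table \ref{RDPRSP22}) give $\langle X_1|E_1\rangle=\langle Z_{X_1,Y_1}|E\rangle=\langle Y_1|E_2\rangle$, hence $p=q$.

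I expect the main obstacle to be step (iii): showing that the silent-step laws $B1$ and $B2$ (whose probabilistic forms in Table \ref{AxiomsForTau22} already mix in $\boxplus_{\pi}$ and the side conditions $y=y+y$, $z=z+z$) really do suffice to realize every match demanded by the rooted branching clauses, \emph{uniformly} in the presence of parallel/left-merge summands and of probabilistic alternatives, while keeping $E$ guarded linear --- i.e.\ establishing that a ``regular'' probabilistic, located, truly concurrent process modulo $\approx_{prbs}^{sl}$ always admits a guarded linear recursive specification and that two bisimilar such processes admit a common one. Getting (ii) and (iii) to cohere --- the probabilistic alignment must precede the action alignment, and every manipulation must stay provable rather than merely sound, which is where $RDP$ and the soundness theorem \ref{SAPTCTAU22} are invoked repeatedly --- is the delicate part; the remainder (handling locations via $L1$--$L16$, and the pomset/hp/hhp bookkeeping) is routine.
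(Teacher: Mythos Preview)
The paper does not actually prove this theorem: it appears in the Backgrounds chapter (Section~\ref{bg}) as a result quoted from \cite{LOC2}, with no proof text attached. The closest the paper comes to a proof is for the \emph{quantum} analogues later on (e.g.\ Theorem~\ref{CAPTC_GTAUG}), and there the argument is a one-line reduction: the quantum bisimulation equivalences imply both the classical bisimilarities and equality of quantum states, so completeness is inherited from the present (classical) theorem, for which the reader is again referred to \cite{LOC2}.

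Your proposal, by contrast, spells out the standard ``elimination to guarded linear normal form, then build a common specification and apply $RSP$'' route that such completeness theorems are actually proved by. That is the right strategy and is almost certainly what \cite{LOC2} does; the main steps you identify---probabilistic alignment via $PA1$--$PA5$ first, then action/silent alignment via $A1$--$A5$, $B1$, $B2$, $L13$, with locations handled by $L1$--$L16$---are the correct ones. One small caution: your reduction of cases (2)--(4) to case (1) via the inclusion chain $\approx_{prbhhp}^{sl}\subseteq\cdots\subseteq\approx_{prbs}^{sl}$ goes the wrong way for a completeness argument. Knowing $p\approx_{prbhhp}^{sl}q$ implies $p\approx_{prbs}^{sl}q$, and then (1) gives $p=q$, so (4) does follow from (1); but you phrased it as ``it is enough to prove (1)'' which is correct only because the stronger equivalences imply the weaker one---just make sure the direction of that implication is stated the right way round in a write-up.
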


The unary abstraction operator $\tau_I$ ($I\subseteq \mathbb{E}$) renames all atomic events in $I$ into $\tau$. APTC with probabilistic static localities and silent step and abstraction operator is
called $APPTC_{\tau}$ with probabilistic static localities. The transition rules of operator $\tau_I$ are shown in Table \ref{TRForAbstraction22}.

\begin{center}
    \begin{table}
        $$\frac{x\rightsquigarrow x'}{\tau_I(x)\rightsquigarrow\tau_I(x')}$$
        $$\frac{x\xrightarrow[u]{e}\surd}{\tau_I(x)\xrightarrow[u]{e}\surd}\quad e\notin I
        \quad\quad\frac{x\xrightarrow[u]{e}x'}{\tau_I(x)\xrightarrow[u]{e}\tau_I(x')}\quad e\notin I$$

        $$\frac{x\xrightarrow[u]{e}\surd}{\tau_I(x)\xrightarrow{\tau}\surd}\quad e\in I
        \quad\quad\frac{x\xrightarrow[u]{e}x'}{\tau_I(x)\xrightarrow{\tau}\tau_I(x')}\quad e\in I$$
        \caption{Transition rule of the abstraction operator}
        \label{TRForAbstraction22}
    \end{table}
\end{center}

\begin{theorem}[Conservitivity of $APPTC_{\tau}$ with probabilistic static localities and guarded linear recursion]
$APPTC_{\tau}$ with probabilistic static localities and guarded linear recursion is a conservative extension of APTC with probabilistic static localities and silent step and guarded linear recursion.
\end{theorem}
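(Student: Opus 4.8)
The plan is to apply the Conservative Extension Theorem (Theorem~\ref{TCE}). I take $T_0$ to be the term-deduction system of APTC with probabilistic static localities and silent step and guarded linear recursion, over its signature $\Sigma_0$, and I take $T_1$ to be the transition rules for the abstraction operator $\tau_I$ given in Table~\ref{TRForAbstraction22}, so that $\Sigma_1\setminus\Sigma_0=\{\tau_I\mid I\subseteq\mathbb{E}\}$ and $T_0\oplus T_1$ is precisely the TSS of $APPTC_\tau$ with probabilistic static localities and guarded linear recursion. It then suffices to verify the hypotheses of Theorem~\ref{TCE}: that $T_0$ and $T_0\oplus T_1$ are positive after reduction, that $T_0$ is source-dependent, and that each rule of $T_1$ either has a fresh source or has a premise over $\mathbb{T}(\Sigma_0)$ producing a fresh label or predicate.

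First I would check that $T_0$ and $T_0\oplus T_1$ are positive after reduction: the only negative premises occurring in any of the rule tables (Tables~\ref{SETRForBATC22}, \ref{TRForAPPTC122}, \ref{TRForAPTC22}, \ref{TRForAPTC222}, \ref{TRForEncapsulation22}, \ref{TRForGR22}, \ref{TRForTau22}) are of the form $y\nrightarrow^{e}$ in the rules for the unless operator $\triangleleft$, and these are dealt with by exactly the same reduction as in the underlying (non-probabilistic) case, so the property is inherited; adding the rules of $T_1$, which contain no negative premises, preserves it. Next, source-dependency of $T_0$ is checked rule by rule: the axioms $\breve e\xrightarrow[\epsilon]{e}\surd$ and $loc::\breve e\xrightarrow[loc]{e}\surd$ and $\tau\xrightarrow{\tau}\surd$ have no variables; in every compound rule each variable appearing in a premise already occurs in the source; and in the guarded-recursion rules of Table~\ref{TRForGR22} the term $t_i(\langle X_1|E\rangle,\cdots,\langle X_n|E\rangle)$ in the premise is closed, so no new variables arise. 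Hence $T_0$ is source-dependent.

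The crucial, and in fact immediate, point is condition~(2) of Theorem~\ref{TCE} for the rules of $T_1$. Every rule in Table~\ref{TRForAbstraction22}, namely the probabilistic rule $\frac{x\rightsquigarrow x'}{\tau_I(x)\rightsquigarrow\tau_I(x')}$ and the four action rules, has source $\tau_I(x)$, and $\tau_I\in\Sigma_1\setminus\Sigma_0$; so by the Freshness definition the source of each $\rho\in T_1$ is fresh. The first alternative of condition~(2) is therefore satisfied for all of $T_1$ (it does not matter that the rules with $e\in I$ emit the label $\tau$, which is not fresh, since the source is already fresh). Invoking Theorem~\ref{TCE} yields that $T_0\oplus T_1$ is a conservative extension of $T_0$: for every closed term $t$ of APTC with probabilistic static localities and silent step and guarded linear recursion, the transitions $t\xrightarrow[u]{a}t'$, the probabilistic transitions $t\rightsquigarrow t'$, and the predicates $t\downarrow$ are the same whether derived in $T_0$ or in $T_0\oplus T_1$.

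I do not anticipate a genuine obstacle. The only points needing any care are the bookkeeping around the negative premises for "positive after reduction" (already settled for the base APTC system and unaffected by the purely positive rules of $\tau_I$) and recognising the guarded-recursion rules, whose sources involve the constants $\langle X_i|E\rangle$, as source-dependent; both are routine. The content of the argument is just that $\tau_I$ is a genuinely new operator whose defining rules neither create nor destroy any transition between old terms.
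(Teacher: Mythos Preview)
Your proposal is correct and follows essentially the same approach as the paper. The paper's own argument (given explicitly for the analogous quantum versions, e.g.\ the conservativity of $qAPPTC^{sl}_{\tau}$) is just the terse observation that the transition rules of the base system are source-dependent and that the rules for $\tau_I$ in Table~\ref{TRForAbstraction22} have a fresh operator $\tau_I$ in their source, then an appeal to Theorem~\ref{TCE}; your write-up simply spells out the same two hypotheses more carefully.
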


\begin{theorem}[Congruence theorem of $APPTC_{\tau}$ with probabilistic static localities and guarded linear recursion]
Rooted branching probabilistic static location truly concurrent bisimulation equivalences $\approx_{prbp}^{sl}$, $\approx_{prbs}^{sl}$, $\approx_{prbhp}^{sl}$ and $\approx_{prbhhp}^{sl}$ are all
congruences with respect to $APPTC_{\tau}$ with probabilistic static localities and guarded linear recursion.
\end{theorem}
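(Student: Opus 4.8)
The plan is to proceed exactly as in the congruence proofs already established for $APTC$ with probabilistic static localities, its silent-step extension, and guarded linear recursion. Congruence with respect to a theory means that every function symbol of its signature respects the equivalence, and all of $::$, $\cdot$, $+$, $\boxplus_{\pi}$, $\parallel$, $\leftmerge$, $\mid$, $\Theta$, $\triangleleft$, $\partial_H$, together with the constants $\langle X|E\rangle$ of guarded linear recursion, have already been shown to be congruent for $\approx_{prbp}^{sl}$, $\approx_{prbs}^{sl}$, $\approx_{prbhp}^{sl}$, $\approx_{prbhhp}^{sl}$ in the preceding congruence theorem for $APTC$ with probabilistic static localities and silent step and guarded linear recursion; moreover the conservativity theorem guarantees that adjoining $\tau_I$ disturbs none of those transitions. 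Hence the entire task reduces to the single new operator: the unary abstraction operator $\tau_I$. Concretely, for each of the four equivalences I must show: if $x\approx y$ then $\tau_I(x)\approx\tau_I(y)$.

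First I would prove the statement for the \emph{non-rooted} branching probabilistic static location equivalences $\approx_{pbp}^{sl}$, $\approx_{pbs}^{sl}$, $\approx_{pbhp}^{sl}$, $\approx_{pbhhp}^{sl}$, since, unlike alternative composition, the relabelling operator interacts well with the branching structure, after which rootedness follows cheaply. Given a witnessing relation $R_{\varphi}$ with $(\emptyset,\emptyset)\in R_{\varphi}$ (resp.\ $(\emptyset,\emptyset,\emptyset)\in R_{\varphi}$ in the history-preserving cases), I would set $R'_{\varphi}=\{(\tau_I(p),\tau_I(q)) : (p,q)\in R_{\varphi}\}$, resp.\ $\{(\tau_I(C_1),f,\tau_I(C_2)) : (C_1,f,C_2)\in R_{\varphi}\}$, and verify the defining clauses from the transition rules of $\tau_I$ in Table~\ref{TRForAbstraction22}. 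A probabilistic step $\tau_I(p)\rightsquigarrow\tau_I(p')$ can arise only from $p\rightsquigarrow p'$, and by the PDF clause for $\tau_I$ the relabelling is a bijection on the probabilistic branches, so the measure condition $\mu(C_1,C)=\mu(C_2,C)$ and the clause $[\surd]_{R_{\varphi}}=\{\surd\}$ transfer to $R'_{\varphi}$. An action step $\tau_I(p)\xrightarrow{Z}$ decomposes the underlying step of $p$ into its part $Z_1$ with labels outside $I$, which keeps its location, and its part in $I$, which is relabelled to $\tau$'s and carries no location; I then match against $q$ by the usual two branching cases, invoking the branching/termination clauses of $R_{\varphi}$ for the (now silent) $I$-part and the ordinary matching clause for $Z_1$, and re-absorbing any $\tau$'s produced by $\tau_I$ on the matching side into the preceding sequence $\rightsquigarrow^*\xrightarrow{\tau^*}$. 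Concurrency and the pomset order are untouched because $\leq$ and $\sharp$ do not depend on the labelling, which also handles the step case, and downward closure of $R'_{\varphi}$ is inherited from that of $R_{\varphi}$ in the hhp-case.

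With this in hand, rootedness is short. Given $x\approx_{prbp}^{sl}y$ (and similarly for the other three), take $R''$ to be $\{(\tau_I(x),\tau_I(y))\}$ together with the identity relation, with the appropriate location association. A first transition $\tau_I(x)\rightsquigarrow\tau_I(x^{\pi})\xrightarrow[u]{Z}\tau_I(x')$ comes from $x\rightsquigarrow x^{\pi}\xrightarrow[u]{Z'}x'$ with $Z$ obtained from $Z'$ by the relabelling above; rootedness of the witness for $x\approx_{prbp}^{sl}y$ yields $y\rightsquigarrow y^{\pi}\xrightarrow[v]{Z'}y'$ with $x'\approx_{pbp}^{sl}y'$, hence $\tau_I(y)\rightsquigarrow\tau_I(y^{\pi})\xrightarrow[v]{Z}\tau_I(y')$, and by the previous paragraph $\tau_I(x')\approx_{pbp}^{sl}\tau_I(y')$. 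The symmetric direction and the termination clauses are identical, so $\tau_I(x)\approx_{prbp}^{sl}\tau_I(y)$, and replacing pomsets by steps, by single events with the posetal product, and by its downward-closed version gives the remaining three equivalences.

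The step I expect to be the main obstacle is the bookkeeping in the non-rooted case when an abstracted transition carries labels \emph{both} inside and outside $I$: one must show that the surviving pomset over the non-$I$ labels is matched at the correct location while the $I$-labels, now turned into $\tau$'s with no location, are swallowed by a branching sequence on the other side, and that the fresh $\tau$'s which $\tau_I$ itself may create on the matching side can be folded back into the prefix $\rightsquigarrow^*\xrightarrow{\tau^*}$ without violating the probabilistic-measure or termination clauses. Once this interplay between relabelling, the probabilistic transitions $\rightsquigarrow$, and the ``either/or'' of branching bisimulation is handled cleanly, the rest is routine case analysis on the structure of the operators, already covered by the earlier theorems.
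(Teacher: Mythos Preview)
Your reduction is exactly the one the paper uses: since all operators other than $\tau_I$ were already shown to respect $\approx_{prbp}^{sl}$, $\approx_{prbs}^{sl}$, $\approx_{prbhp}^{sl}$, $\approx_{prbhhp}^{sl}$ in the preceding congruence theorem, and conservativity leaves those intact, the whole proof collapses to checking that the single new operator $\tau_I$ preserves each of the four equivalences. The paper (in the analogous $qAPPTC^{sl}_{\tau}$ theorem; the background statement itself is imported from \cite{LOC2} without proof) stops right there, declaring the verification for $\tau_I$ ``trivial'' and leaving it to the reader.

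Where you differ is in actually carrying out that verification: you first treat the non-rooted branching equivalences via the canonical lifted relation $R'_{\varphi}=\{(\tau_I(p),\tau_I(q)):(p,q)\in R_{\varphi}\}$, argue the probabilistic-measure and termination clauses transfer because $\tau_I$ acts bijectively on probabilistic branches, handle the mixed $I$/non-$I$ pomset case by absorbing the freshly created $\tau$'s into the $\rightsquigarrow^*\xrightarrow{\tau^*}$ prefix, and only then pass to the rooted versions. This is a genuine, correct fleshing-out of what the paper omits; the paper buys brevity, your write-up buys an honest account of the one place (the mixed-label bookkeeping) where the argument is not entirely mechanical.
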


We design the axioms for the abstraction operator $\tau_I$ in Table \ref{AxiomsForAbstraction22}.

\begin{center}
\begin{table}
  \begin{tabular}{@{}ll@{}}
\hline No. &Axiom\\
  $TI1$ & $e\notin I\quad \tau_I(e)=e$\\
  $TI2$ & $e\in I\quad \tau_I(e)=\tau$\\
  $TI3$ & $\tau_I(\delta)=\delta$\\
  $TI4$ & $\tau_I(x+y)=\tau_I(x)+\tau_I(y)$\\
  $TI5$ & $\tau_I(x\cdot y)=\tau_I(x)\cdot\tau_I(y)$\\
  $TI6$ & $\tau_I(x\leftmerge y)=\tau_I(x)\leftmerge\tau_I(y)$\\
  $L14$ & $u::\tau_I(x)=\tau_I(u::x)$\\
  $L15$ & $e\notin I\quad \tau_I(u::e)=u::e$\\
  $L16$ & $e\in I\quad \tau_I(u::e)=\tau$\\
  $PTI1$ & $\tau_I(x\boxplus_{\pi}y)=\tau_I(x)\boxplus_{\pi}\tau_I(y)$\\
\end{tabular}
\caption{Axioms of abstraction operator}
\label{AxiomsForAbstraction22}
\end{table}
\end{center}

\begin{theorem}[Soundness of $APPTC_{\tau}$ with probabilistic static localities and guarded linear recursion]\label{SAPTCABS22}
Let $x$ and $y$ be $APPTC_{\tau}$ with probabilistic static localities and guarded linear recursion terms. If $APPTC_{\tau}$ with probabilistic static localities and guarded linear recursion $\vdash x=y$, then
\begin{enumerate}
  \item $x\approx_{prbs}^{sl} y$;
  \item $x\approx_{prbp}^{sl} y$;
  \item $x\approx_{prbhp}^{sl} y$;
  \item $x\approx_{prbhhp}^{sl} y$.
\end{enumerate}
\end{theorem}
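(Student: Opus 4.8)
The plan is to proceed exactly as in all the preceding soundness proofs of this chapter, reducing the statement to a finite check on the axioms. \textbf{Reduction to individual axioms.} By the congruence theorem for $APPTC_{\tau}$ with probabilistic static localities and guarded linear recursion (stated just above), each of $\approx_{prbp}^{sl}$, $\approx_{prbs}^{sl}$, $\approx_{prbhp}^{sl}$, $\approx_{prbhhp}^{sl}$ is a congruence with respect to every operator of the theory. Hence it suffices to show that each axiom $s=t$ of the theory is valid modulo each of the four equivalences; closure under the congruence then propagates soundness from axioms to arbitrary derivable equations $x=y$ by induction on the length of the derivation. The axioms split into those inherited from APTC with probabilistic static localities and silent step and guarded linear recursion, which are already sound by Theorem \ref{SAPTCTAU22} and remain sound here by the conservativity of the extension, so nothing new is required for them, and the genuinely new axioms governing $\tau_I$, namely $TI1$--$TI6$, $L14$, $L15$, $L16$ and $PTI1$ in Table \ref{AxiomsForAbstraction22}.

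\textbf{Witnessing relations for the new axioms.} For each new axiom I would exhibit the evident relation and check that it is a rooted branching probabilistic static location pomset/step/hp/hhp bisimulation. Concretely: for $TI1$, $TI2$, $TI3$ take $\{(\tau_I(e),e)\}$, $\{(\tau_I(e),\tau)\}$, $\{(\tau_I(\delta),\delta)\}$ extended with the identity on $\surd$; for $TI4$, $TI5$, $TI6$, $PTI1$ take the relation generated by the pair $(\tau_I(s),t)$ together with the identity, closed under reachable subterms, e.g. for $TI5$ the relation $\{(\tau_I(x\cdot y),\tau_I(x)\cdot\tau_I(y))\}\cup\{(\tau_I(x')\cdot\tau_I(y),\tau_I(x')\cdot\tau_I(y))\}\cup\cdots$; for $L14$--$L16$ the relation relating $u::\tau_I(x)$ with $\tau_I(u::x)$, and $\tau_I(u::e)$ with $u::e$ or $\tau$ according to the side condition. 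In each case the transition rules of Table \ref{TRForAbstraction22} make the verification routine: a transition with $e\notin I$ is matched identically, a transition with $e\in I$ is matched by the corresponding $\tau$-step, probabilistic transitions $\rightsquigarrow$ commute through $\tau_I$ by the first rule of Table \ref{TRForAbstraction22} (and through $u::$, $\cdot$, $+$, $\boxplus_{\pi}$ by the probabilistic rules already present), the termination predicate $\downarrow$ is preserved, and the PDF condition $\mu(C_1,C)=\mu(C_2,C)$ on equivalence classes holds because the PDF clauses for $\tau_I$, $u::$, $\cdot$, $+$ and $\boxplus_{\pi}$ push through unchanged. Since none of the new axioms equates a term whose initial step is a $\tau$ introduced by abstraction with a term that cannot perform that $\tau$ initially, the root condition of $\approx_{prbp}^{sl}$, $\approx_{prbs}^{sl}$, $\approx_{prbhp}^{sl}$ is met in each case; and because all the relations used are finite-image and can be taken downward closed, the $\approx_{prbhhp}^{sl}$ clause follows too.

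\textbf{Main obstacle.} The delicate point is not the algebraic bookkeeping but confirming that the witnessing relations are genuinely \emph{rooted branching} and, in the hhp case, \emph{downward closed}: when $\tau_I$ turns a visible action into $\tau$ one must check that the two sides still agree on the branching structure, i.e.\ that each $\tau$-step produced on one side is mimicked on the other either by remaining in the relation (for an abstracted event) or, after a sequence of probabilistic and silent transitions $\rightsquigarrow^*\xrightarrow{\tau^*}$, by a matching visible step carrying the same pomset label and location, and that the classes used in $\mu(C_1,C)=\mu(C_2,C)$ are computed with respect to the \emph{same} relation $R_{\varphi}$ on both components. I expect this check, carried out once in full for $TI4$ (where branching interacts with abstraction) and once for $PTI1$ (where the probabilistic choice interacts with abstraction), and then reused for the remaining axioms with only notational changes, to be the substantive part of the argument; everything else is an immediate variant of the corresponding step in Theorem \ref{SAPTCTAU22}.
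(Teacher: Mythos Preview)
Your proposal is correct and follows the same reduction-to-axioms strategy that the paper uses throughout: invoke the congruence theorem so that soundness of the whole calculus follows from soundness of each individual axiom, then check the new $\tau_I$ axioms one by one. In fact the paper gives no proof at all for this particular theorem (it is stated as background from \cite{LOC2}), and for the analogous quantum soundness theorems the paper's proof is just the one-line remark ``since the equivalence is an equivalent and congruent relation, we only need to check each axiom in the table; we leave this as an exercise''; your write-up is therefore considerably more detailed than anything the paper supplies, but the underlying approach is identical.
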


Though $\tau$-loops are prohibited in guarded linear recursive specifications in a specifiable way, they can be constructed using the abstraction operator, for example, there exist
$\tau$-loops in the process term $\tau_{\{a\}}(\langle X|X=aX\rangle)$. To avoid $\tau$-loops caused by $\tau_I$ and ensure fairness, we introduce the following recursive verification
rules as Table \ref{RVR22} shows, note that $i_1,\cdots, i_m,j_1,\cdots,j_n\in I\subseteq \mathbb{E}\setminus\{\tau\}$.

\begin{center}
\begin{table}
    $$VR_1\quad \frac{x=y+(u_1::i_1\leftmerge\cdots\leftmerge u_m::i_m)\cdot x, y=y+y}{\tau\cdot\tau_I(x)=\tau\cdot \tau_I(y)}$$
    $$VR_2\quad \frac{x=z\boxplus_{\pi}(u+(u_1::i_1\leftmerge\cdots\leftmerge u_m::i_m)\cdot x),z=z+u,z=z+z}{\tau\cdot\tau_I(x)=\tau\cdot\tau_I(z)}$$
    $$VR_3\quad \frac{x=z+(u_1::i_1\leftmerge\cdots\leftmerge u_m::i_m)\cdot y,y=z\boxplus_{\pi}(u+(v_1::j_1\leftmerge\cdots\leftmerge v_n::j_n)\cdot x), z=z+u,z=z+z}{\tau\cdot\tau_I(x)=\tau\cdot\tau_I(y')\textrm{ for }y'=z\boxplus_{\pi}(u+(u_1::i_1\leftmerge\cdots\leftmerge u_m::i_m)\cdot y')}$$
\caption{Recursive verification rules}
\label{RVR22}
\end{table}
\end{center}

\begin{theorem}[Soundness of $VR_1,VR_2,VR_3$]
$VR_1$, $VR_2$ and $VR_3$ are sound modulo probabilistic rooted branching probabilistic static location truly concurrent bisimulation equivalences $\approx_{prbp}^{sl}$, $\approx_{prbs}^{sl}$,
$\approx_{prbhp}^{sl}$ and $\approx_{prbhhp}^{sl}$.
\end{theorem}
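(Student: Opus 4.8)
The plan is to handle the three rules separately and, for each, exhibit an explicit witnessing relation, in the style of the soundness proofs for $KFAR$/$CFAR$. Since by Theorem~\ref{SAPTCABS22} and the accompanying congruence theorems all the axioms of $APPTC_{\tau}$ with probabilistic static localities and guarded linear recursion are already sound modulo $\approx_{prbp}^{sl}$, $\approx_{prbs}^{sl}$, $\approx_{prbhp}^{sl}$ and $\approx_{prbhhp}^{sl}$, and since these equivalences are congruences, it suffices to show that whenever the premises of $VR_i$ are read as equations valid modulo the four equivalences, the conclusion holds modulo each of them as well. Because the conclusion always has the shape $\tau\cdot\tau_I(s_1)=\tau\cdot\tau_I(s_2)$, the leading $\tau$ discharges the rootedness clauses automatically: it is enough to establish $\tau_I(s_1)\approx_{pbp}^{sl}\tau_I(s_2)$ (and its step/hp/hhp variants), since the two opening $\tau$-transitions then match exactly and lead into branching-bisimilar states.

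\textbf{The rule $VR_1$.} From $x=y+(u_1::i_1\leftmerge\cdots\leftmerge u_m::i_m)\cdot x$ with $y=y+y$, apply $TI1$--$TI6$, $L13$--$L16$, $PTI1$ of Table~\ref{AxiomsForAbstraction22} together with $i_1,\dots,i_m\in I$ to get $\tau_I(x)=\tau_I(y)+\tau\cdot\tau_I(x)$, the internal $\leftmerge$-block collapsing to an unlocated $\tau$ via $TI6$, $B3$ and $L13$. I would then take
\[
R=\{(\tau\cdot\tau_I(x),\ \tau\cdot\tau_I(y)),\ (\tau_I(x),\ \tau_I(y))\}\cup\{(s,s):s\text{ a closed term}\}
\]
and verify it is a branching probabilistic static-location pomset bisimulation (and simultaneously a step and an hp one). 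The central matching fact: any pomset/step transition out of $\tau_I(x)$ either originates in the $\tau_I(y)$-summand, matched verbatim by the corresponding transition of $\tau_I(y)$ (using $y=y+y$ to reproduce the residual), or is the loop transition $\tau_I(x)\xrightarrow{\tau}\tau_I(x)$, matched by the first (``$X\equiv\tau^*$'') clause with $\tau_I(y)$ standing still. The termination clauses hold since $\tau_I(x)\downarrow\Leftrightarrow\tau_I(y)\downarrow$, and clauses (5)--(6) of the branching definition are trivial here as no $\boxplus_{\pi}$ is present (the PDFs of Tables~\ref{PDFAPPTC22}--\ref{PDFGR22} concentrate $\mu$ on the $\breve{\cdot}$-images). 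Both sides carry the same location prefixes, so the cla $\varphi$ is extended in lock-step; the hp-version follows with the evident isomorphism, and $R$ is already downward closed, giving the hhp-version.

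\textbf{The rules $VR_2$ and $VR_3$.} For $VR_2$ the premises $x=z\boxplus_{\pi}(u+(u_1::i_1\leftmerge\cdots\leftmerge u_m::i_m)\cdot x)$, $z=z+u$, $z=z+z$ yield, after abstraction, $\tau_I(x)=\tau_I(z)\boxplus_{\pi}(\tau_I(u)+\tau\cdot\tau_I(x))$ with $\tau_I(z)=\tau_I(z)+\tau_I(u)=\tau_I(z)+\tau_I(z)$. The new ingredient over $VR_1$ is the probabilistic transition relation $\rightsquigarrow$: one must now check clause (2) (every $\rightsquigarrow$-step matched, up to sequences $\rightsquigarrow^*\!\xrightarrow{\tau^*}$) and the measure condition (5), $\mu(C_1,C)=\mu(C_2,C)$ for each $R$-class $C$, using the PDF clauses for $\boxplus_{\pi}$ and for recursion. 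The relation I would use is generated by $(\tau\cdot\tau_I(x),\tau\cdot\tau_I(z))$, $(\tau_I(x),\tau_I(z))$, $(\tau_I(u)+\tau\cdot\tau_I(x),\ \tau_I(z))$ and the identity, the third pair justified by $z=z+u$ together with the fair-abstraction folding of the $\tau$-loop; one checks that on each $R$-class the accumulated probabilities coincide. For $VR_3$ I would first invoke $RDP$ of Table~\ref{RDPRSP22} to introduce the auxiliary process $y'=z\boxplus_{\pi}(u+(u_1::i_1\leftmerge\cdots\leftmerge u_m::i_m)\cdot y')$, then reduce $VR_3$ to a $VR_2$-style comparison of the guarded unfoldings of $y$ and $y'$ (both pass through the same $\tau$-loop of internal blocks and the same exits $z,u$), which is exactly the stated conclusion $\tau\cdot\tau_I(x)=\tau\cdot\tau_I(y')$.

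\textbf{Main obstacle.} I expect the crux to be clause (5) of the branching probabilistic static-location bisimulation --- the requirement $\mu(C_1,C)=\mu(C_2,C)$ for every $R$-class $C$ --- in the presence of the fairly abstracted $\tau$-loop: one must show that collapsing the loop (identifying $\tau_I(x)$ with $\tau_I(z)$ and $\tau_I(u)+\tau\cdot\tau_I(x)$ with $\tau_I(z)$) does not disturb the probability mass $\mu$ assigns to the classes of $R$, which forces $R$ to be chosen so that the loop-entry state and all its one-step probabilistic residuals land in a single class. A secondary point is downward closure for the hhp-variants: since every matched multiset of events is either a $\leftmerge$-block of singly-located actions or a single unlocated $\tau$, its sub-pomsets are again of that form and stay related, so $R$ need not be enlarged, but this has to be spelled out. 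The truly concurrent and locality bookkeeping itself is routine, as both sides of each rule carry syntactically identical location prefixes, so the cla $\varphi$ grows identically and never obstructs the matching.
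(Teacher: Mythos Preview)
The paper does not actually give a proof of this theorem: both occurrences of the ``Soundness of $VR_1,VR_2,VR_3$'' theorem (in the background section on $APPTC$ with localities and again in the quantum $qAPPTC^{sl}_{\tau}$ section) are stated without any accompanying \texttt{proof} environment or even a reference to \cite{LOC2}. So there is nothing in the paper to compare your proposal against.

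Evaluated on its own merits, your plan follows the standard template for fair-abstraction soundness (as for $CFAR$/$KFAR$): strip the rootedness condition via the leading $\tau$, then exhibit an explicit branching bisimulation that identifies the loop with its exits. Your treatment of $VR_1$ is essentially complete; the relation you write down is the usual one, and the case split (loop $\tau$ versus $y$-summand) is exactly right. For $VR_2$ and $VR_3$ you have correctly identified the genuine work --- verifying the measure clause $\mu(C_1,C)=\mu(C_2,C)$ after collapsing the probabilistic $\tau$-loop --- and your proposed relation (putting the loop-entry state and all its one-step probabilistic residuals into a single class) is the right move. One point to tighten: your reduction of $VR_3$ to ``a $VR_2$-style comparison'' glosses over the fact that $x$ and $y'$ have different internal blocks ($i$'s versus $j$'s in the premises of $VR_3$), so you cannot literally quote $VR_2$ but must rerun the bisimulation argument with the two distinct $\tau$-loops folded into the same class; this is routine once $\tau_I$ has turned both blocks into bare $\tau$'s, but it should be said explicitly rather than waved at.
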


\subsection{Operational Semantics for Quantum Computing}

in quantum processes, to avoid the abuse of quantum information which may violate the no-cloning theorem, a quantum configuration $\langle C,\varrho \rangle$
\cite{PSQP} \cite{QPA} \cite{QPA2} \cite{CQP} \cite{CQP2} \cite{qCCS} \cite{BQP} \cite{PSQP} \cite{SBQP} is usually consisted of a traditional configuration $C$ and state information $\varrho$ of
all (public) quantum information variables. Though quantum information variables are not explicitly defined and are hidden behind quantum operations or unitary operators, more importantly, the
state information $\varrho$ is the effects of execution of a series of quantum operations or unitary operators on involved quantum systems, the execution of a series of quantum operations
or unitary operators should not only obey the restrictions of the structure of the process terms, but also those of quantum mechanics principles. Through the state information
$\varrho$, we can check and observe the functions of quantum mechanics principles, such as quantum entanglement, quantum measurement, etc.

So, the operational semantics of quantum processes should be defined based on quantum process configuration $\langle C,\varrho\rangle$, in which $\varrho=\varsigma$ of two state
information $\varrho$ and $\varsigma$ means equality under the framework of quantum information and quantum computing, that is, these two quantum processes are in the same quantum
states.

Let $Loc$ be the set of locations, and $u,v\in Loc^*$. Let $\ll$ be the sequential ordering on $Loc^*$, we call $v$ is an extension or a sublocation of $u$ in $u\ll v$; and if $u\nll v$
$v\nll u$, then $u$ and $v$ are independent and denoted $u\diamond v$.

\begin{definition}[Consistent location association]
A relation $\varphi\subseteq (Loc^*\times Loc^*)$ is a consistent location association (cla), if $(u,v)\in \varphi \&(u',v')\in\varphi$, then $u\diamond u'\Leftrightarrow v\diamond v'$.
\end{definition}

\begin{definition}[Pomset transitions and step]
Let $\mathcal{E}$ be a PES and let $C\in\mathcal{C}(\mathcal{E})$, and $\emptyset\neq X\subseteq \mathbb{E}$, if $C\cap X=\emptyset$ and $C'=C\cup X\in\mathcal{C}(\mathcal{E})$, then
$\langle C,s\rangle\xrightarrow[u]{X} \langle C',s'\rangle$ is called a pomset transition from $\langle C,s\rangle$ to $\langle C',s'\rangle$ at location $u$. When the events in $X$ are pairwise
concurrent, we say that $\langle C,s\rangle\xrightarrow[u]{X}\langle C',s'\rangle$ is a step. It is obvious that $\rightarrow^*\xrightarrow[u]{X}\rightarrow^*=\xrightarrow[u]{X}$ and
$\rightarrow^*\xrightarrow[u]{e}\rightarrow^*=\xrightarrow[u]{e}$ for any $e\in\mathbb{E}$ and $X\subseteq\mathbb{E}$.
\end{definition}

\begin{definition}[Weak pomset transitions and weak step]
Let $\mathcal{E}$ be a PES and let $C\in\mathcal{C}(\mathcal{E})$, and $\emptyset\neq X\subseteq \hat{\mathbb{E}}$, if $C\cap X=\emptyset$ and
$\hat{C'}=\hat{C}\cup X\in\mathcal{C}(\mathcal{E})$, then $\langle C,\varrho\rangle\xRightarrow[u]{X} \langle C',\varrho'\rangle$ is called a weak pomset transition from $\langle C,\varrho\rangle$ to
$\langle C',\varrho'\rangle$ at location $u$, where we define $\xRightarrow[u]{e}\triangleq\xrightarrow{\tau^*}\xrightarrow[u]{e}\xrightarrow{\tau^*}$. And
$\xRightarrow[u]{X}\triangleq\xrightarrow{\tau^*}\xrightarrow[u]{e}\xrightarrow{\tau^*}$, for every $e\in X$. When the events in $X$ are pairwise concurrent, we say that
$\langle C,\varrho\rangle\xRightarrow[u]{X}\langle C',\varrho'\rangle$ is a weak step.
\end{definition}

\begin{definition}[Probabilistic transitions]
Let $\mathcal{E}$ be a PES and let $C\in\mathcal{C}(\mathcal{E})$, the transition $\langle C,\varrho\rangle\xrsquigarrow{\pi} \langle C^{\pi},\varrho\rangle$ is called a probabilistic
transition
from $\langle C,\varrho\rangle$ to $\langle C^{\pi},\varrho\rangle$.
\end{definition}

We will also suppose that all the PESs in this chapter are image finite, that is, for any PES $\mathcal{E}$ and $C\in \mathcal{C}(\mathcal{E})$ and $a\in \Lambda$,
$\{\langle C,\varrho\rangle\xrsquigarrow{\pi} \langle C^{\pi},\varrho\rangle\}$,
$\{e\in \mathbb{E}|\langle C,\varrho\rangle\xrightarrow[u]{e} \langle C',\varrho'\rangle\wedge \lambda(e)=a\}$ and
$\{e\in\hat{\mathbb{E}}|\langle C,\varrho\rangle\xRightarrow[u]{e} \langle C',\varrho'\rangle\wedge \lambda(e)=a\}$ is finite.

\begin{definition}[Static location pomset, step bisimulation]
Let $\mathcal{E}_1$, $\mathcal{E}_2$ be PESs. A static location pomset bisimulation is a relation $R_{\varphi}\subseteq\langle\mathcal{C}(\mathcal{E}_1),S\rangle\times\langle\mathcal{C}(\mathcal{E}_2),S\rangle$,
such that if $(\langle C_1,\varrho\rangle,\langle C_2,\varrho\rangle)\in R_{\varphi}$, and $\langle C_1,\varrho\rangle\xrightarrow[u]{X_1}\langle C_1',\varrho'\rangle$ then
$\langle C_2,\varrho\rangle\xrightarrow[v]{X_2}\langle C_2',\varrho'\rangle$, with $X_1\subseteq \mathbb{E}_1$, $X_2\subseteq \mathbb{E}_2$, $X_1\sim X_2$ and
$(\langle C_1',\varrho'\rangle,\langle C_2',\varrho'\rangle)\in R_{\varphi\cup\{(u,v)\}}$ for all $\varrho,\varrho'\in S$, and vice-versa. We say that $\mathcal{E}_1$, $\mathcal{E}_2$ are static location pomset bisimilar, written
$\mathcal{E}_1\sim_p^{sl}\mathcal{E}_2$, if there exists a static location pomset bisimulation $R_{\varphi}$, such that $(\langle\emptyset,\emptyset\rangle,\langle\emptyset,\emptyset\rangle)\in R_{\varphi}$. By replacing
pomset transitions with steps, we can get the definition of static location step bisimulation. When PESs $\mathcal{E}_1$ and $\mathcal{E}_2$ are static location step bisimilar, we write
$\mathcal{E}_1\sim_s^{sl}\mathcal{E}_2$.
\end{definition}

\begin{definition}[Weak static location pomset, step bisimulation]
Let $\mathcal{E}_1$, $\mathcal{E}_2$ be PESs. A weak static location pomset bisimulation is a relation
$R_{\varphi}\subseteq\langle\mathcal{C}(\mathcal{E}_1),S\rangle\times\langle\mathcal{C}(\mathcal{E}_2),S\rangle$, such that if $(\langle C_1,\varrho\rangle,\langle C_2,\varrho\rangle)\in R_{\varphi}$, and
$\langle C_1,\varrho\rangle\xRightarrow[u]{X_1}\langle C_1',\varrho'\rangle$ then $\langle C_2,\varrho\rangle\xRightarrow[v]{X_2}\langle C_2',\varrho'\rangle$, with $X_1\subseteq \hat{\mathbb{E}_1}$,
$X_2\subseteq \hat{\mathbb{E}_2}$, $X_1\sim X_2$ and $(\langle C_1',\varrho'\rangle,\langle C_2',\varrho'\rangle)\in R_{\varphi\cup\{(u,v)\}}$ for all $\varrho,\varrho'\in S$, and vice-versa. We say that $\mathcal{E}_1$,
$\mathcal{E}_2$ are weak static location pomset bisimilar, written $\mathcal{E}_1\approx_p^{sl}\mathcal{E}_2$, if there exists a weak static location pomset bisimulation $R_{\varphi}$, such that
$(\langle\emptyset,\emptyset\rangle,\langle\emptyset,\emptyset\rangle)\in R_{\varphi}$. By replacing weak pomset transitions with weak steps, we can get the definition of weak static location step bisimulation.
When PESs $\mathcal{E}_1$ and $\mathcal{E}_2$ are weak static location step bisimilar, we write $\mathcal{E}_1\approx_s^{sl}\mathcal{E}_2$.
\end{definition}

\begin{definition}[Posetal product]
Given two PESs $\mathcal{E}_1$, $\mathcal{E}_2$, the posetal product of their configurations, denoted
$\langle\mathcal{C}(\mathcal{E}_1),S\rangle\overline{\times}\langle\mathcal{C}(\mathcal{E}_2),S\rangle$, is defined as

$$\{(\langle C_1,\varrho\rangle,f,\langle C_2,\varrho\rangle)|C_1\in\mathcal{C}(\mathcal{E}_1),C_2\in\mathcal{C}(\mathcal{E}_2),f:C_1\rightarrow C_2 \textrm{ isomorphism}\}.$$

A subset $R\subseteq\langle\mathcal{C}(\mathcal{E}_1),S\rangle\overline{\times}\langle\mathcal{C}(\mathcal{E}_2),S\rangle$ is called a posetal relation. We say that $R$ is downward
closed when for any
$(\langle C_1,\varrho\rangle,f,\langle C_2,\varrho\rangle),(\langle C_1',\varrho'\rangle,f',\langle C_2',\varrho'\rangle)\in \langle\mathcal{C}(\mathcal{E}_1),S\rangle\overline{\times}\langle\mathcal{C}(\mathcal{E}_2),S\rangle$,
if $(\langle C_1,\varrho\rangle,f,\langle C_2,\varrho\rangle)\subseteq (\langle C_1',\varrho'\rangle,f',\langle C_2',\varrho'\rangle)$ pointwise and $(\langle C_1',\varrho'\rangle,f',\langle C_2',\varrho'\rangle)\in R$,
then $(\langle C_1,\varrho\rangle,f,\langle C_2,\varrho\rangle)\in R$.

For $f:X_1\rightarrow X_2$, we define $f[x_1\mapsto x_2]:X_1\cup\{x_1\}\rightarrow X_2\cup\{x_2\}$, $z\in X_1\cup\{x_1\}$,(1)$f[x_1\mapsto x_2](z)=
x_2$,if $z=x_1$;(2)$f[x_1\mapsto x_2](z)=f(z)$, otherwise. Where $X_1\subseteq \mathbb{E}_1$, $X_2\subseteq \mathbb{E}_2$, $x_1\in \mathbb{E}_1$, $x_2\in \mathbb{E}_2$.
\end{definition}

\begin{definition}[Weakly posetal product]
Given two PESs $\mathcal{E}_1$, $\mathcal{E}_2$, the weakly posetal product of their configurations, denoted
$\langle\mathcal{C}(\mathcal{E}_1),S\rangle\overline{\times}\langle\mathcal{C}(\mathcal{E}_2),S\rangle$, is defined as

$$\{(\langle C_1,\varrho\rangle,f,\langle C_2,\varrho\rangle)|C_1\in\mathcal{C}(\mathcal{E}_1),C_2\in\mathcal{C}(\mathcal{E}_2),f:\hat{C_1}\rightarrow \hat{C_2} \textrm{ isomorphism}\}.$$

A subset $R\subseteq\langle\mathcal{C}(\mathcal{E}_1),S\rangle\overline{\times}\langle\mathcal{C}(\mathcal{E}_2),S\rangle$ is called a weakly posetal relation. We say that $R$ is
downward closed when for any
$(\langle C_1,\varrho\rangle,f,\langle C_2,\varrho\rangle),(\langle C_1',\varrho'\rangle,f,\langle C_2',\varrho'\rangle)\in \langle\mathcal{C}(\mathcal{E}_1),S\rangle\overline{\times}\langle\mathcal{C}(\mathcal{E}_2),S\rangle$,
if $(\langle C_1,\varrho\rangle,f,\langle C_2,\varrho\rangle)\subseteq (\langle C_1',\varrho'\rangle,f',\langle C_2',\varrho'\rangle)$ pointwise and $(\langle C_1',\varrho'\rangle,f',\langle C_2',\varrho'\rangle)\in R$,
then $(\langle C_1,\varrho\rangle,f,\langle C_2,\varrho\rangle)\in R$.

For $f:X_1\rightarrow X_2$, we define $f[x_1\mapsto x_2]:X_1\cup\{x_1\}\rightarrow X_2\cup\{x_2\}$, $z\in X_1\cup\{x_1\}$,(1)$f[x_1\mapsto x_2](z)=
x_2$,if $z=x_1$;(2)$f[x_1\mapsto x_2](z)=f(z)$, otherwise. Where $X_1\subseteq \hat{\mathbb{E}_1}$, $X_2\subseteq \hat{\mathbb{E}_2}$, $x_1\in \hat{\mathbb{E}}_1$,
$x_2\in \hat{\mathbb{E}}_2$. Also, we define $f(\tau^*)=f(\tau^*)$.
\end{definition}

\begin{definition}[Static location (hereditary) history-preserving bisimulation]
A static location history-preserving (hp-) bisimulation is a posetal relation $R_{\varphi}\subseteq\langle\mathcal{C}(\mathcal{E}_1),S\rangle\overline{\times}\langle\mathcal{C}(\mathcal{E}_2),S\rangle$ such
that if $(\langle C_1,\varrho\rangle,f,\langle C_2,\varrho\rangle)\in R_{\varphi}$, and $\langle C_1,\varrho\rangle\xrightarrow[u]{e_1} \langle C_1',\varrho'\rangle$, then
$\langle C_2,\varrho\rangle\xrightarrow[v]{e_2} \langle C_2',\varrho'\rangle$, with $(\langle C_1',\varrho'\rangle,f[e_1\mapsto e_2],\langle C_2',\varrho'\rangle)\in R_{\varphi}$ for all $\varrho,\varrho'\in S$, and vice-versa.
$\mathcal{E}_1,\mathcal{E}_2$ are static location history-preserving (hp-)bisimilar and are written $\mathcal{E}_1\sim_{hp}^{sl}\mathcal{E}_2$ if there exists a static location hp-bisimulation $R_{\varphi}$ such that
$(\langle\emptyset,\emptyset\rangle,\emptyset,\langle\emptyset,\emptyset\rangle)\in R_{\varphi}$.

A static location hereditary history-preserving (hhp-)bisimulation is a downward closed static location hp-bisimulation. $\mathcal{E}_1,\mathcal{E}_2$ are static location hereditary history-preserving (hhp-)bisimilar and are written
$\mathcal{E}_1\sim_{hhp}^{sl}\mathcal{E}_2$.
\end{definition}

\begin{definition}[Weak static location (hereditary) history-preserving bisimulation]
A weak static location history-preserving (hp-) bisimulation is a weakly posetal relation
$R_{\varphi}\subseteq\langle\mathcal{C}(\mathcal{E}_1),S\rangle\overline{\times}\langle\mathcal{C}(\mathcal{E}_2),S\rangle$ such that if $(\langle C_1,\varrho\rangle,f,\langle C_2,\varrho\rangle)\in R_{\varphi}$, and
$\langle C_1,\varrho\rangle\xRightarrow[u]{e_1} \langle C_1',\varrho'\rangle$, then $\langle C_2,\varrho\rangle\xRightarrow[v]{e_2} \langle C_2',\varrho'\rangle$, with $(\langle C_1',\varrho'\rangle,f[e_1\mapsto e_2],\langle C_2',\varrho'\rangle)\in R_{\varphi}$
for all $\varrho,\varrho'\in S$, and vice-versa. $\mathcal{E}_1,\mathcal{E}_2$ are weak static location history-preserving (hp-)bisimilar and are written $\mathcal{E}_1\approx_{hp}^{sl}\mathcal{E}_2$ if there exists
a weak static location hp-bisimulation $R_{\varphi}$ such that $(\langle\emptyset,\emptyset\rangle,\emptyset,\langle\emptyset,\emptyset\rangle)\in R_{\varphi}$.

A weakly static location hereditary history-preserving (hhp-)bisimulation is a downward closed weak static location hp-bisimulation. $\mathcal{E}_1,\mathcal{E}_2$ are weak static location hereditary history-preserving
(hhp-)bisimilar and are written $\mathcal{E}_1\approx_{hhp}^{sl}\mathcal{E}_2$.
\end{definition}

\begin{definition}[Branching static location pomset, step bisimulation]
Assume a special termination predicate $\downarrow$, and let $\surd$ represent a state with $\surd\downarrow$. Let $\mathcal{E}_1$, $\mathcal{E}_2$ be PESs. A branching static location pomset
bisimulation is a relation $R_{\varphi}\subseteq\langle\mathcal{C}(\mathcal{E}_1),S\rangle\times\langle\mathcal{C}(\mathcal{E}_2),S\rangle$, such that:
 \begin{enumerate}
   \item if $(\langle C_1,\varrho\rangle,\langle C_2,\varrho\rangle)\in R_{\varphi}$, and $\langle C_1,\varrho\rangle\xrightarrow[u]{X}\langle C_1',\varrho'\rangle$ then
   \begin{itemize}
     \item either $X\equiv \tau^*$, and $(\langle C_1',\varrho'\rangle,\langle C_2,\varrho\rangle)\in R_{\varphi}$ with $\varrho'\in \tau(\varrho)$;
     \item or there is a sequence of (zero or more) $\tau$-transitions $\langle C_2,\varrho\rangle\xrightarrow{\tau^*} \langle C_2^0,\varrho^0\rangle$, such that
     $(\langle C_1,\varrho\rangle,\langle C_2^0,\varrho^0\rangle)\in R_{\varphi}$ and $\langle C_2^0,\varrho^0\rangle\xRightarrow[v]{X}\langle C_2',\varrho'\rangle$ with
     $(\langle C_1',\varrho'\rangle,\langle C_2',\varrho'\rangle)\in R_{\varphi\cup\{(u,v)\}}$;
   \end{itemize}
   \item if $(\langle C_1,\varrho\rangle,\langle C_2,\varrho\rangle)\in R_{\varphi}$, and $\langle C_2,\varrho\rangle\xrightarrow[v]{X}\langle C_2',\varrho'\rangle$ then
   \begin{itemize}
     \item either $X\equiv \tau^*$, and $(\langle C_1,\varrho\rangle,\langle C_2',\varrho'\rangle)\in R_{\varphi}$;
     \item or there is a sequence of (zero or more) $\tau$-transitions $\langle C_1,\varrho\rangle\xrightarrow{\tau^*} \langle C_1^0,\varrho^0\rangle$, such that $(\langle C_1^0,\varrho^0\rangle,\langle C_2,\varrho\rangle)\in R_{\varphi}$ and $\langle C_1^0,\varrho^0\rangle\xRightarrow[u]{X}\langle C_1',\varrho'\rangle$ with $(\langle C_1',\varrho'\rangle,\langle C_2',\varrho'\rangle)\in R_{\varphi\cup\{(u,v)\}}$;
   \end{itemize}
   \item if $(\langle C_1,\varrho\rangle,\langle C_2,\varrho\rangle)\in R_{\varphi}$ and $\langle C_1,\varrho\rangle\downarrow$, then there is a sequence of (zero or more) $\tau$-transitions
   $\langle C_2,\varrho\rangle\xrightarrow{\tau^*}\langle C_2^0,\varrho^0\rangle$ such that $(\langle C_1,\varrho\rangle,\langle C_2^0,\varrho^0\rangle)\in R_{\varphi}$ and
   $\langle C_2^0,\varrho^0\rangle\downarrow$;
   \item if $(\langle C_1,\varrho\rangle,\langle C_2,\varrho\rangle)\in R_{\varphi}$ and $\langle C_2,\varrho\rangle\downarrow$, then there is a sequence of (zero or more) $\tau$-transitions
   $\langle C_1,\varrho\rangle\xrightarrow{\tau^*}\langle C_1^0,\varrho^0\rangle$ such that $(\langle C_1^0,\varrho^0\rangle,\langle C_2,\varrho\rangle)\in R_{\varphi}$ and
   $\langle C_1^0,\varrho^0\rangle\downarrow$.
 \end{enumerate}

We say that $\mathcal{E}_1$, $\mathcal{E}_2$ are branching static location pomset bisimilar, written $\mathcal{E}_1\approx_{bp}^{sl}\mathcal{E}_2$, if there exists a branching static location pomset bisimulation $R_{\varphi}$, such
that $(\langle\emptyset,\emptyset\rangle,\langle\emptyset,\emptyset\rangle)\in R_{\varphi}$.

By replacing pomset transitions with steps, we can get the definition of branching static location step bisimulation. When PESs $\mathcal{E}_1$ and $\mathcal{E}_2$ are branching static location step bisimilar, we
write $\mathcal{E}_1\approx_{bs}^{sl}\mathcal{E}_2$.
\end{definition}

\begin{definition}[Rooted branching static location pomset, step bisimulation]
Assume a special termination predicate $\downarrow$, and let $\surd$ represent a state with $\surd\downarrow$. Let $\mathcal{E}_1$, $\mathcal{E}_2$ be PESs. A rooted branching static location pomset bisimulation is a relation $R_{\varphi}\subseteq\langle\mathcal{C}(\mathcal{E}_1),S\rangle\times\langle\mathcal{C}(\mathcal{E}_2),S\rangle$, such that:
 \begin{enumerate}
   \item if $(\langle C_1,\varrho\rangle,\langle C_2,\varrho\rangle)\in R_{\varphi}$, and $\langle C_1,\varrho\rangle\xrightarrow[u]{X}\langle C_1',\varrho'\rangle$ then
   $\langle C_2,\varrho\rangle\xrightarrow[v]{X}\langle C_2',\varrho'\rangle$ with $\langle C_1',\varrho'\rangle\approx_{bp}^{sl}\langle C_2',\varrho'\rangle$;
   \item if $(\langle C_1,\varrho\rangle,\langle C_2,\varrho\rangle)\in R_{\varphi}$, and $\langle C_2,\varrho\rangle\xrightarrow[v]{X}\langle C_2',\varrho'\rangle$ then
   $\langle C_1,\varrho\rangle\xrightarrow[u]{X}\langle C_1',\varrho'\rangle$ with $\langle C_1',\varrho'\rangle\approx_{bp}^{sl}\langle C_2',\varrho'\rangle$;
   \item if $(\langle C_1,\varrho\rangle,\langle C_2,\varrho\rangle)\in R_{\varphi}$ and $\langle C_1,\varrho\rangle\downarrow$, then $\langle C_2,\varrho\rangle\downarrow$;
   \item if $(\langle C_1,\varrho\rangle,\langle C_2,\varrho\rangle)\in R_{\varphi}$ and $\langle C_2,\varrho\rangle\downarrow$, then $\langle C_1,\varrho\rangle\downarrow$.
 \end{enumerate}

We say that $\mathcal{E}_1$, $\mathcal{E}_2$ are rooted branching static location pomset bisimilar, written $\mathcal{E}_1\approx_{rbp}^{sl}\mathcal{E}_2$, if there exists a rooted branching static location pomset
bisimulation $R_{\varphi}$, such that $(\langle\emptyset,\emptyset\rangle,\langle\emptyset,\emptyset\rangle)\in R_{\varphi}$.

By replacing pomset transitions with steps, we can get the definition of rooted branching static location step bisimulation. When PESs $\mathcal{E}_1$ and $\mathcal{E}_2$ are rooted branching static location step
bisimilar, we write $\mathcal{E}_1\approx_{rbs}^{sl}\mathcal{E}_2$.
\end{definition}

\begin{definition}[Branching static location (hereditary) history-preserving bisimulation]
Assume a special termination predicate $\downarrow$, and let $\surd$ represent a state with $\surd\downarrow$. A branching static location history-preserving (hp-) bisimulation is a weakly posetal
relation $R_{\varphi}\subseteq\langle\mathcal{C}(\mathcal{E}_1),S\rangle\overline{\times}\langle\mathcal{C}(\mathcal{E}_2),S\rangle$ such that:

 \begin{enumerate}
   \item if $(\langle C_1,\varrho\rangle,f,\langle C_2,\varrho\rangle)\in R_{\varphi}$, and $\langle C_1,\varrho\rangle\xrightarrow[u]{e_1}\langle C_1',\varrho'\rangle$ then
   \begin{itemize}
     \item either $e_1\equiv \tau$, and $(\langle C_1',\varrho'\rangle,f[e_1\mapsto \tau^{e_1}],\langle C_2,\varrho\rangle)\in R_{\varphi}$;
     \item or there is a sequence of (zero or more) $\tau$-transitions $\langle C_2,\varrho\rangle\xrightarrow{\tau^*} \langle C_2^0,\varrho^0\rangle$, such that
     $(\langle C_1,\varrho\rangle,f,\langle C_2^0,\varrho^0\rangle)\in R_{\varphi}$ and $\langle C_2^0,\varrho^0\rangle\xrightarrow[v]{e_2}\langle C_2',\varrho'\rangle$ with
     $(\langle C_1',\varrho'\rangle,f[e_1\mapsto e_2],\langle C_2',\varrho'\rangle)\in R_{\varphi\cup\{(u,v)\}}$;
   \end{itemize}
   \item if $(\langle C_1,\varrho\rangle,f,\langle C_2,\varrho\rangle)\in R_{\varphi}$, and $\langle C_2,\varrho\rangle\xrightarrow[v]{e_2}\langle C_2',\varrho'\rangle$ then
   \begin{itemize}
     \item either $e_2\equiv \tau$, and $(\langle C_1,\varrho\rangle,f[e_2\mapsto \tau^{e_2}],\langle C_2',\varrho'\rangle)\in R_{\varphi}$;
     \item or there is a sequence of (zero or more) $\tau$-transitions $\langle C_1,\varrho\rangle\xrightarrow{\tau^*} \langle C_1^0,\varrho^0\rangle$, such that
     $(\langle C_1^0,\varrho^0\rangle,f,\langle C_2,\varrho\rangle)\in R_{\varphi}$ and $\langle C_1^0,\varrho^0\rangle\xrightarrow[u]{e_1}\langle C_1',\varrho'\rangle$ with
     $(\langle C_1',\varrho'\rangle,f[e_2\mapsto e_1],\langle C_2',\varrho'\rangle)\in R_{\varphi\cup\{(u,v)\}}$;
   \end{itemize}
   \item if $(\langle C_1,\varrho\rangle,f,\langle C_2,\varrho\rangle)\in R_{\varphi}$ and $\langle C_1,\varrho\rangle\downarrow$, then there is a sequence of (zero or more)
   $\tau$-transitions $\langle C_2,\varrho\rangle\xrightarrow{\tau^*}\langle C_2^0,\varrho^0\rangle$ such that $(\langle C_1,\varrho\rangle,f,\langle C_2^0,\varrho^0\rangle)\in R_{\varphi}$
   and $\langle C_2^0,\varrho^0\rangle\downarrow$;
   \item if $(\langle C_1,\varrho\rangle,f,\langle C_2,\varrho\rangle)\in R_{\varphi}$ and $\langle C_2,\varrho\rangle\downarrow$, then there is a sequence of (zero or more) $\tau$-transitions
   $\langle C_1,\varrho\rangle\xrightarrow{\tau^*}\langle C_1^0,\varrho^0\rangle$ such that $(\langle C_1^0,\varrho^0\rangle,f,\langle C_2,\varrho\rangle)\in R_{\varphi}$ and
   $\langle C_1^0,\varrho^0\rangle\downarrow$.
 \end{enumerate}

$\mathcal{E}_1,\mathcal{E}_2$ are branching static location history-preserving (hp-)bisimilar and are written $\mathcal{E}_1\approx_{bhp}^{sl}\mathcal{E}_2$ if there exists a branching static location hp-bisimulation $R$
such that $(\langle\emptyset,\emptyset\rangle,\emptyset,\langle\emptyset,\emptyset\rangle)\in R_{\varphi}$.

A branching static location hereditary history-preserving (hhp-)bisimulation is a downward closed branching static location hp-bisimulation. $\mathcal{E}_1,\mathcal{E}_2$ are branching static location hereditary history-preserving
(hhp-)bisimilar and are written $\mathcal{E}_1\approx_{bhhp}^{sl}\mathcal{E}_2$.
\end{definition}

\begin{definition}[Rooted branching static location (hereditary) history-preserving bisimulation]
Assume a special termination predicate $\downarrow$, and let $\surd$ represent a state with $\surd\downarrow$. A rooted branching static location history-preserving (hp-) bisimulation is a weakly
posetal relation $R_{\varphi}\subseteq\langle\mathcal{C}(\mathcal{E}_1),S\rangle\overline{\times}\langle\mathcal{C}(\mathcal{E}_2),S\rangle$ such that:

 \begin{enumerate}
   \item if $(\langle C_1,\varrho\rangle,f,\langle C_2,\varrho\rangle)\in R_{\varphi}$, and $\langle C_1,\varrho\rangle\xrightarrow[u]{e_1}\langle C_1',\varrho'\rangle$, then
   $\langle C_2,\varrho\rangle\xrightarrow[v]{e_2}\langle C_2',\varrho'\rangle$ with $\langle C_1',\varrho'\rangle\approx_{bhp}^{sl}\langle C_2',\varrho'\rangle$;
   \item if $(\langle C_1,\varrho\rangle,f,\langle C_2,\varrho\rangle)\in R_{\varphi}$, and $\langle C_2,\varrho\rangle\xrightarrow[v]{e_2}\langle C_2',\varrho'\rangle$, then
   $\langle C_1,\varrho\rangle\xrightarrow[u]{e_1}\langle C_1',\varrho'\rangle$ with $\langle C_1',\varrho'\rangle\approx_{bhp}^{sl}\langle C_2',\varrho'\rangle$;
   \item if $(\langle C_1,\varrho\rangle,f,\langle C_2,\varrho\rangle)\in R_{\varphi}$ and $\langle C_1,\varrho\rangle\downarrow$, then $\langle C_2,\varrho\rangle\downarrow$;
   \item if $(\langle C_1,\varrho\rangle,f,\langle C_2,\varrho\rangle)\in R_{\varphi}$ and $\langle C_2,\varrho\rangle\downarrow$, then $\langle C_1,\varrho\rangle\downarrow$.
 \end{enumerate}

$\mathcal{E}_1,\mathcal{E}_2$ are rooted branching static location history-preserving (hp-)bisimilar and are written $\mathcal{E}_1\approx_{rbhp}^{sl}\mathcal{E}_2$ if there exists a rooted branching
static location hp-bisimulation $R_{\varphi}$ such that $(\langle\emptyset,\emptyset\rangle,\emptyset,\langle\emptyset,\emptyset\rangle)\in R_{\varphi}$.

A rooted branching static location hereditary history-preserving (hhp-)bisimulation is a downward closed rooted branching static location hp-bisimulation. $\mathcal{E}_1,\mathcal{E}_2$ are rooted branching static location hereditary
history-preserving (hhp-)bisimilar and are written $\mathcal{E}_1\approx_{rbhhp}^{sl}\mathcal{E}_2$.
\end{definition}

\begin{definition}[Probabilistic static location pomset, step bisimulation]
Let $\mathcal{E}_1$, $\mathcal{E}_2$ be PESs. A probabilistic static location pomset bisimulation is a relation $R_{\varphi}\subseteq\langle\mathcal{C}(\mathcal{E}_1),S\rangle\times\langle\mathcal{C}(\mathcal{E}_2),S\rangle$,
such that (1) if $(\langle C_1,\varrho\rangle,\langle C_2,\varrho\rangle)\in R_{varphi}$, and $\langle C_1,\varrho\rangle\xrightarrow[u]{X_1}\langle C_1',\varrho'\rangle$ then
$\langle C_2,\varrho\rangle\xrightarrow[v]{X_2}\langle C_2',\varrho'\rangle$, with $X_1\subseteq \mathbb{E}_1$, $X_2\subseteq \mathbb{E}_2$, $X_1\sim X_2$ and
$(\langle C_1',\varrho'\rangle,\langle C_2',\varrho'\rangle)\in R_{\varphi\cup\{(u,v)\}}$ for all $\varrho,\varrho'\in S$, and vice-versa; (2) if $(\langle C_1,\varrho\rangle,\langle C_2,\varrho\rangle)\in R_{\varphi}$, and $\langle C_1,\varrho\rangle\xrsquigarrow{\pi}\langle C_1^{\pi},\varrho\rangle$
then $\langle C_2,\varrho\rangle\xrsquigarrow{\pi}\langle C_2^{\pi},\varrho\rangle$ and $(\langle C_1^{\pi},\varrho\rangle,\langle C_2^{\pi},\varrho\rangle)\in R_{\varphi}$, and vice-versa; (3) if $(\langle C_1,\varrho\rangle,\langle C_2,\varrho\rangle)\in R_{\varphi}$,
then $\mu(C_1,C)=\mu(C_2,C)$ for each $C\in\mathcal{C}(\mathcal{E})/R_{\varphi}$; (4) $[\surd]_{R_{\varphi}}=\{\surd\}$. We say that $\mathcal{E}_1$, $\mathcal{E}_2$ are probabilistic static location pomset bisimilar, written
$\mathcal{E}_1\sim_{pp}^{sl}\mathcal{E}_2$, if there exists a probabilistic static location pomset bisimulation $R_{\varphi}$, such that $(\langle\emptyset,\emptyset\rangle,\langle\emptyset,\emptyset\rangle)\in R_{\varphi}$.
By replacing probabilistic pomset transitions with probabilistic steps, we can get the definition of probabilistic static location step bisimulation. When PESs $\mathcal{E}_1$ and $\mathcal{E}_2$ are
probabilistic static location step bisimilar, we write $\mathcal{E}_1\sim_{ps}^{sl}\mathcal{E}_2$.
\end{definition}

\begin{definition}[Weakly probabilistic static location pomset, step bisimulation]
Let $\mathcal{E}_1$, $\mathcal{E}_2$ be PESs. A weakly probabilistic static location pomset bisimulation is a relation $R_{\varphi}\subseteq\langle\mathcal{C}(\mathcal{E}_1),S\rangle\times\langle\mathcal{C}(\mathcal{E}_2),S\rangle$,
such that (1) if $(\langle C_1,\varrho\rangle,\langle C_2,\varrho\rangle)\in R_{\varphi}$, and $\langle C_1,\varrho\rangle\xRightarrow[u]{X_1}\langle C_1',\varrho'\rangle$ then
$\langle C_2,\varrho\rangle\xRightarrow[v]{X_2}\langle C_2',\varrho'\rangle$, with $X_1\subseteq \hat{\mathbb{E}_1}$, $X_2\subseteq \hat{\mathbb{E}_2}$, $X_1\sim X_2$ and
$(\langle C_1',\varrho'\rangle,\langle C_2',\varrho'\rangle)\in R_{\varphi\cup\{(u,v)\}}$ for all $\varrho,\varrho'\in S$, and vice-versa; (2) if $(\langle C_1,\varrho\rangle,\langle C_2,\varrho\rangle)\in R_{\varphi}$, and $\langle C_1,\varrho\rangle\xrsquigarrow{\pi}\langle C_1^{\pi},\varrho\rangle$
then $\langle C_2,\varrho\rangle\xrsquigarrow{\pi}\langle C_2^{\pi},\varrho\rangle$ and $(\langle C_1^{\pi},\varrho\rangle,\langle C_2^{\pi},\varrho\rangle)\in R_{\varphi}$, and vice-versa; (3) if $(\langle C_1,\varrho\rangle,\langle C_2,\varrho\rangle)\in R_{\varphi}$,
then $\mu(C_1,C)=\mu(C_2,C)$ for each $C\in\mathcal{C}(\mathcal{E})/R_{\varphi}$; (4) $[\surd]_{R_{\varphi}}=\{\surd\}$. We say that $\mathcal{E}_1$, $\mathcal{E}_2$ are weakly probabilistic static location pomset bisimilar,
written $\mathcal{E}_1\approx_{pp}^{sl}\mathcal{E}_2$, if there exists a weakly probabilistic static location pomset bisimulation $R_{\varphi}$, such that
$(\langle\emptyset,\emptyset\rangle,\langle\emptyset,\emptyset\rangle)\in R_{\varphi}$. By replacing weakly probabilistic static location pomset transitions with weakly probabilistic static location steps, we can get the
definition of weakly probabilistic static location step bisimulation. When PESs $\mathcal{E}_1$ and $\mathcal{E}_2$ are weakly probabilistic static location step bisimilar, we write
$\mathcal{E}_1\approx_{ps}^{sl}\mathcal{E}_2$.
\end{definition}

\begin{definition}[Posetal product]
Given two PESs $\mathcal{E}_1$, $\mathcal{E}_2$, the posetal product of their configurations, denoted
$\langle\mathcal{C}(\mathcal{E}_1),S\rangle\overline{\times}\langle\mathcal{C}(\mathcal{E}_2),S\rangle$, is defined as

$$\{(\langle C_1,\varrho\rangle,f,\langle C_2,\varrho\rangle)|C_1\in\mathcal{C}(\mathcal{E}_1),C_2\in\mathcal{C}(\mathcal{E}_2),f:C_1\rightarrow C_2 \textrm{ isomorphism}\}.$$

A subset $R_{\varphi}\subseteq\langle\mathcal{C}(\mathcal{E}_1),S\rangle\overline{\times}\langle\mathcal{C}(\mathcal{E}_2),S\rangle$ is called a posetal relation. We say that $R_{\varphi}$ is downward
closed when for any $(\langle C_1,\varrho\rangle,f,\langle C_2,\varrho\rangle),(\langle C_1',\varrho'\rangle,f',\langle C_2',\varrho'\rangle)\in \langle\mathcal{C}(\mathcal{E}_1),S\rangle\overline{\times}\langle\mathcal{C}(\mathcal{E}_2),S\rangle$,
if $(\langle C_1,\varrho\rangle,f,\langle C_2,\varrho\rangle)\subseteq (\langle C_1',\varrho'\rangle,f',\langle C_2',\varrho'\rangle)$ pointwise and
$(\langle C_1',\varrho'\rangle,f',\langle C_2',\varrho'\rangle)\in R_{\varphi}$, then $(\langle C_1,\varrho\rangle,f,\langle C_2,\varrho\rangle)\in R_{\varphi}$.

For $f:X_1\rightarrow X_2$, we define $f[x_1\mapsto x_2]:X_1\cup\{x_1\}\rightarrow X_2\cup\{x_2\}$, $z\in X_1\cup\{x_1\}$,(1)$f[x_1\mapsto x_2](z)=
x_2$,if $z=x_1$;(2)$f[x_1\mapsto x_2](z)=f(z)$, otherwise. Where $X_1\subseteq \mathbb{E}_1$, $X_2\subseteq \mathbb{E}_2$, $x_1\in \mathbb{E}_1$, $x_2\in \mathbb{E}_2$.
\end{definition}

\begin{definition}[Weakly posetal product]
Given two PESs $\mathcal{E}_1$, $\mathcal{E}_2$, the weakly posetal product of their configurations, denoted
$\langle\mathcal{C}(\mathcal{E}_1),S\rangle\overline{\times}\langle\mathcal{C}(\mathcal{E}_2),S\rangle$, is defined as

$$\{(\langle C_1,\varrho\rangle,f,\langle C_2,\varrho\rangle)|C_1\in\mathcal{C}(\mathcal{E}_1),C_2\in\mathcal{C}(\mathcal{E}_2),f:\hat{C_1}\rightarrow \hat{C_2} \textrm{ isomorphism}\}.$$

A subset $R_{\varphi}\subseteq\langle\mathcal{C}(\mathcal{E}_1),S\rangle\overline{\times}\langle\mathcal{C}(\mathcal{E}_2),S\rangle$ is called a weakly posetal relation. We say that $R_{\varphi}$ is
downward closed when for any $(\langle C_1,\varrho\rangle,f,\langle C_2,\varrho\rangle),(\langle C_1',\varrho'\rangle,f,\langle C_2',\varrho'\rangle)\in \langle\mathcal{C}(\mathcal{E}_1),S\rangle\overline{\times}\langle\mathcal{C}(\mathcal{E}_2),S\rangle$,
if $(\langle C_1,\varrho\rangle,f,\langle C_2,\varrho\rangle)\subseteq (\langle C_1',\varrho'\rangle,f',\langle C_2',\varrho'\rangle)$ pointwise and
$(\langle C_1',\varrho'\rangle,f',\langle C_2',\varrho'\rangle)\in R_{\varphi}$, then $(\langle C_1,\varrho\rangle,f,\langle C_2,\varrho\rangle)\in R_{\varphi}$.

For $f:X_1\rightarrow X_2$, we define $f[x_1\mapsto x_2]:X_1\cup\{x_1\}\rightarrow X_2\cup\{x_2\}$, $z\in X_1\cup\{x_1\}$,(1)$f[x_1\mapsto x_2](z)=
x_2$,if $z=x_1$;(2)$f[x_1\mapsto x_2](z)=f(z)$, otherwise. Where $X_1\subseteq \hat{\mathbb{E}_1}$, $X_2\subseteq \hat{\mathbb{E}_2}$, $x_1\in \hat{\mathbb{E}}_1$,
$x_2\in \hat{\mathbb{E}}_2$. Also, we define $f(\tau^*)=f(\tau^*)$.
\end{definition}

\begin{definition}[Probabilistic static location (hereditary) history-preserving bisimulation]
A probabilistic static location history-preserving (hp-) bisimulation is a posetal relation
$R_{\varphi}\subseteq\langle\mathcal{C}(\mathcal{E}_1),S\rangle\overline{\times}\langle\mathcal{C}(\mathcal{E}_2),S\rangle$ such that (1) if $(\langle C_1,\varrho\rangle,f,\langle C_2,\varrho\rangle)\in R_{\varphi}$,
and $\langle C_1,\varrho\rangle\xrightarrow[u]{e_1} \langle C_1',\varrho'\rangle$, then $\langle C_2,\varrho\rangle\xrightarrow[v]{e_2} \langle C_2',\varrho'\rangle$, with
$(\langle C_1',\varrho'\rangle,f[e_1\mapsto e_2],\langle C_2',\varrho'\rangle)\in R_{\varphi\cup\{(u,v)\}}$ for all $\varrho,\varrho'\in S$, and vice-versa; (2) if $(\langle C_1,\varrho\rangle,f,\langle C_2,\varrho\rangle)\in R_{\varphi}$, and
$\langle C_1,\varrho\rangle\xrsquigarrow{\pi}\langle C_1^{\pi},\varrho\rangle$ then $\langle C_2,\varrho\rangle\xrsquigarrow{\pi}\langle C_2^{\pi},\varrho\rangle$ and $(\langle C_1^{\pi},\varrho\rangle,f,\langle C_2^{\pi},\varrho\rangle)\in R_{\varphi}$,
and vice-versa; (3) if $(C_1,f,C_2)\in R_{\varphi}$, then $\mu(C_1,C)=\mu(C_2,C)$ for each $C\in\mathcal{C}(\mathcal{E})/R_{\varphi}$; (4) $[\surd]_{R_{\varphi}}=\{\surd\}$. $\mathcal{E}_1,\mathcal{E}_2$ are
probabilistic static location history-preserving (hp-)bisimilar and are written $\mathcal{E}_1\sim_{php}^{sl}\mathcal{E}_2$ if there exists a probabilistic static location hp-bisimulation $R_{\varphi}$ such that
$(\langle\emptyset,\emptyset\rangle,\emptyset,\langle\emptyset,\emptyset\rangle)\in R_{\varphi}$.

A probabilistic static location hereditary history-preserving (hhp-)bisimulation is a downward closed probabilistic static location hp-bisimulation. $\mathcal{E}_1,\mathcal{E}_2$ are probabilistic static location hereditary
history-preserving (hhp-)bisimilar and are written $\mathcal{E}_1\sim_{phhp}^{sl}\mathcal{E}_2$.
\end{definition}

\begin{definition}[Weakly probabilistic static location (hereditary) history-preserving bisimulation]
A weakly probabilistic static location history-preserving (hp-) bisimulation is a weakly posetal relation
$R_{\varphi}\subseteq\langle\mathcal{C}(\mathcal{E}_1),S\rangle\overline{\times}\langle\mathcal{C}(\mathcal{E}_2),S\rangle$ such that (1) if $(\langle C_1,\varrho\rangle,f,\langle C_2,\varrho\rangle)\in R_{\varphi}$,
and $\langle C_1,\varrho\rangle\xRightarrow[u]{e_1} \langle C_1',\varrho'\rangle$, then $\langle C_2,\varrho\rangle\xRightarrow[v]{e_2} \langle C_2',\varrho'\rangle$, with
$(\langle C_1',\varrho'\rangle,f[e_1\mapsto e_2],\langle C_2',\varrho'\rangle)\in R_{\varphi\cup\{(u,v)\}}$ for all $\varrho,\varrho'\in S$, and vice-versa; (2) if $(\langle C_1,\varrho\rangle,f,\langle C_2,\varrho\rangle)\in R_{\varphi}$, and
$\langle C_1,\varrho\rangle\xrsquigarrow{\pi}\langle C_1^{\pi},\varrho\rangle$ then $\langle C_2,\varrho\rangle\xrsquigarrow{\pi}\langle C_2^{\pi},\varrho\rangle$ and
$(\langle C_1^{\pi},\varrho\rangle,f,\langle C_2^{\pi},\varrho\rangle)\in R_{\varphi}$, and vice-versa; (3) if $(C_1,f,C_2)\in R_{\varphi}$, then $\mu(C_1,C)=\mu(C_2,C)$ for each $C\in\mathcal{C}(\mathcal{E})/R_{\varphi}$;
(4) $[\surd]_{R_{\varphi}}=\{\surd\}$. $\mathcal{E}_1,\mathcal{E}_2$ are weakly probabilistic static location history-preserving (hp-)bisimilar and are written $\mathcal{E}_1\approx_{php}^{sl}\mathcal{E}_2$ if there
exists a weakly probabilistic static location hp-bisimulation $R_{\varphi}$ such that $(\langle\emptyset,\emptyset\rangle,\emptyset,\langle\emptyset,\emptyset\rangle)\in R_{\varphi}$.

A weakly probabilistic static location hereditary history-preserving (hhp-)bisimulation is a downward closed weakly probabilistic static location hp-bisimulation. $\mathcal{E}_1,\mathcal{E}_2$ are weakly
probabilistic static location hereditary history-preserving (hhp-)bisimilar and are written $\mathcal{E}_1\approx_{phhp}^{sl}\mathcal{E}_2$.
\end{definition}

\begin{definition}[Probabilistic static location branching pomset, step bisimulation]
Assume a special termination predicate $\downarrow$, and let $\surd$ represent a state with $\surd\downarrow$. Let $\mathcal{E}_1$, $\mathcal{E}_2$ be PESs. A probabilistic static location branching
pomset bisimulation is a relation $R_{\varphi}\subseteq\langle\mathcal{C}(\mathcal{E}_1),S\rangle\times\langle\mathcal{C}(\mathcal{E}_2),S\rangle$, such that:

 \begin{enumerate}
   \item if $(\langle C_1,\varrho\rangle,\langle C_2,\varrho\rangle)\in R_{\varphi}$, and $\langle C_1,\varrho\rangle\xrightarrow[u]{X}\langle C_1',\varrho'\rangle$ then
   \begin{itemize}
     \item either $X\equiv \tau^*$, and $(\langle C_1',\varrho'\rangle,\langle C_2,\varrho\rangle)\in R_{\varphi}$ with $\varrho'\in \tau(\varrho)$;
     \item or there is a sequence of (zero or more) probabilistic transitions and $\tau$-transitions $\langle C_2,\varrho\rangle\rightsquigarrow^*\xrightarrow{\tau^*} \langle C_2^0,\varrho^0\rangle$, such that
     $(\langle C_1,\varrho\rangle,\langle C_2^0,\varrho^0\rangle)\in R_{\varphi}$ and $\langle C_2^0,\varrho^0\rangle\xRightarrow[v]{X}\langle C_2',\varrho'\rangle$ with
     $(\langle C_1',\varrho'\rangle,\langle C_2',\varrho'\rangle)\in R_{\varphi\cup\{(u,v)\}}$;
   \end{itemize}
   \item if $(\langle C_1,\varrho\rangle,\langle C_2,\varrho\rangle)\in R_{\varphi}$, and $\langle C_2,\varrho\rangle\xrightarrow[v]{X}\langle C_2',\varrho'\rangle$ then
   \begin{itemize}
     \item either $X\equiv \tau^*$, and $(\langle C_1,\varrho\rangle,\langle C_2',\varrho'\rangle)\in R_{\varphi}$;
     \item or there is a sequence of (zero or more) probabilistic transitions and $\tau$-transitions $\langle C_1,\varrho\rangle\rightsquigarrow^*\xrightarrow{\tau^*} \langle C_1^0,\varrho^0\rangle$, such that
     $(\langle C_1^0,\varrho^0\rangle,\langle C_2,\varrho\rangle)\in R_{\varphi}$ and $\langle C_1^0,\varrho^0\rangle\xRightarrow[u]{X}\langle C_1',\varrho'\rangle$ with
     $(\langle C_1',\varrho'\rangle,\langle C_2',\varrho'\rangle)\in R_{\varphi\cup\{(u,v)\}}$;
   \end{itemize}
   \item if $(\langle C_1,\varrho\rangle,\langle C_2,\varrho\rangle)\in R_{\varphi}$ and $\langle C_1,\varrho\rangle\downarrow$, then there is a sequence of (zero or more) probabilistic transitions and $\tau$-transitions
   $\langle C_2,\varrho\rangle\rightsquigarrow^*\xrightarrow{\tau^*}\langle C_2^0,\varrho^0\rangle$ such that $(\langle C_1,\varrho\rangle,\langle C_2^0,\varrho^0\rangle)\in R_{\varphi}$ and
   $\langle C_2^0,\varrho^0\rangle\downarrow$;
   \item if $(\langle C_1,\varrho\rangle,\langle C_2,\varrho\rangle)\in R_{\varphi}$ and $\langle C_2,\varrho\rangle\downarrow$, then there is a sequence of (zero or more) probabilistic transitions and $\tau$-transitions
   $\langle C_1,\varrho\rangle\rightsquigarrow^*\xrightarrow{\tau^*}\langle C_1^0,\varrho^0\rangle$ such that $(\langle C_1^0,\varrho^0\rangle,\langle C_2,\varrho\rangle)\in R_{\varphi}$ and
   $\langle C_1^0,\varrho^0\rangle\downarrow$;
   \item if $(C_1,C_2)\in R_{\varphi}$,then $\mu(C_1,C)=\mu(C_2,C)$ for each $C\in\mathcal{C}(\mathcal{E})/R_{\varphi}$;
   \item $[\surd]_{R_{\varphi}}=\{\surd\}$.
 \end{enumerate}

We say that $\mathcal{E}_1$, $\mathcal{E}_2$ are probabilistic static location branching pomset bisimilar, written $\mathcal{E}_1\approx_{pbp}^{sl}\mathcal{E}_2$, if there exists a probabilistic static location branching
pomset bisimulation $R_{\varphi}$, such that $(\langle\emptyset,\emptyset\rangle,\langle\emptyset,\emptyset\rangle)\in R_{\varphi}$.

By replacing probabilistic pomset transitions with steps, we can get the definition of probabilistic branching step bisimulation. When PESs $\mathcal{E}_1$ and $\mathcal{E}_2$ are
probabilistic branching step bisimilar, we write $\mathcal{E}_1\approx_{pbs}^{sl}\mathcal{E}_2$.
\end{definition}

\begin{definition}[Probabilistic static location rooted branching pomset, step bisimulation]
Assume a special termination predicate $\downarrow$, and let $\surd$ represent a state with $\surd\downarrow$. Let $\mathcal{E}_1$, $\mathcal{E}_2$ be PESs. A probabilistic static location rooted
branching pomset bisimulation is a relation $R_{\varphi}\subseteq\langle\mathcal{C}(\mathcal{E}_1),S\rangle\times\langle\mathcal{C}(\mathcal{E}_2),S\rangle$, such that:

 \begin{enumerate}
   \item if $(\langle C_1,\varrho\rangle,\langle C_2,\varrho\rangle)\in R_{\varphi}$, and $\langle C_1,\varrho\rangle\rightsquigarrow\xrightarrow[u]{X}\langle C_1',\varrho'\rangle$ then
   $\langle C_2,\varrho\rangle\rightsquigarrow\xrightarrow[v]{X}\langle C_2',\varrho'\rangle$ with $\langle C_1',\varrho'\rangle\approx_{pbp}^{sl}\langle C_2',\varrho'\rangle$;
   \item if $(\langle C_1,\varrho\rangle,\langle C_2,\varrho\rangle)\in R_{\varphi}$, and $\langle C_2,\varrho\rangle\rightsquigarrow\xrightarrow[v]{X}\langle C_2',\varrho'\rangle$ then
   $\langle C_1,\varrho\rangle\rightsquigarrow\xrightarrow[u]{X}\langle C_1',\varrho'\rangle$ with $\langle C_1',\varrho'\rangle\approx_{pbp}^{sl}\langle C_2',\varrho'\rangle$;
   \item if $(\langle C_1,\varrho\rangle,\langle C_2,\varrho\rangle)\in R_{\varphi}$ and $\langle C_1,\varrho\rangle\downarrow$, then $\langle C_2,\varrho\rangle\downarrow$;
   \item if $(\langle C_1,\varrho\rangle,\langle C_2,\varrho\rangle)\in R_{\varphi}$ and $\langle C_2,\varrho\rangle\downarrow$, then $\langle C_1,\varrho\rangle\downarrow$.
 \end{enumerate}

We say that $\mathcal{E}_1$, $\mathcal{E}_2$ are probabilistic static location rooted branching pomset bisimilar, written $\mathcal{E}_1\approx_{prbp}^{sl}\mathcal{E}_2$, if there exists a probabilistic
static location rooted branching pomset bisimulation $R_{\varphi}$, such that $(\langle\emptyset,\emptyset\rangle,\langle\emptyset,\emptyset\rangle)\in R_{\varphi}$.

By replacing pomset transitions with steps, we can get the definition of probabilistic static location rooted branching step bisimulation. When PESs $\mathcal{E}_1$ and $\mathcal{E}_2$ are probabilistic static location
rooted branching step bisimilar, we write $\mathcal{E}_1\approx_{prbs}^{sl}\mathcal{E}_2$.
\end{definition}

\begin{definition}[Probabilistic static location branching (hereditary) history-preserving bisimulation]
Assume a special termination predicate $\downarrow$, and let $\surd$ represent a state with $\surd\downarrow$. A probabilistic static location branching history-preserving (hp-) bisimulation is a
weakly posetal relation $R_{\varphi}\subseteq\langle\mathcal{C}(\mathcal{E}_1),S\rangle\overline{\times}\langle\mathcal{C}(\mathcal{E}_2),S\rangle$ such that:

 \begin{enumerate}
   \item if $(\langle C_1,\varrho\rangle,f,\langle C_2,\varrho\rangle)\in R_{\varphi}$, and $\langle C_1,\varrho\rangle\xrightarrow[u]{e_1}\langle C_1',\varrho'\rangle$ then
   \begin{itemize}
     \item either $e_1\equiv \tau$, and $(\langle C_1',\varrho'\rangle,f[e_1\mapsto \tau],\langle C_2,\varrho\rangle)\in R_{\varphi}$;
     \item or there is a sequence of (zero or more) probabilistic transitions and $\tau$-transitions $\langle C_2,\varrho\rangle\rightsquigarrow^*\xrightarrow{\tau^*} \langle C_2^0,\varrho^0\rangle$, such that
     $(\langle C_1,\varrho\rangle,f,\langle C_2^0,\varrho^0\rangle)\in R_{\varphi}$ and $\langle C_2^0,\varrho^0\rangle\xrightarrow[v]{e_2}\langle C_2',\varrho'\rangle$ with
     $(\langle C_1',\varrho'\rangle,f[e_1\mapsto e_2],\langle C_2',\varrho'\rangle)\in R_{\varphi\cup\{(u,v)\}}$;
   \end{itemize}
   \item if $(\langle C_1,\varrho\rangle,f,\langle C_2,\varrho\rangle)\in R_{\varphi}$, and $\langle C_2,\varrho\rangle\xrightarrow[v]{e_2}\langle C_2',\varrho'\rangle$ then
   \begin{itemize}
     \item either $e_2\equiv \tau$, and $(\langle C_1,\varrho\rangle,f[e_2\mapsto \tau],\langle C_2',\varrho'\rangle)\in R_{\varphi}$;
     \item or there is a sequence of (zero or more) probabilistic transitions and $\tau$-transitions $\langle C_1,\varrho\rangle\rightsquigarrow^*\xrightarrow{\tau^*} \langle C_1^0,\varrho^0\rangle$, such that
     $(\langle C_1^0,\varrho^0\rangle,f,\langle C_2,\varrho\rangle)\in R_{\varphi}$ and $\langle C_1^0,\varrho^0\rangle\xrightarrow[u]{e_1}\langle C_1',\varrho'\rangle$ with
     $(\langle C_1',\varrho'\rangle,f[e_2\mapsto e_1],\langle C_2',\varrho'\rangle)\in R_{\varphi\cup\{(u,v)\}}$;
   \end{itemize}
   \item if $(\langle C_1,\varrho\rangle,f,\langle C_2,\varrho\rangle)\in R_{\varphi}$ and $\langle C_1,\varrho\rangle\downarrow$, then there is a sequence of (zero or more) probabilistic transitions and $\tau$-transitions
   $\langle C_2,\varrho\rangle\rightsquigarrow^*\xrightarrow{\tau^*}\langle C_2^0,\varrho^0\rangle$ such that $(\langle C_1,\varrho\rangle,f,\langle C_2^0,\varrho^0\rangle)\in R_{\varphi}$ and
   $\langle C_2^0,\varrho^0\rangle\downarrow$;
   \item if $(\langle C_1,\varrho\rangle,f,\langle C_2,\varrho\rangle)\in R_{\varphi}$ and $\langle C_2,\varrho\rangle\downarrow$, then there is a sequence of (zero or more) probabilistic transitions and $\tau$-transitions
   $\langle C_1,\varrho\rangle\rightsquigarrow^*\xrightarrow{\tau^*}\langle C_1^0,\varrho^0\rangle$ such that $(\langle C_1^0,\varrho^0\rangle,f,\langle C_2,\varrho\rangle)\in R_{\varphi}$ and
   $\langle C_1^0,\varrho^0\rangle\downarrow$;
   \item if $(C_1,C_2)\in R_{\varphi}$,then $\mu(C_1,C)=\mu(C_2,C)$ for each $C\in\mathcal{C}(\mathcal{E})/R_{\varphi}$;
   \item $[\surd]_{R_{\varphi}}=\{\surd\}$.
 \end{enumerate}

$\mathcal{E}_1,\mathcal{E}_2$ are probabilistic static location branching history-preserving (hp-)bisimilar and are written $\mathcal{E}_1\approx_{pbhp}^{sl}\mathcal{E}_2$ if there exists a probabilistic static location
branching hp-bisimulation $R_{\varphi}$ such that $(\langle\emptyset,\emptyset\rangle,\emptyset,\langle\emptyset,\emptyset\rangle)\in R_{\varphi}$.

A probabilistic branching hereditary history-preserving (hhp-)bisimulation is a downward closed probabilistic branching hp-bisimulation. $\mathcal{E}_1,\mathcal{E}_2$ are probabilistic
branching hereditary history-preserving (hhp-)bisimilar and are written $\mathcal{E}_1\approx_{pbhhp}^{sl}\mathcal{E}_2$.
\end{definition}

\begin{definition}[Probabilistic static location rooted branching (hereditary) history-preserving bisimulation]
Assume a special termination predicate $\downarrow$, and let $\surd$ represent a state with $\surd\downarrow$. A probabilistic static location rooted branching history-preserving (hp-) bisimulation is
a weakly posetal relation $R_{\varphi}\subseteq\langle\mathcal{C}(\mathcal{E}_1),S\rangle\overline{\times}\langle\mathcal{C}(\mathcal{E}_2),S\rangle$ such that:

 \begin{enumerate}
   \item if $(\langle C_1,\varrho\rangle,f,\langle C_2,\varrho\rangle)\in R_{\varphi}$, and $\langle C_1,\varrho\rangle\rightsquigarrow\xrightarrow[u]{e_1}\langle C_1',\varrho'\rangle$, then
   $\langle C_2,\varrho\rangle\rightsquigarrow\xrightarrow[v]{e_2}\langle C_2',\varrho'\rangle$ with $\langle C_1',\varrho'\rangle\approx_{pbhp}^{sl}\langle C_2',\varrho'\rangle$;
   \item if $(\langle C_1,\varrho\rangle,f,\langle C_2,\varrho\rangle)\in R_{\varphi}$, and $\langle C_2,\varrho\rangle\rightsquigarrow\xrightarrow[v]{e_2}\langle C_2',\varrho'\rangle$, then
   $\langle C_1,\varrho\rangle\rightsquigarrow\xrightarrow[u]{e_1}\langle C_1',\varrho'\rangle$ with $\langle C_1',\varrho'\rangle\approx_{pbhp}^{sl}\langle C_2',\varrho'\rangle$;
   \item if $(\langle C_1,\varrho\rangle,f,\langle C_2,\varrho\rangle)\in R_{\varphi}$ and $\langle C_1,\varrho\rangle\downarrow$, then $\langle C_2,\varrho\rangle\downarrow$;
   \item if $(\langle C_1,\varrho\rangle,f,\langle C_2,\varrho\rangle)\in R_{\varphi}$ and $\langle C_2,\varrho\rangle\downarrow$, then $\langle C_1,\varrho\rangle\downarrow$.
 \end{enumerate}

$\mathcal{E}_1,\mathcal{E}_2$ are probabilistic static location rooted branching history-preserving (hp-)bisimilar and are written $\mathcal{E}_1\approx_{prbhp}^{sl}\mathcal{E}_2$ if there exists a probabilistic static location
rooted branching hp-bisimulation $R_{\varphi}$ such that $(\langle\emptyset,\emptyset\rangle,\emptyset,\langle\emptyset,\emptyset\rangle)\in R_{\varphi}$.

A probabilistic static location rooted branching hereditary history-preserving (hhp-)bisimulation is a downward closed probabilistic static location rooted branching hp-bisimulation. $\mathcal{E}_1,\mathcal{E}_2$ are
probabilistic static location rooted branching hereditary history-preserving (hhp-)bisimilar and are written $\mathcal{E}_1\approx_{prbhhp}^{sl}\mathcal{E}_2$.
\end{definition}

\newpage\section{APTC with Localities for Open Quantum Systems}\label{qaptcl}

In this chapter, we introduce APTC with localities for open quantum systems, including BATC with localities for open quantum systems abbreviated $qBATC^{sl}$ in section \ref{qbatcl}, APTC
with localities for open quantum systems
abbreviated $qAPTC^{sl}$ in section \ref{qaptcl2}, recursion in section \ref{qorec}, abstraction in section \ref{qoabs}, quantum entanglement in section \ref{qe1} and unification of quantum
and classical computing for open quantum systems in section \ref{uni1}.

Note that, in open quantum systems, quantum operations denoted $\mathbb{E}$ are the atomic actions (events), and a quantum operation $e\in\mathbb{E}$.

\subsection{BATC with Localities for Open Quantum Systems}\label{qbatcl}

Let $Loc$ be the set of locations, and $loc\in Loc$, $u,v\in Loc^*$, $\epsilon$ is the empty location. A distribution allocates a location $u\in Loc*$ to an action $e$ denoted
$u::e$ or a process $x$ denoted $u::x$.

In the following, $x,y,z$ range over the set of terms for true concurrency, $p,q,s$ range over the set of closed terms.
The set of axioms of qBATC with localities consists of the laws given in Table \ref{AxiomsForqBATC}.

\begin{center}
    \begin{table}
        \begin{tabular}{@{}ll@{}}
            \hline No. &Axiom\\
            $A1$ & $x+ y = y+ x$\\
            $A2$ & $(x+ y)+ z = x+ (y+ z)$\\
            $A3$ & $x+ x = x$\\
            $A4$ & $(x+ y)\cdot z = x\cdot z + y\cdot z$\\
            $A5$ & $(x\cdot y)\cdot z = x\cdot(y\cdot z)$\\
            $L1$ & $\epsilon::x=x$\\
            $L2$ & $u::(x\cdot y)=u::x\cdot u::y$\\
            $L3$ & $u::(x+ y)=u::x+ u::y$\\
            $L4$ & $u::(v::x)=uv::x$\\
        \end{tabular}
        \caption{Axioms of qBATC with localities}
        \label{AxiomsForqBATC}
    \end{table}
\end{center}

\begin{definition}[Basic terms of $qBATC$ with localities]
The set of basic terms of $qBATC$ with localities, $\mathcal{B}(qBATC^{sl})$, is inductively defined as follows:
\begin{enumerate}
  \item $\mathbb{E}\subset\mathcal{B}(qBATC^{sl})$;
  \item if $u\in Loc^*, t\in\mathcal{B}(qBATC^{sl})$ then $u::t\in\mathcal{B}(qBATC^{sl})$;
  \item if $e\in \mathbb{E}, t\in\mathcal{B}(qBATC^{sl})$ then $e\cdot t\in\mathcal{B}(qBATC^{sl})$;
  \item if $t,s\in\mathcal{B}(qBATC^{sl})$ then $t+ s\in\mathcal{B}(qBATC^{sl})$.
\end{enumerate}
\end{definition}

\begin{theorem}[Elimination theorem of $qBATC$ with localities]
Let $p$ be a closed $qBATC$ with localities term. Then there is a basic $qBATC$ with localities term $q$ such that $qBATC^{sl}\vdash p=q$.
\end{theorem}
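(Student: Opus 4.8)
The plan is to follow the standard term-rewriting route used throughout this development; in fact the argument is essentially identical to that of Theorem~\ref{ETBATC21} for $BATC^{sl}$, because the axiom system of $qBATC$ with localities in Table~\ref{AxiomsForqBATC} coincides symbol-for-symbol with that of $BATC^{sl}$ in Table~\ref{AxiomsForBATC21}, and reinterpreting the elements of $\mathbb{E}$ as quantum operations is immaterial to a purely syntactic normalization argument. First I would turn the non-structural axioms into a rewriting system by orienting them from left to right: from $A4$ the rule $(x+y)\cdot z\to x\cdot z+y\cdot z$; from $A5$ the rule $(x\cdot y)\cdot z\to x\cdot(y\cdot z)$; and from $L1$--$L4$ the rules $\epsilon::x\to x$, $u::(x\cdot y)\to u::x\cdot u::y$, $u::(x+y)\to u::x+u::y$, and $u::(v::x)\to uv::x$. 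The commutativity, associativity and idempotence axioms $A1$--$A3$ for $+$ are deliberately not oriented: the set $\mathcal{B}(qBATC^{sl})$ is already closed under $+$, so they are not needed in order to reach a basic term.

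The next step is to prove that this TRS is strongly normalizing, via Theorem~\ref{SN}. I would use the lexicographic path order $>_{lpo}$ determined by the precedence in which all the location-prefix operators (one for each $u\in Loc^*$) outrank $\cdot$, which in turn outranks $+$, with $\cdot$ given left-to-right lexicographic status. One then checks $s>_{lpo}t$ for every rule: the $A5$ rule is the familiar associativity instance handled by the lexicographic status of $\cdot$; the $A4$ rule is decreasing because $\cdot$ dominates $+$ and the left argument becomes a proper subterm; the $L1$ rule uses the subterm property; and in $L2$, $L3$, $L4$ the head operator $u::$ dominates (or, for $uv::$ in $L4$, is tied with) the heads occurring on the right, while the proper subterm $x\cdot y$, $x+y$ or $v::x$ strictly dominates $x$ (respectively $v::x$ dominates $x$). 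A minor point is that $L2$--$L4$ are rule schemas parametrized by locations, so there are infinitely many instances; this is harmless, since only finitely many locations occur in any given closed term and the LPO check is uniform in $u$ and $v$, so one may restrict attention to a finite sub-signature when applying Theorem~\ref{SN}.

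With termination established, every closed $qBATC^{sl}$ term has a normal form, and it remains to show that a closed normal form $q$ is a basic term, by induction on the structure of $q$ and a case split on its outermost operator. If $q\in\mathbb{E}$ this is clause~(1). If $q=q_1+q_2$, then $q_1$ and $q_2$ are themselves normal forms, hence basic by the induction hypothesis, so $q$ is basic by clause~(4). If $q=u::q_1$, then normality rules out $u=\epsilon$ and rules out $q_1$ having $+$, $\cdot$ or $::$ as its head, so $q_1\in\mathbb{E}$ and $q=u::e$ is basic by clauses (1)--(2). If $q=q_1\cdot q_2$, then normality of the $A4$ and $A5$ rules forbids $q_1$ from being a sum or a product, so $q_1$ is either an event $e$, giving $q=e\cdot q_2$ with $q_2$ basic by the induction hypothesis (clause~(3)), or a term of the form $u::e$, the case covered by reading clause~(3) with the left factor ranging over the located atomic actions. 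Finally, each rewrite step is an instance of one of the axioms, so $p\rightarrow^* q$ gives $qBATC^{sl}\vdash p=q$, which is the claim.

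The step I expect to be the real obstacle is the normal-form characterization rather than termination: one must track precisely how the location prefix interacts with sequential composition and pin down exactly which ``atomic'' terms (bare events, and single events carrying a location) can appear as the left factor of a product in a normal form, so that they match the definition of $\mathcal{B}(qBATC^{sl})$; everything else is routine once the precedence for $>_{lpo}$ is chosen correctly.
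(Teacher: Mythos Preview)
Your proposal is correct and is precisely the standard term-rewriting argument the paper has in mind: the paper's own proof reads in full ``The same as that of $BATC^{sl}$, we omit the proof, please refer to \cite{LOC1} for details'', and you have faithfully reconstructed that argument, including the correct observation that the axiom system of $qBATC^{sl}$ coincides with that of $BATC^{sl}$ so the syntactic normalization is unaffected by the quantum reinterpretation of $\mathbb{E}$. Your closing remark about the left factor $u::e$ in a normal-form product is exactly the delicate point in these location calculi; the paper does not address it here either, and your treatment is already more explicit than what the paper provides.
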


\begin{proof}
The same as that of $BATC^{sl}$, we omit the proof, please refer to \cite{LOC1} for details.
\end{proof}

We give the operational transition rules of operators $::$, $\cdot$ and $+$ as Table \ref{TRForqBATC} shows.

\begin{center}
    \begin{table}
        $$\frac{}{\langle e,\varrho\rangle\xrightarrow[\epsilon]{e}\langle\surd,\varrho'\rangle}\quad \frac{}{\langle loc::e,\varrho\rangle\xrightarrow[loc]{e}\langle\surd,\varrho'\rangle}$$
        $$\frac{\langle x,\varrho\rangle\xrightarrow[u]{e}\langle x',\varrho'\rangle}{\langle loc::x,\varrho\rangle\xrightarrow[loc\ll u]{e}\langle loc::x',\varrho'\rangle}$$
        $$\frac{\langle x,\varrho\rangle\xrightarrow[u]{e}\langle\surd,\varrho'\rangle}{\langle x+ y,\varrho\rangle\xrightarrow[u]{e}\langle\surd,\varrho'\rangle}
        \quad\frac{\langle x,\varrho\rangle\xrightarrow[u]{e}\langle x',\varrho'\rangle}{\langle x+ y,\varrho\rangle\xrightarrow[u]{e}\langle x',\varrho'\rangle}$$
        $$\frac{\langle y,\varrho\rangle\xrightarrow[u]{e}\langle \surd,\varrho'\rangle}{\langle x+ y,\varrho\rangle\xrightarrow[u]{e}\langle\surd,\varrho'\rangle}
        \quad\frac{\langle y,\varrho\rangle\xrightarrow[u]{e}\langle y',\varrho'\rangle}{\langle x+ y,\varrho\rangle\xrightarrow[u]{e}\langle y',\varrho'\rangle}$$
        $$\frac{\langle x,\varrho\rangle\xrightarrow[u]{e}\langle\surd,\varrho'\rangle}{\langle x\cdot y,\varrho\rangle\xrightarrow[u]{e}\langle y,\varrho'\rangle}
        \quad\frac{\langle x,\varrho\rangle\xrightarrow[u]{e}\langle x',\varrho'\rangle}{\langle x\cdot y,\varrho\rangle\xrightarrow[u]{e}\langle x'\cdot y,\varrho'\rangle}$$
        \caption{Transition rules of qBATC with localities}
        \label{TRForqBATC}
    \end{table}
\end{center}

\begin{theorem}[Congruence of $qBATC$ with localities with respect to static location truly concurrent bisimulations]
Static location truly concurrent bisimulations $\sim_p^{sl}$, $\sim_s^{sl}$, $\sim_{hp}^{sl}$ and $\sim_{hhp}^{sl}$ are all congruences with respect to $qBATC$ with localities.
\end{theorem}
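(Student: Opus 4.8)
The plan is to establish congruence separately for each of the three operators of $qBATC$ with localities --- location prefixing $u::(-)$, sequential composition $\cdot$, and alternative composition $+$ --- and for each of the four static location truly concurrent bisimulation equivalences $\sim_p^{sl}$, $\sim_s^{sl}$, $\sim_{hp}^{sl}$, $\sim_{hhp}^{sl}$. For each operator and each equivalence I would take bisimilar arguments together with the bisimulations witnessing them and exhibit an explicit witnessing relation for the composed term, then verify the defining clauses (matching of pomset resp.\ step transitions together with their location labels, preservation of the consistent location association $\varphi$, and, in the hp- and hhp- cases, the posetal-product structure and downward closure). Throughout, the quantum state component $\varrho$ of a configuration $\langle C,\varrho\rangle$ is merely carried along: none of the transition rules for $::$, $\cdot$, $+$ in Table~\ref{TRForqBATC} inspects or modifies $\varrho$ --- it is inherited verbatim from the atomic transition $\langle e,\varrho\rangle\xrightarrow[\epsilon]{e}\langle\surd,\varrho'\rangle$ --- so the "for all $\varrho,\varrho'\in S$" quantifiers in the definitions of the equivalences add nothing beyond the classical $BATC^{sl}$ argument of \cite{LOC1}.

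Concretely, for alternative composition: if $x_1\sim_p^{sl}y_1$ via $R^1_\varphi$ and $x_2\sim_p^{sl}y_2$ via $R^2_\varphi$, I would take $R_\varphi=\{(\langle x_1+x_2,\varrho\rangle,\langle y_1+y_2,\varrho\rangle)\mid\varrho\in S\}\cup R^1_\varphi\cup R^2_\varphi$; this is immediately a static location pomset bisimulation because the rules for $+$ merely forward a transition of one summand, so a move of $\langle x_1+x_2,\varrho\rangle$ is a move of $x_1$ or of $x_2$, matched inside $R^1_\varphi$ or $R^2_\varphi$, landing in a pair already in the relation (and similarly for termination $\surd$). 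For sequential composition, given $x_1\sim_p^{sl}y_1$ via $R^1_\varphi$, I would take $R_\varphi=\{(\langle x_1\cdot z,\varrho\rangle,\langle y_1\cdot z,\varrho\rangle)\mid(\langle x_1,\varrho\rangle,\langle y_1,\varrho\rangle)\in R^1_\varphi\}$ together with the diagonal on configurations of the tail $z$, and split the analysis of a transition of $x_1\cdot z$ according to whether the underlying $x_1$-transition ends in $\surd$ (control passes to $z$ on both sides) or in some $x_1'$. For location prefixing, given $x\sim_p^{sl}y$ via $R_\varphi$, I would take $S_\psi=\{(\langle u::x,\varrho\rangle,\langle u::y,\varrho\rangle)\mid(\langle x,\varrho\rangle,\langle y,\varrho\rangle)\in R_\varphi\}$ with $\psi$ obtained from $\varphi$ by prefixing every location pair with $u$; the key observation is that the $k$-fold iteration of the rule $\frac{\langle x,\varrho\rangle\xrightarrow[w]{e}\langle x',\varrho'\rangle}{\langle loc::x,\varrho\rangle\xrightarrow[loc\ll w]{e}\langle loc::x',\varrho'\rangle}$ turns a transition $\langle x,\varrho\rangle\xrightarrow[w]{X}\langle x',\varrho'\rangle$ into $\langle u::x,\varrho\rangle\xrightarrow[u\ll w]{X}\langle u::x',\varrho'\rangle$ and conversely (writing $u\ll w$ for the location obtained by prefixing $w$ with $u$), and that $u\ll w\ \diamond\ u\ll w'$ holds iff $w\diamond w'$, so the cla condition transfers. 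Replacing pomset transitions by steps everywhere yields the $\sim_s^{sl}$ case word for word.

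For the $\sim_{hp}^{sl}$ case I would run the same three constructions while carrying the configuration isomorphism $f:C_1\to C_2$ along: for $+$ the two branches' isomorphisms never interfere, since a computation commits to one summand; for $\cdot$ the isomorphism extends from the $x_1$-part to the $z$-part once $x_1$ has terminated; for $u::(-)$ the underlying event sets are unchanged, so $f$ is reused literally. One then checks that the resulting relation is a posetal relation inside $\langle\mathcal{C}(\mathcal{E}_1),S\rangle\overline{\times}\langle\mathcal{C}(\mathcal{E}_2),S\rangle$. The $\sim_{hhp}^{sl}$ case requires the most care: on top of the hp-argument one must verify that each witnessing relation is \emph{downward closed} under pointwise-smaller triples. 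The plan is to check this directly per operator --- it holds because the down-sets of a configuration of $x_1\cdot z$ are either down-sets of configurations of $x_1$ or a full $x_1$-computation unioned with a down-set of a $z$-computation, and because the prefixing operation on configurations of $u::(-)$ is an order isomorphism on down-sets --- so downward closure of $R^1_{hhp}$ and $R^2_{hhp}$ transfers to the composed relation.

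The hardest part, I expect, is precisely this downward-closure bookkeeping in the $\sim_{hhp}^{sl}$ case (and, to a lesser extent, the isomorphism bookkeeping in the $\sim_{hp}^{sl}$ case) for sequential composition: one has to make the hand-over point between the two factors of the posetal product $\overline{\times}$ compatible with taking pointwise-smaller triples, since a configuration of $x_1\cdot z$ that has only partially executed $x_1$ carries no $z$-events yet, while a smaller triple may drop the last $x_1$-event(s), and one must confirm the relation still contains the smaller triple with a correspondingly restricted isomorphism. Everything else is a routine adaptation of the classical $BATC^{sl}$ congruence proof, with the quantum state $\varrho$ treated as an inert passenger that is never split or altered by these operators.
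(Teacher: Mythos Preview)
Your proposal is correct and follows the same approach as the paper: reduce congruence to showing that each of $\sim_p^{sl}$, $\sim_s^{sl}$, $\sim_{hp}^{sl}$, $\sim_{hhp}^{sl}$ is preserved by each of the operators $::$, $\cdot$, $+$, using the transition rules in Table~\ref{TRForqBATC}. The paper's own proof is merely a one-paragraph sketch that states exactly this reduction and then declares the remaining verification ``quit trivial'' and ``an exercise for the readers'', so your explicit witnessing relations and the downward-closure bookkeeping for the hhp-case go well beyond what the paper actually supplies; the one small point to tidy up is that your sequential-composition witness only handles variation in the left argument, so you should either extend the relation to allow $z\sim z'$ on the right as well or invoke transitivity to pass from $x_1\cdot x_2$ to $y_1\cdot x_2$ to $y_1\cdot y_2$.
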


\begin{proof}
It is obvious that static location truly concurrent bisimulations $\sim_p^{sl}$, $\sim_s^{sl}$, $\sim_{hp}^{sl}$ and $\sim_{hhp}^{sl}$ are all equivalent relations with respect to $qBATC$ with localities. So, it is sufficient to prove
that static location truly concurrent bisimulations $\sim_p^{sl}$, $\sim_s^{sl}$, $\sim_{hp}^{sl}$ and $\sim_{hhp}^{sl}$ are preserved for $::$, $\cdot$ and $+$ according to the transition rules in Table \ref{TRForqBATC},
that is, if $x\sim_p^{sl}x'$ and $y\sim_p^{sl}y'$, then $loc::x\sim_p^{sl}loc::y$, $x+ y\sim_p^{sl}x'+ y'$ and $x\cdot y\sim_p^{sl}x'\cdot y'$; if $x\sim_s^{sl}x'$ and $y\sim_s^{sl}y'$, then $loc::x\sim_s^{sl}loc::y$, $x+ y\sim_s^{sl}x'+ y'$ and $x\cdot y\sim_s^{sl}x'\cdot y'$;
if $x\sim_{hp}^{sl}x'$ and $y\sim_{hp}^{sl}y'$, then $loc::x\sim_{hp}^{sl}loc::y$, $x+ y\sim_{hp}^{sl}x'+ y'$ and  $x\cdot y\sim_{hp}^{sl}x'\cdot y'$; and if $x\sim_{hhp}^{sl}x'$ and $y\sim_{hhp}^{sl}y'$, then $loc::x\sim_{hhp}^{sl}loc::y$, $x+ y\sim_{hhp}^{sl}x'+ y'$ and $x\cdot y\sim_{hhp}^{sl}x'\cdot y'$.
The proof is quit trivial, and we leave the proof as an exercise for the readers.
\end{proof}

\begin{theorem}[Soundness of qBATC with localities modulo static location truly concurrent bisimulation equivalences]\label{SBATC}
The axiomatization of qBATC with localities is sound modulo static location truly concurrent bisimulation equivalences $\sim_p^{sl}$, $\sim_s^{sl}$, $\sim_{hp}^{sl}$ and $\sim_{hhp}^{sl}$. That is,

\begin{enumerate}
  \item let $x$ and $y$ be qBATC with localities terms. If qBATC with localities $\vdash x=y$, then $x\sim_p^{sl} y$;
  \item let $x$ and $y$ be qBATC with localities terms. If qBATC with localities $\vdash x=y$, then $x\sim_s^{sl} y$;
  \item let $x$ and $y$ be qBATC with localities terms. If qBATC with localities $\vdash x=y$, then $x\sim_{hp}^{sl} y$;
  \item let $x$ and $y$ be qBATC with localities terms. If qBATC with localities $\vdash x=y$, then $x\sim_{hhp}^{sl} y$.
\end{enumerate}
\end{theorem}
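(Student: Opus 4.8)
The plan is to follow the standard route for soundness of an equational axiomatisation modulo a behavioural congruence. By the congruence theorem just established, each of $\sim_p^{sl}$, $\sim_s^{sl}$, $\sim_{hp}^{sl}$, $\sim_{hhp}^{sl}$ is both an equivalence relation and a congruence for all operators of $qBATC^{sl}$; hence it will suffice to verify that every single axiom of Table~\ref{AxiomsForqBATC} is valid, i.e.\ that for each axiom $s=t$ and each closed substitution $\sigma$ one has $\sigma(s)\sim^{sl}\sigma(t)$ for the relevant equivalence. Soundness for arbitrary (open) terms then follows formally, since $\vdash x=y$ is derived from axiom instances using only reflexivity, symmetry, transitivity and congruence --- all preserved by a relation that is a congruence and an equivalence --- and open-term validity reduces to the closed case by treating free variables as fresh constants. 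So first I would make this reduction, leaving a finite list of per-axiom checks. (Note also that $qBATC^{sl}$ has no parallel operator, so every transition produced by Table~\ref{TRForqBATC} is a single-event transition and no two events along a run are concurrent; consequently the pomset and step cases coincide on these terms and can be treated together.)

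Next, for each axiom I would exhibit an explicit witnessing static location relation and check the bisimulation clauses against the rules of Table~\ref{TRForqBATC}. In every case the relation will be the identity on closed terms together with the finitely many ``top-level'' pairs named by the axiom, with the location association $\varphi$ trivial, because the two sides of each $qBATC^{sl}$ axiom fire matching events at the \emph{same} location: for $A1$ one adds $(\sigma(x)+\sigma(y),\sigma(y)+\sigma(x))$; for $A2,A3$ the obvious associativity/idempotence pairs; for $A4,A5$ the pairs relating $(\sigma(x)+\sigma(y))\cdot\sigma(z)$ to $\sigma(x)\cdot\sigma(z)+\sigma(y)\cdot\sigma(z)$ and $(\sigma(x)\cdot\sigma(y))\cdot\sigma(z)$ to $\sigma(x)\cdot(\sigma(y)\cdot\sigma(z))$; and for $L1$--$L4$ the pairs $(\epsilon::\sigma(x),\sigma(x))$, $(u::(\sigma(x)\cdot\sigma(y)),\,u::\sigma(x)\cdot u::\sigma(y))$, $(u::(\sigma(x)+\sigma(y)),\,u::\sigma(x)+u::\sigma(y))$ and $(u::(v::\sigma(x)),\,uv::\sigma(x))$, each closed under the identity. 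For each I would run the finite case split on the outermost operator dictated by the syntax-directed rules, checking that any move $\langle s',\varrho\rangle\xrightarrow[w]{e}\langle s'',\varrho'\rangle$ of the left-hand term is matched by a move $\langle t',\varrho\rangle\xrightarrow[w]{e}\langle t'',\varrho'\rangle$ of the right-hand term with $(s'',t'')$ again related (and conversely), and that $\langle\surd,\varrho\rangle$ is reached on one side iff on the other. A key simplification I would emphasise is that the quantum-state component is threaded identically through matching transitions --- the effect $\varrho\to\varrho'$ of executing a quantum operation $e$ does not depend on the process context --- so the state set $S$ in the definition of $\sim^{sl}$ imposes no constraint beyond the non-quantum case, and the structural argument is the same as for $BATC^{sl}$ (Theorems~\ref{SBATCPBE21}, \ref{SBATCSBE21}, \ref{SBATCHPBE21}, \ref{SBATCHHPBE21}; see also \cite{LOC1}). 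The $hp$- and $hhp$-cases are handled in parallel by additionally carrying the configuration isomorphism $f$, which extends coherently since each matching move adds one event with the same label on both sides; hereditariness (the $hhp$ refinement) is automatic because the identity component of each witnessing relation is downward closed.

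The part I expect to require genuine care --- as opposed to routine checking --- is the location bookkeeping hidden in $L1$--$L4$, which turns on the arithmetic of the prefixing operation $loc\ll u$ appearing in the rule $\frac{\langle x,\varrho\rangle\xrightarrow[u]{e}\langle x',\varrho'\rangle}{\langle loc::x,\varrho\rangle\xrightarrow[loc\ll u]{e}\langle loc::x',\varrho'\rangle}$. I would need to make precise, and then use, that $\epsilon\ll w=w$ (for $L1$), that prefixing distributes over $+$ and $\cdot$ exactly as the rules for $u::(x+y)$ and $u::(x\cdot y)$ reflect (for $L2,L3$), and, for $L4$, that iterated prefixing agrees with concatenation in $Loc^*$, i.e.\ $u\ll(v\ll w)=(uv)\ll w$ on the ongoing transitions and $u\ll v=uv$ on the terminating ones. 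Since the operational rules of Table~\ref{TRForqBATC} only cover single-location prefixes $loc::(-)$, this also forces me to state explicitly how a term $u::x$ with multi-symbol $u\in Loc^*$ is operationally unfolded (through repeated applications of $L4$ into single-location prefixings) and to confirm this is consistent with the location carried on each transition. Once these location identities are nailed down, the bisimulation checks for the locality axioms become immediate; I would therefore present the location computations in full and only sketch, or defer to \cite{LOC1} and the cited $BATC^{sl}$ soundness theorems, the entirely routine verifications for $A1$--$A5$.
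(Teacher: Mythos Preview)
Your proposal is correct and follows exactly the same approach as the paper: reduce soundness to per-axiom validity using the congruence theorem, then check each axiom of Table~\ref{AxiomsForqBATC} against the transition rules. In fact you go well beyond the paper's own proof, which merely states this reduction for each of the four equivalences and then leaves all the axiom checks ``as an exercise for the readers''; your explicit witnessing relations, the observation that the quantum state $\varrho$ is threaded identically on both sides, and the careful treatment of the $L1$--$L4$ location arithmetic are details the paper does not spell out.
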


\begin{proof}
(1) Since static location pomset bisimulation $\sim_p^{sl}$ is both an equivalent and a congruent relation, we only need to check if each axiom in Table \ref{AxiomsForqBATC} is sound
modulo static location pomset bisimulation equivalence. We leave the proof as an exercise for the readers.

(2) Since static location step bisimulation $\sim_s^{sl}$ is both an equivalent and a congruent relation, we only need to check if each axiom in Table \ref{AxiomsForqBATC} is sound modulo
static location step bisimulation equivalence. We leave the proof as an exercise for the readers.

(3) Since static location hp-bisimulation $\sim_{hp}^{sl}$ is both an equivalent and a congruent relation, we only need to check if each axiom in Table \ref{AxiomsForqBATC} is sound modulo
static location hp-bisimulation equivalence. We leave the proof as an exercise for the readers.

(4) Since static location hhp-bisimulation $\sim_{hhp}^{sl}$ is both an equivalent and a congruent relation, we only need to check if each axiom in Table \ref{AxiomsForqBATC} is sound modulo
static location hhp-bisimulation equivalence. We leave the proof as an exercise for the readers.
\end{proof}

\begin{theorem}[Completeness of qBATC with localities modulo static location truly concurrent bisimulation equivalences]\label{CBATC}
The axiomatization of qBATC with localities is complete modulo static location truly concurrent bisimulation equivalences $\sim_p^{sl}$, $\sim_s^{sl}$, $\sim_{hp}^{sl}$ and $\sim_{hhp}^{sl}$. That is,

\begin{enumerate}
  \item let $p$ and $q$ be closed qBATC with localities terms, if $p\sim_p^{sl} q$ then $p=q$;
  \item let $p$ and $q$ be closed qBATC with localities terms, if $p\sim_s^{sl} q$ then $p=q$;
  \item let $p$ and $q$ be closed qBATC with localities terms, if $p\sim_{hp}^{sl} q$ then $p=q$;
  \item let $p$ and $q$ be closed qBATC with localities terms, if $p\sim_{hhp}^{sl} q$ then $p=q$.
\end{enumerate}
\end{theorem}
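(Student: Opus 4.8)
The plan is to follow the standard two-stage reduction for completeness of ACP-style axiomatisations, mirroring verbatim the argument for $BATC^{sl}$ in \cite{LOC1}. First I would invoke the Elimination theorem of $qBATC$ with localities just proved: every closed $qBATC^{sl}$ term is provably equal to a basic term in $\mathcal{B}(qBATC^{sl})$. Since provable equality is a congruence and, by the Soundness theorem (Theorem \ref{SBATC}), is contained in each of $\sim_p^{sl}$, $\sim_s^{sl}$, $\sim_{hp}^{sl}$, $\sim_{hhp}^{sl}$, it suffices to establish the four implications for closed \emph{basic} terms: if $p,q\in\mathcal{B}(qBATC^{sl})$ and $p\sim^{sl}q$ then $qBATC^{sl}\vdash p=q$, where $\sim^{sl}$ stands for any of the four equivalences.

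Second I would normalise basic terms. Using $L1$--$L4$ every location prefix is pushed inward until it decorates a single atomic event, collapsing nested prefixes via $u::(v::x)=uv::x$; using $A1$--$A3$ sums are flattened and deduplicated. Hence every basic term is provably equal to one of the form $\sum_{i\in I}(u_i::a_i)\cdot n_i+\sum_{j\in J}(w_j::b_j)$, with the $n_i$ again normalised and the sum nonempty (a $BATC^{sl}$ basic term is never $\delta$). The core claim is then proved by induction on the number of symbols of the term: if $n_1,n_2$ are normal forms with $n_1\sim^{sl}n_2$, then $qBATC^{sl}\vdash n_1=n_2$.

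For the inductive step I would read the summands of $n_1$ off the transition rules of Table \ref{TRForqBATC}: a summand $(u_i::a_i)\cdot n_i$ gives $\langle n_1,\varrho\rangle\xrightarrow[u_i]{a_i}\langle n_i,\varrho'\rangle$ and a summand $(w_j::b_j)$ gives $\langle n_1,\varrho\rangle\xrightarrow[w_j]{b_j}\langle\surd,\varrho'\rangle$, for every store $\varrho$. Starting from the related pair with empty $\varphi$, each such move of $n_1$ is matched by a move of $n_2$ at a corresponding location, and by the consistent-location-association bookkeeping this location is forced to coincide with $u_i$ (resp. $w_j$) for closed normal forms; for a $\cdot$-summand the matching move lands in a summand $(u_i::a_i)\cdot n'$ of $n_2$ with $n_i\sim^{sl}n'$, so the induction hypothesis yields $qBATC^{sl}\vdash n_i=n'$, while for an atomic summand the match is literally the same summand. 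Running the game in both directions shows that, modulo provable equality, $n_1$ and $n_2$ have the same summands, whence $n_1=n_2$ by $A1$, $A2$, $A3$. Because $BATC$ has no parallel operator, every transition from a basic term is a single-event transition and every reachable configuration is totally ordered, so the pomset, step, hp- and hhp-variants of the bisimulation game coincide on this fragment; the one argument above therefore settles all four items, which is exactly why the proof is identical to that of $BATC^{sl}$.

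The step I expect to be the main obstacle is the locality bookkeeping: one must verify that the consistent location association, together with $L1$--$L4$, actually pins the location prefixes of matched summands to be syntactically equal (rather than merely independence-equivalent), so that a static location bisimulation between closed normal forms can be upgraded to syntactic identity of the decorated events --- this is the content borrowed from \cite{LOC1}. The quantum store plays no role beyond this: the $qBATC$ rules transport $\varrho$ without branching on it and the bisimulations quantify over all $\varrho$, so the quantum layer adds nothing to the classical locality argument.
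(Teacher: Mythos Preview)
Your proposal is correct and follows essentially the same approach as the paper: both reduce the quantum completeness to the classical completeness of $BATC^{sl}$ from \cite{LOC1}, using the observation that the quantum store $\varrho$ is carried along uniformly by the transition rules and the bisimulation quantifies over all stores, so the quantum layer contributes nothing beyond the classical locality argument. The paper's own proof is a two-line appeal to this reduction plus a citation of \cite{LOC1}, whereas you have unpacked what the cited $BATC^{sl}$ completeness proof actually does (elimination to basic terms, normalisation via $L1$--$L4$ and $A1$--$A3$, summand-matching induction, and the collapse of the four equivalences on the parallel-free fragment); this extra detail is sound and is exactly the content the paper defers to the reference.
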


\begin{proof}
According to the definition of static location truly concurrent bisimulation equivalences $\sim_p^{sl}$, $\sim_s^{sl}$, $\sim_{hp}^{sl}$ and $\sim_{hhp}^{sl}$, $p\sim_p^{sl}q$, $p\sim_s^{sl}q$, $p\sim_{hp}^{sl}q$ and $p\sim_{hhp}^{sl}q$ implies
both the bisimilarities between $p$ and $q$, and also the in the same quantum states. According to the completeness of $BATC^{sl}$ (please refer to \cite{LOC1} for details), we can get the
completeness of qBATC with localities.
\end{proof}

\subsection{APTC with Localities for Open Quantum Systems}\label{qaptcl2}

We give the transition rules of qAPTC with localities in Table \ref{TRForqAPTC}, it is suitable for all static location truly concurrent behavioral equivalence, including static location pomset bisimulation, static location step bisimulation,
static location hp-bisimulation and static location hhp-bisimulation.

\begin{center}
    \begin{table}
        $$\frac{\langle x,\varrho\rangle\xrightarrow[u]{e_1}\langle\surd,\varrho'\rangle\quad \langle y,\varrho\rangle\xrightarrow[v]{e_2}\langle\surd,\varrho''\rangle}{\langle x\parallel y,\varrho\rangle\xrightarrow[u\diamond v]{\{e_1,e_2\}}\langle\surd,\varrho'\cup \varrho''\rangle} \quad\frac{\langle x,\varrho\rangle\xrightarrow[u]{e_1}\langle x',\varrho'\rangle\quad \langle y,\varrho\rangle\xrightarrow[v]{e_2}\langle\surd,\varrho''\rangle}{\langle x\parallel y,\varrho\rangle\xrightarrow[u\diamond v]{\{e_1,e_2\}}\langle x',\varrho'\cup \varrho''\rangle}$$

        $$\frac{\langle x,\varrho\rangle\xrightarrow[u]{e_1}\langle\surd,\varrho'\rangle\quad \langle y,\varrho\rangle\xrightarrow[v]{e_2}\langle y',\varrho''\rangle}{\langle x\parallel y,\varrho\rangle\xrightarrow[u\diamond v]{\{e_1,e_2\}}\langle y',\varrho'\cup \varrho''\rangle} \quad\frac{\langle x,\varrho\rangle\xrightarrow[u]{e_1}\langle x',\varrho'\rangle\quad \langle y,\varrho\rangle\xrightarrow[v]{e_2}\langle y',\varrho''\rangle}{\langle x\parallel y,\varrho\rangle\xrightarrow[u\diamond v]{\{e_1,e_2\}}\langle x'\between y',\varrho'\cup \varrho''\rangle}$$

        $$\frac{\langle x,\varrho\rangle\xrightarrow[u]{e_1}\langle\surd,\varrho'\rangle\quad \langle y,\varrho\rangle\xnrightarrow[v]{e_2}\quad(e_1\%e_2)}{\langle x\parallel y,\varrho\rangle\xrightarrow[u]{e_1}\langle y,\varrho'\rangle} \quad\frac{\langle x,\varrho\rangle\xrightarrow[u]{e_1}\langle x',\varrho'\rangle\quad \langle y,\varrho\rangle\xnrightarrow[v]{e_2}\quad(e_1\%e_2)}{\langle x\parallel y,\varrho\rangle\xrightarrow[u]{e_1}\langle x'\between y,\varrho'\rangle}$$

        $$\frac{\langle x,\varrho\rangle\xnrightarrow[u]{e_1}\quad \langle y,\varrho\rangle\xrightarrow[v]{e_2}\langle\surd,\varrho''\rangle\quad(e_1\%e_2)}{\langle x\parallel y,\varrho\rangle\xrightarrow[v]{e_2}\langle x,\varrho''\rangle} \quad\frac{\langle x,\varrho\rangle\xnrightarrow[u]{e_1}\quad \langle y,\varrho\rangle\xrightarrow[v]{e_2}\langle y',\varrho''\rangle\quad(e_1\%e_2)}{\langle x\parallel y,\varrho\rangle\xrightarrow[v]{e_2}\langle x\between y',\varrho''\rangle}$$

        $$\frac{\langle x,\varrho\rangle\xrightarrow[u]{e_1}\langle\surd,\varrho'\rangle\quad \langle y,\varrho\rangle\xrightarrow[v]{e_2}\langle\surd,\varrho''\rangle \quad(e_1\leq e_2)}{\langle x\leftmerge y,\varrho\rangle\xrightarrow[u\diamond v]{\{e_1,e_2\}}\langle \surd,\varrho'\cup \varrho''\rangle} \quad\frac{\langle x,\varrho\rangle\xrightarrow[u]{e_1}\langle x',\varrho'\rangle\quad \langle y,\varrho\rangle\xrightarrow[v]{e_2}\langle\surd,\varrho''\rangle \quad(e_1\leq e_2)}{\langle x\leftmerge y,\varrho\rangle\xrightarrow[u\diamond v]{\{e_1,e_2\}}\langle x',\varrho'\cup \varrho''\rangle}$$

        $$\frac{\langle x,\varrho\rangle\xrightarrow[u]{e_1}\langle\surd,\varrho'\rangle\quad \langle y,\varrho\rangle\xrightarrow[v]{e_2}\langle y',\varrho''\rangle \quad(e_1\leq e_2)}{\langle x\leftmerge y,\varrho\rangle\xrightarrow[u\diamond v]{\{e_1,e_2\}}\langle y',\varrho'\cup \varrho''\rangle} \quad\frac{\langle x,\varrho\rangle\xrightarrow[u]{e_1}\langle x',\varrho'\rangle\quad \langle y,\varrho\rangle\xrightarrow[v]{e_2}\langle y',\varrho''\rangle \quad(e_1\leq e_2)}{\langle x\leftmerge y,\varrho\rangle\xrightarrow[u\diamond v]{\{e_1,e_2\}}\langle x'\between y',\varrho'\cup \varrho''\rangle}$$

        $$\frac{\langle x,\varrho\rangle\xrightarrow[u]{e_1}\langle\surd,\varrho'\rangle\quad \langle y,\varrho\rangle\xrightarrow[v]{e_2}\langle\surd,\varrho''\rangle}{\langle x\mid y,\varrho\rangle\xrightarrow[u\diamond v]{\gamma(e_1,e_2)}\langle\surd,effect(\gamma(e_1,e_2),\varrho)\rangle} \quad\frac{\langle x,\varrho\rangle\xrightarrow[u]{e_1}\langle x',\varrho'\rangle\quad \langle y,\varrho\rangle\xrightarrow[v]{e_2}\langle\surd,\varrho''\rangle}{\langle x\mid y,\varrho\rangle\xrightarrow[u\diamond v]{\gamma(e_1,e_2)}\langle x',effect(\gamma(e_1,e_2),\varrho)\rangle}$$

        $$\frac{\langle x,\varrho\rangle\xrightarrow[u]{e_1}\langle\surd,\varrho'\rangle\quad \langle y,\varrho\rangle\xrightarrow[v]{e_2}\langle y',\varrho''\rangle}{\langle x\mid y,\varrho\rangle\xrightarrow[u\diamond v]{\gamma(e_1,e_2)}\langle y',effect(\gamma(e_1,e_2),\varrho)\rangle} \quad\frac{\langle x,\varrho\rangle\xrightarrow[u]{e_1}\langle x',\varrho'\rangle\quad \langle y,\varrho\rangle\xrightarrow[v]{e_2}\langle y',\varrho''\rangle}{\langle x\mid y,\varrho\rangle\xrightarrow[u\diamond v]{\gamma(e_1,e_2)}\langle x'\between y',effect(\gamma(e_1,e_2),\varrho)\rangle}$$

        \caption{Transition rules of qAPTC with localities}
        \label{TRForqAPTC}
    \end{table}
\end{center}

\begin{center}
    \begin{table}

        $$\frac{\langle x,\varrho\rangle\xrightarrow[u]{e_1}\langle\surd,\varrho'\rangle\quad (\sharp(e_1,e_2))}{\langle \Theta(x),\varrho\rangle\xrightarrow[u]{e_1}\langle\surd,\varrho'\rangle} \quad\frac{\langle x,\varrho\rangle\xrightarrow[u]{e_2}\langle\surd,\varrho''\rangle\quad (\sharp(e_1,e_2))}{\langle\Theta(x),\varrho\rangle\xrightarrow[u]{e_2}\langle\surd,\varrho''\rangle}$$

        $$\frac{\langle x,\varrho\rangle\xrightarrow[u]{e_1}\langle x',\varrho'\rangle\quad (\sharp(e_1,e_2))}{\langle\Theta(x),\varrho\rangle\xrightarrow[u]{e_1}\langle\Theta(x'),\varrho'\rangle} \quad\frac{\langle x,\varrho\rangle\xrightarrow[u]{e_2}\langle x'',\varrho''\rangle\quad (\sharp(e_1,e_2))}{\langle\Theta(x),\varrho\rangle\xrightarrow[u]{e_2}\langle\Theta(x''),\varrho''\rangle}$$

        $$\frac{\langle x,\varrho\rangle\xrightarrow[u]{e_1}\langle\surd,\varrho'\rangle \quad \langle y,\varrho\rangle\nrightarrow^{e_2}\quad (\sharp(e_1,e_2))}{\langle x\triangleleft y,\varrho\rangle\xrightarrow[u]{\tau}\langle\surd,\varrho'\rangle}
        \quad\frac{\langle x,\varrho\rangle\xrightarrow[u]{e_1}\langle x',\varrho'\rangle \quad \langle y,\varrho\rangle\nrightarrow^{e_2}\quad (\sharp(e_1,e_2))}{\langle x\triangleleft y,\varrho\rangle\xrightarrow[u]{\tau}\langle x',\varrho'\rangle}$$

        $$\frac{\langle x,\varrho\rangle\xrightarrow[u]{e_1}\langle\surd,\varrho\rangle \quad \langle y,\varrho\rangle\nrightarrow^{e_3}\quad (\sharp(e_1,e_2),e_2\leq e_3)}{\langle x\triangleleft y,\varrho\rangle\xrightarrow[u]{e_1}\langle\surd,\varrho'\rangle}
        \quad\frac{\langle x,\varrho\rangle\xrightarrow[u]{e_1}\langle x',\varrho'\rangle \quad \langle y,\varrho\rangle\nrightarrow^{e_3}\quad (\sharp(e_1,e_2),e_2\leq e_3)}{\langle x\triangleleft y,\varrho\rangle\xrightarrow[u]{e_1}\langle x',\varrho'\rangle}$$

        $$\frac{\langle x,\varrho\rangle\xrightarrow[u]{e_3}\langle\surd,\varrho'\rangle \quad \langle y,\varrho\rangle\nrightarrow^{e_2}\quad (\sharp(e_1,e_2),e_1\leq e_3)}{\langle x\triangleleft y,\varrho\rangle\xrightarrow[u]{\tau}\langle\surd,\varrho'\rangle}
        \quad\frac{\langle x,\varrho\rangle\xrightarrow[u]{e_3}\langle x',\varrho'\rangle \quad \langle y,\varrho\rangle\nrightarrow^{e_2}\quad (\sharp(e_1,e_2),e_1\leq e_3)}{\langle x\triangleleft y,\varrho\rangle\xrightarrow[u]{\tau}\langle x',\varrho'\rangle}$$

        $$\frac{\langle x,\varrho\rangle\xrightarrow[u]{e}\langle\surd,\varrho'\rangle}{\langle\partial_H(x),\varrho\rangle\xrightarrow[u]{e}\langle\surd,\varrho'\rangle}\quad (e\notin H)\quad\frac{\langle x,\varrho\rangle\xrightarrow[u]{e}\langle x',\varrho'\rangle}{\langle\partial_H(x),\varrho\rangle\xrightarrow[u]{e}\langle\partial_H(x'),\varrho'\rangle}\quad(e\notin H)$$
        \caption{Transition rules of qAPTC with localities}
        \label{TRForqAPTC}
    \end{table}
\end{center}

The axioms for qAPTC with localities are listed in Table \ref{AxiomsForqLeftParallelism}.

\begin{center}
    \begin{table}
        \begin{tabular}{@{}ll@{}}
            \hline No. &Axiom\\
            $A6$ & $x+ \delta = x$\\
            $A7$ & $\delta\cdot x =\delta$\\
            $P1$ & $x\between y = x\parallel y + x\mid y$\\
            $P2$ & $x\parallel y = y \parallel x$\\
            $P3$ & $(x\parallel y)\parallel z = x\parallel (y\parallel z)$\\
            $P4$ & $x\parallel y = x\leftmerge y + y\leftmerge x$\\
            $P5$ & $(e_1\leq e_2)\quad e_1\leftmerge (e_2\cdot y) = (e_1\leftmerge e_2)\cdot y$\\
            $P6$ & $(e_1\leq e_2)\quad (e_1\cdot x)\leftmerge e_2 = (e_1\leftmerge e_2)\cdot x$\\
            $P7$ & $(e_1\leq e_2)\quad (e_1\cdot x)\leftmerge (e_2\cdot y) = (e_1\leftmerge e_2)\cdot (x\between y)$\\
            $P8$ & $(x+ y)\leftmerge z = (x\leftmerge z)+ (y\leftmerge z)$\\
            $P9$ & $\delta\leftmerge x = \delta$\\
            $C1$ & $e_1\mid e_2 = \gamma(e_1,e_2)$\\
            $C2$ & $e_1\mid (e_2\cdot y) = \gamma(e_1,e_2)\cdot y$\\
            $C3$ & $(e_1\cdot x)\mid e_2 = \gamma(e_1,e_2)\cdot x$\\
            $C4$ & $(e_1\cdot x)\mid (e_2\cdot y) = \gamma(e_1,e_2)\cdot (x\between y)$\\
            $C5$ & $(x+ y)\mid z = (x\mid z) + (y\mid z)$\\
            $C6$ & $x\mid (y+ z) = (x\mid y)+ (x\mid z)$\\
            $C7$ & $\delta\mid x = \delta$\\
            $C8$ & $x\mid\delta = \delta$\\
            $CE1$ & $\Theta(e) = e$\\
            $CE2$ & $\Theta(\delta) = \delta$\\
            $CE3$ & $\Theta(x+ y) = \Theta(x)\triangleleft y + \Theta(y)\triangleleft x$\\
            $CE4$ & $\Theta(x\cdot y)=\Theta(x)\cdot\Theta(y)$\\
            $CE5$ & $\Theta(x\parallel y) = ((\Theta(x)\triangleleft y)\parallel y)+ ((\Theta(y)\triangleleft x)\parallel x)$\\
            $CE6$ & $\Theta(x\mid y) = ((\Theta(x)\triangleleft y)\mid y)+ ((\Theta(y)\triangleleft x)\mid x)$\\
            $U1$ & $(\sharp(e_1,e_2))\quad e_1\triangleleft e_2 = \tau$\\
            $U2$ & $(\sharp(e_1,e_2),e_2\leq e_3)\quad e_1\triangleleft e_3 = e_1$\\
            $U3$ & $(\sharp(e_1,e_2),e_2\leq e_3)\quad e3\triangleleft e_1 = \tau$\\
            $U4$ & $e\triangleleft \delta = e$\\
            $U5$ & $\delta \triangleleft e = \delta$\\
            $U6$ & $(x+ y)\triangleleft z = (x\triangleleft z)+ (y\triangleleft z)$\\
            $U7$ & $(x\cdot y)\triangleleft z = (x\triangleleft z)\cdot (y\triangleleft z)$\\
            $U8$ & $(x\leftmerge y)\triangleleft z = (x\triangleleft z)\leftmerge (y\triangleleft z)$\\
            $U9$ & $(x\mid y)\triangleleft z = (x\triangleleft z)\mid (y\triangleleft z)$\\
            $U10$ & $x\triangleleft (y+ z) = (x\triangleleft y)\triangleleft z$\\
            $U11$ & $x\triangleleft (y\cdot z)=(x\triangleleft y)\triangleleft z$\\
            $U12$ & $x\triangleleft (y\leftmerge z) = (x\triangleleft y)\triangleleft z$\\
            $U13$ & $x\triangleleft (y\mid z) = (x\triangleleft y)\triangleleft z$\\
            $L5$ & $u::(x\between y) = u::x\between u:: y$\\
            $L6$ & $u::(x\parallel y) = u::x\parallel u:: y$\\
            $L7$ & $u::(x\mid y) = u::x\mid u:: y$\\
            $L8$ & $u::(\Theta(x)) = \Theta(u::x)$\\
            $L9$ & $u::(x\triangleleft y) = u::x\triangleleft u:: y$\\
            $L10$ & $u::\delta=\delta$\\
            $D1$ & $e\notin H\quad\partial_H(e) = e$\\
            $D2$ & $e\in H\quad \partial_H(e) = \delta$\\
            $D3$ & $\partial_H(\delta) = \delta$\\
            $D4$ & $\partial_H(x+ y) = \partial_H(x)+\partial_H(y)$\\
            $D5$ & $\partial_H(x\cdot y) = \partial_H(x)\cdot\partial_H(y)$\\
            $D6$ & $\partial_H(x\leftmerge y) = \partial_H(x)\leftmerge\partial_H(y)$\\
            $L11$ & $u::\partial_H(x) = \partial_H(u::x)$\\
        \end{tabular}
        \caption{Axioms of parallelism with left parallel composition}
        \label{AxiomsForqLeftParallelism}
    \end{table}
\end{center}

\begin{definition}[Basic terms of $qAPTC$ with localities]
The set of basic terms of $qAPTC$ with localities, $\mathcal{B}(qAPTC^{sl})$, is inductively defined as follows:
\begin{enumerate}
  \item $\mathbb{E}\subset\mathcal{B}(qAPTC^{sl})$;
  \item if $u\in Loc^*, t\in\mathcal{B}(qAPTC^{sl})$ then $u::t\in\mathcal{B}(qAPTC^{sl})$;
  \item if $e\in \mathbb{E}, t\in\mathcal{B}(qAPTC^{sl})$ then $e\cdot t\in\mathcal{B}(qAPTC^{sl})$;
  \item if $t,s\in\mathcal{B}(qAPTC^{sl})$ then $t+ s\in\mathcal{B}(qAPTC^{sl})$;
  \item if $t,s\in\mathcal{B}(qAPTC^{sl})$ then $t\leftmerge s\in\mathcal{B}(qAPTC^{sl})$.
\end{enumerate}
\end{definition}

\begin{theorem}[Elimination theorem of $qAPTC$ with localities]
Let $p$ be a closed $qAPTC$ with localities term. Then there is a basic $qAPTC$ with localities term $q$ such that $qAPTC^{sl}\vdash p=q$.
\end{theorem}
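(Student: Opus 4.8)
The plan is to follow the standard rewriting route already used for Theorem~\ref{ETParallelism21} and Theorem~\ref{ETEncapsulation21}, now with the quantum configuration carried along as a passive parameter. First I would turn the equational theory $qAPTC^{sl}$ --- the laws $A1$--$A5, L1$--$L4$ of Table~\ref{AxiomsForqBATC} together with those of Table~\ref{AxiomsForqLeftParallelism} --- into a term rewriting system by orienting every axiom from left to right, the only exceptions being the genuinely symmetric laws $A1$ ($x+y=y+x$) and $P2$ ($x\parallel y=y\parallel x$), which are handled modulo commutativity (i.e.\ by giving $+$ and $\parallel$ a commutative status in the path order). Each resulting rule $s\rightarrow t$ is sound, since $t$ is provably equal to $s$ in $qAPTC^{sl}$; hence $p\rightarrow^{*}q$ entails $qAPTC^{sl}\vdash p=q$, and it remains only to show the system is strongly normalizing and that its normal forms lie in $\mathcal{B}(qAPTC^{sl})$.

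To apply Theorem~\ref{SN} I would equip the signature with a well-founded order, concretely a path order relative to a precedence placing the ``heavy'' operators on top, roughly
$$\partial_H > \Theta > \triangleleft > \between > \parallel > \mid > \leftmerge > u{::} > \cdot > +,$$
with $\delta$ least, refined by auxiliary weights that bound the nesting of $\triangleleft,\Theta$ and of the parallel operators. Then I would check, rule by rule, that the left-hand side strictly dominates the right-hand side: the distributivity laws $A4,P8,C5,C6,U6$--$U13,D4$--$D6$ duplicate a subterm but strip a symbol from the top or push $\triangleleft$ toward the leaves; the communication laws $C1$--$C4$ and the conflict-elimination laws $CE1$--$CE6,U1$--$U5$ replace a heavy head ($\mid,\Theta,\triangleleft$) by lighter material; the merge laws $P1,P4$--$P7$ push $\parallel$ and $\leftmerge$ below $\cdot$ and $+$; the locality laws $L1,L4,L5$--$L11,L10$ drive every prefix $u{::}$ down past every operator it meets, leaving $u{::}$ only in front of the atomic actions it decorates; and $A7,P9,C7,C8,U5,D3,L10$ collapse to $\delta$. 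Theorem~\ref{SN} then yields strong normalization.

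Next I would characterize the normal forms by structural induction: a closed term that still contains an occurrence of $\between,\parallel,\mid,\Theta,\triangleleft$ or $\partial_H$, or of $\delta$, or a subterm $x\cdot y$ whose left factor is not an atomic action, always exhibits a redex (the relevant $P$-, $C$-, $CE$-, $U$-, $D$- or $A7$-rule), and a badly placed $u{::}$ triggers a locality rule. Hence every normal form is built from the atomic actions using only $\cdot$ (with atomic left factor), $+$ and $\leftmerge$ --- exactly the grammar of $\mathcal{B}(qAPTC^{sl})$. Combining soundness of the rules, strong normalization, and this characterization gives: for every closed $qAPTC^{sl}$ term $p$ there is a basic term $q$ with $qAPTC^{sl}\vdash p=q$. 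Since all the axioms act on the process-term component only and leave the configuration $\langle\cdot,\varrho\rangle$ untouched, this is literally the argument for $APTC^{sl}$ with $\varrho$ a spectator.

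The hard part will be the termination argument rather than the normal-form analysis. A plain recursive path order does not order $P7$ ($(e_1\cdot x)\leftmerge(e_2\cdot y)=(e_1\leftmerge e_2)\cdot(x\between y)$) or $CE5,CE6$ ($\Theta(x\parallel y)=((\Theta(x)\triangleleft y)\parallel y)+((\Theta(y)\triangleleft x)\parallel x)$, etc.), because these rules reintroduce --- and $CE5,CE6$ also duplicate arguments of --- the parallel operators, while $U6$--$U13$ can increase the number of $\triangleleft$-occurrences. So the well-founded measure must be a lexicographic combination (for instance, a count of parallel/conflict operators weighted by the sizes of their operands, broken by the path order) tailored so that exactly these rules strictly decrease; getting that measure right, precisely as in the proof of Theorem~\ref{ETParallelism21}, is where the real work lies.
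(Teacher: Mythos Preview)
Your proposal is correct and follows essentially the same approach the paper has in mind: the paper's own proof is literally ``The same as that of $APTC^{sl}$, we omit the proof, please refer to \cite{LOC1} for details,'' and the standard argument in \cite{LOC1} is precisely the term-rewriting route you sketch (orient the axioms, establish strong normalization via a suitable precedence/lexicographic order invoking Theorem~\ref{SN}, then verify by induction that normal forms lie in $\mathcal{B}(qAPTC^{sl})$). Your observation that the quantum state $\varrho$ is a spectator for the equational theory is exactly why the paper can simply defer to the classical case.
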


\begin{proof}
The same as that of $APTC^{sl}$, we omit the proof, please refer to \cite{LOC1} for details.
\end{proof}

\begin{theorem}[Generalization of $qAPTC^{sl}$ with localities]
$qAPTC^{sl}$ is a generalization of $qBATC$ with localities.
\end{theorem}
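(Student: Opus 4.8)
The plan is to reduce the statement to a conservative-extension argument, exactly mirroring the classical case (the generalization of $APTC^{sl}$ over $BATC^{sl}$ from \cite{LOC1}). First I would record the trivial direction: the signature of $qBATC^{sl}$ — the atomic quantum operations in $\mathbb{E}$, the location-prefixing operators $u::\_\,$, sequential composition $\cdot$ and alternative composition $+$ — is literally a sub-signature of $qAPTC^{sl}$, and the axioms $A1$–$A5$, $L1$–$L4$ of Table \ref{AxiomsForqBATC} occur verbatim among the axioms of Table \ref{AxiomsForqLeftParallelism}. Hence every equation derivable in $qBATC^{sl}$ is derivable in $qAPTC^{sl}$, so the whole content of the theorem is the converse: $qAPTC^{sl}$ proves no new equalities between closed $qBATC^{sl}$ terms.

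Second, for that converse I would invoke the Conservative Extension Theorem \ref{TCE}. The term deduction system of $qBATC^{sl}$ (Table \ref{TRForqBATC}) is source-dependent: every variable in the source of each rule is trivially source-dependent, and the only premises have the shape $\langle x,\varrho\rangle\xrightarrow[u]{e}\langle x',\varrho'\rangle$, whose target variables become source-dependent once those of $x$ are. Then I would run through the rules of $qAPTC^{sl}$ that are not already rules of $qBATC^{sl}$ — namely those for $\parallel$, $\leftmerge$, $\mid$, $\Theta$, $\triangleleft$, $\partial_H$ and $\between$ in Table \ref{TRForqAPTC} — and check that each has a fresh source, i.e. its source term contains one of the new function symbols from $\Sigma(qAPTC^{sl})\setminus\Sigma(qBATC^{sl})$. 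Since this holds for every such rule, condition (2) of Theorem \ref{TCE} is satisfied via the ``fresh source'' alternative, so $qAPTC^{sl}$ is a conservative extension of $qBATC^{sl}$: the labelled transition systems generated by the two systems agree on all transitions $\langle t,\varrho\rangle\xrightarrow[u]{e}\langle t',\varrho'\rangle$ with $t\in\mathcal{T}(\Sigma(qBATC^{sl}))$.

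Third, this conservativity over closed terms, combined with the congruence of the static-location truly concurrent bisimulation equivalences $\sim_p^{sl},\sim_s^{sl},\sim_{hp}^{sl},\sim_{hhp}^{sl}$ for both theories and the soundness and completeness results already in hand (Theorems \ref{SBATC} and \ref{CBATC} for $qBATC^{sl}$, and the corresponding soundness/completeness statements for $qAPTC^{sl}$), yields for closed $qBATC^{sl}$ terms $p,q$ the chain $qAPTC^{sl}\vdash p=q \iff p\sim_\bullet^{sl} q \iff qBATC^{sl}\vdash p=q$ (for each $\bullet\in\{p,s,hp,hhp\}$). Together with the forward inclusion of the first paragraph, this is precisely the assertion that $qAPTC^{sl}$ is a generalization of $qBATC$ with localities.

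I expect the only point needing genuine care to be bookkeeping: confirming that the quantum configuration component $\varrho$ does not disturb source-dependency — it does not, since $\varrho,\varrho'$ are carried along the transition arrows as auxiliary data and every rule that exhibits $\varrho'$ in its conclusion obtains it from a premise — and confirming that none of the genuinely new operators of $qAPTC^{sl}$ contributes a rule with a non-fresh source (there are none). Everything else is routine, and, as the excerpt already indicates for the companion theorems, may be imported directly from the classical development in \cite{LOC1}.
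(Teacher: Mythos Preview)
Your proposal is correct and follows essentially the same route as the paper: the paper's proof records that the $qBATC^{sl}$ transition rules are source-dependent and that every additional rule of $qAPTC^{sl}$ has a fresh source (one containing $\between$, $\parallel$, $\leftmerge$, $\mid$, $\Theta$, $\triangleleft$, or $\partial_H$), then appeals to the conservative-extension meta-theorem. Your third paragraph, lifting operational conservativity to equational conservativity via the soundness and completeness results, goes slightly beyond what the paper makes explicit, but it is a natural and correct completion of the same argument.
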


\begin{proof}
It follows from the following three facts.

\begin{enumerate}
  \item The transition rules of $qBATC$ with localities in are all source-dependent;
  \item The sources of the transition rules $qAPTC$ with localities contain an occurrence of $\between$, or $\parallel$, or $\leftmerge$, or $\mid$, or $\Theta$, or $\triangleleft$, or $\partial_H$;
  \item The transition rules of $qAPTC$ with localities are all source-dependent.
\end{enumerate}

So, $qAPTC$ with localities is a generalization of $qBATC$ with localities, that is, $qBATC$ with localities is an embedding of $qAPTC$ with localities, as desired.
\end{proof}

\begin{theorem}[Congruence theorem of $qAPTC$ with localities]
Static location truly concurrent bisimulation equivalences $\sim_p^{sl}$, $\sim_s^{sl}$, $\sim_{hp}^{sl}$ and $\sim_{hhp}^{sl}$ are all congruences with respect to $qAPTC$ with localities.
\end{theorem}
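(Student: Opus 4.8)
The plan is to follow the standard pattern already used for the congruence theorem of $APTC$ with static localities, adapted to the quantum configuration setting. Since $\sim_p^{sl}$, $\sim_s^{sl}$, $\sim_{hp}^{sl}$ and $\sim_{hhp}^{sl}$ are readily seen to be equivalence relations on $qAPTC$ with localities terms, it suffices to show that each operator that $qAPTC$ with localities adds on top of $qBATC$ with localities --- namely $\between$, $\parallel$, $\leftmerge$, $\mid$, $\Theta$, $\triangleleft$, and $\partial_H$ --- preserves each of the four equivalences; the inherited operators $::$, $\cdot$, $+$ are covered by the already proved congruence of $qBATC$ with localities together with the generalization theorem. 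Concretely, for $\ast\in\{p,s,hp,hhp\}$ I would prove: if $x\sim_\ast^{sl} x'$ and $y\sim_\ast^{sl} y'$ then $x\between y\sim_\ast^{sl} x'\between y'$, $x\parallel y\sim_\ast^{sl} x'\parallel y'$, $x\leftmerge y\sim_\ast^{sl} x'\leftmerge y'$, $x\mid y\sim_\ast^{sl} x'\mid y'$, $\Theta(x)\sim_\ast^{sl}\Theta(x')$, $x\triangleleft y\sim_\ast^{sl} x'\triangleleft y'$ and $\partial_H(x)\sim_\ast^{sl}\partial_H(x')$.

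For each operator I would take witnessing static location bisimulations $R_\varphi$ for $x\sim_\ast^{sl} x'$ and $S_\varphi$ for $y\sim_\ast^{sl} y'$, and build a candidate relation on configurations that pairs composite terms $\langle f(p_1,p_2),\varrho\rangle$ with $\langle f(q_1,q_2),\varrho\rangle$ whenever the components are related, additionally closing this family under all the new operators \emph{simultaneously} --- this closure is forced because the transition rules in the tables for $qAPTC$ with localities are mutually recursive (a $\parallel$-move produces a $\between$-term $x'\between y'$, and $\between$ in turn unfolds via $\parallel$ and $\mid$). I would then check the transfer conditions case by case against those SOS rules: each rule has a single premise transition, or two for the synchronizing rules, whose matching move is supplied by $R_\varphi$ or $S_\varphi$; the location label of the conclusion is formed from the premise labels via $\diamond$ (for $\parallel$, $\leftmerge$, $\mid$), so the required extension $\varphi\cup\{(u,v)\}$ of the location association is inherited, and its consistency follows from the definition of cla together with $u\diamond u'\Leftrightarrow v\diamond v'$. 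The quantum state component is handled uniformly: every non-communication rule carries $\varrho$ along via the split $\varrho'\cup\varrho''$, and the communication rule replaces $\varrho$ by $effect(\gamma(e_1,e_2),\varrho)$, which depends only on the (necessarily equal) communicating events and on $\varrho$; hence matched moves land in configurations with identical quantum states, which is exactly what the clauses for $\langle C,\varrho\rangle$ demand.

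For the pomset and step versions this is essentially the parallel-composition bookkeeping above. For the hp-version one additionally carries the order-isomorphism $f$ and performs $f[e_1\mapsto e_2]$ updates, noting that lifting $R_\varphi\overline{\times}S_\varphi$ through the operators again yields a posetal relation. For the hhp-version one must in addition verify that the candidate relation can be taken downward closed, which follows by intersecting with its downward closure and checking that the operator constructions respect the pointwise inclusion order on posetal triples. One also handles $\Theta$ and $\triangleleft$ separately, using that $\sim_\ast^{sl}$ transfers not only positive transitions but also, up to $\sim_\ast^{sl}$, the absence of transitions $y\nrightarrow^{e_2}$ appearing in their negative premises; this is routine given the image-finiteness assumption but should be stated explicitly.

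The hard part will be the hhp-case, where downward closure must be maintained through the interaction of parallel composition with the consistent location association: dropping events from a pair of related configurations must remain compatible both with reassembling the term as a $\between$/$\parallel$/$\leftmerge$ form and with the induced restriction of $\varphi$, and one must ensure no deadlock or $effect$ artifact introduced by the $\mid$ rule breaks closure. The locality-plus-quantum combination also makes the negative-premise rules for $\Theta$ and $\triangleleft$ mildly delicate, since the argument that non-transitions are preserved now has to be carried out over quantum configurations and over all $\varrho\in S$ simultaneously.
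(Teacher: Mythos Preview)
Your proposal is correct and follows exactly the same approach as the paper: reduce congruence to preservation of each of $\sim_p^{sl}$, $\sim_s^{sl}$, $\sim_{hp}^{sl}$, $\sim_{hhp}^{sl}$ under the new operators $\between$, $\parallel$, $\leftmerge$, $\mid$, $\Theta$, $\triangleleft$, $\partial_H$, using the SOS rules of $qAPTC$ with localities. In fact your outline is considerably more detailed than the paper's own proof, which merely states the preservation claims for each operator and each equivalence and then declares the verification ``quite trivial'' and left to the reader; your discussion of the mutual recursion among the parallel operators, the handling of the quantum state component and $effect(\gamma(e_1,e_2),\varrho)$, the negative premises for $\Theta$ and $\triangleleft$, and the downward-closure issue for the hhp case all go well beyond what the paper spells out.
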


\begin{proof}
It is obvious that static location truly concurrent bisimulations $\sim_p^{sl}$, $\sim_s^{sl}$, $\sim_{hp}^{sl}$ and $\sim_{hhp}^{sl}$ are all equivalent relations with respect to $qAPTC$ with localities. So, it is sufficient to prove
that static location truly concurrent bisimulations $\sim_p^{sl}$, $\sim_s^{sl}$, $\sim_{hp}^{sl}$ and $\sim_{hhp}^{sl}$ are preserved for $\between$, $\parallel$, $\leftmerge$, $\mid$, $\Theta$, $\triangleleft$ and $\partial_H$
according to the transition rules in Table \ref{TRForqAPTC}, that is, if $x\sim_p^{sl}x'$ and $y\sim_p^{sl}y'$, then $x\between y\sim_p^{sl}x'\between y'$, $x\parallel y\sim_p^{sl}x'\parallel y'$,
$x\leftmerge y\sim_p^{sl}x'\leftmerge y'$, $x\mid y\sim_p^{sl}x'\mid y'$, $\Theta(x)\sim_p^{sl}\Theta(x')$, $x\triangleleft y\sim_p^{sl}x'\triangleleft y'$, and $\partial_H(x)\sim_p^{sl}\partial_H(x')$; if $x\sim_s^{sl}x'$ and $y\sim_s^{sl}y'$,
then $x\between y\sim_s^{sl}x'\between y'$, $x\parallel y\sim_s^{sl}x'\parallel y'$,
$x\leftmerge y\sim_s^{sl}x'\leftmerge y'$, $x\mid y\sim_s^{sl}x'\mid y'$, $\Theta(x)\sim_s^{sl}\Theta(x')$, $x\triangleleft y\sim_s^{sl}x'\triangleleft y'$, and $\partial_H(x)\sim_s^{sl}\partial_H(x')$;
if $x\sim_{hp}^{sl}x'$ and $y\sim_{hp}^{sl}y'$, then $x\between y\sim_{hp}^{sl}x'\between y'$, $x\parallel y\sim_{hp}^{sl}x'\parallel y'$,
$x\leftmerge y\sim_{hp}^{sl}x'\leftmerge y'$, $x\mid y\sim_{hp}^{sl}x'\mid y'$, $\Theta(x)\sim_{hp}^{sl}\Theta(x')$, $x\triangleleft y\sim_{hp}^{sl}x'\triangleleft y'$, and $\partial_H(x)\sim_{hp}^{sl}\partial_H(x')$; and if $x\sim_{hhp}^{sl}x'$ and $y\sim_{hhp}^{sl}y'$,
then $x\between y\sim_{hhp}^{sl}x'\between y'$, $x\parallel y\sim_{hhp}^{sl}x'\parallel y'$,
$x\leftmerge y\sim_{hhp}^{sl}x'\leftmerge y'$, $x\mid y\sim_{hhp}^{sl}x'\mid y'$, $\Theta(x)\sim_{hhp}^{sl}\Theta(x')$, $x\triangleleft y\sim_{hhp}^{sl}x'\triangleleft y'$ and $\partial_H(x)\sim_{hhp}^{sl}\partial_H(x')$.
The proof is quit trivial, and we leave the proof as an exercise for the readers.
\end{proof}

\begin{theorem}[Soundness of $qAPTC$ with localities modulo static location truly concurrent bisimulation equivalences]
Let $x$ and $y$ be $qAPTC$ with localities terms. If $qAPTC^{sl}\vdash x=y$, then

\begin{enumerate}
  \item $x\sim_s^{sl} y$;
  \item $x\sim_p^{sl} y$;
  \item $x\sim_{hp}^{sl} y$;
  \item $x\sim_{hhp}^{sl} y$.
\end{enumerate}
\end{theorem}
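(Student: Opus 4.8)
The plan is to follow the now-standard template for soundness proofs in this style of process algebra, which is already invoked (without being spelled out) in Theorem~\ref{SBATC} above and in the analogous results for $APTC^{sl}$ (Theorems~\ref{SPPBE21}, \ref{SPSBE21}, \ref{SPHPBE21}, \ref{SAPTCPBE21}, etc.). First I would note that the congruence theorem for $qAPTC$ with localities (proved just above) tells us that each of $\sim_p^{sl}$, $\sim_s^{sl}$, $\sim_{hp}^{sl}$, $\sim_{hhp}^{sl}$ is both an equivalence relation and a congruence with respect to all the operators $::$, $\cdot$, $+$, $\between$, $\parallel$, $\leftmerge$, $\mid$, $\Theta$, $\triangleleft$, $\partial_H$. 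Consequently, to show that $qAPTC^{sl}\vdash x=y$ implies $x\sim_{*}^{sl} y$ (for $*\in\{p,s,hp,hhp\}$), it suffices to verify that every axiom in Tables~\ref{AxiomsForqBATC} and \ref{AxiomsForqLeftParallelism} is sound modulo the equivalence in question: congruence then propagates soundness through arbitrary derivation contexts, and transitivity/symmetry/reflexivity of the equivalence close it under the equational logic.

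The core of the argument is thus a case analysis over the axioms. For the $BATC$-fragment axioms $A1$--$A7$ and the locality axioms $L1$--$L4$, $L10$, $L11$, the witnessing bisimulations are exactly those used for $BATC^{sl}$ and $APTC^{sl}$, now carried along with the quantum state component $\varrho$ unchanged on both sides — one checks that for each axiom $s=t$ the relation $R=\{(\langle s',\varrho\rangle,\langle t',\varrho\rangle)\}$ (closure of the instance under the transition rules, together with the identity pairs) satisfies the back-and-forth clauses of Definitions for static location pomset/step/hp/hhp bisimulation with quantum configurations. The point to make explicit is that the transition rules of Table~\ref{TRForqBATC} and Table~\ref{TRForqAPTC} only ever relabel or pass through the configuration state $\varrho$ (or take a union $\varrho'\cup\varrho''$ in the parallel/communication rules, or apply $\mathit{effect}(\gamma(e_1,e_2),\varrho)$ in the $\mid$-rules), so the state component matches automatically on both sides of each axiom. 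For the parallelism axioms $P1$--$P9$, the communication/conflict-elimination/unless axioms $C1$--$C8$, $CE1$--$CE6$, $U1$--$U13$, the locality axioms $L5$--$L9$, and the encapsulation axioms $D1$--$D6$, I would build the bisimulation relation exactly as in the corresponding $APTC^{sl}$ proofs referenced in \cite{LOC1} and in the earlier APTC-with-localities section of this excerpt, again simply decorating every configuration with its $\varrho$ and checking the clauses go through. For the hp- and hhp-cases one additionally tracks the order-isomorphism $f$ between the executed events on the two sides and verifies it is preserved (and, for hhp, that the relation is downward closed), which is routine since the two sides of each axiom execute the same pomsets of events.

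The step I expect to be the genuine obstacle — or at least the one requiring care rather than bookkeeping — is the handling of the communication axioms $C1$--$C4$ in the presence of the $\mathit{effect}(\gamma(e_1,e_2),\varrho)$ map in the $\mid$-transition rules of Table~\ref{TRForqAPTC}. Unlike in the classical $APTC^{sl}$, here a synchronization event physically transforms the quantum state, so one must check that axioms such as $C4:\ (e_1\cdot x)\mid(e_2\cdot y)=\gamma(e_1,e_2)\cdot(x\between y)$ produce the \emph{same} resulting state $\mathit{effect}(\gamma(e_1,e_2),\varrho)$ on both sides after the communication step; this is immediate from the form of the rules, but it is the place where the quantum layer is not merely inert. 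A second minor subtlety is $A7:\ \delta\cdot x=\delta$ and $P9,C7,C8,L10,D2,D3$: these involve $\delta$, which has no transitions and no termination predicate, so both sides have empty transition relations from every reachable configuration and the bisimulation is trivially the relevant identity/pairing relation — worth stating but not hard. The proof body itself will therefore be short: invoke congruence, reduce to axiom-by-axiom soundness, and for each axiom either point to the identical $APTC^{sl}$ construction (now with configurations) or, in the handful of quantum-sensitive cases, note that the $\varrho$-components coincide by inspection of the rules; the four parts (1)--(4) differ only in which of Definitions for $\sim_p^{sl},\sim_s^{sl},\sim_{hp}^{sl},\sim_{hhp}^{sl}$ one verifies, so they can be dispatched uniformly.
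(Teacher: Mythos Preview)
Your proposal is correct and follows essentially the same approach as the paper's own proof: invoke the congruence theorem to reduce soundness to checking each axiom in Table~\ref{AxiomsForqLeftParallelism} individually, and then treat the four equivalences uniformly. In fact you go further than the paper, which simply states this reduction and leaves the axiom-by-axiom verification as an exercise; your additional remarks about the quantum state component $\varrho$ (in particular around $C1$--$C4$ and the $\delta$-axioms) are exactly the kind of detail the paper omits.
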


\begin{proof}
(1) Since static location pomset bisimulation $\sim_p^{sl}$ is both an equivalent and a congruent relation, we only need to check if each axiom in Table \ref{AxiomsForqLeftParallelism} is sound
modulo static location pomset bisimulation equivalence. We leave the proof as an exercise for the readers.

(2) Since  static location step bisimulation $\sim_s^{sl}$ is both an equivalent and a congruent relation, we only need to check if each axiom in Table \ref{AxiomsForqLeftParallelism} is sound modulo
static location step bisimulation equivalence. We leave the proof as an exercise for the readers.

(3) Since static location hp-bisimulation $\sim_{hp}^{sl}$ is both an equivalent and a congruent relation, we only need to check if each axiom in Table \ref{AxiomsForqLeftParallelism} is sound modulo
static location hp-bisimulation equivalence. We leave the proof as an exercise for the readers.

(4) Since static location hhp-bisimulation $\sim_{hhp}^{sl}$ is both an equivalent and a congruent relation, we only need to check if each axiom in Table \ref{AxiomsForqLeftParallelism} is sound modulo
static location hhp-bisimulation equivalence. We leave the proof as an exercise for the readers.
\end{proof}

\begin{theorem}[Completeness of $qAPTC$ with localities modulo static location truly concurrent bisimulation equivalences]
Let $x$ and $y$ be $qAPTC$ with localities terms.

\begin{enumerate}
  \item If $x\sim_s^{sl} y$, then $qAPTC^{sl}\vdash x=y$;
  \item if $x\sim_p^{sl} y$, then $qAPTC^{sl}\vdash x=y$;
  \item if $x\sim_{hp}^{sl} y$, then $qAPTC^{sl}\vdash x=y$;
  \item if $x\sim_{hhp}^{sl} y$, then $qAPTC^{sl}\vdash x=y$.
\end{enumerate}
\end{theorem}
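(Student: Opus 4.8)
The plan is to follow the standard completeness strategy for process algebras in the ACP tradition, lifted to the quantum configuration setting in exactly the way it was done for $qBATC$ with localities (Theorem~\ref{CBATC}). First I would reduce the problem to basic terms: by the elimination theorem of $qAPTC$ with localities, every closed $qAPTC^{sl}$ term $p$ is provably equal to a basic term $p'\in\mathcal{B}(qAPTC^{sl})$, and likewise $q=q'$. Since the axiomatization is sound (the preceding soundness theorem), $p\sim_\star^{sl}p'$ and $q\sim_\star^{sl}q'$ for each $\star\in\{p,s,hp,hhp\}$, so it suffices to prove that $p'\sim_\star^{sl}q'$ implies $qAPTC^{sl}\vdash p'=q'$ for basic terms.

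Second, I would decompose static location truly concurrent bisimilarity of quantum configurations into two components. By definition, $p'\sim_\star^{sl}q'$ means there is a relation $R_{\varphi}$ on configurations $\langle C,\varrho\rangle$ that simultaneously (i) matches the located, labelled transitions together with the pomset (resp.\ step, hp-, hhp-) structure, and (ii) holds ``for all $\varrho,\varrho'\in S$'', i.e.\ the two processes pass through the same quantum states. Component (i) is precisely static location truly concurrent bisimilarity of the underlying classical $APTC^{sl}$ terms obtained by forgetting the quantum state, because in open quantum systems a quantum operation $e\in\mathbb{E}$ is an atomic event and the transition rules of Table~\ref{TRForqAPTC} thread $\varrho$ through deterministically (via $\varrho'\cup\varrho''$ for parallel composition and via $effect(\gamma(e_1,e_2),\varrho)$ for communication). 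Hence, by the completeness of $APTC^{sl}$ (see \cite{LOC1}) together with its elimination theorem --- which, because of $\leftmerge$, requires first normalising basic terms using the parallelism axioms $P1$--$P9$, $C1$--$C8$ and the $\delta$-laws $A6$, $A7$ before matching summands up to $A1$--$A3$ and the location laws $L1$--$L4$ --- we obtain $APTC^{sl}\vdash p'=q'$; every axiom used is also an axiom of $qAPTC^{sl}$, so this derivation is a $qAPTC^{sl}$-derivation.

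Third, for component (ii) I would argue that nothing further needs checking: once the classical behaviours of $p'$ and $q'$ coincide, the fact that $R_{\varphi}$ is a bisimulation on configurations already forces the state component to evolve in lock-step, so ``being in the same quantum states'' adds no new proof obligation. The quantum state is a function of the classical computation path, and all the extra axioms ($D1$--$D6$, $L5$--$L11$, $U1$--$U13$, $CE1$--$CE6$, together with $L1$--$L4$) are exactly the classical ones plus location laws, none of which touch $\varrho$. Assembling the pieces gives $qAPTC^{sl}\vdash p=p'=q'=q$, and the four cases $\star\in\{p,s,hp,hhp\}$ go through uniformly since the corresponding completeness results for $APTC^{sl}$ are uniform.

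The main obstacle I expect is the $\leftmerge$ operator: the excerpt explicitly warns that the elimination theorem fails for terms built from $\cdot$, $+$ and $\leftmerge$, so a ``basic term'' is not a genuine normal form and one cannot simply read off syntactic identity from bisimilarity. The delicate part is therefore the normal-form manipulation inside the classical completeness argument --- pushing $\leftmerge$ inward with $P5$--$P8$, eliminating deadlock summands with $P9$, and then comparing the summands of the resulting head-normal sums --- and verifying that each such rewrite is stable under the quantum-configuration semantics, i.e.\ that it never changes which $effect$'s are applied, nor in which causal order. Since the parallelism axioms only rearrange how the atomic events (the quantum operations themselves) are scheduled, this stability holds; but it is precisely the point that relies on the ``open'' (super-operator) reading of quantum operations rather than a measurement-with-probabilistic-branching semantics, and it is the step I would write out most carefully.
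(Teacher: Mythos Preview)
Your proposal is correct and follows essentially the same route as the paper: unwind the quantum-configuration bisimilarity into (i) ordinary static-location truly concurrent bisimilarity of the underlying $APTC^{sl}$ terms plus (ii) coincidence of quantum states, then invoke the classical completeness of $APTC^{sl}$ from \cite{LOC1}. The paper's proof is a two-sentence pointer to exactly this reduction, whereas you have spelled out the elimination-to-basic-terms step and the handling of $\leftmerge$ explicitly; these details are already absorbed into the classical completeness argument the paper cites, so your write-up is simply a more careful expansion of the same strategy.
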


\begin{proof}
According to the definition of static location truly concurrent bisimulation equivalences $\sim_p^{sl}$, $\sim_s^{sl}$, $\sim_{hp}^{sl}$ and $\sim_{hhp}^{sl}$, $p\sim_p^{sl}q$, $p\sim_s^{sl}q$, $p\sim_{hp}^{sl}q$ and $p\sim_{hhp}^{sl}q$ implies
both the bisimilarities between $p$ and $q$, and also the in the same quantum states. According to the completeness of APTC (please refer to \cite{LOC1} for details), we can get the
completeness of qAPTC with localities.
\end{proof}

\subsection{Recursion}\label{qorec}

\begin{definition}[Recursive specification]
A recursive specification is a finite set of recursive equations

$$X_1=t_1(X_1,\cdots,X_n)$$
$$\cdots$$
$$X_n=t_n(X_1,\cdots,X_n)$$

where the left-hand sides of $X_i$ are called recursion variables, and the right-hand sides $t_i(X_1,\cdots,X_n)$ are process terms in $qAPTC$ with localities and possible occurrences of the recursion
variables $X_1,\cdots,X_n$.
\end{definition}

\begin{definition}[Solution]
Processes $p_1,\cdots,p_n$ are a solution for a recursive specification $\{X_i=t_i(X_1,\cdots,X_n)|i\in\{1,\cdots,n\}\}$ (with respect to static location truly concurrent bisimulation equivalences
$\sim_s^{sl}$($\sim_p^{sl}$, $\sim_{hp}^{sl}$, $\sim_{hhp}^{sl}$)) if $p_i\sim_s^{sl} (\sim_p^{sl}, \sim_{hp}^{sl},\sim{hhp})t_i(p_1,\cdots,p_n)$ for $i\in\{1,\cdots,n\}$.
\end{definition}

\begin{definition}[Guarded recursive specification]
A recursive specification

$$X_1=t_1(X_1,\cdots,X_n)$$
$$...$$
$$X_n=t_n(X_1,\cdots,X_n)$$

is guarded if the right-hand sides of its recursive equations can be adapted to the form by applications of the axioms in $qAPTC$ with localities and replacing recursion variables by the right-hand
sides of their recursive equations,

$(u_{11}::a_{11}\leftmerge\cdots\leftmerge u_{1i_1}::a_{1i_1})\cdot s_1(X_1,\cdots,X_n)+\cdots+(u_{k1}::a_{k1}\leftmerge\cdots\leftmerge u_{ki_k}::a_{ki_k})\cdot s_k(X_1,\cdots,X_n)\\
+(v_{11}::b_{11}\leftmerge\cdots\leftmerge v_{1j_1}::b_{1j_1})+\cdots+(v_{1j_1}::b_{1j_1}\leftmerge\cdots\leftmerge v_{1j_l}::b_{lj_l})$

where $a_{11},\cdots,a_{1i_1},a_{k1},\cdots,a_{ki_k},b_{11},\cdots,b_{1j_1},b_{1j_1},\cdots,b_{lj_l}\in \mathbb{E}$, and the sum above is allowed to be empty, in which case it
represents the deadlock $\delta$.
\end{definition}

\begin{definition}[Linear recursive specification]
A recursive specification is linear if its recursive equations are of the form

$(u_{11}::a_{11}\leftmerge\cdots\leftmerge u_{1i_1}::a_{1i_1})X_1+\cdots+(u_{k1}::a_{k1}\leftmerge\cdots\leftmerge u_{ki_k}::a_{ki_k})X_k\\
+(v_{11}::b_{11}\leftmerge\cdots\leftmerge v_{1j_1}::b_{1j_1})+\cdots+(v_{1j_1}::b_{1j_1}\leftmerge\cdots\leftmerge v_{1j_l}::b_{lj_l})$

where $a_{11},\cdots,a_{1i_1},a_{k1},\cdots,a_{ki_k},b_{11},\cdots,b_{1j_1},b_{1j_1},\cdots,b_{lj_l}\in \mathbb{E}$, and the sum above is allowed to be empty, in which case it
represents the deadlock $\delta$.
\end{definition}

\begin{center}
    \begin{table}
        $$\frac{t_i(\langle X_1|E\rangle,\cdots,\langle X_n|E\rangle)\xrightarrow[u]{\{e_1,\cdots,e_k\}}\surd}{\langle X_i|E\rangle\xrightarrow[u]{\{e_1,\cdots,e_k\}}\surd}$$
        $$\frac{t_i(\langle X_1|E\rangle,\cdots,\langle X_n|E\rangle)\xrightarrow[u]{\{e_1,\cdots,e_k\}} y}{\langle X_i|E\rangle\xrightarrow[u]{\{e_1,\cdots,e_k\}} y}$$
        \caption{Transition rules of guarded recursion}
        \label{TRForGR}
    \end{table}
\end{center}

\begin{theorem}[Conservitivity of $qAPTC$ with localities and guarded recursion]
$qAPTC$ with localities and guarded recursion is a conservative extension of $qAPTC$ with localities.
\end{theorem}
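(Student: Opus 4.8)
The plan is to apply the general conservative-extension metatheorem, Theorem~\ref{TCE}, with $T_0$ the TSS of $qAPTC$ with localities (the transition rules of Tables~\ref{TRForqBATC} and \ref{TRForqAPTC}) and $T_1$ the two transition rules for guarded recursion in Table~\ref{TRForGR}, so that $T_0\oplus T_1$ is the TSS of $qAPTC$ with localities and guarded recursion. Since the transition relation here is over quantum configurations $\langle C,\varrho\rangle$ rather than bare terms, I would first note that this is a purely cosmetic change: the rule format, the notion of source-dependency, freshness, and positivity-after-reduction all carry over verbatim once we read the state component as an extra parameter carried along by the rules, exactly as in the treatment of $APTC^{sl}$ with guarded recursion cited from \cite{LOC1}.

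First I would check that $T_0$ is source-dependent. This is precisely one of the three facts already recorded in the proof of the ``Generalization of $qAPTC^{sl}$ with localities'' theorem above, namely that the transition rules of $qAPTC$ with localities are all source-dependent; I would invoke that. Second, I would verify the two conditions on each rule $\rho\in T_1$: the rules in Table~\ref{TRForGR} have source $\langle X_i|E\rangle$, whose head symbol $\langle\,\cdot\,|E\rangle$ is a constant not occurring in $\Sigma_0$, hence the source of $\rho$ is fresh, which discharges condition~(2) outright (no premise analysis is needed). Third, I would observe that both $T_0$ and $T_0\oplus T_1$ are positive after reduction, since all rules involved are positive (the negative premises $\langle y,\varrho\rangle\nrightarrow^{e_2}$ etc.\ appearing in the $\triangleleft$ and $\parallel$ rules are handled by the standard reduction, just as for $APTC^{sl}$). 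Then Theorem~\ref{TCE} yields that $T_0\oplus T_1$ is a conservative extension of $T_0$, which is exactly the statement.

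I expect the proof to be essentially a verification exercise with no genuine obstacle — the reference proof for $APTC^{sl}$ and guarded recursion in \cite{LOC1} is structurally identical. The only point deserving a sentence of care is the freshness of the constant $\langle X_i|E\rangle$: one must make sure that for different specifications $E$ these are genuinely new function symbols of $\Sigma_1\setminus\Sigma_0$ and that the labels $\{e_1,\cdots,e_k\}$ appearing on the recursion rules are already labels of $T_0$ (so condition~(2)'s freshness clause is met via the source being fresh, not via a fresh label). Given that, the proof is two or three lines: cite source-dependency of $T_0$ from the generalization theorem, note the source $\langle X_i|E\rangle$ of each guarded-recursion rule is fresh, and apply Theorem~\ref{TCE}. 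I would close by remarking that the same argument as for $APTC^{sl}$ applies and refer the reader to \cite{LOC1} for the routine details, matching the style of the surrounding proofs in this section.
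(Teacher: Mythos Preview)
Your proposal is correct and follows essentially the same approach as the paper: verify source-dependency of the $qAPTC^{sl}$ rules, observe that the guarded-recursion rules in Table~\ref{TRForGR} have a fresh constant $\langle X_i|E\rangle$ in their source, and invoke Theorem~\ref{TCE}. The paper's proof records precisely these facts (phrased slightly differently, listing three ``facts'' where the second is the freshness of the constant source); your version is a bit more explicit about positivity-after-reduction and about why the quantum-configuration format does not interfere, but the argument is the same.
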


\begin{proof}
It follows from the following three facts.

\begin{enumerate}
  \item The transition rules of $qAPTC$ with localities in are all source-dependent;
  \item The sources of the transition rules $qAPTC$ with localities and guarded recursion contain only one constant;
  \item The transition rules of $qAPTC$ with localities and guarded recursion are all source-dependent.
\end{enumerate}

So, $qAPTC$ with localities and guarded recursion is a conservative extension of $qAPTC$ with localities, as desired.
\end{proof}

\begin{theorem}[Congruence theorem of $qAPTC$ with localities and guarded recursion]
Static location truly concurrent bisimulation equivalences $\sim_p^{sl}$, $\sim_s^{sl}$, $\sim_{hp}^{sl}$, $\sim_{hhp}^{sl}$ are all congruences with respect to $qAPTC$ with localities and guarded recursion.
\end{theorem}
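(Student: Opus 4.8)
The plan is to mirror the proof of the preceding congruence theorem for $qAPTC$ with localities, now carried out over the term algebra enlarged with the guarded recursion constants $\langle X_i|E\rangle$. Congruence ``with respect to $qAPTC$ with localities and guarded recursion'' means exactly that $\sim_p^{sl}$, $\sim_s^{sl}$, $\sim_{hp}^{sl}$ and $\sim_{hhp}^{sl}$ remain congruences for the operators $::,\cdot,+,\between,\parallel,\leftmerge,\mid,\Theta,\triangleleft,\partial_H$ on this larger set of terms (for the recursion constants themselves there is nothing to prove, since they are nullary). First I would record that each of the four relations is an equivalence relation on the extended terms --- immediate from their definitions --- so that it suffices to show each operator preserves each relation.

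The decisive observation is twofold. On the one hand, the rules of Table~\ref{TRForGR} derive every transition of $\langle X_i|E\rangle$ from a transition of its unfolding $t_i(\langle X_1|E\rangle,\cdots,\langle X_n|E\rangle)$, copying the event set, the location and the termination predicate unchanged; hence $\langle X_i|E\rangle$ and its unfolding have literally the same outgoing transitions and the same termination behaviour, and are therefore related by all four equivalences. On the other hand, the rules for the $qAPTC^{sl}$ operators inspect their arguments only through the transition relations $\xrightarrow[u]{X}$ and the negative premises, never through syntactic shape. Consequently the bisimulation-construction argument of the recursion-free congruence theorem applies verbatim: given $x\sim x'$ and $y\sim y'$ in the appropriate one of the four senses, I would take the candidate relation to consist of all pairs of operator contexts $(f(\cdots x\cdots),f(\cdots x'\cdots))$ together with the equivalence itself --- for the hp- and hhp-variants, the corresponding posetal, respectively downward-closed posetal, version --- and verify the forth/back and termination clauses by a case analysis on which transition rule fires. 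The Conservativity theorem just proved guarantees that adjoining guarded recursion does not alter the transitions of $qAPTC^{sl}$-terms already present, so no new cases arise there.

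A small preliminary lemma I would isolate is: for any guarded recursive specification $E$ and any $qAPTC^{sl}$-context $C[\cdot]$, the term $C[\langle X_i|E\rangle]$ is related, in each of the four senses, to $C[t_i(\langle X_1|E\rangle,\cdots,\langle X_n|E\rangle)]$. This follows from the transition-level equality above together with the already-established fact that each equivalence is preserved by the $qAPTC^{sl}$ operators. It is precisely this lemma that lets recursion constants be replaced by their unfoldings inside contexts during the case analysis, which is what makes the argument close.

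The step I expect to be the main obstacle is the $\sim_{hhp}^{sl}$ case, where one must keep the downward-closure condition on the posetal relation intact while matching transitions inherited through Table~\ref{TRForGR}: restricting the order-isomorphism $f$ to a sub-configuration must still yield a pair in the relation after unfolding. I would handle this by defining the candidate relation to be downward closed from the start and checking that downward closure survives the single-event transitions of the recursion constants; image-finiteness of the PESs, assumed throughout this chapter, ensures the matching transitions actually exist. The remaining operator-by-operator verifications are routine and, following the convention of this chapter, may be left to the reader or referred to \cite{LOC1}.
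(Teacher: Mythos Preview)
Your proposal is correct and follows essentially the same route as the paper: both arguments reduce the congruence claim to the already-established congruence theorem for $qAPTC$ with localities by observing that a guarded recursion constant $\langle X_i|E\rangle$ behaves like its unfolding $t_i(\langle X_1|E\rangle,\cdots,\langle X_n|E\rangle)$ and that the operator rules are insensitive to anything but the transition relation. The paper's proof records precisely your two ingredients --- unfolding the right-hand sides of a guarded specification, and the base-theory congruence --- in two lines, whereas you spell out the operational mechanism (Table~\ref{TRForGR}, the nullary nature of the constants, the downward-closure issue for $\sim_{hhp}^{sl}$) in detail; the extra care you take over the hhp case and the context lemma is sound but goes well beyond what the paper actually writes down.
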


\begin{proof}
It follows the following two facts:
\begin{enumerate}
  \item in a guarded recursive specification, right-hand sides of its recursive equations can be adapted to the form by applications of the axioms in $qAPTC$ with localities and replacing recursion
  variables by the right-hand sides of their recursive equations;
  \item static location truly concurrent bisimulation equivalences $\sim_p^{sl}$, $\sim_s^{sl}$, $\sim_{hp}^{sl}$ and $\sim_{hhp}^{sl}$ are all congruences with respect to all operators of $qAPTC$ with localities.
\end{enumerate}
\end{proof}

The $RDP$ (Recursive Definition Principle) and the $RSP$ (Recursive Specification Principle) are shown in Table \ref{RDPRSP}.

\begin{center}
\begin{table}
  \begin{tabular}{@{}ll@{}}
\hline No. &Axiom\\
  $RDP$ & $\langle X_i|E\rangle = t_i(\langle X_1|E\rangle,\cdots,\langle X_n|E\rangle)\quad (i\in\{1,\cdots,n\})$\\
  $RSP$ & if $y_i=t_i(y_1,\cdots,y_n)$ for $i\in\{1,\cdots,n\}$, then $y_i=\langle X_i|E\rangle \quad(i\in\{1,\cdots,n\})$\\
\end{tabular}
\caption{Recursive definition and specification principle}
\label{RDPRSP}
\end{table}
\end{center}

\begin{theorem}[Elimination theorem of $qAPTC$ with localities and linear recursion]
Each process term in $qAPTC$ with localities and linear recursion is equal to a process term $\langle X_1|E\rangle$ with $E$ a linear recursive specification.
\end{theorem}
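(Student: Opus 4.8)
The plan is to prove this by induction on the structure of a process term in $qAPTC$ with localities and linear recursion, following the proof of the corresponding classical result (Theorem~\ref{ETRecursion21}); the quantum configuration $\langle C,\varrho\rangle$ plays no role, since elimination is a purely syntactic fact about provable equality in $qAPTC^{sl}$ and $qAPTC^{sl}$ is sound for the static location truly concurrent bisimulation equivalences. For the base cases, an atomic event $e\in\mathbb{E}$ equals $\langle X|E\rangle$ for the linear specification $E=\{X=e\}$, and a term already of the form $\langle X_i|E\rangle$ with $E$ linear needs only a renaming of variables so that $X_i$ becomes $X_1$. By the elimination theorem of $qAPTC$ with localities (the recursion-free version) together with $RDP$, it then suffices to show that each operator of $qAPTC$ with localities preserves the property of being provably equal to some $\langle X_1|E\rangle$ with $E$ linear.

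So suppose $p=\langle X_1|E_p\rangle$ and $q=\langle X_1|E_q\rangle$ with $E_p$, $E_q$ linear and, without loss of generality, built over disjoint sets of recursion variables. For $p+q$ one adjoins a fresh variable $Z$ whose equation is the sum of the right-hand sides of the two $X_1$-equations, keeps all equations of $E_p$ and $E_q$, observes the result is linear, and concludes $p+q=\langle Z|E\rangle$ by $RSP$. For $p\cdot q$ one modifies $E_p$ by sequentially composing $X_1^{E_q}$ onto every purely terminating summand, adjoins the equations of $E_q$, and again applies $RSP$. For $u::p$, $\Theta(p)$, $p\triangleleft q$ and $\partial_H(p)$ one uses the laws $L1$--$L11$, $CE1$--$CE6$, $U1$--$U13$ and $D1$--$D6$ to distribute the operator over $+$, $\cdot$ and $\leftmerge$ and to eliminate it on atomic events, obtaining a linear specification over correspondingly transformed variables.

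The real work, and the expected main obstacle, is the parallel family $\parallel$, $\leftmerge$, $\mid$ and $\between$. Here one introduces variables indexed by pairs $(X_i^{E_p},X_j^{E_q})$ --- one such family for $\leftmerge$, and via $P1$ and $P4$ correspondingly for $\parallel$, $\mid$ and $\between$ --- mutually recursive with the original variables, and defines each pair-variable's equation by substituting the two component right-hand sides and repeatedly applying $P1$--$P9$, $C1$--$C8$ and the distributivity laws $A4$, $P8$, $C5$, $C6$. The delicate point is to verify that this expansion terminates in linear form: the axioms $P5$--$P7$ and $C2$--$C4$ must collapse each product $(u::a)\leftmerge(v::b\cdot r)$ to a single prefix $(u::a\leftmerge v::b)$ followed again by a parallel/merge combination that is itself one of the pair-variables, so that no unbounded $\leftmerge$-nesting and no unguarded recursion appear --- and it is precisely the linearity of $E_p$ and $E_q$ that keeps this bookkeeping finite. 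Once the combined specification $E$ is checked to be linear, $RSP$ yields $p\leftmerge q=\langle\langle X_1,X_1\rangle_{\leftmerge}\,|\,E\rangle$, and a final appeal to the elimination theorem of $qAPTC$ with localities closes the induction; all steps are derivations in $qAPTC^{sl}$, so the argument is word for word the classical one of \cite{LOC1}.
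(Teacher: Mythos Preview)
Your proposal is correct and follows exactly the approach the paper takes: the paper's proof consists of the single sentence ``The same as that of $APTC^{sl}$ with linear recursion, we omit the proof, please refer to \cite{LOC1} for details,'' and your sketch is precisely (a more detailed version of) that classical argument, correctly noting that the quantum configuration plays no role in a purely syntactic elimination result.
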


\begin{proof}
The same as that of $APTC^{sl}$ with linear recursion, we omit the proof, please refer to \cite{LOC1} for details.
\end{proof}

\begin{theorem}[Soundness of $qAPTC$ with localities and guarded recursion]
Let $x$ and $y$ be $qAPTC$ with localities and guarded recursion terms. If $qAPTC^{sl}\textrm{ with guarded recursion}\vdash x=y$, then
\begin{enumerate}
  \item $x\sim_s^{sl} y$;
  \item $x\sim_p^{sl} y$;
  \item $x\sim_{hp}^{sl} y$;
  \item $x\sim_{hhp}^{sl} y$.
\end{enumerate}
\end{theorem}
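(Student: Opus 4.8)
The plan is to reduce the statement to checking only the two recursion axioms $RDP$ and $RSP$ of Table~\ref{RDPRSP}. Since provability in $qAPTC^{sl}$ with guarded recursion is generated from its axioms by equational logic, and since each of $\sim_p^{sl}$, $\sim_s^{sl}$, $\sim_{hp}^{sl}$ and $\sim_{hhp}^{sl}$ is both an equivalence and, by the Congruence theorem of $qAPTC$ with localities and guarded recursion, a congruence with respect to every operator (including the constants $\langle X_i|E\rangle$), it suffices to show that for each axiom the two sides are related by all four equivalences. For the axioms inherited from $qAPTC$ with localities this is exactly the earlier Soundness of $qAPTC$ with localities theorem, so only $RDP$ and $RSP$ remain, and each must additionally be checked to preserve the ``same quantum state'' component of the equivalences.

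For $RDP$, namely $\langle X_i|E\rangle = t_i(\langle X_1|E\rangle,\cdots,\langle X_n|E\rangle)$, I would simply read Table~\ref{TRForGR}: the transition rules give $\langle X_i|E\rangle$ exactly the steps of $t_i(\langle X_1|E\rangle,\cdots,\langle X_n|E\rangle)$, with the quantum configuration $\varrho$ passed through untouched. Hence the identity relation on configurations, augmented with the pairs of configurations arising from this identification, is at once a static location pomset, step, hp- and hhp-bisimulation (downward closedness for the hhp case holding as usual, because sub-configurations of identified configurations are again identified), and it contains the pair of empty configurations in the same state. So $RDP$ is sound modulo all four equivalences.

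For $RSP$, namely that $y_i = t_i(y_1,\cdots,y_n)$ for all $i$ implies $y_i = \langle X_i|E\rangle$, the plan is to prove the stronger fact that a guarded recursive specification has at most one solution up to each of the four static location truly concurrent bisimulation equivalences. Indeed $\langle X_1|E\rangle,\cdots,\langle X_n|E\rangle$ is a solution of $E$ by the just-established soundness of $RDP$, and $y_1,\cdots,y_n$ is a solution by hypothesis, so the claim follows once any two solutions $\vec p$ and $\vec q$ of a guarded $E$ are shown to satisfy $p_i\sim q_i$ for $\sim$ ranging over the four equivalences. The witnessing relation is built in the standard way, relating the configurations of the terms obtained from $\vec p$ and from $\vec q$ by iteratively unfolding $E$ into guarded form; guardedness is precisely what forces the matching of transitions at each unfolding and keeps the argument from looping. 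Since the quantum configuration is carried along passively and never influences which transitions are enabled, this reduces to the corresponding uniqueness statement for $APTC^{sl}$ with guarded recursion available from \cite{LOC1}, while the ``same quantum state'' clause is automatic because $\varrho$ is never altered along matched transitions.

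The hard part will be the $RSP$ case, and in particular its hhp-variant: I expect the bookkeeping needed to see that the bisimulation produced by the uniqueness argument is downward closed, and that both solutions stay ``in the same quantum state'' along matched transitions through the unfoldings of a guarded specification, to be the only genuinely delicate point. Once that is settled --- or imported wholesale from \cite{LOC1}, as the analogous completeness results in this section are --- the four conclusions follow immediately.
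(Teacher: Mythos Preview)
Your proposal is correct and follows exactly the same approach as the paper: reduce soundness to checking the axioms $RDP$ and $RSP$ of Table~\ref{RDPRSP} by invoking that each of the four static location truly concurrent bisimulation equivalences is both an equivalence and a congruence. The paper in fact stops there, leaving the verification of $RDP$ and $RSP$ as an exercise, so your sketch of those two cases (via the transition rules for $RDP$ and uniqueness of solutions for $RSP$, with the quantum state carried along passively) goes beyond what the paper provides.
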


\begin{proof}
(1) Since static location pomset bisimulation $\sim_p^{sl}$ is both an equivalent and a congruent relation, we only need to check if each axiom in Table \ref{RDPRSP} is sound
modulo static location pomset bisimulation equivalence. We leave the proof as an exercise for the readers.

(2) Since static location step bisimulation $\sim_s^{sl}$ is both an equivalent and a congruent relation, we only need to check if each axiom in Table \ref{RDPRSP} is sound modulo
static location step bisimulation equivalence. We leave the proof as an exercise for the readers.

(3) Since static location hp-bisimulation $\sim_{hp}^{sl}$ is both an equivalent and a congruent relation, we only need to check if each axiom in Table \ref{RDPRSP} is sound modulo
static location hp-bisimulation equivalence. We leave the proof as an exercise for the readers.

(4) Since static location hhp-bisimulation $\sim_{hhp}^{sl}$ is both an equivalent and a congruent relation, we only need to check if each axiom in Table \ref{RDPRSP} is sound modulo
static location hhp-bisimulation equivalence. We leave the proof as an exercise for the readers.
\end{proof}

\begin{theorem}[Completeness of $qAPTC$ with localities and linear recursion]
Let $p$ and $q$ be closed $qAPTC$ with localities and linear recursion terms, then,
\begin{enumerate}
  \item if $p\sim_s^{sl} q$ then $p=q$;
  \item if $p\sim_p^{sl} q$ then $p=q$;
  \item if $p\sim_{hp}^{sl} q$ then $p=q$;
  \item if $p\sim_{hhp}^{sl} q$ then $p=q$.
\end{enumerate}
\end{theorem}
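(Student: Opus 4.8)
The plan is to follow the same strategy already used for the completeness of $qBATC$ with localities and of $qAPTC$ with localities (without recursion): reduce the statement to the corresponding completeness result for $APTC^{sl}$ with linear recursion proved in \cite{LOC1}, and then separately dispatch the quantum-state component of the configurations. First I would invoke the elimination theorem of $qAPTC$ with localities and linear recursion, so that both closed terms $p$ and $q$ may be assumed to have the form $\langle X_1|E_1\rangle$ and $\langle Y_1|E_2\rangle$ with $E_1,E_2$ linear recursive specifications.

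Next, as in the classical proof, I would build a combined linear recursive specification $E$ whose recursion variables $Z_{XY}$ are indexed by pairs $(X,Y)$ of recursion variables of $E_1$ and $E_2$ that are related by the witnessing static location bisimulation $R_\varphi$. Using the linear form of the equations of $E_1$ and $E_2$, the fact that $R_\varphi$ matches transitions (with matching label bundles up to $\sim$ and location tuples consistent with the cla $\varphi$), and the congruence of $\sim_s^{sl}$ (resp.\ $\sim_p^{sl},\sim_{hp}^{sl},\sim_{hhp}^{sl}$), one checks that assigning $\langle X|E_1\rangle$ to $Z_{XY}$ and, independently, assigning $\langle Y|E_2\rangle$ to $Z_{XY}$ both give solutions of $E$. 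Applying $RSP$ twice then yields $\langle X_1|E_1\rangle=\langle Y_1|E_2\rangle$, i.e.\ $p=q$, in $qAPTC^{sl}$ with linear recursion. For the $hp$- and $hhp$-cases one threads the order-isomorphism $f$ (and downward closure) through the same construction, exactly as in \cite{LOC1}.

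The quantum-specific step is the observation already exploited in the earlier completeness theorems of this chapter: a static location truly concurrent bisimulation in the quantum operational semantics is a relation on configurations $\langle C,\varrho\rangle$ that additionally forces matched transitions to reach configurations with the same quantum state $\varrho'$. Hence $p\sim_s^{sl}q$ (resp.\ the other three equivalences) entails the purely classical static location bisimilarity of the underlying labelled transition systems \emph{together with} equality of the quantum states along matched runs. Since the equational theory $qAPTC^{sl}$ coincides with that of $APTC^{sl}$ — the quantum effects being recorded in the $effect(\cdot,\varrho)$ side conditions of the transition rules, not in the axioms — the classical derivation transfers verbatim, and the quantum-state constraint is automatically respected because the matched steps act identically on $\varrho$.

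I expect the main obstacle to be the bookkeeping in the pairing construction: verifying that the combined specification $E$ is genuinely linear and guarded, and that for each summand $(u_1::a_1\leftmerge\cdots\leftmerge u_k::a_k)X$ of an equation in $E_1$ the bisimulation supplies a matching summand of $E_2$ with a bundle equal up to $\sim$ and a location annotation consistent with $\varphi$, so that the equation for $Z_{XY}$ is well defined. Once these routine facts are in place, the remainder is standard and can be cited from the classical development in \cite{LOC1}.
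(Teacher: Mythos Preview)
Your proposal is correct and follows essentially the same approach as the paper: observe that the quantum static location bisimulations imply the corresponding classical bisimilarities together with equality of quantum states, and then invoke the completeness of $APTC^{sl}$ with linear recursion from \cite{LOC1}. The paper's proof simply cites \cite{LOC1} for the classical part, whereas you additionally sketch the standard pairing/$RSP$ construction that \cite{LOC1} uses; this extra detail is accurate but not required.
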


\begin{proof}
According to the definition of static location truly concurrent bisimulation equivalences $\sim_p^{sl}$, $\sim_s^{sl}$, $\sim_{hp}^{sl}$ and $\sim_{hhp}^{sl}$, $p\sim_p^{sl}q$, $p\sim_s^{sl}q$, $p\sim_{hp}^{sl}q$ and $p\sim_{hhp}^{sl}q$ implies
both the bisimilarities between $p$ and $q$, and also the in the same quantum states. According to the completeness of $APTC^{sl}$ with linear recursion (please refer to \cite{LOC1} for details), we can get the
completeness of qAPTC with localities and linear recursion.
\end{proof}

\subsection{Abstraction}\label{qoabs}

\begin{definition}[Guarded linear recursive specification]
A recursive specification is linear if its recursive equations are of the form

$(u_{11}::a_{11}\leftmerge\cdots\leftmerge u_{1i_1}::a_{1i_1})X_1+\cdots+(u_{k1}::a_{k1}\leftmerge\cdots\leftmerge u_{ki_k}::a_{ki_k})X_k\\
+(v_{11}::b_{11}\leftmerge\cdots\leftmerge v_{1j_1}::b_{1j_1})+\cdots+(v_{1j_1}::b_{1j_1}\leftmerge\cdots\leftmerge v_{1j_l}::b_{lj_l})$

where $a_{11},\cdots,a_{1i_1},a_{k1},\cdots,a_{ki_k},b_{11},\cdots,b_{1j_1},b_{1j_1},\cdots,b_{lj_l}\in \mathbb{E}\cup\{\tau\}$, and the sum above is allowed to be empty, in which case
it represents the deadlock $\delta$.

A linear recursive specification $E$ is guarded if there does not exist an infinite sequence of $\tau$-transitions
$\langle X|E\rangle\xrightarrow{\tau}\langle X'|E\rangle\xrightarrow{\tau}\langle X''|E\rangle\xrightarrow{\tau}\cdots$.
\end{definition}

The transition rules of $\tau$ are shown in Table \ref{TRForqAbstraction}, and axioms of $\tau$ are as Table \ref{AxiomsForqTauLeft} shows.

\begin{center}
    \begin{table}
        $$\frac{}{\langle\tau,\varrho\rangle\xrightarrow{\tau}\langle\surd,\tau(\varrho)\rangle}$$
        $$\frac{\langle x,\varrho\rangle\xrightarrow[u]{e}\langle\surd,\varrho'\rangle}{\langle\tau_I(x),\varrho\rangle\xrightarrow[u]{e}\langle\surd,\varrho'\rangle}\quad e\notin I
        \quad\quad\frac{\langle x,\varrho\rangle\xrightarrow[u]{e}\langle x',\varrho'\rangle}{\langle\tau_I(x),\varrho\rangle\xrightarrow[u]{e}\langle\tau_I(x'),\varrho'\rangle}\quad e\notin I$$

        $$\frac{\langle x,\varrho\rangle\xrightarrow[u]{e}\langle\surd,\varrho'\rangle}{\langle\tau_I(x),\varrho\rangle\xrightarrow{\tau}\langle\surd,\tau(\varrho)\rangle}\quad e\in I
        \quad\quad\frac{\langle x,\varrho\rangle\xrightarrow[u]{e}\langle x',\varrho'\rangle}{\langle\tau_I(x),\varrho\rangle\xrightarrow{\tau}\langle\tau_I(x'),\tau(\varrho)\rangle}\quad e\in I$$
        \caption{Transition rule of $qAPTC^{sl}_{\tau}$}
        \label{TRForqAbstraction}
    \end{table}
\end{center}

\begin{theorem}[Conservitivity of $qAPTC$ with localities and silent step and guarded linear recursion]
$qAPTC$ with localities and silent step and guarded linear recursion is a conservative extension of $qAPTC$ with localities and linear recursion.
\end{theorem}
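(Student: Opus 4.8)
The plan is to invoke the conservative extension theorem (Theorem~\ref{TCE}), exactly as in the proofs of the preceding conservativity results in this chapter. Write $T_0$ for the TSS of $qAPTC$ with localities and linear recursion and $T_0\oplus T_1$ for the TSS obtained by adjoining the silent constant $\tau$ together with its transition rule $\langle\tau,\varrho\rangle\xrightarrow{\tau}\langle\surd,\tau(\varrho)\rangle$ (Table~\ref{TRForqAbstraction}) and by replacing linear recursive specifications with guarded linear ones. As in the classical setting (cf.\ \cite{LOC1}), both systems are positive after reduction, so it suffices to verify the two hypotheses of Theorem~\ref{TCE}: source-dependency of $T_0$, and the freshness/premise condition for every rule of $T_1$.

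First I would record that $T_0$ is source-dependent. This is the same observation already used in the proofs of ``Generalization of $qAPTC^{sl}$'' and ``Conservitivity of $qAPTC$ with localities and guarded recursion'': every transition rule for $::$, $\cdot$, $+$, $\between$, $\parallel$, $\leftmerge$, $\mid$, $\Theta$, $\triangleleft$, $\partial_H$ and for guarded recursion (Table~\ref{TRForGR}) is source-dependent, because each premise introduces only variables that already occur in the conclusion's source, the location labels $u,v$ and the quantum-state component $\varrho$ being part of the source configuration. Hence no rule of $T_0$ fails source-dependency.

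Next I would check the condition on $T_1$. The only genuinely new rule is the axiom for $\tau$: its source $\langle\tau,\varrho\rangle$ contains the function symbol $\tau\in\Sigma_1\setminus\Sigma_0$, so the source is fresh and the first alternative of condition~(2) of Theorem~\ref{TCE} is met; likewise the transition label $\tau$ is fresh. Passing from linear to guarded linear recursive specifications adds no new transition rules at all --- Table~\ref{TRForGR} is reused verbatim --- it merely restricts the set of admissible $E$, so there is nothing further to check there. Finally I would confirm that $T_0\oplus T_1$ is itself source-dependent: the $\tau$ rule has a closed source and no premises, hence is trivially source-dependent, and adding it does not disturb the other rules. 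By Theorem~\ref{TCE}, $T_0\oplus T_1$ is then a conservative extension of $T_0$.

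I expect the only subtle point to be the bookkeeping around the quantum configuration. Because transitions here are between configurations $\langle C,\varrho\rangle$ rather than bare process terms, and the $\tau$ rule actively rewrites the state via $\varrho\mapsto\tau(\varrho)$, I must make sure that ``the same transitions $t\xrightarrow{a}t'$ and $tP$ with $t\in\mathcal{T}(\Sigma_0)$'' in the statement of Theorem~\ref{TCE} is read with $t$ ranging over configurations built from $\Sigma_0$-terms; since neither $\tau$ nor the label $\tau$ occurs in any such configuration, no old configuration acquires a new outgoing transition, and the argument goes through unchanged. This is routine but is where care is needed, so I would spell it out explicitly rather than leave it implicit; the remainder is a direct appeal to \cite{LOC1} for the classical skeleton of the argument.
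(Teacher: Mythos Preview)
Your proposal is correct and follows essentially the same approach as the paper: both invoke the conservative extension theorem (Theorem~\ref{TCE}) by noting that the base TSS is source-dependent and that the new $\tau$ rule has a fresh constant in its source. Your write-up is considerably more detailed than the paper's two-line proof---in particular your explicit treatment of the quantum-configuration bookkeeping and the observation that moving from linear to guarded linear recursion adds no new rules are points the paper leaves implicit---but the underlying argument is the same.
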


\begin{proof}
Since the transition rules of $qAPTC$ with localities and silent step and guarded linear recursion are source-dependent, and the transition rules for $\tau$ in Table
\ref{TRForqAbstraction} contain only a fresh constant $\tau$ in their source, so the transition rules of $qAPTC$ with localities and silent step and guarded linear recursion is a conservative extension
of those of $qAPTC$ with localities and guarded linear recursion.
\end{proof}

\begin{theorem}[Congruence theorem of $qAPTC$ with localities and silent step and guarded linear recursion]
Rooted branching static location truly concurrent bisimulation equivalences $\approx_{rbp}^{sl}$, $\approx_{rbs}^{sl}$, $\approx_{rbhp}^{sl}$, and $\approx_{rbhhp}^{sl}$ are all congruences with respect to $qAPTC$ with localities and
silent step and guarded linear recursion.
\end{theorem}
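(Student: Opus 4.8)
The plan is to reduce the statement, exactly as in the congruence proofs for $qBATC^{sl}$ and $qAPTC^{sl}$ above and following the analogous theorem for $APTC^{sl}$ with silent step and guarded linear recursion (and \cite{LOC1}), to three ingredients: first, that $\approx_{rbp}^{sl}$, $\approx_{rbs}^{sl}$, $\approx_{rbhp}^{sl}$ and $\approx_{rbhhp}^{sl}$ are equivalence relations on quantum configurations over the extended signature; second, that each of them is preserved by every operator of $qAPTC^{sl}$ ($u::(\cdot)$, $\cdot$, $+$, $\parallel$, $\leftmerge$, $\mid$, $\Theta$, $\triangleleft$, $\partial_H$) together with the newly added constant $\tau$; and third, that each is preserved by the guarded-linear-recursion construct $\langle X_i|E\rangle$. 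The first ingredient is routine: reflexivity and symmetry are immediate from the symmetric shape of the clauses defining the rooted branching static location bisimulations, and transitivity is the usual relational-composition argument, where the ``same quantum state'' requirement composes because the $\varrho$-component is carried literally through every clause.

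For the second ingredient I would, for each operator $f$, assume rooted branching static location (pomset/step/hp/hhp) bisimulations $R^1_\varphi$ witnessing $x\approx x'$ and $R^2_\varphi$ witnessing $y\approx y'$, and build the candidate relation on configurations $\langle f(p,q),\varrho\rangle$ versus $\langle f(p',q'),\varrho\rangle$, checking the four defining clauses against the transition rules in Table~\ref{TRForqAPTC} and the $\tau$-rule in Table~\ref{TRForqAbstraction}. The cla bookkeeping is handled as before: a transition at location $u$ matched by one at $v$ sends us to $R_{\varphi\cup\{(u,v)\}}$, and for $\parallel$, $\leftmerge$, $\mid$ the composite location $u\diamond v$ is matched by $u'\diamond v'$, consistency of the cla transferring the independence predicate. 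The quantum state is threaded identically on the two sides — $\varrho'\cup\varrho''$ for $\parallel$/$\leftmerge$, $effect(\gamma(e_1,e_2),\varrho)$ for $\mid$, $\tau(\varrho)$ for $\tau$ — so the extra ``same state'' clause is automatic. For the hhp-variants one must additionally verify that the constructed relation is downward closed, which, as in \cite{LOC1}, reduces to downward closure of $R^1_\varphi$ and $R^2_\varphi$.

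For the third ingredient I would exploit guardedness: by Table~\ref{TRForGR} the configuration $\langle\langle X_i|E\rangle,\varrho\rangle$ has exactly the transitions of $\langle t_i(\langle X_1|E\rangle,\dots,\langle X_n|E\rangle),\varrho\rangle$, and guardedness of $E$ forbids infinite $\tau$-paths, so one argues by the standard bisimulation-up-to-context / simultaneous-fixpoint technique: given componentwise $\langle X_i|E\rangle\approx\langle Y_i|F\rangle$, the relation obtained by substituting these solutions into a common context is a rooted branching bisimulation, relatedness being propagated through the bodies $t_i$ by the congruence for the finite operators just established. Alternatively, invoking the preceding conservativity theorem, one can transfer the classical congruence result for guarded linear recursion from \cite{LOC1}, adding only the $\varrho$-threading observation.

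The main obstacle, as always with branching-style equivalences, is the operator $+$: branching static location bisimulation alone is not a congruence for it, and it is precisely the rootedness clause that must be used — a first move of $x+y$ is a first move of a summand and, by rootedness of $R^1_\varphi$ or $R^2_\varphi$, is matched by a genuine non-stuttering first move on the other side, after which one drops to plain $\approx_{bp}^{sl}$ (resp.\ $\approx_{bs}^{sl}$, $\approx_{bhp}^{sl}$, $\approx_{bhhp}^{sl}$) on the residuals. The second delicate point is keeping the true-concurrency step labels $\{e_1,e_2\}$, the left-merge side condition $e_1\le e_2$, and the $effect$-updates of the quantum state synchronised on both sides simultaneously with the cla extension; and the hhp-case needs the downward-closure check on top of that. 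I would carry all of these out along the lines of the corresponding proofs for $APTC^{sl}$ with silent step and guarded linear recursion, referring to \cite{LOC1} for the purely classical bookkeeping and supplying only the quantum-configuration adaptations.
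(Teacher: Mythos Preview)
Your proposal is correct but proceeds by a different decomposition than the paper's. You argue directly: establish that $\approx_{rb*}^{sl}$ are equivalences, then build witnessing relations operator by operator (singling out $+$ as the point where rootedness is essential), and finally handle the recursion construct via unfolding and guardedness. The paper's proof instead invokes three higher-level facts: (i) guarded linear right-hand sides can be expanded using the $qAPTC^{sl}$ axioms and unfolding; (ii) the \emph{strong} static-location bisimulations $\sim_{p}^{sl},\sim_{s}^{sl},\sim_{hp}^{sl},\sim_{hhp}^{sl}$ are already known to be congruences for all operators of $qAPTC$ with localities, and since they imply the corresponding rooted branching equivalences, the latter are congruences for those operators as well; (iii) the passage from $\mathbb{E}$ to $\mathbb{E}\cup\{\tau\}$ preserves congruence, details omitted. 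Your route is the standard textbook one and makes the delicate rootedness-versus-$+$ interaction explicit; the paper's route is shorter because it piggybacks on the strong-bisimulation congruence theorem already proved for $qAPTC^{sl}$, but the inference in step (ii) --- from ``$\sim$ is a congruence and $\sim\subseteq\approx$'' to ``$\approx$ is a congruence'' --- is not valid in general and is best read as saying that on $\tau$-free terms the two equivalences coincide, leaving step (iii) to carry the actual work that you spell out operator by operator.
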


\begin{proof}
It follows the following three facts:
\begin{enumerate}
  \item in a guarded linear recursive specification, right-hand sides of its recursive equations can be adapted to the form by applications of the axioms in $qAPTC$ with localities and replacing
  recursion variables by the right-hand sides of their recursive equations;
  \item static location truly concurrent bisimulation equivalences $\sim_p^{sl}$, $\sim_s^{sl}$, $\sim_{hp}^{sl}$ and $\sim_{hhp}^{sl}$ are all congruences with respect to all operators of
  $qAPTC$ with localities, while static location truly concurrent bisimulation equivalences $\sim_p^{sl}$, $\sim_s^{sl}$, $\sim_{hp}^{sl}$ and $\sim_{hhp}^{sl}$ imply the corresponding rooted
  branching static location truly concurrent bisimulations $\approx_{rbp}^{sl}$, $\approx_{rbs}^{sl}$, $\approx_{rbhp}^{sl}$ and $\approx_{rbhhp}^{sl}$, so rooted branching static location truly concurrent
  bisimulations $\approx_{rbp}^{sl}$, $\approx_{rbs}^{sl}$, $\approx_{rbhp}^{sl}$ and $\approx_{rbhhp}^{sl}$ are all congruences with respect to all operators of $qAPTC$ with localities;
  \item While $\mathbb{E}$ is extended to $\mathbb{E}\cup\{\tau\}$, it can be proved that rooted branching static location truly concurrent
  bisimulations $\approx_{rbp}^{sl}$, $\approx_{rbs}^{sl}$, $\approx_{rbhp}^{sl}$ and $\approx_{rbhhp}^{sl}$ are all congruences with respect to all operators of $qAPTC$ with localities, we omit it.
\end{enumerate}
\end{proof}

\begin{center}
\begin{table}
  \begin{tabular}{@{}ll@{}}
\hline No. &Axiom\\
  $B1$ & $e\cdot\tau=e$\\
  $B2$ & $e\cdot(\tau\cdot(x+y)+x)=e\cdot(x+y)$\\
  $B3$ & $x\leftmerge\tau=x$\\
  $L13$ & $u::\tau=\tau$\
\end{tabular}
\caption{Axioms of silent step}
\label{AxiomsForqTauLeft}
\end{table}
\end{center}

\begin{theorem}[Elimination theorem of $qAPTC$ with localities and silent step and guarded linear recursion]
Each process term in $qAPTC$ with localities and silent step and guarded linear recursion is equal to a process term $\langle X_1|E\rangle$ with $E$ a guarded linear recursive specification.
\end{theorem}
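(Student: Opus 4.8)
The plan is to proceed by structural induction on the term $p$, showing that every closed term of $qAPTC$ with localities and silent step and guarded linear recursion is provably equal to $\langle X_1|E\rangle$ for some guarded linear recursive specification $E$. The guiding observation is that $\tau$ is syntactically just one more atomic event, so the construction is essentially the one used in the elimination theorem of $qAPTC$ with localities and linear recursion proved above, carried out with this extra constant present; the only genuinely new obligation is to check at each step that the specification produced is \emph{guarded}, i.e.\ admits no infinite sequence of $\tau$-transitions. First I would treat the base cases: if $p$ is one of $e$, $\tau$, $\delta$, or of the form $u::e$, $u::\tau$, $u::\delta$, then $L1$--$L4$, $L10$, $L13$, $A6$ and $A7$ immediately rewrite $p$ into the shape of a one- or two-variable linear recursive specification, which is trivially guarded.

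For $p=q+s$ and $p=q\cdot s$, the induction hypothesis gives guarded linear specifications $E$ over $X_1,\dots,X_n$ with $q=\langle X_1|E\rangle$ and $F$ over $Y_1,\dots,Y_m$ with $s=\langle Y_1|F\rangle$, which I take to have disjoint variables. For $+$ I adjoin a fresh variable whose defining equation is the sum of the right-hand sides of the equations for $X_1$ and $Y_1$, normalised by $A1$--$A3$; for $\cdot$ I replace inside $E$ every terminating summand $(v_{11}::b_{11}\leftmerge\cdots\leftmerge v_{1j_1}::b_{1j_1})$ by that same expression followed by $Y_1$, using $A4$ and $A5$. In either case the new specification is linear by inspection, $p$ together with the relevant recursion constants forms a solution of it ($RDP$ plus the axioms), and $RSP$ delivers $p=\langle Z|E'\rangle$; guardedness carries over because an infinite $\tau$-chain in $E'$ would project onto one lying entirely inside $E$ or entirely inside $F$, contradicting the induction hypothesis. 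The case $p=u::q$ is analogous, distributing the location prefix through $E$ via $L1$--$L4$ and $L10$.

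The hard part will be the operators $\parallel$, $\leftmerge$, $\mid$ and $\between$, and, more mildly, $\Theta$, $\triangleleft$ and $\partial_H$. Here I would introduce, for each relevant pair of recursion variables of the two subspecifications, a fresh variable naming the corresponding ``product'' term --- $\langle X_i|E\rangle\leftmerge\langle Y_j|F\rangle$, $\langle X_i|E\rangle\mid\langle Y_j|F\rangle$, $\langle X_i|E\rangle\between\langle Y_j|F\rangle$, and whatever further product forms are needed to close the system off under $\Theta$, $\triangleleft$ and $\partial_H$ --- and then use the parallelism axioms $P1$--$P9$, the communication axioms $C1$--$C8$, the silent-step axioms $B1$--$B3$, the conflict-elimination axioms $CE1$--$CE6$, the unless axioms $U1$--$U13$, the encapsulation axioms $D1$--$D6$ and the locality axioms $L5$--$L11$ to rewrite each product's one-step unfolding into a sum of summands of the form ``(parallel of singletons)$\cdot$(product variable)'' plus terminating summands, i.e.\ into the required linear normal form. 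The bookkeeping is finite but heavy: one must enumerate the ways a step can arise (a step of the left argument, a step of the right argument, or a synchronisation) and carry along the independence side-conditions $u\diamond v$ on locations and the causal side-conditions $e_1\le e_2$ attached to $\leftmerge$. Because $\gamma$ yields $\delta$ whenever a $\tau$ is involved, no synchronisation ever produces a $\tau$, so the only $\tau$-steps of the product specification descend from $\tau$-steps of $E$ or $F$, together with the finitely many $\tau$'s introduced by $U1$, $U3$ and $CE3$ in the $\Theta$/$\triangleleft$ cases; checking that these cannot be concatenated into an infinite chain gives guardedness of the product specification.

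Finally, if $p$ is itself a constant $\langle X_1|E\rangle$ with $E$ already guarded linear there is nothing to do, which closes the induction. I expect the parallel-composition step to be the principal obstacle: the number of product variables is quadratic in the sizes of the subspecifications, and one must simultaneously manage the locality independence and causality side-conditions while verifying that, after the parallelism axioms have been applied, every defining equation lands \emph{exactly} in the linear form demanded by the definition of guarded linear recursive specification. Everything else is bounded, routine rewriting that mirrors the corresponding argument for $APTC^{sl}$ with silent step and guarded linear recursion (see \cite{LOC1}), the quantum configuration $\langle C,\varrho\rangle$ playing no role in these purely syntactic manipulations.
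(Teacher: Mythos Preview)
Your proposal is correct and follows the standard structural-induction route that the paper itself invokes: the paper's own proof consists solely of the remark that the argument is ``the same as that of $APTC^{sl}$ with silent step and guarded linear recursion'' and defers to \cite{LOC1}, which is precisely the argument you have sketched in detail (and which you yourself reference at the end). Your additional attention to the guardedness obligation and to why synchronisations cannot introduce new $\tau$-chains is exactly the extra bookkeeping needed over the $\tau$-free case, so nothing is missing.
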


\begin{proof}
The same as that of $APTC^{sl}$ with silent step and guarded linear recursion, we omit the proof, please refer to \cite{LOC1} for details.
\end{proof}

\begin{theorem}[Soundness of $qAPTC$ with localities and silent step and guarded linear recursion]
Let $x$ and $y$ be $qAPTC$ with localities and silent step and guarded linear recursion terms. If $qAPTC$ with localities and silent step and guarded linear recursion $\vdash x=y$, then
\begin{enumerate}
  \item $x\approx_{rbs}^{sl} y$;
  \item $x\approx_{rbp}^{sl} y$;
  \item $x\approx_{rbhp}^{sl} y$;
  \item $x\approx_{rbhhp}^{sl} y$.
\end{enumerate}
\end{theorem}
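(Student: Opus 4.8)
The plan is to follow the template already used for the earlier soundness theorems of $qBATC$ and $qAPTC$ with localities. Since the congruence theorem for $qAPTC$ with localities and silent step and guarded linear recursion has just been established for each of $\approx_{rbs}^{sl}$, $\approx_{rbp}^{sl}$, $\approx_{rbhp}^{sl}$, $\approx_{rbhhp}^{sl}$, and since each of these relations is an equivalence, provable equality $\vdash x=y$ (which is generated from the axioms by reflexivity, symmetry, transitivity, and congruence closure) is semantically valid as soon as every axiom is individually sound modulo each of the four equivalences. The axioms split into three groups: (i) the axioms of $qAPTC$ with localities in Table \ref{AxiomsForqLeftParallelism}, already shown sound modulo $\sim_p^{sl}$, $\sim_s^{sl}$, $\sim_{hp}^{sl}$, $\sim_{hhp}^{sl}$, which imply the corresponding rooted branching versions; (ii) $RDP$ and $RSP$ in Table \ref{RDPRSP}, already handled for guarded recursion; and (iii) the new silent-step axioms $B1$, $B2$, $B3$, $L13$ in Table \ref{AxiomsForqTauLeft}. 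Hence the substantive task is (iii), with the extra quantum proviso that the two sides must also agree on the reachable state component $\varrho$; since a $\tau$-move affects $\varrho$ only via the inert effect $\varrho\mapsto\tau(\varrho)$ (Table \ref{TRForqAbstraction}), that proviso will be automatic throughout.

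I would dispatch $B1$ ($e\cdot\tau=e$) and $L13$ ($u::\tau=\tau$) first: the unique transition of $\langle\tau,\varrho\rangle$ is $\xrightarrow{\tau}\langle\surd,\tau(\varrho)\rangle$, and $\langle u::\tau,\varrho\rangle$ inherits exactly this $\tau$-labelled move, so the diagonal-plus-one-pair relation pairing $\langle e\cdot\tau,\varrho\rangle$ with $\langle e,\varrho\rangle$ matches the initial $e$ strictly (rootedness) and then absorbs the trailing $\tau$. For $B3$ ($x\leftmerge\tau=x$), the left-merge transition rules in Table \ref{TRForqAPTC} never allow $\tau$ on the right argument to be the first left-merge step, so $\langle x\leftmerge\tau,\varrho\rangle$ and $\langle x,\varrho\rangle$ have the same initial transitions and stay in lockstep, with identical quantum states.

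The main obstacle is $B2$, the branching law $e\cdot(\tau\cdot(x+y)+x)=e\cdot(x+y)$. I would write down the candidate relation $R$ consisting of all pairs $(\langle e\cdot(\tau\cdot(x+y)+x),\varrho\rangle,\langle e\cdot(x+y),\varrho\rangle)$, the post-$e$ pairs $(\langle\tau\cdot(x+y)+x,\varrho'\rangle,\langle x+y,\varrho'\rangle)$, and the diagonal elsewhere (with an order isomorphism $f$ adjoined, carried unchanged through each $\tau$-move, for the hp- and hhp-cases, checking downward closure there), and verify the clauses of rooted branching pomset/step/hp/hhp bisimulation: the first $e$ is matched strictly on both sides; from $\langle\tau\cdot(x+y)+x,\varrho'\rangle$ the $\tau$-branch reaches $\langle x+y,\varrho'\rangle$, which is $R$-related to $\langle x+y,\varrho'\rangle$ by the diagonal with the $\tau$ absorbed (clause~1, first bullet), while the $x$-summand fires directly; conversely any move of $\langle x+y,\varrho'\rangle$ is matched either directly through the $x$-summand on the left or via the $\tau$-then-act path; and $\downarrow$ is preserved because $\langle x+y,\varrho'\rangle\downarrow$ forces $\langle\tau\cdot(x+y)+x,\varrho'\rangle\xrightarrow{\tau}\langle x+y,\varrho'\rangle\downarrow$ and conversely. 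Since all the $\tau$-moves involved leave $\varrho$ inert, the "same quantum state" requirement is met automatically, so $R$ witnesses $e\cdot(\tau\cdot(x+y)+x)\approx_{rb*}^{sl}e\cdot(x+y)$ for $*\in\{s,p,hp,hhp\}$. With $B2$ settled, the remaining checks are routine and the four items of the theorem follow exactly as in the earlier soundness proofs, which may be cited for the non-$\tau$ axioms.
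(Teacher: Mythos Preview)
Your proposal is correct and follows exactly the same approach as the paper: reduce soundness to per-axiom verification via the already-established equivalence and congruence properties, and then check the silent-step axioms $B1$, $B2$, $B3$, $L13$ of Table~\ref{AxiomsForqTauLeft}. In fact you go further than the paper, which simply states this reduction for each of the four equivalences and leaves the actual axiom checks ``as exercises to the readers''; your sketch of the bisimulation witness for $B2$ and the handling of the quantum state component are additional detail beyond what the paper provides.
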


\begin{proof}
(1) Since rooted branching static location pomset bisimulation $\approx_{rbp}^{sl}$ is both an equivalent and a congruent relation with respect to $qAPTC$ with localities and silent step and guarded
linear recursion, we only need to check if each axiom in Table \ref{AxiomsForqTauLeft} is sound modulo rooted branching static location pomset bisimulation $\approx_{rbp}^{sl}$. We leave them as
exercises to the readers.

(2) Since rooted branching static location step bisimulation $\approx_{rbs}^{sl}$ is both an equivalent and a congruent relation with respect to $qAPTC$ with localities and silent step and guarded
linear recursion, we only need to check if each axiom in Table \ref{AxiomsForqTauLeft} is sound modulo rooted branching static location step bisimulation $\approx_{rbs}^{sl}$. We leave them
as exercises to the readers.

(3) Since rooted branching static location hp-bisimulation $\approx_{rbhp}^{sl}$ is both an equivalent and a congruent relation with respect to $qAPTC$ with localities and silent step and guarded linear
recursion, we only need to check if each axiom in Table \ref{AxiomsForqTauLeft} is sound modulo rooted branching static location hp-bisimulation $\approx_{rbhp}^{sl}$. We leave them as exercises
to the readers.

(4) Since rooted branching static location hhp-bisimulation $\approx_{rbhhp}^{sl}$ is both an equivalent and a congruent relation with respect to $qAPTC$ with localities and silent step and guarded linear
recursion, we only need to check if each axiom in Table \ref{AxiomsForqTauLeft} is sound modulo rooted branching static location hhp-bisimulation $\approx_{rbhhp}^{sl}$. We leave them as exercises
to the readers.
\end{proof}

\begin{theorem}[Completeness of $qAPTC$ with localities and silent step and guarded linear recursion]
Let $p$ and $q$ be closed $qAPTC$ with localities and silent step and guarded linear recursion terms, then,
\begin{enumerate}
  \item if $p\approx_{rbs}^{sl} q$ then $p=q$;
  \item if $p\approx_{rbp}^{sl} q$ then $p=q$;
  \item if $p\approx_{rbhp}^{sl} q$ then $p=q$;
  \item if $p\approx_{rbhhp}^{sl} q$ then $p=q$.
\end{enumerate}
\end{theorem}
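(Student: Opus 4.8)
The plan is to follow the standard route for completeness of a process calculus with silent step and guarded linear recursion: first reduce arbitrary closed terms to recursion terms over guarded linear specifications, then equate two bisimilar such terms via the Recursive Specification Principle. Concretely, I would first invoke the elimination theorem for $qAPTC$ with localities and silent step and guarded linear recursion (established immediately above): every closed term of the calculus provably equals some $\langle X_1|E\rangle$ with $E$ a guarded linear recursive specification. Hence it suffices to show that for guarded linear specifications $E_1,E_2$, if $\langle X_1|E_1\rangle\approx_{rbs}^{sl}\langle Y_1|E_2\rangle$ (respectively $\approx_{rbp}^{sl}$, $\approx_{rbhp}^{sl}$, $\approx_{rbhhp}^{sl}$) then $qAPTC^{sl}$ with silent step and guarded linear recursion $\vdash\langle X_1|E_1\rangle=\langle Y_1|E_2\rangle$; the already-proved soundness gives the reverse direction.

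Next I would carry out the RSP construction. Fix a rooted branching static-location bisimulation $R_\varphi$ witnessing the equivalence, and let $R$ denote the underlying (non-rooted) branching static-location bisimulation on the reachable configurations. Build a new recursive specification $E$ whose recursion variables are representatives $Z_{[p,q]}$ of $R$-related pairs $(p,q)$ of reachable $\langle\cdot|E_1\rangle$- and $\langle\cdot|E_2\rangle$-states. The defining equation for $Z_{[p,q]}$ is obtained by merging the linear equations of $p$ and $q$: each action-prefixed summand $(u_{1}::a_{1}\leftmerge\cdots\leftmerge u_{k}::a_{k})\cdot p'$ of the equation for $p$ is matched, via $R_\varphi$ (and, for the first step, via the stricter root clause), with a summand of the equation for $q$ reaching a branching-bisimilar $q'$, yielding the summand $(u_{1}::a_{1}\leftmerge\cdots\leftmerge u_{k}::a_{k})\cdot Z_{[p',q']}$; terminating summands $v_{1}::b_{1}\leftmerge\cdots\leftmerge v_{l}::b_{l}$ are copied; and $\tau$-summands are absorbed using $B1$, $B2$, $B3$ and $L13$ (Table \ref{AxiomsForqTauLeft}). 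One then checks, using the axioms of $qAPTC$ with localities together with RDP (Table \ref{RDPRSP}), that both $\langle X_1|E_1\rangle$ and $\langle Y_1|E_2\rangle$ satisfy $E$ modulo provable equality; RSP then yields $\langle X_1|E_1\rangle=\langle Y_1|E_2\rangle$.

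The hard part will be the $\tau$-bookkeeping. Branching bisimulation allows one side to answer an $a$-step by a sequence of $\tau$'s followed by $a$, whereas a single linear equation offers only one layer of left-merged actions, so the merged specification must compress those intermediate $\tau$-steps. This is precisely where guardedness is essential: because $E_1$ and $E_2$ are guarded there is no infinite $\tau$-sequence, so each $\tau$-segment is finite and can be eliminated by repeated application of $B1$ (and of $B2$ for the branching-specific redundancy), while the rootedness clause ensures the very first step is matched without a leading $\tau$. Verifying that the constructed $E$ is again guarded and linear, and that the $hp$/$hhp$ variants go through with the same construction (now also tracking the matching isomorphism $f$ and, for $\approx_{rbhhp}^{sl}$, downward-closure), is the main technical burden.

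Finally, since all of the above is structurally identical to the corresponding argument for $APTC$ with static localities and silent step and guarded linear recursion (Theorem \ref{CAPTCTAU21}), I would conclude as in the companion results of this section: the only additional ingredient in the quantum setting is that $\approx_{rbs}^{sl}$ (and the other three) on quantum process configurations $\langle C,\varrho\rangle$ further requires the two configurations to be in the same quantum state. Inspecting the transition rules for $qAPTC$ with localities (Table \ref{TRForqAPTC}) and for the silent step (Table \ref{TRForqAbstraction}), the state component $\varrho$ is threaded through every operator relevant here, so the merged specification $E$ inherits the state-matching automatically and the provable equality obtained in the classical argument lifts verbatim; the details may be found in \cite{LOC1}. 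This gives $p\approx_{rbs}^{sl}q\Rightarrow p=q$, and likewise for $\approx_{rbp}^{sl}$, $\approx_{rbhp}^{sl}$ and $\approx_{rbhhp}^{sl}$.
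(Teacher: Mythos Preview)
Your proposal is correct and ultimately lands on exactly the paper's approach: the paper's entire proof is your final paragraph, namely observing that the quantum rooted branching bisimulations $\approx_{rbs}^{sl}$, $\approx_{rbp}^{sl}$, $\approx_{rbhp}^{sl}$, $\approx_{rbhhp}^{sl}$ on configurations $\langle C,\varrho\rangle$ imply the corresponding classical bisimilarities together with equality of quantum states, and then invoking the completeness of $APTC^{sl}$ with silent step and guarded linear recursion from \cite{LOC1}. The detailed RSP construction you sketch in the first three paragraphs is not carried out in the paper at all---it is precisely the content deferred to \cite{LOC1}---so your write-up is strictly more informative than the paper's own proof while following the same reduction.
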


\begin{proof}
According to the definition of static location truly concurrent rooted branching bisimulation equivalences $\approx_{rbp}^{sl}$, $\approx_{rbs}^{sl}$, $\approx_{rbhp}^{sl}$ and $\approx_{rbhhp}^{sl}$,
$p\approx_{rbp}^{sl}q$, $p\approx_{rbs}^{sl}q$, $p\approx_{rbhp}^{sl}q$ and $p\approx_{rbhhp}^{sl}q$ implies
both the rooted branching bisimilarities between $p$ and $q$, and also the in the same quantum states. According to the completeness of $APTC^{sl}$ with silent step and guarded linear recursion (please refer to \cite{LOC1} for details), we can get the
completeness of qAPTC with localities and silent step and guarded linear recursion.
\end{proof}

The transition rules of $\tau_I$ are shown in Table \ref{TRForqAbstraction}, and the axioms are shown in Table \ref{AxiomsForqAbstractionLeft}.

\begin{theorem}[Conservitivity of $qAPTC^{sl}_{\tau}$ with guarded linear recursion]
$qAPTC^{sl}_{\tau}$ with guarded linear recursion is a conservative extension of $qAPTC$ with localities and silent step and guarded linear recursion.
\end{theorem}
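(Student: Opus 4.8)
The plan is to obtain this as an instance of the general conservative extension theorem (Theorem~\ref{TCE}), following the same template already used for the conservativity of $qAPTC$ with localities and silent step and guarded linear recursion over $qAPTC$ with localities and linear recursion. Take $T_0$ to be the TSS of $qAPTC$ with localities and silent step and guarded linear recursion over a signature $\Sigma_0$, and let $T_1$ consist of the transition rules for the abstraction operator $\tau_I$ from Table~\ref{TRForqAbstraction}, so that $\Sigma_1\setminus\Sigma_0$ consists exactly of the unary symbols $\tau_I$ for $I\subseteq\mathbb{E}$. As in the classical case, $T_0$ and $T_0\oplus T_1$ are positive after reduction (the only negative premises, of the form $\langle y,\varrho\rangle\nrightarrow^{e}$ in the rules for $\triangleleft$, reduce without introducing vicious cycles), so Theorem~\ref{TCE} is applicable.

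First I would check hypothesis~(1), that $T_0$ is source-dependent. This is inherited from the corresponding properties established (by reference to \cite{LOC1}) for $qBATC$ with localities, $qAPTC$ with localities, $qAPTC$ with localities and guarded recursion, and the silent-step extension: in every rule for $::$, $\cdot$, $+$, $\between$, $\parallel$, $\leftmerge$, $\mid$, $\Theta$, $\triangleleft$, $\partial_H$, the silent constant $\tau$, and guarded recursion, each variable occurs in the source or is made source-dependent through a premise $\langle t,\varrho\rangle\xrightarrow[u]{e}\langle t',\varrho'\rangle$ whose left-hand term $t$ is built from source variables. The only new feature relative to the classical development is the quantum configuration component: since the state information $\varrho$ is threaded uniformly through premises and conclusions, and the state appearing in a target is always determined either by a premise or by applying the effect of the executed operation (respectively $\tau(\varrho)$) to a source state, source-dependency is preserved.

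Then I would check hypothesis~(2) for each rule of $T_1$. Every transition rule for $\tau_I$ in Table~\ref{TRForqAbstraction} has source $\langle\tau_I(x),\varrho\rangle$, and $\tau_I\in\Sigma_1\setminus\Sigma_0$, so the source is fresh; the first disjunct of hypothesis~(2) is therefore satisfied, and no premise analysis is needed. Applying Theorem~\ref{TCE} then yields that $T_0\oplus T_1$ is a conservative extension of $T_0$, i.e.\ $qAPTC^{sl}_{\tau}$ with guarded linear recursion is a conservative extension of $qAPTC$ with localities and silent step and guarded linear recursion.

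The only point that requires genuine attention --- and it is still essentially bookkeeping --- is the source-dependency check for $T_0$ once quantum configurations are in play: one must confirm that the state-information variables $\varrho'$, $\varrho''$ occurring in targets are pinned down by the premises or by the deterministic effect/$\tau$ operations on a source state, rather than being left free. Since this repeats the classical argument of \cite{LOC1} verbatim up to the uniform $\varrho$-annotation, I would discharge it by that reference together with the observation that the annotation does not interfere with the structure of the rules.
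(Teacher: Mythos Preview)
Your proposal is correct and follows essentially the same approach as the paper: invoke the general conservative-extension theorem by observing that the base TSS is source-dependent and that every rule for $\tau_I$ has a fresh operator in its source. The paper's proof is a two-line version of exactly this; your additional remarks on positivity after reduction and on the quantum state component $\varrho$ are sound but go beyond what the paper spells out.
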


\begin{proof}
Since the transition rules of $qAPTC$ with localities and silent step and guarded linear recursion are source-dependent, and the transition rules for abstraction operator in Table
\ref{TRForqAbstraction} contain only a fresh operator $\tau_I$ in their source, so the transition rules of $qAPTC^{sl}_{\tau}$ with guarded linear recursion is a conservative extension
of those of $qAPTC$ with localities and silent step and guarded linear recursion.
\end{proof}

\begin{theorem}[Congruence theorem of $qAPTC^{sl}_{\tau}$ with guarded linear recursion]
Rooted branching static location truly concurrent bisimulation equivalences $\approx_{rbp}^{sl}$, $\approx_{rbs}^{sl}$, $\approx_{rbhp}^{sl}$ and $\approx_{rbhhp}^{sl}$ are all congruences with respect to $qAPTC^{sl}_{\tau}$
with guarded linear recursion.
\end{theorem}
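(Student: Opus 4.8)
The plan is to follow the template used for the previous congruence theorems in this chapter, in particular the one for $qAPTC$ with localities and silent step and guarded linear recursion. First I would observe that $\approx_{rbp}^{sl}$, $\approx_{rbs}^{sl}$, $\approx_{rbhp}^{sl}$ and $\approx_{rbhhp}^{sl}$ are equivalence relations on the closed terms of $qAPTC^{sl}_{\tau}$ with guarded linear recursion; reflexivity, symmetry and transitivity are routine and inherited from the definitions of the rooted branching static location (hp-/hhp-)bisimulations over quantum configurations $\langle C,\varrho\rangle$. Since the conservativity theorem stated just above says that $qAPTC^{sl}_{\tau}$ with guarded linear recursion is a conservative extension of $qAPTC$ with localities and silent step and guarded linear recursion, and congruence for the latter system is already established, it suffices to check that each of the four equivalences is preserved by the one genuinely new operator, namely the abstraction operator $\tau_I$.

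So the core step is: assuming $x\approx_{rbp}^{sl}x'$ (resp. $\approx_{rbs}^{sl}$, $\approx_{rbhp}^{sl}$, $\approx_{rbhhp}^{sl}$), prove $\tau_I(x)\approx_{rbp}^{sl}\tau_I(x')$ (resp. for the other three). I would do this by lifting a witnessing rooted branching static location (hp-)bisimulation $R_{\varphi}$ relating $x$ and $x'$ to the relation obtained by applying $\tau_I$ to both configuration components, $\{(\langle\tau_I(C_1),\varrho\rangle,\langle\tau_I(C_2),\varrho\rangle)\mid (\langle C_1,\varrho\rangle,\langle C_2,\varrho\rangle)\in R_{\varphi}\}$, closing under the isomorphism $f$ in the hp-/hhp-cases, and then verifying the transfer conditions clause by clause against the transition rules for $\tau_I$ in Table \ref{TRForqAbstraction}. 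The two premises $e\notin I$ and $e\in I$ must be handled separately: for $e\notin I$ the label and the location $u$ are preserved, so the match is immediate from $R_{\varphi}$; for $e\in I$ the step becomes a $\tau$-step and the state is updated to $\tau(\varrho)$, and here one uses the guardedness hypothesis on the linear recursive specification together with the image-finiteness assumption on the PESs to rule out $\tau$-divergence and to keep the quantum-state bookkeeping synchronized on both sides. The root condition (initial transitions must be matched strictly, without intervening $\tau$'s) is checked directly on the first move; afterwards one falls back to ordinary branching static location bisimilarity, which is covered by the argument just sketched. For the hhp-case I additionally check that the lifted posetal relation is downward closed, which holds because $\tau_I$ acts pointwise on configurations and $R_{\varphi}$ is assumed downward closed.

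Finally, the recursion constants $\langle X_i|E\rangle$ are already covered: the transition rules of guarded recursion in Table \ref{TRForGR} make $\langle X_i|E\rangle$ behave exactly as the right-hand side $t_i(\langle X_1|E\rangle,\cdots,\langle X_n|E\rangle)$, so once $\approx_{rbp}^{sl},\approx_{rbs}^{sl},\approx_{rbhp}^{sl},\approx_{rbhhp}^{sl}$ are known to be preserved by all $qAPTC^{sl}_{\tau}$ operators, they are also preserved when passing to solutions of guarded linear recursive specifications, exactly as in the preceding congruence theorem. The main obstacle I expect is the $e\in I$ clause of the $\tau_I$ argument: one must verify that renaming a visible event into $\tau$ neither collapses a branching-relevant choice nor creates a new $\tau$-loop, and that the update $\varrho\mapsto\tau(\varrho)$ stays consistent on both related processes — this is where the guardedness of $E$ does the real work, while the remaining clauses are essentially mechanical transcriptions of the classical, non-located case.
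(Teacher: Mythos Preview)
Your approach is essentially the same as the paper's: observe that each of the four relations is an equivalence on $qAPTC^{sl}_{\tau}$ with guarded linear recursion terms, inherit congruence for the old operators from the previous congruence theorem via conservativity, and reduce everything to showing that $\tau_I$ preserves each equivalence. The paper's own proof is in fact just this reduction, with the $\tau_I$ preservation declared trivial and left as an exercise; your proposal simply fills in more of that exercise.

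One small overreach: in the $e\in I$ clause you invoke guardedness of $E$ ``to rule out $\tau$-divergence'' and say this is where guardedness does the real work. That is not needed for the congruence argument. Preservation of $\approx_{rbp}^{sl}$ (and the others) by $\tau_I$ is a purely local transfer property on transitions and does not depend on the absence of $\tau$-loops; indeed the paper itself remarks right after this theorem that $\tau_I$ can manufacture $\tau$-loops even from guarded specifications (the example $\tau_{\{a\}}(\langle X\mid X=aX\rangle)$). Guardedness matters for elimination, soundness and completeness, not for congruence of $\tau_I$. Drop that appeal and your sketch is fine.
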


\begin{proof}
(1) It is easy to see that rooted branching static location pomset bisimulation is an equivalent relation on $qAPTC^{sl}_{\tau}$ with guarded linear recursion terms, we only need to
prove that $\approx_{rbp}^{sl}$ is preserved by the operator $\tau_I$. It is trivial and we leave the proof as an exercise for the readers.

(2) It is easy to see that rooted branching static location step bisimulation is an equivalent relation on $qAPTC^{sl}_{\tau}$ with guarded linear recursion terms, we only need to
prove that $\approx_{rbs}^{sl}$ is preserved by the operator $\tau_I$. It is trivial and we leave the proof as an exercise for the readers.

(3) It is easy to see that rooted branching static location hp-bisimulation is an equivalent relation on $qAPTC^{sl}_{\tau}$ with guarded linear recursion terms, we only need to
prove that $\approx_{rbhp}^{sl}$ is preserved by the operator $\tau_I$. It is trivial and we leave the proof as an exercise for the readers.

(4) It is easy to see that rooted branching static location hhp-bisimulation is an equivalent relation on $qAPTC^{sl}_{\tau}$ with guarded linear recursion terms, we only need to
prove that $\approx_{rbhhp}^{sl}$ is preserved by the operator $\tau_I$. It is trivial and we leave the proof as an exercise for the readers.
\end{proof}

\begin{center}
\begin{table}
  \begin{tabular}{@{}ll@{}}
\hline No. &Axiom\\
  $TI1$ & $e\notin I\quad \tau_I(e)=e$\\
  $TI2$ & $e\in I\quad \tau_I(e)=\tau$\\
  $TI3$ & $\tau_I(\delta)=\delta$\\
  $TI4$ & $\tau_I(x+y)=\tau_I(x)+\tau_I(y)$\\
  $TI5$ & $\tau_I(x\cdot y)=\tau_I(x)\cdot\tau_I(y)$\\
  $TI6$ & $\tau_I(x\leftmerge y)=\tau_I(x)\leftmerge\tau_I(y)$\\
  $L14$ & $u::\tau_I(x)=\tau_I(u::x)$\\
  $L15$ & $e\notin I\quad \tau_I(u::e)=u::e$\\
  $L16$ & $e\in I\quad \tau_I(u::e)=\tau$\\
\end{tabular}
\caption{Axioms of abstraction operator}
\label{AxiomsForqAbstractionLeft}
\end{table}
\end{center}

\begin{theorem}[Soundness of $qAPTC^{sl}_{\tau}$ with guarded linear recursion]
Let $x$ and $y$ be $qAPTC^{sl}_{\tau}$ with guarded linear recursion terms. If $qAPTC^{sl}_{\tau}$ with guarded linear recursion$\vdash x=y$, then
\begin{enumerate}
  \item $x\approx_{rbs}^{sl} y$;
  \item $x\approx_{rbp}^{sl} y$;
  \item $x\approx_{rbhp}^{sl} y$;
  \item $x\approx_{rbhhp}^{sl} y$.
\end{enumerate}
\end{theorem}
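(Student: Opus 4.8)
The plan is to follow exactly the template used for the earlier soundness results in this chapter (soundness of $qAPTC$ with localities, and soundness of $qAPTC$ with localities and silent step and guarded linear recursion). Since $\approx_{rbp}^{sl}$, $\approx_{rbs}^{sl}$, $\approx_{rbhp}^{sl}$ and $\approx_{rbhhp}^{sl}$ are all equivalence relations and, by the congruence theorem just stated for $qAPTC^{sl}_{\tau}$ with guarded linear recursion, are congruences with respect to every operator of the signature, soundness of the whole theory reduces to verifying, axiom by axiom, that the two sides of each equation (under an arbitrary closed substitution and for every quantum state $\varrho$) are related by the relevant rooted branching static location bisimulation. The axioms inherited from $qAPTC$ with localities and from the silent-step layer (Tables \ref{AxiomsForqLeftParallelism} and \ref{AxiomsForqTauLeft}, together with $RDP$, $RSP$) were already discharged in the preceding soundness theorems, so the only genuinely new obligations are the abstraction axioms $TI1$–$TI6$ and $L14$–$L16$ of Table \ref{AxiomsForqAbstractionLeft}.

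I would first dispatch the atomic axioms $TI1$, $TI2$, $TI3$, $L15$, $L16$ directly from the transition rules for $\tau$ and $\tau_I$ in Table \ref{TRForqAbstraction}: for $e\notin I$ the configuration $\langle \tau_I(e),\varrho\rangle$ has the single transition $\xrightarrow[\epsilon]{e}\langle\surd,\varrho'\rangle$ matching that of $\langle e,\varrho\rangle$, whereas for $e\in I$ it has only the transition $\xrightarrow{\tau}\langle\surd,\tau(\varrho)\rangle$, which is precisely the behaviour of $\langle \tau,\varrho\rangle$; the variants with a location prefix $u$ are identical once $L15$, $L16$ and the rule for $u::$ are applied. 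For the homomorphism axioms $TI4$, $TI5$, $TI6$, $L14$ I would take as witnessing relation
$$R_{\varphi}=\{(\langle \tau_I(s),\varrho\rangle,\langle t,\varrho\rangle)\mid s,\varrho\ \textrm{arbitrary and}\ \tau_I(s)=t\ \textrm{the corresponding instance of the axiom}\}\cup\mathrm{Id},$$
and check the rooted branching transfer conditions: an initial action step of either side is matched exactly by pushing $\tau_I$ through $+$, $\cdot$, $\leftmerge$ or $u::$ via the transition rules, and the pair of residuals lands in $\mathrm{Id}$, hence is related by $\approx_{bp}^{sl}$ (respectively $\approx_{bs}^{sl}$, $\approx_{bhp}^{sl}$, $\approx_{bhhp}^{sl}$); the termination predicate $\downarrow$ transfers symmetrically. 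Since the quantum state $\varrho$ is carried unchanged by the congruence rules, and both sides of each axiom apply the same $effect(\cdot,\varrho)$ (or the same $\tau(\varrho)$) when an action fires, the "same quantum state" requirement built into the configuration bisimulations holds at every matched pair; and because the relations constructed are already downward closed, the hhp-versions come for free.

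The main obstacle, as in every soundness argument in this chapter, is not the matching of transitions — that is routine and, modulo the extra $\langle\cdot,\varrho\rangle$ component, identical to the classical case recorded in \cite{LOC1} — but confirming that the abstraction operator interacts benignly with the quantum configuration: one must check that $\tau_I$ only relabels events and never alters their dynamical effect on $\varrho$, and that the silent-step state update $\tau(\varrho)$ introduced by the rules of Table \ref{TRForqAbstraction} is produced identically on both sides of each axiom, so that equality of quantum states (which is part of the definition of $\approx_{rbp}^{sl}$, $\approx_{rbs}^{sl}$, $\approx_{rbhp}^{sl}$, $\approx_{rbhhp}^{sl}$ on configurations) is preserved throughout. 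Once this is verified for $TI1$–$TI6$ and $L14$–$L16$, the theorem follows by combining these checks with the already-established soundness of the remaining axioms and with the congruence theorem; in keeping with the style of the surrounding proofs I would then remark that the detailed case analyses are straightforward and leave them to the reader.
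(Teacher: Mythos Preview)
Your proposal is correct and follows essentially the same approach as the paper: both reduce soundness to checking the individual axioms of Table~\ref{AxiomsForqAbstractionLeft} by invoking that each $\approx_{rb\cdot}^{sl}$ is an equivalence and a congruence, and both then leave the routine case analyses to the reader. In fact you supply considerably more detail than the paper does---the explicit treatment of $TI1$--$TI6$, $L14$--$L16$ and the quantum-state bookkeeping---whereas the paper's proof is a bare four-line template repeated for each bisimulation.
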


\begin{proof}
(1) Since rooted branching static location step bisimulation $\approx_{rbs}^{sl}$ is both an equivalent and a congruent relation with respect to $APTC^{sl}_{\tau}$ with guarded linear
recursion, we only need to check if each axiom in Table \ref{AxiomsForqAbstractionLeft} is sound modulo rooted branching static location step bisimulation $\approx_{rbs}^{sl}$. We leave them as
exercises to the readers.

(2) Since rooted branching static location pomset bisimulation $\approx_{rbp}^{sl}$ is both an equivalent and a congruent relation with respect to $APTC^{sl}_{\tau}$ with guarded linear
recursion, we only need to check if each axiom in Table \ref{AxiomsForqAbstractionLeft} is sound modulo rooted branching static location pomset bisimulation $\approx_{rbp}^{sl}$. We leave them
as exercises to the readers.

(3) Since rooted branching static location hp-bisimulation $\approx_{rbhp}^{sl}$ is both an equivalent and a congruent relation with respect to $APTC^{sl}_{\tau}$ with guarded linear
recursion, we only need to check if each axiom in Table \ref{AxiomsForqAbstractionLeft} is sound modulo rooted branching static location hp-bisimulation $\approx_{rbhp}^{sl}$. We leave them as
exercises to the readers.

(4) Since rooted branching static location hhp-bisimulation $\approx_{rbhhp}^{sl}$ is both an equivalent and a congruent relation with respect to $APTC^{sl}_{\tau}$ with guarded linear
recursion, we only need to check if each axiom in Table \ref{AxiomsForqAbstractionLeft} is sound modulo rooted branching static location hhp-bisimulation $\approx_{rbhhp}^{sl}$. We leave them as
exercises to the readers.
\end{proof}

\begin{definition}[Cluster]
Let $E$ be a guarded linear recursive specification, and $I\subseteq \mathbb{E}$. Two recursion variable $X$ and $Y$ in $E$ are in the same cluster for $I$ iff there exist sequences of
transitions $\langle X|E\rangle\xrightarrow[u]{\{b_{11},\cdots, b_{1i}\}}\cdots[u]\xrightarrow{\{b_{m1},\cdots, b_{mi}\}}\langle Y|E\rangle$ and
$\langle Y|E\rangle\xrightarrow[v]{\{c_{11},\cdots, c_{1j}\}}\cdots\xrightarrow[v]{\{c_{n1},\cdots, c_{nj}\}}\langle X|E\rangle$, where
$b_{11},\cdots,b_{mi},c_{11},\cdots,c_{nj}\in I\cup\{\tau\}$.

$u_1::a_1\leftmerge\cdots\leftmerge u_k::a_k$ or $(u_1::a_1\leftmerge\cdots\leftmerge u_k::a_k) X$ is an exit for the cluster $C$ iff: (1) $u_1::a_1\leftmerge\cdots\leftmerge u_k::a_k$
or $(u_1::a_1\leftmerge\cdots\leftmerge u_k::a_k) X$ is a summand at the right-hand side of the recursive equation for a recursion variable in $C$, and (2) in the case of
$(u_1::a_1\leftmerge\cdots\leftmerge u_k::a_k) X$, either $a_l\notin I\cup\{\tau\}(l\in\{1,2,\cdots,k\})$ or $X\notin C$.
\end{definition}

\begin{center}
\begin{table}
  \begin{tabular}{@{}ll@{}}
\hline No. &Axiom\\
  $CFAR$ & If $X$ is in a cluster for $I$ with exits \\
           & $\{(u_{11}::a_{11}\leftmerge\cdots\leftmerge u_{1i}::a_{1i})Y_1,\cdots,(u_{m1}::a_{m1}\leftmerge\cdots\leftmerge u_{mi}::a_{mi})Y_m,$ \\
           & $v_{11}::b_{11}\leftmerge\cdots\leftmerge v_{1j}::b_{1j},\cdots,v_{n1}::b_{n1}\leftmerge\cdots\leftmerge v_{nj}::b_{nj}\}$, \\
           & then $\tau\cdot\tau_I(\langle X|E\rangle)=$\\
           & $\tau\cdot\tau_I((u_{11}::a_{11}\leftmerge\cdots\leftmerge u_{1i}::a_{1i})\langle Y_1|E\rangle+\cdots+(u_{m1}::a_{m1}\leftmerge\cdots\leftmerge u_{mi}::a_{mi})\langle Y_m|E\rangle$\\
           & $+v_{11}::b_{11}\leftmerge\cdots\leftmerge v_{1j}::b_{1j}+\cdots+v_{n1}::b_{n1}\leftmerge\cdots\leftmerge v_{nj}::b_{nj})$\\
  \end{tabular}
\caption{Cluster fair abstraction rule}
\label{qCFARLeft}
\end{table}
\end{center}

\begin{theorem}[Soundness of $CFAR$]
$CFAR$ is sound modulo rooted branching static location truly concurrent bisimulation equivalences $\approx_{rbs}^{sl}$, $\approx_{rbp}^{sl}$, $\approx_{rbhp}^{sl}$ and $\approx_{rbhhp}^{sl}$.
\end{theorem}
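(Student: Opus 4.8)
The plan is to reduce the soundness of $CFAR$ in the quantum setting to the already-established soundness of $CFAR$ for the classical system $APTC^{sl}$ (Theorem \ref{SCFAR21}, cf. \cite{CFAR}), exploiting the fact that rooted branching static location truly concurrent bisimilarity on quantum configurations $\langle C,\varrho\rangle$ factors into the corresponding classical bisimilarity on the control components together with preservation of the quantum state information along matched transitions. Concretely, let $X$ be in a cluster for $I$ with the exit set displayed in Table \ref{qCFARLeft}, write $L$ for $\tau\cdot\tau_I(\langle X|E\rangle)$ and $R$ for $\tau\cdot\tau_I$ applied to the sum of the exits, and recall that on the control layer the classical proof of $CFAR$ soundness supplies a witnessing rooted branching bisimulation $\mathcal{R}_0$ relating, after the leading $\tau$, every control state reachable inside the cluster of $X$ to the corresponding control state reachable on the exit side.

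First I would make precise the decomposition: by the transition rules in Tables \ref{TRForqAPTC} and \ref{TRForqAbstraction}, every move $\langle C,\varrho\rangle\xrightarrow[u]{X}\langle C',\varrho'\rangle$ or $\langle C,\varrho\rangle\xrightarrow{\tau}\langle C',\tau(\varrho)\rangle$ of a $qAPTC^{sl}_{\tau}$ term has its control part $C\to C'$ governed by exactly the classical rules for $APTC^{sl}_{\tau}$, while the quantum part is determined functionally ($effect(\cdot,\varrho)$ for visible and communication actions, $\tau(\varrho)$ for $\tau$-steps), independently of which branch of a cluster is taken. I would then lift $\mathcal{R}_0$ to $\mathcal{R}_{\varphi}=\{(\langle C_1,\varrho\rangle,\langle C_2,\varrho\rangle)\mid (C_1,C_2)\in\mathcal{R}_{0,\varphi},\ \varrho\in S\}$, with the obvious posetal variant $(\langle C_1,\varrho\rangle,f,\langle C_2,\varrho\rangle)$ for the hp- and hhp-cases, and check the four clauses of the rooted branching static location (pomset/step/hp/hhp) bisimulation definitions. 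The root condition and the two termination clauses are inherited directly from the classical proof once the $\varrho$-bookkeeping above is noted; the branching clauses for the interior of the cluster use that a $\tau$-step inside the cluster on one side is matched by the classical $\tau$-sequence on the other with the identical cumulative effect $\tau^{*}(\varrho)$ on the state, and the exit clauses use that firing an exit summand $(u_1::a_1\leftmerge\cdots\leftmerge u_k::a_k)$ (or $(\cdots)Y_i$) produces the same $effect$ on $\varrho$ on both sides. For the hhp-case one additionally verifies downward closure, which transfers from $\mathcal{R}_0$ since the lift only pairs with equal (indeed functionally-determined along these moves) quantum states; the four equivalences $\approx_{rbp}^{sl}$, $\approx_{rbs}^{sl}$, $\approx_{rbhp}^{sl}$, $\approx_{rbhhp}^{sl}$ are then handled in parallel by the usual pomset $\Rightarrow$ step $\Rightarrow$ hp $\Rightarrow$ hhp refinement of the transition-matching argument.

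The main obstacle I anticipate is carrying the fairness core of $CFAR$ over to configurations: one must argue that, although $\tau_I$ can create $\tau$-loops inside the cluster, any infinite internal computation from $\langle X|E\rangle$ eventually takes an exit, and that this is unaffected by the quantum state---that is, that $\tau(\cdot)$ and $effect(\cdot,\cdot)$ never distinguish loop iterations in a way that would block the matching. Since the control layer is literally the classical $APTC^{sl}_{\tau}$ system, this reduces to the classical cluster/exit combinatorics of \cite{CFAR}, and the only genuinely new work is verifying that the quantum states along the two matched computations coincide step for step, which holds because both sides apply $\tau$ (hence $\tau(\cdot)$) the same number of times before the matched exit and then apply the same $effect$ at the exit.
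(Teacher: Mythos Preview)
Your proposal is correct and is substantially more detailed than the paper's own proof, which merely invokes the fact that each $\approx_{rb\cdot}^{sl}$ is an equivalence and a congruence to reduce to checking the single $CFAR$ axiom of Table~\ref{qCFARLeft}, and then leaves that check ``as exercises to the readers.'' Your plan---lift the classical witnessing rooted branching bisimulation $\mathcal{R}_0$ for $CFAR$ in $APTC^{sl}_{\tau}$ (Theorem~\ref{SCFAR21}) to quantum configurations by pairing equal quantum states, using that the $\varrho$-component evolves functionally along matched control transitions---is precisely the natural way to carry out that exercise, and it is consonant with how the paper handles every other soundness/completeness result in this chapter (reduce to the classical result in \cite{LOC1} together with ``in the same quantum states'').

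One small caveat: your closing remark that ``both sides apply $\tau$ the same number of times before the matched exit'' is not what branching bisimulation guarantees---the matching $\tau^{*}$-sequence on the other side may have different length. The equality of quantum states along the match therefore cannot rest on equal counts of $\tau$-applications; it rests instead on the paper's convention that $\tau$ does not alter the externally-observable part of $\varrho$ (so $\tau^{k}(\varrho)$ agrees with $\varrho$ for bisimulation purposes regardless of $k$), which is exactly what makes your earlier, more careful phrase ``identical cumulative effect $\tau^{*}(\varrho)$'' correct. With that adjustment the argument goes through, and in fact supplies the content the paper omits.
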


\begin{proof}
(1) Since rooted branching static location step bisimulation $\approx_{rbs}^{sl}$ is both an equivalent and a congruent relation with respect to $APTC^{sl}_{\tau}$ with guarded linear
recursion, we only need to check if each axiom in Table \ref{qCFARLeft} is sound modulo rooted branching static location step bisimulation $\approx_{rbs}^{sl}$. We leave them as
exercises to the readers.

(2) Since rooted branching static location pomset bisimulation $\approx_{rbp}^{sl}$ is both an equivalent and a congruent relation with respect to $APTC^{sl}_{\tau}$ with guarded linear
recursion, we only need to check if each axiom in Table \ref{qCFARLeft} is sound modulo rooted branching static location pomset bisimulation $\approx_{rbp}^{sl}$. We leave them
as exercises to the readers.

(3) Since rooted branching static location hp-bisimulation $\approx_{rbhp}^{sl}$ is both an equivalent and a congruent relation with respect to $APTC^{sl}_{\tau}$ with guarded linear
recursion, we only need to check if each axiom in Table \ref{qCFARLeft} is sound modulo rooted branching static location hp-bisimulation $\approx_{rbhp}^{sl}$. We leave them as
exercises to the readers.

(4) Since rooted branching static location hhp-bisimulation $\approx_{rbhhp}^{sl}$ is both an equivalent and a congruent relation with respect to $APTC^{sl}_{\tau}$ with guarded linear
recursion, we only need to check if each axiom in Table \ref{qCFARLeft} is sound modulo rooted branching static location hhp-bisimulation $\approx_{rbhhp}^{sl}$. We leave them as
exercises to the readers.
\end{proof}

\begin{theorem}[Completeness of $qAPTC^{sl}_{\tau}$ with guarded linear recursion and $CFAR$]
Let $p$ and $q$ be closed $qAPTC^{sl}_{\tau}$ with guarded linear recursion and $CFAR$ terms, then,
\begin{enumerate}
  \item if $p\approx_{rbs}^{sl} q$ then $p=q$;
  \item if $p\approx_{rbp}^{sl} q$ then $p=q$;
  \item if $p\approx_{rbhp}^{sl} q$ then $p=q$;
  \item if $p\approx_{rbhhp}^{sl} q$ then $p=q$.
\end{enumerate}
\end{theorem}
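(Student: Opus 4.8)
The plan is to reduce this statement to the completeness of $APTC^{sl}_{\tau}$ with guarded linear recursion and $CFAR$ modulo the classical rooted branching static location truly concurrent bisimulation equivalences (Theorem \ref{CCFAR21}; see also \cite{LOC1}), exactly in the spirit of the earlier completeness theorems of this chapter for $qBATC$ with localities and for $qAPTC$ with localities and (silent step and) linear recursion. The key observation is a decomposition: a quantum configuration is a pair $\langle C,\varrho\rangle$, and the quantum transition rules of Tables \ref{TRForqAPTC} and \ref{TRForqAbstraction} only carry the state information $\varrho$ along, updating it to $\tau(\varrho)$ on a $\tau$-step and to $effect(\gamma(e_1,e_2),\varrho)$ on a communication. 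Consequently a rooted branching static location step (pomset, hp-, hhp-) bisimulation between quantum processes $p$ and $q$ is at the same time a rooted branching static location step (pomset, hp-, hhp-) bisimulation between $p$ and $q$ read as classical $APTC^{sl}_{\tau}$ terms, together with the extra requirement that related configurations carry equal quantum states. Hence $p\approx_{rbs}^{sl}q$ (resp.\ $\approx_{rbp}^{sl}$, $\approx_{rbhp}^{sl}$, $\approx_{rbhhp}^{sl}$) in the quantum sense entails in particular the corresponding classical equivalence.

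Concretely I would proceed in three steps. First, using the elimination theorem for $qAPTC$ with localities and silent step and guarded linear recursion together with the axioms $TI1$--$TI6$ and $L14$--$L16$ of Table \ref{AxiomsForqAbstractionLeft} and applications of $CFAR$ (Table \ref{qCFARLeft}) to collapse the $\tau$-clusters created by $\tau_I$, I would rewrite each closed $qAPTC^{sl}_{\tau}$ term with guarded linear recursion and $CFAR$ into a normal form $\langle X_1|E\rangle$ for a guarded linear recursive specification $E$; this manipulation is purely syntactic and identical to the one in \cite{LOC1}. Second, given $p=\langle X_1|E\rangle$ and $q=\langle Y_1|F\rangle$ with $p\approx_{rbs}^{sl}q$, the paragraph above yields the classical equivalence, so \cite{LOC1} produces a derivation of $p=q$ from the axioms of $APTC^{sl}_{\tau}$ with guarded linear recursion, $RDP$, $RSP$ and $CFAR$. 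Third, every axiom and rule used in that derivation is also available in $qAPTC^{sl}_{\tau}$ with guarded linear recursion and $CFAR$ and is sound for the quantum equivalences by the soundness theorems just established and by soundness of $CFAR$; therefore the same derivation is valid here and gives $qAPTC^{sl}_{\tau}$ with guarded linear recursion and $CFAR\vdash p=q$. The argument is uniform over the four equivalences, only the cited classical completeness clause changing.

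The step I expect to be the main obstacle is making the decomposition fully rigorous in the presence of $\tau_I$ and $CFAR$: when $CFAR$ contracts a cluster of recursion variables linked by $I\cup\{\tau\}$-transitions, one must check that the quantum state reached after traversing the cluster does not depend on which internal path was chosen, i.e.\ that the accumulated $\tau(\cdot)$ (and $effect(\cdot)$) updates agree, so that the contracted term is genuinely $\approx_{rbs}^{sl}$-equivalent, and not merely classically equivalent, to the original. This rests on the convention that $\tau$-steps affect $\varrho$ only through $\tau(\varrho)$ and on image-finiteness of the PESs, and it is exactly what lets us use soundness of $CFAR$ as a black box in the third step; once that is granted, nothing quantum-specific remains and the completeness proof is the classical one of \cite{LOC1}.
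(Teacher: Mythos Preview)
Your proposal is correct and follows essentially the same approach as the paper: the paper's proof simply observes that the quantum rooted branching static location equivalences $\approx_{rbp}^{sl}$, $\approx_{rbs}^{sl}$, $\approx_{rbhp}^{sl}$, $\approx_{rbhhp}^{sl}$ imply both the corresponding classical bisimilarities and equality of quantum states, and then invokes the completeness of $APTC^{sl}_{\tau}$ with guarded linear recursion from \cite{LOC1}. Your three-step argument is a more explicit and careful rendering of exactly this reduction, and your flagged concern about path-independence of quantum state updates under $CFAR$ is a genuine point of rigor that the paper's terse proof leaves implicit.
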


\begin{proof}
According to the definition of static location truly concurrent rooted branching bisimulation equivalences $\approx_{rbp}^{sl}$, $\approx_{rbs}^{sl}$, $\approx_{rbhp}^{sl}$ and $\approx_{rbhhp}^{sl}$,
$p\approx_{rbp}^{sl}q$, $p\approx_{rbs}^{sl}q$, $p\approx_{rbhp}^{sl}q$ and $p\approx_{rbhhp}^{sl}q$ implies
both the rooted branching bisimilarities between $p$ and $q$, and also the in the same quantum states. According to the completeness of $APTC^{sl}_{\tau}$ with guarded linear recursion (please refer to \cite{LOC1} for details), we can get the
completeness of $qAPTC^{sl}_{\tau}$ with guarded linear recursion.
\end{proof}

\subsection{Quantum Entanglement}\label{qe1}

If two quantum variables are entangled, then a quantum operation performed on one variable, then state of the other quantum variable is also changed. So, the entangled states must be
all the inner variables or all the public variables. We will introduced a mechanism to explicitly define quantum entanglement in open quantum systems.
A new constant called shadow constant denoted $\circledS^e_i$ corresponding to a specific quantum operation.
If there are $n$ quantum variables entangled, they maybe be distributed in different quantum systems, with a quantum operation performed on one variable, there should be one
$\circledS^e_i$ ($1\leq i\leq n-1$) executed on each variable in the other $n-1$ variables. Thus, distributed variables are all hidden behind actions.
In the following, we let $\circledS\in \mathbb{E}$.

The axiom system of the shadow constant $\circledS$ is shown in Table \ref{AxiomsForQE1}.

\begin{center}
\begin{table}
  \begin{tabular}{@{}ll@{}}
  \hline No. &Axiom\\
  $SC1$ & $\circledS\cdot x = x$ \\
  $SC2$ & $x\cdot\circledS = x$\\
  $SC3$ & $e\leftmerge\circledS^e=e$\\
  $SC4$ & $\circledS^e\leftmerge e=e$\\
  $SC5$ & $e\leftmerge(\circledS^e\cdot y) = e\cdot y$\\
  $SC6$ & $\circledS^e\leftmerge(e\cdot y) = e\cdot y$\\
  $SC7$ & $(e\cdot x)\leftmerge\circledS^e = e\cdot x$\\
  $SC8$ & $(\circledS^e\cdot x)\leftmerge e = e\cdot x$\\
  $SC9$ & $(e\cdot x)\leftmerge(\circledS^e\cdot y) = e\cdot (x\between y)$\\
  $SC10$ & $(\circledS^e\cdot x)\leftmerge(e\cdot y) = e\cdot (x\between y)$\\
  $L17$ & $loc::\circledS = \circledS$\\
\end{tabular}
\caption{Axioms of quantum entanglement}
\label{AxiomsForQE1}
\end{table}
\end{center}

The transition rules of constant $\circledS$ are as Table \ref{TRForENT1} shows.

\begin{center}
    \begin{table}
        $$\frac{}{\langle\circledS,\varrho\rangle\rightarrow\langle\surd,\varrho\rangle}$$
        $$\frac{\langle x, \varrho\rangle\xrightarrow[u]{e}\langle x',\varrho'\rangle\quad \langle y, \varrho'\rangle\xrightarrow{\circledS^e}\langle y',\varrho'\rangle}{\langle x\leftmerge y,\varrho\rangle\xrightarrow[u]{e}\langle x'\between y', \varrho'\rangle}$$
        $$\frac{\langle x, \varrho\rangle\xrightarrow[u]{e}\langle\surd,\varrho'\rangle\quad \langle y, \varrho'\rangle\xrightarrow{\circledS^e}\langle y',\varrho'\rangle}{\langle x\leftmerge y,\varrho\rangle\xrightarrow[u]{e}\langle y', \varrho'\rangle}$$
        $$\frac{\langle x, \varrho'\rangle\xrightarrow{\circledS^e}\langle\surd,\varrho'\rangle\quad \langle y, \varrho\rangle\xrightarrow[u]{e}\langle y',\varrho'\rangle}{\langle x\leftmerge y,\varrho\rangle\xrightarrow[u]{e}\langle y', \varrho'\rangle}$$
        $$\frac{\langle x, \varrho\rangle\xrightarrow[u]{e}\langle\surd,\varrho'\rangle\quad \langle y, \varrho'\rangle\xrightarrow{\circledS^e}\langle\surd,\varrho'\rangle}{\langle x\leftmerge y,\varrho\rangle\xrightarrow[u]{e}\langle \surd, \varrho'\rangle}$$
        \caption{Transition rules of constant $\circledS$}
        \label{TRForENT1}
    \end{table}
\end{center}

\begin{theorem}[Elimination theorem of $qAPTC^{sl}_{\tau}$ with guarded linear recursion and shadow constant]
Let $p$ be a closed $qAPTC^{sl}_{\tau}$ with guarded linear recursion and shadow constant term. Then there is a closed $qAPTC$ with localities term such that $qAPTC^{sl}_{\tau}$ with guarded linear recursion and shadow constant$\vdash p=q$.
\end{theorem}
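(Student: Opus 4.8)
The plan is to follow the same elimination strategy already used for the preceding elimination theorems (Theorem on $qAPTC^{sl}$, Theorem on $qAPTC^{sl}$ with silent step and guarded linear recursion), augmenting the induction with the new cases introduced by the shadow constant axioms $SC1$--$SC10$ and $L17$ of Table \ref{AxiomsForQE1}. First I would fix the target normal form: a closed term is in normal form if it is of the shape $\langle X_1|E\rangle$ for a guarded linear recursive specification $E$, exactly as in the elimination theorem for $qAPTC^{sl}_\tau$ with guarded linear recursion. The claim to prove is that every closed $qAPTC^{sl}_\tau$ term with guarded linear recursion and shadow constant rewrites, via the axioms, to such a normal form. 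I would set up a term rewriting system obtained by orienting the axioms of $qBATC^{sl}$, parallelism (Table \ref{AxiomsForqLeftParallelism}), the silent-step axioms (Table \ref{AxiomsForqTauLeft}), the abstraction axioms (Table \ref{AxiomsForqAbstractionLeft}), $RDP$/$RSP$, $CFAR$, together with $SC1$--$SC10$ and $L17$ all oriented left-to-right, and appeal to Theorem \ref{SN} (strong normalization) once a suitable well-founded order on the signature is exhibited.

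The key steps, in order, are: (1) extend the weight/ordering argument used for $APTC^{sl}$ to cover $\circledS$ and $\circledS^e$ — note that $SC1$, $SC2$, $SC3$, $SC4$, $SC8$, and $L17$ strictly decrease term size, while $SC5$--$SC10$ replace a left-merge by a sequential composition (or by $\cdot(x\between y)$), which is exactly the same shape-reduction already handled for $P5$--$P7$, so the same $>_{lpo}$ ordering with $\leftmerge$ above $\cdot$ above $+$ (and $\circledS$, $\circledS^e$ treated as minimal constants) still works; (2) invoke Theorem \ref{SN} to conclude the TRS is strongly normalizing, hence every closed term has a normal form with respect to these rules; (3) show by structural induction that a closed normal form in which no $SC$-rule, no parallelism rule, no abstraction rule, and no $::$-distribution rule applies must be a basic $qAPTC^{sl}$-style term — here the crucial observation is that every maximal occurrence of $\leftmerge$ in a normal form has been eliminated because either one side is $\circledS$/$\circledS^e$ (handled by $SC3$--$SC10$) or one side is a single event or event-prefixed term (handled by $P5$--$P9$), mirroring the argument already made for the earlier elimination theorems; (4) finally, apply the already-proved elimination theorem for $qAPTC^{sl}_\tau$ with guarded linear recursion (the theorem immediately preceding the Quantum Entanglement subsection) to rewrite the resulting shadow-constant-free term into the form $\langle X_1|E\rangle$ with $E$ guarded linear, and reassemble. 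Since the statement as written only asks for reduction to a \emph{closed} $qAPTC^{sl}$ term, step (4) may even be streamlined: once all $\circledS$ and $\circledS^e$ are gone, we are literally in $qAPTC^{sl}_\tau$ with guarded linear recursion, whose elimination theorem is already in hand.

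The main obstacle I anticipate is verifying that introducing $SC5$--$SC10$ as rewrite rules does not break confluence/termination in the presence of the existing left-merge axioms $P5$--$P9$ and the conflict-elimination/unless rules $U8$, $U12$ — in particular, a term like $(e\cdot x)\leftmerge(\circledS^e\cdot y)$ must not admit a competing reduction that reintroduces a $\leftmerge$ of incompatible shape or loops. I would resolve this by checking that the shadow rules only fire when one argument of $\leftmerge$ is (guarded by) a shadow constant matching the head event of the other argument, so they are orthogonal to $P5$--$P9$ (which require an ordering $e_1\le e_2$ between genuine events), and by confirming the weight function assigns $\circledS$ and $\circledS^e$ weight strictly less than any $e\in\mathbb{E}$ so that every $SC$-rule is weight-nonincreasing and lexicographically decreasing. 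A secondary, more bookkeeping-level obstacle is the interaction of $L17$ ($loc::\circledS=\circledS$) with $L1$--$L4$ and $L5$--$L10$: I would push all location prefixes inward using $L2$, $L3$, $L5$--$L10$ until they meet atomic actions or $\circledS$, at which point $L17$ (and $loc::e$ being already atomic in the basic-term grammar) terminates the descent. Beyond these, the proof is the routine case analysis the paper elsewhere delegates to the reader, so I would present steps (1)--(4) in outline and, following the paper's established style, cite \cite{LOC1} and the preceding theorems for the portions identical to the non-quantum, non-shadow development.
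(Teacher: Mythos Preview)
Your proposal is considerably more detailed than the paper's own treatment: the paper's proof of this theorem reads in its entirety ``We leave the proof to the readers as an excise.'' There is therefore no substantive argument in the paper to compare against. Your outline---orienting the $SC$-axioms and $L17$ as rewrite rules, extending the $>_{lpo}$ ordering so that $\circledS$ and $\circledS^e$ are minimal, invoking Theorem~\ref{SN} for termination, and then falling back on the already-established elimination theorem for $qAPTC^{sl}_\tau$ with guarded linear recursion once the shadow constants are gone---is exactly the kind of argument the paper's pattern of proofs (cf.\ the earlier elimination theorems, which either cite \cite{LOC1} or defer to the reader) would license, and it is sound in outline.

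One small caveat: you speak of reducing to a term of the form $\langle X_1|E\rangle$, but the theorem as stated only asks for a closed $qAPTC$-with-localities term $q$, not a guarded linear recursive specification. Your step~(4) therefore overshoots slightly; the statement is already discharged once the shadow constants are eliminated and you land in the fragment covered by the earlier elimination theorems. This is harmless, but worth noting so that your write-up matches the claimed conclusion.
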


\begin{proof}
We leave the proof to the readers as an excise.
\end{proof}

\begin{theorem}[Conservitivity of $qAPTC^{sl}_{\tau}$ with guarded linear recursion and shadow constant]
$qAPTC^{sl}_{\tau}$ with guarded linear recursion and shadow constant is a conservative extension of $qAPTC^{sl}_{\tau}$ with guarded linear recursion.
\end{theorem}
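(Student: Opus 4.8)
The plan is to obtain the result from the general Conservative Extension Theorem (Theorem \ref{TCE}), in exactly the same style as the earlier conservativity results of this chapter (for instance the conservativity of $qAPTC^{sl}_{\tau}$ with guarded linear recursion over $qAPTC$ with localities and linear recursion). Let $T_0$ be the TSS of $qAPTC^{sl}_{\tau}$ with guarded linear recursion, and let $T_1$ consist of the new transition rules for the shadow constant collected in Table \ref{TRForENT1}. To apply Theorem \ref{TCE} I have to check: (i) $T_0$ and $T_0\oplus T_1$ are positive after reduction; (ii) $T_0$ is source-dependent; and (iii) every rule $\rho\in T_1$ is such that either the source of $\rho$ is fresh, or $\rho$ has a premise $t\xrightarrow{a}t'$ (or $tP$) with $t$ over the old signature, all variables of $t$ occurring in the source of $\rho$, and $t'$, $a$ or $P$ fresh.

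First I would dispatch the bookkeeping that is inherited from the preceding development: $T_0$ is source-dependent (this was already used in the conservativity proofs for $qAPTC$ with localities, for guarded recursion, and for $\tau$/$\tau_I$), and both $T_0$ and $T_0\oplus T_1$ are positive after reduction, since the only negative premises anywhere are the $\nrightarrow$-premises in the rules for $\parallel$ and $\triangleleft$, and adding the rules of Table \ref{TRForENT1} does not introduce new negative premises. This reduces everything to an inspection of the handful of rules in Table \ref{TRForENT1}. The rule $\langle\circledS,\varrho\rangle\rightarrow\langle\surd,\varrho\rangle$ has source $\circledS$, which is a fresh constant, so the first alternative of condition (iii) applies immediately. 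Each remaining rule has source $x\leftmerge y$, which is \emph{not} fresh, so for these I would use the second alternative: every such rule carries a premise of the form $\langle y,\varrho'\rangle\xrightarrow{\circledS^e}\langle y',\varrho'\rangle$ (or the symmetric $\langle x,\varrho'\rangle\xrightarrow{\circledS^e}\langle\surd,\varrho'\rangle$), whose left-hand side is a variable of the source, a term over the old signature, and whose label $\circledS^e$ is fresh, i.e.\ does not occur in $T_0$. Hence condition (iii) holds for all of $T_1$, and Theorem \ref{TCE} gives that $T_0\oplus T_1$ is a conservative extension of $T_0$, which is the claim.

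The main obstacle is precisely the freshness of the shadow actions: the notational convention $\circledS\in\mathbb{E}$ stated just before Table \ref{AxiomsForQE1} must be read as introducing $\circledS$ and the associated $\circledS^e$ as \emph{new} symbols relative to the signature of the prior instance $qAPTC^{sl}_{\tau}$ with guarded linear recursion — not as identifying them with pre-existing quantum operations. I would make this explicit, arguing that (a) $\circledS$ and the labels/actions $\circledS^e$ do not appear in any rule or axiom of $qAPTC^{sl}_{\tau}$ with guarded linear recursion, and (b) in each $\leftmerge$-rule of Table \ref{TRForENT1} both variables $x$ and $y$ of the source $x\leftmerge y$ indeed occur, so that the fresh-premise clause of condition (iii) genuinely applies. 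Once this is pinned down the rest is entirely routine, and the remaining verification is just matching each of the (at most) five rules of Table \ref{TRForENT1} against one of the two alternatives in Theorem \ref{TCE}.
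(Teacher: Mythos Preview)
Your proposal is correct and follows exactly the template the paper uses for all its earlier conservativity results (source-dependency of the base TSS plus freshness of the new rules, invoking Theorem \ref{TCE}). The paper itself does not give a proof here at all --- it simply writes ``We leave the proof to the readers as an excise'' --- so you have in fact supplied what the paper omits, and your handling of the one genuine wrinkle (that the convention $\circledS\in\mathbb{E}$ must be read as enlarging $\mathbb{E}$ by fresh symbols rather than identifying $\circledS,\circledS^e$ with pre-existing events) is the right way to make the freshness clause of Theorem \ref{TCE} go through for the $\leftmerge$-rules in Table \ref{TRForENT1}.
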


\begin{proof}
We leave the proof to the readers as an excise.
\end{proof}

\begin{theorem}[Congruence theorem of $qAPTC^{sl}_{\tau}$ with guarded linear recursion and shadow constant]
Rooted branching static location truly concurrent bisimulation equivalences $\approx_{rbp}^{sl}$, $\approx_{rbs}^{sl}$, $\approx_{rbhp}^{sl}$ and $\approx_{rbhhp}^{sl}$ are all congruences with respect to $qAPTC^{sl}_{\tau}$
with guarded linear recursion and shadow constant.
\end{theorem}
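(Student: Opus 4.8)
The plan is to follow verbatim the route used for the congruence theorem of $qAPTC^{sl}_{\tau}$ with localities and guarded linear recursion, adding only the two genuinely new ingredients: the shadow constants $\circledS$, $\circledS^e$ and the extra transition rules for the left merge $\leftmerge$ in Table \ref{TRForENT1}. First I would note that $\approx_{rbp}^{sl}$, $\approx_{rbs}^{sl}$, $\approx_{rbhp}^{sl}$ and $\approx_{rbhhp}^{sl}$ are equivalence relations on the set of closed $qAPTC^{sl}_{\tau}$ with guarded linear recursion and shadow constant terms; this is immediate from their definitions and is untouched by the new constants. It then suffices to show that each of the four equivalences is preserved by every operator of the signature: $::$, $\cdot$, $+$, $\between$, $\parallel$, $\leftmerge$, $\mid$, $\Theta$, $\triangleleft$, $\partial_H$, $\tau_I$ and the recursion construct $\langle X_i|E\rangle$, together with the nullary constants $\delta$, $\tau$, $\circledS$ and $\circledS^e$, for which congruence is vacuous.

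For every operator other than $\leftmerge$ the argument is the one of the earlier congruence theorems. From witness relations $R_1$ for $x\approx x'$ and $R_2$ for $y\approx y'$ (of the appropriate kind, and in the hhp-case downward closed), I would assemble a witness relation on the configurations of the composed terms and verify the transfer, termination and rootedness clauses of the relevant definition, carrying the consistent location association $\varphi$ along exactly as before. The conservativity results already established — in particular that $qAPTC^{sl}_{\tau}$ with guarded linear recursion and shadow constant is a conservative extension of $qAPTC^{sl}_{\tau}$ with guarded linear recursion — guarantee that no transition labels, predicates or quantum states are disturbed by passing to the larger signature, so the step-by-step matching (including the requirement that the quantum state information $\varrho$ coincide after each move) is inherited unchanged. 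The extension of $\mathbb{E}$ to $\mathbb{E}\cup\{\tau\}$ is handled as in the earlier theorem, and the recursion case follows by rewriting the right-hand side of each equation of a guarded linear specification into the standard head form and lifting the congruence for the finite operators to $\langle X_i|E\rangle$.

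The substantive point, and the step I expect to be the main obstacle, is the left merge, because Table \ref{TRForENT1} adds rules in which one branch fires a visible action $e$ while the other branch is forced to fire its shadow $\circledS^e$, with $\varrho$ left untouched by the $\circledS^e$-move. I would argue that, for the purpose of bisimulation matching, $\circledS^e$ behaves like an ordinary visible event: it carries the fixed label $\lambda(\circledS^e)$ and produces no effect on $\varrho$, so if $y\approx_{rbs}^{sl} y'$ (respectively hp-, hhp-) and $\langle y,\varrho'\rangle\xrightarrow{\circledS^e}\langle y',\varrho'\rangle$, then $y'$ can perform a matching $\circledS^e$-move preserving $\varrho'$ and the residuals are again related; hence an $e$-labelled move of $x\leftmerge y$ is mimicked by $x'\leftmerge y'$, and symmetrically for the "$y$ fires, $x$ shadows" rules. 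The care required is that this must be checked for all four equivalences, for both orientations of the synchronization, and — for the rooted versions — the first move must be matched exactly while only the residuals need be branching bisimilar; I also need to confirm that the new rules do not break the format properties implicitly used to obtain congruence of $\leftmerge$ in the shadow-free system, i.e. that the combined rule set for $\leftmerge$ is still in a congruence-friendly (RBB-cool-like) format. Once $\leftmerge$ is settled, the theorem assembles exactly as the analogous statements earlier in the paper.
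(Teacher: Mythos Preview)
Your outline is sound and is exactly in the spirit of the paper's earlier congruence proofs; the paper itself gives no argument here at all, stating only ``We leave the proof to the readers as an exercise.'' Since the intended proof is evidently the routine extension of the previous congruence theorems to accommodate the new constant $\circledS$ and the additional $\leftmerge$-rules of Table~\ref{TRForENT1}, your proposal is precisely what the exercise asks for, and in fact supplies more detail (the explicit treatment of the $e/\circledS^e$ synchronization in $\leftmerge$ and the format check) than the paper ever writes out for any of the analogous results.
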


\begin{proof}
We leave the proof to the readers as an excise.
\end{proof}

\begin{theorem}[Soundness of $qAPTC^{sl}_{\tau}$ with guarded linear recursion and shadow constant]
Let $x$ and $y$ be closed $qAPTC^{sl}_{\tau}$ with guarded linear recursion and shadow constant terms. If $qAPTC^{sl}_{\tau}$ with guarded linear recursion and shadow constant$\vdash x=y$, then

\begin{enumerate}
  \item $x\approx_{rbs}^{sl} y$;
  \item $x\approx_{rbp}^{sl} y$;
  \item $x\approx_{rbhp}^{sl} y$;
  \item $x\approx_{rbhhp}^{sl} y$.
\end{enumerate}
\end{theorem}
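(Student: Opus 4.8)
The plan is to follow the same template used for all the soundness theorems earlier in this chapter (for instance the soundness theorems for $qAPTC$ with localities and silent step and guarded linear recursion, and for $qAPTC^{sl}_\tau$ with guarded linear recursion), since the shadow constant $\circledS$ is introduced only through a handful of new axioms in Table \ref{AxiomsForQE1} and a handful of new transition rules in Table \ref{TRForENT1}. First I would invoke the congruence theorem of $qAPTC^{sl}_\tau$ with guarded linear recursion and shadow constant (stated just above), which guarantees that $\approx_{rbp}^{sl}$, $\approx_{rbs}^{sl}$, $\approx_{rbhp}^{sl}$ and $\approx_{rbhhp}^{sl}$ are all both equivalences and congruences on the relevant term algebra; this reduces the soundness statement to checking, for each of the four equivalences, that every individual axiom of Table \ref{AxiomsForQE1} is sound, i.e.\ that its left- and right-hand sides are related by the equivalence in every quantum configuration $\langle \cdot,\varrho\rangle$.

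The core of the proof is therefore an axiom-by-axiom verification of $SC1$ through $SC10$ and $L17$. For each axiom I would exhibit an explicit (rooted branching static location) bisimulation relation containing the pair consisting of the two sides, and check the transfer conditions using the transition rules of Table \ref{TRForENT1} together with the already-established rules for $\leftmerge$, $\cdot$, $\between$ and $::$. The key observation, and the reason the axioms hold, is that $\langle\circledS,\varrho\rangle\rightarrow\langle\surd,\varrho\rangle$ performs no observable action and does not alter $\varrho$, while $\langle\circledS^e,\varrho'\rangle\xrightarrow{\circledS^e}\langle\cdot,\varrho'\rangle$ is a pure ``shadow'' move that is forced to synchronize with a genuine $e$-move of the partner branch under $\leftmerge$ and contributes nothing to the label or the state beyond what $e$ already does. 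Hence for $SC1,SC2$ the $\circledS$-step is silent and state-preserving, matching the empty prefix; for $SC3$--$SC10$ the matched transitions on both sides carry the same label $e$, reach configurations with the same residual terms up to $\between$, and leave the quantum state identical, so the obvious relation (identity extended by these finitely many pairs and closed under the transition structure) is a witness. For $L17$ one uses the locality rules and the fact that $\circledS$ is a constant with an $\epsilon$-indexed move. Since the $hp$- and $hhp$-cases only add the bookkeeping of the order isomorphism $f$ and downward closure, and the shadow moves introduce no new causal events, these extensions are routine once the pomset/step cases are done; and since the rooted branching versions differ from the strong ones only in the treatment of $\tau$ (and $\circledS$'s moves are not $\tau$), the same relations work.

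Concretely the steps I would carry out, in order, are: (1) state that by the congruence theorem it suffices to check soundness of each axiom; (2) record the relevant derived transition behavior of $\circledS$ and $\circledS^e$ from Table \ref{TRForENT1} — in particular that $\circledS$ contributes no action and no state change, and $\circledS^e$ only appears as a synchronized shadow of a real $e$; (3) verify $SC1$ and $SC2$ (the silent-step-like absorption laws); (4) verify $SC3,SC4$ (the base cases of left-merge with a shadow); (5) verify $SC5$--$SC8$ (one side a single action, the other a prefix); (6) verify $SC9,SC10$ (both sides proper prefixes, producing the $\between$ on the residuals); (7) verify $L17$ using the locality transition rules; and (8) remark that the constructed relations simultaneously witness all four equivalences, appealing to the fact that $\sim^{sl}$ implies $\approx_{rb}^{sl}$ and that the $hhp$-case is the downward closure of the $hp$-case. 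Following the style of the surrounding proofs, the routine per-axiom bisimulation checks would be left as an exercise to the reader, with the substantive content being the explanation in step (2).

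The main obstacle I anticipate is not any single calculation but making precise the claim that $\circledS^e$'s shadow move is \emph{forced} to accompany a genuine $e$-move and never occurs on its own in a well-formed term, so that axioms like $SC9$ and $SC10$ genuinely preserve the branching structure; one must be careful that $\langle\circledS^e,\varrho\rangle$ has no stand-alone transition in Table \ref{TRForENT1} (only the composite rules for $x\leftmerge y$ do), and that the effect on the quantum state $\varrho$ is exactly that of $e$ alone — this is where the entanglement semantics is actually doing its work, and where an incautious argument could claim soundness of something false. I would address this by reading off from Table \ref{TRForENT1} that the only rules producing a $\circledS^e$-labelled or $\circledS$-step are the four listed, checking that in each the resulting $\varrho'$ is the one produced by the $e$-transition of the partner, and then the bisimulation transfer conditions go through mechanically.
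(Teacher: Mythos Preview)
Your proposal is correct and follows exactly the template the paper uses for all its other soundness theorems (congruence reduces soundness to per-axiom checks, then verify each axiom modulo the given equivalence). The paper's own ``proof'' of this particular statement is simply ``We leave the proof to the readers as an excise,'' so your plan is in fact a faithful elaboration of what the paper expects the reader to do, with no substantive divergence.
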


\begin{proof}
We leave the proof to the readers as an excise.
\end{proof}

\begin{theorem}[Completeness of $qAPTC^{sl}_{\tau}$ with guarded linear recursion and shadow constant]
Let $p$ and $q$ are closed $qAPTC^{sl}_{\tau}$ with guarded linear recursion and shadow constant terms, then,

\begin{enumerate}
  \item if $p\approx_{rbs}^{sl} q$ then $p=q$;
  \item if $p\approx_{rbp}^{sl} q$ then $p=q$;
  \item if $p\approx_{rbhp}^{sl} q$ then $p=q$;
  \item if $p\approx_{rbhhp}^{sl} q$ then $p=q$.
\end{enumerate}
\end{theorem}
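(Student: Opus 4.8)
The plan is to follow the template of the other completeness proofs in this chapter: reduce the statement to a fragment whose completeness is already available, and then recover the quantum-state component for free. Concretely, I would lift the claim to the Completeness of $qAPTC^{sl}_{\tau}$ with guarded linear recursion and $CFAR$ (Table \ref{qCFARLeft}), which itself rests on the completeness of $APTC^{sl}_{\tau}$ with guarded linear recursion from \cite{LOC1}. Throughout, write $\approx$ generically for any of $\approx_{rbs}^{sl}$, $\approx_{rbp}^{sl}$, $\approx_{rbhp}^{sl}$, $\approx_{rbhhp}^{sl}$, since the argument is uniform in the four equivalences.

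First, given closed $qAPTC^{sl}_{\tau}$ with guarded linear recursion and shadow constant terms $p$ and $q$, I would apply the Elimination theorem of $qAPTC^{sl}_{\tau}$ with guarded linear recursion and shadow constant: using the shadow-constant axioms $SC1$--$SC10$ and $L17$ together with the axioms of $qAPTC^{sl}_{\tau}$ with localities, rewrite $p$ and $q$ into closed $qAPTC^{sl}_{\tau}$ with guarded linear recursion terms $p'$ and $q'$ in which no $\circledS$ or $\circledS^e$ occurs. By the Soundness theorem stated immediately above, the axiom system is sound modulo $\approx$, so $p\approx p'$ and $q\approx q'$; hence the hypothesis $p\approx q$ transports to $p'\approx q'$.

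Second, as in the proofs of the earlier completeness results, $\approx$ on quantum configurations decomposes into (i) the underlying rooted branching static-location truly concurrent bisimilarity of the classical process skeletons and (ii) the requirement that matched states coincide. Clause (ii) comes for free once (i) holds, because in the transition rules of $qAPTC^{sl}$ (Tables \ref{TRForqAPTC}, \ref{TRForqAbstraction}, \ref{TRForENT1}) the configuration $\varrho$ is threaded deterministically through the maps $\varrho\mapsto effect(\cdot,\varrho)$ and $\varrho\mapsto\tau(\varrho)$, so bisimilar skeletons launched from the same $\varrho$ remain in the same state. For clause (i) I would invoke the Completeness of $qAPTC^{sl}_{\tau}$ with guarded linear recursion and $CFAR$ to obtain a derivation of $p'=q'$ in $qAPTC^{sl}_{\tau}$ with guarded linear recursion and $CFAR$; chaining it with the equalities $p=p'$ and $q=q'$ from the first step yields $p=q$. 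The same derivation establishes all four items at once.

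The crux is the first step. One has to check that the shadow-constant axioms actually remove every $\circledS$ and $\circledS^e$ from a closed term: a leading shadow constant occurring under $\leftmerge$ (axioms $SC3$--$SC10$) must be pushed across $\cdot$, $+$, $\between$, $\parallel$, $\mid$, $\Theta$, $\triangleleft$ and the locality prefixes by means of $L5$--$L10$ and the $P$-, $C$- and $U$-axioms, so the shadow-elimination has to be interleaved with the ordinary basic-term normalisation of $qAPTC^{sl}$. Arguing that this combined rewriting terminates (for instance via a well-founded ordering in the sense of Theorem \ref{SN}) and that the shadow constants produced by entanglement on one side are always matched on the other side modulo $\approx$ is where the genuine work lies; once the normal forms are shadow-free and in linear-recursion form, the rest is a routine appeal to the cited theorems.
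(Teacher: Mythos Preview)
The paper's own proof of this theorem is simply ``We leave the proof to the readers as an excise,'' so there is no concrete argument to compare against. Your proposal is consistent with the template the paper uses for every other completeness result in this chapter: eliminate the new constructs (here the shadow constants, via the Elimination theorem you cite), observe that the quantum bisimulation decomposes into a classical rooted-branching bisimilarity plus identity of quantum states, and then defer to the completeness of the underlying classical calculus $APTC^{sl}_{\tau}$ with guarded linear recursion from \cite{LOC1}. That is exactly the shape of the proofs of Theorems such as the Completeness of $qAPTC^{sl}_{\tau}$ with guarded linear recursion and $CFAR$, so your plan is what the paper evidently intends the reader to supply.

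One remark: your discussion of the ``crux'' is more honest than the paper, which also leaves the corresponding Elimination theorem for the shadow constant as an exercise. You correctly identify that the shadow-constant axioms $SC1$--$SC10$, $L17$ must be interleaved with the standard normalisation to push every $\circledS^e$ into a position where it can be absorbed, and that termination of this combined rewriting needs an argument. The paper does not supply this either, so you are not missing anything relative to the source; you have simply made explicit the gap that the paper glosses over.
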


\begin{proof}
We leave the proof to the readers as an excise.
\end{proof}

\subsection{Unification of Quantum and Classical Computing for Open Quantum Systems}\label{uni1}

We give the transition rules under quantum configuration for traditional atomic actions (events) $e'\in\mathbb{E}$ as Table \ref{TRForBPA3} shows.

\begin{center}
    \begin{table}
        $$\frac{}{\langle e',\varrho\rangle\xrightarrow[\epsilon]{e'}\langle\surd,\varrho\rangle}\quad \frac{}{\langle loc::e',\varrho\rangle\xrightarrow[loc]{e'}\langle\surd,\varrho'\rangle}$$
        $$\frac{\langle x,\varrho\rangle\xrightarrow[u]{e'}\langle x',\varrho'\rangle}{\langle loc::x,\varrho\rangle\xrightarrow[loc\ll u]{e'}\langle loc::x',\varrho'\rangle}$$
        $$\frac{\langle x,\varrho\rangle\xrightarrow[u]{e'}\langle\surd,\varrho\rangle}{\langle x+y,\varrho\rangle\xrightarrow[u]{e'}\langle\surd,\varrho\rangle}$$
        $$\frac{\langle x,\varrho\rangle\xrightarrow[u]{e'}\langle x',\varrho\rangle}{\langle x+y,\varrho\rangle\xrightarrow[u]{e'}\langle x',\varrho\rangle}$$
        $$\frac{\langle y,\varrho\rangle\xrightarrow[u]{e'}\langle\surd,\varrho\rangle}{\langle x+y,\varrho\rangle\xrightarrow[u]{e'}\langle\surd,\varrho\rangle}$$
        $$\frac{\langle y,\varrho\rangle\xrightarrow[u]{e'}\langle y',\varrho\rangle}{\langle x+y,\varrho\rangle\xrightarrow[u]{e'}\langle y',\varrho\rangle}$$
        $$\frac{\langle x,\varrho\rangle\xrightarrow[u]{e'}\langle\surd,\varrho\rangle}{\langle x\cdot y,\varrho\rangle\xrightarrow[u]{e'}\langle y,\varrho\rangle}$$
        $$\frac{\langle x,\varrho\rangle\xrightarrow[u]{e'}\langle x',\varrho\rangle}{\langle x\cdot y,\varrho\rangle\xrightarrow[u]{e'}\langle x'\cdot y,\varrho\rangle}$$
        \caption{Transition rules of BATC under quantum configuration}
        \label{TRForBPA3}
    \end{table}
\end{center}

And the axioms for traditional actions are the same as those of qBATC with localities. And it is natural can be extended to qAPTC with localities, recursion and abstraction. So, quantum and classical computing
are unified under the framework of qAPTC with localities for open quantum systems.

\newpage\section{Applications of qAPTC with Localities}\label{aqaptcl}

Quantum and classical computing in open systems are unified with qAPTC with localities, which have the same equational logic and the same quantum configuration based operational semantics.
The unification can be used widely in verification for the behaviors of quantum and classical computing mixed systems with distributed characteristics. In this chapter, we show its usage in verification of the
distributed quantum communication protocols.

\subsection{Verification of BB84 Protocol}\label{VBB844}

The BB84 protocol is used to create a private key between two parities, Alice and Bob. Firstly, we introduce the basic BB84 protocol briefly, which is illustrated in Figure \ref{BB844}.

\begin{enumerate}
  \item Alice create two string of bits with size $n$ randomly, denoted as $B_a$ and $K_a$.
  \item Alice generates a string of qubits $q$ with size $n$, and the $i$th qubit in $q$ is $|x_y\rangle$, where $x$ is the $i$th bit of $B_a$ and $y$ is the $i$th bit of $K_a$.
  \item Alice sends $q$ to Bob through a quantum channel $Q$ between Alice and Bob.
  \item Bob receives $q$ and randomly generates a string of bits $B_b$ with size $n$.
  \item Bob measures each qubit of $q$ according to a basis by bits of $B_b$. And the measurement results would be $K_b$, which is also with size $n$.
  \item Bob sends his measurement bases $B_b$ to Alice through a public channel $P$.
  \item Once receiving $B_b$, Alice sends her bases $B_a$ to Bob through channel $P$, and Bob receives $B_a$.
  \item Alice and Bob determine that at which position the bit strings $B_a$ and $B_b$ are equal, and they discard the mismatched bits of $B_a$ and $B_b$. Then the remaining bits of $K_a$ and $K_b$, denoted as $K_a'$ and $K_b'$ with $K_{a,b}=K_a'=K_b'$.
\end{enumerate}

\begin{figure}
  \centering
  \includegraphics{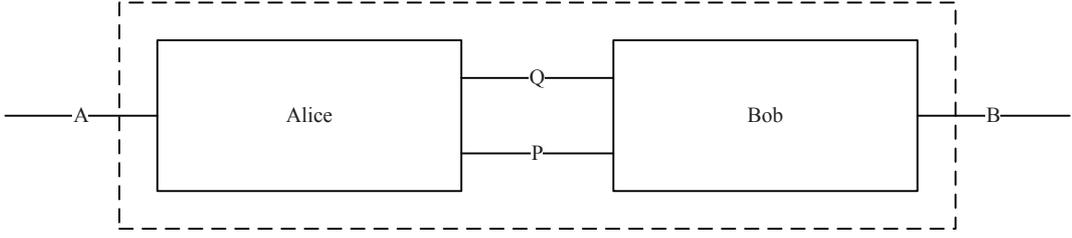}
  \caption{The BB84 protocol.}
  \label{BB844}
\end{figure}

We re-introduce the basic BB84 protocol in an abstract way with more technical details as Figure \ref{BB844} illustrates.

Now, we assume a special measurement operation $Rand[q;B_a]$ which create a string of $n$ random bits $B_a$ from the $q$ quantum system, and the same as $Rand[q;K_a]$, $Rand[q';B_b]$. $M[q;K_b]$ denotes the Bob's measurement operation of $q$. The generation of $n$ qubits $q$ through two quantum operations $Set_{K_a}[q]$ and $H_{B_a}[q]$. Alice sends $q$ to Bob through the quantum channel $Q$ by quantum communicating action $send_{Q}(q)$ and Bob receives $q$ through $Q$ by quantum communicating action $receive_{Q}(q)$. Bob sends $B_b$ to Alice through the public channel $P$ by classical communicating action $send_{P}(B_b)$ and Alice receives $B_b$ through channel $P$ by classical communicating action $receive_{P}(B_b)$, and the same as $send_{P}(B_a)$ and $receive_{P}(B_a)$. Alice and Bob generate the private key $K_{a,b}$ by a classical comparison action $cmp(K_{a,b},K_a,K_b,B_a,B_b)$. Let Alice and Bob be a system $AB$ and let interactions between Alice and Bob be internal actions. $AB$ receives external input $D_i$ through channel $A$ by communicating action $receive_A(D_i)$ and sends results $D_o$ through channel $B$ by communicating action $send_B(D_o)$.

Then the state transition of Alice can be described as follows.

\begin{eqnarray}
&&A=loc_A::(\sum_{D_i\in \Delta_i}receive_A(D_i)\cdot A_1)\nonumber\\
&&A_1=Rand[q;B_a]\cdot A_2\nonumber\\
&&A_2=Rand[q;K_a]\cdot A_3\nonumber\\
&&A_3=Set_{K_a}[q]\cdot A_4\nonumber\\
&&A_4=H_{B_a}[q]\cdot A_5\nonumber\\
&&A_5=send_Q(q)\cdot A_6\nonumber\\
&&A_6=receive_P(B_b)\cdot A_7\nonumber\\
&&A_7=send_P(B_a)\cdot A_8\nonumber\\
&&A_8=cmp(K_{a,b},K_a,K_b,B_a,B_b)\cdot A\nonumber
\end{eqnarray}

where $\Delta_i$ is the collection of the input data.

And the state transition of Bob can be described as follows.

\begin{eqnarray}
&&B=loc_B::(receive_Q(q)\cdot B_1)\nonumber\\
&&B_1=Rand[q';B_b]\cdot B_2\nonumber\\
&&B_2=M[q;K_b]\cdot B_3\nonumber\\
&&B_3=send_P(B_b)\cdot B_4\nonumber\\
&&B_4=receive_P(B_a)\cdot B_5\nonumber\\
&&B_5=cmp(K_{a,b},K_a,K_b,B_a,B_b)\cdot B_6\nonumber\\
&&B_6=\sum_{D_o\in\Delta_o}send_B(D_o)\cdot B\nonumber
\end{eqnarray}

where $\Delta_o$ is the collection of the output data.

The send action and receive action of the same data through the same channel can communicate each other, otherwise, a deadlock $\delta$ will be caused. We define the following communication functions.

\begin{eqnarray}
&&\gamma(send_Q(q),receive_Q(q))\triangleq c_Q(q)\nonumber\\
&&\gamma(send_P(B_b),receive_P(B_b))\triangleq c_P(B_b)\nonumber\\
&&\gamma(send_P(B_a),receive_P(B_a))\triangleq c_P(B_a)\nonumber
\end{eqnarray}

Let $A$ and $B$ in parallel, then the system $AB$ can be represented by the following process term.

$$\tau_I(\partial_H(\Theta(A\between B)))$$

where $H=\{send_Q(q),receive_Q(q),send_P(B_b),receive_P(B_b),send_P(B_a),receive_P(B_a)\}$ and $I=\{Rand[q;B_a], Rand[q;K_a], Set_{K_a}[q], H_{B_a}[q], Rand[q';B_b], M[q;K_b], c_Q(q), c_P(B_b),\\ c_P(B_a), cmp(K_{a,b},K_a,K_b,B_a,B_b)\}$.

Then we get the following conclusion.

\begin{theorem}
The basic BB84 protocol $\tau_I(\partial_H(\Theta(A\between B)))$ can exhibit desired external behaviors.
\end{theorem}

\begin{proof}
We can get $\tau_I(\partial_H(\Theta(A\between B)))=\sum_{D_i\in \Delta_i}\sum_{D_o\in\Delta_o}loc_A::receive_A(D_i)\leftmerge loc_B::send_B(D_o)\leftmerge \tau_I(\partial_H(\Theta(A\between B)))$. So, the basic
BB84 protocol $\tau_I(\partial_H(\Theta(A\between B)))$ can exhibit desired external behaviors.
\end{proof}

\subsection{Verification of E91 Protocol}\label{VE914}

The E91 protocol\cite{E91} is the first quantum protocol which utilizes entanglement and mixes quantum and classical information. In this section, we take an example of verification for the E91 protocol.

The E91 protocol is used to create a private key between two parities, Alice and Bob. Firstly, we introduce the basic E91 protocol briefly, which is illustrated in Figure \ref{E914}.

\begin{enumerate}
  \item Alice generates a string of EPR pairs $q$ with size $n$, i.e., $2n$ particles, and sends a string of qubits $q_b$ from each EPR pair with $n$ to Bob through a quantum channel $Q$, remains the other string of qubits $q_a$ from each pair with size $n$.
  \item Alice create two string of bits with size $n$ randomly, denoted as $B_a$ and $K_a$.
  \item Bob receives $q_b$ and randomly generates a string of bits $B_b$ with size $n$.
  \item Alice measures each qubit of $q_a$ according to a basis by bits of $B_a$. And the measurement results would be $K_a$, which is also with size $n$.
  \item Bob measures each qubit of $q_b$ according to a basis by bits of $B_b$. And the measurement results would be $K_b$, which is also with size $n$.
  \item Bob sends his measurement bases $B_b$ to Alice through a public channel $P$.
  \item Once receiving $B_b$, Alice sends her bases $B_a$ to Bob through channel $P$, and Bob receives $B_a$.
  \item Alice and Bob determine that at which position the bit strings $B_a$ and $B_b$ are equal, and they discard the mismatched bits of $B_a$ and $B_b$. Then the remaining bits of $K_a$ and $K_b$, denoted as $K_a'$ and $K_b'$ with $K_{a,b}=K_a'=K_b'$.
\end{enumerate}

\begin{figure}
  \centering
  \includegraphics{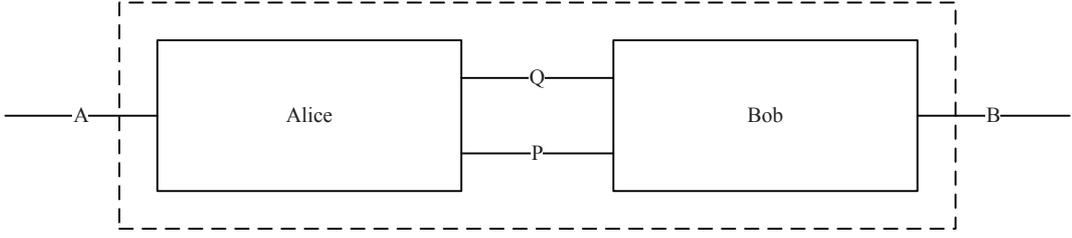}
  \caption{The E91 protocol.}
  \label{E914}
\end{figure}

We re-introduce the basic E91 protocol in an abstract way with more technical details as Figure \ref{E914} illustrates.

Now, $M[q_a;K_a]$ denotes the Alice's measurement operation of $q_a$, and $\circledS_{M[q_a;K_a]}$ denotes the responding shadow constant; $M[q_b;K_b]$ denotes the Bob's measurement operation of $q_b$, and $\circledS_{M[q_b;K_b]}$ denotes the responding shadow constant. Alice sends $q_b$ to Bob through the quantum channel $Q$ by quantum communicating action $send_{Q}(q_b)$ and Bob receives $q_b$ through $Q$ by quantum communicating action $receive_{Q}(q_b)$. Bob sends $B_b$ to Alice through the public channel $P$ by classical communicating action $send_{P}(B_b)$ and Alice receives $B_b$ through channel $P$ by classical communicating action $receive_{P}(B_b)$, and the same as $send_{P}(B_a)$ and $receive_{P}(B_a)$. Alice and Bob generate the private key $K_{a,b}$ by a classical comparison action $cmp(K_{a,b},K_a,K_b,B_a,B_b)$. Let Alice and Bob be a system $AB$ and let interactions between Alice and Bob be internal actions. $AB$ receives external input $D_i$ through channel $A$ by communicating action $receive_A(D_i)$ and sends results $D_o$ through channel $B$ by communicating action $send_B(D_o)$.

Then the state transition of Alice can be described as follows.

\begin{eqnarray}
&&A=loc_A::(\sum_{D_i\in \Delta_i}receive_A(D_i)\cdot A_1)\nonumber\\
&&A_1=send_Q(q_b)\cdot A_2\nonumber\\
&&A_2=M[q_a;K_a]\cdot A_3\nonumber\\
&&A_3=\circledS_{M[q_b;K_b]}\cdot A_4\nonumber\\
&&A_4=receive_P(B_b)\cdot A_5\nonumber\\
&&A_5=send_P(B_a)\cdot A_6\nonumber\\
&&A_6=cmp(K_{a,b},K_a,K_b,B_a,B_b)\cdot A\nonumber
\end{eqnarray}

where $\Delta_i$ is the collection of the input data.

And the state transition of Bob can be described as follows.

\begin{eqnarray}
&&B=loc_B::(receive_Q(q_b)\cdot B_1)\nonumber\\
&&B_1=\circledS_{M[q_a;K_a]}\cdot B_2\nonumber\\
&&B_2=M[q_b;K_b]\cdot B_3\nonumber\\
&&B_3=send_P(B_b)\cdot B_4\nonumber\\
&&B_4=receive_P(B_a)\cdot B_5\nonumber\\
&&B_5=cmp(K_{a,b},K_a,K_b,B_a,B_b)\cdot B_6\nonumber\\
&&B_6=\sum_{D_o\in\Delta_o}send_B(D_o)\cdot B\nonumber
\end{eqnarray}

where $\Delta_o$ is the collection of the output data.

The send action and receive action of the same data through the same channel can communicate each other, otherwise, a deadlock $\delta$ will be caused. The quantum operation and its shadow constant pair will lead entanglement occur, otherwise, a deadlock $\delta$ will occur. We define the following communication functions.

\begin{eqnarray}
&&\gamma(send_Q(q_b),receive_Q(q_b))\triangleq c_Q(q_b)\nonumber\\
&&\gamma(send_P(B_b),receive_P(B_b))\triangleq c_P(B_b)\nonumber\\
&&\gamma(send_P(B_a),receive_P(B_a))\triangleq c_P(B_a)\nonumber
\end{eqnarray}

Let $A$ and $B$ in parallel, then the system $AB$ can be represented by the following process term.

$$\tau_I(\partial_H(\Theta(A\between B)))$$

where $H=\{send_Q(q_b),receive_Q(q_b),send_P(B_b),receive_P(B_b),send_P(B_a),receive_P(B_a),\\ M[q_a;K_a], \circledS_{M[q_a;K_a]}, M[q_b;K_b], \circledS_{M[q_b;K_b]}\}$ and $I=\{c_Q(q_b), c_P(B_b), c_P(B_a), M[q_a;K_a], M[q_b;K_b],\\ cmp(K_{a,b},K_a,K_b,B_a,B_b)\}$.

Then we get the following conclusion.

\begin{theorem}
The basic E91 protocol $\tau_I(\partial_H(A\parallel B))$ can exhibit desired external behaviors.
\end{theorem}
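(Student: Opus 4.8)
The plan is to mimic exactly the proof structure used for the BB84 protocol in the previous subsection, since the E91 specification has the same shape: two sequential, cyclic recursive specifications $A$ and $B$ that handshake on a fixed list of channels and then loop. First I would form the merge $A\between B$ and repeatedly apply the expansion laws of $qAPTC$ with localities — axioms $P1$--$P9$, $C1$--$C8$ for $\parallel$, $\leftmerge$, and $\mid$, together with the location axioms $L5$--$L10$ — to push the parallel composition inward step by step, unfolding $A$ and $B$ via $RDP$ at each stage. At every interaction point the only non-deadlocking summands are those where a $send$ meets the matching $receive$ on the same channel (giving $c_Q(q_b)$, $c_P(B_b)$, $c_P(B_a)$ via the declared communication functions), and — crucially for E91 — where a measurement $M[q_a;K_a]$ meets its shadow constant $\circledS_{M[q_a;K_a]}$ (and symmetrically $M[q_b;K_b]$ with $\circledS_{M[q_b;K_b]}$), handled by the quantum-entanglement axioms $SC3$--$SC10$ and $L17$ from Table \ref{AxiomsForQE1}. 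All mismatched send/receive or measurement/shadow pairs collapse to $\delta$ under $\partial_H$ and are absorbed by $A6$ ($x+\delta=x$) and $A7$, so the conflict-elimination operator $\Theta$ together with $\triangleleft$ prunes the term to a single linear thread.

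Next I would apply $\partial_H$ with the stated $H$ (which now also blocks the bare $send/receive$ and the bare measurement/shadow actions so that only their communication/entanglement results survive), using $D1$--$D6$ and $L11$; then apply $\tau_I$ with the stated $I$ (containing $c_Q(q_b)$, $c_P(B_b)$, $c_P(B_a)$, $M[q_a;K_a]$, $M[q_b;K_b]$, and $cmp(\cdots)$) using $TI1$--$TI6$, $L14$--$L16$ and $PTI$-style distribution, turning every internal communication, internal measurement and the comparison into $\tau$. What remains after one full cycle is a term of the form $\sum_{D_i\in\Delta_i}\sum_{D_o\in\Delta_o} loc_A::receive_A(D_i)\leftmerge loc_B::send_B(D_o)\leftmerge \tau_I(\partial_H(\Theta(A\between B)))$, modulo a front of $\tau$'s which are removed by $B1$ ($e\cdot\tau=e$), $B3$ ($x\leftmerge\tau=x$) and the recursive verification machinery ($CFAR$ / the soundness of $\tau$-laws, Theorem on soundness of $qAPTC^{sl}_\tau$ with guarded linear recursion), and then folded back up by $RSP$. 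Concretely I would introduce a fresh linear recursion variable for the one-cycle behavior and invoke $RSP$ to identify it with $\tau_I(\partial_H(\Theta(A\between B)))$, exactly as in the BB84 proof, and conclude that the protocol exhibits the desired external behavior (input on $A$, output on $B$, with the private key established internally).

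The main obstacle, and the one genuinely new ingredient relative to BB84, is bookkeeping the entanglement synchronization correctly: the E91 run requires that Alice's real measurement $M[q_a;K_a]$ be matched by Bob's shadow $\circledS_{M[q_a;K_a]}$ (state transition $B_1$) and Bob's real measurement $M[q_b;K_b]$ be matched by Alice's shadow $\circledS_{M[q_b;K_b]}$ (state transition $A_3$), and these two shadow/real handshakes are interleaved with the ordinary channel handshakes. I would need to check that the ordering in the two specifications is consistent — that at the point Alice reaches $A_3=\circledS_{M[q_b;K_b]}\cdot A_4$ Bob is indeed at $B_2=M[q_b;K_b]\cdot B_3$ (after $B_1$ has consumed Alice's $M[q_a;K_a]$ via $\circledS_{M[q_a;K_a]}$), so that the transition rules of Table \ref{TRForENT1} fire and $SC9$/$SC10$ apply rather than producing a $\delta$ under $\partial_H$. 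If this alignment holds (it does, by inspection of the equations) the rest is the same routine expansion-then-abstraction computation; I will state the one-cycle identity and relegate the detailed term manipulation to the reader, in keeping with the style of the surrounding proofs. One caveat: the theorem statement writes $\tau_I(\partial_H(A\parallel B))$ whereas the construction uses $\between$; I would either read $\parallel$ as $\between$ here or insert one application of $P1$ to reconcile them, noting the discrepancy is immaterial since $x\mid y$ contributes only $\delta$-summands after $\partial_H$.
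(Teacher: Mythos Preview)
Your proposal is correct and follows the same approach as the paper: expand the merge, let the declared communications and the shadow-constant axioms $SC3$--$SC10$ synchronize the entangled measurements, prune with $\partial_H$, hide with $\tau_I$, and fold back via $RSP$ to obtain the recursive identity $\tau_I(\partial_H(\Theta(A\between B)))=\sum_{D_i\in \Delta_i}\sum_{D_o\in\Delta_o}loc_A::receive_A(D_i)\leftmerge loc_B::send_B(D_o)\leftmerge \tau_I(\partial_H(\Theta(A\between B)))$. The paper's own proof merely states this final equation without spelling out the intermediate axiom applications, so your write-up is in fact considerably more detailed than (but entirely consistent with) the original.
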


\begin{proof}
We can get $\tau_I(\partial_H(\Theta(A\between B)))=\sum_{D_i\in \Delta_i}\sum_{D_o\in\Delta_o}loc_A::receive_A(D_i)\leftmerge loc_B::send_B(D_o)\leftmerge \tau_I(\partial_H(\Theta(A\between B)))$.
So, the basic E91 protocol $\tau_I(\partial_H(\Theta(A\between B)))$ can exhibit desired external behaviors.
\end{proof}

\subsection{Verification of B92 Protocol}\label{VB924}

The famous B92 protocol\cite{B92} is a quantum key distribution protocol, in which quantum information and classical information are mixed.

The B92 protocol is used to create a private key between two parities, Alice and Bob. B92 is a protocol of quantum key distribution (QKD) which uses polarized photons as information carriers. Firstly, we introduce the basic B92 protocol briefly, which is illustrated in Figure \ref{B924}.

\begin{enumerate}
  \item Alice create a string of bits with size $n$ randomly, denoted as $A$.
  \item Alice generates a string of qubits $q$ with size $n$, carried by polarized photons. If $A_i=0$, the ith qubit is $|0\rangle$; else if $A_i=1$, the ith qubit is $|+\rangle$.
  \item Alice sends $q$ to Bob through a quantum channel $Q$ between Alice and Bob.
  \item Bob receives $q$ and randomly generates a string of bits $B$ with size $n$.
  \item If $B_i=0$, Bob chooses the basis $\oplus$; else if $B_i=1$, Bob chooses the basis $\otimes$. Bob measures each qubit of $q$ according to the above basses. And Bob builds a String of bits $T$, if the measurement produces $|0\rangle$ or $|+\rangle$, then $T_i=0$; else if the measurement produces $|1\rangle$ or $|-\rangle$, then $T_i=1$, which is also with size $n$.
  \item Bob sends $T$ to Alice through a public channel $P$.
  \item Alice and Bob determine that at which position the bit strings $A$ and $B$ are remained for which $T_i=1$. In absence of Eve, $A_i=1-B_i$, a shared raw key $K_{a,b}$ is formed by $A_i$.
\end{enumerate}

\begin{figure}
  \centering
  \includegraphics{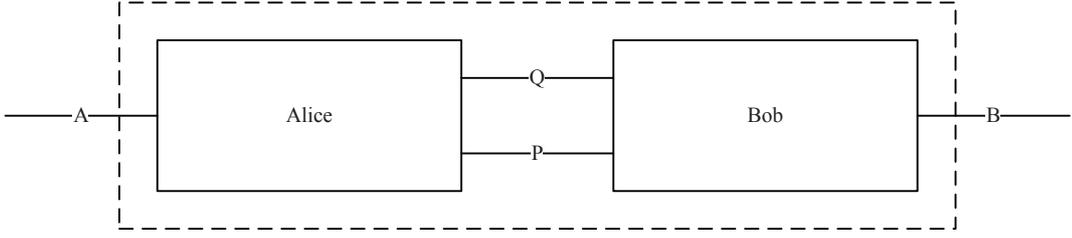}
  \caption{The B92 protocol.}
  \label{B924}
\end{figure}

We re-introduce the basic B92 protocol in an abstract way with more technical details as Figure \ref{B924} illustrates.

Now, we assume a special measurement operation $Rand[q;A]$ which create a string of $n$ random bits $A$ from the $q$ quantum system, and the same as $Rand[q';B]$. $M[q;T]$ denotes the Bob's measurement operation of $q$. The generation of $n$ qubits $q$ through a quantum operation $Set_{A}[q]$. Alice sends $q$ to Bob through the quantum channel $Q$ by quantum communicating action $send_{Q}(q)$ and Bob receives $q$ through $Q$ by quantum communicating action $receive_{Q}(q)$. Bob sends $T$ to Alice through the public channel $P$ by classical communicating action $send_{P}(T)$ and Alice receives $T$ through channel $P$ by classical communicating action $receive_{P}(T)$. Alice and Bob generate the private key $K_{a,b}$ by a classical comparison action $cmp(K_{a,b},T,A,B)$. Let Alice and Bob be a system $AB$ and let interactions between Alice and Bob be internal actions. $AB$ receives external input $D_i$ through channel $A$ by communicating action $receive_A(D_i)$ and sends results $D_o$ through channel $B$ by communicating action $send_B(D_o)$.

Then the state transition of Alice can be described as follows.

\begin{eqnarray}
&&A=loc_A::(\sum_{D_i\in \Delta_i}receive_A(D_i)\cdot A_1)\nonumber\\
&&A_1=Rand[q;A]\cdot A_2\nonumber\\
&&A_2=Set_{A}[q]\cdot A_3\nonumber\\
&&A_3=send_Q(q)\cdot A_4\nonumber\\
&&A_4=receive_P(T)\cdot A_5\nonumber\\
&&A_5=cmp(K_{a,b},T,A,B)\cdot A\nonumber
\end{eqnarray}

where $\Delta_i$ is the collection of the input data.

And the state transition of Bob can be described as follows.

\begin{eqnarray}
&&B=loc_B::(receive_Q(q)\cdot B_1)\nonumber\\
&&B_1=Rand[q';B]\cdot B_2\nonumber\\
&&B_2=M[q;T]\cdot B_3\nonumber\\
&&B_3=send_P(T)\cdot B_4\nonumber\\
&&B_4=cmp(K_{a,b},T,A,B)\cdot B_5\nonumber\\
&&B_5=\sum_{D_o\in\Delta_o}send_B(D_o)\cdot B\nonumber
\end{eqnarray}

where $\Delta_o$ is the collection of the output data.

The send action and receive action of the same data through the same channel can communicate each other, otherwise, a deadlock $\delta$ will be caused. We define the following communication functions.

\begin{eqnarray}
&&\gamma(send_Q(q),receive_Q(q))\triangleq c_Q(q)\nonumber\\
&&\gamma(send_P(T),receive_P(T))\triangleq c_P(T)\nonumber
\end{eqnarray}

Let $A$ and $B$ in parallel, then the system $AB$ can be represented by the following process term.

$$\tau_I(\partial_H(\Theta(A\between B)))$$

where $H=\{send_Q(q),receive_Q(q),send_P(T),receive_P(T)\}$ and $I=\{Rand[q;A], Set_{A}[q], Rand[q';B], \\ M[q;T], c_Q(q), c_P(T), cmp(K_{a,b},T,A,B)\}$.

Then we get the following conclusion.

\begin{theorem}
The basic B92 protocol $\tau_I(\partial_H(A\parallel B))$ can exhibit desired external behaviors.
\end{theorem}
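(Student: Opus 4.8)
The plan is to follow the same route used for the BB84 and E91 protocols in Sections \ref{VBB844} and \ref{VE914}: expand the closed term $\tau_I(\partial_H(\Theta(A\between B)))$ with the axioms of $qAPTC^{sl}_{\tau}$ with guarded linear recursion (Tables \ref{AxiomsForqLeftParallelism}, \ref{AxiomsForqTauLeft} and \ref{AxiomsForqAbstractionLeft}) together with $RDP$ and $RSP$, and show that it satisfies a guarded linear recursive equation whose solution visibly reads in $D_i$ at $loc_A$ and emits $D_o$ at $loc_B$. Note that, as in the BB84 and E91 statements, the term written $\tau_I(\partial_H(A\parallel B))$ in the theorem is identified with the system term $\tau_I(\partial_H(\Theta(A\between B)))$ fixed in the text above.

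First I would unfold the recursive specifications of $A$ and $B$ by one step using $RDP$, so that $A = loc_A::\sum_{D_i\in\Delta_i}receive_A(D_i)\cdot A_1$ and $B = loc_B::receive_Q(q)\cdot B_1$, and then expand $A\between B = A\leftmerge B + B\leftmerge A + A\mid B$ via $P1$–$P4$. The conflict-elimination operator $\Theta$ distributes over this structure through $CE3$–$CE6$ and, exactly as in the BB84 derivation, contributes only $\tau$'s that are later absorbed, so I would carry it along without dwelling on it. The key observation is that $B$'s first action $receive_Q(q)$ lies in $H$: applying $\partial_H$ (axioms $D1$–$D6$, $L10$, $L11$) sends the lone occurrence of $receive_Q(q)$ in $B\leftmerge A$ and in the non-communicating part of $A\mid B$ to $\delta$, which is then discarded by $A6$ ($x+\delta=x$). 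What survives is the single thread in which $A$ performs $receive_A(D_i)$ at $loc_A$ first, after which the protocol proceeds deterministically.

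Next I would chase the communications in order. After $receive_A(D_i)$, $A$ passes through $Rand[q;A]$, $Set_A[q]$ and reaches $send_Q(q)$, while $B$ still waits at $receive_Q(q)$; since both lie in $H$, $\partial_H$ forces the synchronization $\gamma(send_Q(q),receive_Q(q)) = c_Q(q)$. Then $B$ runs $Rand[q';B]$, $M[q;T]$ and $send_P(T)$, which must synchronize with $A$'s $receive_P(T)$ into $c_P(T)$; finally $cmp(K_{a,b},T,A,B)$ is executed on both sides. Every one of $Rand[q;A], Set_A[q], Rand[q';B], M[q;T], c_Q(q), c_P(T), cmp(K_{a,b},T,A,B)$ lies in $I$, so $\tau_I$ (axioms $TI1$–$TI6$, $L13$–$L16$) turns each into $\tau$; repeated use of $B1$ ($e\cdot\tau = e$), $B2$, $B3$ and $TI5$ absorbs these $\tau$'s, and $A6$ discards every deadlock branch arising from a mismatched send/receive or from the ``wrong'' interleaving order. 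Pushing the argument through $B_5$ attaches the trailing $\sum_{D_o\in\Delta_o} loc_B::send_B(D_o)$ and a recursive occurrence of the whole system, yielding
$$\tau_I(\partial_H(\Theta(A\between B))) = \sum_{D_i\in \Delta_i}\sum_{D_o\in\Delta_o}loc_A::receive_A(D_i)\leftmerge loc_B::send_B(D_o)\leftmerge \tau_I(\partial_H(\Theta(A\between B))),$$
a guarded linear recursive specification, so $RSP$ identifies the system with its solution, which exhibits precisely the desired external input/output behaviour.

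The main obstacle will be the bookkeeping in the expansion step: one must check that at each stage exactly one pair of matching communication actions is enabled, so that all competing interleavings collapse under $\partial_H$ to $\delta$ and vanish through $A6$. This is where the localities $loc_A$, $loc_B$ and the strictly sequential structure of the recursive equations for $A$ and $B$ do the real work, and since it is essentially the computation already performed for BB84, I would mostly refer back to that argument rather than reproduce every intermediate term.
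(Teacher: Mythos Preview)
Your proposal is correct and follows exactly the route the paper takes: the paper's own proof is the single line asserting
\[
\tau_I(\partial_H(\Theta(A\between B)))=\sum_{D_i\in \Delta_i}\sum_{D_o\in\Delta_o}loc_A::receive_A(D_i)\leftmerge loc_B::send_B(D_o)\leftmerge \tau_I(\partial_H(\Theta(A\between B))),
\]
and your outline supplies the expansion (via $RDP$, the $P$-, $C$-, $D$-, $TI$- and $B$-axioms, and $RSP$) that the paper leaves implicit. The identification of $\tau_I(\partial_H(A\parallel B))$ with $\tau_I(\partial_H(\Theta(A\between B)))$ and the reference back to the BB84 computation are exactly in the spirit of the paper.
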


\begin{proof}
We can get $\tau_I(\partial_H(\Theta(A\between B)))=\sum_{D_i\in \Delta_i}\sum_{D_o\in\Delta_o}loc_A::receive_A(D_i)\leftmerge loc_B::send_B(D_o)\leftmerge \tau_I(\partial_H(\Theta(A\between B)))$.
So, the basic B92 protocol $\tau_I(\partial_H(\Theta(A\between B)))$ can exhibit desired external behaviors.
\end{proof}

\subsection{Verification of DPS Protocol}\label{VDPS4}

The famous DPS protocol\cite{DPS} is a quantum key distribution protocol, in which quantum information and classical information are mixed.

The DPS protocol is used to create a private key between two parities, Alice and Bob. DPS is a protocol of quantum key distribution (QKD) which uses pulses of a photon which has nonorthogonal four states. Firstly, we introduce the basic DPS protocol briefly, which is illustrated in Figure \ref{DPS4}.

\begin{enumerate}
  \item Alice generates a string of qubits $q$ with size $n$, carried by a series of single photons possily at four time instances.
  \item Alice sends $q$ to Bob through a quantum channel $Q$ between Alice and Bob.
  \item Bob receives $q$ by detectors clicking at the second or third time instance, and records the time into $T$ with size $n$ and which detector clicks into $D$ with size $n$.
  \item Bob sends $T$ to Alice through a public channel $P$.
  \item Alice receives $T$. From $T$ and her modulation data, Alice knows which detector clicked in Bob's site, i.e. $D$.
  \item Alice and Bob have an identical bit string, provided that the first detector click represents "0" and the other detector represents "1", then a shared raw key $K_{a,b}$ is formed.
\end{enumerate}

\begin{figure}
  \centering
  \includegraphics{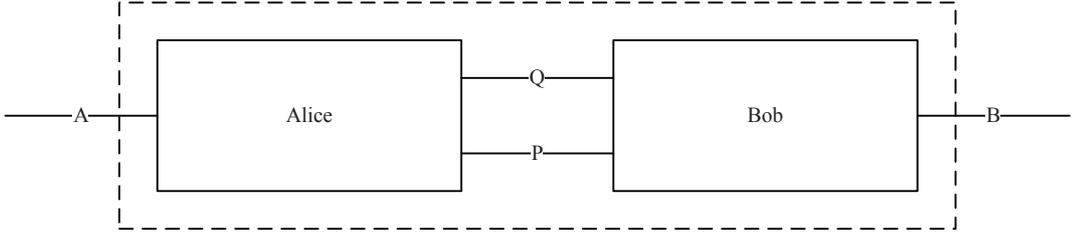}
  \caption{The DPS protocol.}
  \label{DPS4}
\end{figure}

We re-introduce the basic DPS protocol in an abstract way with more technical details as Figure \ref{DPS4} illustrates.

Now, we assume $M[q;T]$ denotes the Bob's measurement operation of $q$. The generation of $n$ qubits $q$ through a quantum operation $Set_{A}[q]$. Alice sends $q$ to Bob through the quantum channel $Q$ by quantum communicating action $send_{Q}(q)$ and Bob receives $q$ through $Q$ by quantum communicating action $receive_{Q}(q)$. Bob sends $T$ to Alice through the public channel $P$ by classical communicating action $send_{P}(T)$ and Alice receives $T$ through channel $P$ by classical communicating action $receive_{P}(T)$. Alice and Bob generate the private key $K_{a,b}$ by a classical comparison action $cmp(K_{a,b},D)$. Let Alice and Bob be a system $AB$ and let interactions between Alice and Bob be internal actions. $AB$ receives external input $D_i$ through channel $A$ by communicating action $receive_A(D_i)$ and sends results $D_o$ through channel $B$ by communicating action $send_B(D_o)$.

Then the state transition of Alice can be described as follows.

\begin{eqnarray}
&&A=loc_A::(\sum_{D_i\in \Delta_i}receive_A(D_i)\cdot A_1)\nonumber\\
&&A_1=Set_{A}[q]\cdot A_2\nonumber\\
&&A_2=send_Q(q)\cdot A_3\nonumber\\
&&A_3=receive_P(T)\cdot A_4\nonumber\\
&&A_4=cmp(K_{a,b},D)\cdot A\nonumber
\end{eqnarray}

where $\Delta_i$ is the collection of the input data.

And the state transition of Bob can be described as follows.

\begin{eqnarray}
&&B=loc_B::(receive_Q(q)\cdot B_1)\nonumber\\
&&B_1=M[q;T]\cdot B_2\nonumber\\
&&B_2=send_P(T)\cdot B_3\nonumber\\
&&B_3=cmp(K_{a,b},D)\cdot B_4\nonumber\\
&&B_4=\sum_{D_o\in\Delta_o}send_B(D_o)\cdot B\nonumber
\end{eqnarray}

where $\Delta_o$ is the collection of the output data.

The send action and receive action of the same data through the same channel can communicate each other, otherwise, a deadlock $\delta$ will be caused. We define the following communication functions.

\begin{eqnarray}
&&\gamma(send_Q(q),receive_Q(q))\triangleq c_Q(q)\nonumber\\
&&\gamma(send_P(T),receive_P(T))\triangleq c_P(T)\nonumber\\
\end{eqnarray}

Let $A$ and $B$ in parallel, then the system $AB$ can be represented by the following process term.

$$\tau_I(\partial_H(\Theta(A\between B)))$$

where $H=\{send_Q(q),receive_Q(q),send_P(T),receive_P(T)\}$

and $I=\{Set_{A}[q], M[q;T], c_Q(q), c_P(T), cmp(K_{a,b},D)\}$.

Then we get the following conclusion.

\begin{theorem}
The basic DPS protocol $\tau_I(\partial_H(\Theta(A\between B)))$ can exhibit desired external behaviors.
\end{theorem}

\begin{proof}
We can get $\tau_I(\partial_H(\Theta(A\between B)))=\sum_{D_i\in \Delta_i}\sum_{D_o\in\Delta_o}loc_A::receive_A(D_i)\leftmerge loc_B::send_B(D_o)\leftmerge \tau_I(\partial_H(\Theta(A\between B)))$.
So, the basic DPS protocol $\tau_I(\partial_H(\Theta(A\between B)))$ can exhibit desired external behaviors.
\end{proof}

\subsection{Verification of BBM92 Protocol}\label{VBBM924}

The famous BBM92 protocol\cite{BBM92} is a quantum key distribution protocol, in which quantum information and classical information are mixed.

The BBM92 protocol is used to create a private key between two parities, Alice and Bob. BBM92 is a protocol of quantum key distribution (QKD) which uses EPR pairs as information carriers. Firstly, we introduce the basic BBM92 protocol briefly, which is illustrated in Figure \ref{BBM924}.

\begin{enumerate}
  \item Alice generates a string of EPR pairs $q$ with size $n$, i.e., $2n$ particles, and sends a string of qubits $q_b$ from each EPR pair with $n$ to Bob through a quantum channel $Q$, remains the other string of qubits $q_a$ from each pair with size $n$.
  \item Alice create a string of bits with size $n$ randomly, denoted as $B_a$.
  \item Bob receives $q_b$ and randomly generates a string of bits $B_b$ with size $n$.
  \item Alice measures each qubit of $q_a$ according to bits of $B_a$, if $B_{a_i}=0$, then uses $x$ axis ($\rightarrow$); else if $B_{a_i}=1$, then uses $z$ axis ($\uparrow$).
  \item Bob measures each qubit of $q_b$ according to bits of $B_b$, if $B_{b_i}=0$, then uses $x$ axis ($\rightarrow$); else if $B_{b_i}=1$, then uses $z$ axis ($\uparrow$).
  \item Bob sends his measurement axis choices $B_b$ to Alice through a public channel $P$.
  \item Once receiving $B_b$, Alice sends her axis choices $B_a$ to Bob through channel $P$, and Bob receives $B_a$.
  \item Alice and Bob agree to discard all instances in which they happened to measure along different axes, as well as instances in which measurements fails because of imperfect quantum efficiency of the detectors. Then the remaining instances can be used to generate a private key $K_{a,b}$.
\end{enumerate}

\begin{figure}
  \centering
  \includegraphics{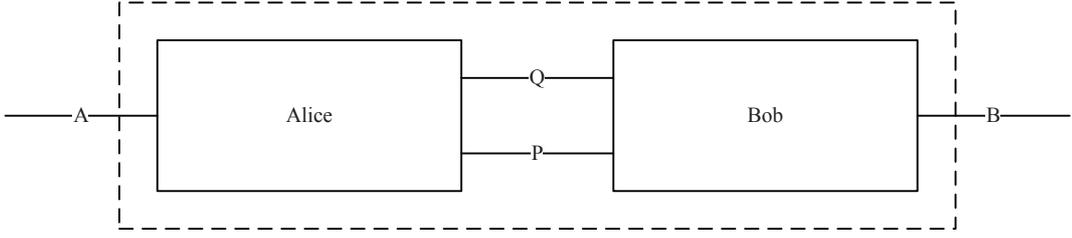}
  \caption{The BBM92 protocol.}
  \label{BBM924}
\end{figure}

We re-introduce the basic BBM92 protocol in an abstract way with more technical details as Figure \ref{BBM924} illustrates.

Now, $M[q_a;B_a]$ denotes the Alice's measurement operation of $q_a$, and $\circledS_{M[q_a;B_a]}$ denotes the responding shadow constant; $M[q_b;B_b]$ denotes the Bob's measurement operation of $q_b$, and $\circledS_{M[q_b;B_b]}$ denotes the responding shadow constant. Alice sends $q_b$ to Bob through the quantum channel $Q$ by quantum communicating action $send_{Q}(q_b)$ and Bob receives $q_b$ through $Q$ by quantum communicating action $receive_{Q}(q_b)$. Bob sends $B_b$ to Alice through the public channel $P$ by classical communicating action $send_{P}(B_b)$ and Alice receives $B_b$ through channel $P$ by classical communicating action $receive_{P}(B_b)$, and the same as $send_{P}(B_a)$ and $receive_{P}(B_a)$. Alice and Bob generate the private key $K_{a,b}$ by a classical comparison action $cmp(K_{a,b},B_a,B_b)$. Let Alice and Bob be a system $AB$ and let interactions between Alice and Bob be internal actions. $AB$ receives external input $D_i$ through channel $A$ by communicating action $receive_A(D_i)$ and sends results $D_o$ through channel $B$ by communicating action $send_B(D_o)$.

Then the state transition of Alice can be described as follows.

\begin{eqnarray}
&&A=loc_A::(\sum_{D_i\in \Delta_i}receive_A(D_i)\cdot A_1)\nonumber\\
&&A_1=send_Q(q_b)\cdot A_2\nonumber\\
&&A_2=M[q_a;B_a]\cdot A_3\nonumber\\
&&A_3=\circledS_{M[q_b;B_b]}\cdot A_4\nonumber\\
&&A_4=receive_P(B_b)\cdot A_5\nonumber\\
&&A_5=send_P(B_a)\cdot A_6\nonumber\\
&&A_6=cmp(K_{a,b},B_a,B_b)\cdot A\nonumber
\end{eqnarray}

where $\Delta_i$ is the collection of the input data.

And the state transition of Bob can be described as follows.

\begin{eqnarray}
&&B=loc_B::(receive_Q(q_b)\cdot B_1)\nonumber\\
&&B_1=\circledS_{M[q_a;B_a]}\cdot B_2\nonumber\\
&&B_2=M[q_b;B_b]\cdot B_3\nonumber\\
&&B_3=send_P(B_b)\cdot B_4\nonumber\\
&&B_4=receive_P(B_a)\cdot B_5\nonumber\\
&&B_5=cmp(K_{a,b},B_a,B_b)\cdot B_6\nonumber\\
&&B_6=\sum_{D_o\in\Delta_o}send_B(D_o)\cdot B\nonumber
\end{eqnarray}

where $\Delta_o$ is the collection of the output data.

The send action and receive action of the same data through the same channel can communicate each other, otherwise, a deadlock $\delta$ will be caused. The quantum operation and its shadow constant pair will lead entanglement occur, otherwise, a deadlock $\delta$ will occur. We define the following communication functions.

\begin{eqnarray}
&&\gamma(send_Q(q_b),receive_Q(q_b))\triangleq c_Q(q_b)\nonumber\\
&&\gamma(send_P(B_b),receive_P(B_b))\triangleq c_P(B_b)\nonumber\\
&&\gamma(send_P(B_a),receive_P(B_a))\triangleq c_P(B_a)\nonumber
\end{eqnarray}

Let $A$ and $B$ in parallel, then the system $AB$ can be represented by the following process term.

$$\tau_I(\partial_H(\Theta(A\between B)))$$

where $H=\{send_Q(q_b),receive_Q(q_b),send_P(B_b),receive_P(B_b),send_P(B_a),receive_P(B_a),\\ M[q_a;B_a], \circledS_{M[q_a;B_a]}, M[q_b;B_b], \circledS_{M[q_b;B_b]}\}$ and $I=\{c_Q(q_b), c_P(B_b), c_P(B_a), M[q_a;B_a], M[q_b;B_b],\\ cmp(K_{a,b},B_a,B_b)\}$.

Then we get the following conclusion.

\begin{theorem}
The basic BBM92 protocol $\tau_I(\partial_H(\Theta(A\between B)))$ can exhibit desired external behaviors.
\end{theorem}

\begin{proof}
We can get $\tau_I(\partial_H(\Theta(A\between B)))=\sum_{D_i\in \Delta_i}\sum_{D_o\in\Delta_o}loc_A::receive_A(D_i)\leftmerge loc_B::loc_send_B(D_o)\leftmerge \tau_I(\partial_H(\Theta(A\between B)))$.
So, the basic BBM92 protocol $\tau_I(\partial_H(\Theta(A\between B)))$ can exhibit desired external behaviors.
\end{proof}

\subsection{Verification of SARG04 Protocol}\label{VSARG044}

The famous SARG04 protocol\cite{SARG04} is a quantum key distribution protocol, in which quantum information and classical information are mixed.

The SARG04 protocol is used to create a private key between two parities, Alice and Bob. SARG04 is a protocol of quantum key distribution (QKD) which refines the BB84 protocol against PNS (Photon Number Splitting) attacks. The main innovations are encoding bits in nonorthogonal states and the classical sifting procedure. Firstly, we introduce the basic SARG04 protocol briefly, which is illustrated in Figure \ref{SARG044}.

\begin{enumerate}
  \item Alice create a string of bits with size $n$ randomly, denoted as $K_a$.
  \item Alice generates a string of qubits $q$ with size $n$, and the $i$th qubit of $q$ has four nonorthogonal states, it is $|\pm x\rangle$ if $K_a=0$; it is $|\pm z\rangle$ if $K_a=1$. And she records the corresponding one of the four pairs of nonorthogonal states into $B_a$ with size $2n$.
  \item Alice sends $q$ to Bob through a quantum channel $Q$ between Alice and Bob.
  \item Alice sends $B_a$ through a public channel $P$.
  \item Bob measures each qubit of $q$ $\sigma_x$ or $\sigma_z$. And he records the unambiguous discriminations into $K_b$ with a raw size $n/4$, and the unambiguous discrimination information into $B_b$ with size $n$.
  \item Bob sends $B_b$ to Alice through the public channel $P$.
  \item Alice and Bob determine that at which position the bit should be remained. Then the remaining bits of $K_a$ and $K_b$ is the private key $K_{a,b}$.
\end{enumerate}

\begin{figure}
  \centering
  \includegraphics{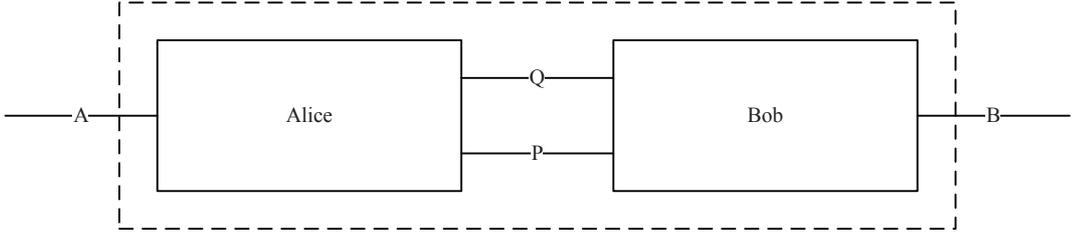}
  \caption{The SARG04 protocol.}
  \label{SARG044}
\end{figure}

We re-introduce the basic SARG04 protocol in an abstract way with more technical details as Figure \ref{SARG044} illustrates.

Now, we assume a special measurement operation $Rand[q;K_a]$ which create a string of $n$ random bits $K_a$ from the $q$ quantum system. $M[q;K_b]$ denotes the Bob's measurement operation of $q$. The generation of $n$ qubits $q$ through a quantum operation $Set_{K_a}[q]$. Alice sends $q$ to Bob through the quantum channel $Q$ by quantum communicating action $send_{Q}(q)$ and Bob receives $q$ through $Q$ by quantum communicating action $receive_{Q}(q)$. Bob sends $B_b$ to Alice through the public channel $P$ by classical communicating action $send_{P}(B_b)$ and Alice receives $B_b$ through channel $P$ by classical communicating action $receive_{P}(B_b)$, and the same as $send_{P}(B_a)$ and $receive_{P}(B_a)$. Alice and Bob generate the private key $K_{a,b}$ by a classical comparison action $cmp(K_{a,b},K_a,K_b,B_a,B_b)$. Let Alice and Bob be a system $AB$ and let interactions between Alice and Bob be internal actions. $AB$ receives external input $D_i$ through channel $A$ by communicating action $receive_A(D_i)$ and sends results $D_o$ through channel $B$ by communicating action $send_B(D_o)$.

Then the state transition of Alice can be described as follows.

\begin{eqnarray}
&&A=loc_A::(\sum_{D_i\in \Delta_i}receive_A(D_i)\cdot A_1)\nonumber\\
&&A_1=Rand[q;K_a]\cdot A_2\nonumber\\
&&A_2=Set_{K_a}[q]\cdot A_3\nonumber\\
&&A_3=send_Q(q)\cdot A_4\nonumber\\
&&A_4=send_P(B_a)\cdot A_5\nonumber\\
&&A_5=receive_P(B_b)\cdot A_6\nonumber\\
&&A_6=cmp(K_{a,b},K_a,K_b,B_a,B_b)\cdot A\nonumber
\end{eqnarray}

where $\Delta_i$ is the collection of the input data.

And the state transition of Bob can be described as follows.

\begin{eqnarray}
&&B=loc_B::(receive_Q(q)\cdot B_1)\nonumber\\
&&B_1=receive_P(B_a)\cdot B_2\nonumber\\
&&B_2=M[q;K_b]\cdot B_3\nonumber\\
&&B_3=send_P(B_b)\cdot B_4\nonumber\\
&&B_4=cmp(K_{a,b},K_a,K_b,B_a,B_b)\cdot B_5\nonumber\\
&&B_5=\sum_{D_o\in\Delta_o}send_B(D_o)\cdot B\nonumber
\end{eqnarray}

where $\Delta_o$ is the collection of the output data.

The send action and receive action of the same data through the same channel can communicate each other, otherwise, a deadlock $\delta$ will be caused. We define the following communication functions.

\begin{eqnarray}
&&\gamma(send_Q(q),receive_Q(q))\triangleq c_Q(q)\nonumber\\
&&\gamma(send_P(B_b),receive_P(B_b))\triangleq c_P(B_b)\nonumber\\
&&\gamma(send_P(B_a),receive_P(B_a))\triangleq c_P(B_a)\nonumber
\end{eqnarray}

Let $A$ and $B$ in parallel, then the system $AB$ can be represented by the following process term.

$$\tau_I(\partial_H(\Theta(A\between B)))$$

where $H=\{send_Q(q),receive_Q(q),send_P(B_b),receive_P(B_b),send_P(B_a),receive_P(B_a)\}$ and $I=\{Rand[q;K_a], Set_{K_a}[q], M[q;K_b], c_Q(q), c_P(B_b),\\ c_P(B_a), cmp(K_{a,b},K_a,K_b,B_a,B_b)\}$.

Then we get the following conclusion.

\begin{theorem}
The basic SARG04 protocol $\tau_I(\partial_H(\Theta(A\between B)))$ can exhibit desired external behaviors.
\end{theorem}

\begin{proof}
We can get $\tau_I(\partial_H(\Theta(A\between B)))=\sum_{D_i\in \Delta_i}\sum_{D_o\in\Delta_o}loc_A::receive_A(D_i)\leftmerge loc_B::send_B(D_o)\leftmerge \tau_I(\partial_H(\Theta(A\between B)))$.
So, the basic SARG04 protocol $\tau_I(\partial_H(\Theta(A\between B)))$ can exhibit desired external behaviors.
\end{proof}

\subsection{Verification of COW Protocol}\label{VCOW4}

The famous COW protocol\cite{COW} is a quantum key distribution protocol, in which quantum information and classical information are mixed.

The COW protocol is used to create a private key between two parities, Alice and Bob. COW is a protocol of quantum key distribution (QKD) which is practical. Firstly, we introduce the basic COW protocol briefly, which is illustrated in Figure \ref{COW4}.

\begin{enumerate}
  \item Alice generates a string of qubits $q$ with size $n$, and the $i$th qubit of $q$ is "0" with probability $\frac{1-f}{2}$, "1" with probability $\frac{1-f}{2}$ and the decoy sequence with probability $f$.
  \item Alice sends $q$ to Bob through a quantum channel $Q$ between Alice and Bob.
  \item Alice sends $A$ of the items corresponding to a decoy sequence through a public channel $P$.
  \item Bob removes all the detections at times $2A-1$ and $2A$ from his raw key and looks whether detector $D_{2M}$ has ever fired at time $2A$.
  \item Bob sends $B$ of the times $2A+1$ in which he had a detector in $D_{2M}$ to Alice through the public channel $P$.
  \item Alice receives $B$ and verifies if some of these items corresponding to a bit sequence "1,0".
  \item Bob sends $C$ of the items that he has detected through the public channel $P$.
  \item Alice and Bob run error correction and privacy amplification on these bits, and the private key $K_{a,b}$ is established.
\end{enumerate}

\begin{figure}
  \centering
  \includegraphics{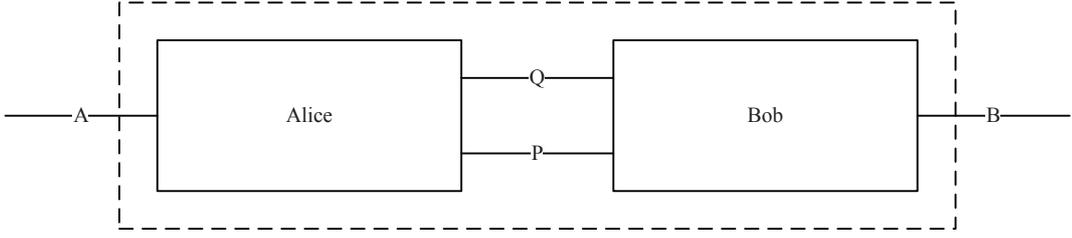}
  \caption{The COW protocol.}
  \label{COW4}
\end{figure}

We re-introduce the basic COW protocol in an abstract way with more technical details as Figure \ref{COW4} illustrates.

Now, we assume The generation of $n$ qubits $q$ through a quantum operation $Set[q]$. $M[q]$ denotes the Bob's measurement operation of $q$.  Alice sends $q$ to Bob through the quantum channel $Q$ by quantum communicating action $send_{Q}(q)$ and Bob receives $q$ through $Q$ by quantum communicating action $receive_{Q}(q)$. Alice sends $A$ to Alice through the public channel $P$ by classical communicating action $send_{P}(A)$ and Alice receives $A$ through channel $P$ by classical communicating action $receive_{P}(A)$, and the same as $send_{P}(B)$ and $receive_{P}(B)$, and $send_{P}(C)$ and $receive_{P}(C)$. Alice and Bob generate the private key $K_{a,b}$ by a classical comparison action $cmp(K_{a,b})$. Let Alice and Bob be a system $AB$ and let interactions between Alice and Bob be internal actions. $AB$ receives external input $D_i$ through channel $A$ by communicating action $receive_A(D_i)$ and sends results $D_o$ through channel $B$ by communicating action $send_B(D_o)$.

Then the state transition of Alice can be described as follows.

\begin{eqnarray}
&&A=loc_A::(\sum_{D_i\in \Delta_i}receive_A(D_i)\cdot A_1)\nonumber\\
&&A_1=Set[q]\cdot A_2\nonumber\\
&&A_2=send_Q(q)\cdot A_3\nonumber\\
&&A_3=send_P(A)\cdot A_4\nonumber\\
&&A_4=receive_P(B)\cdot A_5\nonumber\\
&&A_5=receive_P(C)\cdot A_6\nonumber\\
&&A_6=cmp(K_{a,b})\cdot A\nonumber
\end{eqnarray}

where $\Delta_i$ is the collection of the input data.

And the state transition of Bob can be described as follows.

\begin{eqnarray}
&&B=loc_B::(receive_Q(q)\cdot B_1)\nonumber\\
&&B_1=receive_P(A)\cdot B_2\nonumber\\
&&B_2=M[q]\cdot B_3\nonumber\\
&&B_3=send_P(B)\cdot B_4\nonumber\\
&&B_4=send_P(C)\cdot B_5\nonumber\\
&&B_5=cmp(K_{a,b})\cdot B_6\nonumber\\
&&B_6=\sum_{D_o\in\Delta_o}send_B(D_o)\cdot B\nonumber
\end{eqnarray}

where $\Delta_o$ is the collection of the output data.

The send action and receive action of the same data through the same channel can communicate each other, otherwise, a deadlock $\delta$ will be caused. We define the following communication functions.

\begin{eqnarray}
&&\gamma(send_Q(q),receive_Q(q))\triangleq c_Q(q)\nonumber\\
&&\gamma(send_P(A),receive_P(A))\triangleq c_P(A)\nonumber\\
&&\gamma(send_P(B),receive_P(B))\triangleq c_P(B)\nonumber\\
&&\gamma(send_P(C),receive_P(C))\triangleq c_P(C)\nonumber
\end{eqnarray}

Let $A$ and $B$ in parallel, then the system $AB$ can be represented by the following process term.

$$\tau_I(\partial_H(\Theta(A\between B)))$$

where $H=\{send_Q(q),receive_Q(q),send_P(A),receive_P(A),send_P(B),receive_P(B),\\send_P(C),receive_P(C)\}$ and $I=\{Set[q], M[q], c_Q(q), c_P(A),\\ c_P(B),c_P(C), cmp(K_{a,b})\}$.

Then we get the following conclusion.

\begin{theorem}
The basic COW protocol $\tau_I(\partial_H(\Theta(A\between B)))$ can exhibit desired external behaviors.
\end{theorem}

\begin{proof}
We can get $\tau_I(\partial_H(\Theta(A\between B)))=\sum_{D_i\in \Delta_i}\sum_{D_o\in\Delta_o}loc_A::receive_A(D_i)\leftmerge loc_B::send_B(D_o)\leftmerge \tau_I(\partial_H(\Theta(A\between B)))$.
So, the basic COW protocol $\tau_I(\partial_H(\Theta(A\between B)))$ can exhibit desired external behaviors.
\end{proof}

\subsection{Verification of SSP Protocol}\label{VSSP4}

The famous SSP protocol\cite{SSP} is a quantum key distribution protocol, in which quantum information and classical information are mixed.

The SSP protocol is used to create a private key between two parities, Alice and Bob. SSP is a protocol of quantum key distribution (QKD) which uses six states. Firstly, we introduce the basic SSP protocol briefly, which is illustrated in Figure \ref{SSP4}.

\begin{enumerate}
  \item Alice create two string of bits with size $n$ randomly, denoted as $B_a$ and $K_a$.
  \item Alice generates a string of qubits $q$ with size $n$, and the $i$th qubit in $q$ is one of the six states $\pm x$, $\pm y$ and $\pm z$.
  \item Alice sends $q$ to Bob through a quantum channel $Q$ between Alice and Bob.
  \item Bob receives $q$ and randomly generates a string of bits $B_b$ with size $n$.
  \item Bob measures each qubit of $q$ according to a basis by bits of $B_b$, i.e., $x$, $y$ or $z$ basis. And the measurement results would be $K_b$, which is also with size $n$.
  \item Bob sends his measurement bases $B_b$ to Alice through a public channel $P$.
  \item Once receiving $B_b$, Alice sends her bases $B_a$ to Bob through channel $P$, and Bob receives $B_a$.
  \item Alice and Bob determine that at which position the bit strings $B_a$ and $B_b$ are equal, and they discard the mismatched bits of $B_a$ and $B_b$. Then the remaining bits of $K_a$ and $K_b$, denoted as $K_a'$ and $K_b'$ with $K_{a,b}=K_a'=K_b'$.
\end{enumerate}

\begin{figure}
  \centering
  \includegraphics{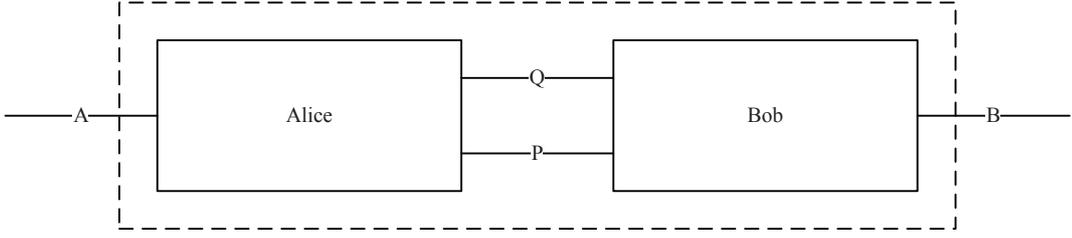}
  \caption{The SSP protocol.}
  \label{SSP4}
\end{figure}

We re-introduce the basic SSP protocol in an abstract way with more technical details as Figure \ref{SSP4} illustrates.

Now, we assume a special measurement operation $Rand[q;B_a]$ which create a string of $n$ random bits $B_a$ from the $q$ quantum system, and the same as $Rand[q;K_a]$, $Rand[q';B_b]$. $M[q;K_b]$ denotes the Bob's measurement operation of $q$. The generation of $n$ qubits $q$ through two quantum operations $Set_{K_a}[q]$ and $H_{B_a}[q]$. Alice sends $q$ to Bob through the quantum channel $Q$ by quantum communicating action $send_{Q}(q)$ and Bob receives $q$ through $Q$ by quantum communicating action $receive_{Q}(q)$. Bob sends $B_b$ to Alice through the public channel $P$ by classical communicating action $send_{P}(B_b)$ and Alice receives $B_b$ through channel $P$ by classical communicating action $receive_{P}(B_b)$, and the same as $send_{P}(B_a)$ and $receive_{P}(B_a)$. Alice and Bob generate the private key $K_{a,b}$ by a classical comparison action $cmp(K_{a,b},K_a,K_b,B_a,B_b)$. Let Alice and Bob be a system $AB$ and let interactions between Alice and Bob be internal actions. $AB$ receives external input $D_i$ through channel $A$ by communicating action $receive_A(D_i)$ and sends results $D_o$ through channel $B$ by communicating action $send_B(D_o)$.

Then the state transition of Alice can be described as follows.

\begin{eqnarray}
&&A=loc_A::(\sum_{D_i\in \Delta_i}receive_A(D_i)\cdot A_1)\nonumber\\
&&A_1=Rand[q;B_a]\cdot A_2\nonumber\\
&&A_2=Rand[q;K_a]\cdot A_3\nonumber\\
&&A_3=Set_{K_a}[q]\cdot A_4\nonumber\\
&&A_4=H_{B_a}[q]\cdot A_5\nonumber\\
&&A_5=send_Q(q)\cdot A_6\nonumber\\
&&A_6=receive_P(B_b)\cdot A_7\nonumber\\
&&A_7=send_P(B_a)\cdot A_8\nonumber\\
&&A_8=cmp(K_{a,b},K_a,K_b,B_a,B_b)\cdot A\nonumber
\end{eqnarray}

where $\Delta_i$ is the collection of the input data.

And the state transition of Bob can be described as follows.

\begin{eqnarray}
&&B=loc_B::(receive_Q(q)\cdot B_1)\nonumber\\
&&B_1=Rand[q';B_b]\cdot B_2\nonumber\\
&&B_2=M[q;K_b]\cdot B_3\nonumber\\
&&B_3=send_P(B_b)\cdot B_4\nonumber\\
&&B_4=receive_P(B_a)\cdot B_5\nonumber\\
&&B_5=cmp(K_{a,b},K_a,K_b,B_a,B_b)\cdot B_6\nonumber\\
&&B_6=\sum_{D_o\in\Delta_o}send_B(D_o)\cdot B\nonumber
\end{eqnarray}

where $\Delta_o$ is the collection of the output data.

The send action and receive action of the same data through the same channel can communicate each other, otherwise, a deadlock $\delta$ will be caused. We define the following communication functions.

\begin{eqnarray}
&&\gamma(send_Q(q),receive_Q(q))\triangleq c_Q(q)\nonumber\\
&&\gamma(send_P(B_b),receive_P(B_b))\triangleq c_P(B_b)\nonumber\\
&&\gamma(send_P(B_a),receive_P(B_a))\triangleq c_P(B_a)\nonumber
\end{eqnarray}

Let $A$ and $B$ in parallel, then the system $AB$ can be represented by the following process term.

$$\tau_I(\partial_H(\Theta(A\between B)))$$

where $H=\{send_Q(q),receive_Q(q),send_P(B_b),receive_P(B_b),send_P(B_a),receive_P(B_a)\}$ and $I=\{Rand[q;B_a], Rand[q;K_a], Set_{K_a}[q], H_{B_a}[q], Rand[q';B_b], M[q;K_b], c_Q(q), c_P(B_b),\\ c_P(B_a), cmp(K_{a,b},K_a,K_b,B_a,B_b)\}$.

Then we get the following conclusion.

\begin{theorem}
The basic SSP protocol $\tau_I(\partial_H(\Theta(A\between B)))$ can exhibit desired external behaviors.
\end{theorem}

\begin{proof}
We can get $\tau_I(\partial_H(\Theta(A\between B)))=\sum_{D_i\in \Delta_i}\sum_{D_o\in\Delta_o}loc_A::receive_A(D_i)\leftmerge loc_B::send_B(D_o)\leftmerge \tau_I(\partial_H(\Theta(A\between B)))$.
So, the basic SSP protocol $\tau_I(\partial_H(\Theta(A\between B)))$ can exhibit desired external behaviors.
\end{proof}

\subsection{Verification of S09 Protocol}\label{VS094}

The famous S09 protocol\cite{S09} is a quantum key distribution protocol, in which quantum information and classical information are mixed.

The S09 protocol is used to create a private key between two parities, Alice and Bob, by use of pure quantum information. Firstly, we introduce the basic S09 protocol briefly, which is illustrated in Figure \ref{S094}.

\begin{enumerate}
  \item Alice create two string of bits with size $n$ randomly, denoted as $B_a$ and $K_a$.
  \item Alice generates a string of qubits $q$ with size $n$, and the $i$th qubit in $q$ is $|x_y\rangle$, where $x$ is the $i$th bit of $B_a$ and $y$ is the $i$th bit of $K_a$.
  \item Alice sends $q$ to Bob through a quantum channel $Q$ between Alice and Bob.
  \item Bob receives $q$ and randomly generates a string of bits $B_b$ with size $n$.
  \item Bob measures each qubit of $q$ according to a basis by bits of $B_b$. After the measurement, the state of $q$ evolves into $q'$.
  \item Bob sends $q'$ to Alice through the quantum channel $Q$.
  \item Alice measures each qubit of $q'$ to generate a string $C$.
  \item Alice sums $C_i\oplus B_{a_i}$ to get the private key $K_{a,b}=B_b$.
\end{enumerate}

\begin{figure}
  \centering
  \includegraphics{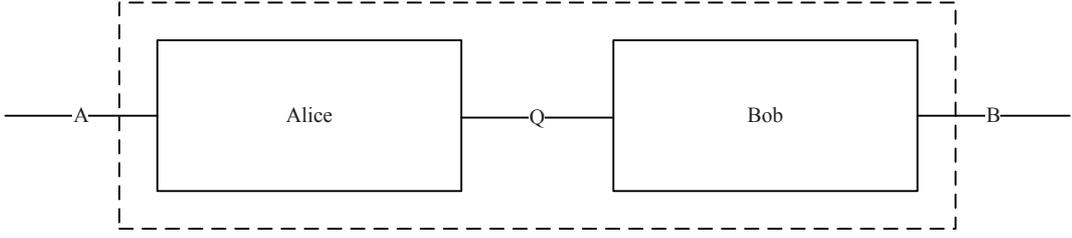}
  \caption{The S09 protocol.}
  \label{S094}
\end{figure}

We re-introduce the basic S09 protocol in an abstract way with more technical details as Figure \ref{S094} illustrates.

Now, we assume a special measurement operation $Rand[q;B_a]$ which create a string of $n$ random bits $B_a$ from the $q$ quantum system, and the same as $Rand[q;K_a]$, $Rand[q';B_b]$. $M[q;B_b]$ denotes the Bob's measurement operation of $q$, and the same as $M[q';C]$. The generation of $n$ qubits $q$ through two quantum operations $Set_{K_a}[q]$ and $H_{B_a}[q]$. Alice sends $q$ to Bob through the quantum channel $Q$ by quantum communicating action $send_{Q}(q)$ and Bob receives $q$ through $Q$ by quantum communicating action $receive_{Q}(q)$, and the same as $send_{Q}(q')$ and $receive_{Q}(q')$. Alice and Bob generate the private key $K_{a,b}$ by a classical comparison action $cmp(K_{a,b},B_b)$. We omit the sum classical $\oplus$ actions without of loss of generality. Let Alice and Bob be a system $AB$ and let interactions between Alice and Bob be internal actions. $AB$ receives external input $D_i$ through channel $A$ by communicating action $receive_A(D_i)$ and sends results $D_o$ through channel $B$ by communicating action $send_B(D_o)$.

Then the state transition of Alice can be described as follows.

\begin{eqnarray}
&&A=loc_A::(\sum_{D_i\in \Delta_i}receive_A(D_i)\cdot A_1)\nonumber\\
&&A_1=Rand[q;B_a]\cdot A_2\nonumber\\
&&A_2=Rand[q;K_a]\cdot A_3\nonumber\\
&&A_3=Set_{K_a}[q]\cdot A_4\nonumber\\
&&A_4=H_{B_a}[q]\cdot A_5\nonumber\\
&&A_5=send_Q(q)\cdot A_6\nonumber\\
&&A_6=receive_Q(q')\cdot A_{7}\nonumber\\
&&A_7=M[q';C]\cdot A_8\nonumber\\
&&A_{8}=cmp(K_{a,b},B_b)\cdot A\nonumber
\end{eqnarray}

where $\Delta_i$ is the collection of the input data.

And the state transition of Bob can be described as follows.

\begin{eqnarray}
&&B=loc_B::(receive_Q(q)\cdot B_1)\nonumber\\
&&B_1=Rand[q';B_b]\cdot B_2\nonumber\\
&&B_2=M[q;B_b]\cdot B_3\nonumber\\
&&B_3=send_Q(q')\cdot B_4\nonumber\\
&&B_4=cmp(K_{a,b},B_b)\cdot B_{5}\nonumber\\
&&B_{5}=\sum_{D_o\in\Delta_o}send_B(D_o)\cdot B\nonumber
\end{eqnarray}

where $\Delta_o$ is the collection of the output data.

The send action and receive action of the same data through the same channel can communicate each other, otherwise, a deadlock $\delta$ will be caused. We define the following communication functions.

\begin{eqnarray}
&&\gamma(send_Q(q),receive_Q(q))\triangleq c_Q(q)\nonumber\\
&&\gamma(send_Q(q'),receive_Q(q'))\triangleq c_Q(q')\nonumber
\end{eqnarray}

Let $A$ and $B$ in parallel, then the system $AB$ can be represented by the following process term.

$$\tau_I(\partial_H(\Theta(A\between B)))$$

where $H=\{send_Q(q),receive_Q(q),send_Q(q'),receive_Q(q')\}$ and $I=\{Rand[q;B_a], Rand[q;K_a], Set_{K_a}[q], \\ H_{B_a}[q], Rand[q';B_b], M[q;K_b], M[q';C], c_Q(q), c_Q(q'), cmp(K_{a,b},B_b)\}$.

Then we get the following conclusion.

\begin{theorem}
The basic S09 protocol $\tau_I(\partial_H(\Theta(A\between B)))$ can exhibit desired external behaviors.
\end{theorem}

\begin{proof}
We can get $\tau_I(\partial_H(\Theta(A\between B)))=\sum_{D_i\in \Delta_i}\sum_{D_o\in\Delta_o}loc_A::receive_A(D_i)\leftmerge loc_B::send_B(D_o)\leftmerge \tau_I(\partial_H(\Theta(A\between B)))$.
So, the basic S09 protocol $\tau_I(\partial_H(\Theta(A\between B)))$ can exhibit desired external behaviors.
\end{proof}

\subsection{Verification of KMB09 Protocol}\label{VKMB094}

The famous KMB09 protocol\cite{KMB09} is a quantum key distribution protocol, in which quantum information and classical information are mixed.

The KMB09 protocol is used to create a private key between two parities, Alice and Bob. KMB09 is a protocol of quantum key distribution (QKD) which refines the BB84 protocol against PNS (Photon Number Splitting) attacks. The main innovations are encoding bits in nonorthogonal states and the classical sifting procedure. Firstly, we introduce the basic KMB09 protocol briefly, which is illustrated in Figure \ref{KMB094}.

\begin{enumerate}
  \item Alice create a string of bits with size $n$ randomly, denoted as $K_a$, and randomly assigns each bit value a random index $i=1,2,...,N$ into $B_a$.
  \item Alice generates a string of qubits $q$ with size $n$, accordingly either in $|e_i\rangle$ or $|f_i\rangle$.
  \item Alice sends $q$ to Bob through a quantum channel $Q$ between Alice and Bob.
  \item Alice sends $B_a$ through a public channel $P$.
  \item Bob measures each qubit of $q$ by randomly switching the measurement basis between $e$ and $f$. And he records the unambiguous discriminations into $K_b$, and the unambiguous discrimination information into $B_b$.
  \item Bob sends $B_b$ to Alice through the public channel $P$.
  \item Alice and Bob determine that at which position the bit should be remained. Then the remaining bits of $K_a$ and $K_b$ is the private key $K_{a,b}$.
\end{enumerate}

\begin{figure}
  \centering
  \includegraphics{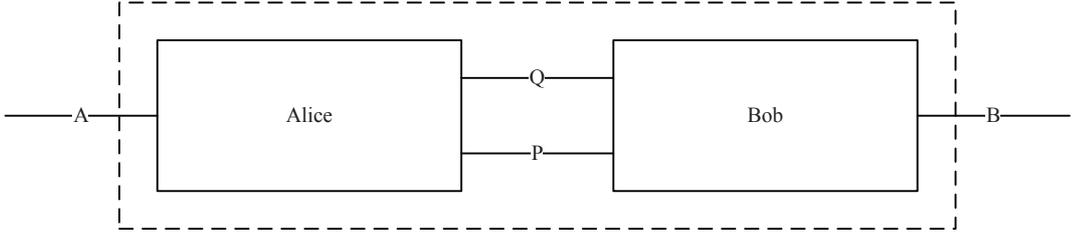}
  \caption{The KMB09 protocol.}
  \label{KMB094}
\end{figure}

We re-introduce the basic KMB09 protocol in an abstract way with more technical details as Figure \ref{KMB094} illustrates.

Now, we assume a special measurement operation $Rand[q;K_a]$ which create a string of $n$ random bits $K_a$ from the $q$ quantum system. $M[q;K_b]$ denotes the Bob's measurement operation of $q$. The generation of $n$ qubits $q$ through a quantum operation $Set_{K_a}[q]$. Alice sends $q$ to Bob through the quantum channel $Q$ by quantum communicating action $send_{Q}(q)$ and Bob receives $q$ through $Q$ by quantum communicating action $receive_{Q}(q)$. Bob sends $B_b$ to Alice through the public channel $P$ by classical communicating action $send_{P}(B_b)$ and Alice receives $B_b$ through channel $P$ by classical communicating action $receive_{P}(B_b)$, and the same as $send_{P}(B_a)$ and $receive_{P}(B_a)$. Alice and Bob generate the private key $K_{a,b}$ by a classical comparison action $cmp(K_{a,b},K_a,K_b,B_a,B_b)$. Let Alice and Bob be a system $AB$ and let interactions between Alice and Bob be internal actions. $AB$ receives external input $D_i$ through channel $A$ by communicating action $receive_A(D_i)$ and sends results $D_o$ through channel $B$ by communicating action $send_B(D_o)$.

Then the state transition of Alice can be described as follows.

\begin{eqnarray}
&&A=loc_A::(\sum_{D_i\in \Delta_i}receive_A(D_i)\cdot A_1)\nonumber\\
&&A_1=Rand[q;K_a]\cdot A_2\nonumber\\
&&A_2=Set_{K_a}[q]\cdot A_3\nonumber\\
&&A_3=send_Q(q)\cdot A_4\nonumber\\
&&A_4=send_P(B_a)\cdot A_5\nonumber\\
&&A_5=receive_P(B_b)\cdot A_6\nonumber\\
&&A_6=cmp(K_{a,b},K_a,K_b,B_a,B_b)\cdot A\nonumber
\end{eqnarray}

where $\Delta_i$ is the collection of the input data.

And the state transition of Bob can be described as follows.

\begin{eqnarray}
&&B=loc_B::(receive_Q(q)\cdot B_1)\nonumber\\
&&B_1=receive_P(B_a)\cdot B_2\nonumber\\
&&B_2=M[q;K_b]\cdot B_3\nonumber\\
&&B_3=send_P(B_b)\cdot B_4\nonumber\\
&&B_4=cmp(K_{a,b},K_a,K_b,B_a,B_b)\cdot B_5\nonumber\\
&&B_5=\sum_{D_o\in\Delta_o}send_B(D_o)\cdot B\nonumber
\end{eqnarray}

where $\Delta_o$ is the collection of the output data.

The send action and receive action of the same data through the same channel can communicate each other, otherwise, a deadlock $\delta$ will be caused. We define the following communication functions.

\begin{eqnarray}
&&\gamma(send_Q(q),receive_Q(q))\triangleq c_Q(q)\nonumber\\
&&\gamma(send_P(B_b),receive_P(B_b))\triangleq c_P(B_b)\nonumber\\
&&\gamma(send_P(B_a),receive_P(B_a))\triangleq c_P(B_a)\nonumber
\end{eqnarray}

Let $A$ and $B$ in parallel, then the system $AB$ can be represented by the following process term.

$$\tau_I(\partial_H(\Theta(A\between B)))$$

where $H=\{send_Q(q),receive_Q(q),send_P(B_b),receive_P(B_b),send_P(B_a),receive_P(B_a)\}$ and $I=\{Rand[q;K_a], Set_{K_a}[q], M[q;K_b], c_Q(q), c_P(B_b),\\ c_P(B_a), cmp(K_{a,b},K_a,K_b,B_a,B_b)\}$.

Then we get the following conclusion.

\begin{theorem}
The basic KMB09 protocol $\tau_I(\partial_H(\Theta(A\between B)))$ can exhibit desired external behaviors.
\end{theorem}

\begin{proof}
We can get $\tau_I(\partial_H(\Theta(A\between B)))=\sum_{D_i\in \Delta_i}\sum_{D_o\in\Delta_o}loc_A::receive_A(D_i)\leftmerge loc_B::send_B(D_o)\leftmerge \tau_I(\partial_H(\Theta(A\between B)))$.
So, the basic KMB09 protocol $\tau_I(\partial_H(\Theta(A\between B)))$ can exhibit desired external behaviors.
\end{proof}

\subsection{Verification of S13 Protocol}\label{VS134}

The famous S13 protocol\cite{S13} is a quantum key distribution protocol, in which quantum information and classical information are mixed.

The S13 protocol is used to create a private key between two parities, Alice and Bob. Firstly, we introduce the basic S13 protocol briefly, which is illustrated in Figure \ref{S134}.

\begin{enumerate}
  \item Alice create two string of bits with size $n$ randomly, denoted as $B_a$ and $K_a$.
  \item Alice generates a string of qubits $q$ with size $n$, and the $i$th qubit in $q$ is $|x_y\rangle$, where $x$ is the $i$th bit of $B_a$ and $y$ is the $i$th bit of $K_a$.
  \item Alice sends $q$ to Bob through a quantum channel $Q$ between Alice and Bob.
  \item Bob receives $q$ and randomly generates a string of bits $B_b$ with size $n$.
  \item Bob measures each qubit of $q$ according to a basis by bits of $B_b$. And the measurement results would be $K_b$, which is also with size $n$.
  \item Alice sends a random binary string $C$ to Bob through the public channel $P$.
  \item Alice sums $B_{a_i}\oplus C_i$ to obtain $T$ and generates other random string of binary values $J$. From the elements occupying a concrete position, $i$, of the preceding strings, Alice get the new states of $q'$, and sends it to Bob through the quantum channel $Q$.
  \item Bob sums $1\oplus B_{b_i}$ to obtain the string of binary basis $N$ and measures $q'$ according to these bases, and generating $D$.
  \item Alice sums $K_{a_i}\oplus J_i$ to obtain the binary string $Y$ and sends it to Bob through the public channel $P$.
  \item Bob encrypts $B_b$ to obtain $U$ and sends to Alice through the public channel $P$.
  \item Alice decrypts $U$ to obtain $B_b$. She sums $B_{a_i}\oplus B_{b_i}$ to obtain $L$ and sends $L$ to Bob through the public channel $P$.
  \item Bob sums $B_{b_i}\oplus L_i$ to get the private key $K_{a,b}$.
\end{enumerate}

\begin{figure}
  \centering
  \includegraphics{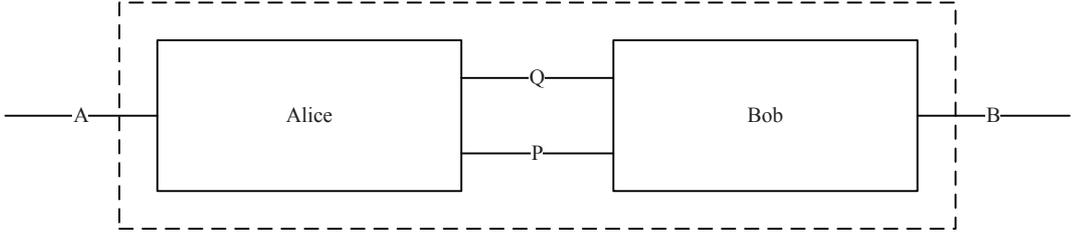}
  \caption{The S13 protocol.}
  \label{S134}
\end{figure}

We re-introduce the basic S13 protocol in an abstract way with more technical details as Figure \ref{S134} illustrates.

Now, we assume a special measurement operation $Rand[q;B_a]$ which create a string of $n$ random bits $B_a$ from the $q$ quantum system, and the same as $Rand[q;K_a]$, $Rand[q';B_b]$. $M[q;K_b]$ denotes the Bob's measurement operation of $q$, and the same as $M[q';D]$. The generation of $n$ qubits $q$ through two quantum operations $Set_{K_a}[q]$ and $H_{B_a}[q]$, and the same as $Set_{T}[q']$. Alice sends $q$ to Bob through the quantum channel $Q$ by quantum communicating action $send_{Q}(q)$ and Bob receives $q$ through $Q$ by quantum communicating action $receive_{Q}(q)$, and the same as $send_{Q}(q')$ and $receive_{Q}(q')$. Bob sends $B_b$ to Alice through the public channel $P$ by classical communicating action $send_{P}(B_b)$ and Alice receives $B_b$ through channel $P$ by classical communicating action $receive_{P}(B_b)$, and the same as $send_{P}(B_a)$ and $receive_{P}(B_a)$, $send_{P}(C)$ and $receive_{P}(C)$, $send_{P}(Y)$ and $receive_{P}(Y)$, $send_{P}(U)$ and $receive_{P}(U)$, $send_{P}(L)$ and $receive_{P}(L)$. Alice and Bob generate the private key $K_{a,b}$ by a classical comparison action $cmp(K_{a,b},K_a,K_b,B_a,B_b)$. We omit the sum classical $\oplus$ actions without of loss of generality. Let Alice and Bob be a system $AB$ and let interactions between Alice and Bob be internal actions. $AB$ receives external input $D_i$ through channel $A$ by communicating action $receive_A(D_i)$ and sends results $D_o$ through channel $B$ by communicating action $send_B(D_o)$.

Then the state transition of Alice can be described as follows.

\begin{eqnarray}
&&A=loc_A::(\sum_{D_i\in \Delta_i}receive_A(D_i)\cdot A_1)\nonumber\\
&&A_1=Rand[q;B_a]\cdot A_2\nonumber\\
&&A_2=Rand[q;K_a]\cdot A_3\nonumber\\
&&A_3=Set_{K_a}[q]\cdot A_4\nonumber\\
&&A_4=H_{B_a}[q]\cdot A_5\nonumber\\
&&A_5=send_Q(q)\cdot A_6\nonumber\\
&&A_6=send_P(C)\cdot A_7\nonumber\\
&&A_7=send_Q(q')\cdot A_8\nonumber\\
&&A_8=send_P(Y)\cdot A_9\nonumber\\
&&A_9=receive_P(U)\cdot A_{10}\nonumber\\
&&A_{10}=send_P(L)\cdot A_{11}\nonumber\\
&&A_{11}=cmp(K_{a,b},K_a,K_b,B_a,B_b)\cdot A\nonumber
\end{eqnarray}

where $\Delta_i$ is the collection of the input data.

And the state transition of Bob can be described as follows.

\begin{eqnarray}
&&B=loc_B::(receive_Q(q)\cdot B_1)\nonumber\\
&&B_1=Rand[q';B_b]\cdot B_2\nonumber\\
&&B_2=M[q;K_b]\cdot B_3\nonumber\\
&&B_3=receive_P(C)\cdot B_4\nonumber\\
&&B_4=receive_Q(q')\cdot B_5\nonumber\\
&&B_5=M[q';D]\cdot B_6\nonumber\\
&&B_6=receive_P(Y)\cdot B_7\nonumber\\
&&B_7=send_P(U)\cdot B_8\nonumber\\
&&B_8=receive_P(L)\cdot B_9\nonumber\\
&&B_9=cmp(K_{a,b},K_a,K_b,B_a,B_b)\cdot B_{10}\nonumber\\
&&B_{10}=\sum_{D_o\in\Delta_o}send_B(D_o)\cdot B\nonumber
\end{eqnarray}

where $\Delta_o$ is the collection of the output data.

The send action and receive action of the same data through the same channel can communicate each other, otherwise, a deadlock $\delta$ will be caused. We define the following communication functions.

\begin{eqnarray}
&&\gamma(send_Q(q),receive_Q(q))\triangleq c_Q(q)\nonumber\\
&&\gamma(send_Q(q'),receive_Q(q'))\triangleq c_Q(q')\nonumber\\
&&\gamma(send_P(C),receive_P(C))\triangleq c_P(C)\nonumber\\
&&\gamma(send_P(Y),receive_P(Y))\triangleq c_P(Y)\nonumber\\
&&\gamma(send_P(U),receive_P(U))\triangleq c_P(U)\nonumber\\
&&\gamma(send_P(L),receive_P(L))\triangleq c_P(L)\nonumber
\end{eqnarray}

Let $A$ and $B$ in parallel, then the system $AB$ can be represented by the following process term.

$$\tau_I(\partial_H(\Theta(A\between B)))$$

where $H=\{send_Q(q),receive_Q(q),send_Q(q'),receive_Q(q'),send_P(C),\\receive_P(C),send_P(Y),receive_P(Y),send_P(U),receive_P(U),send_P(L),receive_P(L)\}$ and $I=\{Rand[q;B_a], Rand[q;K_a],\\ Set_{K_a}[q], H_{B_a}[q], Rand[q';B_b], M[q;K_b], M[q';D], c_Q(q), c_P(C),\\c_Q(q'), c_P(Y), c_P(U), c_P(L), cmp(K_{a,b},K_a,K_b,B_a,B_b)\}$.

Then we get the following conclusion.

\begin{theorem}
The basic S13 protocol $\tau_I(\partial_H(\Theta(A\between B)))$ can exhibit desired external behaviors.
\end{theorem}

\begin{proof}
We can get $\tau_I(\partial_H(\Theta(A\between B)))=\sum_{D_i\in \Delta_i}\sum_{D_o\in\Delta_o}loc_A::receive_A(D_i)\leftmerge loc_B::send_B(D_o)\leftmerge \tau_I(\partial_H(\Theta(A\between B)))$.
So, the basic S13 protocol $\tau_I(\partial_H(\Theta(A\between B)))$ can exhibit desired external behaviors.
\end{proof}

\newpage\section{APPTC with Localities for Closed Quantum Systems}\label{qapptcl2}

The theory $APPTC$ with localities for closed quantum systems abbreviated $qAPPTC^{sl}$ has four modules: $qBAPTC^{sl}$ , $qAPPTC^{sl}$, recursion and abstraction.

This chapter is organized as follows. We introduce $qBAPTC^{sl}$ in section \ref{qbaptcl}, $APPTC$ in section \ref{qapptcl}, recursion in section \ref{qcrec}, and abstraction in section
\ref{qcabs}. And we introduce quantum measurement in section \ref{qm}, quantum entanglement in section \ref{qe2}, and unification of quantum and classical computing in section \ref{uni2}.

Note that, for a closed quantum system, the unitary operators are the atomic actions (events) and let unitary operators into $\mathbb{E}$. And for the existence of quantum measurement,
the probabilism is unavoidable.

\subsection{$BAPTC$ with Localities for Closed Quantum Systems}{\label{qbaptcl}}

In this subsection, we will discuss $qBAPTC$ with localities. Let $\mathbb{E}$ be the set of atomic events (actions, unitary operators).

Let $Loc$ be the set of locations, and $loc\in Loc$, $u,v\in Loc^*$, $\epsilon$ is the empty location.

In the following, let $e_1, e_2, e_1', e_2'\in \mathbb{E}$, and let variables $x,y,z$ range over the set of terms for true concurrency, $p,q$ range over the set of
closed terms.

The set of axioms of $qBAPTC$ with localities consists of the laws given in Table \ref{AxiomsForqBAPTC}.

\begin{center}
    \begin{table}
        \begin{tabular}{@{}ll@{}}
            \hline No. &Axiom\\
            $A1$ & $x+ y = y+ x$\\
            $A2$ & $(x+ y)+ z = x+ (y+ z)$\\
            $A3$ & $e+ e = e$\\
            $A4$ & $(x+ y)\cdot z = x\cdot z + y\cdot z$\\
            $A5$ & $(x\cdot y)\cdot z = x\cdot(y\cdot z)$\\
            $A6$ & $x+\delta = x$\\
            $A7$ & $\delta\cdot x = \delta$\\
            $PA1$ & $x\boxplus_{\pi} y=y\boxplus_{1-\pi} x$\\
            $PA2$ & $x\boxplus_{\pi}(y\boxplus_{\rho} z)=(x\boxplus_{\frac{\pi}{\pi+\rho-\pi\rho}}y)\boxplus_{\pi+\rho-\pi\rho} z$\\
            $PA3$ & $x\boxplus_{\pi}x=x$\\
            $PA4$ & $(x\boxplus_{\pi}y)\cdot z=x\cdot z\boxplus_{\pi}y\cdot z$\\
            $PA5$ & $(x\boxplus_{\pi}y)+z=(x+z)\boxplus_{\pi}(y+z)$\\
            $L1$ & $\epsilon::x=x$\\
            $L2$ & $u::(x\cdot y)=u::x\cdot u::y$\\
            $L3$ & $u::(x+ y)=u::x+ u::y$\\
            $L4$ & $u::(v::x)=uv::x$\\
        \end{tabular}
        \caption{Axioms of $qBAPTC$ with localities}
        \label{AxiomsForqBAPTC}
    \end{table}
\end{center}

\begin{definition}[Basic terms of $qBAPTC$ with localities]\label{BTBATCG}
The set of basic terms of $qBAPTC$ with localities, $\mathcal{B}(qBAPTC^{sl})$, is inductively defined as follows:

\begin{enumerate}
  \item $\mathbb{E}\subset\mathcal{B}(qBAPTC^{sl})$;
  \item if $u\in Loc^*, t\in\mathcal{B}(qBAPTC^{sl})$ then $u::t\in\mathcal{B}(qBAPTC^{sl})$;
  \item if $e\in \mathbb{E}, t\in\mathcal{B}(qBAPTC^{sl})$ then $e\cdot t\in\mathcal{B}(qBAPTC^{sl})$;
  \item if $t,t'\in\mathcal{B}(qBAPTC^{sl})$ then $t+ t'\in\mathcal{B}(qBAPTC^{sl})$;
  \item if $t,t'\in\mathcal{B}(qBAPTC^{sl})$ then $t\boxplus_{\pi} t'\in\mathcal{B}(qBAPTC^{sl})$.
\end{enumerate}
\end{definition}

\begin{theorem}[Elimination theorem of $qBAPTC$ with localities]\label{ETBATCG}
Let $p$ be a closed $qBAPTC$ with localities term. Then there is a basic $qBAPTC$ with localities term $q$ such that $qBAPTC\vdash p=q$.
\end{theorem}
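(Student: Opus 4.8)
The plan is to mimic the standard elimination proof for $BATC$-style axiom systems, now enriched with the probabilistic operator $\boxplus_{\pi}$ and the locality prefix $u::(\cdot)$. First I would set up a term rewriting system (TRS) obtained by orienting the relevant axioms of Table \ref{AxiomsForqBAPTC} from left to right: the associativity/distributivity laws $A4$, $A5$, the probabilistic laws $PA4$, $PA5$ (pushing $\cdot$ and $+$ inside $\boxplus_{\pi}$), and the locality-distribution laws $L1$--$L4$ (pushing $u::(\cdot)$ inward until it reaches an atomic event, using $L4$ to contract nested locations). The orientation should push $\cdot$ towards atomic left-arguments, push $+$ and $\boxplus_{\pi}$ outwards/inwards consistently, and drive $u::$ down to the leaves. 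I would then invoke Theorem \ref{SN} (Strong normalization): I need a well-founded order $>$ on the signature and to check $s >_{lpo} t$ for each rewrite rule $s\to t$. The weights should be assigned so that $\cdot > +$, $\cdot > \boxplus_{\pi}$, and $u::(\cdot)$ is heavier than all the operators it distributes over; this is the routine but slightly fiddly bookkeeping part. Termination then gives, for every closed term $p$, a normal form $q$ with $qBAPTC^{sl}\vdash p=q$.

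Second, I would show that every normal form is a basic term in the sense of Definition \ref{BTBATCG}. This is a structural induction on the normal form $q$: argue that no normal form can have $\cdot$ applied to a non-atomic left factor (else $A4$, $A5$, $PA4$, $C$-type rules, or the locality rules would apply), no normal form can have $u::$ applied to anything other than an atomic event (else $L1$--$L4$ apply), and no $\boxplus_{\pi}$ or $+$ can sit under $\cdot$ on the left (else $A4$ or $PA4$ apply). Enumerating the possible head symbols of $q$ and ruling out the non-basic shapes shows $q\in\mathcal{B}(qBAPTC^{sl})$. Combining the two steps yields the theorem. As the excerpt itself remarks that the analogous $BATC^{sl}$ and $APTC^{sl}$ elimination theorems are proved exactly this way (and are deferred to \cite{LOC1}), I would state the proof in that same abbreviated style.

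I expect the main obstacle to be the interaction of the probabilistic operator $\boxplus_{\pi}$ with sequencing and choice, specifically getting the rewrite rules $PA4$ and $PA5$ oriented so that the system remains confluent-enough and terminating simultaneously with the locality rules $L1$--$L4$. Orienting $PA5$ as $(x\boxplus_{\pi}y)+z \to (x+z)\boxplus_{\pi}(y+z)$ duplicates $z$, so the termination order must strictly decrease despite this duplication, which forces a careful lexicographic path order with $+$ outranked by $\boxplus_{\pi}$ in the right way; and the locality prefix, which also duplicates its argument's structure under $L2$ and $L3$, must be handled consistently with those. Once the order is fixed this is mechanical, so in the written proof I would simply assert that the proof is the same as for $BATC^{sl}$ (with $\boxplus_{\pi}$ handled exactly as in the probabilistic case) and refer the reader to \cite{LOC1}, matching the paper's established convention, rather than reproducing the full $>_{lpo}$ verification.

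\begin{proof}
The same as that of $BAPTC^{sl}$ (see Theorem \ref{ETBATC22}), by orienting the axioms of Table \ref{AxiomsForqBAPTC} into a terminating TRS (using Theorem \ref{SN}) and checking that every normal form is a basic $qBAPTC$ with localities term in the sense of Definition \ref{BTBATCG}; we omit the details, please refer to \cite{LOC1}.
\end{proof}
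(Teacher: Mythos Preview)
Your proposal is correct and takes essentially the same approach as the paper: both defer to the analogous elimination theorem for $BAPTC^{sl}$ and omit the details. Your expanded sketch of the TRS/strong-normalization argument is in fact more informative than the paper's one-line proof; the only discrepancy is that the paper cites \cite{LOC2} (the probabilistic localities paper, appropriate since $\boxplus_{\pi}$ is present) rather than \cite{LOC1}.
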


\begin{proof}
The same as that of $BAPTC^{sl}$, we omit the proof, please refer to \cite{LOC2} for details.
\end{proof}

We will define a term-deduction system which gives the operational semantics of $qBAPTC$ with localities. We give the operational transition rules for
atomic event $e\in\mathbb{E}$, operators $::$, $\cdot$ and $+$ as Table \ref{SETRForqBAPTC} shows. And the predicate $\xrightarrow{e}\surd$ represents successful termination after execution
of the event $e$.

\begin{center}
    \begin{table}
        $$\frac{}{\langle e,\varrho\rangle\rightsquigarrow\langle\breve{e},\varrho\rangle}$$
        $$\frac{\langle x,\varrho\rangle\rightsquigarrow \langle x',\varrho\rangle}{\langle x\cdot y,\varrho\rangle\rightsquigarrow \langle x'\cdot y,\varrho\rangle}$$
        $$\frac{\langle x,\varrho\rangle\rightsquigarrow \langle x',\varrho\rangle\quad \langle y,\varrho\rangle\rightsquigarrow \langle y',\varrho\rangle}{\langle x+y,\varrho\rangle\rightsquigarrow \langle x'+y',\varrho\rangle}$$
        $$\frac{\langle x,\varrho\rangle\rightsquigarrow \langle x',\varrho\rangle}{\langle x\boxplus_{\pi}y,\varrho\rangle\rightsquigarrow \langle x',\varrho\rangle}\quad \frac{\langle y,\varrho\rangle\rightsquigarrow \langle y',\varrho\rangle}{\langle x\boxplus_{\pi}y,\varrho\rangle\rightsquigarrow \langle y',\varrho\rangle}$$

        $$\frac{}{\langle \breve{e},\varrho\rangle\xrightarrow[\epsilon]{e}\langle\surd,\varrho'\rangle}\textrm{ if }\varrho'\in effect(e,\varrho)$$
        $$\frac{}{\langle loc::\breve{e},\varrho\rangle\xrightarrow[loc]{e}\langle\surd,\varrho'\rangle}$$
        $$\frac{\langle x,\varrho\rangle\xrightarrow[u]{e}x'}{\langle loc::x,\varrho\rangle\xrightarrow[loc\ll u]{e}\langle loc::x',\varrho'\rangle}$$
        $$\frac{\langle x,\varrho\rangle\xrightarrow[u]{e}\langle\surd,\varrho'\rangle}{\langle x+ y,\varrho\rangle\xrightarrow[u]{e}\langle\surd,\varrho'\rangle} \quad\frac{\langle x,\varrho\rangle\xrightarrow[u]{e}\langle x',\varrho'\rangle}{\langle x+ y,\varrho\rangle\xrightarrow[u]{e}\langle x',\varrho'\rangle}$$
        $$\frac{\langle y,\varrho\rangle\xrightarrow[u]{e}\langle\surd,\varrho'\rangle}{\langle x+ y,\varrho\rangle\xrightarrow[u]{e}\langle\surd,\varrho'\rangle} \quad\frac{\langle y,\varrho\rangle\xrightarrow[u]{e}\langle y',\varrho'\rangle}{\langle x+ y,\varrho\rangle\xrightarrow[u]{e}\langle y',\varrho'\rangle}$$
        $$\frac{\langle x,\varrho\rangle\xrightarrow[u]{e}\langle\surd,\varrho'\rangle}{\langle x\cdot y,\varrho\rangle\xrightarrow[u]{e} \langle y,\varrho'\rangle} \quad\frac{\langle x,\varrho\rangle\xrightarrow[u]{e}\langle x',\varrho'\rangle}{\langle x\cdot y,\varrho\rangle\xrightarrow[u]{e}\langle x'\cdot y,\varrho'\rangle}$$
        \caption{Single event transition rules of $qBAPTC$ with localities}
        \label{SETRForqBAPTC}
    \end{table}
\end{center}

Note that, we replace the single atomic event $e\in\mathbb{E}$ by $X\subseteq\mathbb{E}$, we can obtain the static location pomset transition rules of $qBAPTC$ with localities, and omit them.

\begin{theorem}[Congruence of $qBAPTC$ with localities with respect to probabilistic static location truly concurrent bisimulation equivalences]
(1) Probabilistic static location pomset bisimulation equivalence $\sim_{pp}^{sl}$ is a congruence with respect to $qBAPTC$ with localities.

(2) Probabilistic static location step bisimulation equivalence $\sim_{ps}^{sl}$ is a congruence with respect to $qBAPTC$ with localities.

(3) Probabilistic static location hp-bisimulation equivalence $\sim_{php}^{sl}$ is a congruence with respect to $qBAPTC$ with localities.

(4) Probabilistic static location hhp-bisimulation equivalence $\sim_{phhp}^{sl}$ is a congruence with respect to $qBAPTC$ with localities.
\end{theorem}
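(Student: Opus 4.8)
The plan is to follow the standard pattern for congruence proofs in process algebra with an SOS-style operational semantics, exactly as it was done for $BAPTC^{sl}$ in \cite{LOC2}, and as the analogous theorem for $qBATC$ with localities was handled earlier in this chapter. First I would record the two facts that do the real work: (i) each of the four relations $\sim_{pp}^{sl}$, $\sim_{ps}^{sl}$, $\sim_{php}^{sl}$, $\sim_{phhp}^{sl}$ is an equivalence relation on $qBAPTC^{sl}$ terms (reflexivity, symmetry and transitivity are immediate from the definitions of probabilistic static location (step/pomset/hp/hhp) bisimulation, since the identity relation and the composition of two such bisimulations are again of the required form), and (ii) the transition rules for $qBAPTC$ with localities in Table \ref{SETRForqBAPTC} together with the probabilistic transition rules are in a suitable ``path'' / ntyft-ntyxt-like format, so congruence reduces to checking preservation under each operator. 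Hence it suffices to show that each of $::$, $\cdot$ and $+$ (and, for the probabilistic part, $\boxplus_{\pi}$, although only $::$, $\cdot$, $+$ appear explicitly in the $qBAPTC$ signature beyond the probabilistic choice) preserves each of the four equivalences.

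Next I would carry out the operator-by-operator argument. For a fixed equivalence, say $\sim_{pp}^{sl}$, assume $x\sim_{pp}^{sl}x'$ and $y\sim_{pp}^{sl}y'$ via probabilistic static location pomset bisimulations $R_1$, $R_2$ with $(\langle x,\varrho\rangle,\langle x',\varrho\rangle)\in R_1$ and $(\langle y,\varrho\rangle,\langle y',\varrho\rangle)\in R_2$. One then builds explicit candidate relations: for $+$, the relation generated by pairs $(\langle s+t,\varrho\rangle,\langle s'+t',\varrho\rangle)$ with $(s,\varrho)R_1(s',\varrho)$ and $(t,\varrho)R_2(t',\varrho)$, closed under the sub-derivatives reached by the transition rules, together with $R_1\cup R_2$; for $\cdot$, the relation of pairs $(\langle s\cdot y,\varrho\rangle,\langle s'\cdot y',\varrho\rangle)$ with $(s,\varrho)R_1(s',\varrho)$, union $R_2$; for $loc::$, the relation of pairs $(\langle loc::s,\varrho\rangle,\langle loc::s',\varrho\rangle)$ with $(s,\varrho)R_1(s',\varrho)$. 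In each case I verify the four clauses of probabilistic static location pomset bisimulation: the probabilistic transition clause ($\rightsquigarrow$) is matched because the probabilistic rules for $+$, $\cdot$, $::$, $\boxplus_{\pi}$ are deterministic in shape and decompose along the operator; the action-transition clause ($\xrightarrow[u]{X}$) is matched by case analysis on which SOS rule fired, using that $R_1$, $R_2$ match the corresponding sub-transitions and that the location bookkeeping ($\epsilon::$, $loc\ll u$, $uv::$) is compositional; the PDF clause $\mu(C_1,C)=\mu(C_2,C)$ follows from the PDF definitions in Table \ref{PDFBAPTC22} (which are themselves defined by structural recursion on the operators, so equal PDFs on components give equal PDFs on composites modulo the relation); and the termination clause $[\surd]_{R_\varphi}=\{\surd\}$ is preserved by construction. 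The cases for $\sim_{ps}^{sl}$ (replace pomset transitions by step transitions), $\sim_{php}^{sl}$ (carry the order-isomorphism $f$ and update it by $f[e_1\mapsto e_2]$ on single-event transitions), and $\sim_{phhp}^{sl}$ (additionally check the candidate relations are downward closed, which they are since they are built pointwise from downward-closed $R_1$, $R_2$) are structurally identical with only the bookkeeping changed, so I would state them and point to the $BAPTC^{sl}$ treatment rather than repeat the diagram chases.

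The main obstacle, and the only place requiring genuine care, is the interaction between the probabilistic transition relation $\rightsquigarrow$ and the PDF clause in the hp- and hhp- cases: one must be sure that the order-isomorphism component $f$ is correctly threaded through probabilistic transitions (where, per the definitions, $f$ is unchanged: $C_1\rightsquigarrow C_1^\pi$ is matched by $C_2\rightsquigarrow C_2^\pi$ with the same $f$), and that the quotient $\mathcal{C}(\mathcal{E})/R_\varphi$ against which $\mu$ is evaluated is well-defined for the constructed relation — i.e. that the constructed relation really is an equivalence when restricted appropriately, so that the PDF matching condition even makes sense. Since the PDFs (Table \ref{PDFBAPTC22}) and the operational rules are all source-dependent and defined compositionally, this threading goes through uniformly, and there is no subtlety analogous to the known failure of hhp-bisimulation congruence for parallel composition, because here the signature contains no parallel operators. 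I would therefore conclude by noting that the detailed verification is routine and identical in structure to the corresponding result for $BAPTC^{sl}$ in \cite{LOC2} and for $qBATC$ with localities above, and leave the diagram chases to the reader.
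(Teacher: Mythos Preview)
Your proposal is correct and follows essentially the same approach as the paper: both observe that each of the four relations is an equivalence and then reduce congruence to checking preservation under the operators $::$, $\cdot$, $+$ and $\boxplus_{\pi}$, leaving the routine verification to the reader. Your write-up is in fact considerably more detailed than the paper's own proof, which merely states that preservation under these four operators is ``trivial'' and leaves it as an exercise.
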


\begin{proof}
(1) It is easy to see that probabilistic static location pomset bisimulation is an equivalent relation on $qBAPTC$ with localities terms, we only need to prove that $\sim_{pp}^{sl}$ is preserved by the operators $::$, $\cdot$
, $+$ and $\boxplus_{\pi}$. It is trivial and we leave the proof as an exercise for the readers.

(2) It is easy to see that probabilistic static location step bisimulation is an equivalent relation on $qBAPTC$ with localities terms, we only need to prove that $\sim_{ps}^{sl}$ is preserved by the operators $::$, $\cdot$,
$+$ and $\boxplus_{\pi}$. It is trivial and we leave the proof as an exercise for the readers.

(3) It is easy to see that probabilistic static location hp-bisimulation is an equivalent relation on $qBAPTC$ with localities terms, we only need to prove that $\sim_{php}^{sl}$ is preserved by the operators $::$, $\cdot$,
$+$, and $\boxplus_{\pi}$. It is trivial and we leave the proof as an exercise for the readers.

(4) It is easy to see that probabilistic static location hhp-bisimulation is an equivalent relation on $qBAPTC$ with localities terms, we only need to prove that $\sim_{phhp}^{sl}$ is preserved by the operators $::$, $\cdot$,
$+$, and $\boxplus_{\pi}$. It is trivial and we leave the proof as an exercise for the readers.
\end{proof}

\begin{theorem}[Soundness of $qBAPTC$ with localities modulo probabilistic static location truly concurrent bisimulation equivalences]
(1) Let $x$ and $y$ be $qBAPTC$ with localities terms. If $qBAPTC^{sl}\vdash x=y$, then $x\sim_{pp}^{sl} y$.

(2) Let $x$ and $y$ be $qBAPTC$ with localities terms. If $qBAPTC^{sl}\vdash x=y$, then $x\sim_{ps}^{sl} y$.

(3) Let $x$ and $y$ be $qBAPTC$ with localities terms. If $qBAPTC^{sl}\vdash x=y$, then $x\sim_{php}^{sl} y$.

(4) Let $x$ and $y$ be $qBAPTC$ with localities terms. If $qBAPTC^{sl}\vdash x=y$, then $x\sim_{phhp}^{sl} y$.
\end{theorem}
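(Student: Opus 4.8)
The plan is to reduce the soundness statement for $qBAPTC$ with localities to the already-established soundness of $BAPTC^{sl}$ (Theorems \ref{SBATCPBE22}, \ref{SBATCSBE22}, \ref{SBATCHPBE22}, \ref{SBATCHHPBE22}) together with the congruence theorem for probabilistic static location truly concurrent bisimulation equivalences just proved above. The key observation, stated in the excerpt's discussion of quantum configurations, is that the only difference between $qBAPTC$ with localities and $BAPTC^{sl}$ is that transitions now carry quantum state information $\varrho$, and that equality $\varrho=\varsigma$ means equality under the framework of quantum information. Since the axioms in Table \ref{AxiomsForqBAPTC} are literally the same as those of $BAPTC^{sl}$ and contain no operation that inspects or transforms $\varrho$ (the atomic events $e\in\mathbb{E}$ here are quantum operations whose effect on $\varrho$ is recorded by the transition rules in Table \ref{SETRForqBAPTC} uniformly), each axiom lifts directly to the configuration-indexed setting.

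First I would note, as in the soundness proofs for $qBATC$ with localities and $qAPTC$ with localities earlier in the paper, that it suffices to check each axiom in Table \ref{AxiomsForqBAPTC} individually: since each of $\sim_{pp}^{sl}$, $\sim_{ps}^{sl}$, $\sim_{php}^{sl}$, $\sim_{phhp}^{sl}$ is both an equivalence relation and (by the preceding congruence theorem) a congruence with respect to all operators of $qBAPTC$ with localities, soundness of the whole axiomatization follows from soundness of the individual equations. Then for each of the four parts, I would go through the groups of axioms: $A1$--$A7$ (the $BATC$ core plus $\delta$-laws), $PA1$--$PA5$ (the probabilistic-choice laws involving $\boxplus_{\pi}$, whose semantics is governed by the PDF definitions and the probabilistic transition rules $\rightsquigarrow$), and $L1$--$L4$ (the locality laws). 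For the non-probabilistic and locality axioms the witnessing bisimulations are exactly the ones used for $BATC^{sl}$/$BAPTC^{sl}$, now carrying the trivial relation $\varrho=\varrho$ on the state component; for the $PA$ axioms one additionally checks the matching of probabilistic transitions and the equality $\mu(C_1,C)=\mu(C_2,C)$ on $R_\varphi$-classes, which reduces to the arithmetic identities on $\pi,\rho$ already verified in the classical probabilistic case.

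The step I expect to be the main obstacle—more precisely, the only place where the quantum layer genuinely enters rather than riding along trivially—is verifying that for each axiom the two sides do reach configurations with identical quantum state $\varrho'$ after matching transitions, i.e. that the $\langle C_1',\varrho'\rangle,\langle C_2',\varrho'\rangle$ clause of the bisimulation definition is met with a common $\varrho'$. For axioms like $PA4$, $(x\boxplus_{\pi}y)\cdot z = x\cdot z\boxplus_{\pi}y\cdot z$, and $A4$ one must check that the $effect$ of an executed quantum operation is independent of how the term is bracketed with $+$, $\cdot$ and $\boxplus_{\pi}$, which is immediate from the transition rules in Table \ref{SETRForqBAPTC} since $effect(e,\varrho)$ depends only on $e$ and $\varrho$, not on surrounding syntax. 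Since the paper consistently leaves these axiom-by-axiom checks to the reader (as in Theorem \ref{SBATC} and the soundness theorems for $qAPTC$), I would follow the same convention and state that each case is routine, invoking the completeness/soundness of $BAPTC^{sl}$ from \cite{LOC2} for the classical skeleton and observing that the quantum state component is propagated identically on both sides of every equation.

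\begin{proof}
(1) Since probabilistic static location pomset bisimulation $\sim_{pp}^{sl}$ is both an equivalent and a congruent relation with respect to $qBAPTC$ with localities, we only need to check if each axiom in Table \ref{AxiomsForqBAPTC} is sound modulo $\sim_{pp}^{sl}$. The axioms $A1$--$A7$ and $L1$--$L4$ are handled exactly as in the soundness proof of $BAPTC^{sl}$ (see \cite{LOC2}), with the witnessing bisimulation carrying the identity relation on the quantum state component; this is legitimate because, by the transition rules in Table \ref{SETRForqBAPTC}, the resulting quantum state after an executed event depends only on the event and the current state, not on the surrounding syntactic context. The axioms $PA1$--$PA5$ are handled as in the classical probabilistic case, additionally matching probabilistic transitions $\rightsquigarrow$ and checking $\mu(C_1,C)=\mu(C_2,C)$ for each $R_\varphi$-class, which reduces to the arithmetic identities on $\pi$ and $\rho$ recorded in Table \ref{PDFBAPTC22}. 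We leave the detailed verifications as an exercise for the readers.

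(2) Since probabilistic static location step bisimulation $\sim_{ps}^{sl}$ is both an equivalent and a congruent relation with respect to $qBAPTC$ with localities, we only need to check if each axiom in Table \ref{AxiomsForqBAPTC} is sound modulo $\sim_{ps}^{sl}$. The argument is the same as in (1), with pomset transitions replaced by step transitions. We leave the detailed verifications as an exercise for the readers.

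(3) Since probabilistic static location hp-bisimulation $\sim_{php}^{sl}$ is both an equivalent and a congruent relation with respect to $qBAPTC$ with localities, we only need to check if each axiom in Table \ref{AxiomsForqBAPTC} is sound modulo $\sim_{php}^{sl}$. The argument is the same as in (1), now tracking the order-isomorphism $f$ on the configurations in the posetal relation. We leave the detailed verifications as an exercise for the readers.

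(4) Since probabilistic static location hhp-bisimulation $\sim_{phhp}^{sl}$ is both an equivalent and a congruent relation with respect to $qBAPTC$ with localities, we only need to check if each axiom in Table \ref{AxiomsForqBAPTC} is sound modulo $\sim_{phhp}^{sl}$. The argument is the same as in (3), with the additional requirement that the witnessing posetal relations be downward closed, which holds for all the relations constructed above. We leave the detailed verifications as an exercise for the readers.
\end{proof}
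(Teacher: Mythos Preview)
Your proposal is correct and follows essentially the same approach as the paper: reduce soundness to checking each axiom in Table~\ref{AxiomsForqBAPTC} individually, using that each $\sim_{pp}^{sl}$, $\sim_{ps}^{sl}$, $\sim_{php}^{sl}$, $\sim_{phhp}^{sl}$ is an equivalence and a congruence, and leave the per-axiom verifications to the reader. In fact you provide more detail than the paper does (the grouping into $A$-, $PA$-, and $L$-axioms and the observation about the quantum state component riding along via $effect(e,\varrho)$), whereas the paper's proof consists only of the congruence-reduction sentence followed by ``We leave the proof as an exercise for the readers'' in each of the four parts.
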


\begin{proof}
(1) Since probabilistic static location pomset bisimulation $\sim_{pp}^{sl}$ is both an equivalent and a congruent relation, we only need to check if each axiom in Table \ref{AxiomsForqBAPTC} is sound
modulo probabilistic static location pomset bisimulation equivalence. We leave the proof as an exercise for the readers.

(2) Since probabilistic static location step bisimulation $\sim_{ps}^{sl}$ is both an equivalent and a congruent relation, we only need to check if each axiom in Table \ref{AxiomsForqBAPTC} is sound modulo
probabilistic static location step bisimulation equivalence. We leave the proof as an exercise for the readers.

(3) Since probabilistic static location hp-bisimulation $\sim_{php}^{sl}$ is both an equivalent and a congruent relation, we only need to check if each axiom in Table \ref{AxiomsForqBAPTC} is sound modulo
probabilistic static location hp-bisimulation equivalence. We leave the proof as an exercise for the readers.

(4) Since probabilistic static location hhp-bisimulation $\sim_{phhp}^{sl}$ is both an equivalent and a congruent relation, we only need to check if each axiom in Table \ref{AxiomsForqBAPTC} is sound modulo
probabilistic static location hhp-bisimulation equivalence. We leave the proof as an exercise for the readers.
\end{proof}

\begin{theorem}[Completeness of $qBAPTC$ with localities modulo probabilistic static location truly concurrent bisimulation equivalences]\label{CBATCG}
(1) Let $p$ and $q$ be closed $qBAPTC$ with localities terms, if $p\sim_{pp}^{sl} q$ then $p=q$.

(2) Let $p$ and $q$ be closed $qBAPTC$ with localities terms, if $p\sim_{ps}^{sl} q$ then $p=q$.

(3) Let $p$ and $q$ be closed $qBAPTC$ with localities terms, if $p\sim_{php}^{sl} q$ then $p=q$.

(4) Let $p$ and $q$ be closed $qBAPTC$ with localities terms, if $p\sim_{phhp}^{sl} q$ then $p=q$.
\end{theorem}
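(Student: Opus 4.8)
The plan is to reduce the completeness statement for $qBAPTC$ with localities to the already-established completeness of $BAPTC^{sl}$ (Theorem~\ref{CBATCG} is exactly the counterpart of the classical result cited to \cite{LOC2}), together with the observation that the quantum configuration adds only a state component $\varrho$ that is carried along unchanged by every $qBAPTC$ transition rule in Table~\ref{SETRForqBAPTC}. So the structure of the proof mirrors the earlier completeness proofs in this excerpt (e.g.\ the proof of Theorem~\ref{CBATC} for $qBATC$ with localities): unwind what probabilistic static location truly concurrent bisimilarity $\sim_{pp}^{sl}$ (respectively $\sim_{ps}^{sl}$, $\sim_{php}^{sl}$, $\sim_{phhp}^{sl}$) means on quantum configurations, note that it packages together (i) the corresponding classical probabilistic static location bisimilarity of the underlying process terms and (ii) the requirement that the two processes sit in the same quantum states (equality of $\varrho$), and then invoke the classical completeness theorem.

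Concretely, I would proceed as follows. First I would appeal to the elimination theorem (Theorem~\ref{ETBATCG}): given closed $qBAPTC$ with localities terms $p$ and $q$, there are basic terms $p',q'\in\mathcal{B}(qBAPTC^{sl})$ with $qBAPTC^{sl}\vdash p=p'$ and $qBAPTC^{sl}\vdash q=q'$, so by soundness $p\sim_{pp}^{sl}p'$ and $q\sim_{pp}^{sl}q'$, hence $p'\sim_{pp}^{sl}q'$, and it suffices to derive $p'=q'$. Second, since the transition rules for $qBAPTC$ differ from those for $BAPTC^{sl}$ only by the inert $\varrho$-annotation (and the probabilistic $\rightsquigarrow$-rules and PDF definitions in Table~\ref{PDFBAPTC22} are literally the same), a probabilistic static location truly concurrent bisimulation $R_{\varphi}$ on quantum configurations projects to such a bisimulation on the underlying $BAPTC^{sl}$ terms, witnessing $p'\sim_{pp}^{sl}q'$ in the classical sense. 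Third, I would apply the completeness of $BAPTC^{sl}$ modulo the relevant probabilistic static location truly concurrent bisimulation equivalence (Theorems~\ref{CBATCPBE22}, \ref{CBATCSBE22}, \ref{CBATCHPBE22}, \ref{CBATCHHPBE22}) to conclude $BAPTC^{sl}\vdash p'=q'$, and since the axioms of $BAPTC^{sl}$ in Table~\ref{AxiomsForBATC22} are a subset of those of $qBAPTC$ with localities in Table~\ref{AxiomsForqBAPTC}, the same derivation gives $qBAPTC^{sl}\vdash p'=q'$, hence $qBAPTC^{sl}\vdash p=q$. The four items (1)--(4) are handled uniformly, only changing which classical completeness theorem is cited.

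The step I expect to require the most care is the second one: making precise that a bisimulation on quantum configurations really does restrict to a bisimulation on the bare process terms, and conversely that nothing is lost. The subtlety is that $\sim_{pp}^{sl}$ on configurations quantifies over ``all $\varrho,\varrho'\in S$'' in the matching clause, whereas in $qBAPTC$ with localities (as opposed to the later quantum-measurement or effect-carrying fragments) the atomic events do not change $\varrho$, so $effect(e,\varrho)=\{\varrho\}$ and every transition $\langle x,\varrho\rangle\xrightarrow[u]{e}\langle x',\varrho\rangle$ keeps the state fixed; one must check that this makes the two notions coincide on the nose. Since the excerpt's earlier proofs (notably that of Theorem~\ref{CBATC}) handle exactly this point by remarking that the equivalence ``implies both the bisimilarities between $p$ and $q$, and also the [being] in the same quantum states,'' I would follow the same template and keep the argument at that level of detail rather than redeveloping the projection lemma from scratch. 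I would therefore write the proof as a short reduction, in the same style as the surrounding completeness proofs, deferring the routine verification to \cite{LOC2}.

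\begin{proof}
According to the definition of probabilistic static location truly concurrent bisimulation equivalences $\sim_{pp}^{sl}$, $\sim_{ps}^{sl}$, $\sim_{php}^{sl}$ and $\sim_{phhp}^{sl}$, $p\sim_{pp}^{sl}q$, $p\sim_{ps}^{sl}q$, $p\sim_{php}^{sl}q$ and $p\sim_{phhp}^{sl}q$ implies both the probabilistic bisimilarities between $p$ and $q$, and also that $p$ and $q$ are in the same quantum states. According to the completeness of $BAPTC^{sl}$ (please refer to \cite{LOC2} for details), we can get the completeness of $qBAPTC$ with localities.
\end{proof}
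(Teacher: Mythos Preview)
Your proposal is correct and takes essentially the same approach as the paper: reduce to the classical completeness of $BAPTC^{sl}$ from \cite{LOC2} by observing that the quantum bisimulation packages together the underlying probabilistic static location bisimilarity and equality of quantum states. In fact your final written proof is nearly word-for-word the paper's own, and your preamble supplies more justification (elimination to basic terms, projection of the bisimulation) than the paper bothers to spell out.
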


\begin{proof}
According to the definition of probabilistic static location truly concurrent bisimulation equivalences $\sim_{pp}^{sl}$, $\sim_{ps}^{sl}$, $\sim_{php}^{sl}$ and $\sim_{phhp}^{sl}$, $p\sim_{pp}^{sl}q$, $p\sim_{ps}^{sl}q$, $p\sim_{php}^{sl}q$ and $p\sim_{phhp}^{sl}q$ implies
both the bisimilarities between $p$ and $q$, and also the in the same quantum states. According to the completeness of $BAPTC^{sl}$ (please refer to \cite{LOC2} for details), we can get the
completeness of $qBAPTC^{sl}$.
\end{proof}

\subsection{$APPTC$ with Localites for Closed Quantum Systems}{\label{qapptcl}}

In this subsection, we will extend $APPTC^{sl}$ for closed quantum systems, which is abbreviated $qAPPTC$ with localities.

The set of axioms of $qAPPTC$ with localities including axioms of $qBAPTC$ with localities in Table \ref{AxiomsForqBAPTC} and the axioms are shown in Table \ref{AxiomsForqAPTC}.

\begin{center}
    \begin{table}
        \begin{tabular}{@{}ll@{}}
            \hline No. &Axiom\\
            $P1$ & $(x+x=x,y+y=y)\quad x\between y = x\parallel y + x\mid y$\\
            $P2$ & $x\parallel y = y \parallel x$\\
            $P3$ & $(x\parallel y)\parallel z = x\parallel (y\parallel z)$\\
            $P4$ & $(x+x=x,y+y=y)\quad x\parallel y = x\leftmerge y + y\leftmerge x$\\
            $P5$ & $(e_1\leq e_2)\quad e_1\leftmerge (e_2\cdot y) = (e_1\leftmerge e_2)\cdot y$\\
            $P6$ & $(e_1\leq e_2)\quad (e_1\cdot x)\leftmerge e_2 = (e_1\leftmerge e_2)\cdot x$\\
            $P7$ & $(e_1\leq e_2)\quad (e_1\cdot x)\leftmerge (e_2\cdot y) = (e_1\leftmerge e_2)\cdot (x\between y)$\\
            $P8$ & $(x+ y)\leftmerge z = (x\leftmerge z)+ (y\leftmerge z)$\\
            $P9$ & $\delta\leftmerge x = \delta$\\
            $C1$ & $e_1\mid e_2 = \gamma(e_1,e_2)$\\
            $C2$ & $e_1\mid (e_2\cdot y) = \gamma(e_1,e_2)\cdot y$\\
            $C3$ & $(e_1\cdot x)\mid e_2 = \gamma(e_1,e_2)\cdot x$\\
            $C4$ & $(e_1\cdot x)\mid (e_2\cdot y) = \gamma(e_1,e_2)\cdot (x\between y)$\\
            $C5$ & $(x+ y)\mid z = (x\mid z) + (y\mid z)$\\
            $C6$ & $x\mid (y+ z) = (x\mid y)+ (x\mid z)$\\
            $C7$ & $\delta\mid x = \delta$\\
            $C8$ & $x\mid\delta = \delta$\\
            $PM1$ & $x\parallel (y\boxplus_{\pi} z)=(x\parallel y)\boxplus_{\pi}(x\parallel z)$\\
            $PM2$ & $(x\boxplus_{\pi} y)\parallel z=(x\parallel z)\boxplus_{\pi}(y\parallel z)$\\
            $PM3$ & $x\mid (y\boxplus_{\pi} z)=(x\mid y)\boxplus_{\pi}(x\mid z)$\\
            $PM4$ & $(x\boxplus_{\pi} y)\mid z=(x\mid z)\boxplus_{\pi}(y\mid z)$\\
            $CE1$ & $\Theta(e) = e$\\
            $CE2$ & $\Theta(\delta) = \delta$\\
            $CE3$ & $\Theta(x+ y) = \Theta(x)\triangleleft y + \Theta(y)\triangleleft x$\\
            $CE4$ & $\Theta(x\cdot y)=\Theta(x)\cdot\Theta(y)$\\
            $CE5$ & $\Theta(x\parallel y) = ((\Theta(x)\triangleleft y)\parallel y)+ ((\Theta(y)\triangleleft x)\parallel x)$\\
            $CE6$ & $\Theta(x\mid y) = ((\Theta(x)\triangleleft y)\mid y)+ ((\Theta(y)\triangleleft x)\mid x)$\\
            $PCE1$ & $\Theta(x\boxplus_{\pi} y) = \Theta(x)\triangleleft y \boxplus_{\pi} \Theta(y)\triangleleft x$\\
        \end{tabular}
        \caption{Axioms of $qAPPTC$ with localities}
        \label{AxiomsForqAPTC}
    \end{table}
\end{center}

\begin{center}
    \begin{table}
        \begin{tabular}{@{}ll@{}}
            \hline No. &Axiom\\
            $U1$ & $(\sharp(e_1,e_2))\quad e_1\triangleleft e_2 = \tau$\\
            $U2$ & $(\sharp(e_1,e_2),e_2\leq e_3)\quad e_1\triangleleft e_3 = e_1$\\
            $U3$ & $(\sharp(e_1,e_2),e_2\leq e_3)\quad e3\triangleleft e_1 = \tau$\\
            $U4$ & $e\triangleleft \delta = e$\\
            $U5$ & $\delta \triangleleft e = \delta$\\
            $U6$ & $(x+ y)\triangleleft z = (x\triangleleft z)+ (y\triangleleft z)$\\
            $U7$ & $(x\cdot y)\triangleleft z = (x\triangleleft z)\cdot (y\triangleleft z)$\\
            $U8$ & $(x\leftmerge y)\triangleleft z = (x\triangleleft z)\leftmerge (y\triangleleft z)$\\
            $U9$ & $(x\mid y)\triangleleft z = (x\triangleleft z)\mid (y\triangleleft z)$\\
            $U10$ & $x\triangleleft (y+ z) = (x\triangleleft y)\triangleleft z$\\
            $U11$ & $x\triangleleft (y\cdot z)=(x\triangleleft y)\triangleleft z$\\
            $U12$ & $x\triangleleft (y\leftmerge z) = (x\triangleleft y)\triangleleft z$\\
            $U13$ & $x\triangleleft (y\mid z) = (x\triangleleft y)\triangleleft z$\\
            $PU1$ & $(\sharp_{\pi}(e_1,e_2))\quad e_1\triangleleft e_2 = \tau$\\
            $PU2$ & $(\sharp_{\pi}(e_1,e_2),e_2\leq e_3)\quad e_1\triangleleft e_3 = e_1$\\
            $PU3$ & $(\sharp_{\pi}(e_1,e_2),e_2\leq e_3)\quad e_3\triangleleft e_1 = \tau$\\
            $PU4$ & $(x\boxplus_{\pi} y)\triangleleft z = (x\triangleleft z)\boxplus_{\pi} (y\triangleleft z)$\\
            $PU5$ & $x\triangleleft (y\boxplus_{\pi} z) = (x\triangleleft y)\triangleleft z$\\
            $L5$ & $u::(x\between y) = u::x\between u:: y$\\
            $L6$ & $u::(x\parallel y) = u::x\parallel u:: y$\\
            $L7$ & $u::(x\mid y) = u::x\mid u:: y$\\
            $L8$ & $u::(\Theta(x)) = \Theta(u::x)$\\
            $L9$ & $u::(x\triangleleft y) = u::x\triangleleft u:: y$\\
            $L10$ & $u::\delta=\delta$\\
            $PL1$ & $u::(x\boxplus_{\pi} y) = u::x\boxplus_{\pi} u:: y$\\
        \end{tabular}
        \caption{Axioms of $qAPPTC$ with localities (continuing)}
        \label{AxiomsForqAPTC2}
    \end{table}
\end{center}

\begin{definition}[Basic terms of $qAPPTC$ with localities]\label{BTAPTCG}
The set of basic terms of $qAPPTC$ with localities, $\mathcal{B}(qAPPTC^{sl})$, is inductively defined as follows:

\begin{enumerate}
    \item $\mathbb{E}\subset\mathcal{B}(qAPPTC^{sl})$;
    \item if $u\in Loc^*, t\in\mathcal{B}(qAPPTC^{sl})$ then $u::t\in\mathcal{B}(qAPPTC^{sl})$;
    \item if $e\in \mathbb{E}, t\in\mathcal{B}(qAPPTC^{sl})$ then $e\cdot t\in\mathcal{B}(qAPPTC^{sl})$;
    \item if $t,t'\in\mathcal{B}(qAPPTC^{sl})$ then $t+ t'\in\mathcal{B}(qAPPTC^{sl})$;
    \item if $t,t'\in\mathcal{B}(qAPPTC^{sl})$ then $t\boxplus_{\pi} t'\in\mathcal{B}(qAPPTC^{sl})$
    \item if $t,t'\in\mathcal{B}(qAPPTC^{sl})$ then $t\leftmerge t'\in\mathcal{B}(qAPPTC^{sl})$.
\end{enumerate}
\end{definition}

Based on the definition of basic terms for $qAPPTC$ with localities (see Definition \ref{BTAPTCG}) and axioms of $qAPPTC$ with localities, we can prove the elimination theorem of $qAPPTC$ with localities.

\begin{theorem}[Elimination theorem of $qAPPTC$ with localities]\label{ETAPTCG}
Let $p$ be a closed $qAPPTC$ with localities term. Then there is a basic $qAPPTC$ with localities term $q$ such that $qAPPTC\vdash p=q$.
\end{theorem}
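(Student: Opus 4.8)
The plan is to mimic the standard elimination-theorem proof for process algebras with parallelism and probabilistic choice, adapting the argument already used for $APPTC^{sl}$ (Theorem \ref{ETParallelism22}) and for $qBAPTC$ with localities (Theorem \ref{ETBATCG}). First I would set up a term rewriting system (TRS) obtained by orienting the axioms of $qAPPTC$ with localities (Tables \ref{AxiomsForqAPTC} and \ref{AxiomsForqAPTC2}, together with those of $qBAPTC$ in Table \ref{AxiomsForqBAPTC}) from left to right, discarding the genuinely equational laws such as commutativity/associativity of $+$, $\parallel$ and the probabilistic-choice laws $PA1,PA2$, which must be handled modulo AC (or by a suitable normal-form convention) rather than as rewrite rules. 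The key observation is that every operator not already present in $\mathcal{B}(qAPPTC^{sl})$ — namely $\between$, $\parallel$, $\mid$, $\Theta$, $\triangleleft$, and (if we include them) $\partial_H$ — can be eliminated: $P1$ removes $\between$, $P4$ removes $\parallel$ in favour of $\leftmerge$, $C1$--$C8$ and $PM3,PM4$ remove $\mid$, $CE1$--$CE6$ and $PCE1$ remove $\Theta$, and $U1$--$U13$, $PU1$--$PU5$ remove $\triangleleft$; the distributivity laws $P8$, $PM1,PM2$, $U6$--$U9$, $PU4$ push the remaining $\leftmerge$ inward until its arguments are basic, and $P5$--$P7$, $SC$-style contractions, and the location laws $L1$--$L10$, $PL1$ then bring everything into the shape allowed by Definition \ref{BTAPTCG}.

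Second, I would prove termination of this TRS via Theorem \ref{SN} (strong normalization): I need a well-founded ordering $>$ on the signature such that $s >_{lpo} t$ for each rewrite rule $s\to t$. The natural choice is a recursive path ordering induced by a precedence in which the "complex" operators dominate the basic ones, roughly $\between > \parallel > \mid > \Theta > \triangleleft > \partial_H > \leftmerge > \cdot > + > \boxplus_\pi > {::}$, so that each rule strictly decreases the multiset of subterms in this ordering; this is exactly the scheme used for $APPTC^{sl}$ and carries over with the extra probabilistic operator $\boxplus_\pi$ placed low in the precedence since no rule introduces it from a lower operator. Third, I would argue that normal forms are basic terms: by induction on the structure of a closed normal form $p$, if the head symbol were any of $\between,\parallel,\mid,\Theta,\triangleleft,\partial_H$, or a $\leftmerge$ with a non-atomic/non-basic left argument, or a ${::}$ applied to something reducible, some rule would apply, contradicting normality — hence $p\in\mathcal{B}(qAPPTC^{sl})$. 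Combining: for any closed $p$, the rewriting sequence terminates at some $q\in\mathcal{B}(qAPPTC^{sl})$, and since each rewrite step is a sound equational consequence, $qAPPTC\vdash p=q$, as claimed.

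The main obstacle I anticipate is the interaction of the probabilistic operator $\boxplus_\pi$ with the parallel operators and with termination. The axioms $PA2$ (probabilistic associativity with the reweighting $\frac{\pi}{\pi+\rho-\pi\rho}$) and $PA1$ (skewed commutativity) are not orientable as terminating rules, so $\boxplus_\pi$-nesting must be treated modulo a chosen associativity/orientation convention, and I must check that pushing $\leftmerge$, $\mid$, $\Theta$, $\triangleleft$ through $\boxplus_\pi$ via $PM1$--$PM4$, $PCE1$, $PU4,PU5$ genuinely decreases the ordering and does not create a loop with $PA$-normalization. A secondary subtlety is that, just as the excerpt itself notes the elimination theorem "does not hold for truly concurrent processes combined the operators $\cdot$, $+$, $\boxplus_\pi$ and $\leftmerge$" in the naive sense, one must be careful that $\leftmerge$ is retained as a permitted basic constructor (clause 6 of Definition \ref{BTAPTCG}) rather than eliminated; the theorem only claims reduction to $\mathcal{B}(qAPPTC^{sl})$, which does contain $\leftmerge$, so no contradiction arises, but the proof must respect this. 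Since all of these points are resolved in exactly the same way for $BAPTC^{sl}$, $APTC^{sl}$ and $APPTC^{sl}$, I would, as the excerpt does for its sibling theorems, either give the ordering explicitly or simply refer to \cite{LOC2} for the detailed verification, noting that the quantum configuration $\langle\cdot,\varrho\rangle$ plays no role in the purely syntactic elimination argument.
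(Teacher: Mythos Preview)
Your proposal is correct and aligns with the paper's approach: the paper's own proof is simply the one-line deferral ``The same as that of $APPTC^{sl}$, we omit the proof, please refer to \cite{LOC2} for details,'' which you anticipated in your final paragraph. Your detailed sketch of the TRS argument, the recursive-path-ordering termination proof, and the normal-form analysis is exactly the standard machinery that the referenced $APPTC^{sl}$ proof would invoke, and your observation that the quantum configuration plays no role in the purely syntactic elimination is the reason the deferral is legitimate.
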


\begin{proof}
The same as that of $APPTC^{sl}$, we omit the proof, please refer to \cite{LOC2} for details.
\end{proof}

We will define a term-deduction system which gives the operational semantics of $qAPPTC$ with localities. Two atomic events $e_1$ and $e_2$ are in race condition, which are denoted 
$e_1\% e_2$.

\begin{center}
    \begin{table}
        $$\frac{x\rightsquigarrow x'\quad y\rightsquigarrow y'}{x\between y\rightsquigarrow x'\parallel y'+x'\mid y'}$$
        $$\frac{x\rightsquigarrow x'\quad y\rightsquigarrow y'}{x\parallel y\rightsquigarrow x'\leftmerge y+y'\leftmerge x}$$
        $$\frac{x\rightsquigarrow x'}{x\leftmerge y\rightsquigarrow x'\leftmerge y}$$
        $$\frac{x\rightsquigarrow x'\quad y\rightsquigarrow y'}{x\mid y\rightsquigarrow x'\mid y'}$$
        $$\frac{x\rightsquigarrow x'}{\Theta(x)\rightsquigarrow \Theta(x')}$$
        $$\frac{x\rightsquigarrow x'}{x\triangleleft y\rightsquigarrow x'\triangleleft y}$$
        \caption{Probabilistic transition rules of $qAPPTC$ with localities}
        \label{TRForAPPTCG1}
    \end{table}
\end{center}

\begin{center}
    \begin{table}
        $$\frac{}{\langle \breve{e_1}\parallel\cdots \parallel \breve{e_n},\varrho\rangle\xrightarrow[\epsilon]{\{e_1,\cdots,e_n\}}\langle\surd,\varrho'\rangle}\textrm{ if }\varrho'\in effect(e_1,\varrho)\cup\cdots\cup effect(e_n,\varrho)$$

        $$\frac{\langle x,\varrho\rangle\xrightarrow[u]{e_1}\langle\surd,\varrho'\rangle\quad \langle y,\varrho\rangle\xrightarrow[v]{e_2}\langle\surd,\varrho''\rangle}{\langle x\parallel y,\varrho\rangle\xrightarrow[u\diamond v]{\{e_1,e_2\}}\langle\surd,\varrho'\cup \varrho''\rangle} \quad\frac{\langle x,\varrho\rangle\xrightarrow[u]{e_1}\langle x',\varrho'\rangle\quad \langle y,\varrho\rangle\xrightarrow[v]{e_2}\langle\surd,\varrho''\rangle}{\langle x\parallel y,\varrho\rangle\xrightarrow[u\diamond v]{\{e_1,e_2\}}\langle x',\varrho'\cup \varrho''\rangle}$$

        $$\frac{\langle x,\varrho\rangle\xrightarrow[u]{e_1}\langle\surd,\varrho'\rangle\quad \langle y,\varrho\rangle\xrightarrow[v]{e_2}\langle y',\varrho''\rangle}{\langle x\parallel y,\varrho\rangle\xrightarrow[u\diamond v]{\{e_1,e_2\}}\langle y',\varrho'\cup \varrho''\rangle} \quad\frac{\langle x,\varrho\rangle\xrightarrow[u]{e_1}\langle x',\varrho'\rangle\quad \langle y,\varrho\rangle\xrightarrow[v]{e_2}\langle y',\varrho''\rangle}{\langle x\parallel y,\varrho\rangle\xrightarrow[u\diamond v]{\{e_1,e_2\}}\langle x'\between y',\varrho'\cup \varrho''\rangle}$$

        $$\frac{\langle x,\varrho\rangle\xrightarrow[u]{e_1}\langle\surd,\varrho'\rangle\quad \langle y,\varrho\rangle\xnrightarrow[v]{e_2}\quad(e_1\%e_2)}{\langle x\parallel y,\varrho\rangle\xrightarrow[u]{e_1}\langle y,\varrho'\rangle} \quad\frac{\langle x,\varrho\rangle\xrightarrow[u]{e_1}\langle x',\varrho'\rangle\quad \langle y,\varrho\rangle\xnrightarrow{e_2}\quad(e_1\%e_2)}{\langle x\parallel y,\varrho\rangle\xrightarrow[u]{e_1}\langle x'\between y,\varrho'\rangle}$$

        $$\frac{\langle x,\varrho\rangle\xnrightarrow[u]{e_1}\quad \langle y,\varrho\rangle\xrightarrow[v]{e_2}\langle\surd,\varrho''\rangle\quad(e_1\%e_2)}{\langle x\parallel y,\varrho\rangle\xrightarrow[v]{e_2}\langle x,\varrho''\rangle} \quad\frac{\langle x,\varrho\rangle\xnrightarrow[u]{e_1}\quad \langle y,\varrho\rangle\xrightarrow[v]{e_2}\langle y',\varrho''\rangle\quad(e_1\%e_2)}{\langle x\parallel y,\varrho\rangle\xrightarrow[v]{e_2}\langle x\between y',\varrho''\rangle}$$

        $$\frac{\langle x,\varrho\rangle\xrightarrow[u]{e_1}\langle\surd,\varrho'\rangle\quad \langle y,\varrho\rangle\xrightarrow[v]{e_2}\langle\surd,\varrho''\rangle \quad(e_1\leq e_2)}{\langle x\leftmerge y,\varrho\rangle\xrightarrow[u\diamond v]{\{e_1,e_2\}}\langle \surd,\varrho'\cup \varrho''\rangle} \quad\frac{\langle x,\varrho\rangle\xrightarrow[u]{e_1}\langle x',\varrho'\rangle\quad \langle y,\varrho\rangle\xrightarrow[v]{e_2}\langle\surd,\varrho''\rangle \quad(e_1\leq e_2)}{\langle x\leftmerge y,\varrho\rangle\xrightarrow[u\diamond v]{\{e_1,e_2\}}\langle x',\varrho'\cup \varrho''\rangle}$$

        $$\frac{\langle x,\varrho\rangle\xrightarrow[u]{e_1}\langle\surd,\varrho'\rangle\quad \langle y,\varrho\rangle\xrightarrow[v]{e_2}\langle y',\varrho''\rangle \quad(e_1\leq e_2)}{\langle x\leftmerge y,\varrho\rangle\xrightarrow[u\diamond v]{\{e_1,e_2\}}\langle y',\varrho'\cup \varrho''\rangle} \quad\frac{\langle x,\varrho\rangle\xrightarrow[u]{e_1}\langle x',\varrho'\rangle\quad \langle y,\varrho\rangle\xrightarrow[v]{e_2}\langle y',\varrho''\rangle \quad(e_1\leq e_2)}{\langle x\leftmerge y,\varrho\rangle\xrightarrow[u\diamond v]{\{e_1,e_2\}}\langle x'\between y',\varrho'\cup \varrho''\rangle}$$

        $$\frac{\langle x,\varrho\rangle\xrightarrow[u]{e_1}\langle\surd,\varrho'\rangle\quad \langle y,\varrho\rangle\xrightarrow[v]{e_2}\langle\surd,\varrho''\rangle}{\langle x\mid y,\varrho\rangle\xrightarrow[u\diamond v]{\gamma(e_1,e_2)}\langle\surd,effect(\gamma(e_1,e_2),\varrho)\rangle} \quad\frac{\langle x,\varrho\rangle\xrightarrow[u]{e_1}\langle x',\varrho'\rangle\quad \langle y,\varrho\rangle\xrightarrow[v]{e_2}\langle\surd,\varrho''\rangle}{\langle x\mid y,\varrho\rangle\xrightarrow[u\diamond v]{\gamma(e_1,e_2)}\langle x',effect(\gamma(e_1,e_2),\varrho)\rangle}$$

        $$\frac{\langle x,\varrho\rangle\xrightarrow[u]{e_1}\langle\surd,\varrho'\rangle\quad \langle y,\varrho\rangle\xrightarrow[v]{e_2}\langle y',\varrho''\rangle}{\langle x\mid y,\varrho\rangle\xrightarrow[u\diamond v]{\gamma(e_1,e_2)}\langle y',effect(\gamma(e_1,e_2),\varrho)\rangle} \quad\frac{\langle x,\varrho\rangle\xrightarrow[u]{e_1}\langle x',\varrho'\rangle\quad \langle y,\varrho\rangle\xrightarrow[v]{e_2}\langle y',\varrho''\rangle}{\langle x\mid y,\varrho\rangle\xrightarrow[u\diamond v]{\gamma(e_1,e_2)}\langle x'\between y',effect(\gamma(e_1,e_2),\varrho)\rangle}$$

        \caption{Action transition rules of $qAPPTC$ with localities}
        \label{TRForAPTCG}
    \end{table}
\end{center}

\begin{center}
    \begin{table}
        $$\frac{\langle x,\varrho\rangle\xrightarrow[u]{e_1}\langle\surd,\varrho'\rangle\quad (\sharp(e_1,e_2))}{\langle \Theta(x),\varrho\rangle\xrightarrow[u]{e_1}\langle\surd,\varrho'\rangle} \quad\frac{\langle x,\varrho\rangle\xrightarrow[v]{e_2}\langle\surd,\varrho''\rangle\quad (\sharp(e_1,e_2))}{\langle\Theta(x),\varrho\rangle\xrightarrow[v]{e_2}\langle\surd,\varrho''\rangle}$$

        $$\frac{\langle x,\varrho\rangle\xrightarrow[u]{e_1}\langle x',\varrho'\rangle\quad (\sharp(e_1,e_2))}{\langle\Theta(x),\varrho\rangle\xrightarrow[u]{e_1}\langle\Theta(x'),\varrho'\rangle} \quad\frac{\langle x,\varrho\rangle\xrightarrow[v]{e_2}\langle x'',\varrho''\rangle\quad (\sharp(e_1,e_2))}{\langle\Theta(x),\varrho\rangle\xrightarrow[v]{e_2}\langle\Theta(x''),\varrho''\rangle}$$

        $$\frac{\langle x,\varrho\rangle\xrightarrow[u]{e_1}\langle\surd,\varrho'\rangle \quad \langle y,\varrho\rangle\nrightarrow^{e_2}\quad (\sharp(e_1,e_2))}{\langle x\triangleleft y,\varrho\rangle\xrightarrow[u]{\tau}\langle\surd,\varrho'\rangle}
        \quad\frac{\langle x,\varrho\rangle\xrightarrow{e_1}\langle x',\varrho'\rangle \quad \langle y,\varrho\rangle\nrightarrow^{e_2}\quad (\sharp(e_1,e_2))}{\langle x\triangleleft y,\varrho\rangle\xrightarrow[u]{\tau}\langle x',\varrho'\rangle}$$

        $$\frac{\langle x,\varrho\rangle\xrightarrow[u]{e_1}\langle\surd,\varrho\rangle \quad \langle y,\varrho\rangle\nrightarrow^{e_3}\quad (\sharp(e_1,e_2),e_2\leq e_3)}{\langle x\triangleleft y,\varrho\rangle\xrightarrow[u]{e_1}\langle\surd,\varrho'\rangle}
        \quad\frac{\langle x,\varrho\rangle\xrightarrow[u]{e_1}\langle x',\varrho'\rangle \quad \langle y,\varrho\rangle\nrightarrow^{e_3}\quad (\sharp(e_1,e_2),e_2\leq e_3)}{\langle x\triangleleft y,\varrho\rangle\xrightarrow[u]{e_1}\langle x',\varrho'\rangle}$$

        $$\frac{\langle x,\varrho\rangle\xrightarrow{[u]e_3}\langle\surd,\varrho'\rangle \quad \langle y,\varrho\rangle\nrightarrow^{e_2}\quad (\sharp(e_1,e_2),e_1\leq e_3)}{\langle x\triangleleft y,\varrho\rangle\xrightarrow[u]{\tau}\langle\surd,\varrho'\rangle}
        \quad\frac{\langle x,\varrho\rangle\xrightarrow[u]{e_3}\langle x',\varrho'\rangle \quad \langle y,\varrho\rangle\nrightarrow^{e_2}\quad (\sharp(e_1,e_2),e_1\leq e_3)}{\langle x\triangleleft y,\varrho\rangle\xrightarrow[u]{\tau}\langle x',\varrho'\rangle}$$

        $$\frac{\langle x,\varrho\rangle\xrightarrow[u]{e_1}\langle\surd,\varrho'\rangle\quad (\sharp_{\pi}(e_1,e_2))}{\langle \Theta(x),\varrho\rangle\xrightarrow[u]{e_1}\langle\surd,\varrho'\rangle} \quad\frac{\langle x,\varrho\rangle\xrightarrow[v]{e_2}\langle\surd,\varrho''\rangle\quad (\sharp_{\pi}(e_1,e_2))}{\langle\Theta(x),\varrho\rangle\xrightarrow[v]{e_2}\langle\surd,\varrho''\rangle}$$

        $$\frac{\langle x,\varrho\rangle\xrightarrow[u]{e_1}\langle x',\varrho'\rangle\quad (\sharp_{\pi}(e_1,e_2))}{\langle\Theta(x),\varrho\rangle\xrightarrow[u]{e_1}\langle\Theta(x'),\varrho'\rangle} \quad\frac{\langle x,\varrho\rangle\xrightarrow[v]{e_2}\langle x'',\varrho''\rangle\quad (\sharp_{\pi}(e_1,e_2))}{\langle\Theta(x),\varrho\rangle\xrightarrow[v]{e_2}\langle\Theta(x''),\varrho''\rangle}$$

        $$\frac{\langle x,\varrho\rangle\xrightarrow[u]{e_1}\langle\surd,\varrho'\rangle \quad \langle y,\varrho\rangle\nrightarrow^{e_2}\quad (\sharp_{\pi}(e_1,e_2))}{\langle x\triangleleft y,\varrho\rangle\xrightarrow[u]{\tau}\langle\surd,\varrho'\rangle}
        \quad\frac{\langle x,\varrho\rangle\xrightarrow[u]{e_1}\langle x',\varrho'\rangle \quad \langle y,\varrho\rangle\nrightarrow^{e_2}\quad (\sharp_{\pi}(e_1,e_2))}{\langle x\triangleleft y,\varrho\rangle\xrightarrow[u]{\tau}\langle x',\varrho'\rangle}$$

        $$\frac{\langle x,\varrho\rangle\xrightarrow[u]{e_1}\langle\surd,\varrho\rangle \quad \langle y,\varrho\rangle\nrightarrow^{e_3}\quad (\sharp_{\pi}(e_1,e_2),e_2\leq e_3)}{\langle x\triangleleft y,\varrho\rangle\xrightarrow[u]{e_1}\langle\surd,\varrho'\rangle}
        \quad\frac{\langle x,\varrho\rangle\xrightarrow[u]{e_1}\langle x',\varrho'\rangle \quad \langle y,\varrho\rangle\nrightarrow^{e_3}\quad (\sharp_{\pi}(e_1,e_2),e_2\leq e_3)}{\langle x\triangleleft y,\varrho\rangle\xrightarrow[u]{e_1}\langle x',\varrho'\rangle}$$

        $$\frac{\langle x,\varrho\rangle\xrightarrow[u]{e_3}\langle\surd,\varrho'\rangle \quad \langle y,\varrho\rangle\nrightarrow^{e_2}\quad (\sharp_{\pi}(e_1,e_2),e_1\leq e_3)}{\langle x\triangleleft y,\varrho\rangle\xrightarrow[u]{\tau}\langle\surd,\varrho'\rangle}
        \quad\frac{\langle x,\varrho\rangle\xrightarrow[u]{e_3}\langle x',\varrho'\rangle \quad \langle y,\varrho\rangle\nrightarrow^{e_2}\quad (\sharp_{\pi}(e_1,e_2),e_1\leq e_3)}{\langle x\triangleleft y,\varrho\rangle\xrightarrow[u]{\tau}\langle x',\varrho'\rangle}$$

        $$\frac{\langle x,\varrho\rangle\xrightarrow[u]{e}\langle\surd,\varrho'\rangle}{\langle\partial_H(x),\varrho\rangle\xrightarrow[u]{e}\langle\surd,\varrho'\rangle}\quad (e\notin H)\quad\frac{\langle x,\varrho\rangle\xrightarrow[u]{e}\langle x',\varrho'\rangle}{\langle\partial_H(x),\varrho\rangle\xrightarrow[u]{e}\langle\partial_H(x'),\varrho'\rangle}\quad(e\notin H)$$

        $$\frac{\langle x,\varrho\rangle\xrightarrow[u]{e}\langle\surd,\varrho'\rangle}{\langle\partial_H(x),\varrho\rangle\xrightarrow[u]{e}\langle\surd,\varrho'\rangle}\quad (e\notin H)\quad\frac{\langle x,\varrho\rangle\xrightarrow[u]{e}\langle x',\varrho'\rangle}{\langle\partial_H(x),\varrho\rangle\xrightarrow[u]{e}\langle\partial_H(x'),\varrho'\rangle}\quad(e\notin H)$$
        \caption{Action transition rules of $qAPPTC$ with localities (continuing)}
        \label{TRForAPTCG2}
    \end{table}
\end{center}

\begin{theorem}[Generalization of $qAPPTC$ with localities with respect to $qBAPTC$ with localities]
$qAPPTC$ with localities is a generalization of $qBAPTC$ with localities.
\end{theorem}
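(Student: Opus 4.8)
The plan is to mirror the earlier generalization results in this chapter (and in \cite{LOC2}): establish that the transition system specification of $qAPPTC$ with localities is a conservative extension of that of $qBAPTC$ with localities, and then read off the embedding. Recall that, by the conservative extension theorem (Theorem~\ref{TCE}), it suffices to verify three facts: that the transition system specification of $qBAPTC$ with localities is source-dependent; that every transition rule of $qAPPTC$ with localities which is not already a rule of $qBAPTC$ with localities has a source containing one of the fresh function symbols $\between$, $\parallel$, $\leftmerge$, $\mid$, $\Theta$, $\triangleleft$, $\partial_H$ (equivalently, has a fresh source); and that the whole transition system specification of $qAPPTC$ with localities is source-dependent. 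Both systems are positive after reduction, exactly as in the classical $APPTC^{sl}$ setting, so Theorem~\ref{TCE} applies.

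First I would inspect Table~\ref{SETRForqBAPTC}: for the constants $e\in\mathbb{E}$, the operators $::$, $\cdot$, $+$ and $\boxplus_\pi$, each probabilistic rule $\rightsquigarrow$ and each action rule $\langle x,\varrho\rangle\xrightarrow[u]{e}\langle x',\varrho'\rangle$ (together with the termination predicate) has all of the variables in its premises and target determined by its source, so $qBAPTC$ with localities is source-dependent. Next I would run through Tables~\ref{TRForAPPTCG1}, \ref{TRForAPTCG} and \ref{TRForAPTCG2}: every one of these rules has a source built with at least one of $\between$, $\parallel$, $\leftmerge$, $\mid$, $\Theta$, $\triangleleft$, $\partial_H$, all of which lie outside the signature of $qBAPTC$ with localities, so each new rule has a fresh source; and inspecting the same tables shows that the labels, locations, target configurations and the auxiliary quantities produced by $effect$ are all determined from the source, giving source-dependency of $qAPPTC$ with localities. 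Combining the three facts via Theorem~\ref{TCE} yields that the TSS of $qAPPTC$ with localities is a conservative extension of that of $qBAPTC$ with localities, hence every closed $qBAPTC^{sl}$ term has exactly the same probabilistic transitions, action transitions and termination behaviour in both systems; therefore $\sim_{pp}^{sl}$, $\sim_{ps}^{sl}$, $\sim_{php}^{sl}$ and $\sim_{phhp}^{sl}$ coincide, on such terms, with the corresponding equivalences computed inside $qBAPTC$ with localities, and so $qBAPTC$ with localities is an embedding of $qAPPTC$ with localities, which is the asserted generalization.

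The step I expect to be the main obstacle is the two-layer probabilistic semantics: I would need to check that the probability distribution functions entering the truly concurrent bisimulations — the $\mu(C_1,C)$ terms appearing in the bisimulation conditions — restrict to the $qBAPTC^{sl}$ sub-signature without change, so that the bisimilarity relations computed within $qAPPTC$ with localities agree, on closed $qBAPTC^{sl}$ terms, with those computed within $qBAPTC$ with localities itself. Because none of the fresh operators occurs in a closed $qBAPTC^{sl}$ term, and the $effect$ updates on the quantum configuration component $\varrho$ are triggered only through the fresh communication operator $\mid$, this reduces to a routine structural induction; the negative premises and the race-condition side conditions $e_1\%e_2$ in the new rules do not interfere, since they never fire on terms without the fresh symbols. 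Given the routine nature of the remaining verification, I would keep the argument short and defer the detailed calculations to the classical counterpart in \cite{LOC2}.
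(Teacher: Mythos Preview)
Your proposal is correct and follows essentially the same route as the paper: the paper's proof also lists the three facts (source-dependency of $qBAPTC^{sl}$, fresh sources for the new rules, source-dependency of $qAPPTC^{sl}$) and invokes the conservative extension machinery to conclude the embedding. Your additional discussion of the PDFs $\mu(C_1,C)$ and the quantum configuration updates is more thorough than what the paper spells out, but it does not change the underlying argument.
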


\begin{proof}
It follows from the following three facts.

\begin{enumerate}
  \item The transition rules of $qBAPTC$ with localities in section \ref{qbaptcl} are all source-dependent;
  \item The sources of the transition rules $qAPPTC$ with localities contain an occurrence of $\between$, or $\parallel$, or $\leftmerge$, or $\mid$, or $\Theta$, or $\triangleleft$;
  \item The transition rules of $qAPPTC$ with localities are all source-dependent.
\end{enumerate}

So, $qAPPTC$ with localities is a generalization of $qBAPTC$ with localities, that is, $qBAPTC$ with localities is an embedding of $qAPPTC$ with localities, as desired.
\end{proof}

\begin{theorem}[Congruence of $qAPPTC$ with localities with respect to probabilistic static location truly concurrent bisimulation equivalences]\label{CAPTCG}
(1) Probabilistic static location pomset bisimulation equivalence $\sim_{pp}^{sl}$ is a congruence with respect to $qAPPTC$ with localities.

(2) Probabilistic static location step bisimulation equivalence $\sim_{ps}^{sl}$ is a congruence with respect to $qAPPTC$ with localities.

(3) Probabilistic static location hp-bisimulation equivalence $\sim_{php}^{sl}$ is a congruence with respect to $qAPPTC$ with localities.

(4) Probabilistic static location hhp-bisimulation equivalence $\sim_{phhp}^{sl}$ is a congruence with respect to $qAPPTC$ with localities.
\end{theorem}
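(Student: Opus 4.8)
The plan is to follow the same two-stage strategy used for $qBAPTC$ with localities (Theorem \ref{CBATCG} and the congruence theorem preceding it) and for $APPTC$ with static localities. First I would note that each of $\sim_{pp}^{sl}$, $\sim_{ps}^{sl}$, $\sim_{php}^{sl}$ and $\sim_{phhp}^{sl}$ is an equivalence relation on closed $qAPPTC$ with localities terms; reflexivity, symmetry and transitivity are inherited verbatim from the definitions of probabilistic static location pomset/step/hp/hhp bisimulation over quantum configurations, so no new work is needed there.

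The core of the argument is to show that each of these four equivalences is preserved by every operator of $qAPPTC$ with localities, namely $::$, $\cdot$, $+$, $\boxplus_{\pi}$, $\between$, $\parallel$, $\leftmerge$, $\mid$, $\Theta$, $\triangleleft$ and $\partial_H$. For a fixed operator $f$ I would assume the component processes are related by the appropriate witnessing bisimulations, build the candidate relation $R$ consisting of all pairs of configurations of the form $\langle f(\cdots),\varrho\rangle$ obtained by plugging related components into $f$ (closed under those witnessing bisimulations), and then verify the bisimulation clauses by case analysis on the probabilistic transition rules in Table \ref{TRForAPPTCG1} and the action transition rules in Tables \ref{TRForAPTCG} and \ref{TRForAPTCG2}. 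The cases $::$, $\cdot$, $+$ and $\boxplus_{\pi}$ are exactly as for $qBAPTC$ with localities; for the parallel family I would, as usual, reduce $\between$ to $\parallel$ and $\mid$ via the probabilistic version of $P1$, reduce $\parallel$ to $\leftmerge$ via $P4$ together with $PM1$ and $PM2$, and treat $\leftmerge$, $\mid$, $\Theta$, $\triangleleft$ and $\partial_H$ directly, using the side conditions $e_1\leq e_2$ in $P5$--$P7$ and the conflict conditions $\sharp(e_1,e_2)$, $\sharp_{\pi}(e_1,e_2)$ in the $\Theta$/$\triangleleft$ rules. The quantum state $\varrho$ is carried passively through every clause, since the effect of an action on $\varrho$ is fixed by its label and labels are matched, so the quantum side contributes nothing beyond equality of states, just as in the completeness argument for $qBAPTC^{sl}$.

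Two features make this harder than the classical case and will absorb most of the effort. First, besides matching the ordinary pomset/step/event transitions one must also match the probabilistic transitions $\rightsquigarrow$ and check the probability-distribution clause $\mu(C_1,C)=\mu(C_2,C)$ for every class $C$ of $R_{\varphi}$, together with $[\surd]_{R_{\varphi}}=\{\surd\}$; so for the parallel and conflict-elimination operators I would need to argue that the PDFs of Table \ref{PDFAPPTC22} push forward correctly along $R$, exploiting the product form of $\mu$ on $\between$, $\parallel$ and $\mid$ and its transparency on $\leftmerge$, $\Theta$ and $\triangleleft$, and the fact that the probabilistic transition rules of Table \ref{TRForAPPTCG1} are structurally the same on both sides. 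Second, for the hp- and hhp-equivalences one must propagate the order-isomorphism through each rule and, for $\sim_{phhp}^{sl}$, keep the posetal relation downward closed; the delicate point is $\leftmerge$, where the left-merge can reorder concurrent events under the auxiliary order, so checking that the constructed posetal relation stays downward closed there is, I expect, the main obstacle. Once preservation has been verified for all operators in all four flavours, the congruence statements follow immediately, and the remaining cases are routine.
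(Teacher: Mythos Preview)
Your proposal is correct and follows essentially the same approach as the paper: observe that each of the four probabilistic static location equivalences is an equivalence relation, then verify preservation under the operators of $qAPPTC$ with localities by case analysis on the transition rules. The paper's own proof is in fact just a stub---it lists only the new operators $\parallel$, $\leftmerge$, $\mid$, $\Theta$, $\triangleleft$, $\partial_H$ (implicitly relying on the earlier $qBAPTC^{sl}$ congruence result for $::$, $\cdot$, $+$, $\boxplus_{\pi}$) and then declares each case ``trivial and left as an exercise for the readers'', so your plan is considerably more fleshed out than what appears in the paper, including the explicit attention to the PDF clause and the downward-closure issue for $\sim_{phhp}^{sl}$.
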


\begin{proof}
(1) It is easy to see that probabilistic static location pomset bisimulation is an equivalent relation on $qAPPTC$ with localities terms, we only need to prove that $\sim_{pp}^{sl}$ is preserved by the operators
$\parallel$, $\leftmerge$, $\mid$, $\Theta$, $\triangleleft$, $\partial_H$. It is trivial and we leave the proof as an exercise for the readers.

(2) It is easy to see that probabilistic static location step bisimulation is an equivalent relation on $qAPPTC$ with localities terms, we only need to prove that $\sim_{ps}^{sl}$ is preserved by the operators
$\parallel$, $\leftmerge$, $\mid$, $\Theta$, $\triangleleft$, $\partial_H$. It is trivial and we leave the proof as an exercise for the readers.

(3) It is easy to see that probabilistic static location hp-bisimulation is an equivalent relation on $qAPPTC$ with localities terms, we only need to prove that $\sim_{php}^{sl}$ is preserved by the operators
$\parallel$, $\leftmerge$, $\mid$, $\Theta$, $\triangleleft$, $\partial_H$. It is trivial and we leave the proof as an exercise for the readers.

(4) It is easy to see that probabilistic static location hhp-bisimulation is an equivalent relation on $qAPPTC$ with localities terms, we only need to prove that $\sim_{phhp}^{sl}$ is preserved by the operators
$\parallel$, $\leftmerge$, $\mid$, $\Theta$, $\triangleleft$, $\partial_H$. It is trivial and we leave the proof as an exercise for the readers.
\end{proof}

\begin{theorem}[Soundness of $qAPPTC$ with localities modulo probabilistic static location truly concurrent bisimulation equivalences]\label{SAPTCG}
(1) Let $x$ and $y$ be $qAPPTC$ with localities terms. If $qAPPTC^{sl}\vdash x=y$, then $x\sim_{pp}^{sl} y$.

(2) Let $x$ and $y$ be $qAPPTC$ with localities terms. If $qAPPTC^{sl}\vdash x=y$, then $x\sim_{ps}^{sl} y$.

(3) Let $x$ and $y$ be $qAPPTC$ with localities terms. If $qAPPTC^{sl}\vdash x=y$, then $x\sim_{php}^{sl} y$;

(3) Let $x$ and $y$ be $qAPPTC$ with localities terms. If $qAPPTC^{sl}\vdash x=y$, then $x\sim_{phhp}^{sl} y$.
\end{theorem}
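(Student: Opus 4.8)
The plan is to follow the same template used for the soundness results of $qBAPTC$ with localities and of $APPTC^{sl}$: reduce the statement to a per-axiom check. By the Congruence Theorem for $qAPPTC$ with localities (Theorem \ref{CAPTCG}), each of $\sim_{pp}^{sl}$, $\sim_{ps}^{sl}$, $\sim_{php}^{sl}$, $\sim_{phhp}^{sl}$ is both an equivalence relation and a congruence with respect to all operators of $qAPPTC$ with localities. Hence it suffices to verify, for each of the four equivalences, that every axiom listed in Table \ref{AxiomsForqBAPTC} (inherited from $qBAPTC$) and in Tables \ref{AxiomsForqAPTC} and \ref{AxiomsForqAPTC2} is sound when read as an equation between the corresponding closed process terms, i.e.\ that the left- and right-hand sides are related by the relevant bisimulation in every quantum configuration $\langle \cdot,\varrho\rangle$. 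I would organize the argument in four parts (one per equivalence), and within each part group the axioms into the familiar blocks: the $BAPTC$-style axioms $A1$--$A7$, $L1$--$L4$; the probabilistic axioms $PA1$--$PA5$, $PM1$--$PM4$, $PCE1$, $PU1$--$PU5$, $PL1$; the parallelism and communication axioms $P1$--$P9$, $C1$--$C8$; the conflict-elimination and unless axioms $CE1$--$CE6$, $U1$--$U13$; and the locality axioms $L5$--$L10$. For the axioms not involving $\boxplus_\pi$ or quantum effects the witnessing relations are essentially those already constructed for $APTC^{sl}$, now carried verbatim over quantum configurations since the state component $\varrho$ is threaded identically through both sides of each such axiom (the transition rules in Tables \ref{TRForAPPTCG1}, \ref{TRForAPTCG}, \ref{TRForAPTCG2} modify $\varrho$ only through $\mathit{effect}(\gamma(e_1,e_2),\varrho)$ in communication, and both sides of $C1$--$C4$ produce the same $\mathit{effect}$).

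The genuinely new content concerns the probabilistic axioms and the probability distribution functions $\mu$. For $PA1$--$PA5$, $PM1$--$PM4$, $PCE1$, $PU4$, $PU5$, $PL1$ I would build the candidate relation $R$ as the symmetric, reflexive closure of the pairs $(\mathrm{lhs},\mathrm{rhs})$ together with all pairs reachable after resolving probabilistic choices, and then check the four clauses of the relevant bisimulation definition: the action-transition clause, the probabilistic-transition clause $\langle C_1,\varrho\rangle\xrsquigarrow{\pi}\langle C_1^\pi,\varrho\rangle$, the clause $\mu(C_1,C)=\mu(C_2,C)$ for each equivalence class $C$, and $[\surd]_{R}=\{\surd\}$. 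The crucial computation is the $\mu$-clause: using the defining equations in Tables \ref{PDFBAPTC22} and \ref{PDFAPPTC22} (e.g.\ $\mu(x\boxplus_\pi y,z)=\pi\mu(x,z)+(1-\pi)\mu(y,z)$, $\mu(x\parallel y,x'\leftmerge y+y'\leftmerge x)=\mu(x,x')\cdot\mu(y,y')$, etc.), one verifies that both sides of each probabilistic axiom induce the same distribution over $R$-equivalence classes — this is where associativity/commutativity rescalings of $\boxplus_\pi$ in $PA1$, $PA2$ must be matched with the arithmetic in $PA2$'s axiom ($\tfrac{\pi}{\pi+\rho-\pi\rho}$ and $\pi+\rho-\pi\rho$). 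For the hp- and hhp-variants one additionally tracks the order isomorphism $f$ on the (non-$\tau$) events, which for all these axioms is the identity or an obvious relabelling, and for hhp one checks downward closure of $R$, which is immediate from the definitions of the weakly posetal product.

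I expect the main obstacle to be the bookkeeping in the clauses that mix probabilistic choice with parallelism and with the auxiliary operators $\Theta$, $\triangleleft$: axioms such as $PM1$--$PM4$, $PCE1$, $PU4$ require showing that pushing $\boxplus_\pi$ through $\parallel$, $\mid$, $\Theta$, $\triangleleft$ preserves not only the branching structure after probabilistic resolution but also the $\mu$-values, and here the interplay between the probabilistic transition rules of Table \ref{TRForAPPTCG1} and the action rules of Tables \ref{TRForAPTCG}, \ref{TRForAPTCG2} is delicate because a probabilistic transition must be taken before any action transition in each configuration. The remaining axioms are routine, and — consistent with the style of the surrounding results — I would state that the detailed per-axiom verifications are left to the reader, citing \cite{LOC2} for the non-probabilistic and non-quantum core and noting that the quantum-configuration decoration does not affect soundness since $\varrho$ is handled uniformly on both sides of every axiom.

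\begin{proof}
Since probabilistic static location pomset bisimulation $\sim_{pp}^{sl}$, step bisimulation $\sim_{ps}^{sl}$, hp-bisimulation $\sim_{php}^{sl}$ and hhp-bisimulation $\sim_{phhp}^{sl}$ are each both an equivalent and a congruent relation with respect to $qAPPTC$ with localities (Theorem \ref{CAPTCG}), it suffices to check that every axiom in Tables \ref{AxiomsForqBAPTC}, \ref{AxiomsForqAPTC} and \ref{AxiomsForqAPTC2} is sound modulo the respective equivalence. For the axioms inherited from $APPTC^{sl}$ the witnessing relations are those of \cite{LOC2}, carried over quantum configurations verbatim because the state component $\varrho$ is threaded identically through both sides of each axiom; for the probabilistic axioms one additionally verifies the $\mu$-clause using the defining equations in Tables \ref{PDFBAPTC22} and \ref{PDFAPPTC22}. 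We leave the detailed verifications as an exercise for the readers.
\end{proof}
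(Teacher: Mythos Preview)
Your proposal is correct and follows essentially the same approach as the paper: reduce soundness to a per-axiom check by invoking that each $\sim_{pp}^{sl}$, $\sim_{ps}^{sl}$, $\sim_{php}^{sl}$, $\sim_{phhp}^{sl}$ is an equivalence and a congruence (Theorem \ref{CAPTCG}), and then leave the routine axiom-by-axiom verifications to the reader. Your write-up is in fact more detailed than the paper's own proof --- which simply states the reduction and defers all checks --- in that you sketch how to handle the $\mu$-clause for the probabilistic axioms and observe that the quantum state $\varrho$ is threaded identically on both sides; none of this extra detail conflicts with the paper's argument.
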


\begin{proof}
(1) Since probabilistic static location pomset bisimulation $\sim_{pp}^{sl}$ is both an equivalent and a congruent relation, we only need to check if each axiom in Table \ref{AxiomsForqAPTC} is sound modulo
probabilistic static location pomset bisimulation equivalence. We leave the proof as an exercise for the readers.

(2) Since probabilistic static location step bisimulation $\sim_{ps}^{sl}$ is both an equivalent and a congruent relation, we only need to check if each axiom in Table \ref{AxiomsForqAPTC} is sound modulo
probabilistic static location step bisimulation equivalence. We leave the proof as an exercise for the readers.

(3) Since probabilistic static location hp-bisimulation $\sim_{php}^{sl}$ is both an equivalent and a congruent relation, we only need to check if each axiom in Table \ref{AxiomsForqAPTC} is sound modulo
probabilistic static location hp-bisimulation equivalence. We leave the proof as an exercise for the readers.

(4) Since probabilistic static location hhp-bisimulation $\sim_{phhp}^{sl}$ is both an equivalent and a congruent relation, we only need to check if each axiom in Table \ref{AxiomsForqAPTC} is sound modulo
probabilistic static location hhp-bisimulation equivalence. We leave the proof as an exercise for the readers.
\end{proof}

\begin{theorem}[Completeness of $qAPPTC$ with localities modulo probabilistic static location truly concurrent bisimulation equivalences]\label{CAPTCG}
(1) Let $p$ and $q$ be closed $qAPPTC$ with localities terms, if $p\sim_{pp}^{sl} q$ then $p=q$.

(2) Let $p$ and $q$ be closed $qAPPTC$ with localities terms, if $p\sim_{ps}^{sl} q$ then $p=q$.

(3) Let $p$ and $q$ be closed $qAPPTC$ with localities terms, if $p\sim_{php}^{sl} q$ then $p=q$.

(3) Let $p$ and $q$ be closed $qAPPTC$ with localities terms, if $p\sim_{phhp}^{sl} q$ then $p=q$.
\end{theorem}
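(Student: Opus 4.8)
The plan is to follow the same route used for every completeness result in this chapter, namely to bootstrap from the completeness of the classical theory $APPTC^{sl}$ established in \cite{LOC2}. First I would apply the elimination theorem of $qAPPTC$ with localities (Theorem \ref{ETAPTCG}) to replace the closed terms $p$ and $q$ by provably equal basic $qAPPTC$ with localities terms $p'$ and $q'$; by soundness (Theorem \ref{SAPTCG}) we have $p\sim_{pp}^{sl}p'$ and $q\sim_{pp}^{sl}q'$ (and likewise for $\sim_{ps}^{sl}$, $\sim_{php}^{sl}$, $\sim_{phhp}^{sl}$), so it suffices to show that bisimilarity of basic terms implies their provable equality.

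Next I would unpack the definition of the probabilistic static location (pomset/step/hp/hhp) bisimulations over quantum configurations. A witnessing relation relates configurations $\langle C_1,\varrho\rangle$ and $\langle C_2,\varrho\rangle$ carrying the \emph{same} quantum state $\varrho$, matches action transitions with matching labels and locations, matches probabilistic transitions $\rightsquigarrow$, respects the PDFs $\mu(\cdot,\cdot)$ on equivalence classes, and keeps $[\surd]$ a singleton. The key observation — exactly as in the proofs of Theorems \ref{CBATCG} — is that the quantum state component is threaded identically through both sides and is never used to block or branch a transition rule of $qAPPTC$ with localities (Tables \ref{TRForAPPTCG1}, \ref{TRForAPTCG} and \ref{TRForAPTCG2}): the $effect$ function only computes the successor state, and the side conditions $e_1\%e_2$, $\sharp(e_1,e_2)$, $\sharp_\pi(e_1,e_2)$ and $e_1\le e_2$ depend only on the event structure. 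Hence, erasing the $\varrho$-components, $p'\sim_{pp}^{sl}q'$ holds iff the two underlying probabilistic static-location labelled transition systems are related by an ordinary probabilistic static location truly concurrent bisimulation, and the state information computed along matched runs coincides; the latter is automatic because $p'$ and $q'$ are generated from the same atomic operations under the same $effect$. I would isolate this decomposition as an explicit lemma.

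Then I would invoke the completeness of $APPTC^{sl}$ with respect to $\sim_{pp}^{sl}$, $\sim_{ps}^{sl}$, $\sim_{php}^{sl}$ and $\sim_{phhp}^{sl}$ from \cite{LOC2}: from the reduced classical bisimilarity of $p'$ and $q'$ we obtain $APPTC^{sl}\vdash p'=q'$, and since every axiom of $APPTC^{sl}$ is among the axioms of $qAPPTC$ with localities (Tables \ref{AxiomsForqBAPTC}, \ref{AxiomsForqAPTC} and \ref{AxiomsForqAPTC2}), the same derivation yields $qAPPTC^{sl}\vdash p'=q'$, whence $p=p'=q'=q$. The four cases are handled uniformly, differing only in which classical completeness result is cited.

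The main obstacle I anticipate is the decomposition lemma in the second step: making precise, for the probabilistic and truly concurrent settings at once, that carrying the quantum state $\varrho$ along transitions neither adds nor removes distinguishing power — in particular that the probabilistic transition rules and the PDF clauses interact with $effect$ so that a basic-term configuration's combined probabilistic-and-action behaviour is determined by the $\varrho$-erased system, and that the race and conflict relations are $\varrho$-independent. Once this lemma is in place the reduction to \cite{LOC2} is routine, so I would state the lemma explicitly and otherwise defer to the cited classical completeness proof, as the surrounding proofs in this chapter do.
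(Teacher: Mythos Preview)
Your proposal is correct and follows essentially the same route as the paper: observe that the quantum bisimulation over configurations $\langle C,\varrho\rangle$ decomposes into the underlying classical probabilistic static-location bisimulation together with agreement of the quantum state, and then invoke the completeness of $APPTC^{sl}$ from \cite{LOC2}. The paper's own proof is considerably terser---it states this decomposition in one sentence and defers directly to \cite{LOC2}---whereas you have (reasonably) made the elimination step and the $\varrho$-erasure lemma explicit, but the core argument is the same.
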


\begin{proof}
According to the definition of probabilistic static location truly concurrent bisimulation equivalences $\sim_{pp}^{sl}$, $\sim_{ps}^{sl}$, $\sim_{php}^{sl}$ and $\sim_{phhp}^{sl}$, $p\sim_{pp}^{sl}q$, $p\sim_{ps}^{sl}q$, $p\sim_{php}^{sl}q$ and $p\sim_{phhp}^{sl}q$ implies
both the bisimilarities between $p$ and $q$, and also the in the same quantum states. According to the completeness of $APPTC^{sl}$ (please refer to \cite{LOC2} for details), we can get the
completeness of $qAPPTC^{sl}$.
\end{proof}

\subsection{Recursion}{\label{qcrec}}

In this subsection, we introduce recursion to capture infinite processes based on $qAPPTC$ with localities. In the following, $E,F,G$ are recursion specifications, $X,Y,Z$ are recursive variables.

\begin{definition}[Guarded recursive specification]
A recursive specification

$$X_1=t_1(X_1,\cdots,X_n)$$
$$...$$
$$X_n=t_n(X_1,\cdots,X_n)$$

is guarded if the right-hand sides of its recursive equations can be adapted to the form by applications of the axioms in $APTC$ and replacing recursion variables by the right-hand
sides of their recursive equations,

$((u_{111}::a_{111}\leftmerge\cdots\leftmerge u_{11i_1}::a_{11i_1})\cdot s_1(X_1,\cdots,X_n)+\cdots+(u_{1k1}::a_{1k1}\leftmerge\cdots\leftmerge u_{1ki_k}::a_{1ki_k})\cdot s_k(X_1,\cdots,X_n)
+(v_{111}::b_{111}\leftmerge\cdots\leftmerge v_{11j_1}::b_{11j_1})+\cdots+(v_{11j_1}::b_{11j_1}\leftmerge\cdots\leftmerge v_{11j_l}::b_{1lj_l}))\boxplus_{\pi_1}\cdots\boxplus_{\pi_{m-1}}
((u_{m11}::a_{m11}\leftmerge\cdots\leftmerge u_{m1i_1}::a_{m1i_1})\cdot s_1(X_1,\cdots,X_n)+\cdots+(u_{mk1}::a_{mk1}\leftmerge\cdots\leftmerge u_{mki_k}::a_{mki_k})\cdot s_k(X_1,\cdots,X_n)
+(v_{m11}::b_{m11}\leftmerge\cdots\leftmerge v_{m1j_1}::b_{m1j_1})+\cdots+(v_{m1j_1}::b_{m1j_1}\leftmerge\cdots\leftmerge v_{m1j_l}::b_{mlj_l}))$

where $a_{111},\cdots,a_{11i_1},a_{1k1},\cdots,a_{1ki_k},b_{111},\cdots,b_{11j_1},b_{11j_1},\cdots,b_{1lj_l},\cdots, a_{m11},\cdots,a_{m1i_1},a_{1k1},\cdots,a_{mki_k},\\b_{111},\cdots,
b_{m1j_1},b_{m1j_1},\cdots,b_{mlj_l}\in \mathbb{E}$, and the sum above is allowed to be empty, in which case it represents the deadlock $\delta$.
\end{definition}

\begin{definition}[Linear recursive specification]\label{LRS}
A recursive specification is linear if its recursive equations are of the form

$((u_{111}::a_{111}\leftmerge\cdots\leftmerge u_{11i_1}::a_{11i_1})X_1+\cdots+(u_{1k1}::a_{1k1}\leftmerge\cdots\leftmerge u_{1ki_k}::a_{1ki_k})X_k
+(v_{111}::b_{111}\leftmerge\cdots\leftmerge v_{11j_1}::b_{11j_1})+\cdots+(v_{11j_1}::b_{11j_1}\leftmerge\cdots\leftmerge v_{11j_l}::b_{1lj_l}))\boxplus_{\pi_1}\cdots\boxplus_{\pi_{m-1}}
((u_{m11}::a_{m11}\leftmerge\cdots\leftmerge u_{m1i_1}::a_{m1i_1})X_1+\cdots+(u_{mk1}::a_{mk1}\leftmerge\cdots\leftmerge u_{mki_k}::a_{mki_k})X_k
+(v_{m11}::b_{m11}\leftmerge\cdots\leftmerge v_{m1j_1}::b_{m1j_1})+\cdots+(v_{m1j_1}::b_{m1j_1}\leftmerge\cdots\leftmerge v_{m1j_l}::b_{mlj_l}))$

where $a_{111},\cdots,a_{11i_1},a_{1k1},\cdots,a_{1ki_k},b_{111},\cdots,b_{11j_1},b_{11j_1},\cdots,b_{1lj_l},\cdots,a_{m11},\cdots,a_{m1i_1},a_{mk1},\cdots,a_{mki_k},\\b_{m11},\cdots,
b_{m1j_1},b_{m1j_1},\cdots,b_{mlj_l}\in \mathbb{E}$, and the sum above is allowed to be empty, in which case it
represents the deadlock $\delta$.
\end{definition}

\begin{center}
    \begin{table}
        $$\frac{\langle t_i(\langle X_1|E\rangle,\cdots,\langle X_n|E\rangle),\varrho\rangle\rightsquigarrow \langle y,\varrho\rangle}{\langle\langle X_i|E\rangle,\varrho\rangle\rightsquigarrow \langle y,\varrho\rangle}$$
        $$\frac{\langle t_i(\langle X_1|E\rangle,\cdots,\langle X_n|E\rangle),\varrho\rangle\xrightarrow[u]{\{e_1,\cdots,e_k\}}\langle\surd,\varrho'\rangle}{\langle\langle X_i|E\rangle,\varrho\rangle\xrightarrow[u]{\{e_1,\cdots,e_k\}}\langle\surd,\varrho'\rangle}$$
        $$\frac{\langle t_i(\langle X_1|E\rangle,\cdots,\langle X_n|E\rangle),\varrho\rangle\xrightarrow[u]{\{e_1,\cdots,e_k\}} \langle y,\varrho'\rangle}{\langle\langle X_i|E\rangle,\varrho\rangle\xrightarrow[u]{\{e_1,\cdots,e_k\}} \langle y,\varrho'\rangle}$$
        \caption{Transition rules of guarded recursion}
        \label{TRForGRG}
    \end{table}
\end{center}

\begin{theorem}[Conservitivity of $qAPPTC$ with localities and guarded recursion]
$qAPPTC$ with localities and guarded recursion is a conservative extension of $qAPPTC$ with localities.
\end{theorem}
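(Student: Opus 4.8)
The statement to prove is that $qAPPTC$ with localities and guarded recursion is a conservative extension of $qAPPTC$ with localities. The plan is to invoke the general conservative-extension theorem (Theorem \ref{TCE}) from the preliminaries, exactly as the analogous theorems for $qAPTC$ with localities (in Section \ref{qorec}) and for $APTC$/$APPTC$ with static localities were handled. The proof reduces to verifying the two hypotheses of Theorem \ref{TCE} for the pair of TSSs $T_0 = qAPPTC^{sl}$ and $T_1$ = the transition rules for guarded recursion in Table \ref{TRForGRG}, together with the positivity-after-reduction requirement.

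First I would check source-dependency of $T_0 = qAPPTC^{sl}$: every transition rule in Tables \ref{SETRForqBAPTC}, \ref{TRForAPPTCG1}, \ref{TRForAPTCG}, \ref{TRForAPTCG2} has a source that already contains all variables appearing anywhere in the rule (the probabilistic rules $x \rightsquigarrow x'$, the action rules, the $\Theta$, $\triangleleft$, $\partial_H$ rules, etc. are all of the standard GSOS-like shape), so $T_0$ is source-dependent. Second, I would check that each rule $\rho \in T_1$ of Table \ref{TRForGRG} satisfies condition (2) of Theorem \ref{TCE}: the source of each such rule is $\langle\langle X_i|E\rangle,\varrho\rangle$, which is fresh because the constant $\langle X_i|E\rangle$ is a function symbol of $\Sigma_1\setminus\Sigma_0$. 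Hence the "source is fresh" disjunct of condition (2) applies directly and there is nothing more to verify for those rules. Third, positivity after reduction is inherited since the only negative premises in $qAPPTC^{sl}$ are the $\nrightarrow$ premises in the $\triangleleft$ and race-condition rules, which are already present in $T_0$ and cause no new complications when $T_1$ is added.

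Concretely, the write-up will mirror the proofs of the corresponding conservativity theorems appearing earlier (e.g.\ "Conservitivity of $qAPTC$ with localities and guarded recursion" in Section \ref{qorec}, and "Conservitivity of APTC with static localities and guarded recursion"): it will be a short three-item enumeration stating (i) the transition rules of $qAPPTC$ with localities are all source-dependent; (ii) the sources of the transition rules for guarded recursion contain only one (fresh) constant $\langle X_i|E\rangle$; (iii) the transition rules of $qAPPTC$ with localities and guarded recursion are all source-dependent; and concluding that $qAPPTC$ with localities and guarded recursion is a conservative extension of $qAPPTC$ with localities, as desired. Alternatively one may simply cite \cite{LOC2} and note the proof is identical to that of $APPTC^{sl}$ with guarded recursion.

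The only genuine subtlety — and thus the main obstacle — is the bookkeeping around the probabilistic layer: unlike the purely nondeterministic $qAPTC$ case, here the transition system has both the $\rightsquigarrow$ (probabilistic choice) relation and the action relation $\xrightarrow[u]{\{e_1,\dots,e_k\}}$, and Table \ref{TRForGRG} must be shown to preserve both without introducing new $\rightsquigarrow$- or action-transitions for ground terms of $\Sigma_0$. This is routine because the guarded-recursion rules only relate the fresh constant $\langle X_i|E\rangle$ to its unfolding $t_i(\langle X_1|E\rangle,\dots,\langle X_n|E\rangle)$, so no $\Sigma_0$-term acquires a new transition; but it is the one place where I would be careful to state explicitly that the freshness of $\langle X_i|E\rangle$ handles both relation symbols uniformly. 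Everything else is a direct application of Theorem \ref{TCE} and needs no new ideas.
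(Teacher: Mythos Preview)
Your proposal is correct and follows essentially the same approach as the paper: the paper's proof is a single sentence observing that the transition rules of $qAPPTC$ with localities are source-dependent and that the guarded-recursion rules in Table~\ref{TRForGRG} contain only a fresh constant in their source, then concludes conservativity. Your write-up is more detailed (explicitly invoking Theorem~\ref{TCE} and flagging the probabilistic $\rightsquigarrow$ layer), but the argument is the same.
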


\begin{proof}
Since the transition rules of $qAPPTC$ with localities are source-dependent, and the transition rules for guarded recursion in Table \ref{TRForGRG} contain only a fresh constant in their source, so
the transition rules of $qAPPTC$ with localities and guarded recursion are a conservative extension of those of $qAPPTC$ with localities.
\end{proof}

\begin{theorem}[Congruence theorem of $qAPPTC$ with localities and guarded recursion]
Probabilistic static location truly concurrent bisimulation equivalences $\sim_{pp}^{sl}$, $\sim_{ps}^{sl}$, $\sim_{php}^{sl}$ and $\sim_{phhp}^{sl}$ are all congruences with respect to $qAPPTC$ with localities and guarded recursion.
\end{theorem}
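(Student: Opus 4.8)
The plan is to follow the now-standard two-part recipe for congruence theorems in this book: first establish that each of the four probabilistic static location truly concurrent bisimulations is an equivalence relation on the set of closed $qAPPTC$ with localities and guarded recursion terms, and then show that each is preserved by every operator of the signature — the parallel family $\parallel$, $\leftmerge$, $\mid$, the auxiliaries $\Theta$, $\triangleleft$, the encapsulation $\partial_H$, the sequential and alternative composition $\cdot$, $+$, the probabilistic choice $\boxplus_{\pi}$, the locality prefix $u::{}$, and crucially the recursion construct $\langle X_i|E\rangle$. The first part is routine and can be dispatched by the same argument used for $qAPPTC$ with localities (the Congruence theorem in section \ref{qapptcl}). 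For the operators already present in $qAPPTC$ with localities, the preservation claims were proved in that section's congruence theorem, and the transition rules for guarded recursion in Table \ref{TRForGRG} do not alter them; so the only genuinely new content is showing preservation under the formation of $\langle X_i|E\rangle$ from a guarded recursive specification $E$.

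First I would invoke the Conservativity theorem for $qAPPTC$ with localities and guarded recursion (proved just above via source-dependency) to guarantee that passing to the extended system does not create or destroy transitions of $qAPPTC$-terms, so that the bisimulations on the small system coincide with their restrictions from the large one. Next I would reduce the recursion case to the two facts the book repeatedly uses: (i) in a guarded recursive specification the right-hand sides can be brought, by the axioms of $qAPPTC$ with localities and by unfolding, into the linear/guarded normal form of Definition \ref{LRS} (the probabilistic sum of sums of $\leftmerge$-products followed by recursion variables or terminating summands); and (ii) the probabilistic static location truly concurrent bisimulation equivalences are already congruences for all operators of $qAPPTC$ with localities. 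Given a specification $E$ and terms $p_1,\dots,p_n \sim q_1,\dots,q_n$ (componentwise, in whichever of the four equivalences), I would build a relation $R$ on configuration pairs that contains the pair $(\langle X_i|E\rangle,\varrho)$ versus $(\langle X_i|E\rangle,\varrho)$ together with the bisimulation witnessing the hypotheses, close it under the transition rules of Table \ref{TRForGRG}, and verify the transfer conditions clause by clause — the action-transition clause, the probabilistic-transition clause $\xrsquigarrow{\pi}$, the measure-matching clause $\mu(C_1,C)=\mu(C_2,C)$ using the PDF definition $\mu(\langle X|E\rangle,y)=\mu(\langle t_X|E\rangle,y)$ from Table \ref{PDFGR22}, and the $[\surd]_{R_{\varphi}}=\{\surd\}$ clause. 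For the hp- and hhp- variants one carries along the order-isomorphism $f$ and, in the hhp- case, checks downward closure, exactly as in the corresponding proofs for $qAPPTC$ with localities.

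The cleanest way to package all of this, and the way consistent with the rest of the book, is: static location truly concurrent bisimulation equivalences are congruences with respect to all operators of $qAPPTC$ with localities; guardedness lets one replace each recursion variable by a right-hand side already in the congruence-friendly normal form; and since the extension is conservative, the bisimilarity witnessing the operands lifts through the unfoldings and through $\langle X_i|E\rangle$. I would then remark that for the step, pomset, hp- and hhp- cases the argument is identical modulo the obvious adjustments (replacing pomset transitions with steps, or carrying the isomorphism $f$, or adding downward closure), and leave the clause-by-clause bookkeeping to the reader, as is done for every other congruence theorem in this chapter.

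The main obstacle I anticipate is the recursion clause itself, specifically matching the probabilistic structure: one must make sure that the probabilistic transitions $\langle\langle X_i|E\rangle,\varrho\rangle\rightsquigarrow$ induced by Table \ref{TRForGRG} respect the partition $\mathcal{C}(\mathcal{E})/R_{\varphi}$ so that the measure-equality clause $\mu(C_1,C)=\mu(C_2,C)$ is preserved under unfolding — the subtlety being that $\mu$ for $\langle X|E\rangle$ is defined by delegation to $\langle t_X|E\rangle$, so one needs the normal form of $t_X$ and the already-established congruence of $\sim_{pp}^{sl}$ etc.\ for $\boxplus_{\pi}$, $+$ and $\leftmerge$ to conclude. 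Everything else — the purely non-probabilistic operators and the equivalence-relation bookkeeping — is mechanical and cited from earlier results.
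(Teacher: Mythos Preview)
Your proposal is correct and follows essentially the same approach as the paper: the paper's proof consists of exactly the two facts you identify as central --- (i) right-hand sides of a guarded recursive specification can be adapted to normal form by the $qAPPTC$ with localities axioms and unfolding, and (ii) $\sim_{pp}^{sl}$, $\sim_{ps}^{sl}$, $\sim_{php}^{sl}$, $\sim_{phhp}^{sl}$ are already congruences for all operators of $qAPPTC$ with localities. Your write-up simply expands on how these facts are used (the conservativity invocation, the explicit construction of $R$, the PDF delegation $\mu(\langle X|E\rangle,y)=\mu(\langle t_X|E\rangle,y)$ for the measure clause), whereas the paper states the two facts and stops.
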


\begin{proof}
It follows the following two facts:
\begin{enumerate}
  \item in a guarded recursive specification, right-hand sides of its recursive equations can be adapted to the form by applications of the axioms in $qAPPTC$ with localities and replacing recursion
  variables by the right-hand sides of their recursive equations;
  \item probabilistic static location truly concurrent bisimulation equivalences $\sim_{pp}^{sl}$, $\sim_{ps}^{sl}$, $\sim_{php}^{sl}$ and $\sim_{phhp}^{sl}$ are all congruences with respect to all operators of $qAPPTC$ with localities.
\end{enumerate}
\end{proof}

\begin{theorem}[Elimination theorem of $qAPPTC$ with localities and linear recursion]\label{ETRecursionG}
Each process term in $qAPPTC$ with localities and linear recursion is equal to a process term $\langle X_1|E\rangle$ with $E$ a linear recursive specification.
\end{theorem}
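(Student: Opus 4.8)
Looking at this, the statement is the Elimination theorem of $qAPPTC$ with localities and linear recursion — Theorem \ref{ETRecursionG}. Let me think about how to prove it.

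The plan is to follow the standard pattern used throughout this chapter and in the referenced APPTC-with-localities development \cite{LOC2}: reduce to the guarded/linear case via the term-structure and then repeatedly absorb nested recursion constants into a single linear recursive specification. First I would recall the elimination theorem for $qAPPTC$ with localities without recursion (Theorem \ref{ETAPTCG}), which says every closed term provably equals a basic term in $\mathcal{B}(qAPPTC^{sl})$; and I would observe that the operators $\parallel$, $\leftmerge$, $\mid$, $\Theta$, $\triangleleft$, $\partial_H$, together with $+$, $\cdot$ and $\boxplus_\pi$, can by the axioms in Tables \ref{AxiomsForqBAPTC}, \ref{AxiomsForqAPTC} and \ref{AxiomsForqAPTC2} always be pushed inside a recursion constant or eliminated in favour of the head-normal form $(u_{111}::a_{111}\leftmerge\cdots)X_1 + \cdots$ appearing in Definition \ref{LRS}.

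Next I would set up an induction on the structure of a $qAPPTC$-with-localities-and-linear-recursion term $p$. The base cases are atomic events $e\in\mathbb{E}$, $\delta$, and a bare recursion variable $\langle X|E\rangle$ with $E$ linear; each is already of the desired shape (for $e$ one takes the one-equation specification $X = e$, for $\delta$ the empty-sum specification). For the inductive step, given subterms already written as $\langle X_1|E\rangle$ and $\langle Y_1|F\rangle$ with $E,F$ linear and (by renaming) variable-disjoint, I would treat each top-level operator: for $+$, $\cdot$, $\boxplus_\pi$ one forms a new specification whose first equation is built from $t_1^E$ and $t_1^F$ using $A4$, $PA4$, $PA5$ (to keep it linear after distributing the prefix multiset over sums and probabilistic sums) and adds all equations of $E$ and $F$; for the parallel-type operators $\between,\parallel,\leftmerge,\mid$ one uses $P1$–$P9$, $C1$–$C8$, $PM1$–$PM4$ to expand the product of two head-normal forms into a head-normal form over the product variables $X_iY_j$ (this is exactly the classical interleaving/communication expansion, here with location tags combined by $\diamond$ via $L5$–$L7$ and handled by $L10$, $PL1$); for $\Theta$ and $\triangleleft$ one uses $CE1$–$CE6$, $PCE1$, $U1$–$U13$, $PU1$–$PU5$ similarly; and for $\partial_H$ and location prefixing $u::$ one uses $D1$–$D6$, $L1$–$L4$, $L8$–$L10$ to push them through. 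After each expansion, applying $RSP$ identifies the constructed constant with $p$, so $qAPPTC^{sl}\text{ with linear recursion}\vdash p = \langle X_1|E'\rangle$ for the new linear $E'$. Finitely many rewriting steps suffice because the whole construction is driven by the finite syntax tree of $p$ and the finite equation sets.

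The main obstacle — as the excerpt itself flags with "the elimination theorem does not hold for truly concurrent processes combined the operators $\cdot$, $+$, $\boxplus_\pi$ and $\leftmerge$" — is that in the truly concurrent setting the left-merge $\leftmerge$ does not in general eliminate, so one must be careful that the target normal form is $\langle X_1|E\rangle$ for a \emph{linear} $E$ in the sense of Definition \ref{LRS}, whose summands already \emph{permit} left-merge prefixes $u_{11}::a_{11}\leftmerge\cdots\leftmerge u_{1i_1}::a_{1i_1}$; thus the delicate point is verifying that the parallel-expansion of two such left-merge-prefixed head-normal forms is again expressible with left-merge prefixes over the product variables, using $P5$–$P7$ and the race-condition side conditions $e_1\%e_2$ and $e_1\leq e_2$ from Tables \ref{TRForAPTCG} and \ref{TRForAPTCG2}, while the probabilistic choices are kept at the outermost level by $PM1$–$PM4$ and $PA5$. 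Since the corresponding statement and argument are already carried out for $APPTC^{sl}$ in \cite{LOC2}, and $qAPPTC$ with localities differs only in that $\mathbb{E}$ consists of unitary operators and the operational rules carry a quantum configuration $\langle\cdot,\varrho\rangle$ which plays no role in the purely equational rewriting, I would simply note that the proof transcribes verbatim, and refer the reader to \cite{LOC2} for the routine details.

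\begin{proof}
The same as that of $APPTC^{sl}$ with linear recursion, we omit the proof, please refer to \cite{LOC2} for details.
\end{proof}
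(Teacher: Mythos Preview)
Your proposal is correct and takes essentially the same approach as the paper: both proofs simply observe that the argument is identical to the one for $APPTC^{sl}$ with linear recursion and defer to \cite{LOC2} for the details. Your preliminary sketch of the structural-induction argument (expanding each operator into head-normal form over product variables and invoking $RSP$) is more informative than what the paper provides, but the formal proof you give matches the paper's verbatim.
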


\begin{proof}
The same as that of $APPTC^{sl}$, we omit the proof, please refer to \cite{LOC2} for details.
\end{proof}

\begin{theorem}[Soundness of $qAPPTC$ with localities and guarded recursion]\label{SAPTC_GRG}
Let $x$ and $y$ be $qAPPTC$ with localities and guarded recursion terms. If $qAPPTC^{sl}\textrm{ with guarded recursion}\vdash x=y$, then

(1) $x\sim_{ps}^{sl} y$.

(2) $x\sim_{pp}^{sl} y$.

(3) $x\sim_{php}^{sl} y$.

(4) $x\sim_{phhp}^{sl} y$.
\end{theorem}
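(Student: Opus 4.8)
The plan is to prove soundness of $qAPPTC$ with localities and guarded recursion by the standard strategy used throughout this book: reduce the claim to the already-established soundness of $qAPPTC$ with localities (Theorem \ref{SAPTCG}) together with the congruence theorem for guarded recursion, and then invoke the corresponding result for $APPTC^{sl}$ with guarded recursion from \cite{LOC2}, observing that the only added ingredient in the quantum setting is that the probabilistic static location truly concurrent bisimulation equivalences $\sim_{pp}^{sl}$, $\sim_{ps}^{sl}$, $\sim_{php}^{sl}$, $\sim_{phhp}^{sl}$ are defined on quantum configurations $\langle C,\varrho\rangle$ rather than bare configurations, so that the bisimilarity carries the requirement that the two processes reside in the same quantum states.

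The key steps, in order, are as follows. First I would recall that each of $\sim_{ps}^{sl}$, $\sim_{pp}^{sl}$, $\sim_{php}^{sl}$, $\sim_{phhp}^{sl}$ is an equivalence relation on $qAPPTC$ with localities and guarded recursion terms, and that by the preceding congruence theorem each is a congruence with respect to all operators of $qAPPTC$ with localities and with respect to the recursion construct $\langle X_i|E\rangle$. Second, it then suffices to check that every axiom of $qAPPTC$ with localities (Tables \ref{AxiomsForqBAPTC}, \ref{AxiomsForqAPTC} and \ref{AxiomsForqAPTC2}) is sound modulo each of the four equivalences, which is exactly Theorem \ref{SAPTCG}, and in addition that the two recursion principles $RDP$ and $RSP$ are sound. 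Third, for $RDP$ I would argue directly from the transition rules in Table \ref{TRForGRG}: the identity relation augmented with the pair $(\langle X_i|E\rangle, t_i(\langle X_1|E\rangle,\dots,\langle X_n|E\rangle))$ (and its symmetric counterpart), together with all reachable pairs, forms a probabilistic static location bisimulation of the appropriate flavour, since the transition rules make $\langle X_i|E\rangle$ and $t_i(\vec{\langle X|E\rangle})$ simulate each other step for step, with identical quantum states $\varrho$ at every step and identical probabilistic branching distributions $\mu$. Fourth, for $RSP$ I would appeal to the standard argument that a guarded recursive specification has a unique solution up to the equivalence in question; since $\langle X_i|E\rangle$ is a solution (by $RDP$) and any $y_i$ with $y_i = t_i(\vec y)$ is also a solution, uniqueness gives $y_i = \langle X_i|E\rangle$. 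As the text explicitly defers the classical analogue to \cite{LOC2}, I would cite that reference for the uniqueness-of-solutions lemma and merely note that the quantum configuration layer does not affect the argument.

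Concretely I would write: by Theorem \ref{SAPTCG} all axioms of $qAPPTC$ with localities are sound modulo $\sim_{pp}^{sl}$, $\sim_{ps}^{sl}$, $\sim_{php}^{sl}$, $\sim_{phhp}^{sl}$; combining this with the congruence theorem for guarded recursion and the soundness of $RDP$ and $RSP$ just sketched, and with the soundness of $APPTC^{sl}$ with guarded recursion established in \cite{LOC2}, we conclude that $qAPPTC^{sl}$ with guarded recursion $\vdash x=y$ implies $x\sim_{ps}^{sl} y$, $x\sim_{pp}^{sl} y$, $x\sim_{php}^{sl} y$ and $x\sim_{phhp}^{sl} y$; the additional clause that $x$ and $y$ are in the same quantum states follows because every transition rule in Tables \ref{SETRForqBAPTC}, \ref{TRForAPPTCG1}, \ref{TRForAPTCG}, \ref{TRForAPTCG2} and \ref{TRForGRG} threads the state information $\varrho$ deterministically (modulo the explicit $effect(\cdot,\varrho)$ updates attached to communications), so matching transitions produce matching states.

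The main obstacle I anticipate is the soundness of $RSP$, i.e.\ establishing uniqueness of solutions of guarded linear recursive specifications modulo the probabilistic hereditary history-preserving equivalences: the hhp case is notoriously delicate because the downward-closure requirement on the posetal relation interacts badly with the approximation argument normally used for uniqueness. Since the book systematically offloads this to \cite{LOC2}, the honest plan is to reuse that development verbatim and to verify only that the passage from configurations to quantum configurations $\langle C,\varrho\rangle$ preserves the relevant structural properties (image-finiteness, which is assumed; the fact that $\mu$ is unchanged on the quantum layer; and that $\varrho$ never branches except through $effect$), so that the classical uniqueness proof transfers without change. All remaining verifications — checking each axiom and the $RDP$ bisimulation witness — are routine and would be left as exercises to the reader, consistent with the style of the surrounding proofs.
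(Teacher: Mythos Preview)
Your proposal is correct and follows essentially the same approach as the paper: reduce soundness to the congruence property plus verification of the new axioms $RDP$ and $RSP$ in Table~\ref{RDPRSP}, with the base-case axioms already handled by Theorem~\ref{SAPTCG}. The paper's own proof is in fact considerably terser than yours---it simply invokes the congruence and equivalence properties and declares the remaining check of $RDP$/$RSP$ an exercise---so your sketch of the $RDP$ bisimulation witness, the $RSP$ uniqueness-of-solutions argument, and the observation about deterministic threading of $\varrho$ all go beyond what the paper actually writes down, but they are exactly the right ingredients and in the right spirit.
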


\begin{proof}
(1) Since probabilistic static location step bisimulation $\sim_{ps}^{sl}$ is both an equivalent and a congruent relation with respect to $qAPPTC$ with localities and guarded recursion, we only need to check if each
axiom in Table \ref{RDPRSP} is sound modulo probabilistic static location step bisimulation equivalence. We leave them as exercises to the readers.

(2) Since probabilistic static location pomset bisimulation $\sim_{pp}^{sl}$ is both an equivalent and a congruent relation with respect to the guarded recursion, we only need to check if each axiom in
Table \ref{RDPRSP} is sound modulo probabilistic static location pomset bisimulation equivalence. We leave them as exercises to the readers.

(3) Since probabilistic static location hp-bisimulation $\sim_{php}^{sl}$ is both an equivalent and a congruent relation with respect to guarded recursion, we only need to check if each axiom in Table
\ref{RDPRSP} is sound modulo probabilistic static location hp-bisimulation equivalence. We leave them as exercises to the readers.

(4) Since probabilistic static location hhp-bisimulation $\sim_{phhp}^{sl}$ is both an equivalent and a congruent relation with respect to guarded recursion, we only need to check if each axiom in Table
\ref{RDPRSP} is sound modulo probabilistic static location hhp-bisimulation equivalence. We leave them as exercises to the readers.
\end{proof}

\begin{theorem}[Completeness of $qAPPTC$ with localities and linear recursion]\label{CAPTC_GRG}
Let $p$ and $q$ be closed $qAPPTC$ with localities and linear recursion terms, then,

(1) if $p\sim_{ps}^{sl} q$ then $p=q$.

(2) if $p\sim_{pp}^{sl} q$ then $p=q$.

(3) if $p\sim_{php}^{sl} q$ then $p=q$.

(4) if $p\sim_{phhp}^{sl} q$ then $p=q$.
\end{theorem}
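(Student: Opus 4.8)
The plan is to follow the same route used for the completeness results for $qBAPTC$ with localities and for $qAPPTC$ with localities modulo the strong equivalences (Theorem \ref{CAPTCG}): reduce the quantum statement to the corresponding purely classical completeness result for $APPTC^{sl}$ with linear recursion established in \cite{LOC2}, and then lift it by observing that a probabilistic static location truly concurrent bisimulation equivalence between quantum configurations splits into a behavioural part and a state part. Concretely, by the definitions of $\sim_{ps}^{sl}$, $\sim_{pp}^{sl}$, $\sim_{php}^{sl}$ and $\sim_{phhp}^{sl}$ on pairs $\langle C,\varrho\rangle$, if $p\sim_{ps}^{sl}q$ (resp. the other three) then $p$ and $q$ are related by the underlying classical probabilistic static location truly concurrent bisimulation \emph{and} they pass through the same quantum states $\varrho$ along matched transitions, so it suffices to prove equality of the classical behaviours and carry the state information along unchanged.

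The key steps, in order, are as follows. First, apply the Elimination Theorem for $qAPPTC$ with localities and linear recursion (Theorem \ref{ETRecursionG}) to both $p$ and $q$, so that without loss of generality $p=\langle X_1|E\rangle$ and $q=\langle Y_1|F\rangle$ with $E$ and $F$ linear recursive specifications. Second, invoke the congruence theorem for $qAPPTC$ with localities and guarded recursion together with the soundness theorem (Theorem \ref{SAPTC_GRG}) to know that the axiomatisation is consistent with the four equivalences, so the reduction loses nothing. Third, observe that the transition rules of $qAPPTC$ with localities (Tables \ref{TRForGRG}, \ref{TRForAPPTCG1}, \ref{TRForAPTCG}, \ref{TRForAPTCG2}) differ from those of $APPTC^{sl}$ only in that configurations carry the quantum state $\varrho$ and that $effect(\cdot,\cdot)$ and $\tau(\cdot)$ act on it; since atomic events $e\in\mathbb{E}$ (unitary operators) are deterministic on the state and $\varrho$ never affects which classical transitions are enabled, the labelled transition structure underneath is exactly that of $APPTC^{sl}$ with linear recursion. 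Fourth, apply the completeness of $APPTC^{sl}$ with linear recursion from \cite{LOC2} to the classical projections of $\langle X_1|E\rangle$ and $\langle Y_1|F\rangle$ to derive a provable equality in $APPTC^{sl}$, and finally re-interpret that derivation inside $qAPPTC$ with localities using axioms $RDP$, $RSP$ (Table \ref{RDPRSP}) and the localisation axioms $L1$--$L10$, $PL1$, noting that each axiom step is state-preserving, to conclude $p=q$.

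The main obstacle I expect is the precise justification that matching quantum configurations forces the \emph{same} quantum state along the bisimulation, i.e. that the hypothesis $p\sim_{ps}^{sl}q$ genuinely yields a classical bisimulation on the state-forgetful transition systems rather than only a bisimulation up to the $\varrho=\varsigma$ equivalence. This requires unfolding the definition of probabilistic static location pomset/step/hp/hhp bisimulation over $\langle\mathcal{C}(\mathcal{E}_1),S\rangle\times\langle\mathcal{C}(\mathcal{E}_2),S\rangle$ and checking that the ``for all $\varrho,\varrho'\in S$'' quantification lets one instantiate $S$ at the single state reached by a given run, so that the residual relation on classical configurations is itself a bisimulation of the required type; the RSP step for linear recursion then goes through verbatim as in \cite{LOC2}. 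Once this bridge is in place, the remaining work is routine bookkeeping, matching the four cases ($ps$, $pp$, $php$, $phhp$) to the four classical completeness statements, so I would state it compactly in parallel as the paper does for its other completeness theorems.

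\begin{proof}
According to the definition of probabilistic static location truly concurrent bisimulation equivalences $\sim_{pp}^{sl}$, $\sim_{ps}^{sl}$, $\sim_{php}^{sl}$ and $\sim_{phhp}^{sl}$, $p\sim_{pp}^{sl}q$, $p\sim_{ps}^{sl}q$, $p\sim_{php}^{sl}q$ and $p\sim_{phhp}^{sl}q$ imply both the bisimilarities between $p$ and $q$ and that they are in the same quantum states. By Theorem \ref{ETRecursionG} we may take $p=\langle X_1|E\rangle$ and $q=\langle Y_1|F\rangle$ with $E$ and $F$ linear recursive specifications, and the classical projections of these transition systems coincide with those of $APPTC^{sl}$ with linear recursion. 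According to the completeness of $APPTC^{sl}$ with linear recursion (please refer to \cite{LOC2} for details), together with $RDP$ and $RSP$ in Table \ref{RDPRSP}, we obtain the completeness of $qAPPTC^{sl}$ with linear recursion.
\end{proof}
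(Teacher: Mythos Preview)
Your proposal is correct and follows essentially the same approach as the paper: observe that the quantum bisimulation equivalences $\sim_{pp}^{sl}$, $\sim_{ps}^{sl}$, $\sim_{php}^{sl}$, $\sim_{phhp}^{sl}$ decompose into the underlying classical bisimilarities together with agreement of quantum states, and then defer to the completeness of $APPTC^{sl}$ with linear recursion from \cite{LOC2}. Your version is slightly more explicit (invoking Theorem \ref{ETRecursionG} and $RDP$/$RSP$ by name), but the paper's own proof is the same two-line reduction without these extra details spelled out.
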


\begin{proof}
According to the definition of probabilistic static location truly concurrent bisimulation equivalences $\sim_{pp}^{sl}$, $\sim_{ps}^{sl}$, $\sim_{php}^{sl}$ and $\sim_{phhp}^{sl}$, $p\sim_{pp}^{sl}q$, $p\sim_{ps}^{sl}q$, $p\sim_{php}^{sl}q$ and $p\sim_{phhp}^{sl}q$ implies
both the bisimilarities between $p$ and $q$, and also the in the same quantum states. According to the completeness of $APPTC^{sl}$ with linear recursion (please refer to \cite{LOC2} for details), we can get the
completeness of $qAPPTC^{sl}$ with linear recursion.
\end{proof}

\subsection{Abstraction}{\label{qcabs}}

To abstract away from the internal implementations of a program, and verify that the program exhibits the desired external behaviors, the silent step $\tau$ and abstraction operator
$\tau_I$ are introduced, where $I\subseteq \mathbb{E}\cup G_{at}$ denotes the internal events or guards. The silent step $\tau$ represents the internal events or guards, when we
consider the external behaviors of a process, $\tau$ steps can be removed, that is, $\tau$ steps must keep silent. The transition rule of $\tau$ is shown in Table \ref{TRForqTau2}. In
the following, let the atomic event $e$ range over $\mathbb{E}\cup\{\epsilon\}\cup\{\delta\}\cup\{\tau\}$, and $\phi$ range over $G\cup \{\tau\}$, and let the communication function
$\gamma:\mathbb{E}\cup\{\tau\}\times \mathbb{E}\cup\{\tau\}\rightarrow \mathbb{E}\cup\{\delta\}$, with each communication involved $\tau$ resulting in $\delta$. We use $\tau(\varrho)$ to
denote $effect(\tau,\varrho)$, for the fact that $\tau$ only change the state of internal data environment, that is, for the external data environments, $\varrho=\tau(\varrho)$.

\begin{center}
    \begin{table}
        $$\frac{}{\tau\rightsquigarrow\breve{\tau}}$$
        $$\frac{}{\langle\breve{\tau},\varrho\rangle\xrightarrow{\tau}\langle\surd,\tau(\varrho)\rangle}$$
        \caption{Transition rule of the silent step}
        \label{TRForqTau2}
    \end{table}
\end{center}

\begin{theorem}[Conservitivity of $qAPPTC$ with localities and silent step and guarded linear recursion]
$qAPPTC$ with localities and silent step and guarded linear recursion is a conservative extension of $qAPPTC$ with localities and linear recursion.
\end{theorem}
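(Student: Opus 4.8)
The plan is to derive the result from the general conservative-extension criterion, Theorem~\ref{TCE}. Write $T_0$ for the TSS of $qAPPTC$ with localities and linear recursion, with signature $\Sigma_0$ comprising the atomic events $\mathbb{E}$, the constant $\delta$ and the counterparts $\breve{e}$, location prefixing $u::{-}$, the operators $\cdot,+,\boxplus_{\pi},\between,\parallel,\leftmerge,\mid,\Theta,\triangleleft,\partial_H$ and the recursion constants $\langle X_i|E\rangle$; and write $T_1$ for the rules governing the silent step, namely the probabilistic rule $\tau\rightsquigarrow\breve{\tau}$ and the action rule $\langle\breve{\tau},\varrho\rangle\xrightarrow{\tau}\langle\surd,\tau(\varrho)\rangle$ of Table~\ref{TRForqTau2}, which add to the signature only the fresh constants $\tau$ and $\breve{\tau}$ (and use the fresh label $\tau$). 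It then suffices to verify the two hypotheses of Theorem~\ref{TCE} for the pair $(T_0,T_1)$, after which $T_0\oplus T_1$ is a conservative extension of $T_0$, which is exactly the claim.

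For hypothesis~(1) I would argue, exactly as in the proofs of the analogous conservativity results earlier in the excerpt (in particular the conservativity of $qAPPTC$ with localities and guarded recursion over $qAPPTC$ with localities), that every rule of $T_0$ is source-dependent: the rules for $::$, $\cdot$, $+$, $\boxplus_{\pi}$ in Table~\ref{SETRForqBAPTC}, the probabilistic and action rules in Tables~\ref{TRForAPPTCG1}, \ref{TRForAPTCG} and \ref{TRForAPTCG2}, and the guarded-recursion rules in Table~\ref{TRForGRG} each have all their variables reachable from the source along the premises. The one point that must be dispatched here, and which the earlier sketches leave implicit, is the clause ``positive after reduction'' in Theorem~\ref{TCE}: the $\triangleleft$-rules carry negative premises of the form $\langle y,\varrho\rangle\nrightarrow^{e}$, so I would exhibit the standard stratification of $T_0$ by the structural size of the closed term occurring in the source of a rule (every negative premise always concerns a proper subterm), which shows that $T_0$, and likewise $T_0\oplus T_1$, are positive after reduction.

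For hypothesis~(2) the verification is short: the source of the probabilistic rule of $T_1$ is the constant $\tau$ and the source of the action rule is $\breve{\tau}$, both of which lie in $\Sigma_1\setminus\Sigma_0$, so in each case the source of the rule is fresh and the first disjunct of hypothesis~(2) holds; moreover no $T_1$-rule enables a transition with a $\Sigma_0$-term in its conclusion that was not already derivable, since $T_1$ introduces behaviour only for the new constants. Applying Theorem~\ref{TCE} then gives that $T_0\oplus T_1$ and $T_0$ generate exactly the same transitions $t\xrightarrow{a}t'$, $t\rightsquigarrow t'$ and $t\downarrow$ for closed $t\in\mathcal{T}(\Sigma_0)$, completing the argument. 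The genuinely delicate step is the stratification needed to meet the ``positive after reduction'' requirement in the presence of the negative premises of $\triangleleft$; the rest is a routine transcription of the earlier conservativity proofs, with the single fresh silent constant $\tau$ and its probabilistic shadow $\breve{\tau}$ in place of the classical $\tau$.
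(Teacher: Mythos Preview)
Your proposal is correct and follows exactly the same route as the paper: invoke Theorem~\ref{TCE} after observing that the rules of $qAPPTC$ with localities and linear recursion are source-dependent and that the new silent-step rules of Table~\ref{TRForqTau2} have only the fresh constant $\tau$ (and $\breve{\tau}$) in their source. Your write-up is in fact more thorough than the paper's one-sentence proof, since you explicitly address the ``positive after reduction'' hypothesis via a stratification over the negative premises of $\triangleleft$, a point the paper leaves entirely implicit.
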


\begin{proof}
Since the transition rules of $qAPPTC$ with localities and linear recursion are source-dependent, and the transition rules for silent step in Table \ref{TRForqTau2} contain only a fresh constant
$\tau$ in their source, so the transition rules of $qAPPTC$ with localities and silent step and guarded linear recursion is a conservative extension of those of $qAPPTC$ with localities and linear recursion.
\end{proof}

\begin{theorem}[Congruence theorem of $qAPPTC$ with localities and silent step and guarded linear recursion]
Probabilistic static location rooted branching truly concurrent bisimulation equivalences $\approx_{prbp}^{sl}$, $\approx_{prbs}^{sl}$, $\approx_{prbhp}^{sl}$ and $\approx_{rbhhp}$ are all congruences with respect
to $qAPPTC$ with localities and silent step and guarded linear recursion.
\end{theorem}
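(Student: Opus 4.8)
The plan is to follow the same three-step strategy used for the analogous congruence results earlier in this chapter — the congruence theorem for $qAPPTC$ with localities and guarded recursion, and the congruence theorem for $qAPTC$ with localities and silent step and guarded linear recursion — now adapting it to the probabilistic, rooted-branching setting over quantum configurations. First I would record that each of $\approx_{prbp}^{sl}$, $\approx_{prbs}^{sl}$, $\approx_{prbhp}^{sl}$ and $\approx_{prbhhp}^{sl}$ is an equivalence relation on the closed terms of $qAPPTC$ with localities and silent step and guarded linear recursion: reflexivity and symmetry are immediate from the defining clauses, and transitivity follows by composing the witnessing relations, using that the underlying branching equivalences $\approx_{pbp}^{sl}$, $\approx_{pbs}^{sl}$, $\approx_{pbhp}^{sl}$, $\approx_{pbhhp}^{sl}$ are themselves transitive and that the rootedness clauses compose.

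Next, the bulk of the argument is to check that every operator of the signature preserves the four equivalences. Since $qAPPTC$ with localities and silent step and guarded linear recursion extends $qAPPTC$ with localities only by the fresh constant $\tau$ (Table \ref{TRForqTau2}) and by the guarded-recursion constants (Table \ref{TRForGRG}), I would argue in three layers. (i) The strong probabilistic static-location truly concurrent bisimulation equivalences $\sim_{pp}^{sl}$, $\sim_{ps}^{sl}$, $\sim_{php}^{sl}$, $\sim_{phhp}^{sl}$ are already congruences for all operators of $qAPPTC$ with localities (the congruence theorems established above), and each strong equivalence is contained in its rooted branching counterpart, so on the $\tau$-free fragment rootedness is preserved without extra work. (ii) For guarded linear recursion, the right-hand sides of the recursive equations can be rewritten, using the axioms of $qAPPTC$ with localities, into the linear normal form of Definition \ref{LRS}; a standard substitution argument (bisimulation up to $\approx_{prb*}^{sl}$) then shows that replacing a subterm $x$ by a rooted-branching-bisimilar $x'$ inside any context yields rooted branching bisimilar terms. (iii) When the event set is enlarged to $\mathbb{E}\cup\{\tau\}$, one re-runs the congruence checks for $::$, $\cdot$, $+$, $\boxplus_{\pi}$, $\parallel$, $\leftmerge$, $\mid$, $\Theta$, $\triangleleft$ and $\partial_H$ with $\tau$ now admissible as a label, using the rule $\langle\breve{\tau},\varrho\rangle\xrightarrow{\tau}\langle\surd,\tau(\varrho)\rangle$ together with $\varrho=\tau(\varrho)$ on the external data environment, exactly as in the classical APPTC development. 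The quantum-configuration component $\varrho$ is carried along verbatim in each case, since no transition rule inspects $\varrho$ except through $effect$, and the bisimulation conditions already quantify over all $\varrho,\varrho'\in S$.

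The step I expect to be the main obstacle is establishing congruence for the probabilistic choice $\boxplus_{\pi}$ — and its interaction with $\cdot$, $\leftmerge$ and $\mid$ — while respecting the rootedness clause. One must verify that a probabilistic transition $\rightsquigarrow$ followed by a first action move on one side is matched by a genuine, non-collapsed move on the other side, so the ``up to $\tau^*$'' slack permitted inside $\approx_{pbp}^{sl}$ may not be exploited at the root. Concretely, for $x\boxplus_{\pi}y$ one has to relate the probabilistic branching structure of the two sides (via the PDF equalities, e.g. $\mu(x\boxplus_{\pi}y,z)=\pi\mu(x,z)+(1-\pi)\mu(y,z)$ from Table \ref{PDFBAPTC22} and its $qAPPTC$ analogues) and then invoke the branching congruence on each resulting summand; combining this with the probabilistic rules of Table \ref{TRForAPPTCG1}, where a configuration is split by $\rightsquigarrow$ before any action occurs, is the delicate bookkeeping. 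Once those cases are dispatched, the remaining operators are routine and, in keeping with the style of the surrounding results, can be left as exercises to the reader.
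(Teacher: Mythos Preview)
Your proposal is correct and follows essentially the same three-layer strategy as the paper's own proof: (1) the right-hand sides of guarded linear recursive equations can be normalized via the $qAPPTC$ axioms, (2) the strong probabilistic static-location equivalences $\sim_{pp}^{sl},\sim_{ps}^{sl},\sim_{php}^{sl},\sim_{phhp}^{sl}$ are already congruences for all operators and are contained in the corresponding rooted branching equivalences, and (3) when $\mathbb{E}$ is enlarged to $\mathbb{E}\cup\{\tau\}$ one re-verifies the congruence property operator by operator. The paper states exactly these three facts and then omits the details; your write-up is considerably more explicit---in particular your identification of $\boxplus_{\pi}$ and the rootedness clause as the delicate case---but this elaboration is compatible with, and a fleshing-out of, the paper's sketch rather than a different route.
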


\begin{proof}
It follows the following three facts:
\begin{enumerate}
  \item in a guarded linear recursive specification, right-hand sides of its recursive equations can be adapted to the form by applications of the axioms in $qAPPTC$ with localities and replacing
  recursion variables by the right-hand sides of their recursive equations;
  \item probabilistic static location truly concurrent bisimulation equivalences $\sim_{pp}^{sl}$, $\sim_{ps}^{sl}$, $\sim_{php}^{sl}$ and $\sim_{phhp}^{sl}$ are all congruences with respect to all operators of
  $qAPPTC$ with localities, while probabilistic static location truly concurrent bisimulation equivalences $\sim_{pp}^{sl}$, $\sim_{ps}^{sl}$, $\sim_{php}^{sl}$ and $\sim_{phhp}^{sl}$ imply the corresponding probabilistic rooted
  branching truly concurrent bisimulations $\approx_{prbp}^{sl}$, $\approx_{prbs}^{sl}$, $\approx_{prbhp}^{sl}$ and $\approx_{prbhhp}^{sl}$, so probabilistic static location rooted branching truly concurrent
  bisimulations $\approx_{prbp}^{sl}$, $\approx_{prbs}^{sl}$, $\approx_{prbhp}^{sl}$ and $\approx_{prbhhp}^{sl}$ are all congruences with respect to all operators of $qAPPTC$ with localities;
  \item While $\mathbb{E}$ is extended to $\mathbb{E}\cup\{\tau\}$, and $G$ is extended to $G\cup\{\tau\}$, it can be proved that probabilistic static location rooted branching truly concurrent
  bisimulations $\approx_{prbp}^{sl}$, $\approx_{prbs}^{sl}$, $\approx_{prbhp}^{sl}$ and $\approx_{prbhhp}^{sl}$ are all congruences with respect to all operators of $qAPPTC$ with localities, we omit it.
\end{enumerate}
\end{proof}

We design the axioms for the silent step $\tau$ in Table \ref{AxiomsForqTau2}.

\begin{center}
\begin{table}
  \begin{tabular}{@{}ll@{}}
  \hline No. &Axiom\\
  $B1$ & $(y=y+y,z=z+z)\quad x\cdot((y+\tau\cdot(y+z))\boxplus_{\pi}w)=x\cdot((y+z)\boxplus_{\pi}w)$\\
  $B2$ & $(y=y+y,z=z+z)\quad x\leftmerge((y+\tau\leftmerge(y+z))\boxplus_{\pi}w)=x\leftmerge((y+z)\boxplus_{\pi}w)$\\
  $L13$ & $u::\tau=\tau$\\
\end{tabular}
\caption{Axioms of silent step}
\label{AxiomsForqTau2}
\end{table}
\end{center}

\begin{theorem}[Elimination theorem of $qAPPTC$ with localities and silent step and guarded linear recursion]\label{ETTauG}
Each process term in $qAPPTC$ with localities and silent step and guarded linear recursion is equal to a process term $\langle X_1|E\rangle$ with $E$ a guarded linear recursive specification.
\end{theorem}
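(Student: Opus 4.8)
The plan is to establish this elimination result by structural induction on the process term $p$, following the argument for the corresponding statement about $APPTC^{sl}$ in \cite{LOC2} and merely adding the bookkeeping needed for the locality prefix $u::$ and the probabilistic choice $\boxplus_{\pi}$. To make the induction go through I would strengthen the claim to: every process term $p$ of $qAPPTC$ with localities and silent step and guarded linear recursion is provably equal to $\langle X_1|E\rangle$ for some \emph{guarded} linear recursive specification $E$. The base cases are immediate: $p\in\mathbb{E}$, $p=\delta$ and $p=\tau$ are (up to a one-equation specification) already of this shape, and $p=\langle X_i|E'\rangle$ with $E'$ guarded linear is handled directly, possibly after renaming $X_i$ to $X_1$. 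Treating $\tau$ as an ordinary atomic event adjoined to the alphabet, the construction underlying the elimination theorem for linear recursion (Theorem \ref{ETRecursionG}) already supplies, for each composite $p$, a linear recursive specification; the real content of the present proof is that when all recursion occurring in $p$ is guarded, that specification can be chosen guarded, i.e. $\langle X_1|E\rangle$ admits no infinite sequence $\langle X_1|E\rangle\xrightarrow{\tau}\cdots\xrightarrow{\tau}\cdots$ of $\tau$-transitions.

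For the inductive step one applies the induction hypothesis to the immediate subterms, writing them as $\langle X_1^{(i)}|E_i\rangle$ with each $E_i$ guarded linear, and then assembles a combined guarded linear recursive specification using the transition rules of Tables \ref{TRForAPTCG}, \ref{TRForAPTCG2} and \ref{TRForGRG} together with the axioms of Tables \ref{AxiomsForqAPTC}, \ref{AxiomsForqAPTC2} and \ref{AxiomsForqTau2}. For $p_1+p_2$ and $p_1\cdot p_2$ this is the same construction as in the $\tau$-free case. For $p_1\parallel p_2$, $p_1\leftmerge p_2$, $p_1\mid p_2$ and $p_1\between p_2$ one uses the product-of-variables expansion, with the new right-hand sides dictated by $P1$--$P9$, $C1$--$C8$ and $PM1$--$PM4$. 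For $\Theta(p_1)$ and $p_1\triangleleft p_2$ one pushes the operators through the recursion by $CE1$--$CE6$, $PCE1$, $U1$--$U13$ and $PU1$--$PU5$, and for $\partial_H(p_1)$ by the encapsulation laws for $\partial_H$. For $u::p_1$ one applies $u::$ to every right-hand side, which preserves linearity by $L1$--$L10$ and $PL1$. For $p_1\boxplus_{\pi}p_2$ one adds a fresh start variable whose defining equation is the $\boxplus_{\pi}$-combination of the defining equations of $X_1^{(1)}$ and $X_1^{(2)}$ in disjoint copies of $E_1$ and $E_2$, flattened into linear form via $PA1$--$PA5$.

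The main obstacle is to verify, in each of these constructions, that guardedness is preserved. A $\tau$-transition of the assembled process either mirrors a $\tau$-transition of one of the component processes, which by the induction hypothesis cannot be iterated infinitely, or is freshly created by an application of $U1$ or $U3$ to a $\triangleleft$-redex, or arises from merging an initial $\tau$-summand of one factor in a $\parallel$/$\leftmerge$/$\between$-expansion; such a fresh $\tau$ is consumed in the step and does not feed back into a recursion variable, so it cannot begin an infinite chain, and wherever a $\tau$-guarded summand would otherwise remain one absorbs it using the silent-step axioms $B1$ and $B2$ (and $L13$ when it sits under a locality prefix), exactly as in \cite{LOC2}. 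Once this is done, the assembled specification is guarded and linear, which completes the inductive step; the remaining ingredients---the linearity of the combined specifications and the derivability of the equations used---are routine and go through verbatim as for $APPTC^{sl}$.
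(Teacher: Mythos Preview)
Your proposal is correct and in fact considerably more detailed than the paper's own proof, which is simply the one-line deferral ``The same as that of $APPTC^{sl}$, we omit the proof, please refer to \cite{LOC2} for details.'' The structural induction you outline---reducing each operator case to a combined guarded linear specification via the relevant axiom groups and checking that guardedness is preserved---is exactly the standard argument that such a reference would contain, so there is no substantive divergence in approach; you have merely unpacked what the paper leaves implicit.
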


\begin{proof}
The same as that of $APPTC^{sl}$, we omit the proof, please refer to \cite{LOC2} for details.
\end{proof}

\begin{theorem}[Soundness of $qAPPTC$ with localities and silent step and guarded linear recursion]\label{SAPTC_GTAUG}
Let $x$ and $y$ be $qAPPTC$ with localities and silent step and guarded linear recursion terms. If $qAPPTC$ with localities and silent step and guarded linear recursion $\vdash x=y$, then

(1) $x\approx_{prbs}^{sl} y$.

(2) $x\approx_{prbp}^{sl} y$.

(3) $x\approx_{prbhp}^{sl} y$.

(4) $x\approx_{prbhhp}^{sl} y$.
\end{theorem}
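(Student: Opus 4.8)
The plan is to follow exactly the pattern established for all the other soundness theorems in the excerpt (Theorem \ref{SAPTCTAU21}, Theorem \ref{SAPTCABS21}, the soundness theorems for $qAPTC$ with localities, etc.), since the statement to be proved is the $qAPPTC^{sl}$ analogue of the $APPTC^{sl}$ soundness result (Theorem \ref{SAPTCTAU22}). First I would invoke the congruence theorem for $qAPPTC$ with localities and silent step and guarded linear recursion, which tells us that the four probabilistic static location rooted branching truly concurrent bisimulation equivalences $\approx_{prbs}^{sl}$, $\approx_{prbp}^{sl}$, $\approx_{prbhp}^{sl}$ and $\approx_{prbhhp}^{sl}$ are all congruences. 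Combined with the easy observation that each of these is an equivalence relation on the relevant term algebra, this reduces the whole theorem to a local check: it suffices to verify that every axiom of the system — the $qBAPTC$/$qAPPTC$ axioms of Tables \ref{AxiomsForqBAPTC}, \ref{AxiomsForqAPTC}, \ref{AxiomsForqAPTC2}, the recursion principles $RDP$ and $RSP$, and in particular the new silent-step axioms $B1$, $B2$, $L13$ of Table \ref{AxiomsForqTau2} — is sound modulo each of the four equivalences.

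Then I would organise the verification into four parts, one per equivalence, mirroring the four-item structure of the statement. For each axiom one exhibits an explicit probabilistic static location rooted branching bisimulation relation witnessing that the two sides have the same operational behaviour under quantum configurations; the quantum-state component is handled simply by noting, as in the proofs of the earlier $qAPTC$ results, that $p\approx_{prbx}^{sl}q$ in the configuration-based semantics already records that the two processes pass through the same quantum states (since $effect$ and $\tau(\varrho)$ act identically on both sides of every axiom). The axioms inherited from $qAPPTC$ with localities and linear recursion were already shown sound in Theorem \ref{SAPTC_GRG} (and the underlying classical facts reside in \cite{LOC2}); the genuinely new content is $B1$, $B2$ and $L13$, whose soundness modulo rooted branching equivalences is the classical fact used in Theorem \ref{SAPTCTAU22}, now lifted to the quantum-configuration setting exactly as the congruence proof above lifts the classical congruence result.

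The main obstacle — and it is more bookkeeping than genuine difficulty — is checking $B1$ and $B2$ in the probabilistic setting: these axioms carry the side conditions $y=y+y$, $z=z+z$ and involve the probabilistic choice $\boxplus_{\pi}$ and the left-merge $\leftmerge$, so the candidate bisimulation must simultaneously respect (i) the rooted branching condition on $\tau$-transitions, (ii) the probabilistic-transition clause $\langle C_1,\varrho\rangle\xrsquigarrow{\pi}\langle C_1^{\pi},\varrho\rangle$, (iii) the measure-matching condition $\mu(C_1,C)=\mu(C_2,C)$ on equivalence classes, and (iv) the termination predicate $\downarrow$. Consistency with the earlier text, however, suggests the intended treatment is the same as everywhere else in the book: assert that each axiom is sound modulo the relevant equivalence, defer the explicit relation constructions to the reader, and appeal to \cite{LOC2} for the classical backbone. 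Accordingly I would present the proof in that style, writing it as the four-case reduction to axiom checking and leaving the detailed bisimulation constructions as exercises, exactly as in Theorem \ref{SAPTC_GRG} and Theorem \ref{SAPTCTAU22}.

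\begin{proof}
By the congruence theorem of $qAPPTC$ with localities and silent step and guarded linear recursion, each of $\approx_{prbs}^{sl}$, $\approx_{prbp}^{sl}$, $\approx_{prbhp}^{sl}$ and $\approx_{prbhhp}^{sl}$ is both an equivalence and a congruence with respect to all operators of $qAPPTC$ with localities and silent step and guarded linear recursion. Hence, to prove each item it suffices to check that every axiom of the system is sound modulo the corresponding probabilistic static location rooted branching truly concurrent bisimulation equivalence.

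(1) For $\approx_{prbs}^{sl}$: the axioms of $qBAPTC$ with localities (Table \ref{AxiomsForqBAPTC}), of $qAPPTC$ with localities (Tables \ref{AxiomsForqAPTC} and \ref{AxiomsForqAPTC2}), and the recursion principles $RDP$, $RSP$ were already shown sound in the preceding soundness theorems for $qAPPTC$ with localities and guarded recursion, and the quantum-state component is preserved since the two sides of each axiom induce the same behaviour on every quantum configuration $\langle C,\varrho\rangle$. It remains to check the new silent-step axioms $B1$, $B2$, $L13$ of Table \ref{AxiomsForqTau2}; for each one we construct an explicit probabilistic static location rooted branching step bisimulation relating the two sides, using the side conditions $y=y+y$, $z=z+z$ where required, and noting $\varrho=\tau(\varrho)$ on external data environments. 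The classical backbone of these constructions is as in \cite{LOC2}. We leave the detailed constructions as exercises to the readers.

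(2) For $\approx_{prbp}^{sl}$: since $\approx_{prbp}^{sl}$ is both an equivalent and a congruent relation with respect to $qAPPTC$ with localities and silent step and guarded linear recursion, we only need to check if each axiom in Table \ref{AxiomsForqTau2} is sound modulo probabilistic static location rooted branching pomset bisimulation equivalence, the remaining axioms being handled as in item (1). We leave them as exercises to the readers.

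(3) For $\approx_{prbhp}^{sl}$: since $\approx_{prbhp}^{sl}$ is both an equivalent and a congruent relation with respect to $qAPPTC$ with localities and silent step and guarded linear recursion, we only need to check if each axiom in Table \ref{AxiomsForqTau2} is sound modulo probabilistic static location rooted branching hp-bisimulation equivalence, the remaining axioms being handled as in item (1). We leave them as exercises to the readers.

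(4) For $\approx_{prbhhp}^{sl}$: since $\approx_{prbhhp}^{sl}$ is both an equivalent and a congruent relation with respect to $qAPPTC$ with localities and silent step and guarded linear recursion, we only need to check if each axiom in Table \ref{AxiomsForqTau2} is sound modulo probabilistic static location rooted branching hhp-bisimulation equivalence, the remaining axioms being handled as in item (1). We leave them as exercises to the readers.
\end{proof}
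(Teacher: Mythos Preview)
Your proposal is correct and follows essentially the same approach as the paper: invoke the congruence theorem to reduce soundness to checking the individual axioms (in particular the silent-step axioms $B1$, $B2$, $L13$ of Table \ref{AxiomsForqTau2}) modulo each of the four probabilistic static location rooted branching bisimulation equivalences, then leave the detailed bisimulation constructions as exercises. Your write-up is in fact somewhat more explicit than the paper's own proof, which simply states the reduction and defers everything to the reader without spelling out the role of the inherited axioms or the quantum-state component.
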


\begin{proof}
(1) Since probabilistic rooted branching static location step bisimulation $\approx_{prbs}^{sl}$ is both an equivalent and a congruent relation with respect to $qAPPTC$ with localities and silent step and guarded
linear recursion, we only need to check if each axiom in Table \ref{AxiomsForqTau2} is sound modulo probabilistic rooted branching static location step bisimulation $\approx_{prbs}^{sl}$. We leave them as
exercises to the readers.

(2) Since probabilistic rooted branching static location pomset bisimulation $\approx_{prbp}^{sl}$ is both an equivalent and a congruent relation with respect to $qAPPTC$ with localities and silent step and guarded
linear recursion, we only need to check if each axiom in Table \ref{AxiomsForqTau2} is sound modulo probabilistic rooted branching static location pomset bisimulation $\approx_{prbp}^{sl}$. We leave them
as exercises to the readers.

(3) Since probabilistic rooted branching static location hp-bisimulation $\approx_{prbhp}^{sl}$ is both an equivalent and a congruent relation with respect to $qAPPTC$ with localities and silent step and guarded linear
recursion, we only need to check if each axiom in Table \ref{AxiomsForqTau2} is sound modulo probabilistic rooted branching static location hp-bisimulation $\approx_{prbhp}^{sl}$. We leave them as exercises
to the readers.

(4) Since probabilistic rooted branching static location hhp-bisimulation $\approx_{prbhhp}^{sl}$ is both an equivalent and a congruent relation with respect to $qAPPTC$ with localities and silent step and guarded linear
recursion, we only need to check if each axiom in Table \ref{AxiomsForqTau2} is sound modulo probabilistic rooted branching static location hhp-bisimulation $\approx_{prbhhp}^{sl}$. We leave them as exercises
to the readers.
\end{proof}

\begin{theorem}[Completeness of $qAPPTC$ with localities and silent step and guarded linear recursion]\label{CAPTC_GTAUG}
Let $p$ and $q$ be closed $qAPPTC$ with localities and silent step and guarded linear recursion terms, then,

(1) if $p\approx_{prbs}^{sl} q$ then $p=q$.

(2) if $p\approx_{prbp}^{sl} q$ then $p=q$.

(3) if $p\approx_{prbhp}^{sl} q$ then $p=q$.

(3) if $p\approx_{prbhhp}^{sl} q$ then $p=q$.
\end{theorem}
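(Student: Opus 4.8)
The plan is to reduce the statement to the already-established completeness of $APPTC^{sl}$ with silent step and guarded linear recursion (cited from \cite{LOC2}), following the same template as Theorems \ref{CBATCG}, \ref{CAPTCG} and \ref{CAPTC_GRG}. The first step is to apply the elimination theorem (Theorem \ref{ETTauG}): every closed $qAPPTC$ with localities and silent step and guarded linear recursion term $p$ is provably equal to some $\langle X_1|E\rangle$ with $E$ a guarded linear recursive specification, and likewise $q=\langle Y_1|F\rangle$. So it suffices to show that $\langle X_1|E\rangle\approx_{prb\ast}^{sl}\langle Y_1|F\rangle$ (for $\ast\in\{s,p,hp,hhp\}$) implies $\langle X_1|E\rangle=\langle Y_1|F\rangle$, using $RDP$, $RSP$ and the axioms of Tables \ref{AxiomsForqBAPTC}, \ref{AxiomsForqAPTC}, \ref{AxiomsForqAPTC2} and \ref{AxiomsForqTau2}.

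The second, and central, step is a decomposition lemma for probabilistic rooted branching static-location truly concurrent bisimulations between quantum process configurations, in the spirit of the argument sketched in the proof of Theorem \ref{CBATCG}. Concretely, I would show that $p\approx_{prb\ast}^{sl}q$ (as quantum processes) is equivalent to the conjunction of: (i) the underlying classical $APPTC^{sl}_{\tau}$-terms obtained by erasing the quantum-state component $\varrho$ are related by a probabilistic rooted branching static-location truly concurrent bisimulation of the same flavour; and (ii) related configurations always carry the same quantum state. The point is that the transition rules in Tables \ref{SETRForqBAPTC}, \ref{TRForAPPTCG1}, \ref{TRForAPTCG}, \ref{TRForAPTCG2}, \ref{TRForqTau2} and \ref{TRForGRG} thread the quantum state through deterministically --- $effect(e,\varrho)$ and the communication effect $effect(\gamma(e_1,e_2),\varrho)$ depend only on the action and on $\varrho$, and $\tau(\varrho)$ only touches the internal data environment so that $\varrho=\tau(\varrho)$ on the external part --- hence once the classical skeletons are matched and start from a common $\varrho$, equality of states is automatic along all matched transitions. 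Given (i), the completeness of $APPTC^{sl}$ with silent step and guarded linear recursion from \cite{LOC2} yields a derivation of the equality of the classical skeletons; and since every axiom and transition rule used in that derivation is also available verbatim in $qAPPTC^{sl}$ (the locality laws, $B1$, $B2$, $RDP$ and $RSP$ are unchanged), the very same derivation is a valid $qAPPTC^{sl}$-proof of $p=q$.

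The four cases $\approx_{prbs}^{sl},\approx_{prbp}^{sl},\approx_{prbhp}^{sl},\approx_{prbhhp}^{sl}$ are handled uniformly, because the decomposition lemma is insensitive to which of the four flavours is chosen; soundness (Theorem \ref{SAPTC_GTAUG}) together with the congruence theorem for $qAPPTC$ with localities and silent step and guarded linear recursion guarantees that the intermediate equational steps stay within the chosen equivalence class, so no case-specific adjustment is needed.

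The main obstacle I expect is making the decomposition lemma fully rigorous, in particular the ``lifting'' direction: given a classical rooted branching bisimulation of the right kind and matching initial quantum states, one must reconstruct an honest quantum rooted branching bisimulation. This requires checking that erasure of $\varrho$ preserves and reflects the probabilistic transitions $\rightsquigarrow$, the $\tau^{*}$-closures, the termination predicate $\downarrow$, and the PDF conditions $\mu(C_1,C)=\mu(C_2,C)$ and $[\surd]_{R_{\varphi}}=\{\surd\}$, and that the functional dependence of $effect(\cdot,\varrho)$ and of $\tau(\varrho)$ is enough to keep states in lockstep even through the zero-or-more $\tau$- and probabilistic transitions allowed in the branching clauses. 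Once this lemma is established the remainder is, as in the preceding subsections, a direct appeal to \cite{LOC2}.
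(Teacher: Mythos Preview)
Your proposal is correct and follows essentially the same approach as the paper's proof: reduce the quantum bisimulation to the conjunction of the underlying classical bisimulation and equality of quantum states, then invoke the completeness of $APPTC^{sl}$ with silent step and guarded linear recursion from \cite{LOC2}. The paper's proof is considerably terser than yours --- it simply asserts, directly from the definitions of $\approx_{prbs}^{sl}$, $\approx_{prbp}^{sl}$, $\approx_{prbhp}^{sl}$, $\approx_{prbhhp}^{sl}$, that these equivalences imply both the classical bisimilarities and agreement of quantum states, and then appeals to \cite{LOC2} without spelling out the elimination step or the decomposition lemma you identify as the main obstacle.
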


\begin{proof}
According to the definition of probabilistic static location rooted branching truly concurrent bisimulation equivalences $\approx_{prbp}^{sl}$, $\approx_{prbs}^{sl}$, $\approx_{prbhp}^{sl}$ and $\approx_{prbhhp}^{sl}$, and $\approx_{prbp}^{sl}$, $\approx_{prbs}^{sl}$, $\approx_{prbhp}^{sl}$ and $\approx_{prbhhp}^{sl}$ implies
both the bisimilarities between $p$ and $q$, and also the in the same quantum states. According to the completeness of $APPTC^{sl}$ with silent step and guarded linear recursion (please refer to \cite{LOC2} for details), we can get the
completeness of $qAPPTC^{sl}$ with silent step and guarded linear recursion.
\end{proof}

The unary abstraction operator $\tau_I$ ($I\subseteq \mathbb{E}\cup G_{at}$) renames all atomic events or atomic guards in $I$ into $\tau$. $qAPPTC$ with localities and silent step and abstraction
operator is called $qAPPTC^{sl}_{\tau}$. The transition rules of operator $\tau_I$ are shown in Table \ref{TRForqAbstraction2}.

\begin{center}
    \begin{table}
        $$\frac{\langle x,\varrho\rangle\rightsquigarrow \langle x',\varrho\rangle}{\langle \tau_I(x),\varrho\rangle\rightsquigarrow\langle\tau_I(x'),\varrho\rangle}$$
        $$\frac{\langle x,\varrho\rangle\xrightarrow[u]{e}\langle\surd,\varrho'\rangle}{\langle\tau_I(x),\varrho\rangle\xrightarrow[u]{e}\langle\surd,\varrho'\rangle}\quad e\notin I
        \quad\quad\frac{\langle x,\varrho\rangle\xrightarrow[u]{e}\langle x',\varrho'\rangle}{\langle\tau_I(x),\varrho\rangle\xrightarrow[u]{e}\langle\tau_I(x'),\varrho'\rangle}\quad e\notin I$$

        $$\frac{\langle x,\varrho\rangle\xrightarrow[u]{e}\langle\surd,\varrho'\rangle}{\langle\tau_I(x),\varrho\rangle\xrightarrow{\tau}\langle\surd,\tau(\varrho)\rangle}\quad e\in I
        \quad\quad\frac{\langle x,\varrho\rangle\xrightarrow[u]{e}\langle x',\varrho'\rangle}{\langle\tau_I(x),\varrho\rangle\xrightarrow{\tau}\langle\tau_I(x'),\tau(\varrho)\rangle}\quad e\in I$$
        \caption{Transition rule of the abstraction operator}
        \label{TRForqAbstraction2}
    \end{table}
\end{center}

\begin{theorem}[Conservitivity of $qAPPTC^{sl}_{\tau}$ with guarded linear recursion]
$qAPPTC^{sl}_{\tau}$ with guarded linear recursion is a conservative extension of $qAPPTC$ with localities and silent step and guarded linear recursion.
\end{theorem}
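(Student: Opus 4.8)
The plan is to invoke the general conservative-extension criterion, Theorem \ref{TCE}, with $T_0$ the TSS of $qAPPTC$ with localities and silent step and guarded linear recursion (the transition rules collected in Tables \ref{SETRForqBAPTC}, \ref{TRForAPPTCG1}, \ref{TRForAPTCG}, \ref{TRForAPTCG2}, \ref{TRForGRG} and \ref{TRForqTau2}) and $T_1$ the single new block of rules for the abstraction operator $\tau_I$ in Table \ref{TRForqAbstraction2}. Since both systems are positive after reduction (the only negative premises are the $\nrightarrow$ side conditions on $\triangleleft$ and the race predicate $e_1\%e_2$, which reduce away in the standard manner, exactly as in the corresponding non-abstraction conservativity results cited from \cite{LOC2}), the hypotheses of Theorem \ref{TCE} reduce to two checks.

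First I would verify that $T_0$ is source-dependent. This is a routine induction over the shape of the rules: the axioms for atomic events, $::$, $\cdot$, $+$, $\boxplus_{\pi}$, $\between$, $\parallel$, $\leftmerge$, $\mid$, $\Theta$, $\triangleleft$, $\partial_H$ and $\tau$ all have the configuration variables of their conclusions occurring in the source, and each premise is of the form $\langle x,\varrho\rangle\xrightarrow[u]{e}\langle x',\varrho'\rangle$ or $\langle x,\varrho\rangle\rightsquigarrow\langle x',\varrho\rangle$ with $x,\varrho$ already source-dependent, so $x',\varrho'$ become source-dependent; the guarded-recursion rules of Table \ref{TRForGRG} inherit source-dependency from $t_i(\langle X_1|E\rangle,\cdots,\langle X_n|E\rangle)$. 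This is exactly the argument already used for the earlier conservativity results in this chapter, so I would state it by reference rather than repeat it in full. Second, I would observe that every rule in Table \ref{TRForqAbstraction2} — the probabilistic rule together with the four action rules, in both the $e\notin I$ and $e\in I$ cases — has a source of the form $\langle\tau_I(x),\varrho\rangle$, and $\tau_I$ is a function symbol of $\Sigma_1\setminus\Sigma_0$; hence the source of each such rule is fresh. Thus condition (2) of Theorem \ref{TCE} is met trivially via its "source is fresh" alternative, condition (1) is the source-dependency just established, and Theorem \ref{TCE} yields the claim.

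The main obstacle is not conceptual but bookkeeping: one must check that the quantum-configuration component $\varrho$ never breaks source-dependency. In particular, the atomic-event rule carries the side condition $\varrho'\in effect(e,\varrho)$ so $\varrho'$ is determined by source data, the silent step produces $\tau(\varrho)$, and the communication and $\tau_I$ rules similarly produce states obtained by applying $effect$ (or $\tau$) to source states. Once one confirms — uniformly across all tables — that these state components are source-determined, the induction goes through and the conservative-extension conclusion is immediate.

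\begin{proof}
This follows from Theorem \ref{TCE}. The transition rules of $qAPPTC$ with localities and silent step and guarded linear recursion are source-dependent, both systems are positive after reduction, and the transition rules for the abstraction operator in Table \ref{TRForqAbstraction2} contain only the fresh operator $\tau_I$ in their source, so the source of each new rule is fresh. Hence $qAPPTC^{sl}_{\tau}$ with guarded linear recursion is a conservative extension of $qAPPTC$ with localities and silent step and guarded linear recursion, as desired.
\end{proof}
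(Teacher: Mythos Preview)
Your proof is correct and follows essentially the same approach as the paper: both argue that the base TSS is source-dependent and that every new rule in Table \ref{TRForqAbstraction2} has the fresh operator $\tau_I$ in its source, so Theorem \ref{TCE} applies. Your write-up is in fact slightly more explicit than the paper's (you name Theorem \ref{TCE} and the positivity condition, and you discuss the quantum-state bookkeeping), but the underlying argument is the same.
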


\begin{proof}
Since the transition rules of $qAPPTC$ with localities and silent step and guarded linear recursion are source-dependent, and the transition rules for abstraction operator in Table
\ref{TRForqAbstraction2} contain only a fresh operator $\tau_I$ in their source, so the transition rules of $qAPPTC^{sl}_{\tau}$ with guarded linear recursion is a conservative extension
of those of $qAPPTC$ with localities and silent step and guarded linear recursion.
\end{proof}

\begin{theorem}[Congruence theorem of $qAPPTC^{sl}_{\tau}$ with guarded linear recursion]
Probabilistic static location rooted branching truly concurrent bisimulation equivalences $\approx_{prbp}^{sl}$, $\approx_{prbs}^{sl}$, $\approx_{prbhp}^{sl}$ and $\approx_{prbhhp}^{sl}$ are all congruences with respect
to $qAPPTC^{sl}_{\tau}$ with guarded linear recursion.
\end{theorem}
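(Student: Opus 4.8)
The plan is to reduce the statement to the single new operator $\tau_I$, exactly as in the proof of the corresponding congruence theorem for $qAPTC^{sl}_{\tau}$ with guarded linear recursion. By the preceding theorem, the equivalences $\approx_{prbp}^{sl}$, $\approx_{prbs}^{sl}$, $\approx_{prbhp}^{sl}$ and $\approx_{prbhhp}^{sl}$ are already congruences with respect to every operator of $qAPPTC$ with localities and silent step and guarded linear recursion ($::$, $\cdot$, $+$, $\boxplus_{\pi}$, $\between$, $\parallel$, $\leftmerge$, $\mid$, $\Theta$, $\triangleleft$, $\partial_H$), and each of them is easily seen to be an equivalence relation on $qAPPTC^{sl}_{\tau}$ with guarded linear recursion terms (reflexivity and symmetry are immediate, and transitivity follows by composing witnessing relations together with the quantum-state identifications, as in \cite{LOC2}). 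Hence it suffices to show, for each of the four equivalences, that $x \approx y$ implies $\tau_I(x) \approx \tau_I(y)$.

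First I would fix a witnessing relation: suppose $\langle x,\varrho\rangle \approx_{prbp}^{sl} \langle y,\varrho\rangle$ via a probabilistic static location rooted branching pomset bisimulation $R_{\varphi}$, whose pairs are matched up to the inner equivalence $\approx_{pbp}^{sl}$. I would define a candidate relation $R'_{\varphi}$ consisting of all pairs $(\langle \tau_I(s),\varrho'\rangle,\langle \tau_I(t),\varrho'\rangle)$ with $(\langle s,\varrho'\rangle,\langle t,\varrho'\rangle)$ in the relevant relation, together with the pairs needed to close it on terminated states $\surd$, and then verify the clauses of the definition by case analysis on the transition rules in Table \ref{TRForqAbstraction2}. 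A probabilistic move $\langle \tau_I(s),\varrho'\rangle \rightsquigarrow \langle \tau_I(s'),\varrho'\rangle$ comes from $\langle s,\varrho'\rangle \rightsquigarrow \langle s',\varrho'\rangle$; a move $\langle \tau_I(s),\varrho'\rangle \xrightarrow[u]{e}\cdot$ with $e\notin I$ is copied verbatim from $s$; and a move with $e\in I$ becomes a $\tau$-step changing the state to $\tau(\varrho')$, which on the other side is matched by the corresponding $I$-labelled transition of $t$ (again producing $\tau(\varrho')$), using that $\tau$ only affects the internal data environment so the external environments stay synchronised and $\mu$-coincidence is preserved. The pomset/step cases carry the same argument with $X\subseteq\mathbb{E}$ in place of $e$, and the hp- and hhp-cases additionally carry the isomorphism $f$ along, using $f[e_1\mapsto\tau^{e_1}]$ whenever an abstracted event becomes silent; downward closedness of the hhp-relation is preserved because $\tau_I$ acts uniformly on subconfigurations.

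The hard part will be the bookkeeping around rootedness combined with the quantum configurations: one must check that the very first $\rightsquigarrow$-then-visible-action step of $\tau_I(x)$ is matched by an identically labelled step of $\tau_I(y)$ leading to states related by the \emph{inner} branching equivalence $\approx_{pbp}^{sl}$ (respectively $\approx_{pbs}^{sl}$, $\approx_{pbhp}^{sl}$, $\approx_{pbhhp}^{sl}$), not merely by $R'_{\varphi}$. This requires first establishing congruence of the inner branching equivalences under $\tau_I$, which is the genuine content: it is proved by the same candidate-relation construction but without the rootedness restriction, so the $I$-to-$\tau$ conversions must be absorbed by stuttering sequences of probabilistic and $\tau$-transitions while keeping the state identifications $\varrho=\tau(\varrho)$, the $\mu$-coincidence and the $[\surd]_{R_{\varphi}}=\{\surd\}$ conditions invariant. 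Once this inner-level lemma is in place, the rooted-level statement follows routinely, and the remaining verifications are identical to those in \cite{LOC2} with quantum configurations appended, so I would state them as inherited and leave the routine details to the reader, as elsewhere in this chapter.
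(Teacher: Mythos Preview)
Your proposal is correct and follows exactly the paper's approach: observe that the equivalences are already equivalence relations and congruences for all operators of $qAPPTC$ with localities and silent step and guarded linear recursion, so it remains only to check preservation under the single new operator $\tau_I$. The paper's own proof stops there, declaring this step ``trivial'' and leaving it as an exercise, whereas you actually sketch the candidate-relation construction and the rootedness bookkeeping; your write-up is thus strictly more detailed than what the paper provides.
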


\begin{proof}
(1) It is easy to see that probabilistic rooted branching static location pomset bisimulation is an equivalent relation on $qAPPTC^{sl}_{\tau}$ with guarded linear recursion terms, we only need to
prove that $\approx_{prbp}^{sl}$ is preserved by the operators $\tau_I$. It is trivial and we leave the proof as an exercise for the readers.

(2) It is easy to see that probabilistic rooted branching static location step bisimulation is an equivalent relation on $qAPPTC^{sl}_{\tau}$ with guarded linear recursion terms, we only need to
prove that $\approx_{prbs}^{sl}$ is preserved by the operators $\tau_I$. It is trivial and we leave the proof as an exercise for the readers.

(3) It is easy to see that probabilistic rooted branching static location hp-bisimulation is an equivalent relation on $qAPPTC^{sl}_{\tau}$ with guarded linear recursion terms, we only need to
prove that $\approx_{prbhp}^{sl}$ is preserved by the operators $\tau_I$. It is trivial and we leave the proof as an exercise for the readers.

(4) It is easy to see that probabilistic rooted branching static location hhp-bisimulation is an equivalent relation on $qAPPTC^{sl}_{\tau}$ with guarded linear recursion terms, we only need to
prove that $\approx_{prbhhp}^{sl}$ is preserved by the operators $\tau_I$. It is trivial and we leave the proof as an exercise for the readers.
\end{proof}

We design the axioms for the abstraction operator $\tau_I$ in Table \ref{AxiomsForqAbstraction2}.

\begin{center}
\begin{table}
  \begin{tabular}{@{}ll@{}}
\hline No. &Axiom\\
  $TI1$ & $e\notin I\quad \tau_I(e)=e$\\
  $TI2$ & $e\in I\quad \tau_I(e)=\tau$\\
  $TI3$ & $\tau_I(\delta)=\delta$\\
  $TI4$ & $\tau_I(x+y)=\tau_I(x)+\tau_I(y)$\\
  $PTI1$ & $\tau_I(x\boxplus_{\pi}y)=\tau_I(x)\boxplus_{\pi}\tau_I(y)$\\
  $TI5$ & $\tau_I(x\cdot y)=\tau_I(x)\cdot\tau_I(y)$\\
  $TI6$ & $\tau_I(x\leftmerge y)=\tau_I(x)\leftmerge\tau_I(y)$\\
  $L14$ & $u::\tau_I(x)=\tau_I(u::x)$\\
  $L15$ & $e\notin I\quad \tau_I(u::e)=u::e$\\
  $L16$ & $e\in I\quad \tau_I(u::e)=\tau$\\
  $PTI1$ & $\tau_I(x\boxplus_{\pi}y)=\tau_I(x)\boxplus_{\pi}\tau_I(y)$\\
\end{tabular}
\caption{Axioms of abstraction operator}
\label{AxiomsForqAbstraction2}
\end{table}
\end{center}

\begin{theorem}[Soundness of $qAPPTC^{sl}_{\tau}$ with guarded linear recursion]
Let $x$ and $y$ be $qAPPTC^{sl}_{\tau}$ with guarded linear recursion terms. If $qAPPTC^{sl}_{\tau}$ with guarded linear recursion $\vdash x=y$, then

(1) $x\approx_{prbs}^{sl} y$.

(2) $x\approx_{prbp}^{sl} y$.

(3) $x\approx_{prbhp}^{sl} y$.

(4) $x\approx_{prbhhp}^{sl} y$.
\end{theorem}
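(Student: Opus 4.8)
The plan is to follow the same route used for the soundness results already established for $qAPTC^{sl}_{\tau}$ and for $qAPPTC$ with localities and silent step and guarded linear recursion. The preceding congruence theorem tells us that $\approx_{prbp}^{sl}$, $\approx_{prbs}^{sl}$, $\approx_{prbhp}^{sl}$ and $\approx_{prbhhp}^{sl}$ are all equivalence relations and congruences with respect to every operator of $qAPPTC^{sl}_{\tau}$ with guarded linear recursion. Hence a derivable equation $x=y$ is obtained from instances of the axioms by reflexivity, symmetry, transitivity and congruence closure, and it suffices to check that each single axiom for $\tau_I$ in Table \ref{AxiomsForqAbstraction2}, namely $TI1$--$TI6$, $L14$, $L15$, $L16$ and $PTI1$ (together with the already-treated axioms inherited from Tables \ref{AxiomsForqBAPTC}, \ref{AxiomsForqAPTC}, \ref{AxiomsForqTau2} and \ref{RDPRSP}), is sound modulo each of the four equivalences. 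For each of parts (1)--(4) one then just invokes the congruence property to lift axiom-wise soundness to soundness of the whole deductive system.

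For a fixed axiom $s=t$, I would take the witnessing relation to consist of all pairs $(\langle \sigma(s),\varrho\rangle,\langle \sigma(t),\varrho\rangle)$ over closed substitutions $\sigma$ and quantum states $\varrho$, closed under contexts via congruence and augmented with the identity, and verify the transfer conditions in the definition of probabilistic static location rooted branching (hp-/hhp-)bisimulation using the transition rules for $\tau_I$ in Table \ref{TRForqAbstraction2} together with those of $qAPPTC$ with localities in Tables \ref{TRForAPPTCG1}--\ref{TRForAPTCG2}. For the structural laws $TI3$, $TI4$, $TI5$, $PTI1$, $L14$--$L16$ the probabilistic transitions $\rightsquigarrow$ and the action transitions push through $\tau_I$ directly, and the quantum component $\varrho$ is carried along unchanged except on $\tau$-steps, where $\varrho=\tau(\varrho)$ on the external environment; $TI1$ and $TI2$ are immediate from the rules for atomic events inside and outside $I$.

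The main obstacle will be $TI6$, that is $\tau_I(x\leftmerge y)=\tau_I(x)\leftmerge\tau_I(y)$, together with the probabilistic choice: the rules for $\leftmerge$ carry the side condition $e_1\leq e_2$ and split the behaviour of $x\leftmerge y$ into several cases (successful termination, residual on the left, a synchronised step whose effect on $\varrho$ is $effect(\gamma(e_1,e_2),\varrho)$), and one must check that renaming events of $I$ to $\tau$ commutes with all of them; in particular the case where the leading event itself lies in $I$ turns an action step into a $\tau$-step, so the rootedness clause forces it to be matched by a genuine transition on the other side rather than being absorbed, and this has to be verified explicitly. Once all axioms are checked, statements (1)--(4) follow. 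Equivalently, as in the companion development one may reduce to the soundness of $APPTC^{sl}_{\tau}$ with silent step and guarded linear recursion proved in \cite{LOC2}, observing that the quantum configuration $\langle\cdot,\varrho\rangle$ is threaded through the rules of Table \ref{TRForqAbstraction2} without affecting the bisimulation witnesses, so each quantum-labelled rooted branching bisimulation decomposes as the corresponding classical one on the control component together with equality on $\varrho$.
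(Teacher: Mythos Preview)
Your proposal is correct and follows exactly the paper's own approach: invoke the congruence theorem to reduce soundness of the whole deductive system to soundness of the individual axioms in Table~\ref{AxiomsForqAbstraction2}, and then check those axioms case by case. In fact your outline is strictly more detailed than the paper's proof, which simply states this reduction for each of (1)--(4) and leaves the per-axiom verifications as exercises; your discussion of the witnessing relations, the handling of $TI6$, and the alternative reduction to the classical $APPTC^{sl}_{\tau}$ soundness from \cite{LOC2} all go beyond what the paper provides.
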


\begin{proof}
(1) Since probabilistic rooted branching static location step bisimulation $\approx_{prbs}^{sl}$ is both an equivalent and a congruent relation with respect to $qAPPTC^{sl}_{\tau}$ with guarded linear
recursion, we only need to check if each axiom in Table \ref{AxiomsForqAbstraction2} is sound modulo probabilistic rooted branching static location step bisimulation $\approx_{prbs}^{sl}$. We leave them as
exercises to the readers.

(2) Since probabilistic rooted branching static location pomset bisimulation $\approx_{prbp}^{sl}$ is both an equivalent and a congruent relation with respect to $qAPPTC^{sl}_{\tau}$ with guarded linear
recursion, we only need to check if each axiom in Table \ref{AxiomsForqAbstraction2} is sound modulo probabilistic rooted branching static location pomset bisimulation $\approx_{prbp}^{sl}$. We leave them
as exercises to the readers.

(3) Since probabilistic rooted branching static location hp-bisimulation $\approx_{prbhp}^{sl}$ is both an equivalent and a congruent relation with respect to $qAPPTC^{sl}_{\tau}$ with guarded linear
recursion, we only need to check if each axiom in Table \ref{AxiomsForqAbstraction2} is sound modulo probabilistic rooted branching static location hp-bisimulation $\approx_{prbhp}^{sl}$. We leave them as
exercises to the readers.

(4) Since probabilistic rooted branching static location hhp-bisimulation $\approx_{prbhhp}^{sl}$ is both an equivalent and a congruent relation with respect to $qAPPTC^{sl}_{\tau}$ with guarded linear
recursion, we only need to check if each axiom in Table \ref{AxiomsForqAbstraction2} is sound modulo probabilistic rooted branching static location hhp-bisimulation $\approx_{prbhhp}^{sl}$. We leave them as
exercises to the readers.
\end{proof}

Though $\tau$-loops are prohibited in guarded linear recursive specifications in a specifiable way, they can be constructed using the abstraction operator, for example, there exist
$\tau$-loops in the process term $\tau_{\{a\}}(\langle X|X=aX\rangle)$. To avoid $\tau$-loops caused by $\tau_I$ and ensure fairness, we introduce the following recursive verification
rules as Table \ref{RVR} shows, note that $i_1,\cdots, i_m,j_1,\cdots,j_n\in I\subseteq \mathbb{E}\setminus\{\tau\}$.

\begin{center}
\begin{table}
    $$VR_1\quad \frac{x=y+(u_1::i_1\leftmerge\cdots\leftmerge u_m::i_m)\cdot x, y=y+y}{\tau\cdot\tau_I(x)=\tau\cdot \tau_I(y)}$$
    $$VR_2\quad \frac{x=z\boxplus_{\pi}(u+(u_1::i_1\leftmerge\cdots\leftmerge u_m::i_m)\cdot x),z=z+u,z=z+z}{\tau\cdot\tau_I(x)=\tau\cdot\tau_I(z)}$$
    $$VR_3\quad \frac{x=z+(u_1::i_1\leftmerge\cdots\leftmerge u_m::i_m)\cdot y,y=z\boxplus_{\pi}(u+(v_1::j_1\leftmerge\cdots\leftmerge v_n::j_n)\cdot x), z=z+u,z=z+z}{\tau\cdot\tau_I(x)=\tau\cdot\tau_I(y')\textrm{ for }y'=z\boxplus_{\pi}(u+(u_1::i_1\leftmerge\cdots\leftmerge u_m::i_m)\cdot y')}$$
\caption{Recursive verification rules}
\label{RVR}
\end{table}
\end{center}

\begin{theorem}[Soundness of $VR_1,VR_2,VR_3$]
$VR_1$, $VR_2$ and $VR_3$ are sound modulo probabilistic static location rooted branching truly concurrent bisimulation equivalences $\approx_{prbp}^{sl}$, $\approx_{prbs}^{sl}$, $\approx_{prbhp}^{sl}$ and $\approx_{prbhhp}^{sl}$.
\end{theorem}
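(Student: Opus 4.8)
The plan is to prove soundness of $VR_1$, $VR_2$ and $VR_3$ separately, each time by exhibiting an explicit witnessing probabilistic static location rooted branching bisimulation together with its step, hp and hhp refinements. I will use freely that $\approx_{prbp}^{sl}$, $\approx_{prbs}^{sl}$, $\approx_{prbhp}^{sl}$ and $\approx_{prbhhp}^{sl}$ are congruences for $qAPPTC^{sl}_{\tau}$ with guarded linear recursion (the congruence theorem above). The key reduction is to push $\tau_I$ through the premise equations: by $TI4$, $TI5$, $TI6$ and $L16$ every $u_l{::}i_l$ with $i_l\in I$ becomes $\tau$, and a left-merge of $\tau$'s collapses to a single $\tau$, so the premise of $VR_1$ yields $\tau_I(x)\approx\tau_I(y)+\tau\cdot\tau_I(x)$ and $\tau_I(y)\approx\tau_I(y)+\tau_I(y)$; for $VR_2$ and $VR_3$ the same manipulation keeps the $\boxplus_{\pi}$ alive (using $PTI1$ and $PL1$). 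From here the content to establish is a fair-abstraction (KFAR-style) fact: a process $P$ that equals $Q+\tau\cdot P$ modulo $\approx$, with $Q$ its own $+$, satisfies $\tau\cdot P\approx_{prbp}^{sl}\tau\cdot Q$; the only real work is collapsing the internal self-loop $P\xrightarrow{\tau}P$.

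For $VR_1$ I take, on the level of quantum configurations,
$$R_0=\{(\langle P,\varrho\rangle,\langle Q,\varrho\rangle):\varrho\in S\}\cup{\approx_{pbp}^{sl}}$$
(with the obvious step/hp/hhp refinements) and check it is a probabilistic static location branching pomset bisimulation: from $\langle P,\varrho\rangle$ the internal self-loop $\langle P,\varrho\rangle\xrightarrow{\tau}\langle P,\varrho\rangle$ is answered by the ``$X\equiv\tau^{*}$, stay related'' clause, since the pair remains in $R_0$; any genuine located (possibly concurrent) step of $P$ must come from the $Q$-summand and is matched verbatim by $\langle Q,\varrho\rangle$; conversely every move of $\langle Q,\varrho\rangle$ is already a move of $\langle P,\varrho\rangle$ because $Q$ is a $+$-summand of $P$; the termination predicate $\downarrow$, the probabilistic transitions $\rightsquigarrow$ (both sides normalising through $\breve{\tau}$), the measure condition $\mu(C_1,C)=\mu(C_2,C)$ on $R_0$-classes and $[\surd]_{R_0}=\{\surd\}$ are immediate since no probabilistic branching occurs. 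Hence $\langle P,\varrho\rangle\approx_{pbp}^{sl}\langle Q,\varrho\rangle$, and prefixing with $\tau$ supplies the root move: $\langle\tau\cdot P,\varrho\rangle\rightsquigarrow\xrightarrow{\tau}\langle P,\tau(\varrho)\rangle$ is answered by $\langle\tau\cdot Q,\varrho\rangle\rightsquigarrow\xrightarrow{\tau}\langle Q,\tau(\varrho)\rangle$ with $\approx_{pbp}^{sl}$-related targets, so $\tau\cdot\tau_I(x)\approx_{prbp}^{sl}\tau\cdot\tau_I(y)$. The step and hp versions use the same relation with pomsets replaced by steps and with the order-isomorphism $f$ carried along; for hhp one checks $R_0$ can be taken downward closed, which holds because the pairs added beyond $\approx_{pbp}^{sl}$ involve no fresh events.

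For $VR_2$ and $VR_3$ the shape is the same, but the $\tau$-loop branch now sits inside a $\boxplus_{\pi}$ with the exit part, so the witnessing relation must additionally keep the probabilistic choice synchronised; for $VR_3$ it must relate $\tau_I(x)$, $\tau_I(y)$ and the auxiliary $\tau_I(y')$ with $y'=z\boxplus_{\pi}(u+(u_1{::}i_1\leftmerge\cdots\leftmerge u_m{::}i_m)\cdot y')$ (a guarded linear specification, so $RSP$ applies) simultaneously. I expect this probabilistic bookkeeping to be the main obstacle: one must verify clause (2) of the branching definition ($\rightsquigarrow$-matching) and the measure clause $\mu(C_1,C)=\mu(C_2,C)$ on $R$-classes, i.e. that treating the looping branch as ``staying put'' is consistent with requiring equal probability distributions over bisimulation classes on the two sides — the point being that the mass $1-\pi$ (resp. $\pi$) on the loop branch is re-assigned to the current class, which it anyway belongs to, while the residual mass carries the exit behaviour, which by $PA1$--$PA5$, $PM1$--$PM4$, $PTI1$ and the premise $z=z+u$ equals the intended right-hand side of the conclusion. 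Once these clauses are discharged the constructed relations are genuine probabilistic static location rooted branching (pomset/step/hp/hhp) bisimulations, which gives soundness of $VR_1$, $VR_2$ and $VR_3$ modulo $\approx_{prbp}^{sl}$, $\approx_{prbs}^{sl}$, $\approx_{prbhp}^{sl}$ and $\approx_{prbhhp}^{sl}$.
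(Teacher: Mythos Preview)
The paper states this theorem without proof: in both places where the soundness of $VR_1,VR_2,VR_3$ appears (the background section on $APPTC^{sl}$ and the section on $qAPPTC^{sl}_{\tau}$), the theorem is simply asserted with no accompanying \texttt{proof} environment. For the analogous non-probabilistic result (soundness of $CFAR$), the paper's ``proof'' consists only of the remark that since the relevant bisimulation is an equivalence and a congruence one need only check each rule is sound, leaving the verification as an exercise. So there is no substantive argument in the paper to compare against.

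Your proposal therefore already goes well beyond what the paper offers. The strategy you outline --- push $\tau_I$ through the premise using $TI4$--$TI6$, $L16$, $PTI1$ to reduce each rule to a KFAR-style fact, then exhibit an explicit branching bisimulation that collapses the $\tau$-self-loop, and finally restore the root condition via the leading $\tau$ --- is exactly the standard route for proving fair-abstraction rules sound (cf.\ Andova's work cited as \cite{PPA2}, \cite{PPA3} in the paper). Your identification of the main difficulty in $VR_2$ and $VR_3$ (verifying the measure clause $\mu(C_1,C)=\mu(C_2,C)$ when the $\tau$-loop sits inside a $\boxplus_{\pi}$) is accurate and is precisely the point where the probabilistic setting departs from the classical $CFAR$ argument. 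Nothing in your sketch is wrong; it is simply far more detailed than the paper, which treats the result as folklore.
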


\subsection{Quantum Measurement}\label{qm}

In closed quantum systems, there is another basic quantum operation -- quantum measurement, besides the unitary operator. Quantum measurements have a probabilistic nature.

There is a concrete but non-trivial problem in modeling quantum measurement.

Let the following process term represent quantum measurement during modeling phase,

$$\beta_1\cdot t_1\boxplus_{\pi_1}\beta_2\cdot t_2\boxplus_{\pi_2}\cdots\boxplus_{\pi_{i-1}}\beta_i\cdot t_i$$

where $\sum_i \pi_i=1$, $t_i\in\mathcal{B}(qBAPTC)$, $\beta$ denotes a quantum measurement, and $\beta=\sum_i\lambda_i \beta_i$, $\beta_i$ denotes the projection performed on the quantum
system $\varrho$, $\pi_i=Tr(\beta_i\varrho)$, $\varrho_i=\beta_i\varrho \beta_i/\pi_i$.

The above term means that, firstly, we choose a projection $\beta_i$ in a quantum measurement $\beta=\sum_i\lambda_i\beta_i$ probabilistically, then, we execute (perform) the
projection $\beta_i$ on the closed quantum system. This also adheres to the intuition on quantum mechanics.

We define $B$ as the collection of all projections of all quantum measurements, and make the collection of atomic actions be $\mathbb{E}=\mathbb{E}\cup B$. We see that a
projection $\beta_i\in B$ has the almost same semantics as a unitary operator $\alpha\in A$. So, we add the following (probabilistic and action)
transition rules into those of $PQRA$.

$$\frac{}{\langle\beta_i,\varrho\rangle\rightsquigarrow\langle\breve{\beta_i},\varrho\rangle}$$

$$\frac{}{\langle\breve{\beta_i},\varrho\rangle\xrightarrow[u]{\beta_i}\langle\surd,\varrho'\rangle}$$

Until now, $qAPPTC$ with localities works again. The two main quantum operations in a closed quantum system -- the unitary operator and the quantum measurement, are fully modeled in probabilistic
process algebra.

\subsection{Quantum Entanglement}\label{qe2}

As in section \ref{qe1}, The axiom system of the shadow constant $\circledS$ is shown in Table \ref{AxiomsForQE2}.

\begin{center}
\begin{table}
  \begin{tabular}{@{}ll@{}}
\hline No. &Axiom\\
  $SC1$ & $\circledS\cdot x = x$ \\
  $SC2$ & $x\cdot\circledS = x$\\
  $SC3$ & $e\leftmerge\circledS^e=e$\\
  $SC4$ & $\circledS^e\leftmerge e=e$\\
  $SC5$ & $e\leftmerge(\circledS^e\cdot y) = e\cdot y$\\
  $SC6$ & $\circledS^e\leftmerge(e\cdot y) = e\cdot y$\\
  $SC7$ & $(e\cdot x)\leftmerge\circledS^e = e\cdot x$\\
  $SC8$ & $(\circledS^e\cdot x)\leftmerge e = e\cdot x$\\
  $SC9$ & $(e\cdot x)\leftmerge(\circledS^e\cdot y) = e\cdot (x\between y)$\\
  $SC10$ & $(\circledS^e\cdot x)\leftmerge(e\cdot y) = e\cdot (x\between y)$\\
  $L17$ & $loc::\circledS = \circledS$\\
\end{tabular}
\caption{Axioms of quantum entanglement}
\label{AxiomsForQE2}
\end{table}
\end{center}

The transition rules of constant $\circledS$ are as Table \ref{TRForENT2} shows.

\begin{center}
    \begin{table}
        $$\frac{}{\langle\circledS,\varrho\rangle\rightsquigarrow\langle\breve{\circledS},\varrho\rangle}$$
        $$\frac{}{\langle\circledS,\varrho\rangle\rightarrow\langle\surd,\varrho\rangle}$$
        $$\frac{\langle x, \varrho\rangle\xrightarrow[u]{e}\langle x',\varrho'\rangle\quad \langle y, \varrho'\rangle\xrightarrow{\circledS^e}\langle y',\varrho'\rangle}{\langle x\leftmerge y,\varrho\rangle\xrightarrow[u]{e}\langle x'\between y', \varrho'\rangle}$$
        $$\frac{\langle x, \varrho\rangle\xrightarrow[u]{e}\langle\surd,\varrho'\rangle\quad \langle y, \varrho'\rangle\xrightarrow{\circledS^e}\langle y',\varrho'\rangle}{\langle x\leftmerge y,\varrho\rangle\xrightarrow[u]{e}\langle y', \varrho'\rangle}$$
        $$\frac{\langle x, \varrho'\rangle\xrightarrow{\circledS^e}\langle\surd,\varrho'\rangle\quad \langle y, \varrho\rangle\xrightarrow[u]{e}\langle y',\varrho'\rangle}{\langle x\leftmerge y,\varrho\rangle\xrightarrow[u]{e}\langle y', \varrho'\rangle}$$
        $$\frac{\langle x, \varrho\rangle\xrightarrow[u]{e}\langle\surd,\varrho'\rangle\quad \langle y, \varrho'\rangle\xrightarrow{\circledS^e}\langle\surd,\varrho'\rangle}{\langle x\leftmerge y,\varrho\rangle\xrightarrow[u]{e}\langle \surd, \varrho'\rangle}$$
        \caption{Transition rules of constant $\circledS$}
        \label{TRForENT2}
    \end{table}
\end{center}

\begin{theorem}[Elimination theorem of $qAPPTC^{sl}_{\tau}$ with guarded linear recursion and shadow constant]
Let $p$ be a closed $qAPPTC^{sl}_{\tau}$ with guarded linear recursion and shadow constant term. Then there is a closed $qAPPTC$ with localities term such that $qAPPTC^{sl}_{\tau}$ with guarded linear recursion and shadow constant$\vdash p=q$.
\end{theorem}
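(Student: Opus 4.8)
The plan is to proceed in two stages, exactly as in the open-system development of Section \ref{qe1}: first rewrite away all occurrences of the shadow constants $\circledS$ and $\circledS^e$, obtaining an equivalent closed $qAPPTC^{sl}_{\tau}$ with guarded linear recursion term, and then invoke the elimination theorem already proved for that signature (Theorem \ref{ETTauG}) to bring the result into the normal form $\langle X_1|E\rangle$ with $E$ a guarded linear recursive specification. (I would also flag the harmless slip in the statement: the produced term $q$ is $\langle X_1|E\rangle$, not literally a ``$qAPPTC$ with localities term''.)

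First I would orient the shadow axioms $SC1$--$SC10$ and $L17$ of Table \ref{AxiomsForQE2} from left to right and adjoin them to the term rewriting system already used for $qAPPTC^{sl}_{\tau}$. To apply the strong-normalization machinery (Theorem \ref{SN}) I would exhibit a well-founded ordering on the extended signature under which each new rule strictly decreases: a lexicographic combination of (i) the number of shadow symbols in the term, which $SC1$--$SC10$ and $L17$ all drop, and (ii) the ordering already witnessing termination of the $qAPPTC^{sl}_{\tau}$ TRS (none of the new rules increase this second measure, since they only delete a $\circledS$ or $\circledS^e$, or push one past $u::$). Hence the combined system is strongly normalizing and a normal form exists for every closed term.

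Next I would show that a closed normal form contains no shadow symbol. Arguing by induction on the structure of basic terms (Definition \ref{BTAPTCG}, extended with $\circledS,\circledS^e$): a $\circledS$ in head position is removed by $SC1$/$SC2$; a $\circledS^e$ sitting in a left-merge against its partner action $e$ is removed by $SC3$--$SC10$; occurrences under $::$ are pushed inward by $L17$ and then reduced to the same cases; and $+$, $\cdot$, $\boxplus_{\pi}$, $\parallel$, $\leftmerge$, $\mid$, $\Theta$, $\triangleleft$, $\partial_H$, $\tau_I$ distribute over their arguments so the induction hypothesis applies to the proper subterms. The one case needing care is a ``dangling'' $\circledS^e$ not placed in a left-merge with $e$: here I would use the operational rules of Table \ref{TRForENT2} (a bare $\circledS^e$ has no action transition) together with soundness of the axioms to show that such a summand behaves as $\delta$ and can be absorbed — or, equivalently, that it never arises in a well-formed closed term since entanglement always introduces $\circledS^e$ paired with its operation.

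The main obstacle, as for the earlier non-shadow elimination theorems, is the interaction with recursion: a closed term may introduce shadow constants inside the right-hand sides $t_i(X_1,\dots,X_n)$ of a guarded linear recursive specification, so the rewriting argument must be carried out uniformly on specifications. I would show that applying the shadow rewrites to every right-hand side of a guarded linear recursive specification $E$ yields a guarded linear recursive specification $E'$ over the shadow-free signature with $\langle X_i|E\rangle=\langle X_i|E'\rangle$ provable, and that guardedness and linearity are preserved — the delicate point being that $SC5$--$SC10$ turn a left-merge head $e\leftmerge(\circledS^e\cdot y)$ into $e\cdot y$, keeping the summand in the linear shape $(\cdots)X_k$, while $SC3$/$SC4$ collapse a pure left-merge summand to a single action, again linear. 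Once $E'$ is shadow-free the term is a closed $qAPPTC^{sl}_{\tau}$ with guarded linear recursion term, and Theorem \ref{ETTauG} completes the proof.
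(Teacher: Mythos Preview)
The paper provides no proof of this theorem: its entire argument is the sentence ``We leave the proof to the readers as an excise.'' So there is no approach to compare against; your sketch is already far more detailed than anything the paper offers, and it follows the methodology the paper uses elsewhere (orient axioms as rewrite rules, appeal to Theorem \ref{SN} for termination, then invoke the earlier elimination theorem for the smaller signature).

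Your plan is sound in outline. The one point that is genuinely delicate, and that the paper never addresses, is the ``dangling $\circledS^e$'' case you flag: the axioms $SC3$--$SC10$ only eliminate $\circledS^e$ when it sits in a left-merge against its matching action $e$, so a bare $\circledS^e$ in an arbitrary context is not directly rewritten away. Your two proposed escapes (argue it behaves like $\delta$ via the operational rules, or rule it out as ill-formed) are both plausible, but note that the paper explicitly puts $\circledS\in\mathbb{E}$, so treating $\circledS^e$ as having no action transition needs a little more care than you indicate. Since the paper is silent on all of this, either resolution would be acceptable as an ``exercise'' completion; just be explicit about which convention you adopt.
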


\begin{proof}
We leave the proof to the readers as an excise.
\end{proof}

\begin{theorem}[Conservitivity of $qAPPTC^{sl}_{\tau}$ with guarded linear recursion and shadow constant]
$qAPPTC^{sl}_{\tau}$ with guarded linear recursion and shadow constant is a conservative extension of $qAPPTC^{sl}_{\tau}$ with guarded linear recursion.
\end{theorem}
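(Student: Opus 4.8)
The plan is to apply the Conservative Extension Theorem (Theorem~\ref{TCE}) with $T_0$ the TSS of $qAPPTC^{sl}_{\tau}$ with guarded linear recursion and $T_1$ the TSS consisting of the new rules for the shadow constant $\circledS$ and its probabilistic counterpart $\breve{\circledS}$ in Table~\ref{TRForENT2}, together with the extra $\leftmerge$-rules there that govern the interaction of an ordinary event with $\circledS^e$. Both $T_0$ and $T_0\oplus T_1$ are positive after reduction (the only negative premises are the $\nrightarrow^{e}$ premises already present in $T_0$ for $\parallel$ and $\triangleleft$, and $T_1$ introduces none), so it suffices to check the two hypotheses of Theorem~\ref{TCE}, exactly as in the earlier conservativity arguments of this chapter.

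First I would record that $T_0$ is source-dependent. This is inherited from the source-dependency established layer by layer for $qBAPTC^{sl}$, $qAPPTC^{sl}$, guarded recursion, the silent step, and the abstraction operator $\tau_I$ (each layer was shown source-dependent when its own conservativity was proved above), and attaching the quantum configuration component $\varrho$ to states does not disturb source-dependency.

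Second, and this is the crux, I would verify condition~(2) of Theorem~\ref{TCE} for every rule $\rho\in T_1$. The probabilistic and action rules with source $\langle\circledS,\varrho\rangle$ (and the corresponding $\breve{\circledS}$ rules) have a \emph{fresh} source, since $\circledS$ and $\breve{\circledS}$ are function symbols in $\Sigma_1\setminus\Sigma_0$, so these satisfy the first alternative of~(2). The remaining rules have source $\langle x\leftmerge y,\varrho\rangle$, which is \emph{not} fresh; for these I would invoke the second alternative: each such rule carries a premise $\langle y,\varrho'\rangle\xrightarrow{\circledS^e}\langle\cdot,\varrho'\rangle$ (or $\langle x,\varrho'\rangle\xrightarrow{\circledS^e}\langle\cdot,\varrho'\rangle$) whose subject is a single variable, hence a term of $\mathbb{T}(\Sigma_0)$, all of whose variables ($y$, resp. $x$) occur in the source $x\leftmerge y$, and whose label $\circledS^e$ is fresh, i.e. does not occur in $T_0$. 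That is precisely the shape demanded by~(2), so the condition holds for all of $T_1$.

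With both hypotheses in hand, Theorem~\ref{TCE} yields that $T_0\oplus T_1$ is a conservative extension of $T_0$: for every closed term $t$ of $qAPPTC^{sl}_{\tau}$ with guarded linear recursion, the transitions $\langle t,\varrho\rangle\xrightarrow{a}\langle t',\varrho'\rangle$ and the termination predicates of $t$ coincide whether computed in $T_0$ or in $T_0\oplus T_1$, which is the assertion of the theorem. The main obstacle I anticipate is exactly the non-fresh source $\langle x\leftmerge y,\varrho\rangle$ in the shadow-constant interaction rules: one must be disciplined in checking that the fresh-label clause of Theorem~\ref{TCE} genuinely applies here, in particular that $\circledS^e$ is a label symbol fresh to $T_0$ and that the variable-occurrence constraint is met; beyond that point the verification is routine and parallels the conservativity proof for $qAPTC^{sl}_{\tau}$ with shadow constant given earlier.
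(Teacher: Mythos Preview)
Your proposal is correct and follows exactly the template the paper uses for all of its conservativity results (invoking Theorem~\ref{TCE} via source-dependency of $T_0$ plus freshness of sources/labels in $T_1$); in fact the paper's own ``proof'' of this particular theorem is simply ``We leave the proof to the readers as an excise,'' so you have supplied more detail than the paper itself. Your careful treatment of the non-fresh $\leftmerge$-sources via the fresh label $\circledS^e$ is the right way to discharge the one nontrivial case.
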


\begin{proof}
We leave the proof to the readers as an excise.
\end{proof}

\begin{theorem}[Congruence theorem of $qAPPTC^{sl}_{\tau}$ with guarded linear recursion and shadow constant]
Probabilistic static location rooted branching truly concurrent bisimulation equivalences $\approx_{prbp}^{sl}$, $\approx_{prbs}^{sl}$, $\approx_{prbhp}^{sl}$ and $\approx_{prbhhp}^{sl}$ are all congruences with respect
to $qAPPTC^{sl}_{\tau}$ with guarded linear recursion and shadow constant.
\end{theorem}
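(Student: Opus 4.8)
The final statement is the congruence theorem for $qAPPTC^{sl}_{\tau}$ with guarded linear recursion and shadow constant: that $\approx_{prbp}^{sl}$, $\approx_{prbs}^{sl}$, $\approx_{prbhp}^{sl}$, $\approx_{prbhhp}^{sl}$ are all congruences with respect to this calculus. The plan is to follow the same three-fact pattern used in the preceding congruence theorems in this section (in particular the congruence theorem for $qAPPTC$ with localities and silent step and guarded linear recursion, and the one for $qAPPTC^{sl}_\tau$ with guarded linear recursion), and to extend it to the shadow constant $\circledS$ and its labelled variants $\circledS^e$. First I would note that each of $\approx_{prbp}^{sl}$, $\approx_{prbs}^{sl}$, $\approx_{prbhp}^{sl}$, $\approx_{prbhhp}^{sl}$ is an equivalence relation on the set of $qAPPTC^{sl}_{\tau}$ with guarded linear recursion and shadow constant terms (this is inherited and routine). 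Then it remains to show these equivalences are preserved by every operator of the calculus: the operators already present — $::$, $\cdot$, $+$, $\boxplus_\pi$, $\parallel$, $\leftmerge$, $\mid$, $\Theta$, $\triangleleft$, $\partial_H$, $\tau_I$, guarded recursion — together with the new constant $\circledS$ and its labelled forms $\circledS^e$, whose transition behaviour is governed by Table \ref{TRForENT2}.

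Concretely I would break the argument into the following steps, in order. (i) Recall that $\approx_{prbp}^{sl}$, $\approx_{prbs}^{sl}$, $\approx_{prbhp}^{sl}$, $\approx_{prbhhp}^{sl}$ are congruences with respect to all operators of $qAPPTC^{sl}_\tau$ with guarded linear recursion — this is exactly the content of the earlier congruence theorems in Section \ref{qcabs}, which I may assume. (ii) Observe that the shadow constant $\circledS$ (and $\circledS^e$) is a fresh constant whose transition rules in Table \ref{TRForENT2} are source-dependent, and that it interacts with the rest of the syntax only through the left-merge operator $\leftmerge$ via the new rules for $\langle x\leftmerge y,\varrho\rangle$ that consume a $\circledS^e$-labelled transition on one side in synchrony with an $e$-labelled transition on the other. (iii) For each operator, exhibit the standard bisimulation-up-to-context witness: given $x_1 \approx_{prbp}^{sl} x_1'$ and $x_2 \approx_{prbp}^{sl} x_2'$, build the relation $\{(C[\vec x], C[\vec x']) : C \text{ a context}\}$ closed under the remaining structure, and verify the transfer conditions by case analysis on which SOS rule fires, using the guardedness of the recursive specifications to rule out infinite $\tau$-loops so that the rooted-branching conditions are met. (iv) Repeat the verification for $\approx_{prbs}^{sl}$ (replacing pomset transitions by step transitions), and for $\approx_{prbhp}^{sl}$ and $\approx_{prbhhp}^{sl}$ (tracking the order-isomorphism component $f$ and, for hhp, downward closure). (v) Conclude that, since $\mathbb{E}$ has been extended by the shadow constants and all new transition rules are source-dependent and only add interaction at $\leftmerge$, the congruence property carries over to the full signature.

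The main obstacle I expect is the new behaviour of $\leftmerge$ in the presence of $\circledS^e$: the left-merge rules in Table \ref{TRForENT2} create a nonstandard synchronisation (an $e$-move on one operand forces a matching $\circledS^e$-move on the other, modelling entanglement), so preservation of $\approx$ by $\leftmerge$ is no longer a purely structural check as it is in plain $qAPPTC^{sl}_\tau$. Specifically, when showing that $x_1 \leftmerge x_2 \approx x_1' \leftmerge x_2'$, a transition of $x_1 \leftmerge x_2$ labelled $e$ (arising from $\langle x_1,\varrho\rangle\xrightarrow[u]{e}\langle x_1',\varrho'\rangle$ together with $\langle x_2,\varrho'\rangle\xrightarrow{\circledS^e}\langle x_2'',\varrho'\rangle$) must be matched by a structurally similar transition on the primed side, and one has to check that the quantum state component $\varrho'$ behaves identically — which it does, since $\circledS^e$ does not alter $\varrho$. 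The other potentially delicate point is that, because $\circledS$ itself admits both a probabilistic transition $\langle\circledS,\varrho\rangle\rightsquigarrow\langle\breve{\circledS},\varrho\rangle$ and an immediate $\rightarrow$-transition, one must confirm clauses (5) and (6) of the branching-hp bisimulation definition (the PDF-equality clause $\mu(C_1,C)=\mu(C_2,C)$ and $[\surd]_{R_\varphi}=\{\surd\}$) are still respected in contexts containing $\circledS$; this is straightforward because $\circledS$'s probabilistic step is deterministic and carries probability $1$. Everything else is a routine case analysis that parallels the already-established congruence theorems, so in the write-up I would state the three facts explicitly, point to the earlier theorems for the bulk of the operators, and give the $\leftmerge$-with-$\circledS^e$ case in slightly more detail, leaving the remaining checks to the reader as is done throughout this section.
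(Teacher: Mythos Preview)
Your proposal is correct and follows exactly the pattern the paper uses for all its congruence theorems; in fact the paper's own proof for this statement is simply ``We leave the proof to the readers as an excise,'' so your outline is already more detailed than what the paper provides. The three-fact decomposition you describe (equivalence, preservation by old operators via the earlier congruence theorems, and the additional check for the new $\leftmerge$-with-$\circledS^e$ rules from Table~\ref{TRForENT2}) is precisely what the reader is implicitly expected to supply.
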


\begin{proof}
We leave the proof to the readers as an excise.
\end{proof}

\begin{theorem}[Soundness of $qAPPTC^{sl}_{\tau}$ with guarded linear recursion and shadow constant]
Let $x$ and $y$ be closed $qAPPTC^{sl}_{\tau}$ with guarded linear recursion and shadow constant terms. If $qAPPTC^{sl}_{\tau}$ with guarded linear recursion and shadow constant$\vdash x=y$, then

\begin{enumerate}
  \item $x\approx_{prbs}^{sl} y$;
  \item $x\approx_{prbp}^{sl} y$;
  \item $x\approx_{prbhp}^{sl} y$;
  \item $x\approx_{prbhhp}^{sl} y$.
\end{enumerate}
\end{theorem}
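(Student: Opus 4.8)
The plan is to follow the same template used for every soundness result in this chapter. By the congruence theorem for $qAPPTC^{sl}_{\tau}$ with guarded linear recursion and shadow constant stated immediately above, each of $\approx_{prbs}^{sl}$, $\approx_{prbp}^{sl}$, $\approx_{prbhp}^{sl}$ and $\approx_{prbhhp}^{sl}$ is an equivalence relation and a congruence with respect to every operator of the theory (including $\leftmerge$ and $\tau_I$ and the prefixing by $\circledS$). Hence it suffices to show, for each axiom $s=t$ of the theory, that $s$ and $t$ are related by all four equivalences; an easy induction on the length of a derivation of $qAPPTC^{sl}_{\tau}\textrm{ with }\ldots\vdash x=y$, using reflexivity, symmetry, transitivity and the substitution (congruence) rules, then yields the four claims for arbitrary provably equal closed terms.

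First I would dispose of everything that is inherited. The soundness of $qAPPTC^{sl}_{\tau}$ with guarded linear recursion (the preceding soundness theorem), together with the soundness of $VR_1,VR_2,VR_3$, already establishes the equivalences for all laws in Tables~\ref{AxiomsForqBAPTC}, \ref{AxiomsForqAPTC}, \ref{AxiomsForqAPTC2}, \ref{AxiomsForqTau2}, \ref{AxiomsForqAbstraction2}, for the principles $RDP$ and $RSP$ in Table~\ref{RDPRSP}, and for the recursive verification rules in Table~\ref{RVR}. So the only genuinely new proof obligations are the shadow-constant axioms $SC1$--$SC10$ and $L17$ of Table~\ref{AxiomsForQE2}, to be checked against the transition rules for $\circledS$ in Table~\ref{TRForENT2}.

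Then for each shadow-constant axiom I would exhibit an explicit (weakly posetal, downward closed) relation on quantum configurations witnessing the equivalence. For $SC1$ ($\circledS\cdot x=x$) and $SC2$ ($x\cdot\circledS=x$) the key observation is that $\langle\circledS,\varrho\rangle\rightarrow\langle\surd,\varrho\rangle$ leaves $\varrho$ unchanged and carries no visible label or location, so $\circledS$ is absorbed exactly as a completed silent-free step of empty effect. For $SC3$--$SC10$ the point is that the paired rules in Table~\ref{TRForENT2} force the companion $\circledS^e$ on the other branch of $\leftmerge$ to fire simultaneously at the same resulting state $\varrho'$; consequently $e\leftmerge\circledS^e$, $\circledS^e\leftmerge e$, and their prefixed and continued variants have precisely the single-event transitions of $e$ (respectively of $e\cdot(x\between y)$), with identical quantum configurations on both sides, while the degenerate probabilistic transition $\rightsquigarrow$ matches trivially because the associated PDF assigns probability $1$ to the $\breve{\phantom{e}}$-forms. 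For $L17$ ($loc::\circledS=\circledS$) the location-prefix rule applied to $\langle\circledS,\varrho\rangle\rightarrow\langle\surd,\varrho\rangle$ again produces no observable label. In each case the candidate relation is downward closed, which lifts the hp-case to the hhp-case, and pairwise concurrency of events is preserved, which yields the step- and pomset-cases from the single-event analysis; condition (4), $[\surd]_{R_{\varphi}}=\{\surd\}$, and the equal-$\mu$ condition hold by construction, and rootedness is respected since $\circledS$ contributes no $\tau$-steps.

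The main obstacle I anticipate is not any single computation but the bookkeeping around the quantum state in the synchronous entanglement rules: one must verify that the effect of $e$ recorded in $\varrho'$ is the same on the two sides of each $SC$ axiom, that no spurious additional transition (an extra entanglement move) is enabled on one side and not the other, and in particular that the branch written \emph{without} an explicit $\circledS^e$ never acquires an extra move. To handle all of Table~\ref{AxiomsForQE2} uniformly rather than axiom by axiom I would first isolate the lemma that $\circledS^e$ acts as a ``shadow'' that is transparent to the quantum state and fires only in lockstep with the matching $e$ under $\leftmerge$; once that lemma is in place, the four equivalences for $SC1$--$SC10$ and $L17$ follow by the standard relation-exhibiting argument, completing the induction.
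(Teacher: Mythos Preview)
Your proposal is correct and follows exactly the template the paper uses for all of its soundness results: reduce to per-axiom checks via the congruence theorem, inherit soundness of the old axioms from the earlier soundness theorems, and verify only the new shadow-constant laws of Table~\ref{AxiomsForQE2} against the transition rules in Table~\ref{TRForENT2}. In fact you supply considerably more detail than the paper does here: the paper's own proof of this particular theorem is simply ``We leave the proof to the readers as an excise,'' so your outline (including the per-axiom sketches for $SC1$--$SC10$ and $L17$ and the downward-closure remark lifting hp to hhp) is already more substantive than what appears in the text.
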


\begin{proof}
We leave the proof to the readers as an excise.
\end{proof}

\begin{theorem}[Completeness of $qAPPTC^{sl}_{\tau}$ with guarded linear recursion and shadow constant]
Let $p$ and $q$ are closed $qAPPTC^{sl}_{\tau}$ with guarded linear recursion and shadow constant terms, then,

\begin{enumerate}
  \item if $p\approx_{prbs}^{sl} q$ then $p=q$;
  \item if $p\approx_{prbp}^{sl} q$ then $p=q$;
  \item if $p\approx_{prbhp}^{sl} q$ then $p=q$;
  \item if $p\approx_{prbhhp}^{sl} q$ then $p=q$.
\end{enumerate}
\end{theorem}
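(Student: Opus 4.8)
The final statement to prove is the Completeness theorem for $qAPPTC^{sl}_{\tau}$ with guarded linear recursion and shadow constant: if $p \approx_{prbx}^{sl} q$ (for $x \in \{s, p, hp, hhp\}$) then $p = q$. Looking at the pattern throughout the excerpt, every analogous completeness result has been proved by a uniform reduction strategy: reduce to the corresponding completeness result of the underlying classical theory (here $qAPPTC^{sl}_{\tau}$ with guarded linear recursion and shadow constant, or ultimately $APPTC^{sl}$ with the appropriate extensions from \cite{LOC2}), using the observation that probabilistic static location truly concurrent bisimulation equivalence over quantum configurations decomposes into (i) the bisimilarity of the underlying classical process terms, and (ii) the equality of the quantum states.

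The plan is as follows. First I would invoke the Elimination theorem for $qAPPTC^{sl}_{\tau}$ with guarded linear recursion and shadow constant, together with the earlier Elimination theorem for $qAPPTC^{sl}_{\tau}$ with guarded linear recursion (Theorem ``Elimination theorem of $qAPPTC^{sl}_{\tau}$ with guarded linear recursion and shadow constant''), to reduce both $p$ and $q$ to process terms of the form $\langle X_1 \mid E \rangle$ with $E$ a guarded linear recursive specification not containing the shadow constant. This handles the shadow-constant layer by the axioms $SC1$--$SC10$ and $L17$ in Table \ref{AxiomsForQE2}. Second, I would appeal to the definition of the probabilistic static location rooted branching truly concurrent bisimulation equivalences to split $p \approx_{prbx}^{sl} q$ into the classical rooted branching bisimilarity between the ``erased'' classical terms underlying $p$ and $q$ and the requirement that corresponding configurations carry the same quantum state $\varrho$. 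Third, I would apply the completeness of $APPTC^{sl}_{\tau}$ with guarded linear recursion (Theorem \ref{CAPTC_GTAUG} and its classical antecedent in \cite{LOC2}), and of the shadow-constant extension, to conclude that the underlying classical terms are provably equal; lifting this derivation back through the conservativity results (Conservitivity of $qAPPTC^{sl}_{\tau}$ with guarded linear recursion and shadow constant) yields $qAPPTC^{sl}_{\tau}$ with guarded linear recursion and shadow constant $\vdash p = q$.

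The step I expect to be the main obstacle is the interaction between the shadow constant $\circledS$ and the rooted branching abstraction layer: one must check that the transition rules of Table \ref{TRForENT2} together with the silent-step and $\tau_I$ rules do not create any ``new'' behavior (in particular, no $\tau$-loops or entanglement-induced deadlocks) that would escape the classical completeness argument, and that the quantum-state bookkeeping under $effect(\cdot,\varrho)$ and $\tau(\varrho)$ is preserved when the shadow constant fires synchronously with its partner action $e$. Concretely, the delicate point is verifying that each instance where $\circledS^e$ appears in a $\leftmerge$-context (axioms $SC3$--$SC10$) is matched in the operational semantics in a way compatible with $\approx_{prbhhp}^{sl}$, since hereditary history-preserving equivalence is the most rigid and is where downward-closure can fail. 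Following the paper's established style, I would discharge this by noting that the shadow constant was designed precisely so that $e \leftmerge \circledS^e = e$ holds both axiomatically and semantically, and defer the routine verification to the reader, closing with the remark that the result then follows from the completeness of the classical counterpart in \cite{LOC2}.

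\begin{proof}
According to the definition of probabilistic static location rooted branching truly concurrent bisimulation equivalences $\approx_{prbp}^{sl}$, $\approx_{prbs}^{sl}$, $\approx_{prbhp}^{sl}$ and $\approx_{prbhhp}^{sl}$, $p\approx_{prbp}^{sl}q$, $p\approx_{prbs}^{sl}q$, $p\approx_{prbhp}^{sl}q$ and $p\approx_{prbhhp}^{sl}q$ implies both the rooted branching bisimilarities between $p$ and $q$, and also that they are in the same quantum states. By the Elimination theorem we may assume $p$ and $q$ are of the form $\langle X_1\mid E\rangle$ with $E$ a guarded linear recursive specification, so that the shadow constant is eliminated via the axioms $SC1$--$SC10$ and $L17$. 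According to the completeness of $APPTC^{sl}_{\tau}$ with guarded linear recursion and shadow constant (please refer to \cite{LOC2} for details), we can get the completeness of $qAPPTC^{sl}_{\tau}$ with guarded linear recursion and shadow constant.
\end{proof}
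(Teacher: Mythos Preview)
Your proposal is correct and follows the same reduction strategy the paper uses for every other completeness theorem in the quantum extensions: unpack the bisimulation on quantum configurations into classical bisimilarity plus agreement of quantum states, then invoke the completeness of the underlying classical theory from \cite{LOC2}. For this particular theorem the paper's own proof is simply ``We leave the proof to the readers as an excise,'' so your write-up is in fact more detailed than what the paper provides; your additional use of the Elimination theorem to remove the shadow constant via $SC1$--$SC10$ and $L17$ is the natural step one would expect and is implicit in the paper's pattern.
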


\begin{proof}
We leave the proof to the readers as an excise.
\end{proof}

\subsection{Unification of Quantum and Classical Computing for Closed Quantum Systems}\label{uni2}

We give the transition rules under quantum configuration for traditional atomic actions (events) $e'\in\mathbb{E}$ as Table \ref{TRForBPA5} shows.

\begin{center}
    \begin{table}
        $$\frac{}{\langle e',\varrho\rangle\rightsquigarrow\langle\breve{e},\varrho\rangle}$$
        $$\frac{\langle x,\varrho\rangle\rightsquigarrow \langle x',\varrho\rangle}{\langle x\cdot y,\varrho\rangle\rightsquigarrow \langle x'\cdot y,\varrho\rangle}$$
        $$\frac{\langle x,\varrho\rangle\rightsquigarrow \langle x',\varrho\rangle\quad \langle y,\varrho\rangle\rightsquigarrow \langle y',\varrho\rangle}{\langle x+y,\varrho\rangle\rightsquigarrow \langle x'+y',\varrho\rangle}$$
        $$\frac{\langle x,\varrho\rangle\rightsquigarrow \langle x',\varrho\rangle}{\langle x\boxplus_{\pi}y,\varrho\rangle\rightsquigarrow \langle x',\varrho\rangle}\quad \frac{\langle y,\varrho\rangle\rightsquigarrow \langle y',\varrho\rangle}{\langle x\boxplus_{\pi}y,\varrho\rangle\rightsquigarrow \langle y',\varrho\rangle}$$

        $$\frac{}{\langle e',\varrho\rangle\xrightarrow[\epsilon]{e'}\langle\surd,\varrho\rangle}\quad \frac{}{\langle loc::e',\varrho\rangle\xrightarrow[loc]{e'}\langle\surd,\varrho'\rangle}$$
        $$\frac{\langle x,\varrho\rangle\xrightarrow[u]{e'}\langle x',\varrho'\rangle}{\langle loc::x,\varrho\rangle\xrightarrow[loc\ll u]{e'}\langle loc::x',\varrho'\rangle}$$
        $$\frac{\langle x,\varrho\rangle\xrightarrow[u]{e'}\langle\surd,\varrho\rangle}{\langle x+y,\varrho\rangle\xrightarrow[u]{e'}\langle\surd,\varrho\rangle}$$
        $$\frac{\langle x,\varrho\rangle\xrightarrow[u]{e'}\langle x',\varrho\rangle}{\langle x+y,\varrho\rangle\xrightarrow[u]{e'}\langle x',\varrho\rangle}$$
        $$\frac{\langle y,\varrho\rangle\xrightarrow[u]{e'}\langle\surd,\varrho\rangle}{\langle x+y,\varrho\rangle\xrightarrow[u]{e'}\langle\surd,\varrho\rangle}$$
        $$\frac{\langle y,\varrho\rangle\xrightarrow[u]{e'}\langle y',\varrho\rangle}{\langle x+y,\varrho\rangle\xrightarrow[u]{e'}\langle y',\varrho\rangle}$$
        $$\frac{\langle x,\varrho\rangle\xrightarrow[u]{e'}\langle\surd,\varrho\rangle}{\langle x\cdot y,\varrho\rangle\xrightarrow[u]{e'}\langle y,\varrho\rangle}$$
        $$\frac{\langle x,\varrho\rangle\xrightarrow[u]{e'}\langle x',\varrho\rangle}{\langle x\cdot y,\varrho\rangle\xrightarrow[u]{e'}\langle x'\cdot y,\varrho\rangle}$$
        \caption{Transition rules of BAPTC under quantum configuration}
        \label{TRForBPA5}
    \end{table}
\end{center}

And the axioms for traditional actions are the same as those of $qBAPTC^{sl}$. And it is natural can be extended to $qAPPTC^{sl}$, recursion and abstraction. So, quantum and classical computing
are unified under the framework of $qAPPTC^{sl}$ for closed quantum systems.

\newpage\section{Applications of qAPPTC with Localities}\label{aqapptcl}

Quantum and classical computing in closed systems are unified with qAPPTC with localities, which have the same equational logic and the same quantum configuration based operational semantics.
The unification can be used widely in verification for the behaviors of quantum and classical computing mixed systems with distribution characteristics. In this chapter, we show its usage in verification of the
distributed quantum communication protocols.

\subsection{Verification of Quantum Teleportation Protocol}\label{VQT6}

Quantum teleportation \cite{QT} is a famous quantum protocol in quantum information theory to teleport an unknown quantum state by sending only classical information, provided that the sender and the receiver, Alice and Bob, shared an entangled state in advance. Firstly, we introduce the basic quantum teleportation protocol briefly, which is illustrated in Figure \ref{QT}. In this section, we show how to process quantum entanglement in an implicit way.

\begin{enumerate}
  \item EPR generates 2-qubits entangled EPR pair $q=q_1\otimes q_2$, and he sends $q_1$ to Alice through quantum channel $Q_A$ and $q_2$ to Bob through quantum channel $Q_B$;
  \item Alice receives $q_1$, after some preparations, she measures on $q_1$, and sends the measurement results $x$ to Bob through classical channel $P$;
  \item Bob receives $q_2$ from EPR, and also the classical information $x$ from Alice. According to $x$, he chooses specific Pauli transformation on $q_2$.
\end{enumerate}

\begin{figure}
  \centering
  \includegraphics{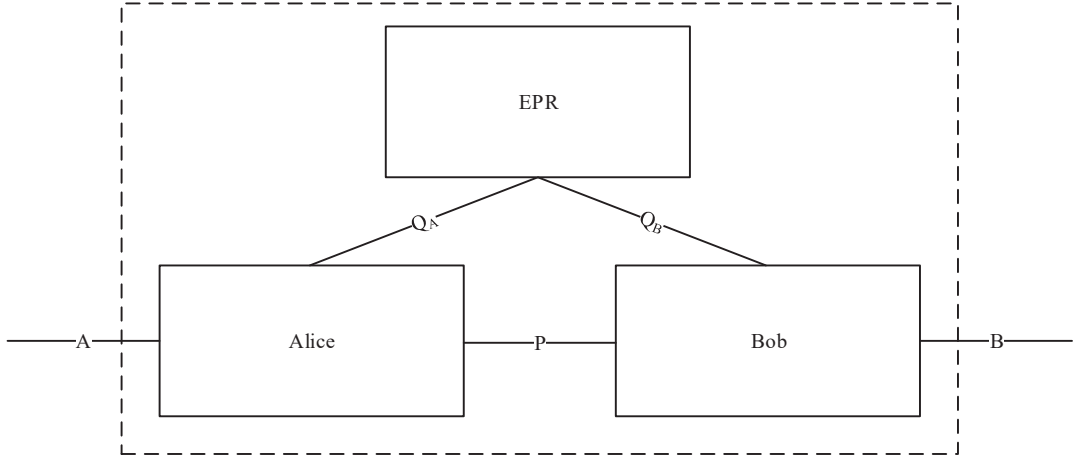}
  \caption{Quantum teleportation protocol.}
  \label{QT}
\end{figure}

We re-introduce the basic quantum teleportation protocol in an abstract way with more technical details as Figure \ref{QT} illustrates.

Now, we assume the generation of 2-qubits $q$ through two unitary operators $Set[q]$ and $H[q]$. EPR sends $q_1$ to Alice through the quantum channel $Q_A$ by quantum communicating action $send_{Q_A}(q_1)$ and Alice receives $q_1$ through $Q_A$ by quantum communicating action $receive_{Q_A}(q_1)$. Similarly, for Bob, those are $send_{Q_B}(q_2)$ and $receive_{Q_B}(q_2)$. After Alice receives $q_1$, she does some preparations, including a unitary transformation $CNOT$ and a Hadamard transformation $H$, then Alice do measurement $M=\sum^3_{i=0}M_i$, and sends measurement results $x$ to Bob through the public classical channel $P$ by classical communicating action $send_{P}(x)$, and Bob receives $x$ through channel $P$ by classical communicating action $receive_{P}(x)$. According to $x$, Bob performs specific Pauli transformations $\sigma_x$ on $q_2$. Let Alice, Bob and EPR be a system $ABE$ and let interactions between Alice, Bob and EPR be internal actions. $ABE$ receives external input $D_i$ through channel $A$ by communicating action $receive_A(D_i)$ and sends results $D_o$ through channel $B$ by communicating action $send_B(D_o)$. Note that the entangled EPR pair $q=q_1\otimes q_2$ is within $ABE$, so quantum entanglement can be processed implicitly.

Then the state transitions of EPR can be described as follows.

\begin{eqnarray}
&&E=loc_E::(Set[q]\cdot E_1)\nonumber\\
&&E_1=H[q]\cdot E_2\nonumber\\
&&E_2=send_{Q_A}(q_1)\cdot E_3\nonumber\\
&&E_3=send_{Q_B}(q_2)\cdot E\nonumber
\end{eqnarray}

And the state transitions of Alice can be described as follows.

\begin{eqnarray}
&&A=loc_A::(\sum_{D_i\in \Delta_i}receive_A(D_i)\cdot A_1)\nonumber\\
&&A_1=receive_{Q_A}(q_1)\cdot A_2\nonumber\\
&&A_2=CNOT\cdot A_3\nonumber\\
&&A_3=H\cdot A_4\nonumber\\
&&A_4=(M_0\cdot send_P(0)\boxplus_{\frac{1}{4}}M_1\cdot send_P(1)\boxplus_{\frac{1}{4}}M_2\cdot send_P(2)\boxplus_{\frac{1}{4}}M_3\cdot send_P(3))\cdot A\nonumber
\end{eqnarray}

where $\Delta_i$ is the collection of the input data.

And the state transitions of Bob can be described as follows.

\begin{eqnarray}
&&B=loc_B::(receive_{Q_B}(q_2)\cdot B_1)\nonumber\\
&&B_1=(receive_P(0)\cdot\sigma_0\boxplus_{\frac{1}{4}}receive_P(1)\cdot\sigma_1\boxplus_{\frac{1}{4}}receive_P(2) \cdot\sigma_2\boxplus_{\frac{1}{4}}receive_P(3)\cdot\sigma_3)\cdot B_2\nonumber\\
&&B_2=\sum_{D_o\in\Delta_o}send_B(D_o)\cdot B\nonumber
\end{eqnarray}

where $\Delta_o$ is the collection of the output data.

The send action and receive action of the same data through the same channel can communicate each other, otherwise, a deadlock $\delta$ will be caused. We define the following communication functions.

\begin{eqnarray}
&&\gamma(send_{Q_A}(q_1),receive_{Q_A}(q_1))\triangleq c_{Q_A}(q_1)\nonumber\\
&&\gamma(send_{Q_B}(q_2),receive_{Q_B}(q_2))\triangleq c_{Q_B}(q_2)\nonumber\\
&&\gamma(send_P(0),receive_P(0))\triangleq c_P(0)\nonumber\\
&&\gamma(send_P(1),receive_P(1))\triangleq c_P(1)\nonumber\\
&&\gamma(send_P(2),receive_P(2))\triangleq c_P(2)\nonumber\\
&&\gamma(send_P(3),receive_P(3))\triangleq c_P(3)\nonumber
\end{eqnarray}

Let $A$, $B$ and $E$ in parallel, then the system $ABE$ can be represented by the following process term.

$$\tau_I(\partial_H(\Theta(A\between B\between E)))$$

where $H=\{send_{Q_A}(q_1), receive_{Q_A}(q_1), send_{Q_B}(q_2), receive_{Q_B}(q_2),\\
send_P(0), receive_P(0), send_P(1), receive_P(1),\\
send_P(2), receive_P(2), send_P(3), receive_P(3)\}$ and $I=\{Set[q], H[q], CNOT, H, M_0, M_1,\\ M_2, M_3, \sigma_0, \sigma_1, \sigma_2, \sigma_3, \\ c_{Q_A}(q_1), c_{Q_B}(q_2), c_P(0), c_P(1), c_P(2), c_P(3)\}$.

Then we get the following conclusion.

\begin{theorem}
The basic quantum teleportation protocol $\tau_I(\partial_H(\Theta(A\between B\between E)))$ can exhibit desired external behaviors.
\end{theorem}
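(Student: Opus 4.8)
The plan is to follow the same strategy used for the preceding protocol verifications (BB84, E91, \ldots): compute a closed form for $\tau_I(\partial_H(\Theta(A\between B\between E)))$ and show it equals the one-step ``pure interface'' process
$$\sum_{D_i\in\Delta_i}\sum_{D_o\in\Delta_o}loc_A::receive_A(D_i)\leftmerge loc_B::send_B(D_o)\leftmerge\tau_I(\partial_H(\Theta(A\between B\between E))),$$
which by construction only performs an input on channel $A$, an output on channel $B$, and then recurses, i.e.\ it exhibits exactly the desired external behaviour. First I would unfold the three linear recursive specifications for $A$, $B$ and $E$ using $RDP$, and drive $\Theta(A\between B\between E)$ into a basic $qAPPTC^{sl}$ term by repeated application of the parallel-expansion axioms $P1$--$P9$, the communication axioms $C1$--$C8$, the conflict-elimination axioms $CE1$--$CE6$, $PCE1$, and the unless axioms $U1$--$U13$, $PU1$--$PU5$, together with the localisation laws $L5$--$L10$, $PL1$.

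Next I would apply the encapsulation operator $\partial_H$. Using $D1$--$D6$, $L10$, $L11$ (and $PD1$), every summand in which a $send$ or $receive$ over a channel in $H$ appears without its matching partner collapses to $\delta$ and is then removed by $A6$, $A7$; only the successful communications $c_{Q_A}(q_1)$, $c_{Q_B}(q_2)$ and $c_P(0),\ldots,c_P(3)$ survive, and they enforce the intended handshake order: EPR distributes the two halves of $q=q_1\otimes q_2$, Alice receives $q_1$, applies $CNOT$ and $H$, performs the measurement $M=\sum_{i=0}^{3}M_i$, sends the outcome $x$, and Bob receives $x$ and applies $\sigma_x$ to $q_2$. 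Then I would apply $\tau_I$ via $TI1$--$TI6$, $PTI1$, $L14$--$L16$ to rename all internal unitaries, projections, Pauli corrections and communications to $\tau$. Since abstraction creates a $\tau$-cycle around the body of the recursion, I would discharge it with the recursive verification rules $VR_1$--$VR_3$ of Table~\ref{RVR} (or $CFAR$) together with the $\tau$-laws $B1$, $B2$ and $L13$, absorbing the internal $\tau$-prefixes and obtaining the displayed fixed-point equation; $RSP$ then identifies its solution with $\tau_I(\partial_H(\Theta(A\between B\between E)))$ itself, by the completeness of $qAPPTC^{sl}_{\tau}$ with guarded linear recursion and shadow constant. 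Throughout I would track the quantum configuration $\varrho$: because $q_1\otimes q_2$ is generated inside $E$ and never leaves $ABE$, entanglement is handled implicitly by the operational rules, and one verifies summand-by-summand that the four projections $M_0,\ldots,M_3$ (each with probability $\tfrac14$ via $\boxplus_{\frac14}$) are matched with the corresponding corrections $\sigma_0,\ldots,\sigma_3$, so that in every probabilistic branch the post-state of $q_2$ is the teleported state, hence all branches coincide up to $\approx_{prbs}^{sl}$ ($\approx_{prbp}^{sl}$, $\approx_{prbhp}^{sl}$, $\approx_{prbhhp}^{sl}$) and the probabilistic sum collapses cleanly.

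The main obstacle I expect is the bookkeeping where the probabilistic measurement branch $M_0\cdot send_P(0)\boxplus_{\frac14}\cdots\boxplus_{\frac14}M_3\cdot send_P(3)$ must synchronise under $\partial_H$ and $\Theta$ with Bob's matching probabilistic choice of $\sigma_i$: one has to push the $\boxplus_{\pi}$ distributions through with $PM1$--$PM4$, $PU4$, $PU5$, $PCE1$, $PL1$ so that the outcomes and corrections are paired without leaving spurious $\delta$ summands, and then check that the resulting state-change on $\varrho$ is branch-independent. The second delicate point is the precise instantiation of $VR_1$--$VR_3$ to erase the abstraction-induced $\tau$-loop while preserving the leading observable prefixes $loc_A::receive_A(D_i)$ and $loc_B::send_B(D_o)$; modulo the soundness of $VR_1,VR_2,VR_3$ already established, this is routine but is where the genuine computation lies, so I would record it carefully rather than defer it.
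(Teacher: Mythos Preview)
Your proposal is correct and targets exactly the same conclusion as the paper: the fixed-point identity
\[
\tau_I(\partial_H(\Theta(A\between B\between E)))=\sum_{D_i\in \Delta_i}\sum_{D_o\in\Delta_o}loc_A::receive_A(D_i)\leftmerge loc_B::send_B(D_o)\leftmerge \tau_I(\partial_H(\Theta(A\between B\between E))).
\]
The paper's own proof simply asserts this equation in one line and concludes; your outline (unfold via $RDP$, expand with $P1$--$P9$, $C1$--$C8$, $CE$-, $U$-, $PM$-, $PU$-laws, prune with $\partial_H$ and $D1$--$D6$, abstract with $\tau_I$ and $TI$-laws, then close the $\tau$-loop with $VR_1$--$VR_3$ and $RSP$) is precisely the computation that underlies that assertion, so you are following the same approach, just spelled out in far more detail than the paper provides.
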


\begin{proof}
We can get $\tau_I(\partial_H(\Theta(A\between B\between E)))=\sum_{D_i\in \Delta_i}\sum_{D_o\in\Delta_o}loc_A::receive_A(D_i)\leftmerge loc_B::send_B(D_o)\leftmerge
\tau_I(\partial_H(\Theta(A\between B\between E)))$. So, the basic quantum teleportation protocol $\tau_I(\partial_H(\Theta(A\between B\between E)))$ can exhibit desired external behaviors.
\end{proof}

\subsection{Verification of BB84 Protocol}\label{VBB86}

The BB84 protocol \cite{BB84} is used to create a private key between two parities, Alice and Bob. Firstly, we introduce the basic BB84 protocol briefly, which is illustrated in Figure \ref{BB84}.

\begin{enumerate}
  \item Alice create two string of bits with size $n$ randomly, denoted as $B_a$ and $K_a$;
  \item Alice generates a string of qubits $q$ with size $n$, and the $i$th qubit in $q$ is $|x_y\rangle$, where $x$ is the $i$th bit of $B_a$ and $y$ is the $i$th bit of $K_a$;
  \item Alice sends $q$ to Bob through a quantum channel $Q$ between Alice and Bob;
  \item Bob receives $q$ and randomly generates a string of bits $B_b$ with size $n$;
  \item Bob measures each qubit of $q$ according to a basis by bits of $B_b$. And the measurement results would be $K_b$, which is also with size $n$;
  \item Bob sends his measurement bases $B_b$ to Alice through a public channel $P$;
  \item Once receiving $B_b$, Alice sends her bases $B_a$ to Bob through channel $P$, and Bob receives $B_a$;
  \item Alice and Bob determine that at which position the bit strings $B_a$ and $B_b$ are equal, and they discard the mismatched bits of $B_a$ and $B_b$. Then the remaining bits of $K_a$ and $K_b$, denoted as $K_a'$ and $K_b'$ with $K_{a,b}=K_a'=K_b'$.
\end{enumerate}

\begin{figure}
  \centering
  \includegraphics{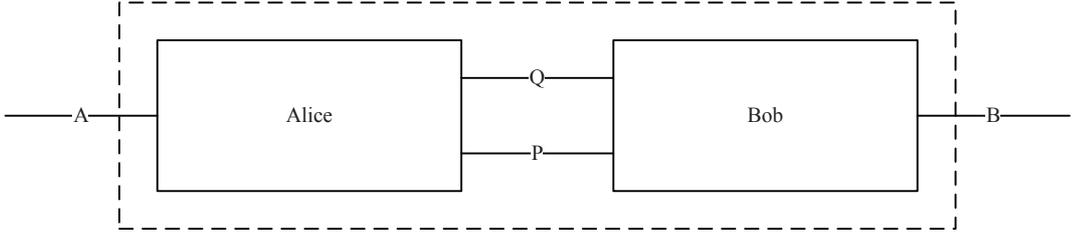}
  \caption{BB84 protocol.}
  \label{BB84}
\end{figure}

We re-introduce the basic BB84 protocol in an abstract way with more technical details as Figure \ref{BB84} illustrates.

Now, we assume a special measurement operation $Rand[q;B_a]=\sum^{2n-1}_{i=0}Rand[q;B_a]_i$ which create a string of $n$ random bits $B_a$ from the $q$ quantum system, and the same as $Rand[q;K_a]=\sum^{2n-1}_{i=0}Rand[q;K_a]_i$, $Rand[q';B_b]=\sum^{2n-1}_{i=0}Rand[q';B_b]_i$. $M[q;K_b]=\sum^{2n-1}_{i=0}M[q;K_b]_i$ denotes the Bob's measurement on $q$. The generation of $n$ qubits $q$ through two unitary operators $Set_{K_a}[q]$ and $H_{B_a}[q]$. Alice sends $q$ to Bob through the quantum channel $Q$ by quantum communicating action $send_{Q}(q)$ and Bob receives $q$ through $Q$ by quantum communicating action $receive_{Q}(q)$. Bob sends $B_b$ to Alice through the public classical channel $P$ by classical communicating action $send_{P}(B_b)$ and Alice receives $B_b$ through channel $P$ by classical communicating action $receive_{P}(B_b)$, and the same as $send_{P}(B_a)$ and $receive_{P}(B_a)$. Alice and Bob generate the private key $K_{a,b}$ by a classical comparison action $cmp(K_{a,b},K_a,K_b,B_a,B_b)$. Let Alice and Bob be a system $AB$ and let interactions between Alice and Bob be internal actions. $AB$ receives external input $D_i$ through channel $A$ by communicating action $receive_A(D_i)$ and sends results $D_o$ through channel $B$ by communicating action $send_B(D_o)$.

Then the state transitions of Alice can be described as follows.

\begin{eqnarray}
&&A=loc_A::(\sum_{D_i\in \Delta_i}receive_A(D_i)\cdot A_1)\nonumber\\
&&A_1=\boxplus_{\frac{1}{2n},i=0}^{2n-1}Rand[q;B_a]_i\cdot A_2\nonumber\\
&&A_2=\boxplus_{\frac{1}{2n},i=0}^{2n-1}Rand[q;K_a]_i\cdot A_3\nonumber\\
&&A_3=Set_{K_a}[q]\cdot A_4\nonumber\\
&&A_4=H_{B_a}[q]\cdot A_5\nonumber\\
&&A_5=send_Q(q)\cdot A_6\nonumber\\
&&A_6=receive_P(B_b)\cdot A_7\nonumber\\
&&A_7=send_P(B_a)\cdot A_8\nonumber\\
&&A_8=cmp(K_{a,b},K_a,K_b,B_a,B_b)\cdot A\nonumber
\end{eqnarray}

where $\Delta_i$ is the collection of the input data.

And the state transitions of Bob can be described as follows.

\begin{eqnarray}
&&B=loc_B::(receive_Q(q)\cdot B_1)\nonumber\\
&&B_1=\boxplus_{\frac{1}{2n},i=0}^{2n-1}Rand[q';B_b]_i\cdot B_2\nonumber\\
&&B_2=\boxplus_{\frac{1}{2n},i=0}^{2n-1}M[q;K_b]_i\cdot B_3\nonumber\\
&&B_3=send_P(B_b)\cdot B_4\nonumber\\
&&B_4=receive_P(B_a)\cdot B_5\nonumber\\
&&B_5=cmp(K_{a,b},K_a,K_b,B_a,B_b)\cdot B_6\nonumber\\
&&B_6=\sum_{D_o\in\Delta_o}send_B(D_o)\cdot B\nonumber
\end{eqnarray}

where $\Delta_o$ is the collection of the output data.

The send action and receive action of the same data through the same channel can communicate each other, otherwise, a deadlock $\delta$ will be caused. We define the following communication functions.

\begin{eqnarray}
&&\gamma(send_Q(q),receive_Q(q))\triangleq c_Q(q)\nonumber\\
&&\gamma(send_P(B_b),receive_P(B_b))\triangleq c_P(B_b)\nonumber\\
&&\gamma(send_P(B_a),receive_P(B_a))\triangleq c_P(B_a)\nonumber
\end{eqnarray}

Let $A$ and $B$ in parallel, then the system $AB$ can be represented by the following process term.

$$\tau_I(\partial_H(\Theta(A\between B)))$$

where $H=\{send_Q(q),receive_Q(q),send_P(B_b),receive_P(B_b),send_P(B_a),receive_P(B_a)\}$ and $I=\{Rand[q;B_a]_i, Rand[q;K_a]_i, Set_{K_a}[q], H_{B_a}[q], Rand[q';B_b]_i, M[q;K_b]_i, \\c_Q(q), c_P(B_b), c_P(B_a), cmp(K_{a,b},K_a,K_b,B_a,B_b)\}$.

Then we get the following conclusion.

\begin{theorem}
The basic BB84 protocol $\tau_I(\partial_H(\Theta(A\between B)))$ can exhibit desired external behaviors.
\end{theorem}

\begin{proof}
We can get $\tau_I(\partial_H(\Theta(A\between B)))=\sum_{D_i\in \Delta_i}\sum_{D_o\in\Delta_o}loc_A::receive_A(D_i)\leftmerge loc_B::send_B(D_o)\leftmerge \tau_I(\partial_H(\Theta(A\between B)))$.
So, the basic BB84 protocol $\tau_I(\partial_H(\Theta(A\between B)))$ can exhibit desired external behaviors.
\end{proof}

\subsection{Verification of E91 Protocol}\label{VE916}

E91 protocol\cite{E91} is the first quantum protocol which utilizes entanglement. E91 protocol is used to create a private key between two parities, Alice and Bob. Firstly, we introduce the basic E91 protocol briefly, which is illustrated in Figure \ref{E91}.

\begin{enumerate}
  \item Alice generates a string of EPR pairs $q$ with size $n$, i.e., $2n$ particles, and sends a string of qubits $q_b$ from each EPR pair with $n$ to Bob through a quantum channel $Q$, remains the other string of qubits $q_a$ from each pair with size $n$;
  \item Alice create two string of bits with size $n$ randomly, denoted as $B_a$ and $K_a$;
  \item Bob receives $q_b$ and randomly generates a string of bits $B_b$ with size $n$;
  \item Alice measures each qubit of $q_a$ according to a basis by bits of $B_a$. And the measurement results would be $K_a$, which is also with size $n$;
  \item Bob measures each qubit of $q_b$ according to a basis by bits of $B_b$. And the measurement results would be $K_b$, which is also with size $n$;
  \item Bob sends his measurement bases $B_b$ to Alice through a public channel $P$;
  \item Once receiving $B_b$, Alice sends her bases $B_a$ to Bob through channel $P$, and Bob receives $B_a$;
  \item Alice and Bob determine that at which position the bit strings $B_a$ and $B_b$ are equal, and they discard the mismatched bits of $B_a$ and $B_b$. Then the remaining bits of $K_a$ and $K_b$, denoted as $K_a'$ and $K_b'$ with $K_{a,b}=K_a'=K_b'$.
\end{enumerate}

\begin{figure}
  \centering
  \includegraphics{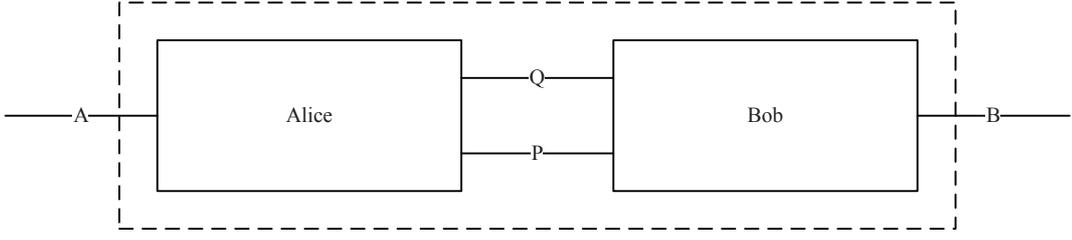}
  \caption{E91 protocol.}
  \label{E91}
\end{figure}

We re-introduce the basic E91 protocol in an abstract way with more technical details as Figure \ref{E91} illustrates.

Now, $M[q_a;K_a]=\sum_{i=0}^{2n-1}M[q_a;K_a]_i$ denotes the Alice's measurement operation of $q_a$, and $\circledS_{M[q_a;K_a]}=\sum_{i=0}^{2n-1}\circledS_{M[q_a;K_a]_i}$ denotes the responding shadow constant; $M[q_b;K_b]=\sum_{i=0}^{2n-1}M[q_b;K_b]_i$ denotes the Bob's measurement operation of $q_b$, and $\circledS_{M[q_b;K_b]}=\sum_{i=0}^{2n-1}\circledS_{M[q_b;K_b]_i}$ denotes the responding shadow constant. Alice sends $q_b$ to Bob through the quantum channel $Q$ by quantum communicating action $send_{Q}(q_b)$ and Bob receives $q_b$ through $Q$ by quantum communicating action $receive_{Q}(q_b)$. Bob sends $B_b$ to Alice through the public channel $P$ by classical communicating action $send_{P}(B_b)$ and Alice receives $B_b$ through channel $P$ by classical communicating action $receive_{P}(B_b)$, and the same as $send_{P}(B_a)$ and $receive_{P}(B_a)$. Alice and Bob generate the private key $K_{a,b}$ by a classical comparison action $cmp(K_{a,b},K_a,K_b,B_a,B_b)$. Let Alice and Bob be a system $AB$ and let interactions between Alice and Bob be internal actions. $AB$ receives external input $D_i$ through channel $A$ by communicating action $receive_A(D_i)$ and sends results $D_o$ through channel $B$ by communicating action $send_B(D_o)$.

Then the state transitions of Alice can be described as follows.

\begin{eqnarray}
&&A=loc_A::(\sum_{D_i\in \Delta_i}receive_A(D_i)\cdot A_1)\nonumber\\
&&A_1=send_Q(q_b)\cdot A_2\nonumber\\
&&A_2=\boxplus_{\frac{1}{2n},i=0}^{2n-1}M[q_a;K_a]_i\cdot A_3\nonumber\\
&&A_3=\boxplus_{\frac{1}{2n},i=0}^{2n-1}\circledS_{M[q_b;K_b]_i}\cdot A_4\nonumber\\
&&A_4=receive_P(B_b)\cdot A_5\nonumber\\
&&A_5=send_P(B_a)\cdot A_6\nonumber\\
&&A_6=cmp(K_{a,b},K_a,K_b,B_a,B_b)\cdot A\nonumber
\end{eqnarray}

where $\Delta_i$ is the collection of the input data.

And the state transitions of Bob can be described as follows.

\begin{eqnarray}
&&B=loc_B::(receive_Q(q_b)\cdot B_1)\nonumber\\
&&B_1=\boxplus_{\frac{1}{2n},i=0}^{2n-1}\circledS_{M[q_a;K_a]_i}\cdot B_2\nonumber\\
&&B_2=\boxplus_{\frac{1}{2n},i=0}^{2n-1}M[q_b;K_b]_i\cdot B_3\nonumber\\
&&B_3=send_P(B_b)\cdot B_4\nonumber\\
&&B_4=receive_P(B_a)\cdot B_5\nonumber\\
&&B_5=cmp(K_{a,b},K_a,K_b,B_a,B_b)\cdot B_6\nonumber\\
&&B_6=\sum_{D_o\in\Delta_o}send_B(D_o)\cdot B\nonumber
\end{eqnarray}

where $\Delta_o$ is the collection of the output data.

The send action and receive action of the same data through the same channel can communicate each other, otherwise, a deadlock $\delta$ will be caused. The quantum operation and its shadow constant pair will lead entanglement occur, otherwise, a deadlock $\delta$ will occur. We define the following communication functions.

\begin{eqnarray}
&&\gamma(send_Q(q_b),receive_Q(q_b))\triangleq c_Q(q_b)\nonumber\\
&&\gamma(send_P(B_b),receive_P(B_b))\triangleq c_P(B_b)\nonumber\\
&&\gamma(send_P(B_a),receive_P(B_a))\triangleq c_P(B_a)\nonumber
\end{eqnarray}

Let $A$ and $B$ in parallel, then the system $AB$ can be represented by the following process term.

$$\tau_I(\partial_H(\Theta(A\between B)))$$

where $H=\{send_Q(q_b),receive_Q(q_b),send_P(B_b),receive_P(B_b),send_P(B_a),receive_P(B_a),\\ M[q_a;K_a]_i, \circledS_{M[q_a;K_a]_i}, M[q_b;K_b]_i, \circledS_{M[q_b;K_b]_i}\}$ and $I=\{c_Q(q_b), c_P(B_b), c_P(B_a), M[q_a;K_a], M[q_b;K_b],\\ cmp(K_{a,b},K_a,K_b,B_a,B_b)\}$.

Then we get the following conclusion.

\begin{theorem}
The basic E91 protocol $\tau_I(\partial_H(\Theta(A\between B)))$ can exhibit desired external behaviors.
\end{theorem}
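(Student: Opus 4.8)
The plan is to proceed exactly as in the verifications of the preceding protocols: expand the whole-parallel operator, use the encapsulation operator $\partial_H$ to discard every mismatched communication, push $\Theta$ and $\triangleleft$ through the resulting sums, abstract away all internal activity with $\tau_I$, and finally recognise that the residue satisfies a guarded linear recursive equation whose unique solution (by $RSP$) is the claimed process term. First I would expand $\Theta(A\between B)$ using $P1$ (so $A\between B = A\parallel B + A\mid B$), together with $CE1$--$CE6$, $PCE1$ for $\Theta$ and $U1$--$U13$, $PU1$--$PU5$ for $\triangleleft$, distributing these operators over $+$ and over the probabilistic sums $\boxplus_{\pi}$ coming from the measurements. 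Because the communicating guards of $A$ and $B$ are sequentialised in complementary order — Alice's $send_Q(q_b)$ against Bob's leading $receive_Q(q_b)$; Alice's measurement $M[q_a;K_a]_i$ paired with Bob's shadow constant $\circledS_{M[q_a;K_a]_i}$ and, symmetrically, $M[q_b;K_b]_i$ with $\circledS_{M[q_b;K_b]_i}$; and the public exchanges of $B_b$ then $B_a$ — every summand other than the synchronising one contains a $send$/$receive$ action from $H$ or a quantum operation/shadow-constant pair, and is therefore sent to $\delta$ by $\partial_H$ via $D2$ (the entanglement synchronisation itself being realised through the shadow-constant laws $SC3$--$SC10$, $L17$).

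Using $A6$ ($x+\delta=x$), $A7$, $C7$, $C8$, $P9$, $D1$--$D6$, $PD1$, the locality laws $L1$--$L11$, $PL1$, and $PA4$, $PM1$--$PM4$ to keep the probabilistic sums in place, I would then collapse $\partial_H(\Theta(A\between B))$ to the single linear thread: $loc_A{::}receive_A(D_i)$, followed by the chain of internal communications $c_Q(q_b)$, the correlated measurements with their shadow constants, $c_P(B_b)$, $c_P(B_a)$, $cmp(K_{a,b},K_a,K_b,B_a,B_b)$, and finally $loc_B{::}send_B(D_o)$, all under $\sum_{D_i}\sum_{D_o}$. Applying $\tau_I$ next, by $TI1$, $TI2$, $L15$, $L16$ every quantum operation, measurement, internal communication action $c_Q(q_b)$, $c_P(B_b)$, $c_P(B_a)$ and the comparison action becomes $\tau$; by $TI4$, $TI5$, $TI6$, $PTI1$, $L14$ the process becomes $loc_A{::}receive_A(D_i)\cdot\tau\cdots\tau\cdot loc_B{::}send_B(D_o)\cdot(\cdots)$ wrapped in the $\boxplus$'s. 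The branching-bisimulation axioms $B1$, $B2$ (using $PA3$, $x\boxplus_{\pi}x=x$, and $A3$ to fold the $2n$ identical probabilistic branches so that the side conditions $y=y+y$, $z=z+z$ are met) erase the leading and trailing $\tau$'s, leaving the equation $X=\sum_{D_i}\sum_{D_o} loc_A{::}receive_A(D_i)\leftmerge loc_B{::}send_B(D_o)\leftmerge X$. Since $\tau_I(\partial_H(\Theta(A\between B)))$ satisfies this guarded linear specification, $RSP$ yields $\tau_I(\partial_H(\Theta(A\between B)))=\sum_{D_i\in\Delta_i}\sum_{D_o\in\Delta_o}loc_A{::}receive_A(D_i)\leftmerge loc_B{::}send_B(D_o)\leftmerge\tau_I(\partial_H(\Theta(A\between B)))$, which exhibits the desired external behaviour of repeatedly absorbing an input on $A$ and emitting an output on $B$.

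The main obstacle will be the interaction of probabilism with the entanglement mechanism: one must verify that each probabilistic branch $M[q_a;K_a]_i$ of Alice's measurement genuinely synchronises, under $\partial_H$ applied to $\circledS_{M[q_a;K_a]_i}$, with the matching branch on Bob's side, so that no branch escapes as a deadlock, and that after abstraction all $2n$ branches coincide and can legitimately be folded by $PA3$ before invoking $B1$, $B2$. The soundness of $B1$, $B2$, $SC1$--$SC10$ and $RSP$ in this setting is supplied by the congruence and soundness theorems for $qAPPTC^{sl}_{\tau}$ with guarded linear recursion and shadow constant established above, which we may invoke; the remaining manipulations are the routine equational calculations already carried out for the earlier protocols, so I would only indicate them and refer back.
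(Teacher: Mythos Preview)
Your proposal is correct and follows the same approach as the paper: the paper's own proof consists of the single sentence ``We can get $\tau_I(\partial_H(\Theta(A\between B)))=\sum_{D_i\in \Delta_i}\sum_{D_o\in\Delta_o}loc_A::receive_A(D_i)\leftmerge loc_B::send_B(D_o)\leftmerge \tau_I(\partial_H(\Theta(A\between B)))$,'' and your outline is precisely a detailed unpacking of the equational manipulations (expansion via $P1$, elimination of non-synchronising summands by $\partial_H$, shadow-constant synchronisation, abstraction by $\tau_I$, folding of the probabilistic branches by $PA3$) that lead to this equation. The only minor remark is that the final invocation of $RSP$ is not strictly required here: once you have derived that the process term equals the displayed recursive form of itself, that equation \emph{is} the desired external behaviour; $RSP$ would only be needed if you wanted to identify the process with a canonical $\langle X\mid E\rangle$ solution.
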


\begin{proof}
We can get $\tau_I(\partial_H(\Theta(A\between B)))=\sum_{D_i\in \Delta_i}\sum_{D_o\in\Delta_o}loc_A::receive_A(D_i)\leftmerge loc_B::send_B(D_o)\leftmerge \tau_I(\partial_H(\Theta(A\between B)))$.
So, the basic E91 protocol $\tau_I(\partial_H(\Theta(A\between B)))$ can exhibit desired external behaviors.
\end{proof}

\subsection{Verification of B92 Protocol}\label{VB926}

The famous B92 protocol\cite{B92} is a quantum key distribution protocol, in which quantum information and classical information are mixed.

The B92 protocol is used to create a private key between two parities, Alice and Bob. B92 is a protocol of quantum key distribution (QKD) which uses polarized photons as information carriers. Firstly, we introduce the basic B92 protocol briefly, which is illustrated in Figure \ref{B92}.

\begin{enumerate}
  \item Alice create a string of bits with size $n$ randomly, denoted as $A$.
  \item Alice generates a string of qubits $q$ with size $n$, carried by polarized photons. If $A_i=0$, the ith qubit is $|0\rangle$; else if $A_i=1$, the ith qubit is $|+\rangle$.
  \item Alice sends $q$ to Bob through a quantum channel $Q$ between Alice and Bob.
  \item Bob receives $q$ and randomly generates a string of bits $B$ with size $n$.
  \item If $B_i=0$, Bob chooses the basis $\oplus$; else if $B_i=1$, Bob chooses the basis $\otimes$. Bob measures each qubit of $q$ according to the above basses. And Bob builds a String of bits $T$, if the measurement produces $|0\rangle$ or $|+\rangle$, then $T_i=0$; else if the measurement produces $|1\rangle$ or $|-\rangle$, then $T_i=1$, which is also with size $n$.
  \item Bob sends $T$ to Alice through a public channel $P$.
  \item Alice and Bob determine that at which position the bit strings $A$ and $B$ are remained for which $T_i=1$. In absence of Eve, $A_i=1-B_i$, a shared raw key $K_{a,b}$ is formed by $A_i$.
\end{enumerate}

\begin{figure}
  \centering
  \includegraphics{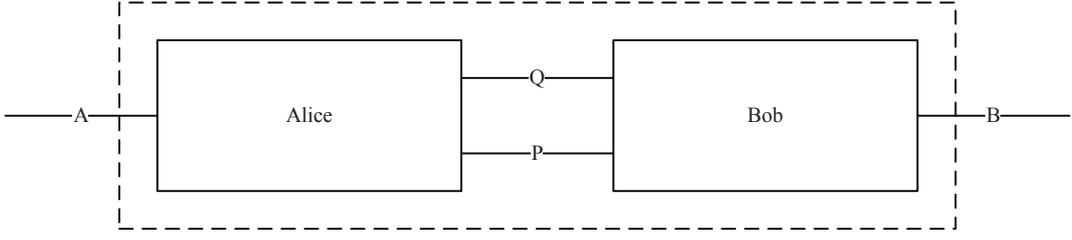}
  \caption{The B92 protocol.}
  \label{B92}
\end{figure}

We re-introduce the basic B92 protocol in an abstract way with more technical details as Figure \ref{B92} illustrates.

Now, we assume a special measurement operation $Rand[q;A]=\sum^{2n-1}_{i=0}Rand[q;A]_i$ which create a string of $n$ random bits $A$ from the $q$ quantum system, and the same as $Rand[q';B]=\sum^{2n-1}_{i=0}Rand[q';B]_i$. $M[q;T]=\sum^{2n-1}_{i=0}M[q;T]_i$ denotes the Bob's measurement operation of $q$. The generation of $n$ qubits $q$ through a unitary operator $Set_{A}[q]$. Alice sends $q$ to Bob through the quantum channel $Q$ by quantum communicating action $send_{Q}(q)$ and Bob receives $q$ through $Q$ by quantum communicating action $receive_{Q}(q)$. Bob sends $T$ to Alice through the public channel $P$ by classical communicating action $send_{P}(T)$ and Alice receives $T$ through channel $P$ by classical communicating action $receive_{P}(T)$. Alice and Bob generate the private key $K_{a,b}$ by a classical comparison action $cmp(K_{a,b},T,A,B)$. Let Alice and Bob be a system $AB$ and let interactions between Alice and Bob be internal actions. $AB$ receives external input $D_i$ through channel $A$ by communicating action $receive_A(D_i)$ and sends results $D_o$ through channel $B$ by communicating action $send_B(D_o)$.

Then the state transition of Alice can be described as follows.

\begin{eqnarray}
&&A=loc_A::(\sum_{D_i\in \Delta_i}receive_A(D_i)\cdot A_1)\nonumber\\
&&A_1=\boxplus_{\frac{1}{2n},i=0}^{2n-1}Rand[q;A]_i\cdot A_2\nonumber\\
&&A_2=Set_{A}[q]\cdot A_3\nonumber\\
&&A_3=send_Q(q)\cdot A_4\nonumber\\
&&A_4=receive_P(T)\cdot A_5\nonumber\\
&&A_5=cmp(K_{a,b},T,A,B)\cdot A\nonumber
\end{eqnarray}

where $\Delta_i$ is the collection of the input data.

And the state transition of Bob can be described as follows.

\begin{eqnarray}
&&B=loc_B::(receive_Q(q)\cdot B_1)\nonumber\\
&&B_1=\boxplus_{\frac{1}{2n},i=0}^{2n-1}Rand[q';B]_i\cdot B_2\nonumber\\
&&B_2=\boxplus_{\frac{1}{2n},i=0}^{2n-1}M[q;T]_i\cdot B_3\nonumber\\
&&B_3=send_P(T)\cdot B_4\nonumber\\
&&B_4=cmp(K_{a,b},T,A,B)\cdot B_5\nonumber\\
&&B_5=\sum_{D_o\in\Delta_o}send_B(D_o)\cdot B\nonumber
\end{eqnarray}

where $\Delta_o$ is the collection of the output data.

The send action and receive action of the same data through the same channel can communicate each other, otherwise, a deadlock $\delta$ will be caused. We define the following communication functions.

\begin{eqnarray}
&&\gamma(send_Q(q),receive_Q(q))\triangleq c_Q(q)\nonumber\\
&&\gamma(send_P(T),receive_P(T))\triangleq c_P(T)\nonumber\\
\end{eqnarray}

Let $A$ and $B$ in parallel, then the system $AB$ can be represented by the following process term.

$$\tau_I(\partial_H(\Theta(A\between B)))$$

where $H=\{send_Q(q),receive_Q(q),send_P(T),receive_P(T)\}$ and $I=\{\boxplus_{\frac{1}{2n},i=0}^{2n-1}Rand[q;A]_i, \\Set_{A}[q], \boxplus_{\frac{1}{2n},i=0}^{2n-1}Rand[q';B]_i, \boxplus_{\frac{1}{2n},i=0}^{2n-1}M[q;T]_i, c_Q(q), c_P(T), cmp(K_{a,b},T,A,B)\}$.

Then we get the following conclusion.

\begin{theorem}
The basic B92 protocol $\tau_I(\partial_H(\Theta(A\between B)))$ can exhibit desired external behaviors.
\end{theorem}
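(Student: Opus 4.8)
The plan is to follow the same strategy used for the preceding quantum key distribution protocols: unfold the recursive specifications for $A$ and $B$ by $RDP$, expand the whole-parallel operator $\between$ using the expansion law $P1$ together with $P4$ and the left-merge and communication axioms of $qAPPTC$ with localities (Tables \ref{AxiomsForqAPTC} and \ref{AxiomsForqAPTC2}), then push $\Theta$, $\partial_H$ and $\tau_I$ inward via their distributivity axioms ($CE1$--$CE6$, $PCE1$, the $D$- and $L$-laws for $\partial_H$, $TI1$--$TI6$, $PTI1$, $L14$--$L16$), and finally collapse the resulting internal $\tau$-loop with the recursive verification rules $VR_1$--$VR_3$. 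This yields the fixed-point identity
$$\tau_I(\partial_H(\Theta(A\between B)))=\sum_{D_i\in \Delta_i}\sum_{D_o\in\Delta_o}loc_A::receive_A(D_i)\leftmerge loc_B::send_B(D_o)\leftmerge \tau_I(\partial_H(\Theta(A\between B))),$$
from which the statement follows at once: the only actions occurring outside $\tau_I$ are the external communications $receive_A(D_i)$ on channel $A$ and $send_B(D_o)$ on channel $B$, so the abstracted system behaves as a relay forwarding each input $D_i$ to the corresponding output $D_o$.

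Concretely, first I would substitute the auxiliary variables $A_1,\dots,A_5$ and $B_1,\dots,B_5$ so that one round of the protocol appears as a sequential composition of the preparation/measurement operations, the channel sends and receives, and the comparison action, all inside the context $receive_A(D_i)\cdot(\cdots)\cdot send_B(D_o)$. Since $H=\{send_Q(q),receive_Q(q),send_P(T),receive_P(T)\}$ consists exactly of matched communication pairs, $\partial_H$ annihilates every summand of the interleaving expansion in which a send or a receive would occur without its partner, leaving only the causally correct synchronised execution; by the communication-function definitions each $send_Q(q)$ with $receive_Q(q)$ becomes $c_Q(q)$ and each $send_P(T)$ with $receive_P(T)$ becomes $c_P(T)$. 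The operator $\Theta$ then leaves the now race-free body unchanged apart from the probabilistic branching already introduced by $\boxplus_{\frac{1}{2n},i=0}^{2n-1}Rand[\,\cdot\,]_i$ and $\boxplus_{\frac{1}{2n},i=0}^{2n-1}M[q;T]_i$, and $\tau_I$ renames all of $Rand[q;A]_i$, $Set_{A}[q]$, $Rand[q';B]_i$, $M[q;T]_i$, $c_Q(q)$, $c_P(T)$ and $cmp(K_{a,b},T,A,B)$ to $\tau$. Applying the $\tau$-laws $B1,B2$ of $qAPPTC^{sl}_{\tau}$ together with the probabilistic idempotence $PA3$ (so that a $\boxplus_{\pi}$ of identical $\tau$-prefixed summands collapses) absorbs the chain of internal steps into a single $\tau$-cycle, which $VR_1$ removes, producing the displayed identity.

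The main obstacle is bookkeeping rather than anything conceptual: the expansion of $A\between B$ produces many interleaving summands, and one must verify that $\partial_H$ indeed kills all of them except the intended causal order (so that, e.g., the $receive_Q(q)$ at the head of $B$ can only fire after $send_Q(q)$ in $A_3$, and $send_P(T)$ in $B_3$ must synchronise with $receive_P(T)$ in $A_4$), and that the probabilistic sums commute past the left-merge and the abstraction operator via $PM1$--$PM4$, $PU4$--$PU5$ and $PTI1$ without changing the external behaviour. All of these manipulations are already justified for $APPTC^{sl}$ in \cite{LOC2} and transfer verbatim to $qAPPTC^{sl}$, because the quantum configuration merely records the data state $\varrho$ and plays no role in the equational reasoning (as reflected in the soundness and completeness theorems of Section \ref{qapptcl}). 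Hence the verification reduces to the same computation as for the other QKD protocols, and I would present it by analogy, spelling out only the final fixed-point equation.
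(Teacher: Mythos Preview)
Your proposal is correct and follows essentially the same approach as the paper: the paper's own proof consists entirely of asserting the fixed-point identity
$\tau_I(\partial_H(\Theta(A\between B)))=\sum_{D_i\in \Delta_i}\sum_{D_o\in\Delta_o}loc_A::receive_A(D_i)\leftmerge loc_B::send_B(D_o)\leftmerge \tau_I(\partial_H(\Theta(A\between B)))$
and concluding, so your outline is in fact a considerably more detailed version of the same derivation, spelling out the expansion, encapsulation, abstraction and $\tau$-collapse steps that the paper leaves implicit.
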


\begin{proof}
We can get $\tau_I(\partial_H(\Theta(A\between B)))=\sum_{D_i\in \Delta_i}\sum_{D_o\in\Delta_o}loc_A::receive_A(D_i)\leftmerge loc_B::send_B(D_o)\leftmerge \tau_I(\partial_H(\Theta(A\between B)))$.
So, the basic B92 protocol $\tau_I(\partial_H(\Theta(A\between B)))$ can exhibit desired external behaviors.
\end{proof}

\subsection{Verification of DPS Protocol}\label{VDPS6}

The famous DPS protocol\cite{DPS} is a quantum key distribution protocol, in which quantum information and classical information are mixed.

The DPS protocol is used to create a private key between two parities, Alice and Bob. DPS is a protocol of quantum key distribution (QKD) which uses pulses of a photon which has nonorthogonal four states. Firstly, we introduce the basic DPS protocol briefly, which is illustrated in Figure \ref{DPS}.

\begin{enumerate}
  \item Alice generates a string of qubits $q$ with size $n$, carried by a series of single photons possily at four time instances.
  \item Alice sends $q$ to Bob through a quantum channel $Q$ between Alice and Bob.
  \item Bob receives $q$ by detectors clicking at the second or third time instance, and records the time into $T$ with size $n$ and which detector clicks into $D$ with size $n$.
  \item Bob sends $T$ to Alice through a public channel $P$.
  \item Alice receives $T$. From $T$ and her modulation data, Alice knows which detector clicked in Bob's site, i.e. $D$.
  \item Alice and Bob have an identical bit string, provided that the first detector click represents "0" and the other detector represents "1", then a shared raw key $K_{a,b}$ is formed.
\end{enumerate}

\begin{figure}
  \centering
  \includegraphics{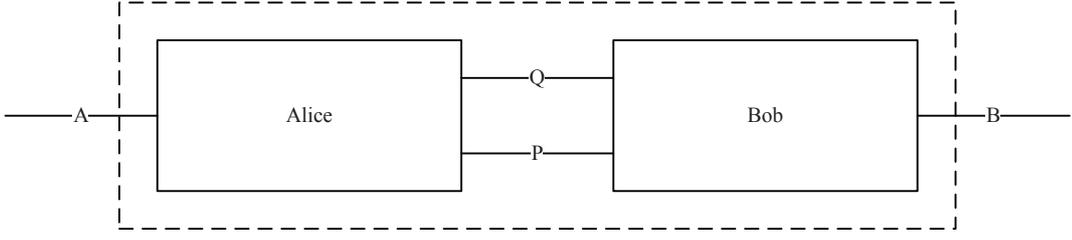}
  \caption{The DPS protocol.}
  \label{DPS}
\end{figure}

We re-introduce the basic DPS protocol in an abstract way with more technical details as Figure \ref{DPS} illustrates.

Now, we assume $M[q;T]=\sum^{2n-1}_{i=0}M[q;T]_i$ denotes the Bob's measurement operation of $q$. The generation of $n$ qubits $q$ through a unitary operator $Set_{A}[q]$. Alice sends $q$ to Bob through the quantum channel $Q$ by quantum communicating action $send_{Q}(q)$ and Bob receives $q$ through $Q$ by quantum communicating action $receive_{Q}(q)$. Bob sends $T$ to Alice through the public channel $P$ by classical communicating action $send_{P}(T)$ and Alice receives $T$ through channel $P$ by classical communicating action $receive_{P}(T)$. Alice and Bob generate the private key $K_{a,b}$ by a classical comparison action $cmp(K_{a,b},D)$. Let Alice and Bob be a system $AB$ and let interactions between Alice and Bob be internal actions. $AB$ receives external input $D_i$ through channel $A$ by communicating action $receive_A(D_i)$ and sends results $D_o$ through channel $B$ by communicating action $send_B(D_o)$.

Then the state transition of Alice can be described as follows.

\begin{eqnarray}
&&A=loc_A::(\sum_{D_i\in \Delta_i}receive_A(D_i)\cdot A_1)\nonumber\\
&&A_1=Set_{A}[q]\cdot A_2\nonumber\\
&&A_2=send_Q(q)\cdot A_3\nonumber\\
&&A_3=receive_P(T)\cdot A_4\nonumber\\
&&A_4=cmp(K_{a,b},D)\cdot A\nonumber
\end{eqnarray}

where $\Delta_i$ is the collection of the input data.

And the state transition of Bob can be described as follows.

\begin{eqnarray}
&&B=loc_B::(receive_Q(q)\cdot B_1)\nonumber\\
&&B_1=\boxplus_{\frac{1}{2n},i=0}^{2n-1}M[q;T]_i\cdot B_2\nonumber\\
&&B_2=send_P(T)\cdot B_3\nonumber\\
&&B_3=cmp(K_{a,b},D)\cdot B_4\nonumber\\
&&B_4=\sum_{D_o\in\Delta_o}send_B(D_o)\cdot B\nonumber
\end{eqnarray}

where $\Delta_o$ is the collection of the output data.

The send action and receive action of the same data through the same channel can communicate each other, otherwise, a deadlock $\delta$ will be caused. We define the following communication functions.

\begin{eqnarray}
&&\gamma(send_Q(q),receive_Q(q))\triangleq c_Q(q)\nonumber\\
&&\gamma(send_P(T),receive_P(T))\triangleq c_P(T)\nonumber\\
\end{eqnarray}

Let $A$ and $B$ in parallel, then the system $AB$ can be represented by the following process term.

$$\tau_I(\partial_H(\Theta(A\between B)))$$

where $H=\{send_Q(q),receive_Q(q),send_P(T),receive_P(T)\}$ and $I=\{Set_{A}[q], \\ \boxplus_{\frac{1}{2n},i=0}^{2n-1}M[q;T]_i, c_Q(q), c_P(T), cmp(K_{a,b},D)\}$.

Then we get the following conclusion.

\begin{theorem}
The basic DPS protocol $\tau_I(\partial_H(\Theta(A\between B)))$ can exhibit desired external behaviors.
\end{theorem}

\begin{proof}
We can get $\tau_I(\partial_H(\Theta(A\between B)))=\sum_{D_i\in \Delta_i}\sum_{D_o\in\Delta_o}loc_A::receive_A(D_i)\leftmerge loc_B::send_B(D_o)\leftmerge \tau_I(\partial_H(\Theta(A\between B)))$.
So, the basic DPS protocol $\tau_I(\partial_H(\Theta(A\between B)))$ can exhibit desired external behaviors.
\end{proof}

\subsection{Verification of BBM92 Protocol}\label{VBBM926}

The famous BBM92 protocol\cite{BBM92} is a quantum key distribution protocol, in which quantum information and classical information are mixed.

The BBM92 protocol is used to create a private key between two parities, Alice and Bob. BBM92 is a protocol of quantum key distribution (QKD) which uses EPR pairs as information carriers. Firstly, we introduce the basic BBM92 protocol briefly, which is illustrated in Figure \ref{BBM92}.

\begin{enumerate}
  \item Alice generates a string of EPR pairs $q$ with size $n$, i.e., $2n$ particles, and sends a string of qubits $q_b$ from each EPR pair with $n$ to Bob through a quantum channel $Q$, remains the other string of qubits $q_a$ from each pair with size $n$.
  \item Alice create a string of bits with size $n$ randomly, denoted as $B_a$.
  \item Bob receives $q_b$ and randomly generates a string of bits $B_b$ with size $n$.
  \item Alice measures each qubit of $q_a$ according to bits of $B_a$, if $B_{a_i}=0$, then uses $x$ axis ($\rightarrow$); else if $B_{a_i}=1$, then uses $z$ axis ($\uparrow$).
  \item Bob measures each qubit of $q_b$ according to bits of $B_b$, if $B_{b_i}=0$, then uses $x$ axis ($\rightarrow$); else if $B_{b_i}=1$, then uses $z$ axis ($\uparrow$).
  \item Bob sends his measurement axis choices $B_b$ to Alice through a public channel $P$.
  \item Once receiving $B_b$, Alice sends her axis choices $B_a$ to Bob through channel $P$, and Bob receives $B_a$.
  \item Alice and Bob agree to discard all instances in which they happened to measure along different axes, as well as instances in which measurements fails because of imperfect quantum efficiency of the detectors. Then the remaining instances can be used to generate a private key $K_{a,b}$.
\end{enumerate}

\begin{figure}
  \centering
  \includegraphics{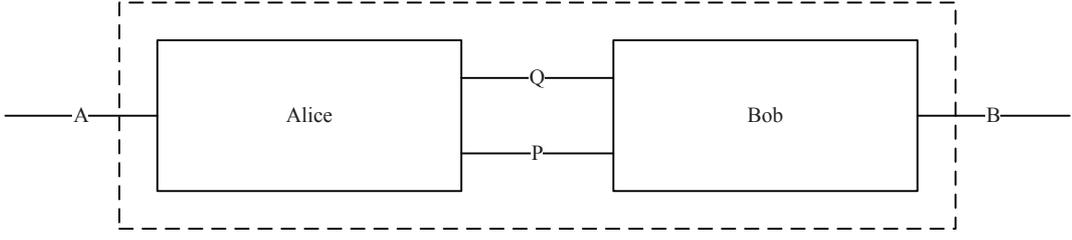}
  \caption{The BBM92 protocol.}
  \label{BBM92}
\end{figure}

We re-introduce the basic BBM92 protocol in an abstract way with more technical details as Figure \ref{BBM92} illustrates.

Now, $M[q_a;B_a]=\sum_{i=0}^{2n-1}M[q_a;K_a]_i$ denotes the Alice's measurement operation of $q_a$, and $\circledS_{M[q_a;B_a]}=\sum_{i=0}^{2n-1}\circledS_{M[q_a;B_a]_i}$ denotes the responding shadow constant; $M[q_b;B_b]=\sum_{i=0}^{2n-1}M[q_b;B_b]_i$ denotes the Bob's measurement operation of $q_b$, and $\circledS_{M[q_b;B_b]}=\sum_{i=0}^{2n-1}\circledS_{M[q_b;B_n]_i}$ denotes the responding shadow constant. Alice sends $q_b$ to Bob through the quantum channel $Q$ by quantum communicating action $send_{Q}(q_b)$ and Bob receives $q_b$ through $Q$ by quantum communicating action $receive_{Q}(q_b)$. Bob sends $B_b$ to Alice through the public channel $P$ by classical communicating action $send_{P}(B_b)$ and Alice receives $B_b$ through channel $P$ by classical communicating action $receive_{P}(B_b)$, and the same as $send_{P}(B_a)$ and $receive_{P}(B_a)$. Alice and Bob generate the private key $K_{a,b}$ by a classical comparison action $cmp(K_{a,b},B_a,B_b)$. Let Alice and Bob be a system $AB$ and let interactions between Alice and Bob be internal actions. $AB$ receives external input $D_i$ through channel $A$ by communicating action $receive_A(D_i)$ and sends results $D_o$ through channel $B$ by communicating action $send_B(D_o)$.

Then the state transition of Alice can be described as follows.

\begin{eqnarray}
&&A=loc_A::(\sum_{D_i\in \Delta_i}receive_A(D_i)\cdot A_1)\nonumber\\
&&A_1=send_Q(q_b)\cdot A_2\nonumber\\
&&A_2=\boxplus_{\frac{1}{2n},i=0}^{2n-1}M[q_a;B_a]_i\cdot A_3\nonumber\\
&&A_3=\boxplus_{\frac{1}{2n},i=0}^{2n-1}\circledS_{M[q_b;B_b]_i}\cdot A_4\nonumber\\
&&A_4=receive_P(B_b)\cdot A_5\nonumber\\
&&A_5=send_P(B_a)\cdot A_6\nonumber\\
&&A_6=cmp(K_{a,b},B_a,B_b)\cdot A\nonumber
\end{eqnarray}

where $\Delta_i$ is the collection of the input data.

And the state transition of Bob can be described as follows.

\begin{eqnarray}
&&B=loc_B::(receive_Q(q_b)\cdot B_1)\nonumber\\
&&B_1=\boxplus_{\frac{1}{2n},i=0}^{2n-1}\circledS_{M[q_a;B_a]_i}\cdot B_2\nonumber\\
&&B_2=\boxplus_{\frac{1}{2n},i=0}^{2n-1}M[q_b;B_b]_i\cdot B_3\nonumber\\
&&B_3=send_P(B_b)\cdot B_4\nonumber\\
&&B_4=receive_P(B_a)\cdot B_5\nonumber\\
&&B_5=cmp(K_{a,b},B_a,B_b)\cdot B_6\nonumber\\
&&B_6=\sum_{D_o\in\Delta_o}send_B(D_o)\cdot B\nonumber
\end{eqnarray}

where $\Delta_o$ is the collection of the output data.

The send action and receive action of the same data through the same channel can communicate each other, otherwise, a deadlock $\delta$ will be caused. The quantum measurement and its shadow constant pair will lead entanglement occur, otherwise, a deadlock $\delta$ will occur. We define the following communication functions.

\begin{eqnarray}
&&\gamma(send_Q(q_b),receive_Q(q_b))\triangleq c_Q(q_b)\nonumber\\
&&\gamma(send_P(B_b),receive_P(B_b))\triangleq c_P(B_b)\nonumber\\
&&\gamma(send_P(B_a),receive_P(B_a))\triangleq c_P(B_a)\nonumber
\end{eqnarray}

Let $A$ and $B$ in parallel, then the system $AB$ can be represented by the following process term.

$$\tau_I(\partial_H(\Theta(A\between B)))$$

where $H=\{send_Q(q_b),receive_Q(q_b),send_P(B_b),receive_P(B_b),send_P(B_a),receive_P(B_a),\\ \boxplus_{\frac{1}{2n},i=0}^{2n-1}M[q_a;B_a]_i, \boxplus_{\frac{1}{2n},i=0}^{2n-1}\circledS_{M[q_a;B_a]_i}, \boxplus_{\frac{1}{2n},i=0}^{2n-1}M[q_b;B_b]_i, \boxplus_{\frac{1}{2n},i=0}^{2n-1}\circledS_{M[q_b;B_b]_i}\}$

and $I=\{c_Q(q_b), c_P(B_b), c_P(B_a), M[q_a;B_a], M[q_b;B_b],\\ cmp(K_{a,b},B_a,B_b)\}$.

Then we get the following conclusion.

\begin{theorem}
The basic BBM92 protocol $\tau_I(\partial_H(\Theta(A\between B)))$ can exhibit desired external behaviors.
\end{theorem}

\begin{proof}
We can get $\tau_I(\partial_H(\Theta(A\between B)))=\sum_{D_i\in \Delta_i}\sum_{D_o\in\Delta_o}loc_A::receive_A(D_i)\leftmerge loc_B::send_B(D_o)\leftmerge \tau_I(\partial_H(\Theta(A\between B)))$.
So, the basic BBM92 protocol $\tau_I(\partial_H(\Theta(A\between B)))$ can exhibit desired external behaviors.
\end{proof}

\subsection{Verification of SARG04 Protocol}\label{VSARG046}

The famous SARG04 protocol\cite{SARG04} is a quantum key distribution protocol, in which quantum information and classical information are mixed.

The SARG04 protocol is used to create a private key between two parities, Alice and Bob. SARG04 is a protocol of quantum key distribution (QKD) which refines the BB84 protocol against PNS (Photon Number Splitting) attacks. The main innovations are encoding bits in nonorthogonal states and the classical sifting procedure. Firstly, we introduce the basic SARG04 protocol briefly, which is illustrated in Figure \ref{SARG04}.

\begin{enumerate}
  \item Alice create a string of bits with size $n$ randomly, denoted as $K_a$.
  \item Alice generates a string of qubits $q$ with size $n$, and the $i$th qubit of $q$ has four nonorthogonal states, it is $|\pm x\rangle$ if $K_a=0$; it is $|\pm z\rangle$ if $K_a=1$. And she records the corresponding one of the four pairs of nonorthogonal states into $B_a$ with size $2n$.
  \item Alice sends $q$ to Bob through a quantum channel $Q$ between Alice and Bob.
  \item Alice sends $B_a$ through a public channel $P$.
  \item Bob measures each qubit of $q$ $\sigma_x$ or $\sigma_z$. And he records the unambiguous discriminations into $K_b$ with a raw size $n/4$, and the unambiguous discrimination information into $B_b$ with size $n$.
  \item Bob sends $B_b$ to Alice through the public channel $P$.
  \item Alice and Bob determine that at which position the bit should be remained. Then the remaining bits of $K_a$ and $K_b$ is the private key $K_{a,b}$.
\end{enumerate}

\begin{figure}
  \centering
  \includegraphics{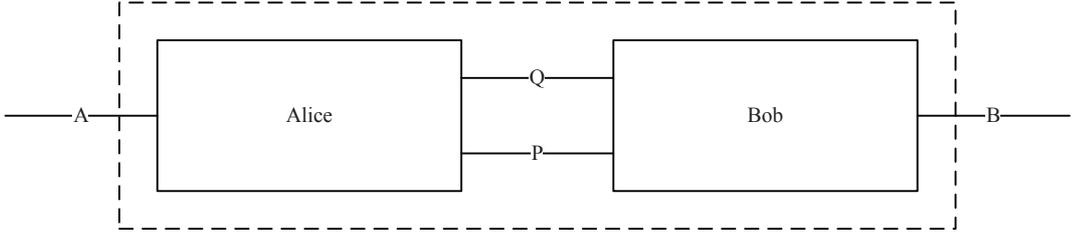}
  \caption{The SARG04 protocol.}
  \label{SARG04}
\end{figure}

We re-introduce the basic SARG04 protocol in an abstract way with more technical details as Figure \ref{SARG04} illustrates.

Now, we assume a special measurement operation $Rand[q;K_a]=\sum^{2n-1}_{i=0}Rand[q;K_a]_i$ which create a string of $n$ random bits $K_a$ from the $q$ quantum system. $M[q;K_b]=\sum^{2n-1}_{i=0}M[q;K_b]_i$ denotes the Bob's measurement operation of $q$. The generation of $n$ qubits $q$ through a unitary operator $Set_{K_a}[q]$. Alice sends $q$ to Bob through the quantum channel $Q$ by quantum communicating action $send_{Q}(q)$ and Bob receives $q$ through $Q$ by quantum communicating action $receive_{Q}(q)$. Bob sends $B_b$ to Alice through the public channel $P$ by classical communicating action $send_{P}(B_b)$ and Alice receives $B_b$ through channel $P$ by classical communicating action $receive_{P}(B_b)$, and the same as $send_{P}(B_a)$ and $receive_{P}(B_a)$. Alice and Bob generate the private key $K_{a,b}$ by a classical comparison action $cmp(K_{a,b},K_a,K_b,B_a,B_b)$. Let Alice and Bob be a system $AB$ and let interactions between Alice and Bob be internal actions. $AB$ receives external input $D_i$ through channel $A$ by communicating action $receive_A(D_i)$ and sends results $D_o$ through channel $B$ by communicating action $send_B(D_o)$.

Then the state transition of Alice can be described as follows.

\begin{eqnarray}
&&A=loc_A::(\sum_{D_i\in \Delta_i}receive_A(D_i)\cdot A_1)\nonumber\\
&&A_1=\boxplus_{\frac{1}{2n},i=0}^{2n-1}Rand[q;K_a]_i\cdot A_2\nonumber\\
&&A_2=Set_{K_a}[q]\cdot A_3\nonumber\\
&&A_3=send_Q(q)\cdot A_4\nonumber\\
&&A_4=send_P(B_a)\cdot A_5\nonumber\\
&&A_5=receive_P(B_b)\cdot A_6\nonumber\\
&&A_6=cmp(K_{a,b},K_a,K_b,B_a,B_b)\cdot A\nonumber
\end{eqnarray}

where $\Delta_i$ is the collection of the input data.

And the state transition of Bob can be described as follows.

\begin{eqnarray}
&&B=loc_B::(receive_Q(q)\cdot B_1)\nonumber\\
&&B_1=receive_P(B_a)\cdot B_2\nonumber\\
&&B_2=\boxplus_{\frac{1}{2n},i=0}^{2n-1}M[q;K_b]_i\cdot B_3\nonumber\\
&&B_3=send_P(B_b)\cdot B_4\nonumber\\
&&B_4=cmp(K_{a,b},K_a,K_b,B_a,B_b)\cdot B_5\nonumber\\
&&B_5=\sum_{D_o\in\Delta_o}send_B(D_o)\cdot B\nonumber
\end{eqnarray}

where $\Delta_o$ is the collection of the output data.

The send action and receive action of the same data through the same channel can communicate each other, otherwise, a deadlock $\delta$ will be caused. We define the following communication functions.

\begin{eqnarray}
&&\gamma(send_Q(q),receive_Q(q))\triangleq c_Q(q)\nonumber\\
&&\gamma(send_P(B_b),receive_P(B_b))\triangleq c_P(B_b)\nonumber\\
&&\gamma(send_P(B_a),receive_P(B_a))\triangleq c_P(B_a)\nonumber
\end{eqnarray}

Let $A$ and $B$ in parallel, then the system $AB$ can be represented by the following process term.

$$\tau_I(\partial_H(\Theta(A\between B)))$$

where $H=\{send_Q(q),receive_Q(q),send_P(B_b),receive_P(B_b),send_P(B_a),receive_P(B_a)\}$ and $I=\{\boxplus_{\frac{1}{2n},i=0}^{2n-1}Rand[q;K_a]_i, Set_{K_a}[q], \boxplus_{\frac{1}{2n},i=0}^{2n-1}M[q;K_b]_i, c_Q(q), c_P(B_b),\\ c_P(B_a), cmp(K_{a,b},K_a,K_b,B_a,B_b)\}$.

Then we get the following conclusion.

\begin{theorem}
The basic SARG04 protocol $\tau_I(\partial_H(\Theta(A\between B)))$ can exhibit desired external behaviors.
\end{theorem}

\begin{proof}
We can get $\tau_I(\partial_H(\Theta(A\between B)))=\sum_{D_i\in \Delta_i}\sum_{D_o\in\Delta_o}loc_A::receive_A(D_i)\leftmerge loc_B::send_B(D_o)\leftmerge \tau_I(\partial_H(\Theta(A\between B)))$.
So, the basic SARG04 protocol $\tau_I(\partial_H(\Theta(A\between B)))$ can exhibit desired external behaviors.
\end{proof}

\subsection{Verification of COW Protocol}\label{VCOW6}

The famous COW protocol\cite{COW} is a quantum key distribution protocol, in which quantum information and classical information are mixed.

The COW protocol is used to create a private key between two parities, Alice and Bob. COW is a protocol of quantum key distribution (QKD) which is practical. Firstly, we introduce the basic COW protocol briefly, which is illustrated in Figure \ref{COW}.

\begin{enumerate}
  \item Alice generates a string of qubits $q$ with size $n$, and the $i$th qubit of $q$ is "0" with probability $\frac{1-f}{2}$, "1" with probability $\frac{1-f}{2}$ and the decoy sequence with probability $f$.
  \item Alice sends $q$ to Bob through a quantum channel $Q$ between Alice and Bob.
  \item Alice sends $A$ of the items corresponding to a decoy sequence through a public channel $P$.
  \item Bob removes all the detections at times $2A-1$ and $2A$ from his raw key and looks whether detector $D_{2M}$ has ever fired at time $2A$.
  \item Bob sends $B$ of the times $2A+1$ in which he had a detector in $D_{2M}$ to Alice through the public channel $P$.
  \item Alice receives $B$ and verifies if some of these items corresponding to a bit sequence "1,0".
  \item Bob sends $C$ of the items that he has detected through the public channel $P$.
  \item Alice and Bob run error correction and privacy amplification on these bits, and the private key $K_{a,b}$ is established.
\end{enumerate}

\begin{figure}
  \centering
  \includegraphics{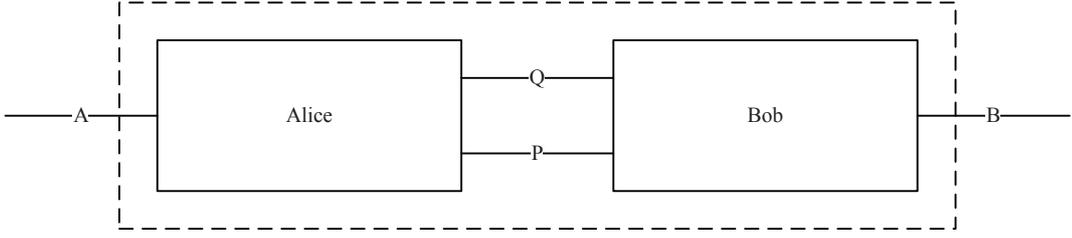}
  \caption{The COW protocol.}
  \label{COW}
\end{figure}

We re-introduce the basic COW protocol in an abstract way with more technical details as Figure \ref{COW} illustrates.

Now, we assume The generation of $n$ qubits $q$ through a unitary operator $Set[q]$. $M[q]=\sum^{2n-1}_{i=0}M[q]_i$ denotes the Bob's measurement operation of $q$.  Alice sends $q$ to Bob through the quantum channel $Q$ by quantum communicating action $send_{Q}(q)$ and Bob receives $q$ through $Q$ by quantum communicating action $receive_{Q}(q)$. Alice sends $A$ to Alice through the public channel $P$ by classical communicating action $send_{P}(A)$ and Alice receives $A$ through channel $P$ by classical communicating action $receive_{P}(A)$, and the same as $send_{P}(B)$ and $receive_{P}(B)$, and $send_{P}(C)$ and $receive_{P}(C)$. Alice and Bob generate the private key $K_{a,b}$ by a classical comparison action $cmp(K_{a,b})$. Let Alice and Bob be a system $AB$ and let interactions between Alice and Bob be internal actions. $AB$ receives external input $D_i$ through channel $A$ by communicating action $receive_A(D_i)$ and sends results $D_o$ through channel $B$ by communicating action $send_B(D_o)$.

Then the state transition of Alice can be described as follows.

\begin{eqnarray}
&&A=loc_A::(\sum_{D_i\in \Delta_i}receive_A(D_i)\cdot A_1)\nonumber\\
&&A_1=Set[q]\cdot A_2\nonumber\\
&&A_2=send_Q(q)\cdot A_3\nonumber\\
&&A_3=send_P(A)\cdot A_4\nonumber\\
&&A_4=receive_P(B)\cdot A_5\nonumber\\
&&A_5=receive_P(C)\cdot A_6\nonumber\\
&&A_6=cmp(K_{a,b})\cdot A\nonumber
\end{eqnarray}

where $\Delta_i$ is the collection of the input data.

And the state transition of Bob can be described as follows.

\begin{eqnarray}
&&B=loc_B::(receive_Q(q)\cdot B_1)\nonumber\\
&&B_1=receive_P(A)\cdot B_2\nonumber\\
&&B_2=\boxplus_{\frac{1}{2n},i=0}^{2n-1}M[q]_i\cdot B_3\nonumber\\
&&B_3=send_P(B)\cdot B_4\nonumber\\
&&B_4=send_P(C)\cdot B_5\nonumber\\
&&B_5=cmp(K_{a,b})\cdot B_6\nonumber\\
&&B_6=\sum_{D_o\in\Delta_o}send_B(D_o)\cdot B\nonumber
\end{eqnarray}

where $\Delta_o$ is the collection of the output data.

The send action and receive action of the same data through the same channel can communicate each other, otherwise, a deadlock $\delta$ will be caused. We define the following communication functions.

\begin{eqnarray}
&&\gamma(send_Q(q),receive_Q(q))\triangleq c_Q(q)\nonumber\\
&&\gamma(send_P(A),receive_P(A))\triangleq c_P(A)\nonumber\\
&&\gamma(send_P(B),receive_P(B))\triangleq c_P(B)\nonumber\\
&&\gamma(send_P(C),receive_P(C))\triangleq c_P(C)\nonumber
\end{eqnarray}

Let $A$ and $B$ in parallel, then the system $AB$ can be represented by the following process term.

$$\tau_I(\partial_H(\Theta(A\between B)))$$

where $H=\{send_Q(q),receive_Q(q),send_P(A),receive_P(A),send_P(B),receive_P(B),send_P(C),\\receive_P(C)\}$ and $I=\{Set[q], \boxplus_{\frac{1}{2n},i=0}^{2n-1}M[q]_i, c_Q(q), c_P(A),\\ c_P(B),c_P(C), cmp(K_{a,b})\}$.

Then we get the following conclusion.

\begin{theorem}
The basic COW protocol $\tau_I(\partial_H(\Theta(A\between B)))$ can exhibit desired external behaviors.
\end{theorem}

\begin{proof}
We can get $\tau_I(\partial_H(\Theta(A\between B)))=\sum_{D_i\in \Delta_i}\sum_{D_o\in\Delta_o}loc_A::receive_A(D_i)\leftmerge loc_B::send_B(D_o)\leftmerge \tau_I(\partial_H(\Theta(A\between B)))$.
So, the basic COW protocol $\tau_I(\partial_H(\Theta(A\between B)))$ can exhibit desired external behaviors.
\end{proof}

\subsection{Verification of SSP Protocol}\label{VSSP6}

The famous SSP protocol\cite{SSP} is a quantum key distribution protocol, in which quantum information and classical information are mixed.

The SSP protocol is used to create a private key between two parities, Alice and Bob. SSP is a protocol of quantum key distribution (QKD) which uses six states. Firstly, we introduce the basic SSP protocol briefly, which is illustrated in Figure \ref{SSP}.

\begin{enumerate}
  \item Alice create two string of bits with size $n$ randomly, denoted as $B_a$ and $K_a$.
  \item Alice generates a string of qubits $q$ with size $n$, and the $i$th qubit in $q$ is one of the six states $\pm x$, $\pm y$ and $\pm z$.
  \item Alice sends $q$ to Bob through a quantum channel $Q$ between Alice and Bob.
  \item Bob receives $q$ and randomly generates a string of bits $B_b$ with size $n$.
  \item Bob measures each qubit of $q$ according to a basis by bits of $B_b$, i.e., $x$, $y$ or $z$ basis. And the measurement results would be $K_b$, which is also with size $n$.
  \item Bob sends his measurement bases $B_b$ to Alice through a public channel $P$.
  \item Once receiving $B_b$, Alice sends her bases $B_a$ to Bob through channel $P$, and Bob receives $B_a$.
  \item Alice and Bob determine that at which position the bit strings $B_a$ and $B_b$ are equal, and they discard the mismatched bits of $B_a$ and $B_b$. Then the remaining bits of $K_a$ and $K_b$, denoted as $K_a'$ and $K_b'$ with $K_{a,b}=K_a'=K_b'$.
\end{enumerate}

\begin{figure}
  \centering
  \includegraphics{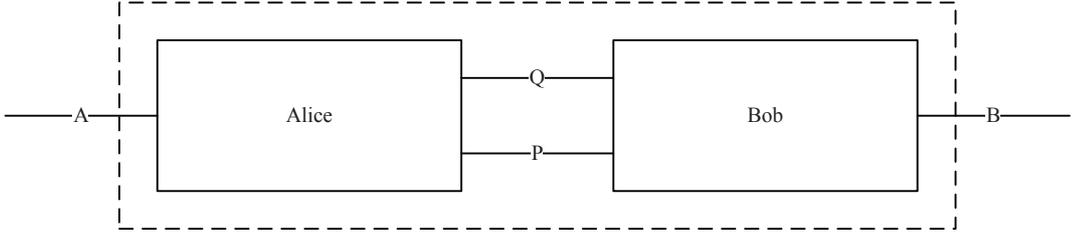}
  \caption{The SSP protocol.}
  \label{SSP}
\end{figure}

We re-introduce the basic SSP protocol in an abstract way with more technical details as Figure \ref{SSP} illustrates.

Now, we assume a special measurement operation $Rand[q;B_a]=\sum^{2n-1}_{i=0}Rand[q;B_a]_i$ which create a string of $n$ random bits $B_a$ from the $q$ quantum system, and the same as $Rand[q;K_a]=\sum^{2n-1}_{i=0}Rand[q;K_a]_i$, $Rand[q';B_b]=\sum^{2n-1}_{i=0}Rand[q';B_b]_i$. $M[q;K_b]=\sum^{2n-1}_{i=0}M[q;K_b]_i$ denotes the Bob's measurement operation of $q$. The generation of $n$ qubits $q$ through two unitary operators $Set_{K_a}[q]$ and $H_{B_a}[q]$. Alice sends $q$ to Bob through the quantum channel $Q$ by quantum communicating action $send_{Q}(q)$ and Bob receives $q$ through $Q$ by quantum communicating action $receive_{Q}(q)$. Bob sends $B_b$ to Alice through the public channel $P$ by classical communicating action $send_{P}(B_b)$ and Alice receives $B_b$ through channel $P$ by classical communicating action $receive_{P}(B_b)$, and the same as $send_{P}(B_a)$ and $receive_{P}(B_a)$. Alice and Bob generate the private key $K_{a,b}$ by a classical comparison action $cmp(K_{a,b},K_a,K_b,B_a,B_b)$. Let Alice and Bob be a system $AB$ and let interactions between Alice and Bob be internal actions. $AB$ receives external input $D_i$ through channel $A$ by communicating action $receive_A(D_i)$ and sends results $D_o$ through channel $B$ by communicating action $send_B(D_o)$.

Then the state transition of Alice can be described as follows.

\begin{eqnarray}
&&A=loc_A::(\sum_{D_i\in \Delta_i}receive_A(D_i)\cdot A_1)\nonumber\\
&&A_1=\boxplus_{\frac{1}{2n},i=0}^{2n-1}Rand[q;B_a]_i\cdot A_2\nonumber\\
&&A_2=\boxplus_{\frac{1}{2n},i=0}^{2n-1}Rand[q;K_a]_i\cdot A_3\nonumber\\
&&A_3=Set_{K_a}[q]\cdot A_4\nonumber\\
&&A_4=H_{B_a}[q]\cdot A_5\nonumber\\
&&A_5=send_Q(q)\cdot A_6\nonumber\\
&&A_6=receive_P(B_b)\cdot A_7\nonumber\\
&&A_7=send_P(B_a)\cdot A_8\nonumber\\
&&A_8=cmp(K_{a,b},K_a,K_b,B_a,B_b)\cdot A\nonumber
\end{eqnarray}

where $\Delta_i$ is the collection of the input data.

And the state transition of Bob can be described as follows.

\begin{eqnarray}
&&B=loc_B::(receive_Q(q)\cdot B_1)\nonumber\\
&&B_1=\boxplus_{\frac{1}{2n},i=0}^{2n-1}Rand[q';B_b]_i\cdot B_2\nonumber\\
&&B_2=\boxplus_{\frac{1}{2n},i=0}^{2n-1}M[q;K_b]_i\cdot B_3\nonumber\\
&&B_3=send_P(B_b)\cdot B_4\nonumber\\
&&B_4=receive_P(B_a)\cdot B_5\nonumber\\
&&B_5=cmp(K_{a,b},K_a,K_b,B_a,B_b)\cdot B_6\nonumber\\
&&B_6=\sum_{D_o\in\Delta_o}send_B(D_o)\cdot B\nonumber
\end{eqnarray}

where $\Delta_o$ is the collection of the output data.

The send action and receive action of the same data through the same channel can communicate each other, otherwise, a deadlock $\delta$ will be caused. We define the following communication functions.

\begin{eqnarray}
&&\gamma(send_Q(q),receive_Q(q))\triangleq c_Q(q)\nonumber\\
&&\gamma(send_P(B_b),receive_P(B_b))\triangleq c_P(B_b)\nonumber\\
&&\gamma(send_P(B_a),receive_P(B_a))\triangleq c_P(B_a)\nonumber
\end{eqnarray}

Let $A$ and $B$ in parallel, then the system $AB$ can be represented by the following process term.

$$\tau_I(\partial_H(\Theta(A\between B)))$$

where $H=\{send_Q(q),receive_Q(q),send_P(B_b),receive_P(B_b),send_P(B_a),receive_P(B_a)\}$ and $I=\{\boxplus_{\frac{1}{2n},i=0}^{2n-1}Rand[q;B_a]_i, \boxplus_{\frac{1}{2n},i=0}^{2n-1}Rand[q;K_a]_i, Set_{K_a}[q], \\ H_{B_a}[q], \boxplus_{\frac{1}{2n},i=0}^{2n-1}Rand[q';B_b]_i, \boxplus_{\frac{1}{2n},i=0}^{2n-1}M[q;K_b]_i, c_Q(q), c_P(B_b),\\ c_P(B_a), cmp(K_{a,b},K_a,K_b,B_a,B_b)\}$.

Then we get the following conclusion.

\begin{theorem}
The basic SSP protocol $\tau_I(\partial_H(\Theta(A\between B)))$ can exhibit desired external behaviors.
\end{theorem}

\begin{proof}
We can get $\tau_I(\partial_H(\Theta(A\between B)))=\sum_{D_i\in \Delta_i}\sum_{D_o\in\Delta_o}loc_A::receive_A(D_i)\leftmerge loc_B::send_B(D_o)\leftmerge \tau_I(\partial_H(\Theta(A\between B)))$.
So, the basic SSP protocol $\tau_I(\partial_H(\Theta(A\between B)))$ can exhibit desired external behaviors.
\end{proof}

\subsection{Verification of S09 Protocol}\label{VS096}

The famous S09 protocol\cite{S09} is a quantum key distribution protocol, in which quantum information and classical information are mixed.

The S09 protocol is used to create a private key between two parities, Alice and Bob, by use of pure quantum information. Firstly, we introduce the basic S09 protocol briefly, which is illustrated in Figure \ref{S09}.

\begin{enumerate}
  \item Alice create two string of bits with size $n$ randomly, denoted as $B_a$ and $K_a$.
  \item Alice generates a string of qubits $q$ with size $n$, and the $i$th qubit in $q$ is $|x_y\rangle$, where $x$ is the $i$th bit of $B_a$ and $y$ is the $i$th bit of $K_a$.
  \item Alice sends $q$ to Bob through a quantum channel $Q$ between Alice and Bob.
  \item Bob receives $q$ and randomly generates a string of bits $B_b$ with size $n$.
  \item Bob measures each qubit of $q$ according to a basis by bits of $B_b$. After the measurement, the state of $q$ evolves into $q'$.
  \item Bob sends $q'$ to Alice through the quantum channel $Q$.
  \item Alice measures each qubit of $q'$ to generate a string $C$.
  \item Alice sums $C_i\oplus B_{a_i}$ to get the private key $K_{a,b}=B_b$.
\end{enumerate}

\begin{figure}
  \centering
  \includegraphics{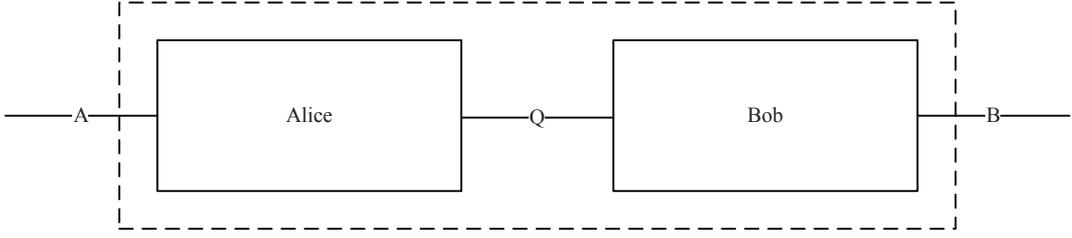}
  \caption{The S09 protocol.}
  \label{S09}
\end{figure}

We re-introduce the basic S09 protocol in an abstract way with more technical details as Figure \ref{S09} illustrates.

Now, we assume a special measurement operation $Rand[q;B_a]=\sum^{2n-1}_{i=0}Rand[q;B_a]_i$ which create a string of $n$ random bits $B_a$ from the $q$ quantum system, and the same as $Rand[q;K_a]=\sum^{2n-1}_{i=0}Rand[q;K_a]_i$, $Rand[q';B_b]=\sum^{2n-1}_{i=0}Rand[q';B_b]_i$. $M[q;B_b]=\sum^{2n-1}_{i=0}M[q;B_b]_i$ denotes the Bob's measurement operation of $q$, and the same as $M[q';C]=\sum^{2n-1}_{i=0}Rand[q';C]_i$. The generation of $n$ qubits $q$ through two unitary operators $Set_{K_a}[q]$ and $H_{B_a}[q]$. Alice sends $q$ to Bob through the quantum channel $Q$ by quantum communicating action $send_{Q}(q)$ and Bob receives $q$ through $Q$ by quantum communicating action $receive_{Q}(q)$, and the same as $send_{Q}(q')$ and $receive_{Q}(q')$. Alice and Bob generate the private key $K_{a,b}$ by a classical comparison action $cmp(K_{a,b},B_b)$. We omit the sum classical $\oplus$ actions without of loss of generality. Let Alice and Bob be a system $AB$ and let interactions between Alice and Bob be internal actions. $AB$ receives external input $D_i$ through channel $A$ by communicating action $receive_A(D_i)$ and sends results $D_o$ through channel $B$ by communicating action $send_B(D_o)$.

Then the state transition of Alice can be described as follows.

\begin{eqnarray}
&&A=loc_A::(\sum_{D_i\in \Delta_i}receive_A(D_i)\cdot A_1)\nonumber\\
&&A_1=\boxplus_{\frac{1}{2n},i=0}^{2n-1}Rand[q;B_a]_i\cdot A_2\nonumber\\
&&A_2=\boxplus_{\frac{1}{2n},i=0}^{2n-1}Rand[q;K_a]_i\cdot A_3\nonumber\\
&&A_3=Set_{K_a}[q]\cdot A_4\nonumber\\
&&A_4=H_{B_a}[q]\cdot A_5\nonumber\\
&&A_5=send_Q(q)\cdot A_6\nonumber\\
&&A_6=receive_Q(q')\cdot A_{7}\nonumber\\
&&A_7=\boxplus_{\frac{1}{2n},i=0}^{2n-1}M[q';C]_i\cdot A_8\nonumber\\
&&A_{8}=cmp(K_{a,b},B_b)\cdot A\nonumber
\end{eqnarray}

where $\Delta_i$ is the collection of the input data.

And the state transition of Bob can be described as follows.

\begin{eqnarray}
&&B=loc_B::(receive_Q(q)\cdot B_1)\nonumber\\
&&B_1=\boxplus_{\frac{1}{2n},i=0}^{2n-1}Rand[q';B_b]_i\cdot B_2\nonumber\\
&&B_2=\boxplus_{\frac{1}{2n},i=0}^{2n-1}M[q;B_b]_i\cdot B_3\nonumber\\
&&B_3=send_Q(q')\cdot B_4\nonumber\\
&&B_4=cmp(K_{a,b},B_b)\cdot B_{5}\nonumber\\
&&B_{5}=\sum_{D_o\in\Delta_o}send_B(D_o)\cdot B\nonumber
\end{eqnarray}

where $\Delta_o$ is the collection of the output data.

The send action and receive action of the same data through the same channel can communicate each other, otherwise, a deadlock $\delta$ will be caused. We define the following communication functions.

\begin{eqnarray}
&&\gamma(send_Q(q),receive_Q(q))\triangleq c_Q(q)\nonumber\\
&&\gamma(send_Q(q'),receive_Q(q'))\triangleq c_Q(q')\nonumber
\end{eqnarray}

Let $A$ and $B$ in parallel, then the system $AB$ can be represented by the following process term.

$$\tau_I(\partial_H(\Theta(A\between B)))$$

where $H=\{send_Q(q),receive_Q(q),send_Q(q'),receive_Q(q')\}$ and $I=\{\boxplus_{\frac{1}{2n},i=0}^{2n-1}Rand[q;B_a]_i, \\ \boxplus_{\frac{1}{2n},i=0}^{2n-1}Rand[q;K_a]_i, Set_{K_a}[q], H_{B_a}[q], \boxplus_{\frac{1}{2n},i=0}^{2n-1}Rand[q';B_b]_i, \boxplus_{\frac{1}{2n},i=0}^{2n-1}M[q;B_b]_i,  \\ \boxplus_{\frac{1}{2n},i=0}^{2n-1}Rand[q';C]_i, c_Q(q), c_Q(q'), cmp(K_{a,b},B_b)\}$.

Then we get the following conclusion.

\begin{theorem}
The basic S09 protocol $\tau_I(\partial_H(\Theta(A\between B)))$ can exhibit desired external behaviors.
\end{theorem}

\begin{proof}
We can get $\tau_I(\partial_H(\Theta(A\between B)))=\sum_{D_i\in \Delta_i}\sum_{D_o\in\Delta_o}loc_A::receive_A(D_i)\leftmerge loc_B::send_B(D_o)\leftmerge \tau_I(\partial_H(\Theta(A\between B)))$.
So, the basic S09 protocol $\tau_I(\partial_H(\Theta(A\between B)))$ can exhibit desired external behaviors.
\end{proof}

\subsection{Verification of KMB09 Protocol}\label{VKMB096}

The famous KMB09 protocol\cite{KMB09} is a quantum key distribution protocol, in which quantum information and classical information are mixed.

The KMB09 protocol is used to create a private key between two parities, Alice and Bob. KMB09 is a protocol of quantum key distribution (QKD) which refines the BB84 protocol against PNS (Photon Number Splitting) attacks. The main innovations are encoding bits in nonorthogonal states and the classical sifting procedure. Firstly, we introduce the basic KMB09 protocol briefly, which is illustrated in Figure \ref{KMB09}.

\begin{enumerate}
  \item Alice create a string of bits with size $n$ randomly, denoted as $K_a$, and randomly assigns each bit value a random index $i=1,2,...,N$ into $B_a$.
  \item Alice generates a string of qubits $q$ with size $n$, accordingly either in $|e_i\rangle$ or $|f_i\rangle$.
  \item Alice sends $q$ to Bob through a quantum channel $Q$ between Alice and Bob.
  \item Alice sends $B_a$ through a public channel $P$.
  \item Bob measures each qubit of $q$ by randomly switching the measurement basis between $e$ and $f$. And he records the unambiguous discriminations into $K_b$, and the unambiguous discrimination information into $B_b$.
  \item Bob sends $B_b$ to Alice through the public channel $P$.
  \item Alice and Bob determine that at which position the bit should be remained. Then the remaining bits of $K_a$ and $K_b$ is the private key $K_{a,b}$.
\end{enumerate}

\begin{figure}
  \centering
  \includegraphics{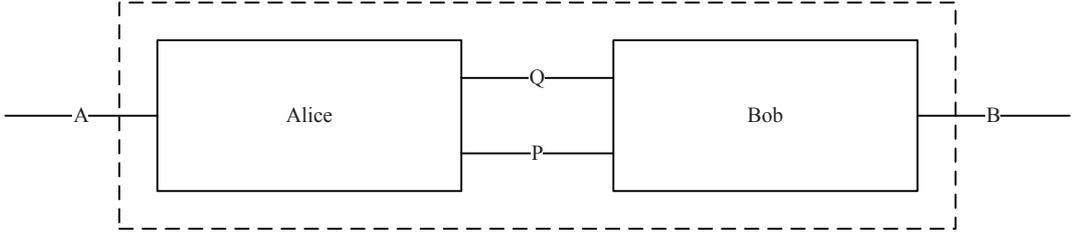}
  \caption{The KMB09 protocol.}
  \label{KMB09}
\end{figure}

We re-introduce the basic KMB09 protocol in an abstract way with more technical details as Figure \ref{KMB09} illustrates.

Now, we assume a special measurement operation $Rand[q;K_a]=\sum^{2n-1}_{i=0}Rand[q;K_a]_i$ which create a string of $n$ random bits $K_a$ from the $q$ quantum system. $M[q;K_b]=\sum^{2n-1}_{i=0}M[q;K_b]_i$ denotes the Bob's measurement operation of $q$. The generation of $n$ qubits $q$ through a unitary operator $Set_{K_a}[q]$. Alice sends $q$ to Bob through the quantum channel $Q$ by quantum communicating action $send_{Q}(q)$ and Bob receives $q$ through $Q$ by quantum communicating action $receive_{Q}(q)$. Bob sends $B_b$ to Alice through the public channel $P$ by classical communicating action $send_{P}(B_b)$ and Alice receives $B_b$ through channel $P$ by classical communicating action $receive_{P}(B_b)$, and the same as $send_{P}(B_a)$ and $receive_{P}(B_a)$. Alice and Bob generate the private key $K_{a,b}$ by a classical comparison action $cmp(K_{a,b},K_a,K_b,B_a,B_b)$. Let Alice and Bob be a system $AB$ and let interactions between Alice and Bob be internal actions. $AB$ receives external input $D_i$ through channel $A$ by communicating action $receive_A(D_i)$ and sends results $D_o$ through channel $B$ by communicating action $send_B(D_o)$.

Then the state transition of Alice can be described as follows.

\begin{eqnarray}
&&A=loc_A::(\sum_{D_i\in \Delta_i}receive_A(D_i)\cdot A_1)\nonumber\\
&&A_1=\boxplus_{\frac{1}{2n},i=0}^{2n-1}Rand[q;K_a]_i\cdot A_2\nonumber\\
&&A_2=Set_{K_a}[q]\cdot A_3\nonumber\\
&&A_3=send_Q(q)\cdot A_4\nonumber\\
&&A_4=send_P(B_a)\cdot A_5\nonumber\\
&&A_5=receive_P(B_b)\cdot A_6\nonumber\\
&&A_6=cmp(K_{a,b},K_a,K_b,B_a,B_b)\cdot A\nonumber
\end{eqnarray}

where $\Delta_i$ is the collection of the input data.

And the state transition of Bob can be described as follows.

\begin{eqnarray}
&&B=loc_B::(receive_Q(q)\cdot B_1)\nonumber\\
&&B_1=receive_P(B_a)\cdot B_2\nonumber\\
&&B_2=\boxplus_{\frac{1}{2n},i=0}^{2n-1}M[q;K_b]_i\cdot B_3\nonumber\\
&&B_3=send_P(B_b)\cdot B_4\nonumber\\
&&B_4=cmp(K_{a,b},K_a,K_b,B_a,B_b)\cdot B_5\nonumber\\
&&B_5=\sum_{D_o\in\Delta_o}send_B(D_o)\cdot B\nonumber
\end{eqnarray}

where $\Delta_o$ is the collection of the output data.

The send action and receive action of the same data through the same channel can communicate each other, otherwise, a deadlock $\delta$ will be caused. We define the following communication functions.

\begin{eqnarray}
&&\gamma(send_Q(q),receive_Q(q))\triangleq c_Q(q)\nonumber\\
&&\gamma(send_P(B_b),receive_P(B_b))\triangleq c_P(B_b)\nonumber\\
&&\gamma(send_P(B_a),receive_P(B_a))\triangleq c_P(B_a)\nonumber
\end{eqnarray}

Let $A$ and $B$ in parallel, then the system $AB$ can be represented by the following process term.

$$\tau_I(\partial_H(\Theta(A\between B)))$$

where $H=\{send_Q(q),receive_Q(q),send_P(B_b),receive_P(B_b),send_P(B_a),receive_P(B_a)\}$ and $I=\{\boxplus_{\frac{1}{2n},i=0}^{2n-1}Rand[q;K_a]_i, Set_{K_a}[q], \boxplus_{\frac{1}{2n},i=0}^{2n-1}M[q;K_b]_i, c_Q(q), c_P(B_b),\\ c_P(B_a), cmp(K_{a,b},K_a,K_b,B_a,B_b)\}$.

Then we get the following conclusion.

\begin{theorem}
The basic KMB09 protocol $\tau_I(\partial_H(\Theta(A\between B)))$ can exhibit desired external behaviors.
\end{theorem}

\begin{proof}
We can get $\tau_I(\partial_H(\Theta(A\between B)))=\sum_{D_i\in \Delta_i}\sum_{D_o\in\Delta_o}loc_A::receive_A(D_i)\leftmerge loc_B::send_B(D_o)\leftmerge \tau_I(\partial_H(\Theta(A\between B)))$.
So, the basic KMB09 protocol $\tau_I(\partial_H(\Theta(A\between B)))$ can exhibit desired external behaviors.
\end{proof}

\subsection{Verification of S13 Protocol}\label{VS136}

The famous S13 protocol\cite{S13} is a quantum key distribution protocol, in which quantum information and classical information are mixed.

The S13 protocol is used to create a private key between two parities, Alice and Bob. Firstly, we introduce the basic S13 protocol briefly, which is illustrated in Figure \ref{S13}.

\begin{enumerate}
  \item Alice create two string of bits with size $n$ randomly, denoted as $B_a$ and $K_a$.
  \item Alice generates a string of qubits $q$ with size $n$, and the $i$th qubit in $q$ is $|x_y\rangle$, where $x$ is the $i$th bit of $B_a$ and $y$ is the $i$th bit of $K_a$.
  \item Alice sends $q$ to Bob through a quantum channel $Q$ between Alice and Bob.
  \item Bob receives $q$ and randomly generates a string of bits $B_b$ with size $n$.
  \item Bob measures each qubit of $q$ according to a basis by bits of $B_b$. And the measurement results would be $K_b$, which is also with size $n$.
  \item Alice sends a random binary string $C$ to Bob through the public channel $P$.
  \item Alice sums $B_{a_i}\oplus C_i$ to obtain $T$ and generates other random string of binary values $J$. From the elements occupying a concrete position, $i$, of the preceding strings, Alice get the new states of $q'$, and sends it to Bob through the quantum channel $Q$.
  \item Bob sums $1\oplus B_{b_i}$ to obtain the string of binary basis $N$ and measures $q'$ according to these bases, and generating $D$.
  \item Alice sums $K_{a_i}\oplus J_i$ to obtain the binary string $Y$ and sends it to Bob through the public channel $P$.
  \item Bob encrypts $B_b$ to obtain $U$ and sends to Alice through the public channel $P$.
  \item Alice decrypts $U$ to obtain $B_b$. She sums $B_{a_i}\oplus B_{b_i}$ to obtain $L$ and sends $L$ to Bob through the public channel $P$.
  \item Bob sums $B_{b_i}\oplus L_i$ to get the private key $K_{a,b}$.
\end{enumerate}

\begin{figure}
  \centering
  \includegraphics{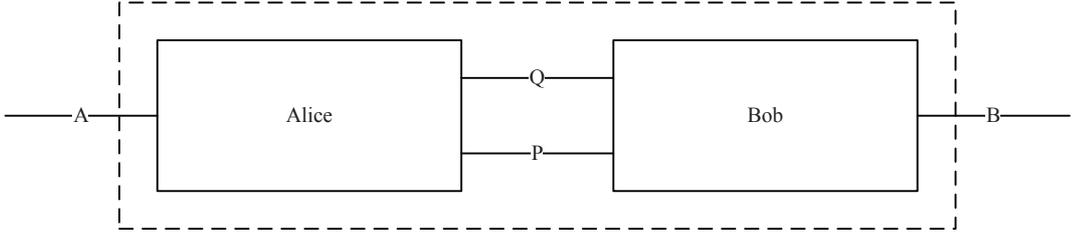}
  \caption{The S13 protocol.}
  \label{S13}
\end{figure}

We re-introduce the basic S13 protocol in an abstract way with more technical details as Figure \ref{S13} illustrates.

Now, we assume a special measurement operation $Rand[q;B_a]=\sum^{2n-1}_{i=0}Rand[q;B_a]_i$ which create a string of $n$ random bits $B_a$ from the $q$ quantum system, and the same as $Rand[q;K_a]=\sum^{2n-1}_{i=0}Rand[q;K_a]_i$, $Rand[q';B_b]=\sum^{2n-1}_{i=0}Rand[q';B_b]_i$. $M[q;K_b]=\sum^{2n-1}_{i=0}M[q;K_b]_i$ denotes the Bob's measurement operation of $q$, and the same as $M[q';D]=\sum^{2n-1}_{i=0}M[q';D]_i$. The generation of $n$ qubits $q$ through two unitary operators $Set_{K_a}[q]$ and $H_{B_a}[q]$, and the same as $Set_{T}[q']$. Alice sends $q$ to Bob through the quantum channel $Q$ by quantum communicating action $send_{Q}(q)$ and Bob receives $q$ through $Q$ by quantum communicating action $receive_{Q}(q)$, and the same as $send_{Q}(q')$ and $receive_{Q}(q')$. Bob sends $B_b$ to Alice through the public channel $P$ by classical communicating action $send_{P}(B_b)$ and Alice receives $B_b$ through channel $P$ by classical communicating action $receive_{P}(B_b)$, and the same as $send_{P}(B_a)$ and $receive_{P}(B_a)$, $send_{P}(C)$ and $receive_{P}(C)$, $send_{P}(Y)$ and $receive_{P}(Y)$, $send_{P}(U)$ and $receive_{P}(U)$, $send_{P}(L)$ and $receive_{P}(L)$. Alice and Bob generate the private key $K_{a,b}$ by a classical comparison action $cmp(K_{a,b},K_a,K_b,B_a,B_b)$. We omit the sum classical $\oplus$ actions without of loss of generality. Let Alice and Bob be a system $AB$ and let interactions between Alice and Bob be internal actions. $AB$ receives external input $D_i$ through channel $A$ by communicating action $receive_A(D_i)$ and sends results $D_o$ through channel $B$ by communicating action $send_B(D_o)$.

Then the state transition of Alice can be described as follows.

\begin{eqnarray}
&&A=loc_A::(\sum_{D_i\in \Delta_i}receive_A(D_i)\cdot A_1)\nonumber\\
&&A_1=\boxplus_{\frac{1}{2n},i=0}^{2n-1}Rand[q;B_a]_i\cdot A_2\nonumber\\
&&A_2=\boxplus_{\frac{1}{2n},i=0}^{2n-1}Rand[q;K_a]_i\cdot A_3\nonumber\\
&&A_3=Set_{K_a}[q]\cdot A_4\nonumber\\
&&A_4=H_{B_a}[q]\cdot A_5\nonumber\\
&&A_5=send_Q(q)\cdot A_6\nonumber\\
&&A_6=send_P(C)\cdot A_7\nonumber\\
&&A_7=send_Q(q')\cdot A_8\nonumber\\
&&A_8=send_P(Y)\cdot A_9\nonumber\\
&&A_9=receive_P(U)\cdot A_{10}\nonumber\\
&&A_{10}=send_P(L)\cdot A_{11}\nonumber\\
&&A_{11}=cmp(K_{a,b},K_a,K_b,B_a,B_b)\cdot A\nonumber
\end{eqnarray}

where $\Delta_i$ is the collection of the input data.

And the state transition of Bob can be described as follows.

\begin{eqnarray}
&&B=loc_B::(receive_Q(q)\cdot B_1)\nonumber\\
&&B_1=\boxplus_{\frac{1}{2n},i=0}^{2n-1}Rand[q';B_b]_i\cdot B_2\nonumber\\
&&B_2=\boxplus_{\frac{1}{2n},i=0}^{2n-1}M[q;K_b]_i\cdot B_3\nonumber\\
&&B_3=receive_P(C)\cdot B_4\nonumber\\
&&B_4=receive_Q(q')\cdot B_5\nonumber\\
&&B_5=\boxplus_{\frac{1}{2n},i=0}^{2n-1}M[q';D]_i\cdot B_6\nonumber\\
&&B_6=receive_P(Y)\cdot B_7\nonumber\\
&&B_7=send_P(U)\cdot B_8\nonumber\\
&&B_8=receive_P(L)\cdot B_9\nonumber\\
&&B_9=cmp(K_{a,b},K_a,K_b,B_a,B_b)\cdot B_{10}\nonumber\\
&&B_{10}=\sum_{D_o\in\Delta_o}send_B(D_o)\cdot B\nonumber
\end{eqnarray}

where $\Delta_o$ is the collection of the output data.

The send action and receive action of the same data through the same channel can communicate each other, otherwise, a deadlock $\delta$ will be caused. We define the following communication functions.

\begin{eqnarray}
&&\gamma(send_Q(q),receive_Q(q))\triangleq c_Q(q)\nonumber\\
&&\gamma(send_Q(q'),receive_Q(q'))\triangleq c_Q(q')\nonumber\\
&&\gamma(send_P(C),receive_P(C))\triangleq c_P(C)\nonumber\\
&&\gamma(send_P(Y),receive_P(Y))\triangleq c_P(Y)\nonumber\\
&&\gamma(send_P(U),receive_P(U))\triangleq c_P(U)\nonumber\\
&&\gamma(send_P(L),receive_P(L))\triangleq c_P(L)\nonumber
\end{eqnarray}

Let $A$ and $B$ in parallel, then the system $AB$ can be represented by the following process term.

$$\tau_I(\partial_H(\Theta(A\between B)))$$

where $H=\{send_Q(q),receive_Q(q),send_Q(q'),receive_Q(q'),send_P(C),receive_P(C),send_P(Y),\\receive_P(Y),send_P(U),receive_P(U),send_P(L),receive_P(L)\}$

 and $I=\{\boxplus_{\frac{1}{2n},i=0}^{2n-1}Rand[q;B_a]_i, \boxplus_{\frac{1}{2n},i=0}^{2n-1}Rand[q;K_a]_i, Set_{K_a}[q], \\H_{B_a}[q], \boxplus_{\frac{1}{2n},i=0}^{2n-1}Rand[q';B_b]_i, \boxplus_{\frac{1}{2n},i=0}^{2n-1}M[q;K_b]_i, \boxplus_{\frac{1}{2n},i=0}^{2n-1}M[q';D]_i, c_Q(q), c_P(C),\\c_Q(q'), c_P(Y), c_P(U), c_P(L), cmp(K_{a,b},K_a,K_b,B_a,B_b)\}$.

Then we get the following conclusion.

\begin{theorem}
The basic S13 protocol $\tau_I(\partial_H(\Theta(A\between B)))$ can exhibit desired external behaviors.
\end{theorem}

\begin{proof}
We can get $\tau_I(\partial_H(\Theta(A\between B)))=\sum_{D_i\in \Delta_i}\sum_{D_o\in\Delta_o}loc_A::receive_A(D_i)\leftmerge loc_B::send_B(D_o)\leftmerge \tau_I(\partial_H(\Theta(A\between B)))$.
So, the basic S13 protocol $\tau_I(\partial_H(\Theta(A\between B)))$ can exhibit desired external behaviors.
\end{proof}

\newpage

\end{document}